\renewcommand{\headrulewidth}{0.0pt}
\theoremstyle{plain}
\newtheorem{theorem}{Theorem}[section]
\newtheorem{proposition}[theorem]{Proposition}
\newtheorem{lemma}[theorem]{Lemma}
\newtheorem{corollary}[theorem]{Corollary}
\theoremstyle{definition}
\newtheorem{remark}[theorem]{Remark}
\numberwithin{equation}{section}
\def\cB{{\mathcal{B}}}
\def\C{{\mathbb{C}}}
\def\cD{{\mathcal{D}}}
\def\be{{\mathbf{e}}}
\def\cE{{\mathcal{E}}}
\def\cH{{\mathcal{H}}}
\def\sH{{\mathsf{H}}}
\def\cI{{\mathcal{I}}}
\def\bk{{\mathbf{k}}}
\def\cM{{\mathcal{M}}}
\def\N{{\mathbb{N}}}
\def\cP{{\mathcal{P}}}
\def\bp{{\mathbf{p}}}
\def\bq{{\mathbf{q}}}
\def\R{{\mathbb{R}}}
\def\cR{{\mathcal{R}}}
\def\fr{{\mathrm{r}}}
\def\S{{\mathbb{S}}}
\def\cS{{\mathcal{S}}}
\def\bs{{\mathbf{s}}}
\def\T{{\mathbb{T}}}
\def\bt{{\mathbf{t}}}
\def\bu{{\mathbf{u}}}
\def\bU{{\mathbf{U}}}
\def\cV{{\mathcal{V}}}
\def\sV{{\mathsf{V}}}
\def\bv{{\mathbf{v}}}
\def\fw{{\mathrm{w}}}
\def\bW{{\mathbf{W}}}
\def\bx{{\mathbf{x}}}
\def\bX{{\mathbf{X}}}
\def\by{{\mathbf{y}}}
\def\bY{{\mathbf{Y}}}
\def\Z{{\mathbb{Z}}}
\def\bz{{\mathbf{z}}}
\def\bZ{{\mathbf{Z}}}
\def\opsi{{\overline{\psi}}}
\def\spin{{\{\uparrow,\downarrow\}}}
\def\ua{{\uparrow}}
\def\da{{\downarrow}}
\def\O{{\Omega}}
\def\o{{\omega}}
\def\eps{{\varepsilon}}
\def\g{{\gamma}}
\def\G{{\Gamma}}
\def\s{{\sigma}}
\def\D{{\Delta}}
\def\<{{\langle}}
\def\>{{\rangle}}
\def\esssup{\mathop{\mathrm{esssup}}}
\def\Tr{\mathop{\mathrm{Tr}}}
\def\Mat{\mathop{\mathrm{Mat}}}
\def\dis{\mathop{\mathrm{dis}}\nolimits}
\def\sgn{\mathop{\mathrm{sgn}}\nolimits}
\def\supp{\mathop{\mathrm{supp}}\nolimits}
\def\Im{\mathop{\mathrm{Im}}}
\def\Re{\mathop{\mathrm{Re}}}
\def\Ref{\mathop{\mathrm{Ref}}\nolimits}
\def\b0{{\mathbf{0}}}
\def\frah{{\left(\frac{1}{h}\right)}}
\begin{document}
\scalefont{1.1}

\thispagestyle{fancy}
\lhead{}
\chead{}
\rhead{}
\cfoot{\thepage}
\renewcommand{\headrulewidth}{0.0pt}

\begin{center}\Large\bf
Renormalization Group Analysis\\
of Multi-Band Many-Electron Systems\\
 at Half-Filling
\end{center}
\bigskip 

\begin{center}\large
Yohei Kashima \medskip \\
Graduate School of Mathematical Sciences, University of Tokyo,\\
Komaba, Tokyo, 153-8914, Japan\\ 
kashima@ms.u-tokyo.ac.jp
\end{center}
\bigskip

\begin{quotation}
\small {\bf  Abstract. }
Renormalization group analysis for multi-band many-electron systems at
 half-filling at positive temperature is presented. The analysis
 includes the Matsubara ultra-violet integration and the infrared
 integration around the zero set of the dispersion relation.  The
 multi-scale integration schemes are implemented in a finite-dimensional
 Grassmann algebra indexed by discrete position-time variables. In
 order that the multi-scale integrations are justified inductively, various
 scale-dependent estimates on Grassmann polynomials are established. 
We apply these theories in practice to prove that for the half-filled
 Hubbard model with nearest-neighbor hopping on a square
 lattice the infinite-volume, zero-temperature limit of the free energy
 density exists as an analytic function of the coupling constant in a
 neighborhood of the origin if the system contains the
 magnetic flux $\pi$ (mod $2\pi$) per plaquette and $0$ (mod $2\pi$)
 through the large circles around the periodic lattice. 
Combined with Lieb's result on the flux phase problem
 ([Lieb, E. H., Phys. Rev. Lett. {\bf 73} (1994), 2158]), this theorem
 implies that the minimum free energy density of the flux
 phase problem converges to an analytic function of the coupling
 constant in the infinite-volume, zero-temperature limit. The proof of
 the theorem is based on a four-band formulation of the model
 Hamiltonian and an extension of Giuliani-Mastropietro's
 renormalization designed for the half-filled Hubbard model on
 the honeycomb lattice ([Giuliani, A. and V. Mastropietro,
 Commun. Math. Phys. {\bf 293} (2010), 301--346]).
\end{quotation}
\medskip

\begin{quotation}\small 
{\it 2010 Mathematics Subject Classification.} Primary 81T17; Secondary 81T28.
\end{quotation}

\tableofcontents

\section{Introduction}\label{sec_introduction}

\subsection{Introduction}
It is becoming clear that many-electron lattice systems at positive
temperature can be constructed rigorously within the framework of the
finite-dimensional Grassmann integrals and various physical quantities
defined in the system can be analyzed by solid calculus on the
finite-dimensional Grassmann algebra. One analytical technique at the
core of this research field is the multi-scale integration. Since
its iterative operation with decomposed covariances formally obeys a
semi-group property, the multi-scale integration is also called the
renormalization group (RG) method.  For instance, the existence of
infinite-volume limit of thermodynamic physical quantities in
many-electron systems and their analyticity with respect to the coupling constant
can be proved by carrying out a multi-scale
integration over the Matsubara frequency. This type of multi-scale
integration is called the Matsubara ultra-violet (UV)
integration. Nowadays, however, it is known that a wide class of
many-electron systems can be controlled independently of the volume
factor by a simple single-scale analysis thanks to the development of
volume-independent determinant bounds on the covariances by Pedra and
Salmhofer (\cite{PS}). Though the Matsubara UV integration or the
single-scale integration based on Pedra-Salmhofer's determinant bound
proves the analyticity of physical quantities in the infinite-volume
limit with the coupling constant,
these methods do not improve the temperature-dependency of the domain in
which such analytic statements can be made. Without any further
treatment, the allowed magnitude of the coupling typically shrinks in a
power order of temperature. As a consequence, the theory gives little
insight into physics caused by interacting electrons in low
temperatures. A multi-scale integration designed to ease the
temperature-dependency of the maximal magnitude of interaction is called
the infrared (IR) integration. Proper implementation of the IR
integration is believed to guarantee the analyticity of physical
quantities down to exponentially small temperatures or even to the
absolute zero-temperature. Since there are demands for rigorous tools
which enable us to treat many-electron models in wide parameter regions, 
the RG methods need to be
systematically investigated from various view points as a hopeful
candidate for such anticipated mathematical methods.

This paper has two purposes. One is to construct necessary estimates for
the multi-scale integrations on a finite-dimensional Grassmann algebra
to ensure the convergence of infinite-volume, zero-temperature limit of
thermodynamic physical quantities in half-filled multi-band
many-electron systems. The other is to apply these general estimates in
practice to a specific many-electron model and reach rigorous conclusions
in low temperatures. More precise explanation of the second purpose is
the following. We prove that
for the half-filled Hubbard model on a square lattice
there exists an analytic function of the complex coupling constants
on a multi-disk around the origin such that the free energy density
is equal to the restriction of the analytic function on the real axis
and the analytic function uniformly converges in the infinite-volume,
zero-temperature limit, if the nearest-neighbor hopping parameter of the
Hubbard model contains the magnetic flux $\pi$ (mod $2\pi$) per
plaquette and 0 (mod $2\pi$) through the large circles around the
periodic lattice. The Hubbard model with
this constraint on the magnetic flux is rarely seen in the study of
mathematical RG so far. However, it is not irrelevant in mathematical physics. In fact this model defines
the minimum free energy in the flux phase problem, which seeks a
configuration of the arguments of the complex-valued hopping parameter 
in the half-filled Hubbard model in order that the free
energy of the system is minimum. Lieb (\cite{L}) essentially gave a sufficient condition for
the arguments to attain the minimum, which is the above condition on
the magnetic flux.  The sufficiency of this condition was emphasized
 by Macris and Nachtergaele in \cite{MN}. Since our model is the
minimizer, the same analytic and convergent properties hold for  
the minimum free energy density in the flux phase problem. These results
are officially stated in Subsection \ref{subsec_model}.

Let us explain the motive for this work by reviewing recent developments
in the multi-scale analysis concerning the 2-dimensional
Hubbard models at positive temperature, especially by focusing on the
temperature-dependency of the possible magnitude of the coupling constant.
In the series \cite{Ri}, \cite{AMRi1},
\cite{AMRi2} the half-filled Hubbard model on a square lattice was
studied. These multi-scale analysis suggest that the correlation functions
in the system are analytic with respect to the coupling constant $U$ in
the domain $|U|<c |\log T|^{-2}$, where $T$ is the temperature 
and $c$ is a generic positive constant. In the doctoral thesis \cite{P} Pedra
characterized the 2-point correlation function in the Hubbard model away
from half-filling on a square lattice under the constraint $|U|\le
c|\log T|^{-1}$ and concluded that the system in this domain of the
coupling is a Fermi liquid.  The RG analysis by Benfatto, Giuliani and
Mastropietro \cite{BGM} also showed that the behavior of the 2-point
correlation function in the Hubbard model away from half-filling on a
square lattice corresponds to a Fermi liquid if the couping constant $U$
obeys the condition $|U|\le c |\log T|^{-1}$. One remarkable achievement was made by Giuliani and
Mastropietro in \cite{GM}.  They developed an infrared integration
technique for the half-filled Hubbard model on the honeycomb lattice and
concluded that the free energy density and the correlation functions in the
infinite-volume limit are analytic in the temperature-independent
domain $|U|<c$. Giuliani, Mastropietro and Porta continued their RG
analysis for the same model in the following article \cite{GMP}. Despite
the conceptual importance of the 2-d Hubbard models in
condensed matter physics, complete implementation of RG methods leading
to rigorous conclusions on the model in low temperatures is still
scarce. It is necessary to clarify the applicability of rigorous
versions of RG to the 2-d Hubbard models. This paper is aimed
at achieving this goal by presenting another example of analytic control
of the 2-d Hubbard model down to the absolute
zero-temperature together with a general framework constructed in a
self-contained style.

Let us look into more details of related research articles to understand
new aspects of this paper from a technical view point. It was shown in
\cite{K1} that the partition function in many-electron systems can be
formulated into a time-continuum limit of
the finite-dimensional Grassmann Gaussian integral, whose derivation is
based on a discretization of the Riemann integral with respect to the time variable
inside the perturbative expansion of the partition function. In this
formulation the basis of Grassmann algebra is indexed by the discrete
space-time points. The following papers \cite{K2}, \cite{K3} adopted the
same formulation and proved exponential decay properties of the 
finite-temperature correlation functions in the Hubbard models by a
single-scale analysis based on Pedra-Salmhofer's determinant bound and a
multi-scale integration over the Matsubara frequency respectively.
 The Matsubara UV integration in \cite{K3} was inductively constructed
 as a transform on the finite-dimensional Grassmann algebra. Since no
 infrared integration is performed in \cite{K1}, \cite{K2}, \cite{K3}, the results
 in these papers are restricted within a domain of the coupling
 constant depending on temperature as significantly as $|U|<c T^n$ with
 some $n\in \N$. So the next step is to analyze a many-electron system
 in low temperatures by means of an IR integration in the same
 finite-dimensional Grassmann algebra as in \cite{K1},
 \cite{K2}, \cite{K3} and to justify the IR integration process by the
 mathematical induction with the scale index in the same manner as in
 the Matsubara UV integration of \cite{K3}.

A key idea of the IR analysis in \cite{P}, \cite{BGM}, \cite{GM} is the
modification of the covariance at each integration step by the insertion
of the
kernel of the quadratic term produced by the previous
integration. Because of the symmetries of Grassmann polynomials preserved
during the multi-scale integration process, this modification does not
qualitatively change the shape of the zero set of the denominator of the
covariance, and thus the IR integration approaching the zero points of the
denominator is guaranteed to continue. The IR integration in this paper
uses this adaptive modification method introduced in \cite{P},
\cite{BGM}, \cite{GM}. Though this renormalization
technique explicitly plays a role only when we solve the model problem in
Section \ref{sec_IR_model}, keeping it in mind, we prepare general
estimates in Subsection \ref{subsec_IR_general} and Subsection
\ref{subsec_IR_general_difference} by giving Grassmann polynomials whose
degrees are at least 4 as the input to the integrations.

The main reasons why the physical quantities in the half-filled
Hubbard model on the honeycomb lattice are proved to be analytic
independently of temperature in \cite{GM} are the following. 
The zero set of the free dispersion relation degenerates into 2
distinct points, which remain to be the zero points of the
denominator of the effective covariance, 
and consequently the integral of each effective 
interaction term of order $\ge 4$ is bounded from above by a negative power of
the support size of IR cut-off at the scale, in other words,
effective interaction terms of order $\ge 4$ are irrelevant under 
the iterative IR integrations.  In fact the
invariance of the 2 Fermi-points is a remarkable discovery made by
Giuliani and Mastropietro in \cite{GM}. In this paper we formulate the
half-filled Hubbard model with the flux $\pi$ condition on a square
lattice into a 4-band many-electron model, in which the zero set of the
free dispersion relation consists of a single point. Then, we prove that 
 the zero point of the free dispersion relation essentially continues to be a zero point of
the denominator of the effective covariance during RG process by extending 
Giuliani-Mastropietro's renormalization method originally developed for
the 2-band half-filled Hubbard model on the honeycomb lattice. 
More precisely speaking, our effective covariance in momentum space 
is the inverse of a $4\times 4$ matrix. What we prove is that each
element of the effective $4\times 4$
matrix becomes negligibly small when either the momentum variable is
close to the zero point of the free dispersion relation or the Matsubara
frequency is close to zero and thus the point where the effective matrix
is not invertible is the same as in the free case. 
The non-corresponding property of the free covariance at equal
space-time points erases the quadratic term of the interaction in the
Grassmann integral formulation adopted in \cite{GM}, while the quadratic
term remains if we formulate the model by using the Grassmann Gaussian
integral proposed in \cite{K1}, \cite{K2}, \cite{K3}. The quadratic term in the
interaction breaks one of the invariances called `inversion' in 
\cite[\mbox{Lemma 1}]{GM}, which was used especially to prove that
the diagonal elements of the effective $2\times 2$ matrix vanish as
the Matsubara frequency approaches zero in \cite{GM}. 
Because of a lack of necessary invariances, the argument of \cite{GM} to confirm that certain elements including diagonal ones
of the effective matrix vanish in the IR limit does not immediately fit in our formulation. In this paper, therefore, we start from reforming the
formulation built in the same manner as in \cite{K1}, \cite{K2}, \cite{K3}
into a more convenient form having desirable symmetries for the IR
integration.

The proof of validity of RG in this paper is based on the mathematical
induction with respect to the integration scale, which assumes a
scale-dependent norm bound on the input as the induction hypothesis 
 and shows the succeeding norm
bound on the output of the single-scale integration, while the
Gallavotti-Nicol\`o tree spreading over all the scales is the main tool
to organize the multi-scale integration process in \cite{GM} as well as
in \cite{BGM}. For this reason the major part of this paper is devoted
to establish norm bounds on Grassmann polynomials produced by the
single-scale integrations, especially by the tree expansions for
derivatives of logarithm of the Grassmann Gaussian integral. Norm
estimations on finite-dimensional Grassmann algebra were rigorously
summarized with the aim of validating RG by induction by Feldman,
Kn\"orrer and Trubowitz in the book \cite{FKT2}, in
which, however, a representation theorem for the Schwinger functional
developed in \cite{FKT1}, rather than the tree formula,  underlay the Fermionic expansion. The concepts
of \cite{FKT2} were extended into the RG analysis on
infinite-dimensional Grassmann algebra for interacting Fermions in
\cite{FKT3}, \cite{FKT4}. This paper intends to keep the finite-dimensionality of
Grassmann algebra and shows the existence of infinite-volume,
zero-temperature limit as a result of calculus on the finite-dimensional
vector space. In summary what this paper newly presents apart from the
statements of the main theorem and its corollary in Subsection
\ref{subsec_model} are
\begin{enumerate}[(i)]
\item Inductive construction of the multi-scale integrations, which lead to
      the zero-temperature limit of the free energy density, on the finite-dimensional Grassmann
      algebra indexed by discrete space-time points.
\item An extension of Giuliani-Mastropietro's renormalization
      to a 4-band many-electron system.
\end{enumerate}

Before closing the introductory remarks we should also argue possible
limitations of our framework for IR analysis. Our IR multi-scale
integration procedure is based on a general proposition, namely
Proposition \ref{prop_infrared_integration_general}, which concerns
scale-dependent bound properties of the output of a single-scale
integration generalizing a real IR integration step. The validity of the
proposition is due to the structure that with respect to the scale-dependent
norm and semi-norm set in the proposition, any Grassmann monomial of
order $\ge 4$ with the bound of scale $l+1$ automatically admits the
bound of scale $l$. In more details the norm bound on a monomial of
order $\ge 4$ at scale $l+1$ amounts to requiring the integral of the
monomial to be bounded from above by a negative power of the factor
$M^{l+1}$, where $M(>1)$ is a parameter to control the support size of
IR cut-off. Since the negative power of $M^{l+1}$ is smaller than that
of $M^l$, the monomial satisfies the norm bound of scale $l$ as
well. When we solve the model problem in Section \ref{sec_IR_model}, for
example, the power of the factor $M^l$ for a monomial of order $m$ is
$-m+7/2$, which is negative for $m\ge 4$. As long as we go
through Proposition \ref{prop_infrared_integration_general}, therefore,
our constructive theory is such that effective interaction terms of
order $\ge 4$ are irrelevant at every step of IR integrations. 
In this paper we do not have a rigorous a priori criterion of to which
model the proposition does or does not apply. The proposition is built
upon an assumption, namely \eqref{eq_parameters_assumption_IR_general},
determined by exponents in the determinant and $L^1$ bounds on
an effective covariance. A heuristic argument in Remark
\ref{rem_renormalizability} suggests that the assumption of the
proposition is unlikely to be realized in a $d$-dimensional
many-electron model where
the $d-1$-dimensional Hausdorff measure of the zero set of the free
dispersion relation is non-zero such as in the $1$-dimensional
Hubbard models with free Fermi points or the $2$-dimensional Hubbard
models whose free Fermi curve does not degenerate into finite points. For
this reason we expect that these usual many-electron models cannot be
analyzed at zero-temperature by an immediate application of our framework.

The contents of this paper after this section are outlined as follows. In
Section \ref{sec_formulation} we introduce the Hamiltonian operator in a
generalized setting and formulate the free energy density as a
time-continuum limit of
logarithm of the finite-dimensional Grassmann Gaussian integral. In Section
\ref{sec_general_estimation} we present norm estimates on single-scale
integrations without assuming quantitative upper bounds on the
covariances. In Section \ref{sec_general_estimation_temperature} we
establish norm estimates on the difference between single-scale
integrations at 2 different temperatures without assuming quantitative
upper bounds on the covariances. In Section
\ref{sec_multi_scale_analysis} we apply the general norm estimates
developed in Section \ref{sec_general_estimation} and Section
\ref{sec_general_estimation_temperature} to construct both the UV integration
process and the IR integration process as well as to measure the difference
between Grassmann polynomials produced by these integrations at 2
different temperatures in a model-independent general setting. In
Section \ref{sec_UV} we complete the UV integration over the Matsubara
frequency by showing that the covariance with UV cut-off actually satisfies the bound
properties assumed in Section \ref{sec_multi_scale_analysis}. In
Section \ref{sec_IR_model} we apply the estimations prepared in Section
\ref{sec_multi_scale_analysis} for the IR integration to the model
Hamiltonian and prove the main theorem of this paper. In Appendix
\ref{app_flux_phase} we restate Lieb's result on the flux phase problem
 with some supplementary arguments
concerning the repeated reflection. In Appendix
\ref{app_grassmann_L1_theory} we establish $L^1$-norm bounds on kernels
of Grassmann polynomials, which are necessary for the proof of the convergence
of the symmetric Grassmann integral formulation to the free energy density in
Section \ref{sec_formulation}. 
In Appendix \ref{app_gevrey} we summarize
basic estimates on functions of Gevrey-class, to which our cut-off functions
belong. In Appendix \ref{app_h_L_limit} we prove that the
time-continuum, infinite-volume limit of derivatives of logarithm of the
Grassmann Gaussian integral exists at the origin. These convergence
properties are used in the proof of the
existence of infinite-volume limit of the free energy density in Section \ref{sec_IR_model}. Finally in Appendix \ref{app_partition_function} some
lemmas concerning the free energy density are directly proved
without going through the Grassmann integral formulation. The flow chart
of our construction is shown in Figure \ref{fig_flow_chart}.

\begin{figure}
\begin{center}
\begin{picture}(320,260)(0,0)

\put(40,240){\line(1,0){120}}
\put(40,260){\line(1,0){120}}
\put(40,240){\line(0,1){20}}
\put(160,240){\line(0,1){20}}

\put(70,245){Section \ref{sec_introduction}}
\put(40,250){\line(-1,0){40}}
\put(0,250){\line(0,-1){245}}
\put(0,5){\vector(1,0){40}}

\put(40,200){\line(1,0){120}}
\put(40,220){\line(1,0){120}}
\put(40,200){\line(0,1){20}}
\put(160,200){\line(0,1){20}}

\put(70,205){Section \ref{sec_formulation}}

\put(100,200){\vector(0,-1){20}}
\put(40,210){\line(-1,0){20}}
\put(20,210){\line(0,-1){195}}
\put(20,15){\vector(1,0){20}}

\put(40,160){\line(1,0){120}}
\put(40,180){\line(1,0){120}}
\put(40,160){\line(0,1){20}}
\put(160,160){\line(0,1){20}}

\put(70,165){Section \ref{sec_general_estimation}}

\put(100,160){\vector(0,-1){20}}

\put(40,120){\line(1,0){120}}
\put(40,140){\line(1,0){120}}
\put(40,120){\line(0,1){20}}
\put(160,120){\line(0,1){20}}

\put(70,125){Section \ref{sec_general_estimation_temperature}}

\put(100,120){\vector(0,-1){20}}

\put(40,80){\line(1,0){120}}
\put(40,100){\line(1,0){120}}
\put(40,80){\line(0,1){20}}
\put(160,80){\line(0,1){20}}

\put(70,85){Section \ref{sec_multi_scale_analysis}}

\put(100,80){\vector(0,-1){20}}

\put(40,40){\line(1,0){120}}
\put(40,60){\line(1,0){120}}
\put(40,40){\line(0,1){20}}
\put(160,40){\line(0,1){20}}

\put(70,45){Section \ref{sec_UV}}

\put(100,40){\vector(0,-1){20}}

\put(40,0){\line(1,0){120}}
\put(40,20){\line(1,0){120}}
\put(40,0){\line(0,1){20}}
\put(160,0){\line(0,1){20}}

\put(70,5){Section \ref{sec_IR_model}}

\put(200,240){\line(1,0){120}}
\put(200,260){\line(1,0){120}}
\put(200,240){\line(0,1){20}}
\put(320,240){\line(0,1){20}}

\put(222,245){Appendix \ref{app_flux_phase}}

\put(200,250){\vector(-1,0){40}}

\put(200,200){\line(1,0){120}}
\put(200,220){\line(1,0){120}}
\put(200,200){\line(0,1){20}}
\put(320,200){\line(0,1){20}}

\put(222,205){Appendix \ref{app_grassmann_L1_theory}}

\put(200,210){\vector(-1,0){40}}

\put(200,80){\line(1,0){120}}
\put(200,100){\line(1,0){120}}
\put(200,80){\line(0,1){20}}
\put(320,80){\line(0,1){20}}

\put(222,85){Appendix \ref{app_gevrey}}

\put(200,90){\vector(-1,-1){40}}

\put(200,90){\vector(-1,-2){40}}

\put(200,40){\line(1,0){120}}
\put(200,60){\line(1,0){120}}
\put(200,40){\line(0,1){20}}
\put(320,40){\line(0,1){20}}

\put(222,45){Appendix \ref{app_h_L_limit}}

\put(200,50){\vector(-1,-1){40}}

\put(200,0){\line(1,0){120}}
\put(200,20){\line(1,0){120}}
\put(200,0){\line(0,1){20}}
\put(320,0){\line(0,1){20}}

\put(222,5){Appendix \ref{app_partition_function}}

\put(200,10){\vector(-1,0){40}}

\end{picture}
 \caption{Flow chart of our construction, where the arrows mean major dependency.}\label{fig_flow_chart}
\end{center}
\end{figure}
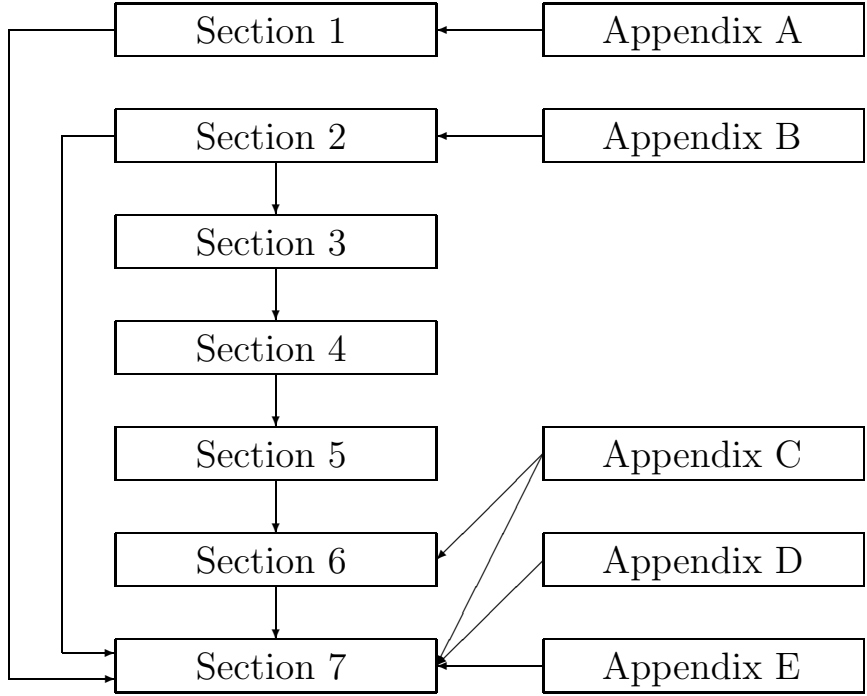

\subsection{The model and the main results}\label{subsec_model}
Here we introduce the model Hamiltonian and state the main results of
this paper. For $L\in \N$ we define the spatial lattice $\G(2L)$ by 
$\G(2L):=\{0,1,\cdots,2L-1\}^2$. For $(\bx,\s)\in\G(2L)\times\spin$ let $\psi_{\bx\s}$ denote the
annihilation operator of the Fermionic Fock space
$F_f(L^2(\G(2L)\times\spin))$ and $\psi_{\bx\s}^*$ denote its adjoint
operator, which is called the creation
operator. For $\bx\in \Z^2$ we define
$\psi_{\bx\s},\psi_{\bx\s}^*$ by identifying $\bx$ with the
corresponding site of $\G(2L)$ which is equal to $\bx$ in $(\Z/2L\Z)^2$.

Let $\be_1:=(1,0)$, $\be_2:=(0,1)\in \Z^2$. We define the amplitude
$t(\cdot,\cdot):\Z^2\times\Z^2\to \R_{\ge 0}$ of the hopping matrix elements as
follows. With parameters
$t_{h,e},t_{h,o},t_{v,e},t_{v,o}\in\R_{>0}$,
\begin{align*}
&t(\bx,\by):=\left\{\begin{array}{ll} t_{h,e} & \text{if
	     }\bx-\by=\be_1,-\be_1\text{ in }(\Z/2L\Z)^2\text{ and
	      }x_2=0\text{ in }\Z/2\Z,\\
t_{h,o} & \text{if
	     }\bx-\by=\be_1,-\be_1\text{ in }(\Z/2L\Z)^2\text{ and
	      }x_2=1\text{ in }\Z/2\Z,\\
t_{v,e} & \text{if
	     }\bx-\by=\be_2,-\be_2\text{ in }(\Z/2L\Z)^2\text{ and
	      }x_1=0\text{ in }\Z/2\Z,\\
t_{v,o} & \text{if
	     }\bx-\by=\be_2,-\be_2\text{ in }(\Z/2L\Z)^2\text{ and
	      }x_1=1\text{ in }\Z/2\Z,\\
0 &\text{otherwise,}
\end{array}
\right.\\
&(\forall \bx=(x_1,x_2),\by\in\Z^2). 
\end{align*}
We allow the hopping matrix elements to be
complex. Assume that the argument
$\theta_L(\cdot,\cdot):\Z^2\times\Z^2\to\R$ satisfies
\begin{align}
&\theta_L(\bx,\by)=-\theta_L(\by,\bx)\text{ in }\R/2\pi
 \Z,\label{eq_phase_condition}\\
&\theta_L(\bx+2mL\be_1+2nL\be_2,\by)=\theta_L(\bx,\by)\text{ in }\R/2\pi
 \Z,\notag\\
&(\forall \bx,\by\in \Z^2,m,n\in \Z)\notag
\end{align}
and
\begin{equation}\label{eq_flux_pi}
\begin{split}
&\theta_L(\bx+\be_1,\bx)+\theta_L(\bx+\be_1+\be_2,\bx+\be_1)\\
&+\theta_L(\bx+\be_2,\bx+\be_1+\be_2)+\theta_L(\bx,\bx+\be_2)=\pi\text{
 in }\R/2\pi
 \Z,\ (\forall \bx\in \Z^2),\\
&\sum_{j=0}^{2L-1}\theta_L((j+1,x),(j,x))=\sum_{j=0}^{2L-1}\theta_L((x,j+1),(x,j))\\
&=0\text{
 in }\R/2\pi
 \Z,\ (\forall x\in \Z).
\end{split}
\end{equation}

The kinetic part $\sH_0$ of the Hamiltonian is defined by
\begin{align}
\sH_0:=&\sum_{\bx\in\G(2L)}\sum_{\s\in
 \spin}\sum_{j=1}^2(t(\bx,\bx+\be_j)e^{i\theta_L(\bx,\bx+\be_j)}\psi_{\bx\s}^*\psi_{\bx+\be_j\s}\label{eq_free_part_original}\\
&+t(\bx,\bx-\be_j)e^{i\theta_L(\bx,\bx-\be_j)}\psi_{\bx\s}^*\psi_{\bx-\be_j\s}).\notag
\end{align}
One can see that $\sH_0^*=\sH_0$.

The condition \eqref{eq_flux_pi} is interpreted as having the magnetic
flux $\pi$ (mod $2\pi$) per plaquette and $0$ (mod $2\pi$) through the
circles winding around the periodic lattice, because the sum in \eqref{eq_flux_pi} is the value of the line integral of the magnetic
vector potential around the corresponding contour, if we adopt the Peierls substitution. One simple example of such $\theta_L$ is that 
\begin{align}
&\theta_L(\bx,\by)=\left\{\begin{array}{ll} \pi&\text{if
		   }\bx-\by=\be_2,-\be_2\text{ in }(\Z/2L\Z)^2\text{ and
	      }x_1=1\text{ in }\Z/2\Z,\\
0&\text{otherwise,}\end{array}
\right. \label{eq_phase_example}\\
&(\forall \bx=(x_1,x_2),\by\in \Z^2).\notag
\end{align}
In this case the nearest-neighbor hopping is pictured as in Figure \ref{fig_single_band_example}.
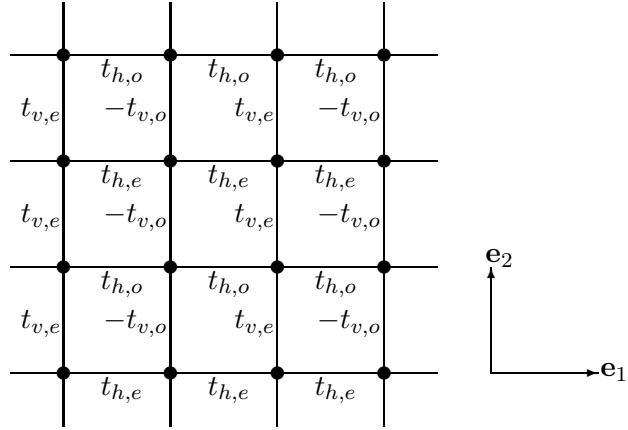
\begin{figure}
\begin{center}
\begin{picture}(230,160)(0,0)
\put(20,20){\circle*{5}}
\put(60,20){\circle*{5}}
\put(100,20){\circle*{5}}
\put(140,20){\circle*{5}}
\put(20,60){\circle*{5}}
\put(60,60){\circle*{5}}
\put(100,60){\circle*{5}}
\put(140,60){\circle*{5}}
\put(20,100){\circle*{5}}
\put(60,100){\circle*{5}}
\put(100,100){\circle*{5}}
\put(140,100){\circle*{5}}
\put(20,140){\circle*{5}}
\put(60,140){\circle*{5}}
\put(100,140){\circle*{5}}
\put(140,140){\circle*{5}}

\put(0,20){\line(1,0){160}}
\put(0,60){\line(1,0){160}}
\put(0,100){\line(1,0){160}}
\put(0,140){\line(1,0){160}}
\put(20,0){\line(0,1){160}}
\put(60,0){\line(0,1){160}}
\put(100,0){\line(0,1){160}}
\put(140,0){\line(0,1){160}}

\put(34,12){\small$t_{h,e}$}
\put(74,12){\small$t_{h,e}$}
\put(114,12){\small$t_{h,e}$}

\put(34,92){\small$t_{h,e}$}
\put(74,92){\small$t_{h,e}$}
\put(114,92){\small$t_{h,e}$}

\put(34,52){\small$t_{h,o}$}
\put(74,52){\small$t_{h,o}$}
\put(114,52){\small$t_{h,o}$}

\put(34,132){\small$t_{h,o}$}
\put(74,132){\small$t_{h,o}$}
\put(114,132){\small$t_{h,o}$}

\put(4,37){\small$t_{v,e}$}
\put(4,77){\small$t_{v,e}$}
\put(4,117){\small$t_{v,e}$}

\put(84,37){\small$t_{v,e}$}
\put(84,77){\small$t_{v,e}$}
\put(84,117){\small$t_{v,e}$}

\put(35,37){\small$-t_{v,o}$}
\put(35,77){\small$-t_{v,o}$}
\put(35,117){\small$-t_{v,o}$}

\put(115,37){\small$-t_{v,o}$}
\put(115,77){\small$-t_{v,o}$}
\put(115,117){\small$-t_{v,o}$}

\put(180,20){\vector(1,0){40}}
\put(180,20){\vector(0,1){40}}
\put(221,19){\small$\be_1$}
\put(178,62){\small$\be_2$}
\end{picture}
 \caption{The nearest-neighbor hopping with the phase $\theta_L$ defined
 by \eqref{eq_phase_example}.}\label{fig_single_band_example}
\end{center}
\end{figure}

To define the interacting part of the Hamiltonian, we assume that the
magnitude of the on-site interaction may depend on sites. More
specifically, with parameters $U_{e,e},U_{o,e},U_{e,o},U_{o,o}\in\R$ we
define $U(\cdot):\Z^2\to \R$ by 
\begin{align*}
U(\bx):=\left\{\begin{array}{ll} U_{e,e} &\text{if }\bx=(0,0)\text{ in
	 }(\Z/2\Z)^2,\\
U_{o,e} &\text{if }\bx=(1,0)\text{ in
	 }(\Z/2\Z)^2,\\
U_{e,o} &\text{if }\bx=(0,1)\text{ in
	 }(\Z/2\Z)^2,\\
U_{o,o} &\text{if }\bx=(1,1)\text{ in
	 }(\Z/2\Z)^2,\\
\end{array}
\right.(\forall \bx\in \Z^2).
\end{align*}
With this $U(\cdot)$, define the interacting part $\sV$ by
\begin{align}
\sV:=\sum_{\bx\in
 \G(2L)}U(\bx)\Bigg(\psi_{\bx\ua}^*\psi_{\bx\da}^*\psi_{\bx\da}\psi_{\bx\ua}-\frac{1}{2}\sum_{\s\in
 \spin}\psi_{\bx\s}^*\psi_{\bx\s}\Bigg).\label{eq_interacting_part_original}
\end{align}

The Hamiltonian $\sH$ is defined by $\sH:=\sH_0+\sV$, which is a self-adjoint
operator on $F_f(L^2(\G(2L)\times \spin))$. Including the quadratic
term in the interacting part as above makes the system half-filled. This
fact can be confirmed by a well-known argument. We provide the
proof in Remark \ref{rem_half_filled} below for completeness. With the
inverse temperature $\beta\in\R_{>0}$, the free energy density of the
system is given by 
$$
-\frac{1}{\beta(2L)^2}\log(\Tr e^{-\beta\sH}).
$$

To shorten formulas, we set $\bt:=(t_{h,e},t_{h,o},t_{v,e},t_{v,o})$
$(\in \R_{>0}^4)$,
\begin{align}
f_{\bt}:=\frac{\min\{t_{h,e}t_{h,o},t_{v,e}t_{v,o}\}}{(\max\{t_{h,e},t_{h,o},t_{v,e},t_{v,o}\})^{\frac{3}{2}}}\cdot\min\left\{\frac{t_{h,o}}{t_{h,e}},\frac{t_{h,e}}{t_{h,o}},\frac{t_{v,o}}{t_{v,e}},\frac{t_{v,e}}{t_{v,o}}\right\}\label{eq_hopping_amplitude_function}
\end{align}
and 
\begin{align*}
D_{\bt}(c):=\{z\in\C\ |\
 |z|< c f_{\bt}^2\}
\end{align*}
for $c\in\R_{>0}$. The goal of this paper is to prove the
following theorem.

\begin{theorem}\label{thm_main_theorem}
Set $\bU:=(U_{e,e},U_{o,e},U_{e,o},U_{o,o})$. 
There exists a constant $c\in\R_{>0}$ independent of any parameter such that the following statements hold true.
\begin{enumerate}
\item\label{item_theorem_analytic_continuation}
There exists a function $F_{\beta,L}(\cdot):\overline{D_{\bt}(c)}^4\to \C$ parameterized by $\beta
     \in \R_{>0}$ and $L\in\N$ satisfying $L\ge
     \max\{t_{h,e},t_{h,o},t_{v,e},t_{v,o}\}\beta$ 
such that $F_{\beta,L}(\cdot)$ is
     continuous in $\overline{D_{\bt}(c)}^4$, analytic in $D_{\bt}(c)^4$ and
\begin{align*}
&F_{\beta,L}(\bU)=-\frac{1}{\beta(2L)^2}\log(\Tr e^{-\beta
 \sH}),\\
&(\forall \bU\in \overline{D_{\bt}(c)}^4\cap \R^4,\beta\in
 \R_{>0},\\
&\quad L\in\N \text{ with } L\ge
     \max\{t_{h,e},t_{h,o},t_{v,e},t_{v,o}\}\beta).
\end{align*}
\item\label{item_theorem_L_limit}
There exists a function $F_{\beta}(\cdot):\overline{D_{\bt}(c)}^4\to \C$
     parameterized by $\beta \in \R_{>0}$, independent of $L\in\N$ such that 
\begin{align*}
\lim_{L\to \infty\atop L\in \N}\sup_{\bz \in
 \overline{D_{\bt}(c)}^4}|F_{\beta,L}(\bz)-F_{\beta}(\bz)|=0,\ (\forall \beta\in\R_{>0}).
\end{align*}
\item\label{item_theorem_beta_limit}
There exists a function $F(\cdot):\overline{D_{\bt}(c)}^4\to \C$
     independent of $\beta\in\R_{>0}$ such that 
\begin{align*}
\lim_{\beta\to \infty\atop \beta\in \R_{>0}}\sup_{\bz \in
 \overline{D_{\bt}(c)}^4}|F_{\beta}(\bz)-F(\bz)|=0.
\end{align*}
\end{enumerate}
\end{theorem}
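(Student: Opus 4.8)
\emph{Grassmann reformulation and the four-band picture.} The plan is to express the free energy density as a scale-decomposed, absolutely convergent finite-dimensional Grassmann integral and to bound its single-scale contributions by constants uniform in $\beta$ and $L$; the genuinely new ingredient is a four-band reformulation on which an extension of Giuliani--Mastropietro's adaptive infrared integration can be run. Following Section~\ref{sec_formulation} and \cite{K1,K2,K3}, I would rewrite $-\tfrac{1}{\beta(2L)^2}\log(\Tr e^{-\beta\sH})$ as the $h\to\infty$ (time-continuum) limit of $-\tfrac{1}{\beta(2L)^2}\log$ of a Grassmann Gaussian integral whose covariance is the imaginary-time-discretized free propagator of $\sH_0$ and whose interaction is the Grassmann image of $\sV$; convergence of this limit on the real axis uses the $L^1$ bounds of Appendix~\ref{app_grassmann_L1_theory}. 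The half-filling counterterm $-\tfrac12\sum_\s\psi_{\bx\s}^*\psi_{\bx\s}$ is carried along and, in the symmetric (point-split) formulation of \cite{K1}, survives as a genuine quadratic Grassmann term, so I would also reform the Grassmann data into an equivalent representation carrying the discrete symmetries needed for the infrared analysis below. Because both the parity-dependent hopping amplitudes $t(\cdot,\cdot)$ and the flux-$\pi$ gauge \eqref{eq_phase_example} are $2\be_1,2\be_2$-periodic but not $\be_1,\be_2$-periodic, I would pass to a $2\times2$ magnetic cell, obtaining a four-band model whose Bloch Hamiltonian $E(\bk)$ is a $4\times4$ matrix; a direct computation shows that $\{\bk:\det E(\bk)=0\}$ degenerates to a single point $\bk_0$ in the magnetic Brillouin zone, around which $E$ vanishes linearly --- the folding of what would be two Dirac points in a smaller cell into one, which is the four-band analogue of the situation in \cite{GM}.

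\emph{The two multi-scale integrations.} I would first carry out the Matsubara ultraviolet integration of Section~\ref{sec_UV}: a multi-scale integration over large $|\o|$, in which one verifies that the UV-cut-off covariance satisfies the determinant and $L^1$ bounds postulated in Section~\ref{sec_multi_scale_analysis}; this controls the $h\to\infty$ limit and produces an effective interaction obeying scale-independent kernel bounds. Then I would run the infrared integration of Section~\ref{sec_IR_model}, decomposing a neighborhood of $(\o,\bk)=(0,\bk_0)$ into shells of size $M^l$ ($l\le0$), modifying the covariance at each scale by reinserting the quadratic kernel produced at the previous scale (so that the genuine interaction fed into Proposition~\ref{prop_infrared_integration_general} has degree $\ge4$), and invoking that proposition at every step. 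Since here the scaling weight of an order-$m$ effective monomial is $-m+7/2$, negative for $m\ge4$, assumption \eqref{eq_parameters_assumption_IR_general} is satisfied, all interaction terms of order $\ge4$ are irrelevant, and the sum over UV and IR scales of the single-scale contributions to the free energy converges geometrically for $\bU$ in the temperature- and volume-independent polydisk $D_{\bt}(c)^4$.

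\emph{The main obstacle.} The crucial structural input, and the part I expect to be the main difficulty, is that the renormalized $4\times4$ matrix $\widehat E_l(\o,\bk)$ continues to vanish at $(0,\bk_0)$ --- equivalently, every one of its sixteen entries is negligibly small whenever $\bk$ is near $\bk_0$ or $\o$ is near $0$ --- so that the covariance keeps its singularity at the free Fermi point throughout the flow and the IR decomposition can be iterated indefinitely. In \cite{GM} the analogous $2\times2$ statement was enforced by discrete symmetries (parity, particle--hole conjugation, and the ``inversion'' of \cite[Lemma~1]{GM}), the last of which is destroyed here by the quadratic counterterm above, so the argument of \cite{GM} does not transfer verbatim. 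Using the reformed, symmetric representation of the first step --- spatial rotations and reflections of the magnetic cell, a particle--hole conjugation, and a replacement for ``inversion'' compatible with the quadratic term --- I would catalogue which entries of $\widehat E_l$ each symmetry pins down, verify that all these symmetries are preserved by the single-scale maps, and thereby establish the vanishing claim inductively; granting it, the irrelevance bound of Proposition~\ref{prop_infrared_integration_general} closes the induction on the scale index with norm constants independent of $\beta$ and $L$.

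\emph{Assembling the three limits.} Summing the UV and IR single-scale contributions yields $F_{\beta,L}(\bU)$; each partial sum over finitely many scales is analytic in $\bU\in D_{\bt}(c)^4$ and continuous on its closure, the tails are uniformly small by the geometric decay, and a uniform limit of analytic functions is analytic, which gives part~\ref{item_theorem_analytic_continuation}. For part~\ref{item_theorem_L_limit} I would show that each scale contribution converges as $L\to\infty$ --- which reduces to convergence of the covariance kernels and of the tree-expansion kernels, as in Appendix~\ref{app_h_L_limit} --- with uniformly small tails, hence uniform convergence on $\overline{D_{\bt}(c)}^4$. For part~\ref{item_theorem_beta_limit} I would invoke the two-temperature difference estimates of Section~\ref{sec_general_estimation_temperature} and their multi-scale version in Section~\ref{sec_multi_scale_analysis}: the per-scale difference of the free energies at inverse temperatures $\beta$ and $\beta'$ admits a scale-summable bound that tends to $0$ as $\beta,\beta'\to\infty$ (the deep scales present for only one of the two temperatures contributing individually small amounts by the geometric decay), so $\{F_\beta\}$ is uniformly Cauchy on the polydisk and converges to $F$. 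The flux-phase corollary then follows by combining $F$ with Lieb's theorem recalled in Appendix~\ref{app_flux_phase}.
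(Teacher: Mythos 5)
Your proposal follows essentially the same route as the paper: the four-band reformulation on a $2\times2$ magnetic cell, the symmetric Grassmann representation carrying the extra discrete symmetries, the Matsubara UV integration followed by the adaptive infrared integration in which the quadratic kernel is reabsorbed into the covariance at each scale, the symmetry-based proof that the renormalized $4\times4$ matrix still vanishes at the free Fermi point (replacing the broken ``inversion'' of \cite{GM}), the power counting $-m+7/2$, and the three limits $h\to\infty$, $L\to\infty$, $\beta\to\infty$ controlled by the two-temperature difference estimates. The only steps you gloss over are routine ones the paper spells out -- the identity-theorem argument extending the equality with the true free energy from a small $(\beta,L)$-dependent neighborhood to the full $(\beta,L)$-independent polydisk, and the Vitali-type argument upgrading the $\beta\to\infty$ convergence from the real section to the whole complex polydisk -- neither of which changes the approach.
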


If we impose additional conditions on
$t_{h,e},t_{h,o},t_{v,e},t_{v,o}$, $U_{e,e},U_{o,e}$, $U_{e,o}$, 
$U_{o,o}$ and $L$, we can relate the free
energy density considered in Theorem \ref{thm_main_theorem} to the
minimum free energy in the flux phase problem, which seeks a phase
$\phi:\Z^2\times\Z^2\to \R$ of the hopping parameter 
minimizing the free energy. Lieb (\cite{L}) essentially gave a sufficient condition
for a phase to be a minimizer of the flux phase problem. 
The sufficient condition was also claimed by Macris and Nachtergaele in
\cite{MN}. That is the condition 
\eqref{eq_flux_pi} if $L\in 2\N+1$. For readers who are not familiar with the flux phase
problem, we restate Lieb's result in Appendix \ref{app_flux_phase} with
some supplementary arguments which were not explicit in the letter \cite{L}. In
mathematical terms, the flux phase problem is to find
$\tilde{\phi}:\Z^2\times\Z^2\to \R$ satisfying
\eqref{eq_phase_condition} such that 
\begin{equation}\label{eq_mathematical_flux_phase}
\begin{split}
&-\frac{1}{\beta(2L)^2}\log(\Tr e^{-\beta \sH(\tilde{\phi})})\\
& =\min\left\{-\frac{1}{\beta(2L)^2}\log(\Tr e^{-\beta \sH(\phi)})\
 \Big|\ \phi:\Z^2\times\Z^2\to \R\text{ satisfying }\eqref{eq_phase_condition}\right\},
\end{split}
\end{equation}
where
\begin{align*}
\sH(\phi):=&\sum_{(\bx,\s)\in\G(2L)\times\spin}\sum_{j=1}^2(t(\bx,\bx+\be_j)e^{i\phi(\bx,\bx+\be_j)}\psi_{\bx\s}^*\psi_{\bx+\be_j\s}\\
&+t(\bx,\bx-\be_j)e^{i\phi(\bx,\bx-\be_j)}\psi_{\bx\s}^*\psi_{\bx-\be_j\s})+\sV.\end{align*}
Since this is equivalent to a minimization problem of a continuous function defined on
the compact set $[0,2\pi]^{2(2L)^2}$, a minimizer exists. Under the
additional conditions that $t_{h,e}=t_{h,o}$, $t_{v,e}=t_{v,o}$, 
$U_{e,e}=U_{o,e}=U_{e,o}=U_{o,o}$ and $L\in 2\N+1$, Theorem \ref{thm_Lieb_theorem} in
Appendix \ref{app_flux_phase} ensures that any phase $\theta_L$ satisfying \eqref{eq_phase_condition}
and \eqref{eq_flux_pi} is a minimizer of the flux phase problem. Thus,
we have the following corollary.

\begin{corollary}\label{cor_application_flux_phase}
Assume that $t_{h,e}=t_{h,o}$, $t_{v,e}=t_{v,o}$, 
$U_{e,e}=U_{o,e}=U_{e,o}=U_{o,o}=U\in \R$. Let
 $G_{\beta,L}(U)$ denote the right-hand side of
 \eqref{eq_mathematical_flux_phase}. Then,
 there exists a constant $c\in\R_{>0}$ independent of any parameter such that
 the following statements hold true.
 \begin{enumerate}
\item
There exists a function $F_{\beta,L}(\cdot):\overline{D_{\bt}(c)}\to \C$ parameterized by $\beta
     \in \R_{>0}$ and $L\in2\N+1$ satisfying $L\ge \max\{t_{h,e},t_{v,e}\}\beta$ such that $F_{\beta,L}(\cdot)$ is
     continuous in $\overline{D_{\bt}(c)}$, analytic in $D_{\bt}(c)$ and
\begin{align*}
&F_{\beta,L}(U)=G_{\beta,L}(U),\\ 
&(\forall U\in \overline{D_{\bt}(c)}\cap \R,\beta\in \R_{>0},\\
&\quad L\in 2\N+1\text{ with }L\ge  \max\{t_{h,e},t_{v,e}\}\beta).
\end{align*}
\item
There exists a function $F_{\beta}(\cdot):\overline{D_{\bt}(c)}\to \C$
     parameterized by $\beta \in \R_{>0}$, independent of $L\in2\N+1$ such that 
\begin{align*}
\lim_{L\to \infty\atop L\in 2\N+1}\sup_{z \in
 \overline{D_{\bt}(c)}}|F_{\beta,L}(z)-F_{\beta}(z)|=0,\ (\forall \beta\in\R_{>0}).
\end{align*}
\item
There exists a function $F(\cdot):\overline{D_{\bt}(c)}\to \C$
     independent of $\beta \in\R_{>0}$ such that 
\begin{align*}
\lim_{\beta\to \infty\atop \beta\in \R_{>0}}\sup_{z \in
 \overline{D_{\bt}(c)}}|F_{\beta}(z)-F(z)|=0.
\end{align*}
\end{enumerate}
\end{corollary}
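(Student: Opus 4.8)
The plan is to obtain the corollary as a direct consequence of Theorem \ref{thm_main_theorem} together with Lieb's flux phase result, Theorem \ref{thm_Lieb_theorem} in Appendix \ref{app_flux_phase}, by restricting the four coupling constants to the diagonal $\bU=(U,U,U,U)$. First I would fix once and for all a reference phase, for instance the explicit $\theta_L$ defined by \eqref{eq_phase_example}, and note that it satisfies both \eqref{eq_phase_condition} and \eqref{eq_flux_pi}. Under the hypotheses $t_{h,e}=t_{h,o}$, $t_{v,e}=t_{v,o}$, $U_{e,e}=U_{o,e}=U_{e,o}=U_{o,o}=U$ and $L\in 2\N+1$, Theorem \ref{thm_Lieb_theorem} guarantees that this $\theta_L$ is a minimizer of the flux phase problem, so the Hamiltonian $\sH=\sH_0+\sV$ built from $\bt=(t_{h,e},t_{h,e},t_{v,e},t_{v,e})$, $\bU=(U,U,U,U)$ and this $\theta_L$ realizes the minimum:
\[
-\frac{1}{\beta(2L)^2}\log(\Tr e^{-\beta\sH})=G_{\beta,L}(U)\qquad(\forall U\in\R).
\]
I would also observe that for these symmetric hoppings $f_{\bt}$ collapses to $\min\{t_{h,e}^2,t_{v,e}^2\}/(\max\{t_{h,e},t_{v,e}\})^{3/2}$, so $D_{\bt}(c)$ is exactly the one-dimensional disk appearing in the corollary, and that the size condition $L\ge\max\{t_{h,e},t_{v,e}\}\beta$ coincides with the one in Theorem \ref{thm_main_theorem} under the symmetry assumptions.

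Next I would invoke Theorem \ref{thm_main_theorem} for this choice of hopping amplitudes and phase. It furnishes functions $F^{(4)}_{\beta,L},F^{(4)}_{\beta},F^{(4)}\colon\overline{D_{\bt}(c)}^4\to\C$ with the stated continuity, analyticity and convergence properties. I would then define the one-variable functions of the corollary by restriction to the diagonal,
\[
F_{\beta,L}(z):=F^{(4)}_{\beta,L}(z,z,z,z),\quad F_{\beta}(z):=F^{(4)}_{\beta}(z,z,z,z),\quad F(z):=F^{(4)}(z,z,z,z),
\]
for $z\in\overline{D_{\bt}(c)}$. Since $z\mapsto(z,z,z,z)$ sends $\overline{D_{\bt}(c)}$ into $\overline{D_{\bt}(c)}^4$ and $D_{\bt}(c)$ into $D_{\bt}(c)^4$, continuity on $\overline{D_{\bt}(c)}$ and analyticity on $D_{\bt}(c)$ pass to the restrictions. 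For real $U\in\overline{D_{\bt}(c)}\cap\R$ with $L\in 2\N+1$ and $L\ge\max\{t_{h,e},t_{v,e}\}\beta$, the first assertion of Theorem \ref{thm_main_theorem} gives $F_{\beta,L}(U)=-\frac{1}{\beta(2L)^2}\log(\Tr e^{-\beta\sH})$, which by the displayed identity equals $G_{\beta,L}(U)$; this is assertion (1). Assertions (2) and (3) then follow from the elementary domination
\[
\sup_{z\in\overline{D_{\bt}(c)}}|F_{\beta,L}(z)-F_{\beta}(z)|\le\sup_{\bz\in\overline{D_{\bt}(c)}^4}|F^{(4)}_{\beta,L}(\bz)-F^{(4)}_{\beta}(\bz)|,
\]
and the analogous bound with $F^{(4)}_{\beta}-F^{(4)}$, combined with the second and third assertions of Theorem \ref{thm_main_theorem}.

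I do not expect any analytic obstacle: all the substantive work is already contained in Theorem \ref{thm_main_theorem} and in Lieb's theorem. The only points requiring care are bookkeeping ones, namely verifying that the single fixed phase $\theta_L$ simultaneously meets the requirements of Theorem \ref{thm_main_theorem} (\eqref{eq_phase_condition} and \eqref{eq_flux_pi}) and those of Theorem \ref{thm_Lieb_theorem} (which additionally needs equal horizontal and vertical hoppings, a common on-site coupling, and $L\in 2\N+1$), and that the constant $c$ may be taken to be the one produced by Theorem \ref{thm_main_theorem}, which is already independent of all parameters.
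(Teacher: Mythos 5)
Your proposal is correct and follows essentially the same route as the paper, which states the corollary as an immediate consequence of Theorem \ref{thm_main_theorem} combined with Theorem \ref{thm_Lieb_theorem}, restricting the four-variable functions to the diagonal $\bU=(U,U,U,U)$. The one bookkeeping point you flag but do not spell out — that the flux $\pi(L-1)$ (mod $2\pi$) through the large circles required by Theorem \ref{thm_Lieb_theorem} agrees with the flux $0$ (mod $2\pi$) of condition \eqref{eq_flux_pi} precisely because $L\in 2\N+1$ — is exactly the compatibility check the paper records in a remark, and it goes through.
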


\begin{remark}
Theorem \ref{thm_main_theorem} implies the analyticity of the
 infinite-volume, zero-temperature limit of the free energy density in
 the following sense. There exists a function
 $F(\cdot):\overline{D_{\bt}(c)}^4\to \C$ independent of
 $\beta\in\R_{>0}$, $L\in\N$ such that $F(\cdot)$ is continuous in
 $\overline{D_{\bt}(c)}^4$, analytic in $D_{\bt}(c)^4$ and 
\begin{align*}
\lim_{\beta\to \infty\atop \beta\in \R_{>0}}\lim_{L\to \infty\atop L\in \N}
\sup_{\bU \in
 \overline{D_{\bt}(c)}^4\cap\R^4}\left|-\frac{1}{\beta(2L)^2}\log(\Tr e^{-\beta
 \sH})-F(\bU)\right|=0.
\end{align*}
\end{remark}

\begin{remark}\label{rem_half_filled}
The system is half-filled. To confirm this, let us define the operator
 $A$ on $F_f(L^2(\G(2L)\times\spin))$ by
\begin{align*}
&A(\alpha \O_{2L}):= \overline{\alpha}\prod_{\bx\in
 \G(2L)}(\psi_{\bx\ua}^*\psi_{\bx\da}^*)\O_{2L},\\
&A(\alpha\psi_{(x_{1,1},x_{1,2})\s_1}^*\psi_{(x_{2,1},x_{2,2})\s_2}^*\cdots\psi_{(x_{n,1},x_{n,2})\s_n}^*\O_{2L})\\
&:=(-1)^{\sum_{j=1}^n(x_{j,1}+x_{j,2})}\overline{\alpha}\psi_{(x_{1,1},x_{1,2})\s_1}\psi_{(x_{2,1},x_{2,2})\s_2}\cdots\psi_{(x_{n,1},x_{n,2})\s_n}\\
&\qquad\cdot\prod_{\bx\in
 \G(2L)}(\psi_{\bx\ua}^*\psi_{\bx\da}^*)\O_{2L},\\
&(\forall \alpha\in\C,(x_{j,1},x_{j,2})\in \G(2L),\s_j\in\spin\
 (j=1,2,\cdots,n) )
\end{align*}
and by linearity, where $\O_{2L}$ denotes the vacuum of
 $F_f(L^2(\G(2L)\times$\\
$\spin))$. One can check that $A$ is unitary, $AHA^*=H$ and
\begin{align*}
&A\psi_{\bx \s}^*\psi_{\bx\s}A^*=id_{2L}-\psi_{\bx \s}^*\psi_{\bx\s},\
 (\forall (\bx,\s)\in \G(2L)\times \spin),
\end{align*}
where $id_{2L}$ denotes the identity map on $F_f(L^2(\G(2L)\times\spin))$.
Thus,
\begin{align*}
\Tr(e^{-\beta \sH}\psi_{\bx\s}^*\psi_{\bx\s})&=\Tr(e^{-\beta
 A\sH A^*}A\psi_{\bx\s}^*\psi_{\bx\s}A^*)\\
&=\Tr e^{-\beta
 \sH}-\Tr(e^{-\beta \sH}\psi_{\bx\s}^*\psi_{\bx\s}),
\end{align*}
which implies that
$$
\frac{\Tr(e^{-\beta \sH}\psi_{\bx\s}^*\psi_{\bx\s})}{\Tr e^{-\beta
 \sH}}=\frac{1}{2},\ (\forall (\bx,\s)\in \G(2L)\times\spin). 
$$
\end{remark}

\begin{remark}\label{rem_phase_freedom}
In Theorem \ref{thm_main_theorem} we have freedom to choose a phase
 $\theta_L$ satisfying \eqref{eq_phase_condition} and
 \eqref{eq_flux_pi}. However, the free energy density is
 independent of the choice of $\theta_L$. Let $\theta_L$, $\theta_L'$ be
 phases satisfying \eqref{eq_phase_condition} and \eqref{eq_flux_pi} and
 $\sH(\theta_L)$, $\sH(\theta_L')$ be the Hamiltonian having the phase
 $\theta_L$, $\theta_L'$ respectively. Then, Lemma
 \ref{lem_gauge_invariance} given in Appendix \ref{app_flux_phase}
 implies that $\Tr
 e^{-\beta \sH(\theta_L)}=\Tr
 e^{-\beta \sH(\theta_L')}$. In brief, this equality is due to the fact 
that the flux of $\theta_L$
 through any circuit in the periodic lattice $(\Z/2L\Z)^2$ is the
 same as that of $\theta_L'$.
\end{remark}

\begin{remark}
The proof of Theorem \ref{thm_Lieb_theorem} requires that the hopping
 amplitude and the magnitude of on-site interaction are invariant under
 vertical and horizontal reflections. To meet this requirement, we
 need to assume that $t_{h,e}=t_{h,o}$, $t_{v,e}=t_{v,o}$, 
$U_{e,e}=U_{o,e}=U_{e,o}=U_{o,o}$. Moreover, on the assumption $L\in
 2\N+1$, having the flux $\pi(L-1)$ (mod $2\pi$) through the circles
 around the periodic lattice, another requirement of Theorem \ref{thm_Lieb_theorem}, is equal to having the flux $0$ (mod
 $2\pi$), which is satisfied by our model Hamiltonian. In the case
 $L\in 2\N$ we do not have the equivalence between the free energy governed by
our model Hamiltonian and the minimum free energy in the flux phase
 problem.
\end{remark}

\begin{remark}
Consider the Hamiltonian $\sH$ with the phase defined by
 \eqref{eq_phase_example}. If $t_{h,o}=t_{v,o}=U_{o,o}=0$, the
 Hamiltonian $\sH$ becomes the half-filled Hubbard model on the
 copper-oxide (CuO) lattice. Since the condition $t_{h,o},t_{v,o}\neq 0$
 is indispensable for our analysis, we cannot treat the half-filled CuO
 Hubbard model itself in this paper. As an operator on the
 finite-dimensional space our Hamiltonian can be
 arbitrarily close to the half-filled CuO Hubbard model as
 $t_{h,o},t_{v,o}\searrow 0$. For such an approximate model with small
 but non-zero $t_{h,o},t_{v,o}$, Theorem \ref{thm_main_theorem}
 guarantees the existence of infinite-volume, zero-temperature limit of
 the free energy density and its analyticity with the coupling constants
$U_{e,e}, U_{o,e}, U_{e,o}$. However, since the domain $D_{\bt}(c)$
 shrinks as $t_{h,o},t_{v,o}\searrow 0$, we cannot extract any
 information on the free energy
 density defined in the half-filled CuO Hubbard model from
Theorem \ref{thm_main_theorem}.\end{remark}

\begin{remark}\label{rem_no_IR_analysis}
Later in Lemma \ref{lem_temperature_dependency_covariance_bound} in
 Section \ref{sec_IR_model} we will see that the integral of modulus of
 the free covariance is bounded by a constant times $\beta$ from above
 and below if $L$ is sufficiently large. This also implies that the free
 covariance with the Matsubara UV cut-off has the same bound property in
 low temperature, since the integral of modulus of the free covariance
 with the large Matsubara frequency, the difference between the free
 covariance and that with the Matsubara UV cut-off, is bounded from above
 independently of $\beta$. These facts tell us that we cannot prove the
 analyticity of the free energy density in the infinite-volume limit in
 a domain larger than $\{U\in \C\ |\ |U|<c\beta^{-1}\}^4$ by means of a
 single-scale analysis based on Pedra-Salmhofer's determinant bound as
 in \cite{K1} or a multi-scale analysis over the Matsubara frequency as
 in \cite{K3}, since the inverse of the $L^1$-bound of the free
 covariance with or without UV cut-off determines the maximal magnitude
 of the coupling in these theories. Therefore, we are led to perform an
 IR analysis in order to reach the infinite-volume,
 zero-temperature limit in this model of interacting electrons.
\end{remark}

\section{Formulation}\label{sec_formulation}

In this section, first we define the multi-band Hamiltonian $H$
consisting of the free part $H_0$ and the interacting part $V$ in a generalized
setting. In Subsection \ref{subsec_model} we introduced the single-band
Hamiltonian $\sH$. We will prove Theorem \ref{thm_main_theorem} by
formulating the Hamiltonian $\sH$ into a 4-band Hamiltonian in Section
\ref{sec_IR_model}. The Hamiltonian $H$ should be considered as a
generalization of the 4-band model Hamiltonian. Then we introduce the finite-dimensional Grassmann integral
formulation of the normalized
free energy density $-\frac{1}{\beta L^d}\log(\Tr e^{-\beta H}/\Tr
e^{-\beta H_0})$. Though the main theorem of
this paper concerns the free energy density of the form $-\frac{1}{\beta L^d}\log(\Tr e^{-\beta H})$, it is more convenient to deal with the
normalized one, since it fits in the framework of Grassmann Gaussian
integration. We can reach the conclusions on the free energy density   
from the analysis of the normalized free energy density, since the
non-interacting free energy density $-\frac{1}{\beta L^d}\log(\Tr e^{-\beta H_0})$, the difference between them, is exactly computable.  

\subsection{The multi-band Hamiltonian}\label{subsec_Hamiltonian}
Let us set up a system which we focus on until we analyze the
specific model in Section \ref{sec_IR_model}. Let $d\ (\in\N)$ denote the
spatial dimension. Take a basis
$\bu_1,\bu_2,\cdots,\bu_d$ of $\R^d$. Let $\bv_1,\bv_2,\cdots,\bv_d$ be
another basis of $\R^d$ satisfying $\<\bu_l,\bv_m\>=\delta_{l,m}$
$(\forall l,m\in \{1,2,\cdots,d\})$, where $\<\cdot,\cdot\>$ is the
standard inner product of $\R^d$. The spatial lattice $\G$ is defined by 
$$
\G:=\left\{\sum_{j=1}^dm_j\bu_j\ \Big|\ m_j\in \{0,1,\cdots, L-1\}\
(j=1,2,\cdots,d)\right\}.
$$
The momentum lattice $\G^*$ dual to $\G$ is given by 
$$
\G^*:=\left\{\frac{2\pi}{L}\sum_{j=1}^dm_j\bv_j\ \Big|\ m_j\in \{0,1,\cdots, L-1\}\
(j=1,2,\cdots,d)\right\}.
$$
With a number $b\ (\in \N)$ we assume that the crystal lattice is
modeled by the lattice $\G$ with a $b$-point basis. The integer $b$
stands for the number of atomic sites in a primitive unit cell of the
lattice $\G$. Each site of the crystal lattice is identified
with an element of the set $\{1,2,\cdots,b\}\times \G$. For conciseness 
we set $\cB:=\{1,2,\cdots,b\}$.

The Hamiltonian $H$ is defined as a self-adjoint operator on the Fermio-nic Fock space $F_f(L^2(\cB\times\G\times\spin))$. To define the
free part of the Hamiltonian $H$, we assume that in the momentum space 
the hopping matrix is represented by $E(\bk)\in
\Mat(b,\C)$ $(\bk\in\G^*)$. Moreover we assume that the domain of
$E(\cdot)$ can be extended to $\R^d$ and 
\begin{align}
&E\in C(\R^d;\Mat(b,\C)),\notag\\
&E(\bk)^*=E(\bk),\ (\forall \bk\in \R^d),\label{eq_dispersion_hermite}\\
&E(\bk+2\pi\bv_j)=E(\bk),\ (\forall \bk\in \R^d,j\in
 \{1,2,\cdots,d\})\label{eq_dispersion_periodic}.
\end{align}
We consider $\Mat(b,\C)$ as a $b^2$-dimensional complex Banach
space with the norm $\|\cdot\|_{b\times b}$ defined by
$$
\|A\|_{b\times b}:=\sup_{\bv\in \C^b\text{ with }\atop\|\bv\|_{\C^b}=1}\|A\bv\|_{\C^b},\ (A\in \Mat(b,\C)),
$$
where $\|\cdot\|_{\C^b}$ denotes the norm of $\C^b$ induced by the
standard inner product $\<\cdot,\cdot\>_{\C^b}$. 
With $E(\cdot)$ we define the free part $H_0$ by
\begin{align}
H_0:=\frac{1}{L^d}\sum_{\rho,\eta\in \cB}\sum_{\bx,\by\in\G}\sum_{\s\in
 \spin}\sum_{\bk\in\G^*}e^{i\<\bk,\bx-\by\>}E(\bk)(\rho,\eta)\psi_{\rho\bx\s}^*\psi_{\eta\by\s},\label{eq_free_hamiltonian_general}
\end{align}
where $\psi_{\rho\bx\s}$ is the annihilation operator destroying an
electron with the spin $\s$ on the site
$(\rho,\bx)$ and
$\psi_{\rho\bx\s}^*$ is its adjoint operator called the creation
operator.

The interacting part $V$ is defined by
\begin{align}
V:=\sum_{\rho\in
\cB}\sum_{\bx\in\G}U_{\rho}\Bigg(\psi_{\rho\bx\ua}^*\psi_{\rho\bx\da}^*\psi_{\rho\bx\da}\psi_{\rho\bx\ua}-\frac{1}{2}\sum_{\s\in\spin}\psi_{\rho\bx\s}^*\psi_{\rho\bx\s}\Bigg),\label{eq_interacting_hamiltonian_general}
\end{align}
with the coupling constants $U_{\rho}\in \R$ $(\rho\in \cB)$. To be more
precise, the second term of $V$ should be considered as a part
representing the on-site energy minus the chemical potential. 
Since we are going to construct a theory for the half-filled systems, 
the on-site quadratic term of this form needs to be included in $V$.
The Hamiltonian governing the multi-band many-electron system is defined
by $H:=H_0+V$$:F_f(L^2(\cB\times\G\times\spin))\to F_f(L^2(\cB\times\G\times\spin))$. By the condition \eqref{eq_dispersion_hermite}, $H$ is
self-adjoint. In the rest of this section we will introduce the
Grassmann integral formulation of the normalized free energy density
$-\frac{1}{\beta L^d}\log(\Tr e^{-\beta H}/\Tr
e^{-\beta H_0})$.

\subsection{The finite-dimensional Grassmann
  integrals}\label{subsec_grassmann_integral}
Let us summarize the notions of Grassmann integration over a 
finite-dimensional Grassmann algebra. Take a parameter $h\in
(2/\beta)\N$ and introduce the discrete analogue of the interval
$[0,\beta)$ by
$$
[0,\beta)_h:=\left\{0,\frac{1}{h},\frac{2}{h},\cdots,\beta-\frac{1}{h}\right\}.
$$
We take the parameter $h$ from $(2/\beta)\N$ rather than from
$(1/\beta)\N$ in order to refer to the basic results of
\cite[\mbox{Appendix C}]{K1} constructed with $h$ belonging to
$(2/\beta)\N$. The index sets $I_0$, $I$ are defined by
\begin{align*}
&I_0:=\cB\times \G\times\spin\times[0,\beta)_h,\\
&I:=I_0\times\{1,-1\}.
\end{align*}
Let $N$ stand for the number $4b\beta h L^d$, the cardinality
of $I$. Let $\cV$ denote the complex vector space spanned by the
abstract basis $\{\psi_X\}_{X\in I}$. Similarly for $p\in \N$ let $\cV_p$
 be the complex vector space spanned by the basis $\{\psi_X^p\}_{X\in
 I}$. For $X\in I_0$ we sometimes write $\opsi_X$, $\psi_X$ in place of
 $\psi_{(X,1)}$, $\psi_{(X,-1)}$ respectively.

For a finite-dimensional complex vector space $W$ and $n\in \N$, set  
$\bigwedge^0 W:=\C $ and let $\bigwedge^n W$ denote the $n$-fold
anti-symmetric tensor product of $W$. Moreover, set 
$$
\bigwedge W:= \bigoplus_{n=0}^{\dim W}\bigwedge^n W.
$$
For $n\in \{0,1,\cdots,\dim W\}$ let $\cP_n:\bigwedge W\to \bigwedge^n W$
denote the standard projection.

We call $\bigwedge \cV$ Grassmann algebra generated by $\{\psi_X\}_{X\in
I}$. We write an element $f$ of $\bigwedge \cV$ as $f(\psi)$ when we want to show
its Grassmann variable explicitly. We can define $f(\psi+\psi^p)\in
\bigwedge (\cV\bigoplus\cV_p)$ from $f(\psi)\in \bigwedge \cV$ by
replacing each $\psi_X$ by $\psi_X+\psi_X^p$ inside $f(\psi)$.
For $\bX=(X_1,X_2,\cdots,X_m)\in I^m$ we simply write $\psi_{\bX}$ in
place of $\psi_{X_1}\psi_{X_2}\cdots \psi_{X_m}$ and $\bX_{\s}$ in place
of $(X_{\s(1)},X_{\s(2)},\cdots,X_{\s(m)})$ for any $\s\in \S_m$, where
$\S_m$ denotes the set of all permutations over $\{1,2,\cdots,m\}$. We call a
function $f_m:I^m\to \C$ anti-symmetric if
$f_m(\bX)=\sgn(\s)f_m(\bX_{\s})$ for any $\bX\in I^m$, $\s\in \S_m$. 

For any $f(\psi)\in \bigwedge \cV$ there uniquely exist $f_0\in\C$ and
anti-symmetric functions $f_m(\cdot):I^m\to\C$ $(m\in \{1,2,\cdots,N\})$ such
that 
$$
f(\psi)=f_0+\sum_{m=1}^N\left(\frac{1}{h}\right)^m\sum_{\bX\in I^m}f_m(\bX)\psi_{\bX}.
$$
Throughout the paper we follow the notational convention that for
$f(\psi)\in\bigwedge \cV$, $f_m(\psi)$ denotes $\cP_mf(\psi)$ and $f_m(\cdot):I^m\to\C$ denotes the anti-symmetric
kernel of $f_m(\psi)$. For example, we write as follows.
$$
f(\psi)=\sum_{m=0}^Nf_m(\psi),\ f_m(\psi)=\frah^m\sum_{\bX\in
I^m}f_m(\bX)\psi_{\bX}.
$$

We can construct a norm in the complex vector
space $\bigwedge \cV$ by defining a norm in the space of anti-symmetric
functions on $I^m$ for all $m\in \{1,2,\cdots, N\}$. In this paper we
will introduce various norms in the space of anti-symmetric functions. We
will define the norms one by one when necessary rather than by listing them all
together at this stage. 

Let $a\in\N$ and $D$ be a domain of $\C^a$. Assume that
$f(\bz)(\psi)\in\bigwedge \cV$ is parameterized by
$\bz\in\overline{D}$. We say that $f(\bz)(\psi)$ is continuous with
$\bz$ in $\overline{D}$ if so is $f(\bz)_m(\bX)$ $(\forall
m\in\{0,1,\cdots,N\},\bX\in I^m)$. Similarly we say that $f(\bz)(\psi)$
is analytic with $\bz$ in $D$ if so is $f(\bz)_m(\bX)$ $(\forall
m\in\{0,1,\cdots,N\},\bX\in I^m)$. In this case we define the Grassmann polynomials 
\begin{align*}
&\prod_{j=1}^a\left(\frac{\partial}{\partial
z_j}\right)^{n_j}f(\bz)(\psi)\in\bigwedge \cV,\\
&(\bz=(z_1,z_2,\cdots,z_a)\in
D,n_j\in\N\cup\{0\}\ (j=1,2,\cdots,a))
\end{align*}
by
\begin{align*}
\prod_{j=1}^a\left(\frac{\partial}{\partial
z_j}\right)^{n_j}f(\bz)(\psi):=\sum_{m=0}^N\frah^m\sum_{\bX\in I^m}
\prod_{j=1}^a\left(\frac{\partial}{\partial
z_j}\right)^{n_j}f(\bz)_m(\bX)\psi_{\bX}.
\end{align*}
Consider a sequence $(f^n(\psi))_{n=1}^{\infty}$ of $\bigwedge \cV$. We
say that $f^n(\psi)$ converges as $n\to \infty$ if so
does $f_m^n(\bX)$ $(\forall m\in\{0,1,\cdots,N\},\bX\in I^m)$. Consider
a sequence $(f^n(\bz)(\psi))_{n=1}^{\infty}$  of $\bigwedge \cV$
parameterized by $\bz\in \overline{D}$. We say that $f^n(\bz)(\psi)$
uniformly converges with $\bz\in\overline{D}$ as $n\to \infty$ if so
does $f^n(\bz)_m(\bX)$ $(\forall m\in\{0,1,\cdots,N\},\bX\in I^m)$.
If a norm is defined in $\bigwedge \cV$, the normed space $\bigwedge\cV$
is complete, since $\dim\bigwedge \cV<\infty$. These definitions of continuity,
analyticity, derivative, convergence and uniform convergence are
equivalent to those defined in the Banach space $\bigwedge\cV$.

For $p_1,p_2,\cdots,p_n,p\in \N$ with $p_j\neq p$ $(\forall j\in
\{1,2,\cdots,n\})$ the Grassmann Gaussian integral $\int\cdot
d\mu_{C_o}(\psi^p)$ with a covariance matrix $C_o:I_0^2\to\C$ is a
linear map from $\bigwedge (\cV_{p_1}\bigoplus\cdots
\bigoplus\cV_{p_n}\bigoplus\cV_{p})$ to $\bigwedge (\cV_{p_1}\bigoplus\cdots
\bigoplus\cV_{p_n})$ defined as follows. For any $f\in \bigwedge (\cV_{p_1}\bigoplus\cdots
\bigoplus\cV_{p_n})$,
\begin{align*}
&\int fd\mu_{C_o}(\psi^p):=f,\\
&\int
 f\opsi_{X_1}^p\cdots\opsi_{X_l}^p\psi_{Y_m}^p\cdots\psi_{Y_1}^pd\mu_{C_o}(\psi^p)\\
&:=\left\{\begin{array}{ll}\det(C_o(X_i,Y_j))_{1\le i,j\le l}f& \text{if }l=m,\\
0&\text{if }l\neq m.\end{array}
\right.
\end{align*}
Then, for any $g\in \bigwedge (\cV_{p_1}\bigoplus\cdots
\bigoplus\cV_{p_n}\bigoplus \cV_p)$ the value of $\int g
d\mu_{C_o}(\psi^p)$ can be uniquely determined by linearity and anti-symmetry.
For $Y\in I$ the left derivative $\partial/\partial \psi_Y^p$ is a
linear transform on 
$\bigwedge  (\cV_{p_1}\bigoplus\cdots
\bigoplus\cV_{p_n}\bigoplus \cV_p)$ defined as follows.
\begin{align*}
&\frac{\partial}{\partial
 \psi_Y^p}(\alpha\psi_{X_1}^{q_1}\cdots\psi_{X_l}^{q_l}\psi_{Y}^{p}\psi_{X_{l+1}}^{q_{l+1}}\cdots
 \psi_{X_m}^{q_m}):=(-1)^l\alpha\psi_{X_1}^{q_1}\cdots\psi_{X_l}^{q_l}\psi_{X_{l+1}}^{q_{l+1}}\cdots
 \psi_{X_m}^{q_m},\\
&\frac{\partial}{\partial
 \psi_Y^p}(\alpha\psi_{X_1}^{q_1}\cdots\psi_{X_l}^{q_l}\psi_{X_{l+1}}^{q_{l+1}}\cdots \psi_{X_m}^{q_m}):=0,\\
&\big(\forall \alpha\in\C,
 \psi_{X_1}^{q_1},\cdots,\psi_{X_m}^{q_m}
 \in
\{\psi_X^{p_1},\cdots, \psi_X^{p_n},\psi_X^{p}\}_{X\in
I}\backslash \{\psi_Y^p\}\big).
\end{align*}
 Then, the value of $\partial
/\partial \psi_Y^p$ on any element of $\bigwedge (\cV_{p_1}\bigoplus\cdots
\bigoplus\cV_{p_n}\bigoplus \cV_p)$ can be uniquely determined by
linearity and anti-symmetry. For $\bX=(X_1,X_2,\cdots,X_m)\in I^m$ we
sometimes write $\partial/\partial\psi_{\bX}$ in place of
$\partial/\partial\psi_{X_1}\cdot\partial/\partial\psi_{X_2}\cdots\partial/\partial\psi_{X_m}$
for simplicity. 

We will frequently deal with the exponential and the
logarithm of a Grassmann polynomial. Let us recall their
definitions. For $f\in \bigwedge \cV$ the polynomial $e^f$ $(\in\bigwedge \cV)$ is defined by
$$
e^f:=e^{f_0}\sum_{n=0}^N\frac{1}{n!}(f-f_0)^n.
$$
Additionally, assume that $f_0\in\C\backslash \R_{\le 0}$. The logarithm of $f$ is defined by
$$
\log f:=\log f_0 +\sum_{n=1}^N\frac{(-1)^{n-1}}{n}\left(\frac{f-f_0}{f_0}\right)^n.
$$
For any $z\in\C\backslash \R_{\le 0}$ we define $\log z$
by the principal value $\log|z|+i\theta$, where $\theta\in(-\pi,\pi)$
satisfies $z=|z|e^{i\theta}$.

\subsection{The full covariance}\label{subsec_covariance}
The covariance in our Grassmann Gaussian integral formulation of the
free energy density is equal to the non-interacting 2-point correlation function. For $(\rho,\bx,\s,x)$, $(\eta,\by,\tau,y)\in \cB\times \G\times
\spin \times [0,\beta)$,
\begin{align}
C(\rho\bx\s x,\eta\by\tau y):=\frac{\Tr(e^{-\beta
H_0}T(\psi_{\rho\bx\s}^*(x)\psi_{\eta\by\tau}(y)))}{\Tr e^{-\beta H_0}},\label{eq_covariance_2_point_function}
\end{align}
where $\psi_{\rho\bx \s}^*(x):=e^{xH_0}\psi_{\rho\bx \s}^*e^{-xH_0}$, 
 $\psi_{\eta\by \tau}(y):=e^{yH_0}\psi_{\eta\by \tau}e^{-yH_0}$,
$$
T(\psi_{\rho\bx\s}^*(x)\psi_{\eta\by\tau}(y)):=1_{x\ge
y}\psi_{\rho\bx\s}^*(x)\psi_{\eta\by\tau}(y)-1_{x<y}\psi_{\eta\by\tau}(y)\psi_{\rho\bx\s}^*(x).
$$ 
For a proposition $P$ the value of $1_P$ is defined as follows. $1_P:=1$ if $P$ is true, $1_P:=0$ otherwise.
We use the same symbol $C$ even when its variables are restricted to be
in the finite subset $I_0^2$. 

Let $\cM$ denote the set of the Matsubara frequency $(\pi/\beta)(2\Z+1)$.
We define the $h$-dependent finite subset $\cM_h$ of the Matsubara frequency by
$$\cM_h:=\left\{\o \in \frac{\pi}{\beta}(2\Z+1)\ \Big|\ |\o|<\pi h\right\}.$$
Let $I_b$ denote the $b\times b$ unit matrix. The covariance matrix
$C:I_0^2\to \C$ is characterized as follows. 

\begin{lemma}\label{lem_characterization_covariance}
For any $(\rho,\bx,\s,x)$, $(\eta,\by,\tau,y)\in I_0$, 
\begin{align}
&C(\rho\bx\s x,\eta\by\tau y)\label{eq_characterization_covariance}\\
&=\frac{\delta_{\s,\tau}}{\beta
 L^d}\sum_{\bk\in \G^*}\sum_{\o\in
 \cM_h}e^{-i\<\bx-\by,\bk\>}e^{i(x-y)\o}h^{-1}(I_b-e^{-i\frac{\o}{h}I_b+\frac{1}{h}\overline{E(\bk)}})^{-1}(\rho,\eta).\notag
\end{align}
\end{lemma}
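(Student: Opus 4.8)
The plan is to compute the non-interacting two-point function \eqref{eq_covariance_2_point_function} directly, using the explicit quadratic form of $H_0$ in \eqref{eq_free_hamiltonian_general}. First I would diagonalize $H_0$ in momentum space. Since $E(\bk)$ is Hermitian by \eqref{eq_dispersion_hermite}, for each $\bk\in\G^*$ write $E(\bk)=\sum_{n=1}^b \lambda_n(\bk)\,P_n(\bk)$ with real eigenvalues $\lambda_n(\bk)$ and orthogonal projections $P_n(\bk)$. Passing to the momentum-space operators $\hat\psi_{\bk\s}:=L^{-d/2}\sum_{\bx\in\G}e^{-i\<\bk,\bx\>}\psi_{\bx\s}$ (a vector of $b$ components indexed by $\cB$), the free Hamiltonian becomes $H_0=\sum_{\bk,\s}\hat\psi_{\bk\s}^{*}E(\bk)\hat\psi_{\bk\s}$, a sum of commuting one-mode quadratic Hamiltonians. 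This reduces the problem to the standard single-mode computation: for a fermionic mode with one-particle energy $\lambda$, the imaginary-time-ordered two-point function equals the familiar expression whose Fourier series in $x-y$ over the Matsubara frequencies is $(i\o-\lambda)^{-1}$ — here with a reminder that our convention puts the creation operator first, so a conjugate/transpose bookkeeping step is needed.

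The main substantive point is to match the truncated Matsubara sum over $\cM_h$ and the $h^{-1}(I_b-e^{-i\o/h I_b+\overline{E(\bk)}/h})^{-1}$ factor, rather than the naive continuum answer $(i\o I_b-E(\bk))^{-1}$ summed over all of $\cM$. The natural route is to invoke the discrete-time Grassmann Gaussian framework of \cite[Appendix C]{K1}, which is precisely why $h$ was chosen in $(2/\beta)\N$: in that setup the covariance of the discretized free system is, by construction, the inverse of the discrete quadratic form, whose momentum-space symbol carries the finite-difference factor $h(I_b - e^{-i\o/h}e^{-E(\bk)/h})$ over the finite frequency set $\cM_h$. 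Concretely I would (i) quote from \cite{K1} that the covariance $C$ on $I_0^2$ coincides with the two-point function \eqref{eq_covariance_2_point_function} of the continuum free system evaluated at the discrete times, and that it admits the finite Fourier representation over $\G^*\times\cM_h$; (ii) compute the $b\times b$ symbol by inverting the discrete Heisenberg evolution $\psi_{\rho\bx\s}^{*}(x)$, which at discrete times is governed by the one-step transfer operator $e^{-E(\bk)/h}$ acting on the creation operators — the complex conjugate $\overline{E(\bk)}$ in \eqref{eq_characterization_covariance} being exactly the appearance of $E(\bk)^{T}=\overline{E(\bk)}$ (using Hermiticity) when one passes from annihilation to creation operators; and (iii) read off the Kronecker delta $\delta_{\s,\tau}$ from the fact that $H_0$ is spin-diagonal in \eqref{eq_free_hamiltonian_general}, and the exponential factor $e^{-i\<\bx-\by,\bk\>}e^{i(x-y)\o}$ from the plane-wave diagonalization.

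In more detail, the single-mode identity I would establish (or cite) is that for the discretized free evolution the time-ordered expectation $\Tr(e^{-\beta H_0}T(\hat\psi^{*}(x)\hat\psi(y)))/\Tr e^{-\beta H_0}$, as a function on $[0,\beta)_h^2$, equals $\frac1{\beta}\sum_{\o\in\cM_h}e^{i(x-y)\o}\,h^{-1}(1-e^{-i\o/h}e^{-\lambda/h})^{-1}$ for a single one-particle energy $\lambda$; this is a direct geometric-series summation over the finite set $\cM_h$ of the discrete Green's function of the operator $1-e^{-i\o/h}e^{-\lambda/h}$, and it is the content underlying \cite[Appendix C]{K1}. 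Assembling these single-mode pieces over the spectral decomposition of $\overline{E(\bk)}$ (equivalently $E(\bk)$) for each $\bk$, summing the projections back to $(I_b-e^{-i\frac{\o}{h}I_b+\frac1h\overline{E(\bk)}})^{-1}$, and inserting the spatial Fourier factors gives \eqref{eq_characterization_covariance}. I expect the only real obstacle to be the careful bookkeeping in step (ii): getting the direction of the discrete-time transfer operator right, confirming that it is $\overline{E(\bk)}$ (not $E(\bk)$) that appears because the covariance pairs a creation operator with an annihilation operator, and checking that the finite frequency cutoff $|\o|<\pi h$ is exactly what the $h$-point time discretization produces — none of which is deep, but all of which must be done consistently with the conventions fixed in Subsection \ref{subsec_grassmann_integral} and in \cite{K1}.
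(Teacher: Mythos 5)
Your proposal follows essentially the same route as the paper's proof: diagonalize $E(\bk)$ by a unitary in momentum space, reduce to the band-diagonal single-mode two-point function (the paper quotes \cite[\mbox{Lemma B.10}]{K1} for its time-domain form), convert that time-domain expression into the truncated Matsubara sum over $\cM_h$ via \cite[\mbox{Lemma C.3}]{K1}, and reassemble the projections into the matrix inverse, with $\overline{E(\bk)}$ appearing for exactly the creation-operator-first reason you identify. The one correction to your displayed single-mode identity is the sign of the energy in the transfer factor: consistently with the paper's \eqref{eq_application_time_discretization_lemma} and the time-domain factor $e^{(x-y)\alpha_\g(\bk)}$ produced by the creation-first convention, it must read $h^{-1}(1-e^{-i\o/h}e^{+\lambda/h})^{-1}$, not $h^{-1}(1-e^{-i\o/h}e^{-\lambda/h})^{-1}$.
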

\begin{proof} One can complete the characterization in the
 same way as in \cite[\mbox{Appendix A}]{K3}. For readers'
 convenience we provide a sketch of the proof. 

For any $\bk\in\R^d$ let $\alpha_1(\bk),\cdots,\alpha_b(\bk)$ $(\in \R)$ be the eigen values
 of $E(\bk)$. There exists a
 unitary matrix $U(\bk)\in \Mat(b,\C)$ such that 
\begin{align}\label{eq_diagonalization_kinetic}
(U(\bk)^*E(\bk)U(\bk))(\rho,\eta)=\alpha_{\rho}(\bk)\delta_{\rho,\eta},\
 (\forall \rho,\eta\in \cB).
\end{align}
Define the matrix $W:(\cB\times \G\times \spin)^2\to\C$ by 
$$
W(\rho\bx\s,\eta\by\tau):=\frac{\delta_{\s,\tau}}{L^d}\sum_{\bk\in\G^*}e^{-i\<\bx-\by,\bk\>}\overline{U(\bk)(\rho,\eta)}.
$$
Set 
\begin{align*}
&(W\psi^*)_{\rho\bx\s}:=\sum_{(\eta,\by,\tau)\in \cB\times
 \G\times\spin}W(\rho\bx\s,\eta\by\tau)\psi_{\eta\by\tau}^*,\\
&(\forall (\rho,\bx,\s)\in \cB\times \G\times \spin).
\end{align*}
Let us define the linear transform $F$ on $F_f(L^2(\cB\times
 \G\times \spin))$ by
\begin{align*}
&F\O:=\O,\\
&F\psi^*_{X_1}\psi^*_{X_2}\cdots\psi^*_{X_n}\O:=(W\psi^*)_{X_1}(W\psi^*)_{X_2}\cdots
 (W\psi^*)_{X_n}\O,\\
&(\forall n\in \N,X_1,X_2,\cdots,X_n\in \cB\times \G\times \spin)
\end{align*}
and by linearity, where $\O$ denotes the vacuum of $F_f(L^2(\cB\times
 \G\times \spin))$. By using the unitary
 property of $U(\bk)$ and the equality \eqref{eq_diagonalization_kinetic} one can
 check that the transform $F$ is unitary and 
\begin{align*}
FH_0F^*=\sum_{\rho\in \cB}\sum_{\bx,\by\in \G}\sum_{\s\in
 \spin}\frac{1}{L^d}\sum_{\bk\in
 \G^*}e^{i\<\bx-\by,\bk\>}\alpha_{\rho}(\bk)\psi_{\rho\bx\s}^*\psi_{\rho\by\s}.
\end{align*}
For any $(\rho,\bx,\s,x)$, $(\eta,\by,\tau,y)\in \cB\times \G\times
 \spin\times[0,\beta)$ set 
\begin{align*}
&\tilde{\psi}_{\rho\bx\s}^*(x):=e^{xFH_0F^*}\psi_{\rho\bx\s}^*e^{-xFH_0F^*},\\
&\tilde{\psi}_{\rho\bx\s}(x):=e^{xFH_0F^*}\psi_{\rho\bx\s}e^{-xFH_0F^*},\\
&T(\tilde{\psi}_{\rho\bx\s}^*(x)\tilde{\psi}_{\eta\by\tau}(y)):=1_{x\ge
 y}\tilde{\psi}_{\rho\bx\s}^*(x)\tilde{\psi}_{\eta\by\tau}(y)-1_{x<
 y}\tilde{\psi}_{\eta\by\tau}(y)\tilde{\psi}_{\rho\bx\s}^*(x).
\end{align*}
Since $F$ is unitary,
\begin{align}
&C(\rho\bx\s x,\eta\by\tau y)\label{eq_pre_characterization_covariance}\\
&=\sum_{X,Y\in
 \cB\times\G\times\spin}W(\rho\bx\s,X)\overline{W(\eta\by\tau,
 Y)}\cdot\frac{\Tr(e^{-\beta F H_0
 F^*}T(\tilde{\psi}_{X}^*(x)\tilde{\psi}_{Y}(y)))}{\Tr e^{-\beta F H_0
 F^*}}.\notag
\end{align}
Since $FH_0F^*$ is diagonalized with respect to the band index, the
 2-point function $\Tr(e^{-\beta F H_0
 F^*}T(\tilde{\psi}_{X}^*(x)\tilde{\psi}_{Y}(y)))/\Tr e^{-\beta F H_0
 F^*}$ can be computed by a standard procedure (see, e.g.,
 \cite[\mbox{Lemma B.10}]{K1}). The result is that 
\begin{align}
&\frac{\Tr(e^{-\beta F H_0
 F^*}T(\tilde{\psi}_{\rho'\bx'\s'}^*(x)\tilde{\psi}_{\eta'\by'\tau'}(y)))}{\Tr e^{-\beta F H_0
 F^*}}\label{eq_time_characterization_covariance}\\
&=\frac{\delta_{\rho',\eta'}\delta_{\s',\tau'}}{L^d}\sum_{\bk\in\G^*}e^{-i\<\bx'-\by',\bk\>}e^{(x-y)\alpha_{\rho'}(\bk)}\left(\frac{1_{x\ge
 y}}{1+e^{\beta \alpha_{\rho'}(\bk)}}- \frac{1_{x<
 y}}{1+e^{-\beta
 \alpha_{\rho'}(\bk)}}\right),\notag\\
&(\forall (\rho',\bx',\s'),(\eta',\by',\tau')\in
 \cB\times\G\times\spin).\notag
\end{align}
By substituting \eqref{eq_time_characterization_covariance} into
 \eqref{eq_pre_characterization_covariance},
\begin{align}
&C(\rho\bx\s x,\eta\by\tau
 y)=\frac{\delta_{\s,\tau}}{L^d}\sum_{\bk\in\G^*}e^{-i\<\bx-\by,\bk\>}\sum_{\g\in
 \cB}\overline{U(\bk)(\rho,\g)}U(\bk)(\eta,\g)\label{eq_time_characterization_covariance_full}\\
&\qquad\qquad\qquad\qquad\cdot e^{(x-y)\alpha_{\g}(\bk)}\left(\frac{1_{x\ge y}}{1+e^{\beta
 \alpha_{\g}(\bk)}}-\frac{1_{x< y}}{1+e^{-\beta
 \alpha_{\g}(\bk)}}\right),\notag\\
&(\forall (\rho,\bx,\s,x),(\eta,\by,\tau, y)\in \cB\times \G\times
 \spin \times [0,\beta)).\notag
\end{align}
Since $h\in (2/\beta)\N$, we can apply \cite[\mbox{Lemma C.3}]{K1}
 to obtain that for any $x,y\in [0,\beta)_h$,
\begin{align}
e^{(x-y)\alpha_{\g}(\bk)}\left(\frac{1_{x\ge y}}{1+e^{\beta
 \alpha_{\g}(\bk)}}-\frac{1_{x< y}}{1+e^{-\beta
 \alpha_{\g}(\bk)}}\right)=\frac{1}{\beta}\sum_{\o\in \cM_h}\frac{e^{i(x-y)\o}}{h(1-e^{-i\frac{\o}{h}+\frac{1}{h}\alpha_{\g}(\bk)})}.
\label{eq_application_time_discretization_lemma}
\end{align}
By combining \eqref{eq_application_time_discretization_lemma} with \eqref{eq_time_characterization_covariance_full} and using
 \eqref{eq_diagonalization_kinetic} we can derive \eqref{eq_characterization_covariance}.
\end{proof}

\subsection{The Grassmann Gaussian integral
  formulation}\label{subsec_grassmann_gaussian_integral}
Here we formulate $\log(\Tr e^{-\beta H}/\Tr e^{-\beta H_0})$ into the
Grassmann Gaussian integral with the covariance $C$. Let us introduce a
counterpart of the interaction $V$ in the Grassmann algebra $\bigwedge
\cV$.
\begin{align}\label{eq_interaction_counterpart}
&V(\psi)\\
&:=\frac{1}{h}\sum_{(\rho,\bx,x)\in\cB\times \G\times [0,\beta)_h}U_{\rho}\Bigg(\opsi_{\rho\bx\ua x}\opsi_{\rho\bx\da x}\psi_{\rho\bx\da x}\psi_{\rho\bx\ua x}-\frac{1}{2}\sum_{\s\in\spin}\opsi_{\rho\bx\s
 x}\psi_{\rho\bx\s x}\Bigg).\notag
\end{align}
From now we simply write $\bU$ in place of $(U_1,U_2,\cdots,U_b)$.
\begin{lemma}\label{lem_formulation_covariance}
The following statements hold.
\begin{enumerate}
\item\label{item_real_part_positive}
For any $U_{max}\in\R_{>0}$ there exists $h_0\in \R_{>0}$ such that 
\begin{align*}
&\Re\int e^{-V(\psi)}d\mu_{C}(\psi)>0,\\
&(\forall \bU\in\R^b\text{ with }|U_{\rho}|\le U_{max}\ (\forall
 \rho \in \cB),\ \forall h\in (2/\beta)\N\text{ with }h\ge h_0).  
\end{align*}
\item\label{item_logarithm_uniform_convergence}
For any $U_{max}\in\R_{>0}$,
\begin{align*}
&\lim_{h\to \infty\atop h\in (2/\beta)\N}\sup_{\bU\in\R^b\text{ with }\atop
 |U_{\rho}|\le U_{max}(\forall \rho\in \cB)}\\
&\quad\cdot\left|\log\left(\frac{\Tr e^{-\beta H}}{\Tr e^{-\beta H_0}}\right)
-\log\left(\int e^{-V(\psi)}d\mu_{C}(\psi)\right)\right|=0.
\end{align*}
\end{enumerate}
\end{lemma}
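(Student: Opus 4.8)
The plan is to expand both $\Tr e^{-\beta H}/\Tr e^{-\beta H_{0}}$ and $\int e^{-V(\psi)}\,d\mu_{C}(\psi)$ in powers of the interaction and to prove that, as $h\to\infty$, the two expansions converge to the same limit uniformly for $|U_{\rho}|\le U_{max}$; statements \eqref{item_real_part_positive} and \eqref{item_logarithm_uniform_convergence} will then follow from elementary properties of the ratio $\Tr e^{-\beta H}/\Tr e^{-\beta H_{0}}$. On the operator side I would use the Duhamel expansion
\[
\frac{\Tr e^{-\beta H}}{\Tr e^{-\beta H_{0}}}
=\sum_{n=0}^{\infty}(-1)^{n}\int_{0\le s_{n}\le\cdots\le s_{1}\le\beta}
\langle V(s_{1})V(s_{2})\cdots V(s_{n})\rangle_{0}\,ds_{1}\cdots ds_{n},
\]
where $V(s):=e^{sH_{0}}Ve^{-sH_{0}}$ and $\langle\,\cdot\,\rangle_{0}:=\Tr(e^{-\beta H_{0}}\,\cdot\,)/\Tr e^{-\beta H_{0}}$. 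Since $F_{f}(L^{2}(\cB\times\G\times\spin))$ is finite dimensional, $\|V(s)\|$ is bounded uniformly in $s\in[0,\beta]$, so together with the simplex volume $\beta^{n}/n!$ this series converges absolutely, uniformly for $\bU$ in the compact set $\{|U_{\rho}|\le U_{max}\}$. On the Grassmann side, $e^{-V(\psi)}=\sum_{n\ge0}\tfrac{(-1)^{n}}{n!}V(\psi)^{n}$ is a polynomial identity in $\bigwedge\cV$, and by the definition of the Grassmann Gaussian integral each term $\tfrac{1}{n!}\int V(\psi)^{n}\,d\mu_{C}(\psi)$ expands into $h^{-n}$ times a sum over $(x_{1},\dots,x_{n})\in[0,\beta)_{h}^{\,n}$ of products of coupling constants with determinants built from the covariance $C$ evaluated at those time points.

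The heart of the argument is the term-by-term comparison: I would show that $\tfrac{(-1)^{n}}{n!}\int V(\psi)^{n}\,d\mu_{C}(\psi)$ converges, as $h\to\infty$, to the $n$-th summand of the Duhamel series, uniformly in $\bU$. Two ingredients enter. First, the fermionic Wick theorem represents free time-ordered expectations as determinants of the time-ordered two-point function, which is exactly the kernel $C$ of Lemma~\ref{lem_characterization_covariance}; hence the factor $h^{-n}\sum_{(x_{1},\dots,x_{n})}$ in the Grassmann expansion is precisely the Riemann sum for the corresponding simplex integral, the $n!$ reorderings of the time variables building the full simplex out of a single ordered sector. Second, one must control the discretization near coincident times, where $C$ carries the jump produced by the time-ordering $1_{x\ge y}$ versus $1_{x<y}$; this is the situation handled by \cite[Lemma~C.3]{K1}, whose applicability is the reason $h$ is taken in $(2/\beta)\N$, as already exploited in the proof of Lemma~\ref{lem_characterization_covariance}. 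The ``diagonal'' configurations in which two or more of the $x_{i}$ coincide contribute $O(1/h)$ and drop out in the limit. I expect this comparison --- matching the discrete Grassmann expansion to the continuum Duhamel expansion in the presence of the equal-time jump and the coincident-time terms --- to be the main obstacle; it is the technical core carried out in \cite[Appendix~C]{K1} and adapted in \cite[Appendix~A]{K3}, and the present lemma can be proved along the same lines.

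To pass from term-by-term convergence to convergence of the full sums, a tail bound uniform in $h$ is required. For this I would use a Gram representation of $C$: from Lemma~\ref{lem_characterization_covariance} one gets $C(X,Y)=\langle f_{X},g_{Y}\rangle$ with $\sup_{X}\|f_{X}\|$ and $\sup_{Y}\|g_{Y}\|$ bounded independently of $h$ (this $h$-uniform determinant bound is standard and is the one used throughout \cite{K1}), so $|\det(C(X_{i},Y_{j}))_{1\le i,j\le m}|\le\Gamma^{m}$ with $\Gamma$ independent of $h$. Combined with $h^{-n}\sum_{x\in[0,\beta)_{h}}|\,\cdot\,|\le\beta\sup|\,\cdot\,|$ and $|U_{\rho}|\le U_{max}$, this gives $\big|\tfrac{1}{n!}\int V(\psi)^{n}\,d\mu_{C}(\psi)\big|\le C_{0}^{\,n}/n!$ with $C_{0}$ independent of $h$ and of $\bU$ on the compact set; hence the difference of the two series is dominated by a convergent series uniformly in $h$, and $\lim_{h\to\infty}$ may be interchanged with $\sum_{n}$. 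This yields $\int e^{-V(\psi)}\,d\mu_{C}(\psi)\to\Tr e^{-\beta H}/\Tr e^{-\beta H_{0}}$ as $h\to\infty$, uniformly for $|U_{\rho}|\le U_{max}$.

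It remains to read off the two assertions. Since $\Tr e^{-\beta H}$ and $\Tr e^{-\beta H_{0}}$ are traces of positive operators on a finite-dimensional space they are strictly positive, and $\bU\mapsto\Tr e^{-\beta H}$ is continuous; hence on $\{|U_{\rho}|\le U_{max}\}$ the ratio $\Tr e^{-\beta H}/\Tr e^{-\beta H_{0}}$ is bounded below by some $c_{0}>0$ and above by some $M$. By the uniform convergence above, for all sufficiently large $h\in(2/\beta)\N$ the value $\int e^{-V(\psi)}\,d\mu_{C}(\psi)$ lies in $\{z\in\C:\Re z\ge c_{0}/2,\ |z|\le M+1\}$, a compact subset of $\C\setminus\R_{\le0}$; this proves \eqref{item_real_part_positive}. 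On that compact set $\log$ is uniformly continuous, so $\log\int e^{-V(\psi)}\,d\mu_{C}(\psi)\to\log(\Tr e^{-\beta H}/\Tr e^{-\beta H_{0}})$ uniformly for $|U_{\rho}|\le U_{max}$, which is \eqref{item_logarithm_uniform_convergence}.
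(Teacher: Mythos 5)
Your proposal follows essentially the same route as the paper: expand both the operator trace (Duhamel plus fermionic Wick theorem, yielding determinants of $C$) and the Grassmann integral in powers of the interaction, match them term by term as $h\to\infty$ via dominated convergence, control the tails by an $h$-uniform determinant bound, and then read off both assertions from the strict positivity of $\Tr e^{-\beta H}/\Tr e^{-\beta H_0}$ and the uniform continuity of $\log$ on a compact subset of $\C\setminus\R_{\le 0}$. The paper implements the term-by-term comparison by rewriting the discrete sum as a continuum integral against a piecewise-constant-in-time covariance $\tilde{C}$ and comparing determinants pointwise a.e., which is the same idea as your Riemann-sum argument.

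One ingredient in your write-up needs adjustment: you cannot obtain the $h$-uniform determinant bound for the \emph{full} covariance $C$ by the naive Gram representation read off from Lemma \ref{lem_characterization_covariance}, because the natural Gram vectors in $L^2(\cB\times\G^*\times\spin\times\cM_h)$ have squared norms of order $\frac{1}{\beta}\sum_{\o\in\cM_h}1$, which diverges as $h\to\infty$ (this is precisely why the paper reserves Gram's inequality for the covariances carrying a Matsubara cut-off in Lemma \ref{lem_h_independent_determinant_bound}). The paper instead bounds $|\det(C(X_i,Y_j))|\le D_1 D_2^n$ directly from operator norms of $e^{xH_0}\psi^{(*)}e^{-xH_0}$ (see \eqref{eq_crudest_determinant_bound}); this bound is $\beta,L$-dependent but $h$-independent, which is all that is needed here since $\beta$ and $L$ are fixed. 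With that substitution your tail estimate and the interchange of limit and sum go through exactly as you describe.
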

\begin{proof}
These claims can be proved in the same way as in the proof of \cite[\mbox{Lemma
 3.4}]{K2}, \cite[\mbox{Appendix B}]{K3}. We outline the proof for self-containedness. We can rewrite the interacting
 part $V$ of the Hamiltonian $H$ as follows. 
\begin{align*}
V=&\sum_{(\rho,\bx,\s),(\eta,\by,\tau)\atop\in\cB\times\G\times\spin}w_1(\rho\bx\s,\eta\by\tau)\psi_{\rho\bx\s}^*\psi_{\eta\by\tau}\\
&+\prod_{j=1}^2\Bigg(\sum_{(\rho_j,\bx_j,\s_j),(\eta_j,\by_j,\tau_j)\atop\in\cB\times\G\times\spin}\Bigg)w_2(\rho_1\bx_1\s_1,\rho_2\bx_2\s_2,\eta_1\by_1\tau_1,\eta_2\by_2\tau_2)\\
&\quad\cdot\psi_{\rho_1\bx_1\s_1}^*\psi_{\rho_2\bx_2\s_2}^*\psi_{\eta_2\by_2\tau_2}\psi_{\eta_1\by_1\tau_1},
\end{align*}
where
\begin{align*}
&w_1(\rho\bx\s,\eta\by\tau):=-\frac{1_{(\rho,\bx,\s)=(\eta,\by,\tau)}}{2}U_{\rho},\\
&w_2(\rho_1\bx_1\s_1,\rho_2\bx_2\s_2,\eta_1\by_1\tau_1,\eta_2\by_2\tau_2)\\
&:=1_{(\rho_1,\bx_1)=(\rho_2,\bx_2)=(\eta_1,\by_1)=(\eta_2,\by_2)}1_{(\s_1,\s_2,\tau_1,\tau_2)=(\ua,\da,\ua,\da)}U_{\rho_1}.
\end{align*}
By repeating the same argument as in \cite[\mbox{Proposition 3.2}]{K2} we can
 derive the following series.
\begin{align}
\frac{\Tr e^{-\beta H}}{\Tr e^{-\beta
 H_0}}=1+&\sum_{m=1}^{\infty}\frac{(-1)^m}{m!}\prod_{k=1}^m
\Bigg(\sum_{l_k=1}^2\sum_{\bX_k,\bY_k\atop \in (\cB\times \G\times
 \spin)^{l_k}}\int_0^{\beta}dx_kw_{l_k}(\bX_k,\bY_k)\Bigg)\label{eq_continuous_series}\\
 &\quad\cdot \det(C(X_p,Y_q))_{1\le p,q\le
 \sum_{k=1}^ml_k},\notag
\end{align}
where the variables are defined by the following rule.
\begin{align}
&\bX_k:=((\rho_{k,1},\bx_{k,1},\s_{k,1}),\cdots,(\rho_{k,l_k},\bx_{k,l_k},\s_{k,l_k})),\label{eq_variables_relation}\\
&\bY_k:=((\eta_{k,1},\by_{k,1},\tau_{k,1}),\cdots,(\eta_{k,l_k},\by_{k,l_k},\tau_{k,l_k})),\notag\\
&X_p:=(\rho_{u+1,v},\bx_{u+1,v},\s_{u+1,v},x_{u+1}),\notag\\
&Y_p:=(\eta_{u+1,v},\by_{u+1,v},\tau_{u+1,v},x_{u+1}),\notag
\end{align}
for $p=\sum_{k=1}^ul_k+v$, $u\in \{0,1,\cdots,m-1\}$, $v\in
 \{1,\cdots,l_{u+1}\}$.

We define the function $\bU\mapsto P(\bU):\C^b\to \C$ by the right-hand
 side of \eqref{eq_continuous_series}. By replacing the integral
 $\int_0^{\beta}dx_k$ by the Riemann sum $\frac{1}{h}\sum_{x\in
 [0,\beta)_h}$ we introduce the discrete analogue $P_h$ of $P$ as follows.
\begin{align*}
P_h(\bU):=1+&\sum_{m=1}^{N/2}\frac{(-1)^m}{m!}\prod_{k=1}^m
\Bigg(\sum_{l_k=1}^2\sum_{\bX_k,\bY_k\atop \in (\cB\times \G\times
 \spin)^{l_k}}\frac{1}{h}\sum_{x_k\in [0,\beta)_h}w_{l_k}(\bX_k,\bY_k)\Bigg)\notag\\
 &\quad\cdot \det(C(X_p,Y_q))_{1\le p,q\le
 \sum_{k=1}^ml_k},
\end{align*}
where the variables are defined by the rule
 \eqref{eq_variables_relation}.

Define the function $\tilde{C}:(\cB\times \G\times\spin\times
 [0,\beta))^2\to \C$ by
$\tilde{C}(\rho\bx\s x,\eta \by\tau y)$ $:=C(\rho\bx\s \hat{x},\eta \by
 \tau\hat{y})$ with $\hat{x},\hat{y}\in [0,\beta)_h$ satisfying
 $x\in [\hat{x},\hat{x}+1/h)$, $y\in [\hat{y},\hat{y}+1/h)$. The
 function $P_h$ can be rewritten as follows. 
\begin{align*}
P_h(\bU)=1+&\sum_{m=1}^{N/2}\frac{(-1)^m}{m!}\prod_{k=1}^m
\Bigg(\sum_{l_k=1}^2\sum_{\bX_k,\bY_k\atop\in (\cB\times \G\times
 \spin)^{l_k}}\int_0^{\beta}dx_kw_{l_k}(\bX_k,\bY_k)\Bigg)\notag\\
 &\quad\cdot \det(\tilde{C}(X_p,Y_q))_{1\le p,q\le
 \sum_{k=1}^ml_k},
\end{align*}
Then, we have for any $U_{max}\in\R_{>0}$ that
\begin{align}
&\sup_{\bU\in\C^b\text{ with }\atop |U_{\rho}|\le U_{\max}(\forall
 \rho\in\cB)}|P(\bU)-P_h(\bU)|\label{eq_pre_uniform_inequality}\\
&\le
 \sum_{m=1}^{\infty}\frac{1}{m!}\prod_{k=1}^m\Bigg(\sum_{l_k=1}^2\sum_{\bX_k,\bY_k\atop
 \in (\cB\times \G\times
 \spin)^{l_k}}\int_0^{\beta}dx_k\sup_{\bU\in\C^b\text{ with }\atop
 |U_{\rho}|\le U_{max}(\forall \rho\in\cB)}
|w_{l_k}(\bX_k,\bY_k)|\Bigg)\notag\\
&\qquad\cdot|\det(C(X_p,Y_q))_{1\le p,q\le
 \sum_{k=1}^ml_k}-\det(\tilde{C}(X_p,Y_q))_{1\le p,q\le
 \sum_{k=1}^ml_k}|.\notag
\end{align}

Now let us confirm the fact that $C$, $\tilde{C}$ have
 $\beta,L$-dependent determinant bounds. For any
 $(\rho_j,\bx_j,\s_j,x_j)$, $(\eta_j,\by_j,\tau_j,y_j)\in \cB\times
 \G\times \spin \times [0,\beta)$ $(j=1,2,\cdots,n)$ we can choose
 operators $A_j$ $(j=1,2,\cdots,2n)$ from
 $\{e^{x_jH_0}\psi_{\rho_j\bx_j\s_j}^*e^{-x_jH_0},e^{y_jH_0}\psi_{\eta_j\by_j\tau_j}e^{-y_jH_0}\}_{j=1}^n$
 so that 
$$
|\det(C(\rho_p\bx_p\s_px_p,\eta_q\by_q\tau_qy_q))_{1\le p,q\le
 n}|=\left|\frac{\Tr(e^{-\beta H_0}A_1A_2\cdots A_{2n})}{\Tr e^{-\beta H_0}}\right|.
$$ 
Let $\<\cdot,\cdot\>_{F_f}$ denote the inner product of the Fermionic
 Fock space $F_f(L^2(\cB$ $\times \G\times \spin))$ and 
$\|\cdot\|_{F_f}$ denote the norm induced by $\<\cdot,\cdot\>_{F_f}$.
 For any linear transform
 $\zeta$ on $F_f(L^2(\cB\times \G\times \spin))$ let $\|\zeta\|_{\mathfrak{B}(F_f)}$ denote its operator norm
 defined by
$$
\|\zeta\|_{\mathfrak{B}(F_f)}:=\sup_{g\in F_f(L^2(\cB\times \G\times
 \spin))\atop \text{ with }\|g\|_{F_f}=1}\|\zeta g\|_{F_f}.
$$
Since $\|e^{xH_0}\psi_{\rho\bx\s}^{(*)}e^{-xH_0}\|_{\mathfrak{B}(F_f)}\le
 e^{2\beta \|H_0\|_{\mathfrak{B}(F_f)}}$ $(\forall (\rho,\bx,\s,x)\in
 \mathcal{B}\times \G \times \spin \times [0,\beta))$, we have that 
\begin{align}
&|\det (C(X_p,Y_q))|_{1\le p,q\le n}\le D_1\cdot
 D_2^n,\label{eq_crudest_determinant_bound}\\
&(\forall X_j,Y_j \in\cB\times \G\times\spin\times [0,\beta)\ (j=1,2,\cdots,n)),\notag
\end{align}
where 
$$D_1:=\frac{2^{2bL^d}e^{\beta \|H_0\|_{\mathfrak{B}(F_f)}}}{\Tr
 e^{-\beta H_0}},\
 D_2:=e^{4\beta \|H_0\|_{\mathfrak{B}(F_f)}}.$$

By using the determinant bound \eqref{eq_crudest_determinant_bound} we obtain the inequality 
\begin{align}
&\frac{1}{m!}\prod_{k=1}^m\Bigg(\sum_{l_k=1}^2\sum_{\bX_k,\bY_k\atop\in (\cB\times \G\times
 \spin)^{l_k}}\int_0^{\beta}dx_k\sup_{\bU\in\C^b\text{ with }\atop
 |U_{\rho}|\le U_{max}(\forall \rho\in
 \cB)}|w_{l_k}(\bX_k,\bY_k)|\Bigg)\label{eq_each_term_bound_l1}\\
&\qquad\cdot|\det(C(X_p,Y_q))_{1\le p,q\le
 \sum_{k=1}^ml_k}-\det(\tilde{C}(X_p,Y_q))_{1\le
 p,q\le \sum_{k=1}^ml_k}|\notag\\
&\le \frac{2D_1}{m!}(U_{max}bL^b\beta(D_2+D_2^2))^m.\notag
\end{align}

Since $(x,y)\mapsto C(\rho\bx\s x,\eta\by\tau y)$ is continuous a.e. in
 $[0,\beta)^2$, so is $(x_1,x_2,\cdots,x_m)\mapsto \det
 (C(X_p,Y_q))_{1\le p,q\le 
 \sum_{k=1}^ml_k}$ in $[0,\beta)^m$. Thus, 
\begin{align*}
 \lim_{h\to \infty\atop h\in (2/\beta)\N}|\det(C(X_p,Y_q))_{1\le p,q\le
 \sum_{k=1}^ml_k}-\det(\tilde{C}(X_p,Y_q))_{1\le p,q\le \sum_{k=1}^ml_k}|=0,
\end{align*}
for a.e. $(x_1,x_2,\cdots,x_n)\in [0,\beta)^m$. Therefore, the dominated
 convergence theorem for $L^1([0,\beta)^m)$ ensures that
\begin{align}
&\lim_{h\to \infty\atop h\in
 (2/\beta)\N}\frac{1}{m!}\prod_{k=1}^m\Bigg(\sum_{l_k=1}^2\sum_{\bX_k,\bY_k\atop
 \in (\cB\times \G\times
 \spin)^{l_k}}\int_0^{\beta}dx_k\sup_{\bU\in\C^b\text{ with }\atop
 |U_{\rho}|\le U_{max}(\forall \rho\in \cB)}|w_{l_k}(\bX_k,\bY_k)|\Bigg)\label{eq_each_term_convergence_l1}\\
&\qquad\cdot|\det(C(X_p,Y_q))_{1\le p,q\le
 \sum_{k=1}^ml_k}-\det(\tilde{C}(X_p,Y_q))_{1\le
 p,q\le \sum_{k=1}^ml_k}|=0.\notag
\end{align}

By \eqref{eq_each_term_bound_l1} and \eqref{eq_each_term_convergence_l1}
 we can apply the dominated convergence theorem for $l^1(\N)$
to deduce from \eqref{eq_pre_uniform_inequality} that
\begin{align}
\lim_{h\to \infty\atop h\in (2/\beta)\N}\sup_{\bU\in\C^b\text{ with }\atop
 |U_{\rho}|\le U_{max}(\forall \rho\in
 \cB)}|P(\bU)-P_h(\bU)|=0.\label{eq_primitive_uniform_convergence}
\end{align}
By replacing the determinant in $P_h$ by the Grassmann Gaussian integral
 we can derive that
\begin{align}
P_h(\bU)=\int
 e^{-V(\psi)}d\mu_{C}(\psi).\label{eq_primitive_grassmann_formulation}
\end{align}
By \eqref{eq_primitive_uniform_convergence} and
 \eqref{eq_primitive_grassmann_formulation},
\begin{align}
\lim_{h\to \infty\atop h\in (2/\beta)\N}\sup_{\bU\in\R^b\text{ with }\atop
 |U_{\rho}|\le U_{max}(\forall \rho\in
 \cB)}\left|\frac{\Tr e^{-\beta H}}{\Tr e^{-\beta H_0}}-\int
 e^{-V(\psi)}d\mu_C(\psi)\right|=0,\label{eq_partition_uniform_convergence}
\end{align}
which implies
 the claim \eqref{item_real_part_positive}. The claim
 \eqref{item_logarithm_uniform_convergence} follows from the claim
 \eqref{item_real_part_positive} and \eqref{eq_partition_uniform_convergence}.
\end{proof}

\subsection{A symmetric formulation}\label{subsec_symmetric_formulation}
It is vital for the validity of the forthcoming IR integration
that any Grassmann polynomial produced by the single-scale IR integration is invariant under certain transforms.
 The Grassmann integral formulation constructed in the previous
 subsection does not satisfy one of the necessary invariant properties
 by itself if we connect it to the IR integration process. We need to modify the
 formulation into a more suitable form for the forthcoming IR
 analysis. For this purpose we introduce a few more
 covariances. Then, we propose another formulation, which will be shown to have desired
 symmetries in Section \ref{sec_IR_model}, by using the newly introduced covariances. 
For any $(\rho,\bx,\s,x)$, $(\eta,\by,\tau,y)\in I_0$ set 
\begin{align}\label{eq_covariance_negative_index}
C_{\le 0}^+(\rho\bx\s x,\eta\by\tau y):=&\frac{\delta_{\s,\tau}}{\beta
 L^d}\sum_{\bk\in \G^*}\sum_{\o\in
 \cM_h}e^{-i\<\bx-\by,\bk\>}e^{i(x-y)\o}\\
&\cdot\chi(h|1-e^{i\frac{\o}{h}}|)h^{-1}(I_b-e^{-i\frac{\o}{h}I_b+\frac{1}{h}\overline{E(\bk)}})^{-1}(\rho,\eta),\notag
\end{align}
where $\chi(\cdot):\R\to [0,1]$ is a smooth function. We
assume that the support of $\chi(\cdot)$ is contained in the interval
$[-c_{\chi},c_{\chi}]$, where $c_{\chi}\in\R_{>0}$ is a constant. In this
section we do not need more detailed information on the cut-off function $\chi(\cdot)$.
For any $(\rho,\bx,\s,x)$, $(\eta,\by,\tau,y)\in I_0$ set
\begin{equation}\label{eq_covariance_non_negative_index}
\begin{split}
C_{> 0}^+(\rho\bx\s x,\eta\by\tau y):=&\frac{\delta_{\s,\tau}}{\beta
 L^d}\sum_{\bk\in \G^*}\sum_{\o\in
 \cM_h}e^{-i\<\bx-\by,\bk\>}e^{i(x-y)\o}\\
&\cdot(1-\chi(h|1-e^{i\frac{\o}{h}}|))h^{-1}(I_b-e^{-i\frac{\o}{h}I_b+\frac{1}{h}\overline{E(\bk)}})^{-1}(\rho,\eta),
\end{split}
\end{equation}
so that $C(X,Y)=C_{\le 0}^+(X,Y)+C_{>0}^+(X,Y)$, $(\forall X,Y\in I_0)$. Define $C_{\le 0}^{\infty}:I_0^2\to
\C$ by 
\begin{align}
&C_{\le 0}^{\infty}(\rho\bx\s x,\eta\by\tau y)\label{eq_covariance_non_positive_index_infinity}\\
&:=\frac{\delta_{\s,\tau}}{\beta
 L^d}\sum_{\bk\in \G^*}\sum_{\o\in
 \cM_h}e^{-i\<\bx-\by,\bk\>}e^{i(x-y)\o}\chi(|\o|)(i\o I_b-\overline{E(\bk)})^{-1}(\rho,\eta).\notag
\end{align}
Moreover, we define the covariance
$C_{>0}^{-}:I_0^2\to\C$ as follows. For $(\rho,\bx,\s,x)$,
$(\eta,\by,\tau,y)\in I_0$, 
\begin{align}\label{eq_covariance_non_negative_index_another}
C_{> 0}^-(\rho\bx\s x,\eta\by\tau y):=&\frac{\delta_{\s,\tau}}{\beta
 L^d}\sum_{\bk\in \G^*}\sum_{\o\in
 \cM_h}e^{-i\<\bx-\by,\bk\>}e^{i(x-y)\o}\\
&\cdot(1-\chi(h|1-e^{i\frac{\o}{h}}|))h^{-1}(e^{i\frac{\o}{h}I_b-\frac{1}{h}\overline{E(\bk)}}-I_b)^{-1}(\rho,\eta).\notag
\end{align}
We can derive the following equality from the definitions.
\begin{lemma}\label{lem_relation_various_covariance}
For any $(\rho,\bx,\s,x)$, $(\eta,\by,\tau,y)\in I_0$, 
\begin{align*}
&C_{> 0}^+(\rho\bx\s x,\eta\by\tau
 y)+\frac{1_{(\rho,\bx,\s)=(\eta,\by,\tau)}}{\beta h}\sum_{\o\in
 \cM_h}e^{i(x-y)\o}\chi(h|1-e^{i\frac{\o}{h}}|)\\
&=C_{> 0}^-(\rho\bx\s x,\eta\by\tau
 y)+1_{(\rho,\bx,\s,x)=(\eta,\by,\tau,y)}.\notag
\end{align*}
\end{lemma}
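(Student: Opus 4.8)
The plan is to deduce the stated identity from a single elementary matrix computation, after which everything is a routine rearrangement of the momentum and frequency sums. For $\bk\in\G^*$ and $\o\in\cM_h$ set $A=A(\bk,\o):=e^{i\frac{\o}{h}I_b-\frac{1}{h}\overline{E(\bk)}}\in\Mat(b,\C)$, so that the matrices entering \eqref{eq_covariance_non_negative_index} and \eqref{eq_covariance_non_negative_index_another} are $h^{-1}(I_b-A^{-1})^{-1}$ and $h^{-1}(A-I_b)^{-1}$, respectively. First I would check that $A-I_b$ is invertible. Since $E(\bk)$ is Hermitian, so is its entrywise conjugate $\overline{E(\bk)}$, with the same real eigenvalues $\alpha_1(\bk),\dots,\alpha_b(\bk)$; hence the eigenvalues of $A$ are $e^{i\o/h-\alpha_j(\bk)/h}$, and because $h\in(2/\beta)\N$ forces $e^{i\o/h}\neq1$ for every $\o\in\cM_h$, none of these equals $1$. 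Thus $A-I_b$, and hence also $I_b-A^{-1}=A^{-1}(A-I_b)$, is invertible, so both covariances are genuinely (not merely formally) defined. From $(I_b-A^{-1})^{-1}=(A-I_b)^{-1}A$ one then gets the identity
\[
(A-I_b)^{-1}-(I_b-A^{-1})^{-1}=(A-I_b)^{-1}(I_b-A)=-I_b .
\]

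Multiplying this by $h^{-1}(1-\chi(h|1-e^{i\o/h}|))$ and substituting into \eqref{eq_covariance_non_negative_index}, \eqref{eq_covariance_non_negative_index_another}, I would obtain
\[
C_{>0}^-(\rho\bx\s x,\eta\by\tau y)-C_{>0}^+(\rho\bx\s x,\eta\by\tau y)=\frac{\delta_{\s,\tau}\delta_{\rho,\eta}}{\beta L^d h}\sum_{\bk\in\G^*}\sum_{\o\in\cM_h}e^{-i\<\bx-\by,\bk\>}e^{i(x-y)\o}\bigl(\chi(h|1-e^{i\frac{\o}{h}}|)-1\bigr).
\]
Writing $\chi-1=\chi+(-1)$ and invoking the two orthogonality relations $\frac{1}{L^d}\sum_{\bk\in\G^*}e^{-i\<\bx-\by,\bk\>}=1_{\bx=\by}$ for $\bx,\by\in\G$ and $\frac{1}{\beta h}\sum_{\o\in\cM_h}e^{i(x-y)\o}=1_{x=y}$ for $x,y\in[0,\beta)_h$ — the latter being a finite geometric sum, where again $h\in(2/\beta)\N$ is used so that $|\cM_h|=\beta h$ and $e^{i(x-y)\o}\neq 1$ whenever $x\neq y$ — the $(-1)$-part collapses to $-\delta_{\s,\tau}\delta_{\rho,\eta}1_{\bx=\by}1_{x=y}=-1_{(\rho,\bx,\s,x)=(\eta,\by,\tau,y)}$, while the $\chi$-part becomes $1_{(\rho,\bx,\s)=(\eta,\by,\tau)}\frac{1}{\beta h}\sum_{\o\in\cM_h}e^{i(x-y)\o}\chi(h|1-e^{i\o/h}|)$.

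Combining the two contributions yields
\[
C_{>0}^-(\rho\bx\s x,\eta\by\tau y)-C_{>0}^+(\rho\bx\s x,\eta\by\tau y)=\frac{1_{(\rho,\bx,\s)=(\eta,\by,\tau)}}{\beta h}\sum_{\o\in\cM_h}e^{i(x-y)\o}\chi(h|1-e^{i\frac{\o}{h}}|)-1_{(\rho,\bx,\s,x)=(\eta,\by,\tau,y)},
\]
and transposing the two terms on the right across the equality is precisely the assertion. I do not expect any genuine difficulty here: the only points calling for a little care are verifying that $A-I_b$ is invertible for all $\o\in\cM_h$ (so that the expressions $C_{>0}^{\pm}$ make sense termwise) and the bookkeeping involved in reducing the double sum to Kronecker deltas. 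All of the structural content sits in the identity $(A-I_b)^{-1}-(I_b-A^{-1})^{-1}=-I_b$.
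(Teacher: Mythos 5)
Your proof is correct, and it is exactly the computation the paper has in mind: the paper omits the argument entirely ("We can derive the following equality from the definitions"), and the identity $(A-I_b)^{-1}-(I_b-A^{-1})^{-1}=-I_b$ combined with the two orthogonality relations is the intended derivation. The invertibility check and the counting $\sharp\cM_h=\beta h$ are both handled correctly.
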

Finally we define the covariances $C_{>0}^{+(h)}$, $\cI:I_0^2\to \C$ by 
\begin{align*}
&C_{>0}^{+(h)}(\rho\bx\s x,\eta \by \tau y)\\
&:=C_{>0}^+(\rho\bx\s x,\eta \by \tau y)+\frac{1_{(\rho,\bx,\s)=(\eta,\by,\tau)}}{\beta h}\sum_{\o\in
 \cM_h}e^{i(x-y)\o}\chi(h|1-e^{i\frac{\o}{h}}|),\\
&\cI(\rho\bx\s x,\eta \by \tau
 y):=1_{(\rho,\bx,\s,x)=(\eta,\by,\tau,y)}.
\end{align*}

We will make another Grassmann integral
formulation out of these covariances. In order to prove that the Grassmann integral formulation
converges to the normalized free energy density as $h\to \infty$, we must know that these covariances have suitable 
determinant bounds.
\begin{lemma}\label{lem_h_independent_determinant_bound}
There exist $(\beta,L^d,b,\chi,E)$-dependent,
 $h$-independent constants $h_0,\ c_1\in\R_{>0}$ such that
the following inequalities hold true for any $h\in (2/\beta)\N$ with $h\ge
 h_0$.
\begin{align*}
&|\det(C_o(X_i,Y_j))_{1\le i,j\le n}|\le c_1^n,\\
&|\det(C_{>0}^{+(h)}(X_i,Y_j)-C_{>0}^{+}(X_i,Y_j))_{1\le i,j\le
 n}|\le\frac{1}{h}c_1^n,\\
&|\det(C_{\le 0}^{+}(X_i,Y_j)-C_{\le 0}^{\infty}(X_i,Y_j))_{1\le i,j\le
 n}|\le\frac{1}{h}c_1^n,\\
&(\forall n\in \N,X_j,Y_j\in I_0\ (j=1,2,\cdots,n))
\end{align*}
for $C_o=C$, $C_{\le 0}^+$, $C_{> 0}^+$, $C_{\le 0}^{\infty}$, $C_{>0}^-$, $C_{>
 0}^{+(h)}$ respectively.
\end{lemma}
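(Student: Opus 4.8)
\emph{Proof plan.} The plan is to produce, for each covariance in the list, a Gram representation $C_o(X,Y)=\langle f_X,g_Y\rangle_{\cH}$ in a Hilbert space $\cH$ whose Gram vectors satisfy $\sup_{X\in I_0}\|f_X\|$, $\sup_{X\in I_0}\|g_X\|\le c_1^{1/2}$ with $c_1$ independent of $h$ (and $\le c\,h^{-1/2}$ for the two difference covariances), and then to invoke Gram's inequality $|\det(\langle f_i,g_j\rangle)_{1\le i,j\le n}|\le\prod_{i=1}^n\|f_i\|\prod_{j=1}^n\|g_j\|$. Two elementary devices will recur. \emph{(a)} If $C_j(X,Y)=\langle f^j_X,g^j_Y\rangle_{\cH_j}$ with $\|f^j_X\|^2,\|g^j_Y\|^2\le\gamma_j$ for all $X,Y\in I_0$, then for any $\epsilon_1,\dots,\epsilon_r\in\C$ with $|\epsilon_j|\le 1$ the combination $\sum_j\epsilon_jC_j$ has a Gram representation in $\cH_1\oplus\dots\oplus\cH_r$ with Gram constant $\le\gamma_1+\dots+\gamma_r$; this applies to sums \emph{and} differences of covariances. \emph{(b)} Any $A\in\Mat(b,\C)$ factors as $A=P^*Q$ with $\|P\|_{b\times b}=\|Q\|_{b\times b}=\|A\|_{b\times b}^{1/2}$ (take the polar decomposition $A=W|A|$ and set $P:=|A|^{1/2}W^*$, $Q:=|A|^{1/2}$); inserting this in a momentum--frequency representation $C_o(\rho\bx\s x,\eta\by\tau y)=\frac{\delta_{\s,\tau}}{\beta L^d}\sum_{\bk\in\G^*}\sum_{\o\in\cM_h}e^{-i\langle\bx-\by,\bk\rangle}e^{i(x-y)\o}M(\bk,\o)(\rho,\eta)$ yields a Gram representation in $\ell^2(\G^*\times\cM_h\times\cB\times\spin)$ with Gram constant $\le\frac1{\beta L^d}\sum_{\bk\in\G^*}\sum_{\o\in\cM_h}\|M(\bk,\o)\|_{b\times b}$.

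For the full covariance $C$ an $h$-independent Gram representation, with a $(\beta,L^d,b,E)$-dependent constant, is already implicit in the derivation of the determinant bound \eqref{eq_crudest_determinant_bound} (which is valid for time arguments in all of $[0,\beta)$, hence in particular on $[0,\beta)_h$); equivalently it follows from the imaginary-time representation \eqref{eq_time_characterization_covariance_full}, the unitarity of the matrix $U(\bk)$ appearing there, and the classical Gram representation of the one-mode free Fermi propagator (cf.\ \cite{FKT2}). The cut-off covariances are handled by exploiting that $\chi$ is supported in $[-c_{\chi},c_{\chi}]$ and that $h|1-e^{i\o/h}|=2h|\sin(\o/2h)|\ge\frac2\pi|\o|$ for $\o\in\cM_h$; hence $\chi(|\o|)$ is supported in $|\o|\le c_{\chi}$ and $\chi(h|1-e^{i\o/h}|)$ in $|\o|\le\pi c_{\chi}/2$, so in each case the relevant $\o\in\cM_h$ form a set whose cardinality is bounded uniformly in $h$. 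Since $\overline{E(\bk)}$ is Hermitian with the (real) eigenvalues $\alpha_1(\bk),\dots,\alpha_b(\bk)$ of $E(\bk)$, one has $\|(i\o I_b-\overline{E(\bk)})^{-1}\|_{b\times b}=\max_{\g}|i\o-\alpha_{\g}(\bk)|^{-1}\le|\o|^{-1}\le\beta/\pi$, and device \emph{(b)} gives the $h$-independent bound for $C^{\infty}_{\le0}$ at once. For $C^+_{\le0}-C^{\infty}_{\le0}$ a first-order Taylor expansion, $e^{-i\frac{\o}{h}I_b+\frac1h\overline{E(\bk)}}=I_b+h^{-1}(\overline{E(\bk)}-i\o I_b)+O(h^{-2})$ and $h|1-e^{i\o/h}|=|\o|+O(h^{-2})$ (using that $\chi$ is smooth, hence Lipschitz on compacts, and $\sup_{\bk}\|E(\bk)\|_{b\times b}<\infty$), shows that for $h\ge h_0$ the difference of the two summand matrices has norm $O(h^{-1})$ in $\|\cdot\|_{b\times b}$, uniformly in $\bk\in\G^*$ and in the finitely many relevant $\o$; factoring it by \emph{(b)} gives Gram vectors of norm $O(h^{-1/2})$, whence $|\det(C^+_{\le0}(X_i,Y_j)-C^{\infty}_{\le0}(X_i,Y_j))|\le(c\,h^{-1})^n\le c_1^n h^{-1}$ — the third inequality — and, via \emph{(a)} applied to $C^+_{\le0}=C^{\infty}_{\le0}+(C^+_{\le0}-C^{\infty}_{\le0})$, an $h$-independent bound for $C^+_{\le0}$. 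Likewise the diagonal correction $C^{+(h)}_{>0}-C^+_{>0}=\frac{1_{(\rho,\bx,\s)=(\eta,\by,\tau)}}{\beta h}\sum_{\o\in\cM_h}e^{i(x-y)\o}\chi(h|1-e^{i\o/h}|)$ equals $\frac{1_{(\rho,\bx,\s)=(\eta,\by,\tau)}}{\beta h}\langle p_x,p_y\rangle$ with $p_x(\o):=e^{-ix\o}\chi(h|1-e^{i\o/h}|)^{1/2}$, and since $\|p_x\|^2=\sum_{\o\in\cM_h}\chi(h|1-e^{i\o/h}|)$ is bounded uniformly in $h$, the Gram vectors $(\beta h)^{-1/2}e_{(\rho,\bx,\s)}\otimes p_x$ have norm $O(h^{-1/2})$; this gives the second inequality.

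The three ``$>0$'' covariances are where a direct momentum--frequency representation fails: the summand of $C^+_{>0}$ behaves like $(i\o I_b-\overline{E(\bk)})^{-1}$, whose norm decays only as $|\o|^{-1}$, so $\frac1{\beta L^d}\sum_{\bk\in\G^*}\sum_{\o\in\cM_h}\|M(\bk,\o)\|_{b\times b}$ grows like $\mathrm{const}\cdot\log h$, and similarly for $C^-_{>0}$ and $C^{+(h)}_{>0}$. The remedy is not to bound them in isolation. First, $C^+_{>0}=C-C^+_{\le0}$, so \emph{(a)} (with one coefficient $-1$) combines the Gram representations of $C$ and of $C^+_{\le0}$ into an $h$-independent bound for $C^+_{>0}$. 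Then $C^{+(h)}_{>0}=C^+_{>0}+(C^{+(h)}_{>0}-C^+_{>0})$ together with the bound on the diagonal correction gives, again via \emph{(a)}, an $h$-independent bound for $C^{+(h)}_{>0}$. Finally Lemma \ref{lem_relation_various_covariance} gives $C^-_{>0}=C^{+(h)}_{>0}-\cI$, and $\cI(X,Y)=1_{X=Y}=\langle e_X,e_Y\rangle_{\ell^2(I_0)}$ has Gram constant $1$, so a last application of \emph{(a)} yields the $h$-independent bound for $C^-_{>0}$. Taking $c_1$ and $h_0$ large enough to dominate all the constants and thresholds above completes the proof.

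The main obstacle is precisely this last point: one must recognize that $C^+_{>0}$, $C^-_{>0}$ and $C^{+(h)}_{>0}$ cannot be controlled directly through their Matsubara sums and have to be reached from the bound on $C$ via the relations $C=C^+_{\le0}+C^+_{>0}$ and Lemma \ref{lem_relation_various_covariance}, so that the only genuine two-point-function input is the (classical) $h$-independent Gram representation of the full covariance $C$. The remaining ingredients — the Taylor estimates, the polar factorizations, and the bookkeeping of frequency supports and of unitarity — are routine.
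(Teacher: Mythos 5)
Your treatment of the Gram-representable pieces matches the paper's: the direct Gram representation in $L^2(\cB\times\G^*\times\spin\times\cM_h)$ for the cut-off covariances using the compact frequency support of $\chi$, the $O(1/h)$ Taylor comparison for $C_{\le 0}^{+}-C_{\le 0}^{\infty}$, the rank-structure of $C_{>0}^{+(h)}-C_{>0}^{+}$, and the reduction of $C_{>0}^{-}$ to $C_{>0}^{+(h)}$ via Lemma \ref{lem_relation_various_covariance} are all exactly what the paper does. You have also correctly identified the key structural point that the ``$>0$'' covariances cannot be bounded through their Matsubara sums and must be reached from the bound on $C$ via $C=C_{\le 0}^{+}+C_{>0}^{+}$.

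However, there is a genuine gap at precisely that step. Your device (a) combines \emph{Gram representations} by orthogonal direct sum, so applying it to $C_{>0}^{+}=C-C_{\le 0}^{+}$ requires an $h$-independent Gram representation of the full time-ordered covariance $C$, i.e.\ $C(X,Y)=\<f_X,g_Y\>_{\cH}$ with $\sup_X\|f_X\|\sup_Y\|g_Y\|$ bounded uniformly in $h$. You assert this is ``implicit in the derivation of \eqref{eq_crudest_determinant_bound}'', but that derivation is an operator-norm/trace argument producing a \emph{determinant bound} of the form $D_1D_2^n$, not a Gram factorization; and the ``classical'' Gram representation of the one-mode propagator is obstructed by the time-ordering indicator $1_{x\ge y}$ (in the discrete time lattice the kernel $1_{x<y}$ has Gram constant of order $\sqrt{\beta h}$, and in the Matsubara representation the Gram constant grows like $\log(\beta h)$ — this is exactly the Matsubara UV problem that motivates the non-Gram determinant bounds of Pedra--Salmhofer cited in the introduction). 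The paper circumvents this by using the Cauchy--Binet formula \eqref{eq_application_cauchy_binet}: writing $C^+_{>0,(n)}=(C_{(n)},I_n)\binom{I_n}{-C^+_{\le 0,(n)}}$, it combines the determinant bound $D_1D_2^n$ on all submatrices of $C$ with the Gram bound on $C_{\le 0}^{+}$ to obtain $|\det C^+_{>0,(n)}|\le \max\{1,D_1\}\bigl(D_2^{1/2}+(\sum_{\o}1_{|\o|\le \pi c_{\chi}/2})^{1/2}\bigr)^{2n}$. Your conclusion is correct, but to repair the argument you must replace every application of device (a) in which one summand is only known through a determinant bound (namely $C_{>0}^{+}$, and then $C_{>0}^{+(h)}$ and $C_{>0}^{-}$ built on top of it) by this Cauchy--Binet expansion; device (a) remains valid only where both pieces genuinely carry Gram representations, as for $C_{\le 0}^{+}=C_{\le 0}^{\infty}+(C_{\le 0}^{+}-C_{\le 0}^{\infty})$.
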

 \begin{proof}
The determinant bound on $C$ has been given in
  \eqref{eq_crudest_determinant_bound}. The determinant bounds on the
  other covariances can be proved by applying Gram's inequality in the
  complex Hilbert space $\cH=L^2(\cB\times\G^*\times\spin\times\cM_h)$,
  which consists of all complex-valued functions on
  $\cB\times\G^*\times\spin\times\cM_h$ and is equipped with the inner
  product $\<\cdot,\cdot\>_{\cH}$ defined by
\begin{align*}
\<f,g\>_{\cH}:=\frac{1}{\beta L^d}\sum_{K\in
 \cB\times\G^*\times\spin\times\cM_h}\overline{f(K)}g(K),\ (\forall
 f,g\in \cH).
\end{align*}
Define the vectors $f_{\rho\bx\s x}$, $g_{\rho\bx\s x}\in\cH$
  $((\rho,\bx,\s,x)\in I_0)$ by
\begin{align*}
&f_{\rho\bx\s
 x}(\eta,\bk,\tau,\o):=\delta_{\rho,\eta}\delta_{\s,\tau}e^{i\<\bx,\bk\>}e^{-ix\o}\chi(h|1-e^{i\frac{\o}{h}}|)^{\frac{1}{2}},\\
&g_{\rho\bx\s
 x}(\eta,\bk,\tau,\o):=\delta_{\s,\tau}e^{i\<\bx,\bk\>}e^{-ix\o}\chi(h|1-e^{i\frac{\o}{h}}|)^{\frac{1}{2}}\\
&\qquad\qquad\qquad\qquad\cdot h^{-1}(I_b-e^{-i\frac{\o}{h}I_b+\frac{1}{h}\overline{E(\bk)}})^{-1}(\eta,\rho),\\
&(\forall (\eta,\bk,\tau,\o)\in \cB\times\G^*\times\spin\times\cM_h).
\end{align*}
It follows that $C_{\le
  0}^+(X,Y)=\<f_X,g_Y\>_{\cH}$, $(\forall X,Y\in I_0)$. Moreover, by using the inequality  $h|1-e^{i\frac{\o}{h}}|\ge (2/\pi)|\o|$ $(\forall \o\in \cM_h)$ we have
\begin{align*}
\|f_X\|^2_{\cH}\le \frac{1}{\beta}\sum_{\o\in
 \cM_h}\chi(h|1-e^{i\frac{\o}{h}}|)\le \frac{1}{\beta}\sum_{\o\in \cM}1_{|\o|\le \frac{\pi}{2}c_{\chi}},
\end{align*}
where $\|\cdot\|_{\cH}$ denotes the norm of $\cH$ induced by $\<\cdot,\cdot\>_{\cH}$.
Since 
\begin{align}
&h^{-1}(I_b-e^{-i\frac{\o}{h}I_b+\frac{1}{h}\overline{E(\bk)}})^{-1}\label{eq_h_dependent_integrant_expansion}\\ 
&= (i\o
 I_b-\overline{E(\bk)})^{-1}\sum_{m=0}^{\infty}\left(\sum_{n=2}^{\infty}\frac{(-1)^n}{n!}\frac{1}{h^{n-1}}(i\o I_b-\overline{E(\bk)})^{n-1}\right)^m,\notag\\
&\|h^{-1}(I_b-e^{-i\frac{\o}{h}I_b+\frac{1}{h}\overline{E(\bk)}})^{-1}\|_{b\times
 b}\label{eq_direct_bound_symmetric_formulation}\\
&\le \|(i\o I_b-\overline{E(\bk)})^{-1}\|_{b\times
 b}\sum_{m=0}^{\infty}\left(\frac{1}{h}\sum_{n=2}^{\infty}\frac{1}{n!}\|i\o
 I_b-\overline{E(\bk)}\|_{b\times b}^{n-1}\right)^m\notag\\
&\le
 \frac{\beta}{\pi}\sum_{m=0}^{\infty}\left(\frac{1}{h}e^{\frac{\pi}{2}c_{\chi}+\sup_{\bk\in\R^d}\|E(\bk)\|_{b\times
 b}}\right)^m\notag\\
&\le \beta,\ (\forall \bk\in\G^*,\o\in \cM\text{ with }|\o|\le (\pi/2)c_{\chi}),\notag
\end{align}
if $h$ is large enough. Therefore,
\begin{align*}
\|g_X\|_{\cH}^2&\le \frac{1}{\beta L^d}\sum_{(\bk,\o)\in\G^*\times
 \cM}1_{|\o|\le
 \frac{\pi}{2}c_{\chi}}\|h^{-1}(I_b-e^{-i\frac{\o}{h}I_b+\frac{1}{h}\overline{E(\bk)}})^{-1}\|_{b\times
 b}^2\\
&\le \beta \sum_{\o\in\cM}1_{|\o|\le \frac{\pi}{2}c_{\chi}}.
\end{align*}
We can, thus, apply Gram's inequality to deduce that
\begin{align}\label{eq_beta_dependent_determinant_bound}
|\det(C_{\le 0}^+(X_i,Y_j))_{1\le i,j\le n}|\le
 \prod_{j=1}^n\|f_{X_j}\|_{\cH}\|g_{Y_j}\|_{\cH}\le
 \left(\sum_{\o\in\cM}1_{|\o|\le \frac{\pi}{2}c_{\chi}}\right)^n.
\end{align}

Take any $X_1,\cdots,X_n$, $Y_1,\cdots,Y_n\in I_0$. Define
  $C_{(n)}$, $C_{>0,(n)}^+$, $C_{\le 0,(n)}^+\in \Mat(n,\C)$ by 
\begin{align*}
&C_{(n)}(i,j):=C(X_i,Y_j),\ C^+_{>0,(n)}(i,j):=C_{>0}^+(X_i,Y_j),\\
& C^+_{\le
 0,(n)}(i,j):=C_{\le 0}^+(X_i,Y_j),\ (\forall i,j\in \{1,2,\cdots,n\}).
\end{align*}
Since 
$$C^+_{>0,(n)}=C_{(n)}-C_{\le
  0,(n)}^+=(C_{(n)},I_n)\left(\begin{array}{c}I_n\\ -C_{\le
				0,(n)}^+\end{array}\right),$$   
the Cauchy-Binet formula gives that 
\begin{align}
\label{eq_application_cauchy_binet}\det(C_{>0,(n)}^+)=\sum_{\phi:\{1,2,\cdots,n\}\to
 \{1,2,\cdots,2n\}\atop\text{ with }
 \phi(1)<\phi(2)<\cdots<\phi(n)}&\det((C_{(n)},I_n)(i,\phi(j)))_{1\le
 i,j\le n}\\
&\cdot\det\left( \left(\begin{array}{c}I_n\\ -C_{\le
				0,(n)}^+\end{array}\right)(\phi(i),j)\right)_{1\le i,j\le n}.\notag
\end{align}
By using \eqref{eq_crudest_determinant_bound},
  \eqref{eq_beta_dependent_determinant_bound} 
and admitting that $\phi(0)=n$, $\phi(n+1)=n+1$
we can derive from
  \eqref{eq_application_cauchy_binet} that
\begin{align*}
&|\det(C_{>0,(n)}^{+})|\\
&\le \sum_{m=0}^n\sum_{\phi:\{1,2,\cdots,n\}\to
 \{1,2,\cdots,2n\}\atop\text{ with }\phi(1)<\phi(2)<\cdots<\phi(n)}1_{\phi(m)\le
 n<\phi(m+1)}\\
&\quad\cdot\Bigg(1_{m=0}+1_{m>0}\sup_{X_j',Y_j'\in I_0\atop
 (j=1,2,\cdots,m)}|\det(C(X_i',Y_j'))_{1\le i,j\le m}|\Bigg)\\
&\quad\cdot \Bigg(1_{m=n}+1_{m<n}\sup_{X_j',Y_j'\in I_0\atop
 (j=1,2,\cdots,n-m)}|\det(C_{\le 0}^+(X_i',Y_j'))_{1\le i,j\le n-m}|\Bigg)\\
&\le \sum_{m=0}^n\left(\begin{array}{c}n\\m\end{array}\right)
\left(\begin{array}{c}n\\n-m\end{array}\right)\max\{1,D_1\}D_2^m\Bigg(\sum_{\o\in\cM}1_{|\o|\le
 \frac{\pi}{2}c_{\chi}}\Bigg)^{n-m}\\
&\le \max\{1,D_1\}\sum_{m=0}^n\left(\begin{array}{c}2n\\2m\end{array}\right)D_2^m\Bigg(\sum_{\o\in\cM}1_{|\o|\le
 \frac{\pi}{2}c_{\chi}}\Bigg)^{n-m}\\
&\le \max\{1,D_1\}\Bigg(D_2^{\frac{1}{2}}+\Bigg(\sum_{\o\in\cM}1_{|\o|\le
 \frac{\pi}{2}c_{\chi}}\Bigg)^{\frac{1}{2}}\Bigg)^{2n}.
\end{align*}
The determinant bounds on the covariances $C_{\le 0}^{\infty}$, $C_{>
  0}^{-}$, $C_{>
  0}^{+(h)}$ can be derived in the same way as above.

Since $(C_{>0}^{+(h)}-C_{>0}^+)(X,Y)=\<\frac{1}{h}f_X,f_Y\>_{\cH}$
  $(\forall X,Y\in I_0)$, we have
\begin{align*} 
 |\det ((C_{>0}^{+(h)}-C_{>0}^+)(X_i,Y_j))_{1\le i,j\le n}|&\le
 \frac{1}{h}\prod_{j=1}^n\|f_{X_j}\|_{\cH}\|f_{Y_j}\|_{\cH}\\
&\le \frac{1}{h}\left(\frac{1}{\beta}\sum_{\o\in \cM}1_{|\o|\le
 \frac{\pi}{2}c_{\chi}}\right)^n.
\end{align*}

To prove the determinant bound on $C_{\le 0}^{+}-C_{\le 0}^{\infty}$,
  let us define the vectors $f_{\rho\bx\s x}',g_{\rho\bx\s x}'\in \cH$ ($(\rho,\bx,\s, x)\in I_0$) by
\begin{align*}
&f'_{\rho\bx\s
 x}(\eta,\bk,\tau,\o):=\delta_{\rho,\eta}\delta_{\s,\tau}e^{i\<\bx,\bk\>}e^{-ix\o}1_{|\o|\le\frac{\pi}{2}c_{\chi}},\\
&g'_{\rho\bx\s
x}(\eta,\bk,\tau,\o)\\
&:=\delta_{\s,\tau}e^{i\<\bx,\bk\>}e^{-ix\o}1_{|\o|\le\frac{\pi}{2}c_{\chi}}\big(\chi(h|1-e^{i\frac{\o}{h}}|)h^{-1}(I_b-e^{-i\frac{\o}{h}I_b+\frac{1}{h}\overline{E(\bk)}})^{-1}(\eta,\rho)\\
&\quad\qquad\qquad\qquad\qquad\qquad -\chi(|\o|)(i\o
 I_b-\overline{E(\bk)})^{-1}(\eta,\rho)\big).
\end{align*}
Since
  $\chi(h|1-e^{i\frac{\o}{h}}|)=1_{|\o|\le\frac{\pi}{2}c_{\chi}}\chi(h|1-e^{i\frac{\o}{h}}|)$,
  $\chi(|\o|)=1_{|\o|\le \frac{\pi}{2}c_{\chi}}\chi(|\o|)$ $(\forall \o\in
  \cM_h)$, $(C_{\le 0}^+-C_{\le 0}^{\infty})(X,Y)=\<f_X',g_Y'\>_{\cH}$
  $(\forall X,Y\in I_0)$. Note that 
$$\|f_X'\|_{\cH}^2\le \frac{1}{\beta}\sum_{\o\in
  \cM}1_{|\o|\le\frac{\pi}{2}c_{\chi}}.$$
By assuming that
  $h$ is sufficiently large and using
  \eqref{eq_h_dependent_integrant_expansion},
  \eqref{eq_direct_bound_symmetric_formulation} we deduce that for any
  $\o\in\cM_h$, $\bk\in\G^*$,
\begin{align*}
&\|\chi(h|1-e^{i\frac{\o}{h}}|)h^{-1}(I_b-e^{-i\frac{\o}{h}I_b+\frac{1}{h}\overline{E(\bk)}})^{-1}-\chi(|\o|)(i\o
 I_b-\overline{E(\bk)})^{-1}\|_{b\times b}\\
&\le
 |\chi(h|1-e^{i\frac{\o}{h}}|)-\chi(|\o|)|\|h^{-1}(I_b-e^{-i\frac{\o}{h}I_b+\frac{1}{h}\overline{E(\bk)}})^{-1}\|_{b\times
 b}\\
&\quad +\chi(|\o|)\|h^{-1}(I_b-e^{-i\frac{\o}{h}I_b+\frac{1}{h}\overline{E(\bk)}})^{-1}-(i\o
 I_b-\overline{E(\bk)})^{-1}\|_{b\times b}\\
& \le
 \big|h|1-e^{i\frac{\o}{h}}|-|\o|\big|\sup_{x\in\R}|\chi'(x)|1_{|\o|\le\frac{\pi}{2}c_{\chi}}\beta\\
&\quad + 1_{|\o|\le
 \frac{\pi}{2}c_{\chi}}\|(i\o
 I_b-\overline{E(\bk)})^{-1}\|_{b\times
 b}\sum_{m=1}^{\infty}\left(\frac{1}{h}\sum_{n=2}^{\infty}\frac{1}{n!}\|i\o
 I_b-\overline{E(\bk)}\|_{b\times b}^{n-1}\right)^m\\
& \le
 \frac{\beta}{h}e^{\frac{\pi}{2}c_{\chi}}\sup_{x\in\R}|\chi'(x)|
 +\frac{\beta}{h}e^{\frac{\pi}{2}c_{\chi}+\sup_{\bk\in\R^d}\|E(\bk)\|_{b\times
 b}},
\end{align*}
which implies that 
\begin{align*}
\|g_X'\|_{\cH}^2\le \left(\frac{\beta}{h}e^{\frac{\pi}{2}c_{\chi}+\sup_{\bk\in\R^d}\|E(\bk)\|_{b\times
 b}}\big(1+\sup_{x\in\R}|\chi'(x)|\big)\right)^2\frac{1}{\beta}\sum_{\o\in\cM}1_{|\o|\le\frac{\pi}{2}c_{\chi}}.
\end{align*}
Again by Gram's inequality,
\begin{align*}
&|\det(C_{\le 0}^+-C_{\le 0}^{\infty})(X_i,Y_j))_{1\le i,j\le n}|\le
 \prod_{j=1}^n \|f_{X_j}'\|_{\cH}\|g_{X_j}'\|_{\cH}\\
&\le \frac{1}{h}\left(e^{\frac{\pi}{2}c_{\chi}+\sup_{\bk\in\R^d}\|E(\bk)\|_{b\times
 b}}\left(1+\sup_{x\in\R}|\chi'(x)|\right)\sum_{\o\in\cM}1_{|\o|\le\frac{\pi}{2}c_{\chi}}\right)^n.
\end{align*}
The claimed determinant bounds have been obtained.
\end{proof}

To continue our analysis, we introduce an $L^1$-norm on functions on
$I^m$. 
For $m\in \{1,2,\cdots,N\}$ and a function
$f_m:I^m\to \C$, let $\|f_m\|_{L^1}$ be defined by 
$$\|f_m\|_{L^1}:=\left(\frac{1}{h}\right)^m\sum_{\bX\in I^m}|f_m(\bX)|.$$
To simplify arguments, we let $\|f_0\|_{L^1}$ denote $|f_0|$ for
$f_0\in\C$ as well. Necessary basic estimations with this norm are
separately prepared in Appendix \ref{app_grassmann_L1_theory}. 

Let us introduce the Grassmann polynomials $V^+(\psi)$,
$V^-(\psi)$,  $S^+(\psi)$,
$S^-(\psi)$, $S^0(\psi)$ $(\in \bigwedge \cV)$ as follows. 
\begin{align}
&V^+(\psi):=V(\psi),\quad
 V^-(\psi):=V(\psi)+\frac{1}{h}\sum_{(\rho,\bx,\s,x)\in
 I_0}U_{\rho}\opsi_{\rho\bx\s x}\psi_{\rho\bx\s
 x},\label{eq_signed_interaction_polynomial}\\
&S^{\delta}(\psi):=\int
 e^{-V^{\delta}(\psi+\psi^1)}d\mu_{C_{>0}^{\delta}}(\psi^1),\ (\delta
 \in \{+,-\}),\notag\\
&S^{0}(\psi):=\int
 e^{-V^{+}(\psi+\psi^1)}d\mu_{C_{>0}^{+(h)}}(\psi^1),\notag
\end{align}
where $V(\psi)$ is the Grassmann polynomial defined in
\eqref{eq_interaction_counterpart}. In the following `$c_1$' denotes the
$h$-independent constant appearing in Lemma \ref{lem_h_independent_determinant_bound} and the parameter $h\in
(2/\beta)\N$ is assumed to be larger than $h_0$ appearing in the same
lemma. Also, let us assume that
$\bU\in\C^b$ satisfies $|U_{\rho}|\le U_{max}$ $(\forall \rho\in \cB)$
for some $U_{max}(\in\R_{\ge 0})$.  

\begin{lemma}\label{lem_exponential_appl_bound}
The following inequalities hold.
\begin{enumerate}
\item\label{item_exponential_appl_0th}
$$
|S_0^{\delta}-1|\le e^{b\beta L^dU_{max}(c_1+c_1^2)}-1,\ (\forall\delta \in \{+,-,0\}).$$
\item\label{item_exponential_appl_sum}
For any $\alpha \in\R_{\ge 0}$,
$$
\sum_{m=0}^N\alpha^mc_1^{\frac{m}{2}}\|S_m^{\delta}\|_{L^1}\le e^{b\beta
     L^d U_{max}((\alpha+1)^2c_1+(\alpha+1)^4c_1^2)},\ (\forall\delta\in \{+,-,0\}).
$$
\item\label{item_exponential_appl_difference}
For any $\alpha \in\R_{\ge 0}$,
$$
\sum_{m=0}^N\alpha^mc_1^{\frac{m}{2}}\|S_m^{+}-S_m^0\|_{L^1}\le \frac{1}{h}(e^{b\beta L^d U_{max}((\alpha+2)^2c_1+(\alpha+2)^4c_1^2)}-1).
$$
\end{enumerate}
\end{lemma}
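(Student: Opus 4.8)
The plan is to expand the exponentials in the definition \eqref{eq_signed_interaction_polynomial} of the $S^{\delta}$'s as (finite) power series in $V^{\delta}$, evaluate the Grassmann Gaussian integrals term by term (each contracted family of integration variables producing a determinant of the covariance), and then estimate the outcome by feeding the determinant bounds of Lemma \ref{lem_h_independent_determinant_bound} into the basic $L^1$-norm estimates prepared in Appendix \ref{app_grassmann_L1_theory}. The one non-routine preliminary is to record the $L^1$-norms of the homogeneous parts of the interaction: by \eqref{eq_interaction_counterpart} and \eqref{eq_signed_interaction_polynomial} each of $V^{+}$, $V^{-}$ has a degree-$2$ part and a degree-$4$ part, both supported on `diagonal' configurations (all Grassmann variables sharing the same $(\rho,\bx,x)$), and counting the $|I_0|=2b\beta hL^d$ such configurations together with the anti-symmetrization factors yields $\|V^{\delta}_2\|_{L^1}\le b\beta L^dU_{max}$ and $\|V^{\delta}_4\|_{L^1}\le b\beta L^dU_{max}$ for $\delta\in\{+,-\}$; the extra quadratic term in $V^{-}$ only flips the sign of a diagonal coefficient, so these norms are unchanged, and for $\delta=0$ one uses $V^{+}$ together with the covariance $C_{>0}^{+(h)}$, which also has determinant bound $c_1$.

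For the parts \ref{item_exponential_appl_0th} and \ref{item_exponential_appl_sum} I would invoke from Appendix \ref{app_grassmann_L1_theory} an estimate of the form: for a covariance $C_o$ with determinant bound $c_0$ and a Grassmann polynomial $g$ with vanishing constant term, the weighted norm $\sum_m\alpha^mc_0^{m/2}\|(\int e^{g(\psi+\psi^1)}d\mu_{C_o}(\psi^1))_m\|_{L^1}$ is bounded by $\exp(\sum_m(\alpha+1)^mc_0^{m/2}\|g_m\|_{L^1})$ --- heuristically each leg of each monomial of $g$ carries a factor $c_0^{1/2}$, an external leg being moreover weighted by $\alpha$ while an internal one is absorbed into the determinant bound $c_0^{p}$ for the $p$ contracted pairs, so summing the two options costs $(\alpha+1)c_0^{1/2}$ per leg, the determinant bound handling the combinatorics of the pairings. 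Applied with $g=-V^{\delta}$ and $C_o=C_{>0}^{\delta}$ (determinant bound $c_1$ by Lemma \ref{lem_h_independent_determinant_bound}), inserting the norms above makes the right-hand side at most $\exp(b\beta L^dU_{max}((\alpha+1)^2c_1+(\alpha+1)^4c_1^2))$, which is part \ref{item_exponential_appl_sum}. Part \ref{item_exponential_appl_0th} is the same computation restricted to the scalar part: $S_0^{\delta}-1=\int(e^{-V^{\delta}(\psi^1)}-1)\,d\mu_{C_{>0}^{\delta}}(\psi^1)$ (for $\delta=0$, $V^{+}$ and $C_{>0}^{+(h)}$), whose expansion consists of the powers $n\ge1$ of $V^{\delta}$ with all legs internal, hence has modulus at most $\sum_{n\ge1}\frac{1}{n!}(b\beta L^dU_{max}(c_1+c_1^2))^n=e^{b\beta L^dU_{max}(c_1+c_1^2)}-1$.

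For part \ref{item_exponential_appl_difference} only the covariance changes, from $C_{>0}^{+}$ to $C_{>0}^{+(h)}=C_{>0}^{+}+R$ with $R:=C_{>0}^{+(h)}-C_{>0}^{+}$, and the crucial point is that Lemma \ref{lem_h_independent_determinant_bound} gives $R$ the determinant bound $\frac{1}{h}c_1^n$ --- a single factor $1/h$ irrespective of $n$. Using the convolution rule for Grassmann Gaussian integrals I would write
\begin{align*}
S^0-S^+=\int\int\big(e^{-V^{+}(\psi+\psi^1+\psi^2)}-e^{-V^{+}(\psi+\psi^1)}\big)\,d\mu_{C_{>0}^{+}}(\psi^1)\,d\mu_{R}(\psi^2),
\end{align*}
and note that every monomial of $e^{-V^{+}(\psi+\psi^1+\psi^2)}-e^{-V^{+}(\psi+\psi^1)}$ carries at least one $\psi^2$-variable, so a non-vanishing $\psi^2$-integration forces at least one $R$-contraction; that contraction produces a single factor $1/h$ (the whole $\psi^2$-determinant being bounded by $1/h$ times a power of $c_1$). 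After pulling $1/h$ out, the remaining structure is as in parts \ref{item_exponential_appl_0th} and \ref{item_exponential_appl_sum} except that each leg of $V^{+}$ may now be external, $\psi^1$-internal or $\psi^2$-internal, so the effective per-leg weight rises to $(\alpha+2)c_1^{1/2}$; hence the full sum over leg-assignments is bounded by $\exp(\sum_k(\alpha+2)^{2k}c_1^{k}\|V^{+}_{2k}\|_{L^1})\le e^{b\beta L^dU_{max}((\alpha+2)^2c_1+(\alpha+2)^4c_1^2)}$, and subtracting the sub-sum in which no $\psi^2$-contraction occurs --- which is at least $1$ --- leaves $\frac{1}{h}(e^{b\beta L^dU_{max}((\alpha+2)^2c_1+(\alpha+2)^4c_1^2)}-1)$, i.e.\ part \ref{item_exponential_appl_difference}. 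I expect this last part to be the main obstacle: one must organise the two-field expansion so that the single $1/h$ is extracted uniformly over all terms, confirm that the per-leg weight shifts by exactly two units, and check that what survives is precisely the exponential of part \ref{item_exponential_appl_sum} at the shifted weight minus its constant term. The $L^1$-count for $V^{\delta}$ and the appeals to the single-covariance estimate underlying parts \ref{item_exponential_appl_0th} and \ref{item_exponential_appl_sum} are, by contrast, routine once Appendix \ref{app_grassmann_L1_theory} and Lemma \ref{lem_h_independent_determinant_bound} are available.
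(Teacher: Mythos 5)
Your proposal is correct and, for parts \eqref{item_exponential_appl_0th} and \eqref{item_exponential_appl_sum}, coincides with the paper's proof: the paper likewise reads off $\|V_m^{\delta}\|_{L^1}\le b\beta L^dU_{max}$ ($m=2,4$) from the explicit anti-symmetric kernels \eqref{eq_anti_symmetric_kernel_initial_V} and then cites Lemma \ref{lem_exponential_general_bound} \eqref{item_exponential_general_0th},\eqref{item_exponential_general_sum} together with the determinant bound $c_1$ of Lemma \ref{lem_h_independent_determinant_bound}. The only genuine divergence is in part \eqref{item_exponential_appl_difference}: the paper invokes Lemma \ref{lem_exponential_general_bound} \eqref{item_exponential_general_covariance_difference}, whose proof never splits the Gaussian measure but instead bounds $|\det((A+A^{(\eps)})(X_i,Y_j))-\det(A(X_i,Y_j))|\le \eps(2^2c_A)^n$ by a Cauchy--Binet expansion and inserts this into the single-covariance term-by-term estimate, the factor $2^2$ per leg recombining with the external weight into $(\alpha+2)$; you instead decompose $C_{>0}^{+(h)}=C_{>0}^{+}+R$ via \eqref{eq_covariance_decomposition_integral_double}, force at least one $R$-contraction to extract the uniform $1/h$, and obtain $(\alpha+2)$ from the three possible fates of each leg. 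Both mechanisms are sound and yield the identical bound; your route makes the origin of the $1/h$ more transparent, while the paper's keeps the whole argument inside one reusable appendix lemma.
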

\begin{proof}
The anti-symmetric kernels $V_m^{\delta}(\cdot):I^m\to\C$ $(m\in
 \{2,4\})$ of $V^{\delta}(\psi)$ are characterized
 as follows. For $\delta \in\{+,-\}$,
\begin{align}
&V_2^{\delta}((\rho_1,\bx_1,\s_1,x_1,\theta_1),(\rho_2,\bx_2,\s_2,x_2,\theta_2))\label{eq_anti_symmetric_kernel_initial_V}\\
& =-\frac{\delta
 h}{4}U_{\rho_1}1_{(\rho_1,\bx_1,\s_1,x_1)=(\rho_2,\bx_2,\s_2,x_2)}(1_{(\theta_1,\theta_2)=(1,-1)}-1_{(\theta_1,\theta_2)=(-1,1)}),\notag\\
&V_4^{\delta}((\rho_1,\bx_1,\s_1,x_1,\theta_1),(\rho_2,\bx_2,\s_2,x_2,\theta_2),\notag\\
&\qquad(\rho_3,\bx_3,\s_3,x_3,\theta_3),(\rho_4,\bx_4,\s_4,x_4,\theta_4))\notag\\&=\frac{h^3}{4!}U_{\rho_1}1_{(\rho_1,\bx_1,x_1)=(\rho_2,\bx_2,x_2)=(\rho_3,\bx_3,x_3)=(\rho_4,\bx_4,x_4)}\notag\\
&\qquad\cdot\sum_{\zeta\in\S_4}\sgn(\zeta)1_{((\s_{\zeta(1)},\theta_{\zeta(1)}),(\s_{\zeta(2)},\theta_{\zeta(2)}),(\s_{\zeta(3)},\theta_{\zeta(3)}),(\s_{\zeta(4)},\theta_{\zeta(4)})
 )=((\ua,1),(\da,1),(\da,-1),(\ua,-1))}.\notag
\end{align}
From this we can see that 
$$
\|V_m^{\delta}\|_{L^1}\le b\beta L^dU_{max},\ (\forall m\in \{2,4\},\delta \in\{+,-\}).
$$
By these inequalities and Lemma
 \ref{lem_h_independent_determinant_bound} we can readily apply Lemma \ref{lem_exponential_general_bound}
 \eqref{item_exponential_general_0th},\eqref{item_exponential_general_sum},\eqref{item_exponential_general_covariance_difference} proved in Appendix \ref{app_grassmann_L1_theory} to verify the statements \eqref{item_exponential_appl_0th},\eqref{item_exponential_appl_sum},\eqref{item_exponential_appl_difference} respectively. 
\end{proof}

\begin{lemma}\label{lem_exponential_appl_bound_another}
For any $\alpha\in\R_{\ge 0}$,
\begin{align*}
&\sum_{m=0}^N\alpha^mc_1^{\frac{m}{2}}\|S_m^--S_m^0\|_{L^1}\\
&\le\frac{1}{\beta
 h}(b\beta L^dU_{max}((\alpha+1)^2c_1+(\alpha+1)^4c_1^2))^2e^{b\beta L^dU_{max}((\alpha+1)^2c_1+(\alpha+1)^4c_1^2)}.
\end{align*}
\end{lemma}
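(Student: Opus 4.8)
The plan is to exploit the covariance identity hidden in Lemma \ref{lem_relation_various_covariance}. Adding the definition of $C_{>0}^{+(h)}$ to that lemma gives, for all $X,Y\in I_0$,
$$C_{>0}^{+(h)}(X,Y)=C_{>0}^-(X,Y)+\cI(X,Y).$$
Since the Grassmann Gaussian integral is additive in the covariance (the standard convolution identity $\int g(\psi^1)\,d\mu_{C'+C''}(\psi^1)=\int\big(\int g(\psi^1+\psi^2)\,d\mu_{C''}(\psi^2)\big)d\mu_{C'}(\psi^1)$, valid for arbitrary $C',C''$ as in \cite{K1}), one may rewrite
$$S^0(\psi)=\int\left(\int e^{-V^+(\psi+\psi^1+\psi^2)}\,d\mu_{\cI}(\psi^2)\right)d\mu_{C_{>0}^-}(\psi^1),$$
so the whole question reduces to analyzing the inner integration against the identity covariance $d\mu_\cI$.

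The key observation is that this inner integral factorizes over the ``sites'' $(\rho,\bx,x)\in\cB\times\G\times[0,\beta)_h$: writing $V^+(\psi+\psi^1+\psi^2)=\sum_{(\rho,\bx,x)}V^+_{(\rho,\bx,x)}$, each summand involves only the four $\psi^2$-generators attached to $(\rho,\bx,x)$, and $\cI$ couples only generators sitting at the same point of $I_0$; hence $\int e^{-V^+(\phi+\psi^2)}d\mu_\cI(\psi^2)=\prod_{(\rho,\bx,x)}\int e^{-V^+_{(\rho,\bx,x)}(\phi+\psi^2)}d\mu_\cI(\psi^2)$. Each factor is a finite, fully explicit Grassmann Gaussian integral over four Grassmann variables, which I would compute directly from \eqref{eq_anti_symmetric_kernel_initial_V}. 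Writing $V^-_{(\rho,\bx,x)}$ for the site term of $V^-$ (so $V^-=\sum_{(\rho,\bx,x)}V^-_{(\rho,\bx,x)}$, and $V^-$ differs from $V^+$ only by the sign of the on-site quadratic term, cf. \eqref{eq_signed_interaction_polynomial}), the computation yields
$$\int e^{-V^+_{(\rho,\bx,x)}(\phi+\psi^2)}\,d\mu_\cI(\psi^2)=\exp\big(-V^-_{(\rho,\bx,x)}(\phi)+\Delta_{(\rho,\bx,x)}(\phi)\big),$$
where $\Delta_{(\rho,\bx,x)}$ is a Grassmann polynomial supported in the four generators at $(\rho,\bx,x)$, of degrees $\{0,2,4\}$, with kernels of modulus $O(U_\rho^2h^{-2})$. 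That is, the on-site $d\mu_\cI$-contraction of the quartic Hubbard vertex reproduces \emph{exactly} the quadratic term distinguishing $V^-$ from $V^+$, and the genuine remainder is second order in $U_\rho/h$. Since the factors at distinct sites commute, summing the exponents gives
$$\int e^{-V^+(\phi+\psi^2)}\,d\mu_\cI(\psi^2)=e^{-V^-(\phi)+\Delta(\phi)},\qquad \Delta:=\sum_{(\rho,\bx,x)}\Delta_{(\rho,\bx,x)},$$
with $\Delta_m=0$ for $m\notin\{0,2,4\}$ and $\|\Delta_m\|_{L^1}\le C\,b\beta L^dU_{max}^2/h$ (the extra factor $\beta$ coming from the $b\beta L^d h$ sites, each contributing $O(h^{-2})$).

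It then remains to compare $S^0(\psi)=\int e^{-V^-(\psi+\psi^1)+\Delta(\psi+\psi^1)}d\mu_{C_{>0}^-}(\psi^1)$ with $S^-(\psi)=\int e^{-V^-(\psi+\psi^1)}d\mu_{C_{>0}^-}(\psi^1)$. Their difference is the Gaussian integral against $d\mu_{C_{>0}^-}$ of the Grassmann polynomial $e^{-V^-}(e^{\Delta}-1)$ evaluated at $\psi+\psi^1$; using the determinant bound $c_1$ on $C_{>0}^-$ from Lemma \ref{lem_h_independent_determinant_bound}, the bounds $\|V^\pm_m\|_{L^1}\le b\beta L^dU_{max}$ from \eqref{eq_anti_symmetric_kernel_initial_V}, and the $O(\beta h^{-1})$-smallness of $\|\Delta\|$, the relevant part of the general $L^1$-estimate Lemma \ref{lem_exponential_general_bound} in Appendix \ref{app_grassmann_L1_theory} (the one comparing $\int e^{-V}d\mu$ with $\int e^{-V-\Delta}d\mu$, of the same type already invoked in the proof of Lemma \ref{lem_exponential_appl_bound}) produces the stated inequality, the $(b\beta L^dU_{max}(\cdots))^2/(\beta h)$ form being a deliberately generous repackaging of these factors.

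The main obstacle is the explicit four-variable step above: carrying out $\int e^{-V^+_{(\rho,\bx,x)}(\phi+\psi^2)}\,d\mu_\cI(\psi^2)$ cleanly enough to verify that its logarithm equals $-V^-_{(\rho,\bx,x)}(\phi)$ up to a remainder of order $(U_\rho/h)^2$, i.e. that the single $\cI$-contraction of the quartic vertex precisely cancels against, and effectively flips the sign of, the on-site energy term. Everything downstream is the routine $L^1$/determinant-bound bookkeeping already in place in Appendix \ref{app_grassmann_L1_theory}.
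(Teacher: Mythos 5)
Your proposal is correct in substance and rests on the same underlying mechanism as the paper's proof -- the decomposition $C_{>0}^{+(h)}=C_{>0}^{-}+\cI$ from Lemma \ref{lem_relation_various_covariance}, the convolution identity \eqref{eq_covariance_decomposition_integral_double}, and the fact that the single-site $\cI$-contraction of the quartic vertex reproduces exactly the quadratic term by which $V^-$ differs from $V^+$ (your claimed identity checks out: with $W^{\pm}=Q\mp\tfrac12 N$ one finds $\int W^+(\Phi+\psi^2)\,d\mu_\cI(\psi^2)=W^-(\Phi)$, and $e^{-uW^{\pm}}=1-uW^{\pm}+\tfrac{u^2}{4}Q$, so the full single-site integral equals $e^{-uW^-(\Phi)}+\tfrac{u^2}{4}(\bar\Phi_\ua\Phi_\ua+\bar\Phi_\da\Phi_\da+1)$, whose logarithm is indeed $-uW^-+O(u^2)$). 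Where you differ from the paper is in how the error is isolated. The paper does not compute the full $d\mu_\cI$ integral of the exponential; instead it splits $e^{-V^{\pm}}$ into the part $f_{V^{\pm}}$ of the exponential series in which all time variables are pairwise distinct -- for which the site-wise replacement is exact, giving $\int f_{V^+}\,d\mu_{C_{>0}^{+(h)}}=\int f_{V^-}\,d\mu_{C_{>0}^{-}}$ -- and bounds the two remainders, each of which carries the indicator $1_{\exists i\neq j:\,x_i=x_j}$ and hence a combinatorial factor $\binom{n}{2}/(\beta h)$ at order $n$; this is where the $(b\beta L^dU_{max}\cdots)^2/(\beta h)$ shape comes from. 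Your route instead factorizes $\int e^{-V^+(\cdot+\psi^2)}\,d\mu_\cI(\psi^2)$ exactly over sites (legitimate, since $\cI$ is diagonal and the site factors are even), identifies the exact effective interaction $-V^-+\Delta$ with $\|\Delta_m\|_{L^1}=O(b\beta L^dU_{max}^2/h)$, and then invokes Lemma \ref{lem_exponential_general_bound} \eqref{item_exponential_general_difference}. This is arguably cleaner conceptually, and it avoids the distinct-times bookkeeping entirely.

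Two small caveats, neither fatal. First, your final bound is linear in $\|\Delta\|$, so the prefactor comes out as $b\beta L^dU_{max}^2\cdot C_1/h$ rather than $(b\beta L^dU_{max}C_1)^2/(\beta h)=b^2\beta L^{2d}U_{max}^2C_1^2/h$; matching the stated inequality literally requires $1\le bL^dC_1$, which holds since $c_1\ge1$ in Lemma \ref{lem_h_independent_determinant_bound}, but you should say so. Second, applying Lemma \ref{lem_exponential_general_bound} \eqref{item_exponential_general_difference} with $W^{(1)}=-V^-+\Delta$ puts $\|\Delta_m\|$ into the exponent as well, so you get an extra factor $e^{O(b\beta L^dU_{max}^2C_1/h)}$ beyond the exponential in the statement; this must be absorbed (e.g.\ by a crude bound such as $xe^{x}\le e^{2x}$ applied to the $\Delta$-contribution, or by taking $h$ large), which is routine but should be made explicit if you want the constants exactly as stated.
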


\begin{proof}
For any $(\rho,\bx,x)\in\cB\times\G\times [0,\beta)_h$ set
\begin{align}\label{eq_truncated_interaction_polynomial}
&V_{\rho\bx x}^+(\psi):=\opsi_{\rho\bx \ua x}\opsi_{\rho\bx \da
 x}\psi_{\rho\bx \da x}\psi_{\rho\bx \ua x}-\frac{1}{2}\sum_{\s\in
 \spin}\opsi_{\rho\bx\s x}\psi_{\rho\bx\s x},\\
&V_{\rho\bx x}^-(\psi):=\opsi_{\rho\bx \ua x}\opsi_{\rho\bx \da
 x}\psi_{\rho\bx \da x}\psi_{\rho\bx \ua x}+\frac{1}{2}\sum_{\s\in
 \spin}\opsi_{\rho\bx\s x}\psi_{\rho\bx\s x}.\notag
\end{align}
Recall the general property of Grassmann Gaussian integral that
for any covariances $A,B:I_0^2\to \C$ and $f\in \bigwedge \cV$, 
\begin{align}\label{eq_covariance_decomposition_integral_double}
\int f(\psi+\psi^1)d\mu_{A+B}(\psi^1)=\int\int f(\psi+\psi^1+\psi^2)d\mu_{A}(\psi^2)d\mu_{B}(\psi^1)
\end{align}
(see \cite[\mbox{Proposition I.21}]{FKT2}). Assume that
 $x_1,x_2,\cdots,x_m\in[0,\beta)_h$ satisfy $x_i\neq x_j$ if $i\neq
 j$. Using Lemma \ref{lem_relation_various_covariance} and
 \eqref{eq_covariance_decomposition_integral_double}, we observe that
\begin{align}
&\int\prod_{j=1}^mV_{\rho_j\bx_jx_j}^+(\psi+\psi^1)d\mu_{C_{>0}^{+(h)}}(\psi^1)\label{eq_polynomialwise_equality}\\
&=\int\int\prod_{j=1}^mV_{\rho_j\bx_jx_j}^+(\psi+\psi^1+\psi^2)d\mu_{\cI}(\psi^2)d\mu_{C_{>0}^{-}}(\psi^1)\notag\\
&= \int\int\prod_{j=1}^m\big(V_{\rho_j\bx_jx_j}^+(\psi+\psi^1)+V_{\rho_j\bx_jx_j}^+(\psi^2)\notag\\
&\quad +(\opsi+\opsi^1)_{\rho_j\bx_j\ua x_j}(\psi+\psi^1)_{\rho_j\bx_j\ua x_j}\opsi_{\rho_j\bx_j\da x_j}^2\psi_{\rho_j\bx_j\da x_j}^2\notag\\
&\quad +(\opsi+\opsi^1)_{\rho_j\bx_j\da x_j}(\psi+\psi^1)_{\rho_j\bx_j\da
 x_j}\opsi_{\rho_j\bx_j\ua x_j}^2\psi_{\rho_j\bx_j\ua
 x_j}^2\big)d\mu_{\cI}(\psi^2)d\mu_{C_{>0}^-}(\psi^1)\notag\\
&=\int\prod_{j=1}^m\Bigg(V_{\rho_j\bx_j
 x_j}^+(\psi+\psi^1)+\sum_{\s\in\spin}(\opsi+\opsi^1)_{\rho_j\bx_j\s
 x_j}(\psi+\psi^1)_{\rho_j\bx_j\s
 x_j}\Bigg)\notag\\
&\quad\cdot d\mu_{C_{>0}^-}(\psi^1)\notag\\
&= \int\prod_{j=1}^mV_{\rho_j\bx_j
 x_j}^-(\psi+\psi^1)d\mu_{C_{>0}^-}(\psi^1).\notag
\end{align}

Let us define $f_{V^{+}}(\psi),f_{V^{-}}(\psi)\in \bigwedge\cV$ by
\begin{align*}
f_{V^{\delta}}(\psi):=1+&\sum_{n=1}^N\frac{(-1)^n}{n!}\prod_{j=1}^n\left(\frac{1}{h}\sum_{(\rho_j,\bx_j,x_j)\in\cB\times\G\times
 [0,\beta)_h}U_{\rho_j}\right)\\
&\cdot 1_{\forall i\forall j(i\neq j\to x_i\neq x_j)}
\prod_{j=1}^nV^{\delta}_{\rho_j\bx_j x_j}(\psi),\ (\delta\in
 \{+,-\}).
\end{align*}
The equality \eqref{eq_polynomialwise_equality} implies that 
$$
\int f_{V^+}(\psi+\psi^1)d\mu_{C_{>0}^{+(h)}}(\psi^1)=\int f_{V^-}(\psi+\psi^1)d\mu_{C_{>0}^-}(\psi^1),
$$
and thus,
\begin{align*}
S^0(\psi)-S^-(\psi)=&\int(e^{-V^+(\psi+\psi^1)}-f_{V^+}(\psi+\psi^1))d\mu_{C_{>0}^{+(h)}}(\psi^1)\\
&-\int(e^{-V^-(\psi+\psi^1)}-f_{V^-}(\psi+\psi^1))d\mu_{C_{>0}^{-}}(\psi^1).
\end{align*}

Set 
$$
\tilde{S}^0(\psi):=\int(e^{-V^+(\psi+\psi^1)}-f_{V^+}(\psi+\psi^1))d\mu_{C_{>0}^{+(h)}}(\psi^1).
$$
We can see that
\begin{align*}
&\tilde{S}^0_m(\psi)\\
&=\sum_{n=2}^N\frac{(-1)^n}{n!}\prod_{j=1}^n\left(\frac{1}{h}\sum_{(\rho_j,\bx_j,x_j)\in\cB\times\G\times
 [0,\beta)_h}U_{\rho_j}\right)1_{\exists i \exists j(i\neq j\land x_i=x_j)}\\
&\quad\cdot 
\cP_m\int\prod_{l=1}^nV^{+}_{\rho_l\bx_l
 x_l}(\psi+\psi^1)d\mu_{C_{>0}^{+(h)}}(\psi^1)\\
&=\sum_{n=2}^N\frac{(-1)^n}{n!}\prod_{j=1}^n\left(\frac{1}{h}\sum_{(\rho_j,\bx_j,x_j)\in\cB\times\G\times
 [0,\beta)_h}U_{\rho_j}\right)1_{\exists i \exists j(i\neq j\land x_i=x_j)}\\
&\quad\cdot
 \prod_{l=1}^n\left(\sum_{m_l\in\{2,4\}}\left(\frac{1}{h}\right)^{m_l}\sum_{k_l=0}^{m_l}\left(\begin{array}{c}m_l\\k_l\end{array}\right)\sum_{\bX_l\in
 I^{m_l-k_l}}\sum_{\bY_l\in I^{k_l}}V_{\rho_l\bx_l
 x_l,m_l}^+(\bX_l,\bY_l)\right)\\
&\quad\cdot \eps_{\pm}1_{\sum_{l=1}^nk_l=m}\int \psi_{\bX_1}^1
 \psi_{\bX_2}^1\cdots  \psi_{\bX_n}^1d\mu_{C_{>0}^{+(h)}}(\psi^1)\psi_{\bY_1}
 \psi_{\bY_2}\cdots  \psi_{\bY_n},
\end{align*}
where the factor $\eps_{\pm}\in\{1,-1\}$ depends only on
 $(m_l)_{l=1}^n$, $(k_l)_{l=1}^n$.
The anti-symmetric kernels $V_{\rho\bx
 x,m}^+(\cdot)$ $(m=2,4)$ are characterized as follows. 
\begin{align*}
&V_{\rho\bx
 x,2}^+((\eta_1,\by_1,\tau_1,y_1,\xi_1),(\eta_2,\by_2,\tau_2,y_2,\xi_2))\\
&= -\frac{h^2}{4}1_{(\eta_1,\by_1,y_1)=(\eta_2,\by_2,y_2)=(\rho,\bx,
 x)}1_{\tau_1=\tau_2}(1_{(\xi_1,\xi_2)=(1,-1)}-1_{(\xi_1,\xi_2)=(-1,1)}),\\
&V_{\rho\bx
 x,4}^+((\eta_1,\by_1,\tau_1,y_1,\xi_1),(\eta_2,\by_2,\tau_2,y_2,\xi_2),\\
&\qquad\quad(\eta_3,\by_3,\tau_3,y_3,\xi_3),(\eta_4,\by_4,\tau_4,y_4,\xi_4))\\
&= \frac{h^4}{4!}1_{(\eta_i,\by_i,y_i)=(\rho,\bx,x),
 (\forall i\in\{1,2,3,4\})}\\
&\quad\cdot\sum_{\zeta\in \S_4}\sgn(\zeta)
1_{((\tau_{\zeta(1)},\xi_{\zeta(1)}),(\tau_{\zeta(2)},\xi_{\zeta(2)}),
 (\tau_{\zeta(3)},\xi_{\zeta(3)}),
 (\tau_{\zeta(4)},\xi_{\zeta(4)}))=((\ua,1),(\da,1),(\da,-1),(\ua,-1))},
\end{align*}
which imply that
\begin{align}\label{eq_each_interaction_kernel_L1_bound}
\|V_{\rho\bx x,m}^+\|_{L^1}\le 1,\ (\forall (\rho,\bx,x)\in\cB\times
 \G\times [0,\beta)_h,m\in\{2,4\}).
\end{align}
By using Lemma \ref{lem_h_independent_determinant_bound},
 \eqref{eq_each_interaction_kernel_L1_bound} and Lemma
 \ref{lem_L1_norm_comparison} we can
 estimate $\|\tilde{S}_m^0\|_{L^1}$ as follows.
\begin{align*}
\|\tilde{S}_m^0\|_{L^1}&\le \sum_{n=2}^N\frac{1}{n!}\prod_{j=1}^n\left(\frac{U_{max}}{h}\sum_{(\rho_j,\bx_j,x_j)\in\cB\times\G\times
 [0,\beta)_h}\right)1_{\exists i \exists j(i\neq j\land x_i=x_j)}\\
&\quad\cdot \prod_{l=1}^n\left(\sum_{m_l\in
 \{2,4\}}\sum_{k_l=0}^{m_l}\left(\begin{array}{c}m_l\\ k_l\end{array}\right)
\right)1_{\sum_{l=1}^nk_l=m}c_1^{\frac{1}{2}\sum_{l=1}^n(m_l-k_l)}\\
&\le\frac{c_1^{-\frac{m}{2}}}{\beta
 h}\sum_{n=2}^N\frac{1}{n!}\left(\begin{array}{c} n \\
				 2\end{array}\right)
(b\beta L^d
 U_{max})^n\\
&\quad\cdot\prod_{l=1}^n\left(\sum_{m_l\in\{2,4\}}c_1^{\frac{m_l}{2}}
\sum_{k_l=0}^{m_l}\left(\begin{array}{c}m_l\\
			k_l\end{array}\right)\right)
1_{\sum_{l=1}^nk_l=m}.
\end{align*}
Thus, for any $\alpha\in\R_{\ge 0}$,
\begin{align*}
&\sum_{m=0}^N\alpha^mc_1^{\frac{m}{2}}\|\tilde{S}_m^0\|_{L^1}
\le\frac{1}{2\beta h}\sum_{n=2}^N\frac{1}{(n-2)!}\left(b\beta L^d
 U_{max}\sum_{m\in\{2,4\}}(\alpha+1)^mc_1^{\frac{m}{2}}\right)^n\\
&\le \frac{1}{2\beta h}\left(b\beta L^d
 U_{max}\sum_{m\in\{2,4\}}(\alpha+1)^mc_1^{\frac{m}{2}}\right)^2e^{b\beta L^d
 U_{max}\sum_{m\in\{2,4\}}(\alpha+1)^mc_1^{\frac{m}{2}}}.
\end{align*}
We can estimate $\int
 (e^{-V^-(\psi+\psi^1)}-f_{V^-}(\psi+\psi^1))d\mu_{C_{> 0}^-}(\psi^1)$ in
 the same way as above and obtain the claimed estimation of $S^-(\psi)-S^{0}(\psi)$.
\end{proof}

\begin{lemma}\label{lem_explicit_beta_L_condition}
Take any $\alpha\in\R_{\ge 0}$, $\eps\in (0,1)$. Assume that 
\begin{align*}
U_{max}\le (b\beta
 L^d((\alpha+1)^2c_1+(\alpha+1)^4c_1^2))^{-1}\log\left(\frac{2(\eps+1)}{\eps+2}\right).
\end{align*}
Then, the following inequalities hold.
\begin{enumerate}
\item\label{item_exponential_explicit_0th}
$$
|S_0^{\delta}-1|\le\frac{\eps}{\eps+2},\ (\forall \delta\in\{+,-,0\}).
$$
\item\label{item_sup_inf_condition}
$$
\sup_{\delta\in\{+,-,0\}}\sum_{m=1}^N\alpha^mc_1^{\frac{m}{2}}\|S_m^{\delta}\|_{L^1}\le
     \eps\inf_{\delta\in\{+,-,0\}}|S_0^{\delta}|.
$$
\end{enumerate}
\end{lemma}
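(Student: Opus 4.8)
The plan is to derive both statements directly from Lemma \ref{lem_exponential_appl_bound} after substituting the hypothesis on $U_{max}$. Write $K:=b\beta L^d\big((\alpha+1)^2c_1+(\alpha+1)^4c_1^2\big)$, so that the assumption reads $e^{KU_{max}}\le \tfrac{2(\eps+1)}{\eps+2}$; note that $\tfrac{2(\eps+1)}{\eps+2}\le \tfrac43<e$ for $\eps\in(0,1)$, so this is compatible with the bounds to come. Since $\alpha\ge 0$ forces $(\alpha+1)^2\ge 1$ and $(\alpha+1)^4\ge 1$, we also record the inequality $b\beta L^d(c_1+c_1^2)\le K$.

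For statement \eqref{item_exponential_explicit_0th}, I would simply invoke Lemma \ref{lem_exponential_appl_bound}\eqref{item_exponential_appl_0th} to get, for every $\delta\in\{+,-,0\}$,
\[
|S_0^{\delta}-1|\le e^{b\beta L^dU_{max}(c_1+c_1^2)}-1\le e^{KU_{max}}-1\le \frac{2(\eps+1)}{\eps+2}-1=\frac{\eps}{\eps+2}.
\]
In particular $|S_0^{\delta}|\ge 1-\tfrac{\eps}{\eps+2}=\tfrac{2}{\eps+2}$, hence $\inf_{\delta\in\{+,-,0\}}|S_0^{\delta}|\ge \tfrac{2}{\eps+2}$, a lower bound I will need for the second part.

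For statement \eqref{item_sup_inf_condition}, I would use the convention $\|S_0^{\delta}\|_{L^1}=|S_0^{\delta}|$ to split off the $m=0$ term and then apply Lemma \ref{lem_exponential_appl_bound}\eqref{item_exponential_appl_sum}: for each $\delta$,
\[
\sum_{m=1}^N\alpha^mc_1^{\frac m2}\|S_m^{\delta}\|_{L^1}=\Big(\sum_{m=0}^N\alpha^mc_1^{\frac m2}\|S_m^{\delta}\|_{L^1}\Big)-|S_0^{\delta}|\le e^{KU_{max}}-|S_0^{\delta}|.
\]
Taking the supremum over $\delta$ and using $-|S_0^{\delta}|\le -\inf_{\delta}|S_0^{\delta}|$ gives
\[
\sup_{\delta}\sum_{m=1}^N\alpha^mc_1^{\frac m2}\|S_m^{\delta}\|_{L^1}\le e^{KU_{max}}-\inf_{\delta}|S_0^{\delta}|.
\]
It then remains to verify $e^{KU_{max}}-\inf_{\delta}|S_0^{\delta}|\le \eps\inf_{\delta}|S_0^{\delta}|$, i.e. $e^{KU_{max}}\le (1+\eps)\inf_{\delta}|S_0^{\delta}|$; but by the lower bound from the first part, $(1+\eps)\inf_{\delta}|S_0^{\delta}|\ge (1+\eps)\tfrac{2}{\eps+2}=\tfrac{2(\eps+1)}{\eps+2}\ge e^{KU_{max}}$ by hypothesis, which closes the argument.

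I do not expect a genuine obstacle here: the entire content is bookkeeping with the exponent $K$ together with the convention $\|S_0^{\delta}\|_{L^1}=|S_0^{\delta}|$. The one point that deserves care is that the two statements are coupled — the lower bound $\inf_{\delta}|S_0^{\delta}|\ge \tfrac{2}{\eps+2}$ extracted from \eqref{item_exponential_explicit_0th} is precisely what makes \eqref{item_sup_inf_condition} work with the specific constant $\tfrac{2(\eps+1)}{\eps+2}$ chosen in the hypothesis on $U_{max}$.
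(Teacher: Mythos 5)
Your proof is correct and follows essentially the same route as the paper: both parts rest on Lemma \ref{lem_exponential_appl_bound} \eqref{item_exponential_appl_0th},\eqref{item_exponential_appl_sum}, splitting off the $m=0$ term via the convention $\|S_0^{\delta}\|_{L^1}=|S_0^{\delta}|$ and closing with the lower bound on $\inf_{\delta}|S_0^{\delta}|$ extracted from part \eqref{item_exponential_explicit_0th}. The only cosmetic difference is that the paper routes the final comparison through the intermediate inequality $e^{KU_{max}}\le(\eps+1)(2-e^{b\beta L^dU_{max}(c_1+c_1^2)})$ together with $\inf_{\delta}|S_0^{\delta}|\ge 2-e^{b\beta L^dU_{max}(c_1+c_1^2)}$, whereas you substitute the equivalent explicit bound $\inf_{\delta}|S_0^{\delta}|\ge 2/(\eps+2)$; the content is identical.
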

\begin{proof}
\eqref{item_exponential_explicit_0th}: The inequalities follow from the
 assumption and Lemma \ref{lem_exponential_appl_bound}
 \eqref{item_exponential_appl_0th}.

\eqref{item_sup_inf_condition}: By the assumption, 
\begin{align*}
e^{b\beta L^d U_{max}((\alpha+1)^2c_1+(\alpha+1)^4c_1^2)}\le
 (\eps+1)(2-e^{b\beta L^dU_{max}(c_1+c_1^2)}).
\end{align*}
Thus, by Lemma \ref{lem_exponential_appl_bound}
 \eqref{item_exponential_appl_0th},\eqref{item_exponential_appl_sum},
\begin{align*}
\sup_{\delta\in \{+,-,0\}}\sum_{m=1}^N
\alpha^mc_1^{\frac{m}{2}}\|S_m^{\delta}\|_{L^1}&\le \sup_{\delta\in
 \{+,-,0\}}\sum_{m=0}^N\alpha^mc_1^{\frac{m}{2}}\|S_m^{\delta}\|_{L^1}-\inf_{\delta\in \{+,-,0\}}|S_0^{\delta}|\\
&\le (\eps+1)(2-e^{b\beta L^d U_{max}(c_1+c_1^2)})-\inf_{\delta\in
 \{+,-,0\}}|S_0^{\delta}|\\
&\le \eps \inf_{\delta\in
 \{+,-,0\}}|S_0^{\delta}|.
\end{align*}
\end{proof}

\begin{lemma}\label{lem_logarithm_appl_bound}
Take any $\alpha\in\R_{\ge 0}$, $\eps\in(0,1)$. Assume that 
\begin{align*}
U_{max}\le (b\beta
 L^d((\alpha+2)^2c_1+(\alpha+2)^4c_1^2))^{-1}\log\left(\frac{2(\eps+1)}{\eps+2}\right).
\end{align*}
Set $R^{\delta}(\psi):=\log S^{\delta}(\psi)$, $(\delta\in
 \{+,-,0\})$. Then, the following inequalities hold for any $h\in
 (2/\beta)\N$ satisfying $h>(1/2)\max\{1,4/\beta\}$.
\begin{enumerate}
\item\label{item_logarithm_appl_0th}
$$
|R_0^{\delta}|\le \log\left(\frac{\eps+2}{2}\right),\ (\forall \delta \in \{+,-,0\}).
$$
\item\label{item_logarithm_appl_sum}
$$
\sum_{m=1}^N\alpha^mc_1^{\frac{m}{2}}\|R_m^{\delta}\|_{L^1}\le
     -\log(1-\eps),\ (\forall \delta\in \{+,-,0\}).
$$
\item\label{item_logarithm_appl_0th_difference}
\begin{align*}
|R_0^{\delta}-R_0^0|\le
 -\log\left(1-\max\left\{1,\frac{4}{\beta}\right\}\frac{1}{2h}\right),\ (\forall \delta\in\{+,-\}).
\end{align*}
\item\label{item_logarithm_appl_difference}
\begin{align*}
\sum_{m=1}^N\alpha^mc_1^{\frac{m}{2}}\|R_m^{\delta}-R_m^0\|_{L^1}\le
\max\left\{1,\frac{4}{\beta}\right\}\frac{1}{2(1-\eps)h},\ (\forall \delta\in\{+,-\}).
\end{align*}
\end{enumerate}
\end{lemma}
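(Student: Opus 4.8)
The plan is to run the power series definition of the Grassmann logarithm through the submultiplicativity of the weighted $L^1$-norm. For $f(\psi)\in\bigwedge\cV$ write $\|f\|_{\star}:=\sum_{m=0}^N\alpha^mc_1^{\frac m2}\|f_m\|_{L^1}$; Lemma \ref{lem_L1_norm_comparison} gives $\|fg\|_{\star}\le\|f\|_{\star}\|g\|_{\star}$, hence $\|f^n\|_{\star}\le\|f\|_{\star}^n$, and since $S^+(\psi)$, $S^-(\psi)$, $S^0(\psi)$ are even elements of $\bigwedge\cV$ (they are Grassmann Gaussian integrals of exponentials of the even polynomials $V^{\delta}$, $V^+$) they pairwise commute, so the telescoping identity $a^n-b^n=\sum_{j=0}^{n-1}a^j(a-b)b^{n-1-j}$ is available for them. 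First I would note that the present hypothesis, built from $(\alpha+2)^2c_1+(\alpha+2)^4c_1^2$, is strictly stronger than the hypotheses of Lemma \ref{lem_explicit_beta_L_condition} for the same $\alpha$ and $\eps$, so for every $\delta\in\{+,-,0\}$ Lemma \ref{lem_explicit_beta_L_condition}\eqref{item_exponential_explicit_0th},\eqref{item_sup_inf_condition} give $|S_0^{\delta}-1|\le\eps/(\eps+2)$ and $\sum_{m\ge1}\alpha^mc_1^{\frac m2}\|S_m^{\delta}\|_{L^1}\le\eps|S_0^{\delta}|$; feeding the same hypothesis into Lemma \ref{lem_exponential_appl_bound}\eqref{item_exponential_appl_0th},\eqref{item_exponential_appl_sum} yields moreover the sharper bounds $|S_0^{\delta}-1|\le(2(\eps+1)/(\eps+2))^{1/4}-1$ and $\|S^{\delta}\|_{\star}\le 2(\eps+1)/(\eps+2)$, so $|S_0^{\delta}|$ is bounded below by an absolute constant $>0.9$ and $\|S^{\delta}-S_0^{\delta}\|_{\star}/|S_0^{\delta}|\le\mu$ for an absolute constant $\mu<1/2$. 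This extra room is what the proof of (3) and (4) will consume.

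For (1) and (2): by definition $R^{\delta}=\log S^{\delta}=\log S_0^{\delta}+\sum_{n=1}^N\tfrac{(-1)^{n-1}}{n}T_{\delta}^{n}$ with $T_{\delta}:=(S^{\delta}-S_0^{\delta})/S_0^{\delta}$ purely positive-degree, so $T_{\delta}^{n}$ has no component of degree below $n$; thus $R_0^{\delta}=\log S_0^{\delta}$ and the positive-degree part of $R^{\delta}$ is $\sum_{n\ge1}\tfrac{(-1)^{n-1}}{n}T_{\delta}^{n}$. The scalar bound $|\log z|\le-\log(1-|z-1|)$, valid for $|z-1|<1$, applied with $z=S_0^{\delta}$ and $|z-1|\le\eps/(\eps+2)$, gives $|R_0^{\delta}|\le-\log(1-\tfrac{\eps}{\eps+2})=\log\tfrac{\eps+2}{2}$, which is (1). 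Since $\|T_{\delta}\|_{\star}=\sum_{m\ge1}\alpha^mc_1^{\frac m2}\|S_m^{\delta}\|_{L^1}/|S_0^{\delta}|\le\eps$, submultiplicativity gives $\sum_{m\ge1}\alpha^mc_1^{\frac m2}\|R_m^{\delta}\|_{L^1}\le\sum_{n\ge1}\tfrac1n\|T_{\delta}\|_{\star}^{n}\le-\log(1-\eps)$, which is (2).

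For (3) and (4): put $L(S):=\sum_{n\ge1}\tfrac{(-1)^{n-1}}{n}(S_{\ge1}/S_0)^{n}$ with $S_{\ge1}:=S-S_0$, so $\log S=\log S_0+L(S)$ with $L(S)$ purely positive-degree; then $R_0^{\delta}-R_0^0=\log(S_0^{\delta}/S_0^0)$ and the positive-degree part of $R^{\delta}-R^0$ is $L(S^{\delta})-L(S^0)$. The two difference estimates I would invoke are Lemma \ref{lem_exponential_appl_bound}\eqref{item_exponential_appl_difference}, which via the hypothesis gives $\|S^+-S^0\|_{\star}\le\tfrac1h(e^{b\beta L^dU_{max}((\alpha+2)^2c_1+(\alpha+2)^4c_1^2)}-1)\le\tfrac{\eps}{h(\eps+2)}$, and Lemma \ref{lem_exponential_appl_bound_another}, whose bound carries the prefactor $\tfrac1{\beta h}$ and which I would handle using $\tfrac1\beta\le\tfrac14\max\{1,\tfrac4\beta\}$; in both cases $|S_0^{\delta}-S_0^0|\le\|S^{\delta}-S^0\|_{\star}$, so together with the sharp lower bound on $|S_0^0|$ one gets $|S_0^{\delta}/S_0^0-1|\le\max\{1,\tfrac4\beta\}\tfrac1{2h}$, which is $<1$ under the assumed $h>\tfrac12\max\{1,\tfrac4\beta\}$, and the scalar bound $|\log z|\le-\log(1-|z-1|)$ then yields (3). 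For (4), with $a:=S^{\delta}_{\ge1}/S_0^{\delta}$, $b:=S^0_{\ge1}/S_0^0$ (so $\|a\|_{\star},\|b\|_{\star}\le\mu$), the telescoping identity gives $\|L(S^{\delta})-L(S^0)\|_{\star}\le\sum_{n\ge1}\tfrac1n\,n\mu^{n-1}\|a-b\|_{\star}=\tfrac{\|a-b\|_{\star}}{1-\mu}\le2\|a-b\|_{\star}$, and writing $a-b=\tfrac1{S_0^{\delta}}(S^{\delta}_{\ge1}-S^0_{\ge1})-\tfrac{S_0^{\delta}-S_0^0}{S_0^{\delta}}\cdot\tfrac{S^0_{\ge1}}{S_0^0}$ gives $\|a-b\|_{\star}\le(1+\mu)\|S^{\delta}-S^0\|_{\star}/|S_0^{\delta}|$; inserting the two difference estimates (with the $\beta$-conversion for $\delta=-$) and the lower bound on $|S_0^{\delta}|$ yields $\sum_{m\ge1}\alpha^mc_1^{\frac m2}\|R_m^{\delta}-R_m^0\|_{L^1}=\|L(S^{\delta})-L(S^0)\|_{\star}\le\max\{1,\tfrac4\beta\}\tfrac1{2(1-\eps)h}$, the factor $(1-\eps)^{-1}$ being absorbed because $\|S^{\delta}-S^0\|_{\star}$ already carries the small factor $\tfrac{\eps}{\eps+2}$ (resp.\ $(\beta h)^{-1}$) while $\eps(1-\eps)\le1/4$.

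The conceptual content is light: everything reduces to the convergent power series for $\log$ and submultiplicativity of $\|\cdot\|_{\star}$. The delicate point is the constant bookkeeping for (3) and (4): the two available difference bounds have different shapes --- an $e^{(\,\cdot\,)}-1$ factor of order $h^{-1}$ for $\delta=+$, and a factor of the shape $(\,\cdot\,)^2e^{(\,\cdot\,)}$ of order $(\beta h)^{-1}$ for $\delta=-$ --- and they must be funnelled into a single bound carrying the $\max\{1,4/\beta\}$ prefactor and the $(1-\eps)^{-1}$; it is to make this work that the hypothesis is stated with $(\alpha+2)$ rather than the $(\alpha+1)$ of Lemma \ref{lem_explicit_beta_L_condition} --- buying the sharper control of $|S_0^{\delta}|$ and $\|S^{\delta}\|_{\star}$ that tames the various prefactors --- and that the mild lower bound $h>\tfrac12\max\{1,4/\beta\}$ is imposed, keeping the arguments of the logarithms in $\C\setminus\R_{\le0}$.
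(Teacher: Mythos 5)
Your proposal is correct and follows essentially the same route as the paper: parts (1)–(2) come from the bounds of Lemma \ref{lem_explicit_beta_L_condition} fed into the logarithm power series, part (3) from the zeroth-order difference bounds of Lemma \ref{lem_exponential_appl_bound} \eqref{item_exponential_appl_difference} and Lemma \ref{lem_exponential_appl_bound_another} combined with $|S_0^0|^{-1}\le(\eps+2)/2$, and part (4) from the same telescoping/submultiplicativity argument that underlies Lemma \ref{lem_logarithm_general_bound} \eqref{item_logarithm_general_difference}, which the paper cites rather than re-derives. The only difference is cosmetic: you track the constants in (4) via the absolute bounds $\mu<1/2$ and $|S_0^\delta|>0.9$ (exploiting the $(\alpha+2)$ hypothesis), whereas the paper keeps the $(1-\eps)^{-1}$ prefactor directly from the general lemma; both bookkeepings close.
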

\begin{remark}\label{rem_real_part_remark}
By Lemma \ref{lem_explicit_beta_L_condition}
 \eqref{item_exponential_explicit_0th},
$$
\Re S_0^{\delta}\ge \frac{2}{\eps+2}>0,\ (\forall \delta\in \{+,-,0\}).
$$
Thus, $R^{\delta}(\psi)$ $(\delta\in \{+,-,0\})$ are
 well-defined. 
\end{remark}
\begin{proof}[Proof of Lemma \ref{lem_logarithm_appl_bound}]
\eqref{item_logarithm_appl_0th}: The result follows from Lemma
 \ref{lem_explicit_beta_L_condition}
 \eqref{item_exponential_explicit_0th} and Lemma
 \ref{lem_logarithm_general_bound} \eqref{item_logarithm_general_0th}.

\eqref{item_logarithm_appl_sum}: By the assumption and Lemma
 \ref{lem_explicit_beta_L_condition} \eqref{item_sup_inf_condition} we
 can apply Lemma \ref{lem_logarithm_general_bound}
 \eqref{item_logarithm_general_sum} to obtain the result.

\eqref{item_logarithm_appl_0th_difference}: By the assumption, Lemma
 \ref{lem_exponential_appl_bound}
 \eqref{item_exponential_appl_difference}, Lemma
 \ref{lem_exponential_appl_bound_another}, Lemma
 \ref{lem_explicit_beta_L_condition} \eqref{item_exponential_explicit_0th} and
 Lemma \ref{lem_logarithm_general_bound}
 \eqref{item_logarithm_general_0th_difference} we have that
\begin{align*}
|R_0^+-R_0^0|&\le
 \sum_{n=1}^{\infty}\frac{1}{n}\left|\frac{S_0^+-S_0^0}{S_0^0}\right|^n\le
 \sum_{n=1}^{\infty}\frac{1}{n}\left|\frac{\frac{1}{h}\left(\frac{2(\eps+1)}{\eps+2}-1\right)}{\frac{2}{\eps+2}}\right|^n\\
&\le -\log \left(1-\frac{1}{2h}\right),\\
|R_0^--R_0^0|&\le\sum_{n=1}^{\infty}\frac{1}{n}\left|\frac{S_0^--S_0^0}{S_0^0}\right|^n\le
\sum_{n=1}^{\infty}\frac{1}{n}\left|\frac{\frac{2(\eps+1)}{\beta
 h(\eps+2)}}{\frac{2}{\eps+2}}\right|^n\le -\log\left(1-\frac{2}{\beta h}\right).
\end{align*}
These imply the result.

\eqref{item_logarithm_appl_difference}: The assumption, Lemma
 \ref{lem_exponential_appl_bound} \eqref{item_exponential_appl_difference}, Lemma
 \ref{lem_exponential_appl_bound_another}, Lemma
 \ref{lem_explicit_beta_L_condition}
 \eqref{item_exponential_explicit_0th},\eqref{item_sup_inf_condition}
 and Lemma \ref{lem_logarithm_general_bound}
 \eqref{item_logarithm_general_difference} ensure that
\begin{align*}
&\sum_{m=1}^N\alpha^mc_1^{\frac{m}{2}}\|R_m^+-R^0_m\|_{L^1}\\
&\le\frac{1}{1-\eps}|S_0^0|^{-1}|S_0^+|^{-1}\sum_{m=0}^N\alpha^mc_1^{\frac{m}{2}}\|S_m^+\|_{L^1}\sum_{n=0}^N\alpha^nc_1^{\frac{n}{2}}\|S_n^+-S_n^0\|_{L^1}\\
&\le \frac{1}{1-\eps}\cdot\frac{\eps+2}{2}\cdot\eps\cdot\frac{1}{h}\left(\frac{2(\eps+1)}{\eps+2}-1\right)\\
&\le \frac{1}{2h(1-\eps)},\\
&\sum_{m=1}^N\alpha^mc_1^{\frac{m}{2}}\|R_m^--R^0_m\|_{L^1}\\
&\le\frac{1}{1-\eps}|S_0^0|^{-1}|S_0^-|^{-1}\sum_{m=0}^N\alpha^mc_1^{\frac{m}{2}}\|S_m^-\|_{L^1}\sum_{n=0}^N\alpha^nc_1^{\frac{n}{2}}\|S_n^--S_n^0\|_{L^1}\\
&\le
 \frac{1}{1-\eps}\cdot\frac{\eps+2}{2}\cdot\eps\cdot\frac{1}{\beta h}\cdot\frac{2(\eps+1)}{\eps+2}\\
& \le \frac{2}{\beta h(1-\eps)}.
\end{align*}
The claimed inequality follows from these inequalities.
\end{proof}

We conclude this section by proving the following lemma, which enables
us to adopt 
$$
-\frac{1}{\beta L^d}\log\left(\int
e^{\frac{1}{2}(R^+(\psi)+R^-(\psi))}d\mu_{C_{\le 0}^{\infty}}(\psi)\right)
$$
as a formulation of the normalized free energy density. Later in Section
\ref{sec_IR_model} we will see that this Grassmann integral formulation
is suited to the IR analysis since it has desirable symmetries.

\begin{lemma}\label{lem_symmetric_formulation}
There exist $(\beta,L^d,b,\chi,E)$-dependent, $h$-independent constants
 $h_0,c_2,c_3\in\R_{>0}$ such that the following statements hold
 for any $h\in (2/\beta)\N$ satisfying $h\ge h_0$.
\begin{enumerate}
\item\label{item_real_part_symmetric}
\begin{align*}
&\Re \int e^{-V(\psi)}d\mu_C(\psi)>0,\\
& \Re \int e^{\frac{1}{2}(R^+(\psi)+R^-(\psi))}d\mu_{C_{\le
 0}^{\infty}}(\psi)>0,\\
&(\forall \bU\in\C^b\text{ satisfying }|U_{\rho}|\le
 c_2\ (\forall \rho\in \cB)).
\end{align*}
\item\label{item_logarithm_final_h_estimate}
\begin{align*}
&\left|\log\left(\int e^{-V(\psi)}d\mu_{C}(\psi)\right)-\log\left(\int e^{\frac{1}{2}(R^+(\psi)+R^-(\psi))}d\mu_{C_{\le
 0}^{\infty}}(\psi)\right)\right|\le \frac{1}{h}c_3,\\
&(\forall \bU\in\C^b\text{ satisfying }|U_{\rho}|\le
 c_2\ (\forall \rho\in \cB)).
\end{align*}
 \end{enumerate}
\end{lemma}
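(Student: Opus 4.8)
The plan is to route $\int e^{-V(\psi)}d\mu_C(\psi)$ through the auxiliary polynomials $S^+,S^-,S^0$ already estimated in Lemmas \ref{lem_exponential_appl_bound}--\ref{lem_logarithm_appl_bound}, and then to dispose of the passage from the exponent $R^+$ to $\tfrac12(R^++R^-)$ and from the covariance $C^+_{\le 0}$ to $C^\infty_{\le 0}$ by $O(1/h)$ errors. First, since $V=V^+$ and $C=C^+_{\le 0}+C^+_{>0}$, the covariance splitting property \eqref{eq_covariance_decomposition_integral_double}, applied with the inner covariance $C^+_{>0}$ and the outer covariance $C^+_{\le 0}$, gives
$$\int e^{-V(\psi)}d\mu_C(\psi)=\int S^+(\psi)\,d\mu_{C^+_{\le 0}}(\psi)=\int e^{R^+(\psi)}\,d\mu_{C^+_{\le 0}}(\psi),$$
the last identity being legitimate because $\Re S_0^+>0$ by Remark \ref{rem_real_part_remark}, so that $R^+=\log S^+$ is defined and $e^{R^+}=S^+$. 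Fix once and for all a small $\eps\in(0,1)$ and a suitable $\alpha\in\R_{\ge 0}$, and choose $c_2\in\R_{>0}$ so small that the hypothesis on $U_{max}$ in Lemma \ref{lem_logarithm_appl_bound} (hence also in Lemma \ref{lem_explicit_beta_L_condition}) holds whenever $|U_\rho|\le c_2$ for all $\rho\in\cB$; then Lemma \ref{lem_logarithm_appl_bound} \eqref{item_logarithm_appl_0th},\eqref{item_logarithm_appl_sum} give $|R_0^\delta|\le\log((\eps+2)/2)$ and $\sum_{m\ge 1}\alpha^m c_1^{m/2}\|R_m^\delta\|_{L^1}\le-\log(1-\eps)$ for $\delta\in\{+,-,0\}$, while \eqref{item_logarithm_appl_0th_difference},\eqref{item_logarithm_appl_difference} bound $|R_0^\delta-R_0^0|$ and $\sum_{m\ge 1}\alpha^m c_1^{m/2}\|R_m^\delta-R_m^0\|_{L^1}$ by an $h$-independent constant times $1/h$ for $\delta\in\{+,-\}$.

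Next I would carry out the two replacements. For the change of exponent, from $R^+-R^-=(R^+-R^0)-(R^--R^0)$ and the previous paragraph one gets $|R_0^+-R_0^-|\le\mathrm{const}/h$ and $\sum_{m\ge 1}\alpha^m c_1^{m/2}\|R_m^+-R_m^-\|_{L^1}\le\mathrm{const}/h$; writing $e^{R^+}-e^{\frac12(R^++R^-)}=e^{\frac12(R^++R^-)}\bigl(e^{\frac12(R^+-R^-)}-1\bigr)$, noting that the $0$-th component of $e^{\frac12(R^+-R^-)}-1$ is $e^{\frac12(R_0^+-R_0^-)}-1=O(1/h)$ and that its higher components and the factor $e^{\frac12(R^++R^-)}$ are controlled in weighted $L^1$-norm by the bounds of the previous paragraph, the general exponential $L^1$-estimates of Appendix \ref{app_grassmann_L1_theory} together with the determinant bound $|\det(C^+_{\le 0}(X_i,Y_j))_{1\le i,j\le n}|\le c_1^n$ of Lemma \ref{lem_h_independent_determinant_bound} yield $\bigl|\int e^{R^+}d\mu_{C^+_{\le 0}}-\int e^{\frac12(R^++R^-)}d\mu_{C^+_{\le 0}}\bigr|\le\mathrm{const}/h$. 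For the change of covariance, factor out the scalar $e^{\frac12(R_0^++R_0^-)}$ (of bounded modulus) so that the remaining exponent $\tfrac12\bigl((R^+-R_0^+)+(R^--R_0^-)\bigr)$ has vanishing $0$-th component and bounded weighted $L^1$-norm, and apply the covariance-difference estimate of Appendix \ref{app_grassmann_L1_theory} (Lemma \ref{lem_exponential_general_bound} \eqref{item_exponential_general_covariance_difference}) with the $O(1/h)$ determinant bound $|\det\bigl((C^+_{\le 0}-C^\infty_{\le 0})(X_i,Y_j)\bigr)_{1\le i,j\le n}|\le\tfrac1h c_1^n$ of Lemma \ref{lem_h_independent_determinant_bound}, obtaining $\bigl|\int e^{\frac12(R^++R^-)}d\mu_{C^+_{\le 0}}-\int e^{\frac12(R^++R^-)}d\mu_{C^\infty_{\le 0}}\bigr|\le\mathrm{const}/h$.

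Combining these three bounds produces an $h$-independent constant $c_3'$ with $\bigl|\int e^{-V(\psi)}d\mu_C(\psi)-\int e^{\frac12(R^+(\psi)+R^-(\psi))}d\mu_{C^\infty_{\le 0}}(\psi)\bigr|\le c_3'/h$. Moreover $\int e^{-V}d\mu_C=e^{R_0^+}\int e^{R^+-R_0^+}d\mu_{C^+_{\le 0}}$, which, since $|R_0^+|\le\log((\eps+2)/2)$ and $\sum_{m\ge 1}\alpha^m c_1^{m/2}\|R_m^+\|_{L^1}\le-\log(1-\eps)$ and $|\det(C^+_{\le 0})|\le c_1^n$, lies within distance $\tfrac13$ of $1$ uniformly in $h\ge h_0$ provided $\eps$ is taken small enough. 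After enlarging $h_0$ so that $c_3'/h_0$ is small, both $\int e^{-V(\psi)}d\mu_C(\psi)$ and $\int e^{\frac12(R^+(\psi)+R^-(\psi))}d\mu_{C^\infty_{\le 0}}(\psi)$ lie in a fixed compact subset of $\{z\in\C\mid\Re z>1/2\}$, which establishes \eqref{item_real_part_symmetric}; and $\log$ is Lipschitz on that set with an $h$-independent constant, so \eqref{item_logarithm_final_h_estimate} follows with $c_3$ equal to the product of that Lipschitz constant and $c_3'$.

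The step I expect to demand the most care is the pair of replacements in the second paragraph: propagating the $O(1/h)$ smallness of $R^+-R^-$ and of $C^+_{\le 0}-C^\infty_{\le 0}$ through an exponential and then through a Grassmann Gaussian integral in the weighted $L^1$-norm, while verifying that every constant produced is genuinely $h$-independent and that the single choice of $c_2$ made at the outset meets the $U_{max}$-hypotheses of Lemmas \ref{lem_explicit_beta_L_condition} and \ref{lem_logarithm_appl_bound}. All the analytic inequalities required for this are already isolated in the general exponential and logarithm $L^1$-lemmas of Appendix \ref{app_grassmann_L1_theory} and in the $h$-independent resp.\ $O(1/h)$ determinant bounds of Lemma \ref{lem_h_independent_determinant_bound}, so the work is mostly bookkeeping.
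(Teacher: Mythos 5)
Your proposal is correct and follows essentially the same route as the paper: both pass from $\int e^{-V}d\mu_C$ to $\int e^{R^+}d\mu_{C_{\le 0}^+}$ via the covariance splitting, then control the exponent change $R^+\to\tfrac12(R^++R^-)$ through the $O(1/h)$ bounds on $R^\delta-R^0$ (Lemma \ref{lem_logarithm_appl_bound}) and the covariance change $C_{\le 0}^+\to C_{\le 0}^{\infty}$ through the $O(1/h)$ determinant bound of Lemma \ref{lem_h_independent_determinant_bound}, all fed into Lemma \ref{lem_exponential_general_bound}. The only cosmetic differences are the order of the two replacements and your use of a Lipschitz bound for $\log$ on a compact subset of the right half-plane where the paper expands $\log(1+w)$ as a series; these are equivalent.
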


\begin{proof}
Take any $\eps\in (0,2/5)$ and assume that
\begin{align}\label{eq_Umax_symmetric_formulation_proof}
U_{max}\le (b\beta
 L^d(4^2 c_1+4^4c_1^2))^{-1}\log\left(\frac{2(\eps+1)}{\eps+2}\right).
\end{align}
\eqref{item_real_part_symmetric}: By Lemma
 \ref{lem_exponential_general_bound}
 \eqref{item_exponential_general_0th} and Lemma
 \ref{lem_logarithm_appl_bound} \eqref{item_logarithm_appl_0th},\eqref{item_logarithm_appl_sum},
\begin{align}
&\left|\int e^{\frac{1}{2}(R^+(\psi)+R^-(\psi))}d\mu_{C_{\le
 0}^+}(\psi)-1\right|\label{eq_exponential_typical_calculation}\\
&\le \left|\int e^{\frac{1}{2}(R^+(\psi)+R^-(\psi))}d\mu_{C_{\le
 0}^+}(\psi)-e^{\frac{1}{2}(R_0^++R_0^-)}\right|+|e^{\frac{1}{2}(R_0^++R_0^-)}-1|\notag\\
&\le
 e^{\frac{1}{2}(|R_0^+|+|R_0^-|)}\left(e^{\frac{1}{2}\sum_{m=1}^Nc_1^{\frac{m}{2}}(\|R_m^+\|_{L^1}+\|R_m^-\|_{L^1})}-1\right)+e^{\frac{1}{2}(|R_0^+|+|R_0^-|)}-1\notag\\
&\le \frac{\eps+2}{2(1-\eps)}-1.\notag
\end{align}
Since $\eps\in (0,2/5)$,
\begin{align*}
\Re \int e^{\frac{1}{2}(R^+(\psi)+R^-(\psi))}d\mu_{C_{\le
 0}^+}(\psi)\ge \frac{2-5\eps}{2(1-\eps)}>0.
\end{align*}

 It follows from \eqref{eq_Umax_symmetric_formulation_proof} and the same argument as in the proof of Lemma
 \ref{lem_exponential_appl_bound} \eqref{item_exponential_appl_0th} that
\begin{align*}
&\left|\int e^{-V(\psi)}d\mu_{C}(\psi)-1\right|\le e^{b\beta
 L^dU_{max}(c_1+c_1^2)}-1\le \frac{2(\eps+1)}{\eps+2}-1.
\end{align*}
Therefore,
$$
\Re \int e^{-V(\psi)}d\mu_{C}(\psi)\ge \frac{2}{\eps+2}>0.
$$

\eqref{item_logarithm_final_h_estimate}:
The same calculation as in \eqref{eq_exponential_typical_calculation}
 yields that
\begin{align}\label{eq_preparation_lower_bound}
\left|\int e^{R^+(\psi)}d\mu_{C_{\le 0}^+}(\psi)\right|\ge \frac{2-5\eps}{2(1-\eps)}.
\end{align}
By Lemma \ref{lem_h_independent_determinant_bound} and Lemma
 \ref{lem_exponential_general_bound}
 \eqref{item_exponential_general_difference},\eqref{item_exponential_general_covariance_difference}, 
\begin{align*}
&\left|\int
 e^{\frac{1}{2}(R^+(\psi)+R^-(\psi))}d\mu_{C_{\le
 0}^{\infty}}(\psi)-\int e^{R^+(\psi)}d\mu_{C_{\le 0}^+}(\psi)\right|\\
&\le \left|\int
 e^{\frac{1}{2}(R^+(\psi)+R^-(\psi))}d\mu_{C_{\le
 0}^{\infty}}(\psi)- \int
 e^{\frac{1}{2}(R^+(\psi)+R^-(\psi))}d\mu_{C_{\le
 0}^{+}}(\psi)\right|\\
&\quad+ \left|\int
 e^{\frac{1}{2}(R^+(\psi)+R^-(\psi))}d\mu_{C_{\le
 0}^{+}}(\psi)-\int e^{R^+(\psi)}d\mu_{C_{\le 0}^+}(\psi)\right|\\
&\le \frac{1}{h}
 e^{\frac{1}{2}(|R_0^+|+|R_0^-|)}\left(e^{\frac{1}{2}\sum_{m=1}^N2^mc_1^{\frac{m}{2}}(\|R_m^+\|_{L^1}+\|R_m^-\|_{L^1})}-1\right)\\
&\quad
 +\left((e^{\frac{1}{2}|R_0^+-R_0^-|}-1)e^{|R_0^+|}+\frac{1}{2}e^{|R_0^+|}\sum_{m=1}^Nc_1^{\frac{m}{2}}\|R_m^+-R_m^-\|_{L^1}\right)\\
&\qquad\quad\cdot e^{\sup_{\delta\in
 \{+,-\}}\sum_{m=1}^Nc_1^{\frac{m}{2}}\|R_m^{\delta}\|_{L^1}}\\
&\le \frac{1}{h}
 e^{\frac{1}{2}(|R_0^+|+|R_0^-|)}\left(e^{\frac{1}{2}\sum_{m=1}^N2^mc_1^{\frac{m}{2}}(\|R_m^+\|_{L^1}+\|R_m^-\|_{L^1})}-1\right)\\
&\quad
 +\Bigg((e^{\frac{1}{2}\sum_{\delta\in\{+,-\}}|R_0^{\delta}-R_0^0|}-1)e^{|R_0^+|}+\frac{1}{2}e^{|R_0^+|}\sum_{\delta\in\{+,-\}}\sum_{m=1}^Nc_1^{\frac{m}{2}}\|R_m^{\delta}-R_m^0\|_{L^1}\Bigg)\\
&\qquad\quad\cdot e^{\sup_{\delta\in
 \{+,-\}}\sum_{m=1}^Nc_1^{\frac{m}{2}}\|R_m^{\delta}\|_{L^1}}.
\end{align*}
Set $c':=\max\{1,4/\beta\}$. By the assumption \eqref{eq_Umax_symmetric_formulation_proof} we
 can substitute the inequalities proved in Lemma
 \ref{lem_logarithm_appl_bound} for $\alpha=2$ to derive that 
\begin{align*}
&\left|\int
 e^{\frac{1}{2}(R^+(\psi)+R^-(\psi))}d\mu_{C_{\le
 0}^{\infty}}(\psi)-\int e^{R^+(\psi)}d\mu_{C_{\le 0}^+}(\psi)\right|\\
&\le \frac{\eps+2}{2h}\left(\frac{1}{1-\eps}-1\right)\\
&\quad+\Bigg(\left(\frac{1}{1-c'/(2h)}-1\right)\frac{\eps+2}{2}+\frac{\eps+2}{4}\cdot\frac{c'}{(1-\eps)h}\Bigg)\frac{1}{1-\eps}.
\end{align*}
This inequality implies that there exist $h$-independent constants
 $c'',h_0\in\R_{>0}$ such that for any $h\in (2/\beta)\N$ with $h\ge h_0$,
\begin{align}\label{eq_preparation_integral_final_bound}
\left|\int
 e^{\frac{1}{2}(R^+(\psi)+R^-(\psi))}d\mu_{C_{\le
 0}^{\infty}}(\psi)-\int e^{R^+(\psi)}d\mu_{C_{\le 0}^+}(\psi)\right|\le
 \frac{1}{h}c''.
\end{align}
By using the inequalities \eqref{eq_preparation_lower_bound},
 \eqref{eq_preparation_integral_final_bound} and taking a larger $h$ if
 necessary we have that
\begin{align}
&\left|\log\left(\int
 e^{\frac{1}{2}(R^+(\psi)+R^-(\psi))}d\mu_{C_{\le
 0}^{\infty}}(\psi)\right)-\log\left(\int e^{R^+(\psi)}d\mu_{C_{\le 0}^+}(\psi)\right)\right|
\label{eq_preparation_integral_end_bound}\\
&\le
 \sum_{n=1}^{\infty}\frac{1}{n}\left(\frac{2(1-\eps)}{2-5\eps}\cdot\frac{c''}{h}\right)^n\le \frac{2(1-\eps)c''}{2-5\eps-2(1-\eps)c''/h}\cdot\frac{1}{h}.\notag
\end{align}
We saw in Remark \ref{rem_real_part_remark} that $\Re
 S_0^+>0$. Therefore, we can apply \cite[\mbox{Lemma C.2}]{K3} to justify that
\begin{align*}
\int e^{R^+(\psi)}d\mu_{C_{\le 0}^+}(\psi)&=\int\int
 e^{-V(\psi+\psi^1)}d\mu_{C_{>0}^+}(\psi^1)d\mu_{C_{\le 0}^+}(\psi)\\
&=\int
 e^{-V(\psi)}d\mu_{C}(\psi).
\end{align*}
By combining this equality with
 \eqref{eq_preparation_integral_end_bound} we obtain the inequality
 claimed in \eqref{item_logarithm_final_h_estimate}.
\end{proof}

\section{General estimation}\label{sec_general_estimation}
In this section we establish various inequalities which will form the
basis of both the Matsubara UV integration and the IR integration around
the zero set of the free dispersion relation. We also show that a Grassmann
polynomial produced by a single-scale integration inherits symmetric
properties which the covariance and the input polynomial originally
have. Here we assume that a covariance $C_o:I_0^2\to \C$ is given and
satisfies 
\begin{align}
&|\det(\<\bp_i,\bq_j\>_{\C^r}C_o(X_i,Y_j))_{1\le i,j\le n}|\le
 D_{et}^n,\label{eq_general_determinant_bound_condition}\\
&(\forall r,n\in\N,\bp_i,\bq_i\in\C^r\text{ with }
\|\bp_i\|_{\C^r},\|\bq_i\|_{\C^r}\le 1,\notag\\
&\quad X_i,Y_i\in I_0\
 (i=1,2,\cdots,n)),\notag
\end{align}
with a constant $D_{et}\in\R_{>0}$. It is sometimes more convenient to deal with the anti-symmetric
extension $\widetilde{C_o}:I^2\to \C$ of $C_o$ than $C_o$ itself. The definition
of $\widetilde{C_o}$ is that
\begin{align}
&\widetilde{C_o}((X,\theta),(Y,\xi)):=\frac{1}{2}(1_{(\theta,\xi)=(1,-1)}C_o(X,Y)-1_{(\theta,\xi)=(-1,1)}C_o(Y,X)),\label{eq_anti_symmetric_extension_covariance}\\
&(\forall X,Y\in I_0,\theta, \xi\in \{1,-1\}).\notag
\end{align}

In order to measure sizes of Grassmann polynomials during the multi-scale
integrations, we need to define a family of norms and semi-norms on the
linear space of anti-symmetric functions on $I^m$. For any
$(\rho,\bx,\s,x,$ $\theta)$, $(\eta,\by,\tau,y,\xi)\in I$ and $j\in
\{0,1,\cdots,d\}$ set 
\begin{align*}
&d_j((\rho,\bx,\s,x,\theta),(\eta,\by,\tau,y,\xi))\\
&:=\left\{\begin{array}{ll}\frac{\beta}{2\pi}|e^{i\frac{2\pi}{\beta}x}-e^{i\frac{2\pi}{\beta}y}|&\text{if }j=0,\\
\frac{L}{2\pi}|e^{i\frac{2\pi}{L}\<\bx,\bv_j\>}-e^{i\frac{2\pi}{L}\<\by,\bv_j\>}|&\text{if }j\in\{1,2,\cdots,d\}.
\end{array}
\right. 
\end{align*}
Assume that a set of positive numbers $\{\fw(l)\}_{l\in \Z}$ is given. Fix
$\fr\in (0,1]$. For any $m\in \{2,3,\cdots,N\}$, anti-symmetric
function $f:I^m\to \C$ and $l\in\Z$, let
\begin{align}\label{eq_definition_norm_semi_norm}
\|f\|_{l,0}&:=\sup_{X\in I}\left(\frac{1}{h}\right)^{m-1}\sum_{\bY\in
 I^{m-1}}e^{\sum_{j=0}^d(\fw(l)d_j(X,Y_1))^{\fr}}|f(X,\bY)|,\\
\|f\|_{l,1}&:=\sup_{j'\in\{0,1,\cdots,d\}}\sup_{q\in\{1,2,\cdots,m-1\}}\sup_{X\in
 I}\notag\\
&\quad\cdot\left(\frac{1}{h}\right)^{m-1}\sum_{\bY\in
 I^{m-1}}d_{j'}(X,Y_q)e^{\sum_{j=0}^d(\fw(l)d_j(X,Y_1))^{\fr}}|f(X,\bY)|.\notag
\end{align}
To understand these definitions clearly, recall the notational rule that $\bY=(Y_1,Y_2,\cdots,Y_{m-1})$ for
$\bY\in I^{m-1}$. Since no Grassmann polynomial of degree 1 appears in
our multi-scale analysis,
there is no need to newly introduce a norm in the space of functions on $I$. To
organize formulas we sometimes write $\|f_0\|_{l,0}$ in place of $|f_0|$ for
$f_0\in \C$ as well. 

When we practically use the norm $\|\cdot\|_{l,0}$ and the semi-norm
$\|\cdot\|_{l,1}$ in Section \ref{sec_UV} and Section \ref{sec_IR_model}, the integer $l$ will represent an integration scale in the
multi-scale integration procedure. Moreover, in these sections the weight $\fw(l)$ and the exponent
$\fr$ will be specifically defined. However, the general theory in this
section can be completed without more detailed information on these
parameters.

\subsection{Estimation of the free
  integration}\label{subsec_general_free_integration}
Here we estimate a Grassmann polynomial produced by the free
integration. With $J(\psi)\in\bigwedge \cV$ satisfying $J_m(\psi)=0$ if $m\notin
2\N\cup \{0\}$, set
\begin{equation}\label{eq_definition_free_general}
F(\psi):=\int J(\psi+\psi^1)d\mu_{C_o}(\psi^1).
\end{equation}
By the definition of the Grassmann Gaussian integral, $F_m(\psi)=0$ if
$m\notin 2\N\cup \{0\}$.

\begin{lemma}\label{lem_general_free_bound}
The following inequalities hold.
\begin{align*}
&|F_0|\le |J_0|+\frac{N}{h}\sum_{n=1}^ND_{et}^{\frac{n}{2}}\|J_n\|_{l,0},\\
&\|F_m\|_{l,r}\le
 \|J_m\|_{l,r}+\sum_{n=m+1}^N2^nD_{et}^{\frac{n-m}{2}}\|J_n\|_{l,r},\\
&(\forall r\in\{0,1\},l\in\Z,m\in\N_{\ge 2}).
\end{align*}
\end{lemma}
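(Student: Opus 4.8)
The plan is to expand the Grassmann Gaussian integral in \eqref{eq_definition_free_general} explicitly in terms of the anti-symmetric kernels $J_n(\cdot)$, read off the kernel $F_m(\cdot)$ of the degree-$m$ part, and then estimate its $\|\cdot\|_{l,r}$-norm term by term. Writing $J(\psi)=\sum_{n=0}^N J_n(\psi)$ with $J_n(\psi)=\frah^n\sum_{\bZ\in I^n}J_n(\bZ)(\psi+\psi^1)_{\bZ}$ and substituting $\psi_Z+\psi_Z^1$, one splits each product $(\psi+\psi^1)_{\bZ}$ into a sum over subsets $A\subseteq\{1,\dots,n\}$ of the external legs $\psi$ and the complementary legs $\psi^1$ to be integrated. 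Performing $\int\cdot\,d\mu_{C_o}(\psi^1)$ pairs up the $\psi^1$-legs into a determinant of entries $\widetilde{C_o}$ on the internal variables (using the anti-symmetric extension \eqref{eq_anti_symmetric_extension_covariance}), so that $F_m(\bX)$ is, up to signs and combinatorial factors, a sum over $n\ge m$ of $\frah^{\,n-m}\sum_{\bW\in I^{n-m}}J_n(\bX,\bW)\det(\widetilde{C_o})$ where the determinant has size $(n-m)/2$. The $n=m$ contribution is exactly $J_m(\psi)$, which yields the leading term $\|J_m\|_{l,r}$ (resp. $|J_0|$).

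For the remaining terms $n>m$ I would estimate as follows. Fix the first external variable $X=X_1$ (this is the variable singled out in the definition \eqref{eq_definition_norm_semi_norm} of $\|\cdot\|_{l,0}$ and $\|\cdot\|_{l,1}$); the weight $e^{\sum_j(\fw(l)d_j(X,Y_1))^{\fr}}$ and, for $r=1$, the extra factor $d_{j'}(X,Y_q)$ attach to the external legs only, so they can be carried through the internal sum and absorbed into $\|J_n\|_{l,r}$ after the internal variables are summed against the determinant. The combinatorial bookkeeping — choosing which $n-m$ of the $n$ legs are internal and how they pair up — produces a factor bounded by something like $\binom{n}{m}\cdot(n-m-1)!!$ together with the determinant bound. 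The determinant of the $(n-m)/2\times(n-m)/2$ matrix of $\widetilde{C_o}$ entries is handled by the hypothesis \eqref{eq_general_determinant_bound_condition}: a Gram-type / Hadamard-type argument (exactly as this condition is designed for) gives $|\det|\le D_{et}^{(n-m)/2}$, and then the sum over internal variables $\bW\in I^{n-m}$ with the $\frah^{\,n-m}$ prefactor reorganizes into $\|J_n\|_{l,0}$-type quantities; the upshot is that each $n>m$ term is bounded by $(\text{combinatorial constant})\cdot D_{et}^{(n-m)/2}\|J_n\|_{l,r}$. Collecting the constants and bounding the combinatorial factor crudely by $2^n$ (resp. $N/h$ times a bounded factor for the $m=0$ case, where one additionally uses that only $m=0$ receives a $\det$ of full size and the external weight is trivially $1$) gives precisely the claimed inequalities.

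The main obstacle I anticipate is the careful combinatorial/sign bookkeeping in the expansion: correctly tracking how splitting $(\psi+\psi^1)_{\bZ}$ over external/internal subsets interacts with the anti-symmetrization of $J_n$ and of $F_m$, and verifying that the external weight factors (the $e^{(\fw(l)d_j)^{\fr}}$ and the $d_{j'}(X,Y_q)$) genuinely do not touch the internal legs or the determinant, so that they factor cleanly out of the Gaussian integration. Once that separation is established, the determinant bound from \eqref{eq_general_determinant_bound_condition} and the reorganization of the internal sum into $\|J_n\|_{l,r}$ are routine, and reducing the resulting combinatorial coefficient to the clean constant $2^n$ (and to $N/h$ in the $m=0$ line) is a matter of a crude but standard estimate. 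I would also double-check the degenerate cases $n-m$ odd (which contribute $0$ by the definition of the Grassmann Gaussian integral) and $m=0$ separately, since the norm $\|\cdot\|_{l,0}$ is only defined for $m\ge 2$ and $|F_0|$ must be estimated directly against $|J_0|$ plus the $D_{et}^{n/2}\|J_n\|_{l,0}$ terms.
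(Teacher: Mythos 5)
Your plan is essentially the paper's proof: by anti-symmetry one writes the kernel as $F_m(\bX)=J_m(\bX)+\sum_{n=m+1}^{N}\binom{n}{m}\left(\frac{1}{h}\right)^{n-m}\sum_{\bY\in I^{n-m}}J_n(\bX,\bY)\int\psi_{\bY}^1d\mu_{C_o}(\psi^1)$, bounds the Gaussian integral by $D_{et}^{(n-m)/2}$ via \eqref{eq_general_determinant_bound_condition}, notes the weights only involve the external variables, bounds $\binom{n}{m}\le 2^n$, and for $m=0$ uses $\|J_n\|_{L^1}\le \frac{N}{h}\|J_n\|_{l,0}$. One caution: your phrase ``how they pair up \ldots produces a factor bounded by $\binom{n}{m}\cdot(n-m-1)!!$'' must not be taken literally — the Gaussian integral of the internal monomial is a \emph{single} determinant (all pairings, with signs, are already inside it and are absorbed into the bound $D_{et}^{(n-m)/2}$), so the only combinatorial factor is $\binom{n}{m}$; if you counted the $(n-m-1)!!$ pairings separately on top of the determinant bound, the resulting coefficient could not be reduced to $2^n$ and the claimed inequality would not follow.
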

\begin{proof}
By anti-symmetry, 
\begin{align*}
F_m(\psi)=J_m(\psi)+\left(\frac{1}{h}\right)^m\sum_{\bX\in I^m}
\Bigg(&\sum_{n=m+1}^N\left(\frac{1}{h}\right)^{n-m}\sum_{\bY\in
	       I^{n-m}}\left(\begin{array}{c}n\\m\end{array}
\right)\\
&\cdot J_n(\bX,\bY)\int \psi_{\bY}^1d\mu_{C_o}(\psi^1)\Bigg)\psi_{\bX},
\end{align*}
which implies that for any $\bX\in I^m$,
\begin{align}\label{eq_characterization_free_kernel}
F_m(\bX)=&J_m(\bX)\\
         &+\sum_{n=m+1}^N\left(\frac{1}{h}\right)^{n-m}
\sum_{\bY\in I^{n-m}}\left(\begin{array}{c}n\\m\end{array}
\right)J_n(\bX,\bY)\int \psi_{\bY}^1d\mu_{C_o}(\psi^1).\notag
\end{align}
Since $\|J_n\|_{L^1}\le 1_{n=0}|J_0| + 1_{n\ge 1}\frac{N}{h}\|J_n\|_{l,0}$, 
\begin{align*}
|F_0|\le \sum_{n=0}^ND_{et}^{\frac{n}{2}}\|J_n\|_{L^1}
\le |J_0|+\frac{N}{h} \sum_{n=1}^ND_{et}^{\frac{n}{2}}\|J_n\|_{l,0}.
\end{align*}
By using the inequality 
$$\left(\begin{array}{c}n \\ m\end{array}\right)\le 2^n,$$
we can derive the claimed upper bound on $\|F_m\|_{l,r}$ from \eqref{eq_characterization_free_kernel}.   
\end{proof}

\subsection{Estimation of the tree
  expansion}\label{subsec_general_tree_expansion}
Take $J(\psi)\in\bigwedge \cV$ satisfying $J_m(\psi)=0$ if $m\notin 2\N\cup \{0\}$.
We can see from definition that there exists a domain $O$ of $\C$
containing 0 such that
$$
\log\left(\int
e^{zJ(\psi+\psi^1)}d\mu_{C_o}(\psi^1)\right)
$$
is analytic with $z$ in $O$. It is known that for any
$n\in\N_{\ge 2}$,
$$
\left(\frac{d}{dz}\right)^n\log\left.\left(\int e^{z J(\psi+\psi^1)}d\mu_{C_o}(\psi^1)\right)\right|_{z=0}
$$
can be characterized as a sum over trees with $n$ vertices. We adopt one
version of such formulas clearly proved in \cite{SW}. The formula
\cite[\mbox{Theorem 3}]{SW} states that for any $n\in \N_{\ge 2}$ and
$J(\psi)\in\bigwedge \cV$ satisfying $J_m(\psi)=0$ if $m\notin 2\N\cup \{0\}$,  \begin{align}
&\frac{1}{n!}\left(\frac{d}{dz}\right)^n\log\left.\left(\int e^{z
 J(\psi+\psi^1)}d\mu_{C_o}(\psi^1)\right)\right|_{z=0}\label{eq_explicit_tree_formula}\\
&= \frac{1}{n!}\sum_{T\in \T_n}\prod_{\{p,q\}\in
 T}(\D_{p,q}(C_o)+\D_{q,p}(C_o))\int_{[0,1]^{n-1}}d\bs
 \sum_{\xi\in\S_n(T)}\varphi(T,\xi,\bs)\notag\\
&\quad\cdot
 e^{\sum_{r,s=1}^nM_{at}(T,\xi,\bs)(r,s)\D_{r,s}(C_o)}\prod_{j=1}^nJ(\psi^j+\psi)\Bigg|_{\psi^j=0\atop(\forall j\in\{1,2,\cdots,n\})},\notag
\end{align}
where $\T_n$ is the set of all trees over the vertices
$\{1,2,\cdots,n\}$, 
\begin{align}\label{eq_grassmann_laplacian}
\D_{r,s}(C_o)&:=-\sum_{X,Y\in I_0}C_o(X,Y)\frac{\partial}{\partial
 \opsi_X^r}\frac{\partial}{\partial \psi_Y^s}\\
&: \bigwedge \left(\bigoplus_{j=1}^n\cV_j\right)\to \bigwedge
 \left(\bigoplus_{j=1}^n\cV_j\right),\ (\forall r,s\in
 \{1,2,\cdots,n\}),\notag
\end{align}
$\S_n(T)$ is a $T$-dependent subset of $\S_n$, $\varphi(T,\xi,\cdot)$ is a
$(T,\xi)$-dependent continuous function from $[0,1]^{n-1}$ to $\R_{\ge 0}$
satisfying that 
\begin{align}\label{eq_property_irrelevant_function}
\int_{[0,1]^{n-1}}d\bs\sum_{\xi\in \S_n(T)}\varphi(T,\xi,\bs)=1,\ (\forall
 T\in \T_n),
\end{align}
$(M_{at}(T,\xi,\bs)(r,s))_{1\le r,s\le n}$ is a $(T,\xi,\bs)$-dependent
real symmetric \\
non-negative matrix satisfying that
$M_{at}(T,\xi,\bs)(r,r)=1$ $(\forall r\in \{1,\cdots,$ $n\})$. Moreover,
$\bs\mapsto M_{at}(T,\xi,\bs)(r,s)$ is continuous in $[0,1]^{n-1}$
$(\forall r,s\in\{1,$ $\cdots,n\})$. 

For a given polynomial $J(\psi)\in\bigwedge \cV$ satisfying $J_m(\psi)=0$ if
$m\notin 2\N\cup \{0\}$, let us define $T^{(n)}(\psi)\in \bigwedge \cV$ by the
right-hand side of \eqref{eq_explicit_tree_formula}. The goal of this
subsection is to establish norm estimates on the anti-symmetric kernels
of $T^{(n)}(\psi)$.

To facilitate our analysis, let us fix some notational conventions. For
$\bX\in I^m$, $\bY\in I^n$  we write $\bX\subset \bY$ if $m\le n$ and
there exist $j_1,j_2,\cdots,j_m\in\{1,2,\cdots,n\}$ such that
$j_1<j_2<\cdots<j_m$ and $\bX=(Y_{j_1},Y_{j_2},\cdots,Y_{j_m})$. In this
case we also define $\bY\backslash \bX\in I^{n-m}$ by $\bY\backslash
\bX:=(Y_{k_1},Y_{k_2},\cdots,Y_{k_{n-m}})$, where $1\le
k_1<k_2<\cdots<k_{n-m}\le n$ and $\{k_1,k_2,\cdots,k_{n-m}\}\cap
\{j_1,j_2,\cdots,$ $j_m\}=\emptyset$. The following abbreviation will be often
used. For any object $f(\bX)$ parameterized by the variable $\bX\in I^m$,
$$\sum_{\bX\subset\bY\atop \bX\in I^m}f(\bX)$$
 denotes 
$$
\sum_{1\le i_1<i_2<\cdots < i_m\le n}f((Y_{i_1},Y_{i_2},\cdots,Y_{i_m})).
$$
 
For any $T\in \T_n$, $p,q\in \{1,2,\cdots,n\}$ let
$n_p(T)(\in\{1,2,\cdots,n-1\})$ be the incidence number of the vertex
$p$ and $\dis_T(p,q)(\in \{0,1,\cdots,n\})$ denote the distance between
the vertex $p$ and the vertex $q$ along the unique path connecting $p$
to $q$ in $T$. Moreover, set 
$$d_T(p):=\max_{r\in
\{1,2,\cdots,n\}}\dis_T(p,r).$$
Define $L_q^p(T)(\subset T)$ by
$$
L_q^p(T):=\left\{\{q,r\}\in T\ |\ \dis_T(p,r)=\dis_T(p,q)+1\right\}.
$$
Note that 
$$\sharp L_p^p(T)=n_p(T),\ \sharp L_q^p(T)=n_q(T)-1,\ (\forall q\in \{1,2,\cdots,n\}\backslash\{p\}).$$
For $q\in \{1,2,\cdots,n\}$ with $\sharp L_q^p(T)\neq 0$ let $\zeta_q$ denote the
bijective map from $L_q^p(T)$ to $\{1,2,\cdots,\sharp L_q^p(T)\}$
satisfying that 
\begin{align*}
\zeta_q(\{q,r\})<\zeta_q(\{q,s\}),\ (\forall \{q,r\},\{q,s\}\in
 L^p_q(T)\text{ with }r<s).
\end{align*}

For non-commutative mathematical objects $f_r,f_{r+1},\cdots, f_s$
$(r,s\in \Z,$ $r<s)$ we set  
$$\prod_{n=r\atop order}^sf_n:=f_rf_{r+1}\cdots f_s.$$
This notation will help us to shorten formulas on various occasions. Also for
conciseness, let us set
\begin{align*}
ope(T,C_o)&:=\int_{[0,1]^{n-1}}d\bs\sum_{\xi\in
 \S_n(T)}\varphi(T,\xi,\bs)e^{\sum_{r,s=1}^nM_{at}(T,\xi,\bs)(r,s)\D_{r,s}(C_o)},\\
Ope(T,C_o)&:=ope(T,C_o)\prod_{\{p,q\}\in
 T}(\D_{p,q}(C_o)+\D_{q,p}(C_o)),\ (\forall T\in\T_n).
\end{align*}

We construct necessary estimates step by step. By using the assumption
\eqref{eq_general_determinant_bound_condition}, the properties of
$M_{at}(T,\xi,\bs)$ and by repeating the same argument as in
\cite[\mbox{Lemma 4.5}]{K1} one can prove the next lemma. The
reason why the covariance $C_o$ needs to be multiplied by
$\<\bp_i,\bq_j\>_{\C^r}$ in \eqref{eq_general_determinant_bound_condition} is
that $M_{at}(T,\xi,\bs)(r,s)$ can be rewritten as $\<\bp_r,\bq_s\>_{\C^n}$
with some $\bp_r,\bq_s\in\R^n$ satisfying
$\|\bp_r\|_{\C^n}=\|\bq_s\|_{\C^n}=1$ during the proof of the next lemma.

\begin{lemma}\label{lem_application_determinant_bound_tree}
For any $T\in \T_n$, $\xi\in\S_n(T)$, $\bs\in[0,1)^{n-1}$, $\bX_j\in
 I^{m_j}$ $(j=1,2,\cdots,n)$,
$$
\left|e^{\sum_{r,s=1}^nM_{at}(T,\xi,\bs)(r,s)\D_{r,s}(C_o)}\prod_{j=1\atop
 order }^n\psi_{\bX_j}^j\Bigg|_{\psi^j=0\atop(\forall
 j\in\{1,2,\cdots,n\})}\right|\le D_{et}^{\frac{1}{2}\sum_{j=1}^nm_j}.
$$
\end{lemma}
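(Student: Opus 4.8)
The plan is to exploit the structure of the Grassmann Gaussian ``pairing'' operator $e^{\sum_{r,s}M_{at}(T,\xi,\bs)(r,s)\D_{r,s}(C_o)}$ acting on the product $\prod_{j}\psi_{\bX_j}^j$ of monomials with distinct copy-indices. When all the $\psi^j$ are finally set to $0$, the only surviving contributions come from terms that pair up \emph{every} Grassmann variable among $\bX_1,\dots,\bX_n$; in particular the total number $\sum_{j=1}^n m_j$ of variables must be even, say $=2k$, and the result is a signed sum of products $\prod_{i=1}^k M_{at}(T,\xi,\bs)(r_i,s_i)C_o(X'_i,Y'_i)$ over the complete pairings compatible with the copy-index structure. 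This is precisely the combinatorial content of the Grassmann Gaussian integral / Wick rule, and the key identity I would invoke is that this signed sum equals $\det\big(M_{at}(T,\xi,\bs)(r(a),r(b))\,C_o(Z_a,Z_b)\big)$ over an appropriate $k\times k$ index set, where $Z_a$ runs over the concatenated list $(\bX_1,\dots,\bX_n)$ with each $Z_a$ carrying the copy-index of the block it belongs to, and $r(a)$ is that copy-index. This reduction of the pairing expression to a single determinant is the standard device (as in \cite[\mbox{Lemma 4.5}]{K1}); I would cite it rather than reprove it.

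Granting that reduction, the estimate becomes: bound $|\det(M_{at}(T,\xi,\bs)(r(a),r(b))C_o(Z_a,Z_b))_{1\le a,b\le k}|$ by $D_{et}^{k}=D_{et}^{\frac12\sum_j m_j}$. Here is where hypothesis \eqref{eq_general_determinant_bound_condition} enters, and the crucial point is exactly the one flagged in the text: the real symmetric non-negative matrix $(M_{at}(T,\xi,\bs)(r,s))_{1\le r,s\le n}$, with unit diagonal, is a Gram matrix, so one can write $M_{at}(T,\xi,\bs)(r,s)=\<\bp_r,\bq_s\>_{\C^n}$ (in fact with $\bp_r=\bq_r$) for suitable vectors $\bp_r,\bq_s\in\R^n$ of norm $\le 1$ (unit norm, by the diagonal condition). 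Assigning to each index $a$ the vector $\bp_{r(a)}$ and $\bq_{r(a)}$, the matrix whose determinant we must bound is exactly of the form $(\<\bp_a,\bq_b\>_{\C^n}C_o(Z_a,Z_b))_{1\le a,b\le k}$ with $\|\bp_a\|_{\C^n},\|\bq_b\|_{\C^n}\le 1$, and $Z_a,Z_b\in I_0$ after discarding the spin-sign components $\theta$ (which only affect the monomial sign already accounted for). Hypothesis \eqref{eq_general_determinant_bound_condition} with $r=n$, $n\rightsquigarrow k$ then gives the bound $D_{et}^k$ directly.

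The main obstacle is bookkeeping rather than substance: one must carefully track (i) that only complete pairings survive when $\psi^j=0$, so that the number of surviving factors is $k=\frac12\sum_j m_j$ and an odd total simply yields $0\le D_{et}^{\cdots}$ trivially; (ii) that the overall sign produced by reordering Grassmann variables and by the derivatives $\partial/\partial\opsi_X^r$, $\partial/\partial\psi_Y^s$ in $\D_{r,s}(C_o)$ assembles correctly into the determinant expansion, i.e. the signed pairing sum genuinely is a determinant and not merely bounded by a permanent; and (iii) that the ``diagonal'' pairings $r(a)=r(b)$ within a single block $\bX_j$ are allowed but harmless, since $M_{at}(T,\xi,\bs)(r,r)=1$ so those entries are just $C_o(X,Y)$, still covered by \eqref{eq_general_determinant_bound_condition} with the trivial choice $\bp=\bq=\be_1$. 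Once these points are organized exactly as in the proof of \cite[\mbox{Lemma 4.5}]{K1}, the estimate follows; I would therefore write the proof as ``repeat the argument of \cite[\mbox{Lemma 4.5}]{K1}, using the Gram representation of $M_{at}(T,\xi,\bs)$ and hypothesis \eqref{eq_general_determinant_bound_condition} in place of the corresponding determinant bound there,'' spelling out only the Gram-matrix step since that is the one feature specific to the present, more general, form of the hypothesis.
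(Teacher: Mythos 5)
Your proposal is correct and follows essentially the same route as the paper, which itself gives no detailed argument but simply invokes the expansion of \cite[\mbox{Lemma 4.5}]{K1} and singles out precisely the point you emphasize: that $M_{at}(T,\xi,\bs)$, being real symmetric non-negative with unit diagonal, is a Gram matrix, so $M_{at}(T,\xi,\bs)(r,s)=\<\bp_r,\bq_s\>_{\C^n}$ with unit vectors, which lets \eqref{eq_general_determinant_bound_condition} bound the resulting $\frac{1}{2}\sum_{j=1}^nm_j\times\frac{1}{2}\sum_{j=1}^nm_j$ determinant by $D_{et}^{\frac{1}{2}\sum_{j=1}^nm_j}$. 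Your bookkeeping remarks (complete pairings only, rows indexed by the $\opsi$-type and columns by the $\psi$-type variables, odd totals giving zero) are exactly the content delegated to the cited lemma.
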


Lemma \ref{lem_application_determinant_bound_tree} will be used in the proof of the following lemma.
\begin{lemma}\label{lem_tree_line_expansion}
Take any $T\in \T_n$, $m_j\in\N$ and $\bX_j\in I^{m_j}$
 $(j=1,2,\cdots,n)$. The following inequality holds. 
\begin{align*}
&\left|Ope(T,C_o)\prod_{j=1\atop
 order}^n\psi_{\bX_j}^j\Bigg|_{\psi^j=0\atop (\forall j\in
 \{1,2,\cdots,n\})}\right|\\
&\le 1_{n_j(T)\le m_j(\forall j\in
 \{1,2,\cdots,n\})}2^{n-1}D_{et}^{\frac{1}{2}\sum_{j=1}^nm_j-n+1}\\
&\quad\cdot \sum_{\bW_1\subset \bX_1\atop \bW_1\in I^{n_1(T)}}
\sum_{\s_1\in
 \S_{n_1(T)}}\prod_{\{1,s\}\in L_1^1(T)}\Bigg(\sum_{Z_s\subset
 \bX_s\atop Z_s\in I}|\widetilde{C_o}(W_{1,\s_1\circ
 \zeta_1(\{1,s\})},Z_s)|\Bigg)\\
&\quad \cdot \prod_{u=1\atop order
 }^{d_T(1)-1}\Bigg(\prod_{j\in\{2,3,\cdots,n\}\text{ with }\atop
 \dis_T(1,j)=u,n_j(T)\neq 1}\Bigg(\sum_{\bW_j\subset \bX_j\backslash
 Z_j\atop \bW_j\in
 I^{n_j(T)-1}}\sum_{\s_j\in\S_{n_j(T)-1}}\\
&\qquad\cdot \prod_{\{j,s\}\in L_j^1(T)}\Bigg(\sum_{Z_s\subset
 \bX_s\atop Z_s\in I}|\widetilde{C_o}(W_{j,\s_j\circ\zeta_j(\{j,s\})},Z_s)|
\Bigg)
\Bigg)\Bigg).
\end{align*}
\end{lemma}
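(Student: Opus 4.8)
The plan is to unfold the definition of $Ope(T,C_o)$, contract the $n-1$ tree-line operators $\D_{p,q}(C_o)+\D_{q,p}(C_o)$ against the Grassmann product $\prod_{j=1\atop order}^n\psi_{\bX_j}^j$, and bound the remaining Gaussian factor $e^{\sum_{r,s=1}^nM_{at}(T,\xi,\bs)(r,s)\D_{r,s}(C_o)}$ by Lemma \ref{lem_application_determinant_bound_tree}, leaving the convex combination $\int_{[0,1]^{n-1}}d\bs\sum_{\xi\in\S_n(T)}\varphi(T,\xi,\bs)$ untouched until the very end. As a preliminary I would record the identity
\[
\D_{p,q}(C_o)+\D_{q,p}(C_o)=-2\sum_{A,B\in I}\widetilde{C_o}(A,B)\frac{\partial}{\partial\psi_A^p}\frac{\partial}{\partial\psi_B^q},
\]
obtained by rewriting $C_o$ through its anti-symmetric extension \eqref{eq_anti_symmetric_extension_covariance} and using that left derivatives attached to different copies of $\cV$ anti-commute. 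This furnishes the global factor $2^{n-1}$, collapses the two orientations per edge into a single even operator, and makes all operators occurring in $Ope(T,C_o)$ mutually commuting, so the tree lines may be contracted in whatever order is convenient.

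Next I would root $T$ at vertex $1$, giving $\dis_T(1,\cdot)$, $d_T(1)$, the child-edge sets $L_q^1(T)$ and their orderings $\zeta_q$ the meanings used in the statement, and then contract the tree lines from the root outward. The $n_1(T)=\sharp L_1^1(T)$ child edges at the root each remove one variable of block $1$: expanding the corresponding derivatives by the Leibniz rule produces a sum over the $n_1(T)$-element subsequence $\bW_1\subset\bX_1$ that is consumed and over a permutation $\s_1\in\S_{n_1(T)}$ recording which consumed variable is attached, through $\zeta_1$, to which child edge, and each attachment to an edge $\{1,s\}$ contracts $W_{1,\s_1\circ\zeta_1(\{1,s\})}$ against the single variable $Z_s\subset\bX_s$ that the far end of the edge uses, yielding the factor $\widetilde{C_o}(W_{1,\s_1\circ\zeta_1(\{1,s\})},Z_s)$. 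Iterating at each vertex $j$ with $\dis_T(1,j)=u\in\{1,\dots,d_T(1)-1\}$ and $n_j(T)\neq 1$ — where one variable $Z_j$ of block $j$ has already been consumed by the parent edge, so the remaining variables are $\bX_j\backslash Z_j$ — reproduces exactly the nested sums over $\bW_j\subset\bX_j\backslash Z_j$ with $\bW_j\in I^{n_j(T)-1}$, over $\s_j\in\S_{n_j(T)-1}$, and the products of $\widetilde{C_o}$-factors displayed in the statement. The indicator $1_{n_j(T)\le m_j\ (\forall j)}$ simply records that block $j$ must contain enough variables to feed all $n_j(T)$ incident edges, the whole expression being zero otherwise, and every sign produced when commuting derivatives past Grassmann variables is bounded by $1$, which is why $|\widetilde{C_o}(\cdot,\cdot)|$ appears.

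After all $n-1$ tree lines have been contracted, block $j$ retains exactly $m_j-n_j(T)$ of its variables, so the object remaining inside $ope(T,C_o)$ is $e^{\sum_{r,s=1}^nM_{at}(T,\xi,\bs)(r,s)\D_{r,s}(C_o)}$ applied to a product $\prod_{j=1\atop order}^n\psi_{\bX_j'}^j$ with $\bX_j'\in I^{m_j-n_j(T)}$, whose value at $\psi^j=0$ is bounded by $D_{et}^{\frac{1}{2}\sum_{j=1}^n(m_j-n_j(T))}$ by Lemma \ref{lem_application_determinant_bound_tree}, uniformly in $\xi\in\S_n(T)$ and $\bs\in[0,1)^{n-1}$; since $\sum_{j=1}^nn_j(T)=2(n-1)$ this equals $D_{et}^{\frac{1}{2}\sum_{j=1}^nm_j-n+1}$. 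Finally I would move the absolute value through $\int_{[0,1]^{n-1}}d\bs\sum_{\xi\in\S_n(T)}\varphi(T,\xi,\bs)$, which contributes $1$ by \eqref{eq_property_irrelevant_function}, and collect the constants to obtain the asserted inequality.

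The genuinely delicate part is the middle step: verifying that the Leibniz expansion of the product of the $n-1$ tree-line operators acting on $\prod_{j=1\atop order}^n\psi_{\bX_j}^j$ reorganizes precisely into the stated nested sums over subsequences, permutations and recursively determined variables $Z_s$, with the right multiplicities and with all signs controllable. I would carry this out exactly as in \cite[\mbox{Lemma 4.5}]{K1}, to which the present lemma is the straightforward counterpart for a general determinant bound of the form \eqref{eq_general_determinant_bound_condition}.
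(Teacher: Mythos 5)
Your proposal is correct and follows essentially the same route as the paper's proof: rewrite each edge operator $\D_{p,q}(C_o)+\D_{q,p}(C_o)$ via the anti-symmetric extension $\widetilde{C_o}$ (yielding the $2^{n-1}$), root the tree at vertex $1$ and contract the lines outward by distance, track the consumed variables through the sums over $\bW_j$, $\s_j$, $Z_s$, and bound the surviving Gaussian factor by Lemma \ref{lem_application_determinant_bound_tree} together with $\sum_{j=1}^n n_j(T)=2(n-1)$ and \eqref{eq_property_irrelevant_function}. The "delicate middle step" you defer to the analogue of \cite[\mbox{Lemma 4.5}]{K1} is exactly the recursive chain of inequalities the paper writes out explicitly, so nothing is missing in substance.
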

\begin{proof}
For $\{p,q\}\in T$ set $\D_{\{p,q\}}:=\D_{p,q}(C_o)+\D_{q,p}(C_o)$. Note
 that 
\begin{equation}\label{eq_laplacian_anti_symmetrization}
\D_{\{p,q\}}=-2\sum_{\bX\in I^2}\widetilde{C_o}(\bX)\frac{\partial
 }{\partial \psi_{X_1}^p}\frac{\partial
 }{\partial \psi_{X_2}^q}.   
\end{equation}
The operator $\prod_{l\in T}\D_l$ erases $n_j(T)$ elements from the
 Grassmann monomial $\psi_{\bX_j}^j$ for every $j\in
 \{1,2,\cdots,n\}$. Hence, we need the constraint $1_{n_j(T)\le m_j
 (\forall j\in \{1,2,\cdots,n\})}$. The operator
 $\prod_{l\in T}\D_l$ can be decomposed as follows.
\begin{align*}
\prod_{l\in T}\D_l=\prod_{l\in L_1^1(T)}\D_l\prod_{u=1}^{d_T(1)-1}\Bigg(
\prod_{j\in\{2,3,\cdots,n\}\text{ with }\atop
 \dis_T(1,j)=u,n_j(T)\neq 1}\prod_{l\in L_j^1(T)}\D_l\Bigg).
\end{align*}
We apply $\D_l$ ($l\in L_1^1(T)$) to the input Grassmann monomial
 first. Then, from $u=1$ up to $u=d_T(1)-1$ we let the operator 
$$
\prod_{j\in\{2,3,\cdots,n\}\text{ with }\atop
 \dis_T(1,j)=u,n_j(T)\neq 1}\prod_{l\in L_j^1(T)}\D_l
$$
act on the remaining polynomial by turns. This procedure yields that
\begin{align*}
&\left|Ope(T,C_o)\prod_{j=1\atop
 order}^n\psi_{\bX_j}^j\Bigg|_{\psi^j=0\atop (\forall j\in
 \{1,2,\cdots,n\})}\right|\\
&\le 1_{n_j(T)\le m_j(\forall j\in
 \{1,2,\cdots,n\})}\\
&\quad\cdot \sum_{\bW_1\subset \bX_1\atop
\bW_1\in I^{n_1(T)}}\sum_{\s_1\in
 \S_{n_1(T)}}\prod_{\{1,s\}\in L_1^1(T)}\Bigg(2\sum_{Z_s\subset
 \bX_s\atop Z_s\in I}|\widetilde{C_o}(W_{1,\s_1\circ
 \zeta_1(\{1,s\})},Z_s)|\Bigg)\\
&\quad \cdot \Bigg|ope(T,C_o)\prod_{u=1 }^{d_T(1)-1}
\Bigg(\prod_{j\in\{2,3,\cdots,n\}\text{ with }\atop
 \dis_T(1,j)=u,n_j(T)\neq 1}\prod_{l\in L_j^1(T)}\D_l\Bigg)\\
&\qquad\cdot\psi_{\bX_1\backslash \bW_1}^1\prod_{j=2\atop order }^n
(1_{\dis_T(1,j)\le 1}\psi_{\bX_j\backslash Z_j}^j+1_{\dis_T(1,j)>
 1}\psi_{\bX_j}^j)\Bigg|_{\psi^j=0\atop (\forall j\in
 \{1,2,\cdots,n\})}\Bigg|\\
&\le 1_{n_j(T)\le m_j(\forall j\in
 \{1,2,\cdots,n\})}\\
&\quad\cdot \sum_{\bW_1\subset \bX_1\atop
\bW_1\in I^{n_1(T)}}\sum_{\s_1\in
 \S_{n_1(T)}}\prod_{\{1,s\}\in L_1^1(T)}\Bigg(2\sum_{Z_s\subset
 \bX_s\atop Z_s\in I}|\widetilde{C_o}(W_{1,\s_1\circ
 \zeta_1(\{1,s\})},Z_s)|\Bigg)\\
&\quad\cdot \prod_{j\in\{2,3,\cdots,n\}\text{ with }\atop
 \dis_T(1,j)=1,n_j(T)\neq 1}
\Bigg(\sum_{\bW_j\subset \bX_j\backslash
 Z_j\atop \bW_j\in
 I^{n_j(T)-1}}\sum_{\s_j\in\S_{n_j(T)-1}}\\
&\qquad\cdot \prod_{\{j,s\}\in L_j^1(T)}\Bigg(2\sum_{Z_s\subset \bX_s\atop
  Z_s\in I}|\widetilde{C_o}(W_{j,\s_j\circ\zeta_j(\{j,s\})},Z_s)|
\Bigg)\Bigg)\\
&\quad \cdot \Bigg|ope(T,C_o)\prod_{u=2 }^{d_T(1)-1}
\Bigg(\prod_{j\in\{2,3,\cdots,n\}\text{ with }\atop
 \dis_T(1,j)=u,n_j(T)\neq 1}\prod_{l\in L_j^1(T)}\D_l\Bigg)\\
&\qquad\cdot\psi_{\bX_1\backslash \bW_1}^1\prod_{j=2\atop order }^n
(1_{\dis_T(1,j)\le 1\text{ and }n_j(T)\neq 1}\psi_{(\bX_j\backslash
 Z_j)\backslash \bW_j}^j\\
&\qquad\qquad\qquad\qquad+1_{\dis_T(1,j)\le 1\text{ and }n_j(T)= 1}
\psi_{\bX_j\backslash
 Z_j}^j+1_{\dis_T(1,j)=2}\psi_{\bX_j\backslash
 Z_j}^j\\
&\qquad\qquad\qquad\qquad+1_{\dis_T(1,j)>2}\psi_{\bX_j}^j)\Bigg|_{\psi^j=0\atop (\forall j\in
 \{1,2,\cdots,n\})}\Bigg|\\
&\le 1_{n_j(T)\le m_j(\forall j\in
 \{1,2,\cdots,n\})}\\
&\quad\cdot \sum_{\bW_1\subset \bX_1\atop
\bW_1\in I^{n_1(T)}}\sum_{\s_1\in
 \S_{n_1(T)}}\prod_{\{1,s\}\in L_1^1(T)}\Bigg(2\sum_{Z_s\subset
 \bX_s\atop Z_s\in I}|\widetilde{C_o}(W_{1,\s_1\circ
 \zeta_1(\{1,s\})},Z_s)|\Bigg)\\
&\quad\cdot \prod_{u=1\atop order }^v\Bigg(\prod_{j\in\{2,3,\cdots,n\}\text{ with }\atop
 \dis_T(1,j)=u,n_j(T)\neq 1}
\Bigg(\sum_{\bW_j\subset \bX_j\backslash
 Z_j\atop \bW_j\in
 I^{n_j(T)-1}}\sum_{\s_j\in\S_{n_j(T)-1}}\\
&\qquad\cdot \prod_{\{j,s\}\in L_j^1(T)}\Bigg(2\sum_{Z_s\subset \bX_s\atop
 Z_s\in I}|\widetilde{C_o}(W_{j,\s_j\circ\zeta_j(\{j,s\})},Z_s)|
\Bigg)\Bigg)\Bigg)\\
&\quad \cdot \Bigg|ope(T,C_o)\prod_{u=v+1 }^{d_T(1)-1}
\Bigg(\prod_{j\in\{2,3,\cdots,n\}\text{ with }\atop
 \dis_T(1,j)=u,n_j(T)\neq 1}\prod_{l\in L_j^1(T)}\D_l\Bigg)\\
&\qquad\cdot\psi_{\bX_1\backslash \bW_1}^1\prod_{j=2\atop order }^n
(1_{\dis_T(1,j)\le v\text{ and }n_j(T)\neq 1}\psi_{(\bX_j\backslash
 Z_j)\backslash \bW_j}^j\\
&\qquad\qquad\qquad\qquad+1_{\dis_T(1,j)\le v\text{ and }n_j(T)= 1}
\psi_{\bX_j\backslash
 Z_j}^j+1_{\dis_T(1,j)=v+1}\psi_{\bX_j\backslash
 Z_j}^j\\
&\qquad\qquad\qquad\qquad+1_{\dis_T(1,j)>v+1}\psi_{\bX_j}^j)\Bigg|_{\psi^j=0\atop (\forall j\in
 \{1,2,\cdots,n\})}\Bigg|\\
&\le 1_{n_j(T)\le m_j (\forall j\in
 \{1,2,\cdots,n\})}\\
&\quad\cdot \sum_{\bW_1\subset \bX_1\atop \bW_1\in
 I^{n_1(T)}}\sum_{\s_1\in
 \S_{n_1(T)}}\prod_{\{1,s\}\in L_1^1(T)}\Bigg(2\sum_{Z_s\subset \bX_s
\atop Z_s\in I}|\widetilde{C_o}(W_{1,\s_1\circ
 \zeta_1(\{1,s\})},Z_s)|\Bigg)\\
&\quad \cdot \prod_{u=1\atop order
 }^{d_T(1)-1}\Bigg(\prod_{j\in\{2,3,\cdots,n\}\text{ with }\atop
 \dis_T(1,j)=u,n_j(T)\neq 1}\Bigg(\sum_{\bW_j\subset \bX_j\backslash
 Z_j\atop \bW_j\in
 I^{n_j(T)-1}}\sum_{\s_j\in\S_{n_j(T)-1}}\\
&\qquad\cdot \prod_{\{j,s\}\in L_j^1(T)}\Bigg(2\sum_{Z_s\subset
 \bX_s\atop Z_s\in I}|\widetilde{C_o}(W_{j,\s_j\circ\zeta_j(\{j,s\})},Z_s)|
\Bigg)
\Bigg)\Bigg)\\
&\quad\cdot \Bigg|ope(T,C_o)\psi_{\bX_1\backslash\bW_1}^1\\
&\qquad\cdot \prod_{j=2\atop
 order}^n(1_{n_j(T)\neq 1}\psi^j_{(\bX_j\backslash
 Z_j)\backslash\bW_j}+1_{n_j(T)= 1}\psi^j_{\bX_j\backslash Z_j})
\Bigg|_{\psi^j=0\atop(\forall j\in \{1,2,\cdots,n\})}\Bigg|.
\end{align*}
Collecting the factor $2$ gives $2^{n-1}$, since the tree $T$
 has $n-1$ lines.  Then, by using \eqref{eq_property_irrelevant_function}, Lemma
 \ref{lem_application_determinant_bound_tree} and the fact that
 $\sum_{j=1}^nn_j(T)=2(n-1)$ we obtain the claimed inequality.
\end{proof}

\begin{lemma}\label{lem_tree_line_kernels}
Take any $T\in \T_n$ and $m_j\in \N_{\ge 2}$ $(j=1,2,\cdots,n)$.
Let $J_{m_j}:I^{m_j}\to \C$ $(j=1,2,\cdots,n)$ be anti-symmetric
 functions. Then, the following inequalities hold.
\begin{enumerate}
\item\label{item_tree_kernels_no_fixed} For any $X_{1,1}\in I$,
\begin{align}\label{eq_tree_kernels_no_fixed}
&\Bigg|Ope(T,C_o)\left(\frac{1}{h}\right)^{m_1-1}\sum_{(X_{1,2},X_{1,3},\cdots,X_{1,m_1})\in
 I^{m_1-1}}J_{m_1}(\bX_1)\\
&\quad\cdot\prod_{j=2}^n
\left(\left(\frac{1}{h}\right)^{m_j}\sum_{\bX_j\in
 I^{m_j}}J_{m_j}(\bX_j)\right)\prod_{k=1\atop
 order}^n\psi_{\bX_k}^k\Bigg|_{\psi^j=0\atop (\forall j\in
 \{1,2,\cdots,n\})}\Bigg|\notag\\
&\le 1_{n_j(T)\le m_j(\forall j\in
 \{1,2,\cdots,n\})}2^{n-1}\prod_{j=1}^n\left(m_j\left(\begin{array}{c}m_j-1\\ n_j(T)-1\end{array}\right)(n_j(T)-1)!\right)\notag\\
&\quad\cdot D_{et}^{\frac{1}{2}\sum_{j=1}^nm_j-n+1}\|\widetilde{C_o}\|_{l,0}^{n-1}\prod_{k=1}^n\|J_{m_k}\|_{l,0}.\notag
\end{align}
\item\label{item_tree_kernels_fixed}
In addition, assume that $k_j\in\{0,1,\cdots,m_j-1\}$ $(\forall j\in
     \{1,2,\cdots,$ $n\})$,
$p,q\in\{1,2,\cdots,n\}$, 
     $k_1,k_p,k_q\ge 1$, $r\in\{1,2,\cdots,k_q\}$,
     $j'\in\{0,1,\cdots,$ $d\}$ and $a\in\{0,1\}$. Then, for any
     $Y_{1,1}\in I$,    
\begin{align}\label{eq_tree_line_kernels}
&\Bigg|Ope(T,C_o)\left(\frac{1}{h}\right)^{m_1-1}\sum_{\bX_1\in
 I^{m_1-k_1}}\sum_{(Y_{1,2},Y_{1,3},\cdots,Y_{1,k_1})\in I^{k_1-1}}J_{m_1}(\bX_1,\bY_1)\\
&\quad\cdot\prod_{j=2}^n\Bigg(\left(\frac{1}{h}\right)^{m_j}\sum_{\bX_j\in
 I^{m_j-k_j}}\sum_{\bY_j\in I^{k_j}}J_{m_j}(\bX_j,\bY_j)\Bigg)\notag\\
&\quad\cdot d_{j'}(Y_{1,1},Y_{q,r})^ae^{\sum_{j=0}^d(\fw(l)d_j(Y_{1,1},Y_{p,1}))^{\fr}}\prod_{k=1\atop order}^n\psi_{\bX_k}^k\Bigg|_{\psi^j=0\atop (\forall j\in
 \{1,2,\cdots,n\})}\Bigg|\notag\\
&\le 1_{n_j(T)\le m_j-k_j (\forall j\in
 \{1,2,\cdots,n\})}2^{n-1}\notag\\
&\quad\cdot\prod_{i=1}^n\left((m_i-k_i)\left(\begin{array}{c}m_i-k_i-1\\
					     n_i(T)-1\end{array}\right)(n_i(T)-1)!\right)D_{et}^{\frac{1}{2}\sum_{j=1}^n(m_j-k_j)-n+1}\notag\\
&\quad\cdot\prod_{j=1}^n\Bigg(\sum_{q_j=0}^1\|J_{m_j}\|_{l,q_j}\Bigg)
\prod_{k=2}^n\Bigg(\sum_{r_k=0}^1\|\widetilde{C_o}\|_{l,r_k}\Bigg)
1_{\sum_{j=1}^nq_j+\sum_{k=2}^nr_k=a}.\notag
\end{align}
\end{enumerate}
\end{lemma}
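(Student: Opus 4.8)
The plan is to obtain both inequalities from Lemma~\ref{lem_tree_line_expansion} by carrying out the summations over the Grassmann indices in the order dictated by the tree~$T$, working from the leaves of~$T$ toward the root vertex~$1$ and bounding each partial summation by one factor of the norm $\|\cdot\|_{l,0}$ or the semi-norm $\|\cdot\|_{l,1}$.

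For \eqref{eq_tree_kernels_no_fixed} I would first insert into Lemma~\ref{lem_tree_line_expansion} the concrete index tuples $\bX_j$ that are contracted against $J_{m_j}$, so that the left-hand side is dominated by $1_{n_j(T)\le m_j\,(\forall j)}\,2^{n-1}D_{et}^{\frac12\sum_jm_j-n+1}$ times a sum --- over $\bX_1,\dots,\bX_n$ with $X_{1,1}$ held fixed, over the sub-tuples $\bW_j\subset\bX_j$ and the permutations $\s_j$, and over the single components $Z_s$ --- of the product of $|J_{m_j}(\bX_j)|$ with line factors $|\widetilde{C_o}(W_{j,\cdot},Z_s)|$, one per edge $\{j,s\}\in T$, so that the contraction graph is exactly~$T$. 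I would then process the vertices in order of decreasing $\dis_T(1,\cdot)$. For a non-root vertex~$s$, the component $Z_s$ of $\bX_s$ is the unique one linked to the parent of~$s$; after re-ordering the arguments of $J_{m_s}$ by anti-symmetry so that $Z_s$ occupies the first slot, the summation $(1/h)^{m_s-1}\sum_{\bX_s\setminus\{Z_s\}}|J_{m_s}(\bX_s)|$ is bounded by $\|J_{m_s}\|_{l,0}$, the $m_s$ choices of which component is $Z_s$ and the $\binom{m_s-1}{n_s(T)-1}(n_s(T)-1)!$ choices and orderings of $\bW_s$ among the remaining components producing the prefactor $m_s\binom{m_s-1}{n_s(T)-1}(n_s(T)-1)!$ attached to~$s$; since children are treated first, the child-side line factors no longer depend on $\bW_s$ at this stage. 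The freed component $Z_s$ then enters only $|\widetilde{C_o}(W_{\mathrm{parent}},Z_s)|$, and $(1/h)\sum_{Z_s}|\widetilde{C_o}(W_{\mathrm{parent}},Z_s)|\le\|\widetilde{C_o}\|_{l,0}$, after which $W_{\mathrm{parent}}$ is regarded as one of the slots of the parent kernel. Iterating up to the root, where the fixed $X_{1,1}$ supplies the supremum in the definition \eqref{eq_definition_norm_semi_norm} of $\|J_{m_1}\|_{l,0}$, and using that $T$ has $n-1$ edges and $\sum_jn_j(T)=2(n-1)$, one collects $\|\widetilde{C_o}\|_{l,0}^{n-1}\prod_k\|J_{m_k}\|_{l,0}$ and the asserted bound; this is the scheme of \cite[Lemma~4.5]{K1} and of the expansion estimates of \cite{FKT2}.

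For \eqref{eq_tree_line_kernels} I would run the same procedure, now with each $J_{m_j}(\bX_j,\bY_j)$ carrying $k_j$ additional indices $\bY_j$ that are summed freely (with $(1/h)^{k_j}$) and take no part in the tree contractions; only the $m_j-k_j$ indices $\bX_j$ are contracted by $Ope$, which explains the replacement of $m_j$ by $m_j-k_j$ in the combinatorial prefactors and in the exponent of $D_{et}$. The genuinely new point is the weight $d_{j'}(Y_{1,1},Y_{q,r})^a\,e^{\sum_{j=0}^d(\fw(l)d_j(Y_{1,1},Y_{p,1}))^{\fr}}$, which I would bound before summing. Along the unique path $1=v_0,v_1,\dots,v_k=p$ in~$T$ one has the chain of indices $Y_{1,1}$, the $\bX_1$-slot of kernel~$1$ contracted toward $v_1$, its partner $\bX_{v_1}$-slot of kernel~$v_1$, the $\bX_{v_1}$-slot of kernel~$v_1$ contracted toward $v_2$, and so on, ending at $Y_{p,1}$; the chordal triangle inequality for each $d_j$ together with $(s+t)^{\fr}\le s^{\fr}+t^{\fr}$ for $\fr\in(0,1]$ splits $e^{\sum_j(\fw(l)d_j(Y_{1,1},Y_{p,1}))^{\fr}}$ into one exponential factor per chain link, each link being either a pair of slots of one kernel on the path or the two ends of one line on the path. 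After the appropriate anti-symmetric re-ordering that places the relevant slot in the first summed position, each such factor is absorbed into the exponential weight already built into the corresponding $\|J_{m_{v_i}}\|_{l,0}$ or $\|\widetilde{C_o}\|_{l,0}$. For $a=1$, the chordal triangle inequality along the path $1\to q$ bounds $d_{j'}(Y_{1,1},Y_{q,r})$ by the sum of the $d_{j'}$-lengths of its links, so the expression is in turn dominated by a sum over which link carries that single linear weight; assigning it to a link upgrades exactly one factor from $\|\cdot\|_{l,0}$ to $\|\cdot\|_{l,1}$, and since the links of the two paths form a subset of all $2n-1$ factors $\{J_{m_1},\dots,J_{m_n}\}$ and $\{\widetilde{C_o}\text{ (one per edge)}\}$, the result is dominated by the full sum $\prod_{j=1}^n\bigl(\sum_{q_j=0}^1\|J_{m_j}\|_{l,q_j}\bigr)\prod_{k=2}^n\bigl(\sum_{r_k=0}^1\|\widetilde{C_o}\|_{l,r_k}\bigr)\,1_{\sum_jq_j+\sum_kr_k=a}$ of \eqref{eq_tree_line_kernels}.

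The step I expect to be the main obstacle is the bookkeeping in \eqref{eq_tree_line_kernels}: one must perform the anti-symmetric re-orderings of every kernel (the choices of $\s_j$ and of which slot serves as the supremum variable, which as the first summed variable carrying the exponential weight, and which --- when the vertex lies on the $1\to q$ path --- carries the $d_{j'}$ factor) consistently with the combinatorial counting, while at the same time distributing the two weights along tree paths that may overlap; in particular one has to verify that a vertex lying on both the $1\to p$ and the $1\to q$ paths can simultaneously carry an exponential weight on one pair of its slots and a linear weight on another, which is exactly what the definition \eqref{eq_definition_norm_semi_norm} of $\|\cdot\|_{l,1}$ allows, and that restricting the exponential to the first summed slot, as that definition forces, is compatible with the telescoping once the slots have been permuted. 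Beyond this, the argument uses only the determinant estimate of Lemma~\ref{lem_application_determinant_bound_tree} and careful accounting.
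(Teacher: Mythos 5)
Your proposal is correct and follows essentially the same route as the paper: part (1) is the recursive leaf-to-root summation applied to Lemma \ref{lem_tree_line_expansion}, and part (2) distributes the exponential and linear weights along the tree paths to $p$ and $q$ via the chordal triangle inequality and $(s+t)^{\fr}\le s^{\fr}+t^{\fr}$, exactly as the paper does. The only organizational difference is that the paper packages the weight-distribution step as a separate induction on $n$ (Lemma \ref{lem_tree_line_estimate_induction}, peeling off a leaf at maximal distance from the root at each step), whereas you telescope directly along the two paths; the bookkeeping concerns you flag (a vertex on both paths carrying an exponential weight on one slot pair and a linear weight on another, both anchored at the incoming slot) are resolved precisely as you anticipate by the definition of $\|\cdot\|_{l,1}$.
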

\begin{proof}
\eqref{item_tree_kernels_no_fixed}: By Lemma
 \ref{lem_tree_line_expansion} we have that
\begin{align*}
&(\text{the left-hand side of \eqref{eq_tree_kernels_no_fixed}})\\
&\le 1_{n_j(T)\le m_j(\forall j\in
 \{1,2,\cdots,n\})}2^{n-1}D_{et}^{\frac{1}{2}\sum_{j=1}^nm_j-n+1}\notag\\
&\quad\cdot \frah^{m_1-1}\sum_{(X_{1,2},X_{1,3},\cdots,X_{1,m_1})\in I^{m_1-1}}|J_{m_1}(\bX_1)|
\sum_{\bW_1\subset \bX_1\atop \bW_1\in I^{n_1(T)}}
\sum_{\s_1\in
 \S_{n_1(T)}}\notag\\
&\quad\cdot\prod_{\{1,s\}\in L_1^1(T)}\Bigg(\frah^{m_s}\sum_{\bX_s\in
 I^{m_s}}|J_{m_s}(\bX_s)|
\sum_{Z_s\subset
 \bX_s\atop Z_s\in I}|\widetilde{C_o}(W_{1,\s_1\circ
 \zeta_1(\{1,s\})},Z_s)|\Bigg)\notag\\
&\quad \cdot \prod_{u=1\atop order
 }^{d_T(1)-1}\Bigg(\prod_{j\in\{2,3,\cdots,n\}\text{ with }\atop
 \dis_T(1,j)=u,n_j(T)\neq 1}\Bigg(\sum_{\bW_j\subset \bX_j\backslash
 Z_j\atop \bW_j\in
 I^{n_j(T)-1}}\sum_{\s_j\in\S_{n_j(T)-1}}\notag\\
&\quad\ \cdot \prod_{\{j,s\}\in L_j^1(T)}\Bigg(\frah^{m_s}\sum_{\bX_s\in
 I^{m_s}}|J_{m_s}(\bX_s)|\sum_{Z_s\subset
 \bX_s\atop Z_s\in I}|\widetilde{C_o}(W_{j,\s_j\circ\zeta_j(\{j,s\})},Z_s)|
\Bigg)
\Bigg)\Bigg).\notag
\end{align*}
Then, estimating recursively from $u=d_T(1)-1$ to $u=1$, we observe that 
\begin{align*}
&(\text{the left-hand side of \eqref{eq_tree_kernels_no_fixed}})\\
&\le 1_{n_j(T)\le m_j(\forall j\in
 \{1,2,\cdots,n\})}2^{n-1}D_{et}^{\frac{1}{2}\sum_{j=1}^nm_j-n+1}\\
&\quad\cdot \frah^{m_1-1}\sum_{(X_{1,2},X_{1,3},\cdots,X_{1,m_1})\in I^{m_1-1}}|J_{m_1}(\bX_1)|
\sum_{\bW_1\subset \bX_1\atop \bW_1\in I^{n_1(T)}}
\sum_{\s_1\in
 \S_{n_1(T)}}\\
&\quad\cdot\prod_{\{1,s\}\in L_1^1(T)}\Bigg(\frah^{m_s}\sum_{\bX_s\in
 I^{m_s}}|J_{m_s}(\bX_s)|
\sum_{Z_s\subset
 \bX_s\atop Z_s\in I}|\widetilde{C_o}(W_{1,\s_1\circ
 \zeta_1(\{1,s\})},Z_s)|\Bigg)\\
&\quad \cdot \prod_{u=1\atop order
 }^{d_T(1)-2}\Bigg(\prod_{j\in\{2,3,\cdots,n\}\text{ with }\atop
 \dis_T(1,j)=u,n_j(T)\neq 1}\Bigg(\sum_{\bW_j\subset \bX_j\backslash
 Z_j\atop \bW_j\in
 I^{n_j(T)-1}}\sum_{\s_j\in\S_{n_j(T)-1}}\\
&\qquad\cdot \prod_{\{j,s\}\in L_j^1(T)}\Bigg(\frah^{m_s}\sum_{\bX_s\in I^{m_s}}|J_{m_s}(\bX_s)|
\sum_{Z_s\subset
 \bX_s\atop Z_s\in I}|\widetilde{C_o}(W_{j,\s_j\circ\zeta_j(\{j,s\})},Z_s)|
\Bigg)
\Bigg)\Bigg)\\
&\quad\cdot\prod_{j\in\{2,3,\cdots,n\}\text{ with }\atop
 \dis_T(1,j)=d_T(1)-1,n_j(T)\neq 1}\Bigg(\left(\begin{array}{c}m_j-1\\
					       n_j(T)-1\end{array}\right)(n_j(T)-1)!\\
&\qquad\cdot\prod_{\{j,s\}\in
 L_j^1(T)}(m_s\|J_{m_s}\|_{l,0}\|\widetilde{C_o}\|_{l,0})\Bigg)\\
&\le 1_{n_j(T)\le m_j(\forall j\in
 \{1,2,\cdots,n\})}2^{n-1}D_{et}^{\frac{1}{2}\sum_{j=1}^nm_j-n+1}\\
&\quad\cdot \frah^{m_1-1}\sum_{(X_{1,2},X_{1,3},\cdots,X_{1,m_1})\in I^{m_1-1}}|J_{m_1}(\bX_1)|
\sum_{\bW_1\subset \bX_1\atop \bW_1\in I^{n_1(T)}}
\sum_{\s_1\in
 \S_{n_1(T)}}\\
&\quad\cdot\prod_{\{1,s\}\in L_1^1(T)}\Bigg(\frah^{m_s}\sum_{\bX_s\in
 I^{m_s}}|J_{m_s}(\bX_s)|
\sum_{Z_s\subset
 \bX_s\atop Z_s\in I}|\widetilde{C_o}(W_{1,\s_1\circ
 \zeta_1(\{1,s\})},Z_s)|\Bigg)\\
&\quad \cdot \prod_{u=1\atop order
 }^{v}\Bigg(\prod_{j\in\{2,3,\cdots,n\}\text{ with }\atop
 \dis_T(1,j)=u,n_j(T)\neq 1}\Bigg(\sum_{\bW_j\subset \bX_j\backslash
 Z_j\atop \bW_j\in
 I^{n_j(T)-1}}\sum_{\s_j\in\S_{n_j(T)-1}}\\
&\qquad\cdot \prod_{\{j,s\}\in L_j^1(T)}\Bigg(\frah^{m_s}\sum_{\bX_s\in I^{m_s}}|J_{m_s}(\bX_s)|
\sum_{Z_s\subset
 \bX_s\atop Z_s\in I}|\widetilde{C_o}(W_{j,\s_j\circ\zeta_j(\{j,s\})},Z_s)|
\Bigg)
\Bigg)\Bigg)\\
&\quad\cdot \prod_{w=v+1}^{d_T(1)-1}\Bigg(\prod_{j\in\{2,3,\cdots,n\}\text{ with }\atop
 \dis_T(1,j)=w,n_j(T)\neq 1}\Bigg(\left(\begin{array}{c}m_j-1\\
					       n_j(T)-1\end{array}\right)(n_j(T)-1)!\\
&\qquad\cdot\prod_{\{j,s\}\in
 L_j^1(T)}(m_s\|J_{m_s}\|_{l,0}\|\widetilde{C_o}\|_{l,0})\Bigg)\Bigg)\\
&\le 1_{n_j(T)\le m_j(\forall j\in
 \{1,2,\cdots,n\})}2^{n-1}D_{et}^{\frac{1}{2}\sum_{j=1}^nm_j-n+1}\\
&\quad\cdot \left(\begin{array}{c} m_1\\ n_1(T)
		  \end{array}\right)n_1(T)!\|J_{m_1}\|_{l,0}
\prod_{\{1,s\}\in
 L_1^1(T)}(m_s\|J_{m_s}\|_{l,0}\|\widetilde{C_o}\|_{l,0})\\
&\quad \cdot \prod_{u=1}^{d_T(1)-1}\Bigg(\prod_{j\in\{2,3,\cdots,n\}\text{ with }\atop
 \dis_T(1,j)=u,n_j(T)\neq 1}\Bigg(\left(\begin{array}{c}m_j-1\\
					       n_j(T)-1\end{array}\right)(n_j(T)-1)!\\
&\qquad\cdot \prod_{\{j,s\}\in
 L_j^1(T)}(m_s\|J_{m_s}\|_{l,0}\|\widetilde{C_o}\|_{l,0})\Bigg)\Bigg),
\end{align*}
which is equal to the right-hand side of \eqref{eq_tree_kernels_no_fixed}.

\eqref{item_tree_kernels_fixed}: Using Lemma \ref{lem_tree_line_expansion}
 and the anti-symmetric property of the functions $J_{m_j}(\cdot)$
 $(j=1,2,\cdots,n)$, we see that 
\begin{align}
&(\text{the left-hand side of \eqref{eq_tree_line_kernels}})\label{eq_tree_line_kernel_long}\\
&\le 1_{n_j(T)\le m_j-k_j (\forall j\in
 \{1,2,\cdots,n\})}2^{n-1}D_{et}^{\frac{1}{2}\sum_{j=1}^n(m_j-k_j)-n+1}\notag\\
&\quad\cdot \left(\frac{1}{h}\right)^{m_1-1}\sum_{\bX_1\in
 I^{m_1-k_1}}\sum_{(Y_{1,2},Y_{1,3},\cdots,Y_{1,k_1})\in
 I^{k_1-1}}|J_{m_1}(\bX_1,\bY_1)|\notag\\
&\quad\cdot\sum_{\bW_1\subset \bX_1\atop \bW_1\in
 I^{n_1(T)}}\sum_{\s_1\in
 \S_{n_1(T)}}\prod_{\{1,s\}\in
 L_1^1(T)}\Bigg(\left(\frac{1}{h}\right)^{m_s}\sum_{\bX_s\in
 I^{m_s-k_s}}\sum_{\bY_s\in I^{k_s}}|J_{m_s}(\bX_s,\bY_s)|\notag\\
&\qquad\qquad\qquad\cdot \sum_{Z_s\subset \bX_s\atop Z_s\in I}|\widetilde{C_o}(W_{1,\s_1\circ \zeta_1(\{1,s\})},Z_s)|\Bigg)\notag\\
&\quad \cdot \prod_{u=1\atop order
 }^{d_T(1)-1}\Bigg(\prod_{j\in\{2,3,\cdots,n\}\text{ with }\atop
 \dis_T(1,j)=u,n_j(T)\neq 1}\Bigg(\sum_{\bW_j\subset \bX_j\backslash
 Z_j\atop \bW_j\in
 I^{n_j(T)-1}}\sum_{\s_j\in\S_{n_j(T)-1}}\notag\\
&\qquad\cdot \prod_{\{j,s\}\in L_j^1(T)}\Bigg(\left(\frac{1}{h}\right)^{m_s}
\sum_{\bX_s\in I^{m_s-k_s}}\sum_{\bY_s\in I^{k_s}}|J_{m_s}(\bX_s,\bY_s)|\notag
\\
&\qquad\qquad\qquad\cdot\sum_{Z_s\subset \bX_s\atop Z_s\in I}|\widetilde{C_o}(W_{j,\s_j\circ\zeta_j(\{j,s\})},Z_s)|
\Bigg)
\Bigg)\Bigg)\notag\\
&\quad \cdot d_{j'}(Y_{1,1},Y_{q,r})^ae^{\sum_{j=0}^d(\fw(l)d_j(Y_{1,1},Y_{p,1}))^{\fr}}\notag\\
&= 1_{n_j(T)\le m_j-k_j (\forall j\in
 \{1,2,\cdots,n\})}2^{n-1}D_{et}^{\frac{1}{2}\sum_{j=1}^n(m_j-k_j)-n+1}\notag\\
&\quad\cdot \left(\frac{1}{h}\right)^{m_1-1}\sum_{\bX_1\in
 I^{m_1-k_1-n_1(T)}}\sum_{\bW_1\in I^{n_1(T)}}
\sum_{(Y_{1,2},Y_{1,3},\cdots,Y_{1,k_1})\in
 I^{k_1-1}}\notag\\
&\quad\cdot |J_{m_1}(\bX_1,\bW_1,\bY_1)| \left(\begin{array}{c}m_1-k_1\\
					    n_1(T)\end{array}\right) n_1(T)!\notag\\
&\quad\cdot\prod_{\{1,s\}\in
 L_1^1(T)}\Bigg((m_s-k_s)
\left(\frac{1}{h}\right)^{m_s}\sum_{\bX_s\in
 I^{m_s-k_s-n_s(T)}}\sum_{\bW_s\in I^{n_s(T)-1}}\sum_{Z_s\in
 I}\sum_{\bY_s\in I^{k_s}}\notag\\
&\qquad\cdot |J_{m_s}(\bX_s,\bW_s,Z_s,\bY_s)||\widetilde{C_o}(W_{1,\zeta_1(\{1,s\})},Z_s)|\Bigg)\notag\\
&\quad \cdot \prod_{u=1\atop order
 }^{d_T(1)-1}\Bigg(\prod_{j\in\{2,3,\cdots,n\}\text{ with }\atop
 \dis_T(1,j)=u,n_j(T)\neq 1}\Bigg(\left(\begin{array}{c}m_j-k_j-1\\
					n_j(T)-1\end{array}\right)(n_j(T)-1)!\notag\\
&\qquad\cdot \prod_{\{j,s\}\in L_j^1(T)}\Bigg((m_s-k_s)
\left(\frac{1}{h}\right)^{m_s}
\sum_{\bX_s\in I^{m_s-k_s-n_s(T)}}\sum_{\bW_s\in
 I^{n_s(T)-1}}\sum_{Z_s\in I}\sum_{\bY_s\in
 I^{k_s}}\notag\\
&\qquad\quad\cdot |J_{m_s}(\bX_s,\bW_s,Z_s,\bY_s)|
|\widetilde{C_o}(W_{j,\zeta_j(\{j,s\})},Z_s)|
\Bigg)
\Bigg)\Bigg)\notag\\
&\quad \cdot d_{j'}(Y_{1,1},Y_{q,r})^ae^{\sum_{j=0}^d(\fw(l)d_j(Y_{1,1},Y_{p,1}))^{\fr}}\notag\\
&=1_{n_j(T)\le m_j-k_j (\forall j\in
 \{1,2,\cdots,n\})}2^{n-1}D_{et}^{\frac{1}{2}\sum_{j=1}^n(m_j-k_j)-n+1}\notag\\
&\quad\cdot
 \prod_{j=1}^n\Bigg((m_j-k_j)\left(\begin{array}{c}m_j-k_j-1\\
				   n_j(T)-1\end{array}\right)(n_j(T)-1)!\Bigg)\notag\\
&\quad\cdot \left(\frac{1}{h}\right)^{m_1-1}\sum_{\bX_1\in
 I^{m_1-k_1-n_1(T)}}\sum_{\bW_1\in I^{n_1(T)}}
\sum_{(Y_{1,2},Y_{1,3},\cdots,Y_{1,k_1})\in
 I^{k_1-1}}|J_{m_1}(\bX_1,\bW_1,\bY_1)|\notag\\
&\quad\cdot\prod_{\{1,s\}\in
 L_1^1(T)}\Bigg(\left(\frac{1}{h}\right)^{m_s}\sum_{\bX_s\in
 I^{m_s-k_s-n_s(T)}}\sum_{\bW_s\in I^{n_s(T)-1}}\sum_{Z_s\in
 I}\sum_{\bY_s\in I^{k_s}}\notag\\
&\qquad\quad\cdot |J_{m_s}(\bX_s,\bW_s,Z_s,\bY_s)||\widetilde{C_o}(W_{1,\zeta_1(\{1,s\})},Z_s)|\Bigg)\notag\\
&\quad \cdot \prod_{u=1\atop order
 }^{d_T(1)-1}\Bigg(\prod_{j\in\{2,3,\cdots,n\}\text{ with }\atop
 \dis_T(1,j)=u,n_j(T)\neq 1}\notag\\
&\qquad\cdot \prod_{\{j,s\}\in L_j^1(T)}\Bigg(\left(\frac{1}{h}\right)^{m_s}
\sum_{\bX_s\in I^{m_s-k_s-n_s(T)}}\sum_{\bW_s\in
 I^{n_s(T)-1}}\sum_{Z_s\in I}\sum_{\bY_s\in
 I^{k_s}}\notag\\
&\qquad\quad\cdot |J_{m_s}(\bX_s,\bW_s,Z_s,\bY_s)|
|\widetilde{C_o}(W_{j,\zeta_j(\{j,s\})},Z_s)|
\Bigg)
\Bigg)\notag\\
&\quad \cdot
 d_{j'}(Y_{1,1},Y_{q,r})^ae^{\sum_{j=0}^d(\fw(l)d_j(Y_{1,1},Y_{p,1}))^{\fr}}.\notag
\end{align}
Then, we can apply Lemma \ref{lem_tree_line_estimate_induction}, which
 will be proved next, to derive the claimed inequality.
\end{proof}

\begin{lemma}\label{lem_tree_line_estimate_induction}
On the same assumption as in Lemma \ref{lem_tree_line_kernels}
 \eqref{item_tree_kernels_fixed} plus that $n_j(T)\le m_j-k_j$
 $(\forall j\in \{1,2,\cdots,n\})$, the following inequality holds.
\begin{align}
&\left(\frac{1}{h}\right)^{m_1-1}\sum_{\bX_1\in
 I^{m_1-k_1-n_1(T)}}\sum_{\bW_1\in I^{n_1(T)}}
\sum_{(Y_{1,2},Y_{1,3},\cdots,Y_{1,k_1})\in
 I^{k_1-1}}|J_{m_1}(\bX_1,\bW_1,\bY_1)|\label{eq_tree_line_estimate_induction}\\
&\quad\cdot\prod_{\{1,s\}\in
 L_1^1(T)}\Bigg(\left(\frac{1}{h}\right)^{m_s}\sum_{\bX_s\in
 I^{m_s-k_s-n_s(T)}}\sum_{\bW_s\in I^{n_s(T)-1}}\sum_{Z_s\in
 I}\sum_{\bY_s\in I^{k_s}}\notag\\
&\qquad\cdot |J_{m_s}(\bX_s,\bW_s,Z_s,\bY_s)||\widetilde{C_o}(W_{1,\zeta_1(\{1,s\})},Z_s)|\Bigg)\notag\\
&\quad \cdot \prod_{u=1\atop order
 }^{d_T(1)-1}\Bigg(\prod_{j\in\{2,3,\cdots,n\}\text{ with }\atop
 \dis_T(1,j)=u,n_j(T)\neq 1}\notag\\
&\qquad\cdot \prod_{\{j,s\}\in L_j^1(T)}\Bigg(\left(\frac{1}{h}\right)^{m_s}
\sum_{\bX_s\in I^{m_s-k_s-n_s(T)}}\sum_{\bW_s\in
 I^{n_s(T)-1}}\sum_{Z_s\in I}\sum_{\bY_s\in
 I^{k_s}}\notag\\
&\qquad\quad\cdot |J_{m_s}(\bX_s,\bW_s,Z_s,\bY_s)|
|\widetilde{C_o}(W_{j,\zeta_j(\{j,s\})},Z_s)|
\Bigg)
\Bigg)\notag\\
&\quad \cdot
 d_{j'}(Y_{1,1},Y_{q,r})^ae^{\sum_{j=0}^d(\fw(l)d_j(Y_{1,1},Y_{p,1}))^{\fr}}\notag\\
&\le
 \prod_{j=1}^n\left(\sum_{q_j=0}^1\|J_{m_j}\|_{l,q_j}\right)\prod_{k=2}^n\left(
\sum_{r_k=0}^1\|\widetilde{C_o}\|_{l,r_k}\right)1_{\sum_{j=1}^nq_j+\sum_{k=2}^nr_k=a}.\notag
\end{align}
\end{lemma}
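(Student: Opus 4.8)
The plan is to prove \eqref{eq_tree_line_estimate_induction} by the leaf-to-root recursion already used for Lemma \ref{lem_tree_line_kernels} \eqref{item_tree_kernels_no_fixed}, now carrying the weight factor $d_{j'}(Y_{1,1},Y_{q,r})^{a}e^{\sum_{j=0}^{d}(\fw(l)d_{j}(Y_{1,1},Y_{p,1}))^{\fr}}$ along with the sums. Root $T$ at the vertex $1$. For $j\neq 1$ the edge joining $j$ to its parent $\pi(j)$ contracts, through $\widetilde{C_o}$, the leg $W_{\pi(j),\zeta_{\pi(j)}(\{\pi(j),j\})}$ of $J_{m_{\pi(j)}}$ with the leg $Z_{j}$ of $J_{m_{j}}$; at the root all of $W_{1,1},\dots,W_{1,n_{1}(T)}$ point toward the children. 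The hypotheses $n_{j}(T)\le m_{j}-k_{j}$, $m_{j}\ge 2$ and $k_{1},k_{p},k_{q}\ge 1$ guarantee that for each $j$ the function $J_{m_{j}}$ has, besides its contraction legs and its $k_{j}$ external legs $\bY_{j}$, enough room that, after reordering by anti-symmetry, one may keep the single leg $Z_{j}$ (or $Y_{1,1}$ for the root) fixed in the first slot, put a chosen second leg in the second slot, and sum all the remaining legs against $\|J_{m_{j}}\|_{l,0}$ or $\|J_{m_{j}}\|_{l,1}$. Note that the left-hand side of \eqref{eq_tree_line_estimate_induction} is already free of combinatorial factors, so only the nested summations remain to be done.

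First I would split the weight factor algebraically. Let $1=u_{0},u_{1},\dots,u_{M}=p$ be the path in $T$ from $1$ to $p$; reading off the legs it visits gives a chain $Y_{1,1},W_{1,\cdot},Z_{u_{1}},W_{u_{1},\cdot},\dots,Z_{p},Y_{p,1}$ in which consecutive entries are two legs of one common factor (a $J_{m_{u_{i}}}$, or a $\widetilde{C_o}$ on an edge of the path). Since each $d_{j}$ satisfies the triangle inequality and $t\mapsto t^{\fr}$ is subadditive on $\R_{\ge 0}$ for $\fr\in(0,1]$, writing $d_{j}(\ell)$ for $d_{j}$ evaluated at the two legs carried by a link $\ell$,
\begin{align*}
e^{\sum_{j=0}^{d}(\fw(l)d_{j}(Y_{1,1},Y_{p,1}))^{\fr}}\le\prod_{\text{links }\ell\text{ on the path }1\to p}e^{\sum_{j=0}^{d}(\fw(l)d_{j}(\ell))^{\fr}},
\end{align*}
each factor involving only the two legs of a single $J$ or $\widetilde{C_o}$. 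When $a=1$, doing the same along the path $1\to q$ and using $d_{j'}\le\sup_{j''\in\{0,1,\dots,d\}}d_{j''}$,
\begin{align*}
d_{j'}(Y_{1,1},Y_{q,r})\le\sum_{\text{links }\ell\text{ on the path }1\to q}\sup_{j''\in\{0,1,\dots,d\}}d_{j''}(\ell).
\end{align*}
Substituting these and expanding the last sum over the distinguished link $\ell$ replaces the left-hand side of \eqref{eq_tree_line_estimate_induction} by a finite sum of terms of the original shape, in each of which every factor carries at most the exponential weight between its two path-legs, and exactly one factor also carries a single distance weight between two of its legs.

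Then I would run the recursion in order of decreasing $\dis_{T}(1,\cdot)$, exactly as in the proof of Lemma \ref{lem_tree_line_kernels} \eqref{item_tree_kernels_no_fixed}. When a vertex $j\neq 1$ is reached, all its children have been processed, so its $W$-legs are now free summation legs of $J_{m_{j}}$; one reorders $J_{m_{j}}$ to put $Z_{j}$ in the first slot and the leg of $J_{m_{j}}$ lying on the path $1\to p$ (if any) in the second slot, and sums over all other legs, so that the exponential weight at $j$ is absorbed into $\|J_{m_{j}}\|_{l,0}$, or, if the distinguished distance link sits at $j$, into $\|J_{m_{j}}\|_{l,1}$ (whose factor $d_{j'}(X,Y_{q})$ absorbs the distance weight, with $Y_{q}$ possibly different from the second-slot leg). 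Summing $Z_{j}$ against $\widetilde{C_o}(W_{\pi(j),\cdot},Z_{j})$ with $W_{\pi(j),\cdot}$ held fixed then contributes $\|\widetilde{C_o}\|_{l,0}$, uniformly in $W_{\pi(j),\cdot}$ (or $\|\widetilde{C_o}\|_{l,1}$ when that edge is the distinguished link), absorbing the edge's exponential weight at the same time. The root is treated likewise with $Y_{1,1}$ fixed in the first slot. Collecting the factors produces $\prod_{j=1}^{n}\|J_{m_{j}}\|_{l,q_{j}}\prod_{k=2}^{n}\|\widetilde{C_o}\|_{l,r_{k}}$ with all $q_{j},r_{k}=0$ when $a=0$, and with exactly one of them equal to $1$ and the rest $0$ when $a=1$; summing over the position of the distinguished link and enlarging the index set to all factors then yields the right-hand side of \eqref{eq_tree_line_estimate_induction}.

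The part that needs care is the bookkeeping at a branch point, where the paths $1\to p$ and $1\to q$ separate: on their common initial segment one factor may have to absorb simultaneously an exponential weight (running toward $p$) and the distance weight (running toward $q$), between two different pairs of its legs. This is exactly what the definition of $\|\cdot\|_{l,1}$ permits, since it carries both a factor $d_{j'}(X,Y_{q})$ for an arbitrary index $q$ and the exponential factor $e^{\sum_{j}(\fw(l)d_{j}(X,Y_{1}))^{\fr}}$ tied to the second argument $Y_{1}$; a factor needing only one of the two weights uses the other trivially (the built-in exponential is $\ge 1$, or $q$ is chosen so that the distance factor is harmless). The degenerate cases $p=q$, the trivial paths $p=1$ or $q=1$, the case $r=1$ (where $d_{j'}(Y_{1,1},Y_{1,1})=0$), and $n=1$ carry no further difficulty: there the estimate is essentially the definition of $\|J_{m_{1}}\|_{l,0}$ or $\|J_{m_{1}}\|_{l,1}$.
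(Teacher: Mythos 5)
Your argument is correct and uses essentially the same mechanism as the paper's proof: decompose the weight factors along tree paths via the triangle inequality for $d_j$ and the subadditivity $(x+y)^{\fr}\le x^{\fr}+y^{\fr}$, reorder legs by anti-symmetry, and sum from the leaves toward the root so that each $J_{m_j}$ and each $\widetilde{C_o}$ is absorbed into $\|\cdot\|_{l,0}$ or $\|\cdot\|_{l,1}$, with the indicator $1_{\sum q_j+\sum r_k=a}$ recording which single factor carries the distance weight. The only difference is organizational: the paper formalizes the leaf-to-root pass as an induction on $n$ that peels off a vertex at maximal distance from the root (reabsorbing its contraction leg into $\bY_{\tilde j}$ and invoking the hypothesis on the reduced tree), whereas you perform the full path decomposition upfront and then a single global summation.
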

\begin{proof}
We prove the claimed inequality by induction with $n$. In the following
 we will repeatedly use the inequality $(x+y)^{\fr}\le x^{\fr}+y^{\fr}$,
 $(\forall x,y\in\R_{\ge 0})$. If $n=2$,
\begin{align*}
&(\text{the left-hand side of
 \eqref{eq_tree_line_estimate_induction}})\\
&= \left(\frac{1}{h}\right)^{m_1-1}\sum_{\bX_1\in
 I^{m_1-k_1-1}}\sum_{W_1\in I}\sum_{(Y_{1,2},Y_{1,3},\cdots,Y_{1,k_1})\in
 I^{k_1-1}}|J_{m_1}(\bX_1,W_1,\bY_1)|\\
&\quad\cdot \left(\frac{1}{h}\right)^{m_2}\sum_{\bX_2\in
 I^{m_2-k_2-1}}\sum_{Z_2\in I}\sum_{\bY_2\in
 I^{k_2}}|J_{m_2}(\bX_2,Z_2,\bY_2)||\widetilde{C_o}(W_{1},Z_2)|\\
&\quad \cdot
 d_{j'}(Y_{1,1},Y_{q,r})^ae^{\sum_{j=0}^d(\fw(l)d_j(Y_{1,1},Y_{p,1}))^{\fr}}\\
&\le \left(\frac{1}{h}\right)^{m_1-1}\sum_{\bX_1\in
 I^{m_1-k_1-1}}\sum_{W_1\in I}\sum_{(Y_{1,2},Y_{1,3},\cdots,Y_{1,k_1})\in
 I^{k_1-1}}|J_{m_1}(\bX_1,W_1,\bY_1)|\\
&\quad\cdot \left(\frac{1}{h}\right)^{m_2}\sum_{\bX_2\in
 I^{m_2-k_2-1}}\sum_{Z_2\in I}\sum_{\bY_2\in
 I^{k_2}}|J_{m_2}(\bX_2,Z_2,\bY_2)||\widetilde{C_o}(W_{1},Z_2)|\\
&\quad\cdot \Bigg(
 1_{q=1}\sum_{q_1=0}^1d_{j'}(Y_{1,1},Y_{1,r})^{q_1}1_{q_1=a}\\
&\qquad+1_{q=2}\sum_{q_1=0}^1d_{j'}(Y_{1,1},W_{1})^{q_1}\sum_{r_2=0}^1d_{j'}(W_1,Z_2)^{r_2}\\
&\qquad\qquad\quad\cdot \sum_{q_2=0}^1d_{j'}(Z_2,Y_{2,r})^{q_2}1_{q_1+q_2+r_2=a}\Bigg)\\
&\quad\cdot
 \Bigg(1_{p=1}+1_{p=2}
e^{\sum_{j=0}^d(\fw(l)d_j(Y_{1,1},W_1))^{\fr}}
e^{\sum_{j=0}^d(\fw(l)d_j(W_1,Z_2))^{\fr}}
e^{\sum_{j=0}^d(\fw(l)d_j(Z_2,Y_{2,1}))^{\fr}}\Bigg)\\
&\le
 1_{q=1}\sum_{q_1=0}^1\|J_{m_1}\|_{l,q_1}\|\widetilde{C_0}\|_{l,0}\|J_{m_2}\|_{l,0}1_{q_1=a}\\
&\quad +
 1_{q=2}\sum_{q_1=0}^1\|J_{m_1}\|_{l,q_1}\sum_{r_2=0}^1\|\widetilde{C_0}\|_{l,r_2}\sum_{q_2=0}^1\|J_{m_2}\|_{l,q_2}1_{q_1+q_2+r_2=a},
\end{align*}
which is less than or equal to the right-hand side of
 \eqref{eq_tree_line_estimate_induction} for $n=2$.

Assume that the claim holds for some $n\in\N_{\ge 2}$. Let us estimate
 the left-hand side of \eqref{eq_tree_line_estimate_induction} for
 $n+1$. Take a vertex $\hat{s}\in \{1,2,\cdots,n+1\}$ satisfying 
 $\dis_T(1,\hat{s})=d_T(1)$. Take $\s\in\S_{n+1}$
 satisfying $\s(1)=1$, $\s(\hat{s})=n+1$. Then, define the tree
 $T'\in\T_{n+1}$ by $T':=\{\{\s(j),\s(s)\}\ |\ \{j,s\}\in T\}$. 
Note that $\dis_{T'}(1,n+1)=d_{T'}(1)$. Setting $m_j':=m_{\s^{-1}(j)}$, $k_j':=k_{\s^{-1}(j)}$
 $(j=1,2,\cdots,n+1)$, we see that $n_j(T')\le m_j'-k_j'$ $(\forall
 j\in\{1,2,\cdots,n+1\})$ and 
\begin{align*}
&(\text{the left-hand side of
 \eqref{eq_tree_line_estimate_induction}})\\
&=\left(\frac{1}{h}\right)^{m_1'-1}\sum_{\bX_1\in
 I^{m_1'-k_1'-n_1(T')}}\sum_{\bW_1\in I^{n_1(T')}}
\sum_{(Y_{1,2},Y_{1,3},\cdots,Y_{1,k_1'})\in
 I^{k_1'-1}}|J_{m_1'}(\bX_1,\bW_1,\bY_1)|\\
&\quad\cdot\prod_{\{1,s\}\in
 L_1^1(T')}\Bigg(\left(\frac{1}{h}\right)^{m_s'}\sum_{\bX_s\in
 I^{m_s'-k_s'-n_s(T')}}\sum_{\bW_s\in I^{n_s(T')-1}}\sum_{Z_s\in
 I}\sum_{\bY_s\in I^{k_s'}}\\
&\qquad\cdot |J_{m_s'}(\bX_s,\bW_s,Z_s,\bY_s)||\widetilde{C_o}(W_{1,\zeta_{\s^{-1}(1)}(\{\s^{-1}(1),\s^{-1}(s)\})},Z_s)|\Bigg)\notag\\
&\quad \cdot \prod_{u=1\atop order
 }^{d_{T'}(1)-1}\Bigg(\prod_{j\in\{2,3,\cdots,n\}\text{ with }\atop
 \dis_{T'}(1,j)=u,n_j(T')\neq 1}\notag\\
&\qquad\cdot \prod_{\{j,s\}\in L_j^1(T')}\Bigg(\left(\frac{1}{h}\right)^{m_s'}
\sum_{\bX_s\in I^{m_s'-k_s'-n_s(T')}}\sum_{\bW_s\in
 I^{n_s(T')-1}}\sum_{Z_s\in I}\sum_{\bY_s\in
 I^{k_s'}}\notag\\
&\qquad\quad\cdot |J_{m_s'}(\bX_s,\bW_s,Z_s,\bY_s)|
|\widetilde{C_o}(W_{j,\zeta_{\s^{-1}(j)}(\{\s^{-1}(j),\s^{-1}(s)\})},Z_s)|
\Bigg)
\Bigg)\notag\\
&\quad \cdot
 d_{j'}(Y_{1,1},Y_{\s(q),r})^ae^{\sum_{j=0}^d(\fw(l)d_j(Y_{1,1},Y_{\s(p),1}))^{\fr}}.
\end{align*}
 Let $\zeta_j'$ denote the bijective
 map from $L_j^1(T')$ to $\{1,2,\cdots,\sharp L_j^1(T')\}$ satisfying
 that $\zeta_{j}'(\{j,r\})<\zeta_j'(\{j,s\})$, $(\forall \{j,r\},\{j,s\}\in
 L_j^1(T')\text{ with }r<s)$. By the anti-symmetry of
 $J_{m_j'}(\bX_j,\bW_j,Z_j,\bY_j)$ with respect to the variable $\bW_j$ we can
replace $\zeta_{\s^{-1}(j)}(\{\s^{-1}(j),\s^{-1}(s)\})$ by
 $\zeta_j'(\{j,s\})$ in the right-hand side of the above equality. This argument implies that we may assume
 $\dis_T(1,n+1)=d_T(1)$ in the left-hand side of
 \eqref{eq_tree_line_estimate_induction} without losing
 generality. Thus, we assume so in the following.

If $\{\tilde{j},n+1\}\in T$, 
\begin{align}
&d_{j'}(Y_{1,1},Y_{q,r})^ae^{\sum_{j=0}^d(\fw(l)d_j(Y_{1,1},Y_{p,1}))^{\fr}}\label{eq_weight_decomposition_2_vertices}\\
&\le  \Bigg(
 1_{q\le n}d_{j'}(Y_{1,1},Y_{q,r})^{a}\notag\\
&\qquad+1_{q=n+1}\sum_{q_{\tilde{j}}=0}^1d_{j'}(Y_{1,1},W_{\tilde{j},\zeta_{\tilde{j}}(\{\tilde{j},n+1\})})^{q_{\tilde{j}}}\sum_{r_{n+1}=0}^1d_{j'}(W_{\tilde{j},\zeta_{\tilde{j}}(\{\tilde{j},n+1\})},Z_{n+1})^{r_{n+1}}\notag\\
&\qquad\qquad\qquad\cdot\sum_{q_{n+1}=0}^1d_{j'}(Z_{n+1},Y_{n+1,r})^{q_{n+1}}1_{q_{\tilde{j}}+q_{n+1}+r_{n+1}=a}\Bigg)\notag\\
&\quad\cdot\Big(1_{p\le
 n}e^{\sum_{j=0}^d(\fw(l)d_j(Y_{1,1},Y_{p,1}))^{\fr}}\notag\\
&\qquad
 +1_{p=n+1}e^{\sum_{j=0}^d(\fw(l)d_j(Y_{1,1},W_{\tilde{j},\zeta_{\tilde{j}}(\{\tilde{j},n+1\})}))^{\fr}}e^{\sum_{j=0}^d(\fw(l)d_j(W_{\tilde{j},\zeta_{\tilde{j}}(\{\tilde{j},n+1\})}, Z_{n+1}))^{\fr}}\notag\\
&\qquad\qquad\qquad\cdot e^{\sum_{j=0}^d(\fw(l)d_j(Z_{n+1},Y_{n+1,1}))^{\fr}}\Big).\notag
\end{align}
Set $\tilde{T}:=T\backslash\{\{\tilde{j},n+1\}\}$, $\tilde{k}_j:=k_j$
 $(\forall j\in\{1,2,\cdots,n\}\backslash \{\tilde{j}\})$,
 $\tilde{k}_{\tilde{j}}:=k_{\tilde{j}}+1$. It follows that
 $\tilde{k}_j\in\{0,1,\cdots,m_j-1\}$, $n_j(\tilde{T})\le
 m_j-\tilde{k}_j$, $(\forall j\in\{1,2,\cdots,n\})$. Substitution of
 \eqref{eq_weight_decomposition_2_vertices} yields that 
\begin{align}
&(\text{the left-hand side of
 \eqref{eq_tree_line_estimate_induction}})\label{eq_tree_line_estimate_induction_final}\\
&\le \left(\frac{1}{h}\right)^{m_1-1}\sum_{\bX_1\in
 I^{m_1-k_1-n_1(T)}}\sum_{\bW_1\in I^{n_1(T)}}
\sum_{(Y_{1,2},Y_{1,3},\cdots,Y_{1,k_1})\in
 I^{k_1-1}}|J_{m_1}(\bX_1,\bW_1,\bY_1)|\notag\\
&\quad\cdot\prod_{\{1,s\}\in
 L_1^1(T)}\Bigg(\left(\frac{1}{h}\right)^{m_s}\sum_{\bX_s\in
 I^{m_s-k_s-n_s(T)}}\sum_{\bW_s\in I^{n_s(T)-1}}\sum_{Z_s\in
 I}\sum_{\bY_s\in I^{k_s}}\notag\\
&\qquad\cdot |J_{m_s}(\bX_s,\bW_s,Z_s,\bY_s)||\widetilde{C_o}(W_{1,\zeta_1(\{1,s\})},Z_s)|\Bigg)\notag\\
&\quad \cdot \prod_{u=1\atop order
 }^{d_T(1)-1}\Bigg(\prod_{j\in\{2,3,\cdots,n\}\text{ with }\atop
 \dis_T(1,j)=u,n_j(T)\neq 1}\notag\\
&\qquad\cdot \prod_{\{j,s\}\in L_j^1(T)\backslash\{\{\tilde{j},n+1\}\}}\Bigg(\left(\frac{1}{h}\right)^{m_s}
\sum_{\bX_s\in I^{m_s-k_s-n_s(T)}}\sum_{\bW_s\in
 I^{n_s(T)-1}}\sum_{Z_s\in I}\sum_{\bY_s\in
 I^{k_s}}\notag\\
&\qquad\quad\cdot |J_{m_s}(\bX_s,\bW_s,Z_s,\bY_s)|
|\widetilde{C_o}(W_{j,\zeta_j(\{j,s\})},Z_s)|
\Bigg)
\Bigg)\notag\\
&\quad \cdot\Bigg(1_{q\le n}1_{p\le
 n}d_{j'}(Y_{1,1},Y_{q,r})^{a}e^{\sum_{j=0}^d(\fw(l)d_j(Y_{1,1},Y_{p,1}))^{\fr}}\|\widetilde{C_o}\|_{l,0}\|J_{m_{n+1}}\|_{l,0}\notag\\
&\qquad + 1_{q\le n}1_{p=
 n+1}d_{j'}(Y_{1,1},Y_{q,r})^{a}e^{\sum_{j=0}^d(\fw(l)d_j(Y_{1,1},W_{\tilde{j},\zeta_{\tilde{j}}(\{\tilde{j},n+1\})}))^{\fr}}\|\widetilde{C_o}\|_{l,0}\|J_{m_{n+1}}\|_{l,0}\notag\\
&\qquad +1_{q= n+1}1_{p\le
 n}\sum_{q_{\tilde{j}}=0}^1d_{j'}(Y_{1,1},W_{\tilde{j},\zeta_{\tilde{j}}(\{\tilde{j},n+1\})})^{q_{\tilde{j}}}e^{\sum_{j=0}^d(\fw(l)d_j(Y_{1,1},Y_{p,1}))^{\fr}}\notag\\
&\qquad\quad\cdot\sum_{r_{n+1}=0}^1\|\widetilde{C_o}\|_{l,r_{n+1}}\sum_{q_{n+1}=0}^1\|J_{m_{n+1}}\|_{l,q_{n+1}}1_{q_{\tilde{j}}+q_{n+1}+r_{n+1}=a}\notag\\
&\qquad +1_{q=
 n+1}1_{p=n+1}\sum_{q_{\tilde{j}}=0}^1d_{j'}(Y_{1,1},W_{\tilde{j},\zeta_{\tilde{j}}(\{\tilde{j},n+1\})})^{q_{\tilde{j}}}e^{\sum_{j=0}^d(\fw(l)d_j(Y_{1,1},W_{\tilde{j},\zeta_{\tilde{j}}(\{\tilde{j},n+1\})}))^{\fr}}\notag\\
&\qquad\quad\cdot\sum_{r_{n+1}=0}^1\|\widetilde{C_o}\|_{l,r_{n+1}}\sum_{q_{n+1}=0}^1\|J_{m_{n+1}}\|_{l,q_{n+1}}1_{q_{\tilde{j}}+q_{n+1}+r_{n+1}=a}\Bigg)\notag\\
&= \left(\frac{1}{h}\right)^{m_1-1}\sum_{\bX_1\in
 I^{m_1-\tilde{k}_1-n_1(\tilde{T})}}\sum_{\bW_1\in I^{n_1(\tilde{T})}}
\sum_{(Y_{1,2},Y_{1,3},\cdots,Y_{1,\tilde{k}_1})\in
 I^{\tilde{k}_1-1}}|J_{m_1}(\bX_1,\bW_1,\bY_1)|\notag\\
&\quad\cdot\prod_{\{1,s\}\in
 L_1^1(\tilde{T})}\Bigg(\left(\frac{1}{h}\right)^{m_s}\sum_{\bX_s\in
 I^{m_s-\tilde{k}_s-n_s(\tilde{T})}}\sum_{\bW_s\in I^{n_s(\tilde{T})-1}}\sum_{Z_s\in
 I}\sum_{\bY_s\in I^{\tilde{k}_s}}\notag\\
&\qquad\cdot |J_{m_s}(\bX_s,\bW_s,Z_s,\bY_s)||\widetilde{C_o}(W_{1,\zeta_1(\{1,s\})},Z_s)|\Bigg)\notag\\
&\quad \cdot \prod_{u=1\atop order
 }^{d_{\tilde{T}}(1)-1}\Bigg(\prod_{j\in\{2,3,\cdots,n\}\text{ with }\atop
 \dis_{\tilde{T}}(1,j)=u,n_j(\tilde{T})\neq 1}\notag\\
&\qquad\cdot \prod_{\{j,s\}\in L_j^1(\tilde{T})}\Bigg(\left(\frac{1}{h}\right)^{m_s}
\sum_{\bX_s\in I^{m_s-\tilde{k}_s-n_s(\tilde{T})}}\sum_{\bW_s\in
 I^{n_s(\tilde{T})-1}}\sum_{Z_s\in I}\sum_{\bY_s\in
 I^{\tilde{k}_s}}\notag\\
&\qquad\cdot |J_{m_s}(\bX_s,\bW_s,Z_s,\bY_s)|
|\widetilde{C_o}(W_{j,\zeta_j(\{j,s\})},Z_s)|
\Bigg)
\Bigg)\notag\\
&\quad \cdot\Bigg(1_{q\le n}1_{p\le
 n}d_{j'}(Y_{1,1},Y_{q,r})^{a}e^{\sum_{j=0}^d(\fw(l)d_j(Y_{1,1},Y_{p,1}))^{\fr}}\|\widetilde{C_o}\|_{l,0}\|J_{m_{n+1}}\|_{l,0}\notag\\
&\qquad + 1_{q\le n}1_{p=
 n+1}d_{j'}(Y_{1,1},Y_{q,r'})^{a}e^{\sum_{j=0}^d(\fw(l)d_j(Y_{1,1},Y_{\tilde{j},1}))^{\fr}}\|\widetilde{C_o}\|_{l,0}\|J_{m_{n+1}}\|_{l,0}\notag\\
&\qquad +1_{q= n+1}1_{p\le
 n}\sum_{q_{\tilde{j}}=0}^1d_{j'}(Y_{1,1},Y_{\tilde{j},r''})^{q_{\tilde{j}}}e^{\sum_{j=0}^d(\fw(l)d_j(Y_{1,1},Y_{p,1}))^{\fr}}\notag\\
&\qquad\quad\cdot\sum_{r_{n+1}=0}^1\|\widetilde{C_o}\|_{l,r_{n+1}}\sum_{q_{n+1}=0}^1\|J_{m_{n+1}}\|_{l,q_{n+1}}1_{q_{\tilde{j}}+q_{n+1}+r_{n+1}=a}\notag\\
&\qquad +1_{q=
 n+1}1_{p=n+1}\sum_{q_{\tilde{j}}=0}^1d_{j'}(Y_{1,1},Y_{\tilde{j},1})^{q_{\tilde{j}}}e^{\sum_{j=0}^d(\fw(l)d_j(Y_{1,1},Y_{\tilde{j},1}))^{\fr}}\notag\\
&\qquad\quad\cdot\sum_{r_{n+1}=0}^1\|\widetilde{C_o}\|_{l,r_{n+1}}\sum_{q_{n+1}=0}^1\|J_{m_{n+1}}\|_{l,q_{n+1}}1_{q_{\tilde{j}}+q_{n+1}+r_{n+1}=a}\Bigg),\notag
\end{align}
where we used the anti-symmetry of $J_{m_{\tilde{j}}}(\cdot)$ to shift
 the variable \\
$W_{\tilde{j},\zeta_{\tilde{j}}(\{\tilde{j},n+1\})}$ to be
 in front of $\bY_{\tilde{j}}$ (or behind $\bY_{\tilde{j}}$) and replaced \\
 $(W_{\tilde{j},\zeta_{\tilde{j}}(\{\tilde{j},n+1\})},\bY_{\tilde{j}})$
 (or $(\bY_{\tilde{j}},W_{\tilde{j},\zeta_{\tilde{j}}(\{\tilde{j},n+1\})})$)
 by $\bY_{\tilde{j}}\in I^{\tilde{k}_{\tilde{j}}}$. Because of this
 change of the variable, the component inside $d_{j'}(\cdot,\cdot)$ may
 be changed from the original one. We used the numbers
 $r'\in\{1,2,\cdots,\tilde{k}_q\}$,
 $r''\in\{1,2,\cdots,\tilde{k}_{\tilde{j}}\}$ to represent the possible
 new components. Now, we can apply the hypothesis of induction to derive from
 \eqref{eq_tree_line_estimate_induction_final} that 
\begin{align*}
&(\text{the left-hand side of
 \eqref{eq_tree_line_estimate_induction}})\\  
&\le
 \prod_{j=1}^n\left(\sum_{q_j=0}^1\|J_{m_j}\|_{l,q_j}\right)\prod_{k=2}^n\left(
\sum_{r_k=0}^1\|\widetilde{C_o}\|_{l,r_k}\right)\\
&\quad\cdot\Bigg(1_{q\le
 n}1_{\sum_{j=1}^nq_j+\sum_{k=2}^nr_k=a}\|\widetilde{C_o}\|_{l,0}\|J_{m_{n+1}}\|_{l,0}\\
&\qquad+1_{q=n+1}\sum_{q_{\tilde{j}}=0}^11_{\sum_{j=1}^nq_j+\sum_{k=2}^nr_k=q_{\tilde{j}}}\\
&\qquad\quad\cdot\sum_{r_{n+1}=0}^1\|\widetilde{C_o}\|_{l,r_{n+1}}\sum_{q_{n+1}=0}^1
\|J_{m_{n+1}}\|_{l,q_{n+1}}1_{q_{\tilde{j}}+q_{n+1}+r_{n+1}=a}\Bigg),
\end{align*}
which is less than or equal to the right-hand side of
 \eqref{eq_tree_line_estimate_induction} for $n+1$. Thus, the induction
 concludes that the
 inequality \eqref{eq_tree_line_estimate_induction} holds for all $n\in
 \N_{\ge 2}$.  
\end{proof}

In order to deal with combinatorial factors in the tree expansion, we
use the following concise estimate, though it is not
quantitatively optimal.
\begin{lemma}\label{lem_tree_combinatorial_factor}
For any $m_j\in \N$ $(j=1,2,\cdots,n)$ the following inequality holds.
\begin{align}
&\frac{2^{n-1}}{n!}\sum_{T\in \T_n} 1_{n_j(T)\le m_j (\forall j\in
 \{1,2,\cdots,n\})}\prod_{j=1}^n\left(m_j\left(\begin{array}{c}m_j-1\\
					       n_j(T)-1\end{array}\right)(n_j(T)-1)!\right)\label{eq_tree_combinatorial_factor}\\
&\le 2^{2\sum_{j=1}^nm_j}.\notag
\end{align}
\end{lemma}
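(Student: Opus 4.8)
The plan is to reduce the sum over trees to a sum over degree sequences, using the refined Cayley/Prüfer enumeration, and then to estimate crudely. Throughout I assume $n\ge 2$, which is all that is needed (the tree expansion \eqref{eq_explicit_tree_formula}, hence the present lemma in its applications, concerns $n\ge 2$).

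The key input is the classical fact that for any $(d_1,\dots,d_n)\in\N^n$ with $\sum_{j=1}^n d_j=2(n-1)$, the number of trees $T\in\T_n$ with $n_j(T)=d_j$ for every $j$ equals the multinomial coefficient $(n-2)!\big/\prod_{j=1}^n(d_j-1)!$, and conversely every $T\in\T_n$ yields such a sequence, with $n_j(T)\ge 1$ and $\sum_j n_j(T)=2(n-1)$. Grouping the sum in \eqref{eq_tree_combinatorial_factor} according to the degree sequence $(n_1(T),\dots,n_n(T))$, and noting that $\binom{m_j-1}{d_j-1}=0$ unless $d_j\le m_j$ so that the indicator $1_{n_j(T)\le m_j}$ becomes automatic, the left-hand side of \eqref{eq_tree_combinatorial_factor} becomes
\[
\frac{2^{n-1}(n-2)!}{n!}\sum_{\substack{d_1,\dots,d_n\ge 1\\ \sum_j d_j=2(n-1)}}\ \prod_{j=1}^n m_j\binom{m_j-1}{d_j-1},
\]
where I have used the identity $\dfrac{1}{(d_j-1)!}\,m_j\binom{m_j-1}{d_j-1}(d_j-1)!=m_j\binom{m_j-1}{d_j-1}$.

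Next I would discard the constraint $\sum_j d_j=2(n-1)$ and let each $d_j$ run freely over $\{1,\dots,m_j\}$; this only enlarges the (nonnegative) sum, which then factorizes, and $\sum_{d=1}^{m}\binom{m-1}{d-1}=2^{m-1}$ yields the bound
\[
\frac{2^{n-1}(n-2)!}{n!}\prod_{j=1}^n m_j\,2^{m_j-1}=\frac{2^{n-1}}{n(n-1)}\prod_{j=1}^n m_j\,2^{m_j-1}.
\]
Finally, using $m\le 2^{m-1}$ for every $m\in\N$ (an immediate induction), we get $\prod_j m_j 2^{m_j-1}\le\prod_j 2^{2(m_j-1)}=2^{2\sum_j m_j-2n}$, so the whole expression is at most $2^{2\sum_j m_j}\big/\big(n(n-1)2^{n+1}\big)\le 2^{2\sum_j m_j}$, which is the claimed inequality.

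The only ingredient that is not entirely mechanical is the enumeration of labeled trees by degree sequence; after that the estimate is deliberately wasteful, so the only points requiring care are checking that relaxing the degree-sum constraint is legitimate (it is, since only nonnegative terms are added) and performing the combinatorial cancellation of $m_j\binom{m_j-1}{d_j-1}(d_j-1)!$ against the $1/(d_j-1)!$ coming from the tree count correctly. I expect no genuine obstacle beyond bookkeeping.
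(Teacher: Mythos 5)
Your proof is correct and follows essentially the same route as the paper's: Cayley's enumeration of labeled trees by degree sequence, cancellation of the $(d_j-1)!$ factors, relaxation of the constraint $\sum_j d_j = 2(n-1)$ to factorize the sum into $\prod_j m_j 2^{m_j-1}$, and the elementary bound $m\le 2^{m-1}$ to finish. The only cosmetic difference is that you spell out the final numerical estimate slightly more explicitly than the paper does.
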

\begin{proof}
By Cayley's theorem on the number of trees with fixed incidence
 numbers we can replace the sum over $T\in \T_n$ by the sum over
 possible incidence numbers. As the result, we have that
\begin{align*}
&\text{ (the left-hand side of \eqref{eq_tree_combinatorial_factor}) }\\
&
 =\frac{2^{n-1}}{n!}\prod_{i=1}^n\left(\sum_{l_i=1}^{m_i}\right)1_{\sum_{i=1}^nl_i=2(n-1)}\frac{(n-2)!}{\prod_{k=1}^n(l_k-1)!}\\
&\quad\cdot\prod_{j=1}^n\left(m_j\left(\begin{array}{c}m_j-1\\l_j-1\end{array}\right)(l_j-1)!\right)\\
&\le \frac{2^{n-1}}{n(n-1)}\prod_{j=1}^n\left(m_j\sum_{l_j=1}^{m_j}\left(\begin{array}{c}m_j-1\\l_j-1\end{array}\right)\right)\le  2^{2\sum_{j=1}^nm_j}.
\end{align*}
\end{proof}

\begin{lemma}\label{lem_tree_sum_kernels}
Take any $m_j\in \N_{\ge 2}$ $(j=1,2,\cdots,n)$. Let $J_{m_j}:I^{m_j}\to
 \C$ $(j=1,2,\cdots,n)$ be anti-symmetric functions. Then, the following
 inequalities hold.
\begin{enumerate}
\item\label{item_tree_sum_kernels_no_fixed} For any $X_{1,1}\in I$,
\begin{align*}
&\Bigg|\frac{1}{n!}\sum_{T\in
 \T_n}Ope(T,C_o)\left(\frac{1}{h}\right)^{m_1-1}\sum_{(X_{1,2},X_{1,3},\cdots,X_{1,m_1})\in
 I^{m_1-1}}J_{m_1}(\bX_1)\\
&\quad\cdot\prod_{j=2}^n
\Bigg(\left(\frac{1}{h}\right)^{m_j}\sum_{\bX_j\in
 I^{m_j}}J_{m_j}(\bX_j)\Bigg)\prod_{k=1\atop
 order}^n\psi_{\bX_k}^k\Bigg|_{\psi^j=0\atop (\forall j\in
 \{1,2,\cdots,n\})}\Bigg|\\
&\le 2^{2\sum_{j=1}^nm_j}D_{et}^{\frac{1}{2}\sum_{j=1}^nm_j-n+1}
\|\widetilde{C_o}\|_{l,0}^{n-1}\prod_{j=1}^n\|J_{m_j}\|_{l,0}.
\end{align*}
\item\label{item_tree_sum_kernels_fixed}
In addition, assume that $k_j\in\{0,1,\cdots,m_j-1\}$ $(\forall
     j\in\{1,2,\cdots,$ $n\})$, $p,q\in\{1,2,\cdots,n\}$,
      $k_1,k_p,k_q\ge 1$,
     $r\in\{1,2,\cdots,k_q\}$, $j'\in\{0,1,\cdots,$ $d\}$ and $a\in\{0,1\}$. Then, for any $Y_{1,1}\in I$,
\begin{align*}
&\Bigg|\frac{1}{n!}\sum_{T\in \T_n}Ope(T,C_o)\left(\frac{1}{h}\right)^{m_1-1}\sum_{\bX_1\in
 I^{m_1-k_1}}\sum_{(Y_{1,2},Y_{1,3},\cdots,Y_{1,k_1})\in I^{k_1-1}}J_{m_1}(\bX_1,\bY_1)\\
&\quad\cdot\prod_{j=2}^n\Bigg(\left(\frac{1}{h}\right)^{m_j}\sum_{\bX_j\in
 I^{m_j-k_j}}\sum_{\bY_j\in I^{k_j}}J_{m_j}(\bX_j,\bY_j)\Bigg)\\
&\quad\cdot d_{j'}(Y_{1,1},Y_{q,r})^ae^{\sum_{j=0}^d(\fw(l)d_j(Y_{1,1},Y_{p,1}))^{\fr}}\prod_{k=1\atop order}^n\psi_{\bX_k}^k\Bigg|_{\psi^j=0\atop (\forall j\in
 \{1,2,\cdots,n\})}\Bigg|\\
&\le
 2^{2\sum_{j=1}^n(m_j-k_j)}D_{et}^{\frac{1}{2}\sum_{j=1}^n(m_j-k_j)-n+1}\\
&\quad\cdot \prod_{j=1}^n\Bigg(\sum_{q_j=0}^1\|J_{m_j}\|_{l,q_j}\Bigg)
\prod_{k=2}^n\Bigg(\sum_{r_k=0}^1\|\widetilde{C_o}\|_{l,r_k}\Bigg)
1_{\sum_{j=1}^nq_j+\sum_{k=2}^nr_k=a}.
\end{align*}
\end{enumerate}
\end{lemma}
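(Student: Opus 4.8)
The plan is to deduce both inequalities directly from the per-tree estimates of Lemma \ref{lem_tree_line_kernels} combined with the combinatorial bound of Lemma \ref{lem_tree_combinatorial_factor}. First, by the triangle inequality I would bring the sum $\frac{1}{n!}\sum_{T\in\T_n}$ inside the modulus, reducing the problem to summing the bounds on $\bigl|Ope(T,C_o)(\cdots)\bigr|$ over $T\in\T_n$.

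For \eqref{item_tree_sum_kernels_no_fixed}, apply Lemma \ref{lem_tree_line_kernels} \eqref{item_tree_kernels_no_fixed} to each $T$. The right-hand side there factors into a $T$-independent part $D_{et}^{\frac{1}{2}\sum_j m_j-n+1}\,\|\widetilde{C_o}\|_{l,0}^{n-1}\,\prod_j\|J_{m_j}\|_{l,0}$ (the exponents of $D_{et}$ and $\|\widetilde{C_o}\|_{l,0}$ depend only on $n$ and the $m_j$, since every tree on $n$ vertices has exactly $n-1$ lines) and a $T$-dependent factor $2^{n-1}\,1_{n_j(T)\le m_j\,(\forall j)}\prod_{j=1}^n\bigl(m_j\binom{m_j-1}{n_j(T)-1}(n_j(T)-1)!\bigr)$. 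Pulling out the $T$-independent part and applying Lemma \ref{lem_tree_combinatorial_factor} to bound $\frac{1}{n!}\sum_{T\in\T_n}$ of the remaining factor by $2^{2\sum_j m_j}$ gives the claim.

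For \eqref{item_tree_sum_kernels_fixed}, proceed in the same way, using Lemma \ref{lem_tree_line_kernels} \eqref{item_tree_kernels_fixed}. Here the $T$-independent part is $D_{et}^{\frac{1}{2}\sum_j(m_j-k_j)-n+1}\prod_{j=1}^n\bigl(\sum_{q_j=0}^1\|J_{m_j}\|_{l,q_j}\bigr)\prod_{k=2}^n\bigl(\sum_{r_k=0}^1\|\widetilde{C_o}\|_{l,r_k}\bigr)1_{\sum_j q_j+\sum_k r_k=a}$, and the $T$-dependent factor is $2^{n-1}\,1_{n_j(T)\le m_j-k_j\,(\forall j)}\prod_{i=1}^n\bigl((m_i-k_i)\binom{m_i-k_i-1}{n_i(T)-1}(n_i(T)-1)!\bigr)$. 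Since $k_j\in\{0,1,\dots,m_j-1\}$ forces $m_j-k_j\in\N$, Lemma \ref{lem_tree_combinatorial_factor} applies with each $m_j$ replaced by $m_j-k_j$, bounding $\frac{1}{n!}\sum_{T\in\T_n}$ of this factor by $2^{2\sum_j(m_j-k_j)}$, which yields the asserted estimate.

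The substantive work has already been carried out in Lemmas \ref{lem_tree_line_kernels}, \ref{lem_tree_line_estimate_induction} and \ref{lem_tree_combinatorial_factor}; the only point that genuinely needs attention here is to verify that the support restrictions $1_{n_j(T)\le m_j}$, respectively $1_{n_j(T)\le m_j-k_j}$, produced by Lemma \ref{lem_tree_line_kernels} coincide exactly with the hypotheses under which Lemma \ref{lem_tree_combinatorial_factor} is formulated, so that the combinatorial sum over trees is controlled with no further loss. Beyond that it is a matter of bookkeeping to confirm that every factor other than the incidence-number-dependent combinatorics is constant over $\T_n$ and hence pulls out of the sum.
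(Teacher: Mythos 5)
Your proposal is correct and is essentially the paper's own proof: the paper likewise sums the right-hand sides of the per-tree bounds from Lemma \ref{lem_tree_line_kernels} over $T\in\T_n$ and invokes Lemma \ref{lem_tree_combinatorial_factor} (with $m_j$ replaced by $m_j-k_j$ in the second case) to absorb the incidence-number combinatorics. Your additional check that $m_j-k_j\ge 1$ so the combinatorial lemma applies is the right point of care, and nothing further is needed.
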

\begin{proof}
We only need to sum the right-hand sides of
 \eqref{eq_tree_kernels_no_fixed} and \eqref{eq_tree_line_kernels} over
 trees. The claimed upper bounds follow from Lemma
 \ref{lem_tree_combinatorial_factor}.
\end{proof}

Here let us recall the definition of $T^{(n)}(\psi)$ $(n\in\N_{\ge
2})$. With $J(\psi)\in \bigwedge \cV$ satisfying $J_m(\psi)=0$ if $m\notin
2\N\cup \{0\}$,  
\begin{align*}
T^{(n)}(\psi)=\frac{1}{n!}\sum_{T\in \T_n}Ope(T,C_o)\prod_{j=1}^nJ(\psi^j+\psi)\Bigg|_{\psi^j=0\atop (\forall j\in
 \{1,2,\cdots,n\})}.
\end{align*}
We conclude this subsection by proving the next lemma.
\begin{lemma}\label{lem_tree_formula_general_bound}
 The following inequalities hold for any $n\in \N_{\ge 2}$.
\begin{enumerate}
\item\label{item_tree_formula_bound_0th}
\begin{align*}
|T_0^{(n)}|\le\frac{N}{h} D_{et}^{-n+1}\|\widetilde{C_o}\|_{l,0}^{n-1}\left(
\sum_{m=2}^N2^{2m}D_{et}^{\frac{m}{2}}\|J_m\|_{l,0}\right)^n.
\end{align*}
\item\label{item_tree_formula_bound_higher}
For any $m\in \{2,3,\cdots,N\}$ and $a\in\{0,1\}$,
\begin{align*}
\|T_m^{(n)}\|_{l,a}\le& 2^{-2m} D_{et}^{-\frac{m}{2}-n+1}
\prod_{i=1}^n\Bigg(\sum_{q_i=0}^1\Bigg)\prod_{j=2}^n\Bigg(\sum_{r_j=0}^1\Bigg)
1_{\sum_{i=1}^nq_i+\sum_{j=2}^nr_j=a}\\
&\cdot\prod_{k=2}^n\|\widetilde{C_o}\|_{l,r_k}\prod_{p=1}^n\Bigg(\sum_{m_p=2}^N2^{3m_p} D_{et}^{\frac{m_p}{2}}\|J_{m_p}\|_{l,q_p}\Bigg)1_{\sum_{j=1}^nm_j-2n+2\ge
 m}.
\end{align*}
\end{enumerate}
\end{lemma}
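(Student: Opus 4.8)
The strategy is to unfold the definition of $T^{(n)}(\psi)$ into a sum over the degrees of the input kernels $J_{m_j}$ and then insert the tree-expansion estimates of Lemma~\ref{lem_tree_sum_kernels}.

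First I would expand each factor $J(\psi^j+\psi)=\sum_{m_j\in 2\N}J_{m_j}(\psi^j+\psi)$ and, inside $J_{m_j}(\psi^j+\psi)$, split the monomials according to how many of the $m_j$ legs carry $\psi^j$ (the \emph{internal} legs, to be contracted by $Ope(T,C_o)$) versus $\psi$ (the \emph{external} legs, which survive in $T^{(n)}(\psi)$). Since the $J(\psi^j+\psi)$ lie in the even part of the Grassmann algebra they commute, so for the degree-$m$ component one gets a sum over $(m_j)_{j=1}^n$ with $m_j\ge 2$ and over $(k_j)_{j=1}^n$ with $\sum_j k_j=m$, weighted by $\prod_j\binom{m_j}{k_j}\le\prod_j 2^{m_j}$ and signs; every vertex must keep at least its $n_j(T)$ tree legs internal, and full contraction of the $\sum_j(m_j-k_j)$ internal legs by the $n-1$ tree lines together with the remaining pairs from the exponential factor forces $\sum_j m_j-2n+2\ge m$, which is the indicator in claim~(2).

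For claim~(1), $T^{(n)}_0$ is the piece with $k_j=0$ for all $j$; fixing the first argument of the vertex-$1$ kernel and summing it over $I$ gives the prefactor $N/h$, and the remaining internal sum is bounded by Lemma~\ref{lem_tree_sum_kernels}(1). Substituting that estimate, pulling $D_{et}^{-n+1}\|\widetilde{C_o}\|_{l,0}^{n-1}$ out of the product, and summing $2^{2m_j}D_{et}^{m_j/2}\|J_{m_j}\|_{l,0}$ over $m_j\in\{2,\dots,N\}$ reproduces the $n$-th power in claim~(1). For claim~(2), I would take the distinguished argument $X$ in $\|T^{(n)}_m\|_{l,a}=\sup_X\tfrac{1}{h^{m-1}}\sum_{\bY}(\cdots)|T^{(n)}_m(X,\bY)|$ to be the first external leg of the first vertex carrying an external leg, relabel that vertex as vertex~$1$ (allowed since $\tfrac{1}{n!}\sum_{T\in\T_n}$ is relabeling-invariant and the kernels are anti-symmetric), identify the argument carrying $e^{\sum_j(\fw(l)d_j(X,\cdot))^{\fr}}$ as $Y_{p,1}$ and, when $a=1$, the argument carrying $d_{j'}(X,\cdot)$ as $Y_{q,r}$, with $k_1,k_p,k_q\ge 1$. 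Lemma~\ref{lem_tree_sum_kernels}(2) then bounds the resulting term by $2^{2\sum_j(m_j-k_j)}D_{et}^{\frac12\sum_j(m_j-k_j)-n+1}$ times $\sum_{\sum_j q_j+\sum_k r_k=a}\prod_j\|J_{m_j}\|_{l,q_j}\prod_k\|\widetilde{C_o}\|_{l,r_k}$; writing $2^{2\sum_j(m_j-k_j)}D_{et}^{\frac12\sum_j(m_j-k_j)-n+1}=2^{-2m}D_{et}^{-m/2-n+1}\prod_j 2^{2m_j}D_{et}^{m_j/2}$ using $\sum_j k_j=m$, absorbing $\prod_j\binom{m_j}{k_j}$ into the gap between $2^{2m_j}$ and $2^{3m_j}$, dropping the coupling among the $k_j$ so as to sum each freely, and finally summing over $m_j\ge 2$ yields precisely the bound in claim~(2).

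The delicate point is the set-up at the beginning of the claim~(2) argument: one must check that after pinning down $X$ (and, for $a=1$, the two $d$-weighted variables) every surviving term really does match a hypothesis of Lemma~\ref{lem_tree_sum_kernels}(2). This needs a short case split according to whether the weighted external variable(s) lie on vertex~$1$ or on another vertex. When they lie on another vertex, Lemma~\ref{lem_tree_sum_kernels}(2) applies verbatim, since it already transports the weight along the tree path using $(x+y)^{\fr}\le x^{\fr}+y^{\fr}$. When they lie on vertex~$1$, one instead reorders the arguments of $J_{m_1}$ so that the two distinguished legs become its first two arguments, whereby the weight is supplied by the built-in exponential in $\|J_{m_1}\|_{l,0}$ (respectively $\|J_{m_1}\|_{l,1}$), and the tree-path estimate is applied without an external weight; because $\|\widetilde{C_o}\|_{l,r_k}$ dominates the unweighted $L^1$-norm, this only loosens the constant. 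Once this correspondence is established, the remainder is the routine bookkeeping of geometric sums and the crude estimate $\prod_j\binom{m_j}{k_j}\le 2^{\sum_j m_j}$, which the slack built into the stated constants is designed to absorb.
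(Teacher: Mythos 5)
Your proposal is correct and follows essentially the same route as the paper: expand $T^{(n)}_m$ over the vertex degrees $m_j$ and external-leg counts $k_j$ via the explicit kernel formula, apply Lemma \ref{lem_tree_sum_kernels} (part (1) for the constant term with the $N/h$ prefactor from the free sum over the pinned variable, part (2) after relabeling for $m\ge 2$), and absorb $\prod_j\binom{m_j}{k_j}1_{\sum_j k_j=m}\le 2^{\sum_j m_j}$ into the passage from $2^{2m_j}$ to $2^{3m_j}$, exactly as the paper does. The "delicate point" you flag is handled in the paper by the phrase "by changing the numbering if necessary" together with the anti-symmetry of the kernels, so your extra case discussion is sound but not a departure from the paper's argument.
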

\begin{proof}
First note that the constant part $J_0$ of the input $J(\psi)$ does not
 affect the result since the operator $\prod_{l\in T}\D_l$ erases
 it. By using the anti-symmetric property of the kernels we have that
\begin{align}
T_m^{(n)}(\psi)&=\cP_m\frac{1}{n!}\sum_{T\in \T_n}Ope(T,C_o)\label{eq_tree_formula_explicit_polynomial}\\
&\quad\cdot\prod_{j=1}^n\Bigg(\sum_{m_j=2}^N\left(\frac{1}{h}\right)^{m_j}\sum_{\bX_j\in
 I^{m_j}}J_{m_j}(\bX_j)(\psi^j+\psi)_{\bX_j}\Bigg)\Bigg|_{\psi^j=0\atop (\forall
 j\in \{1,2,\cdots,n\})}\notag\\
&=\prod_{i=1}^n\Bigg(\sum_{m_i=2}^N\sum_{k_i=0}^{m_i-1}\left(\begin{array}{c}m_i\\
						     k_i\end{array}\right)\left(\frac{1}{h}\right)^{k_i}\sum_{\bY_i\in I^{k_i}}\Bigg)1_{\sum_{j=1}^nm_j-2n+2\ge m}1_{\sum_{j=1}^nk_j= m}\notag\\
&\quad\cdot \frac{\eps_{\pm}}{n!}\sum_{T\in \T_n}Ope(T,C_o)
\prod_{j=1}^n\Bigg(\left(\frac{1}{h}\right)^{m_j-k_j}\sum_{\bX_j\in
 I^{m_j-k_j}}J_{m_j}(\bX_j,\bY_j)\Bigg)\notag\\
&\quad\cdot\prod_{k=1\atop order}^n\psi_{\bX_k}^k
\Bigg|_{\psi^j=0\atop (\forall j\in \{1,2,\cdots,n\})}\prod_{p=1\atop
 order}^n\psi_{\bY_p},\notag
\end{align}
where the factor $\eps_{\pm}\in\{1,-1\}$ depends only on
 $(m_i)_{i=1}^n$, $(k_i)_{i=1}^n$ and the constraint
 $\sum_{j=1}^nm_j-2n+2\ge m$ is due to the fact that $\prod_{l\in
 T}\D_l$ erases $2n-2$ Grassmann variables. 
By the uniqueness of anti-symmetric kernels we can characterize the
 kernel of $T_m^{(n)}(\psi)$ as follows. For any $\bY\in I^m$,
\begin{align}
T_m^{(n)}(\bY)&=\prod_{i=1}^n\Bigg(\sum_{m_i=2}^N\sum_{k_i=0}^{m_i-1}\left(
\begin{array}{c} m_i\\ k_i\end{array}\right)\sum_{\bY_i\in
 I^{k_i}}\Bigg)1_{\sum_{j=1}^nm_j-2n+2\ge m}1_{\sum_{j=1}^nk_j= m}\label{eq_tree_formula_explicit_kernel}\\
&\quad\cdot\frac{1}{m!}\sum_{\s\in \S_m}\sgn(\s)1_{\bY_{\s}=(\bY_1,\bY_2,\cdots,\bY_n)}
\frac{\eps_{\pm}}{n!}\sum_{T\in \T_n}Ope(T,C_o)\notag\\
&\quad\cdot\prod_{j=1}^n\Bigg(\left(\frac{1}{h}\right)^{m_j-k_j}\sum_{\bX_j\in
 I^{m_j-k_j}}J_{m_j}(\bX_j,\bY_j)\Bigg)\prod_{k=1\atop order}^n\psi_{\bX_k}^k
\Bigg|_{\psi^j=0\atop (\forall j\in \{1,2,\cdots,n\})}.\notag
\end{align}
If $m\ge 2$, by changing the numbering if necessary we can apply Lemma \ref{lem_tree_sum_kernels}
 \eqref{item_tree_sum_kernels_fixed} to
 \eqref{eq_tree_formula_explicit_kernel} to deduce that 
\begin{align*}
\|T_m^{(n)}\|_{l,a}&\le
 \prod_{i=1}^n\Bigg(\sum_{m_i=2}^N\sum_{k_i=0}^{m_i-1}
\left(\begin{array}{c} m_i\\
      k_i\end{array}\right)\Bigg)1_{\sum_{j=1}^nm_j-2n+2\ge
 m}1_{\sum_{j=1}^nk_j= m}\\
&\quad\cdot
 2^{2\sum_{j=1}^nm_j-2m}D_{et}^{\frac{1}{2}\sum_{j=1}^nm_j-\frac{m}{2}-n+1}\\
&\quad\cdot \prod_{j=1}^n\Bigg(\sum_{q_j=0}^1\|J_{m_j}\|_{l,q_j}\Bigg)
\prod_{k=2}^n\Bigg(\sum_{r_k=0}^1\|\widetilde{C_o}\|_{l,r_k}\Bigg)
1_{\sum_{j=1}^nq_j+\sum_{k=2}^nr_k=a}.
\end{align*}
Then, by substituting the inequality 
$$
\prod_{i=1}^n\Bigg(\sum_{k_i=0}^{m_i-1}\left(\begin{array}{c} m_i\\
      k_i\end{array}\right)\Bigg)1_{\sum_{j=1}^nk_j=m}\le 2^{\sum_{j=1}^nm_j},
$$
we obtain the inequality claimed in
 \eqref{item_tree_formula_bound_higher}. By applying Lemma
 \ref{lem_tree_sum_kernels} \eqref{item_tree_sum_kernels_no_fixed} to
 \eqref{eq_tree_formula_explicit_kernel} we can derive the
 inequality claimed in \eqref{item_tree_formula_bound_0th}.
\end{proof}

\subsection{Invariance of Grassmann polynomials}\label{subsec_invariance_general}
Here we show that Grassmann polynomials produced by the free integration
or the tree expansion inherit symmetric properties from the covariance
$C_o$ and the input polynomial $J(\psi)$. The
general results summarized in this subsection will have practical applications in
Section \ref{sec_IR_model}. 

Let $S$ be a bijective map from $I$ to $I$ and $Q$ be a map from $I$ to
$\R$. For $m\in \N$, define $S_m:I^m\to I^m$, $Q_m:I^m\to \R$ by 
\begin{align*}
&S_m(X_1,X_2,\cdots,X_m):=(S(X_1),S(X_2),\cdots,S(X_m)),\\ 
&Q_m(X_1,X_2,\cdots,X_m):=\sum_{j=1}^mQ(X_j).
\end{align*}
For $X \in I$, set $(\cR\psi)_{X}:=e^{iQ(S(X))}\psi_{S(X)}$. For
$f(\psi)\in \bigwedge \cV$ we define $f(\cR\psi)\in \bigwedge \cV$ by
replacing each $\psi_X$ by $(\cR\psi)_X$ $(X\in I)$ inside $f(\psi)$. More 
precisely, for
$f(\psi)=\sum_{m=0}^N\left(\frac{1}{h}\right)^m\sum_{\bX\in
I^m}f_m(\bX)\psi_{\bX}$, 
\begin{align*}
f(\cR\psi):=\sum_{m=0}^N\left(\frac{1}{h}\right)^m\sum_{\bX\in
I^m}f_m(\bX)e^{iQ_m(S_m(\bX))}\psi_{S_m(\bX)}.
\end{align*}
For $f(\psi)\in\bigwedge \cV$ we define $\overline{f}(\psi)\in \bigwedge \cV$ by
$$
\overline{f}(\psi):=\sum_{m=0}^N\left(\frac{1}{h}\right)^m\sum_{\bX\in
I^m}\overline{f_m(\bX)}\psi_{\bX}.
$$  
\begin{lemma}\label{lem_free_tree_invariance_general}
Let $F(\psi)$, $T^{(n)}(\psi)\in \bigwedge \cV$ $(n\in \N_{\ge 2})$ be
 defined by \eqref{eq_definition_free_general} and the right-hand side of
 \eqref{eq_explicit_tree_formula} respectively with the covariance
 $C_o:I_0^2\to\C$ and the input $J(\psi)$
 satisfying $J_m(\psi)=0$ if $m\notin 2\N\cup\{0\}$. Let
 $\widetilde{C_o}:I^2\to \C$ be the anti-symmetric extension of $C_o$
 defined by \eqref{eq_anti_symmetric_extension_covariance}. Then, the
 following statements hold true.
\begin{enumerate}
\item\label{item_invariance_simple}
If 
\begin{align*}
J(\cR\psi)=J(\psi),\
 \widetilde{C_o}(\bX)=e^{iQ_2(S_2(\bX))}\widetilde{C_o}(S_2(\bX)),\
 (\forall \bX\in I^2), 
\end{align*}
then,
\begin{align*}
&F(\cR\psi)=F(\psi),\\
&T^{(n)}(\cR\psi)=T^{(n)}(\psi),\ (\forall n\in \N_{\ge 2}).
\end{align*}
\item\label{item_invariance_variables}
Let $a\in \N$ and $D$ be a domain of $\C^a$ satisfying that 
$\overline{\bz}\in \overline{D}$ $(\forall \bz\in \overline{D})$,
where $\overline{D}$ denotes the closure of $D$. 
Additionally assume that $J(\psi)$ and $C_o$ are parameterized by $\bz\in
     \overline{D}$ and write $J(\bz)(\psi)$, $C_o(\bz)$,
     $\widetilde{C_o}(\bz)$, $F(\bz)(\psi)$, $T^{(n)}(\bz)(\psi)$ in place of $J(\psi)$, $C_o$,
     $\widetilde{C_o}$, $F(\psi)$, $T^{(n)}(\psi)$ respectively. If  
     \begin{align*}
     &\overline{J(\overline{\bz})}(\cR\psi)=J(\bz)(\psi),\
      \widetilde{C_o}(\bz)(\bX)=e^{-iQ_2(S_2(\bX))}\overline{\widetilde{C_o}(\overline{\bz})(S_2(\bX))},\\
&(\forall \bz\in\overline{D},\bX\in I^2),
     \end{align*}
then,
\begin{align*}
&\overline{F(\overline{\bz})}(\cR\psi)=F(\bz)(\psi),\\
&\overline{T^{(n)}(\overline{\bz})}(\cR\psi)=T^{(n)}(\bz)(\psi),\
 (\forall \bz\in \overline{D},n\in \N_{\ge 2}).
\end{align*}
\end{enumerate}
\end{lemma}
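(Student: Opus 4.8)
The plan is to reduce both statements to a single intertwining property: the operators that build $F(\psi)$ and $T^{(n)}(\psi)$ out of $J(\psi)$ --- namely the Grassmann Gaussian integral $\int\cdot\,d\mu_{C_o}(\psi^1)$ and the tree operators $Ope(T,C_o)$ from \eqref{eq_explicit_tree_formula} --- commute with the substitution $\cR$ as soon as the covariance obeys the stated transformation law, because all the remaining ingredients are manifestly $\cR$-compatible. First I would record the elementary facts. Extending $\cR$ diagonally to $\bigwedge(\cV\oplus\bigoplus_{j=1}^n\cV_j)$ (same $S,Q$ on $\cV$ and on every $\cV_j$), one checks that the diagonal sums $\psi^j_X+\psi_X$ transform exactly like a single Grassmann variable, so $J(\cR\psi)=J(\psi)$ forces $\cR$ to fix each $J(\psi^j+\psi)$, hence $\prod_{j=1}^nJ(\psi^j+\psi)$; moreover $\cR$ commutes with the projections $\cP_m$ and with the operation $\cdot\,|_{\psi^j=0}$. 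One also rewrites the hypotheses pointwise: $J(\cR\psi)=J(\psi)$ is equivalent to $J_m(S_m(\bX))=e^{iQ_m(S_m(\bX))}J_m(\bX)$, and the covariance condition in \eqref{item_invariance_simple} to $\widetilde{C_o}(S_2(\bX))=e^{-iQ_2(S_2(\bX))}\widetilde{C_o}(\bX)$.

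For \eqref{item_invariance_simple} the heart of the argument is the intertwining of the Grassmann Laplacians with $\cR$. Using \eqref{eq_laplacian_anti_symmetrization} one writes each line operator as $\D_{p,q}(C_o)+\D_{q,p}(C_o)=-2\sum_{\bZ\in I^2}\widetilde{C_o}(\bZ)\frac{\partial}{\partial\psi^p_{Z_1}}\frac{\partial}{\partial\psi^q_{Z_2}}$ (and likewise $\D_{r,r}(C_o)$ with a single superscript); combining this with $\frac{\partial}{\partial\psi^r_Z}\circ\cR=e^{iQ(Z)}\,\cR\circ\frac{\partial}{\partial\psi^r_{S^{-1}(Z)}}$ and reindexing $Z_i\mapsto S(Z_i)$, the commutation $(\D_{p,q}+\D_{q,p})(C_o)\circ\cR=\cR\circ(\D_{p,q}+\D_{q,p})(C_o)$ drops out precisely because $e^{iQ_2(S_2(\bZ))}\widetilde{C_o}(S_2(\bZ))=\widetilde{C_o}(\bZ)$, which is the hypothesis. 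Since $e^{\sum_{r,s}M_{at}(T,\xi,\bs)(r,s)\D_{r,s}(C_o)}$ (with $M_{at}$ symmetric and $M_{at}(r,r)=1$) and $\prod_{\{p,q\}\in T}(\D_{p,q}(C_o)+\D_{q,p}(C_o))$ are polynomials in these symmetrized Laplacians whose scalar coefficients do not involve $S$ or $Q$, the operators $ope(T,C_o)$ and $Ope(T,C_o)$ commute with $\cR$ as well (the $\bs$-integral and $\xi$-sum pass through). The analogous statement for the free integration is $\int G(\cR\psi^1)\,d\mu_{C_o}(\psi^1)=\int G(\psi^1)\,d\mu_{C_o}(\psi^1)$, which follows from the change-of-variables formula for Grassmann Gaussian integrals together with the hypothesis that the $\cR$-transformed covariance coincides with $\widetilde{C_o}$ (equivalently, from the explicit kernel \eqref{eq_characterization_free_kernel} and a Wick expansion of $\int\psi^1_{\bY}\,d\mu_{C_o}(\psi^1)$). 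Feeding these in,
\begin{align*}
F(\cR\psi)&=\int J(\cR\psi+\psi^1)\,d\mu_{C_o}(\psi^1)=\int J(\cR\psi+\cR\psi^1)\,d\mu_{C_o}(\psi^1)=\int J(\psi+\psi^1)\,d\mu_{C_o}(\psi^1)=F(\psi),\\
T^{(n)}(\cR\psi)&=\frac{1}{n!}\sum_{T\in\T_n}Ope(T,C_o)\,\cR\Bigl[\prod_{j=1}^nJ(\psi^j+\psi)\Bigr]\bigg|_{\psi^j=0}=\frac{1}{n!}\sum_{T\in\T_n}Ope(T,C_o)\prod_{j=1}^nJ(\psi^j+\psi)\bigg|_{\psi^j=0}=T^{(n)}(\psi).
\end{align*}

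For \eqref{item_invariance_variables} I would run the same argument through the conjugate-linear operation $\Theta$ that sends a $\bz$-family $f(\bz)(\psi)$ to $\overline{f(\overline{\bz})}(\psi)$. Complex conjugation commutes with the anti-symmetric extension \eqref{eq_anti_symmetric_extension_covariance} and passes through every Wick contraction, so it turns $\int\cdot\,d\mu_{C_o(\bz)}(\psi^1)$ into $\int\cdot\,d\mu_{\overline{C_o(\overline{\bz})}}(\psi^1)$ and $\D_{r,s}(C_o(\bz))$ into $\D_{r,s}(\overline{C_o(\overline{\bz})})$; the covariance condition in \eqref{item_invariance_variables} is exactly the statement that applying the $\cR$-twist to $\overline{\widetilde{C_o}(\overline{\bz})}$ recovers $\widetilde{C_o}(\bz)$. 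Combining this with $\overline{J(\overline{\bz})}(\cR\psi)=J(\bz)(\psi)$ and repeating the two chains of equalities above --- now with the conjugation interspersed --- yields $\overline{F(\overline{\bz})}(\cR\psi)=F(\bz)(\psi)$ and $\overline{T^{(n)}(\overline{\bz})}(\cR\psi)=T^{(n)}(\bz)(\psi)$ for all $\bz\in\overline{D}$, $n\in\N_{\ge 2}$.

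I expect the only real obstacle to be the bookkeeping of the phase factors $e^{\pm iQ_2(S_2(\bX))}$: one must verify the intertwining $\D_{r,s}(C_o)\circ\cR=\cR\circ\D_{r,s}(C_o)$ with care about whether $S$ or $S^{-1}$ appears and about the sign index in $I$, and in \eqref{item_invariance_variables} check that, after conjugation, the $\cR$-twist of the conjugated covariance matches $\widetilde{C_o}(\bz)$ exactly --- this is where the precise, slightly different forms of the two covariance hypotheses ($e^{+iQ_2}$ versus $e^{-iQ_2}$ together with the conjugate) are used. As a fallback if the operator formalism becomes awkward, the whole proof can be carried out at the level of anti-symmetric kernels, using \eqref{eq_characterization_free_kernel} and \eqref{eq_tree_formula_explicit_kernel} together with a Wick expansion of $Ope(T,C_o)\prod_{j=1}^n\psi^j_{\bX_j}$ into signed products of entries of $\widetilde{C_o}$: there each contracted variable contributes a single phase $e^{\pm iQ}$ that cancels against the phase carried by its $\widetilde{C_o}$-factor, while the external variables accumulate exactly the phase asserted by the claims.
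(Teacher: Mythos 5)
Your proposal is correct and follows essentially the same route as the paper's proof: both rewrite the Gaussian integral and the Laplacians $\D_{r,s}$ in terms of the anti-symmetric extension $\widetilde{C_o}$ (using the symmetry of $M_{at}$), establish the intertwining of these operators with the substitution $\cR$ from the covariance hypothesis, and then combine this with the invariance of $J$ in the explicit expansions of $F$ and $T^{(n)}$; part \eqref{item_invariance_variables} is handled by interspersing complex conjugation in exactly the way you describe. The only cosmetic difference is that you phrase the key step as operator commutation $\D\circ\cR=\cR\circ\D$ while the paper states it as an identity between the operator applied to $\psi_{\bX}$ and to $e^{iQ(S(\bX))}\psi_{S(\bX)}$, and you treat \eqref{item_invariance_simple} first whereas the paper proves \eqref{item_invariance_variables} first.
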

\begin{proof}
We provide the proof for the claim
 \eqref{item_invariance_variables} first. The claim
 \eqref{item_invariance_simple} can be proved similarly. 

\eqref{item_invariance_variables}: Let us show the invariance of 
$F(\bz)(\psi)$. Define
 $\widetilde{C_o}(\bz)':I^2\to \C$ by 
$$
\widetilde{C_o}(\bz)'(\bX):=e^{iQ_2(\bX)}\widetilde{C_o}(\bz)(S_2^{-1}(\bX)).
$$
It follows from the assumption that
\begin{align}\label{eq_invariance_covariance_general}
\overline{\widetilde{C_o}(\overline{\bz})'(\bX)}=\widetilde{C_o}(\bz)(\bX),\
 (\forall \bz\in \overline{D},\bX\in I^2).
\end{align}
Recalling the definition of the Grassmann Gaussian integral, we observe
 that for any $\bX\in I^n$,
\begin{align}
\int \psi^1_{\bX}d\mu_{C_o(\bz)}(\psi^1)&=e^{-\sum_{Y,Z\in
 I_0}C_o(\bz)(Y,Z)\frac{\partial}{\partial \opsi_Y^1}\frac{\partial}{\partial \psi_Z^1}}\psi^1_{\bX}\Big|_{\psi^1=0}\label{eq_invariance_free_pre_pre}\\
&= e^{-\sum_{\bY\in
 I^2}\widetilde{C_o}(\bz)(\bY)\frac{\partial}{\partial
 \psi_{\bY}^1}}\psi^1_{\bX}\Big|_{\psi^1=0}\notag\\
&=e^{-\sum_{\bY\in
 I^2}\widetilde{C_o}(\bz)'(\bY)\frac{\partial}{\partial
 \psi_{\bY}^1}}e^{-iQ_n(S_n(\bX))}\psi^1_{S_n(\bX)}\Big|_{\psi^1=0}.\notag
\end{align}
Moreover, by \eqref{eq_invariance_covariance_general},
\begin{align}
\int\psi_{\bX}^1d\mu_{\overline{C_o(\overline{\bz})}}(\psi^1)&=e^{-\sum_{\bY\in
 I^2}\widetilde{C_o}(\bz)(\bY)\frac{\partial}{\partial
 \psi_{\bY}^1}}e^{iQ_n(S_n(\bX))}\psi^1_{S_n(\bX)}\Big|_{\psi^1=0}\label{eq_invariance_free_preparation}\\
&=\int
 e^{iQ_n(S_n(\bX))}\psi^1_{S_n(\bX)}d\mu_{C_o(\bz)}(\psi^1).\notag
\end{align}
By using anti-symmetry we can characterize $F(\bz)(\psi)$ as follows.
\begin{align}
F(\bz)(\psi)=J(\bz)_0+&\sum_{m=1}^N\left(\frac{1}{h}\right)^m\sum_{\bX\in
 I^m}\sum_{n=m}^N\left(\begin{array}{c}n \\
		       m\end{array}\right)\label{eq_free_explicit_characterization}\\
&\cdot\left(\frac{1}{h}\right)^{n-m}\sum_{\bY\in I^{n-m}}J(\bz)_n(\bX,\bY)
\int \psi_{\bY}^1d\mu_{C_o(\bz)}(\psi^1)\psi_{\bX}.\notag
\end{align}
By the invariance of $J(\bz)(\psi)$ and
 \eqref{eq_invariance_free_preparation} we have that
\begin{align*}
&\overline{F(\overline{\bz})}(\cR\psi)\\
&=\overline{J(\overline{\bz})_0}+\sum_{m=1}^N\left(\frac{1}{h}\right)^m\sum_{\bX\in
 I^m}\sum_{n=m}^N\left(\begin{array}{c}n \\
		       m\end{array}\right)\left(\frac{1}{h}\right)^{n-m}\sum_{\bY\in I^{n-m}}\overline{J(\overline{\bz})_n(\bX,\bY)}\\
&\quad\qquad\qquad\cdot\int e^{iQ_{n-m}(S_{n-m}(\bY))}
 \psi_{S_{n-m}(\bY)}^1d\mu_{C_o(\bz)}(\psi^1)e^{iQ_m(S_m(\bX))}\psi_{S_m(\bX)}\\&=\int\overline{J(\overline{\bz})}(\cR\psi+\cR\psi^1)d\mu_{C_o(\bz)}(\psi^1)=\int J(\bz)(\psi+\psi^1)d\mu_{C_o(\bz)}(\psi^1)\\
&=F(\bz)(\psi).
\end{align*}

Next let us prove the invariance of $T^{(n)}(\bz)(\psi)$. Since $M_{at}(T,\xi,\bs)^t=M_{at}(T,\xi,\bs)$,
\begin{align}
&\sum_{r,s=1}^nM_{at}(T,\xi,\bs)(r,s)\D_{r,s}(C_o(\bz))\label{eq_laplacian_anti_symmetrization_another}\\
&=-\sum_{r,s=1}^nM_{at}(T,\xi,\bs)(r,s)\sum_{(X_1,X_2)\in
 I^2}\widetilde{C_o}(\bz)(X_1,X_2)\frac{\partial}{\partial
 \psi_{X_1}^r}\frac{\partial}{\partial \psi_{X_2}^s}.\notag
\end{align}
For an anti-symmetric function $A:I^2\to \C$ and $T\in \T_n$ we define
 the operator $\widetilde{Ope}(T,A)$ by
\begin{align*}
\widetilde{Ope}(T,A):=&\prod_{\{p,q\}\in T}\left(-2\sum_{(X_1,X_2)\in
 I^2}A(X_1,X_2)\frac{\partial}{\partial
 \psi_{X_1}^p}\frac{\partial}{\partial \psi_{X_2}^q}\right)\\
&\cdot\int_{[0,1]^{n-1}}d\bs\sum_{\xi\in
 \S_n(T)}\varphi(T,\xi,\bs)\\
&\cdot e^{-\sum_{r,s=1}^nM_{at}(T,\xi,\bs)(r,s)\sum_{(Y_1,Y_2)\in
 I^2}A(Y_1,Y_2)\frac{\partial}{\partial
 \psi_{Y_1}^r}\frac{\partial}{\partial \psi_{Y_2}^s}}.
\end{align*}
The equalities \eqref{eq_laplacian_anti_symmetrization},
 \eqref{eq_laplacian_anti_symmetrization_another} ensure that
\begin{align}\label{eq_equivalence_tree_operators}
\widetilde{Ope}\left(T,\widetilde{C_o}(\bz)\right)=Ope(T,C_o(\bz)).
\end{align}
By the same argument as in \eqref{eq_invariance_free_pre_pre} we have
 for any $m_j\in \{1,2,\cdots,N\}$, 
$\bX_j\in I^{m_j}$ $(j=1,2,\cdots,n)$ that 
\begin{align*}
&\widetilde{Ope}\left(T,\widetilde{C_o}(\bz)\right)\prod_{j=1\atop
 order}^n\psi_{\bX_j}^j\Bigg|_{\psi^j=0\atop(\forall
 j\in\{1,2,\cdots,n\})}\\
&= \widetilde{Ope}\left(T,\widetilde{C_o}(\bz)'\right)\prod_{j=1\atop
 order}^n\left(e^{-iQ_{m_j}(S_{m_j}(\bX_j))}\psi_{S_{m_j}(\bX_j)}^j\right)\Bigg|_{\psi^j=0\atop(\forall
 j\in\{1,2,\cdots,n\})}.
\end{align*}
Thus, by \eqref{eq_invariance_covariance_general},
\begin{align}\label{eq_invariance_tree_preparation}
&\widetilde{Ope}\left(T,\overline{\widetilde{C_o}(\overline{\bz})}\right)\prod_{j=1\atop
 order}^n\psi_{\bX_j}^j\Bigg|_{\psi^j=0\atop(\forall
 j\in\{1,2,\cdots,n\})}\\
&= \widetilde{Ope}\left(T,\widetilde{C_o}(\bz)\right)\prod_{j=1\atop
 order}^n\left(e^{iQ_{m_j}(S_{m_j}(\bX_j))}\psi_{S_{m_j}(\bX_j)}^j\right)\Bigg|_{\psi^j=0\atop(\forall
 j\in\{1,2,\cdots,n\})}.\notag
\end{align}
By using the invariance of $J(\bz)(\psi)$,
 \eqref{eq_equivalence_tree_operators} and
 \eqref{eq_invariance_tree_preparation} we can deduce from
 \eqref{eq_tree_formula_explicit_polynomial} that for any $m\in
 \{0,1,\cdots,N\}$,
\begin{align*}
&\overline{T^{(n)}(\overline{\bz})_m}(\cR\psi)\\
&=\prod_{i=1}^n\Bigg(\sum_{m_i=2}^N\sum_{k_i=0}^{m_i-1}\left(\begin{array}{c}m_i\\
						     k_i\end{array}\right)\left(\frac{1}{h}\right)^{k_i}\sum_{\bY_i\in I^{k_i}}\Bigg)1_{\sum_{j=1}^nm_j-2n+2\ge m}1_{\sum_{j=1}^nk_j= m}\\
&\quad\cdot \frac{\eps_{\pm}}{n!}\sum_{T\in \T_n}\widetilde{Ope}\left(T,\widetilde{C_o}(\bz)\right)
\prod_{j=1}^n\Bigg(\left(\frac{1}{h}\right)^{m_j-k_j}\sum_{\bX_j\in
 I^{m_j-k_j}}\overline{J(\overline{\bz})_{m_j}(\bX_j,\bY_j)}\Bigg)\\
&\quad\cdot e^{i\sum_{j=1}^nQ_{m_j-k_j}(S_{m_j-k_j}(\bX_j))}
\prod_{p=1\atop order}^n\psi_{S_{m_p-k_p}(\bX_p)}^p
\Bigg|_{\psi^j=0\atop (\forall j\in \{1,2,\cdots,n\})}\\
&\quad\cdot e^{i\sum_{j=1}^nQ_{k_j}(S_{k_j}(\bY_j))}\prod_{q=1\atop
 order}^n\psi_{S_{k_q}(\bY_q)}\\
&=\cP_m\frac{1}{n!}\sum_{T\in \T_n}\widetilde{Ope}\left(T,\widetilde{C_o}(\bz)\right)\prod_{j=1}^n\overline{J(\overline{\bz})}(\cR\psi+\cR\psi^j)
\Bigg|_{\psi^j=0\atop (\forall j\in \{1,2,\cdots,n\})}\\
&=\cP_m\frac{1}{n!}\sum_{T\in
 \T_n}Ope(T,C_o(\bz))\prod_{j=1}^nJ(\bz)(\psi+\psi^j)\Bigg|_{\psi^j=0\atop (\forall j\in \{1,2,\cdots,n\})}\\
&=T^{(n)}(\bz)_m(\psi),
\end{align*}
which implies that
 $\overline{T^{(n)}(\overline{\bz})}(\cR\psi)=T^{(n)}(\bz)(\psi)$. 

\eqref{item_invariance_simple}: It follows from the invariance
 of $\widetilde{C_o}$ that for any $\bX\in I^n$,
\begin{align}
e^{-\sum_{\bY\in
 I^2}\widetilde{C_o}(\bY)\frac{\partial}{\partial
 \psi_{\bY}^1}}\psi^1_{\bX}\Big|_{\psi^1=0}=e^{-\sum_{\bY\in
 I^2}\widetilde{C_o}(\bY)\frac{\partial}{\partial
 \psi_{\bY}^1}}e^{iQ_n(S_n(\bX))}\psi^1_{S_n(\bX)}\Big|_{\psi^1=0}.\label{eq_key_equality_invariance}
\end{align}
We can see from the definition of the Grassmann Gaussian integral and
 \eqref{eq_key_equality_invariance} that for any $\bX\in I^n$,
\begin{align}
\int\psi_{\bX}^1d\mu_{C_o}(\psi^1)=\int
 e^{iQ_n(S_n(\bX))}\psi^1_{S_n(\bX)}d\mu_{C_o}(\psi^1).\label{eq_free_part_necessary_invariance}
\end{align}
By substituting \eqref{eq_free_part_necessary_invariance} into 
\eqref{eq_free_explicit_characterization} and using the invariance
 of $J(\psi)$ we obtain that $F(\psi)=F(\cR\psi)$. 

Using \eqref{eq_equivalence_tree_operators} and
 \eqref{eq_key_equality_invariance}, we can prove that for any
 $m_j\in\{1,2,\cdots,N\}$, $\bX_j\in I^{m_j}$ $(j=1,2,\cdots,n)$,
\begin{align}\label{eq_tree_part_necessary_invariance}
&Ope(T,C_o)\prod_{j=1\atop
 order}^n\psi_{\bX_j}^j\Bigg|_{\psi^j=0\atop(\forall
 j\in\{1,2,\cdots,n\})}\\
&= Ope(T,C_o)\prod_{j=1\atop
 order}^n\left(e^{iQ_{m_j}(S_{m_j}(\bX_j))}\psi_{S_{m_j}(\bX_j)}^j\right)\Bigg|_{\psi^j=0\atop(\forall
 j\in\{1,2,\cdots,n\})}.\notag
\end{align}
Then, by inserting \eqref{eq_tree_part_necessary_invariance} into
 \eqref{eq_tree_formula_explicit_polynomial} and using the 
 invariance of $J(\psi)$ we can confirm that
 $T_m^{(n)}(\psi)=T_m^{(n)}(\cR \psi)$ $(\forall m\in
 \{0,1,\cdots,N\})$, which implies that $T^{(n)}(\psi)=T^{(n)}(\cR\psi)$.
\end{proof}
 
\section{General estimation at different temperatures}
\label{sec_general_estimation_temperature}
In this section we estimate differences between 2 Grassmann
polynomials produced by a single-scale integration at 2 different
temperatures. One can prove that the free energy density is analytic
with the coupling constants in a $\beta$-independent domain around the
origin without measuring the differences between Grassmann
polynomials created at different temperatures. However, in order to
prove the existence of zero-temperature limit of the free energy
density, we need the temperature-dependent estimates constructed in this section.

Let us set up notations which we start using from this section. Since we
consider the problems at 2 different temperatures, we sometimes add the
notation $(\beta)$ to the right of a $\beta$-dependent object. For
example, we write $I_0(\beta)$, $I(\beta)$
instead of the index sets $I_0$, $I$ when we want to indicate with which
$\beta$ these sets are defined. 

Let us introduce the extended index sets
$I_{0,\infty}$, $I_{\infty}$ by
$$
I_{0,\infty}:=\cB\times\G\times\spin\times\frac{1}{h}\Z,\ I_{\infty}:=I_{0,\infty}\times\{1,-1\}.
$$
For any $x\in(1/h)\Z$ there uniquely exist $n_{\beta}(x)\in\Z$ and
$r_{\beta}(x)\in [0,\beta)_h$ such that $x=n_{\beta}(x)\beta
+r_{\beta}(x)$. For any
$$\bX=((\rho_1,\bx_1,\s_1,x_1),(\rho_2,\bx_2,\s_2,x_2),\cdots,(\rho_m,\bx_m,\s_m,x_m))\in
I_{0,\infty}^m$$ 
we define $R_{\beta}(\bX)\in I_0^m$, $N_{\beta}(\bX)\in
\Z$ by 
\begin{align*}
&R_{\beta}(\bX):=((\rho_1,\bx_1,\s_1,r_{\beta}(x_1)),(\rho_2,\bx_2,\s_2,r_{\beta}(x_2)),\cdots,\\
&\qquad\qquad\quad(\rho_m,\bx_m,\s_m,r_{\beta}(x_m))),\\
&N_{\beta}(\bX):=\sum_{j=1}^mn_{\beta}(x_j).
\end{align*}
Moreover, for any $x\in (1/h)\Z$ let $\bX+x\in I_{0,\infty}^m$ be
defined by 
$$
\bX+x:=((\rho_1,\bx_1,\s_1,x_1+x),(\rho_2,\bx_2,\s_2,x_2+x),\cdots,(\rho_m,\bx_m,\s_m,x_m+x)).
$$
Similarly for any
$$\bX=((\rho_1,\bx_1,\s_1,x_1,\theta_1),(\rho_2,\bx_2,\s_2,x_2,\theta_2),\cdots,(\rho_m,\bx_m,\s_m,x_m,\theta_m))\in
I_{\infty}^m$$ 
we define $R_{\beta}(\bX)\in I^m$, $N_{\beta}(\bX)\in
\Z$ by 
\begin{align*}
&R_{\beta}(\bX):=((\rho_1,\bx_1,\s_1,r_{\beta}(x_1),\theta_1),(\rho_2,\bx_2,\s_2,r_{\beta}(x_2),\theta_2),\cdots,\\
&\qquad\qquad\quad(\rho_m,\bx_m,\s_m,r_{\beta}(x_m),\theta_m)),\\
&N_{\beta}(\bX):=\sum_{j=1}^mn_{\beta}(x_j),
\end{align*}
though we must admit that these are abuse of notation.
Also, for $x\in (1/h)\Z$ let
\begin{align*}
\bX+x:=(&(\rho_1,\bx_1,\s_1,x_1+x,\theta_1),(\rho_2,\bx_2,\s_2,x_2+x,\theta_2),\cdots,\\
&(\rho_m,\bx_m,\s_m,x_m+x,\theta_m))\in I_{\infty}^m.
\end{align*}
Set
\begin{align*}
\G_{\infty}:=\left\{\sum_{j=1}^dm_j\bu_j\ \Big|\ m_j\in\Z\ (j=1,2,\cdots,d)\right\}.\end{align*}
Define the map $r_L$ from $\G_{\infty}$ to $\G$ by 
$$
r_L\left(\sum_{j=1}^dm_j\bu_j\right):=\sum_{j=1}^dm_j'\bu_j,
$$
where $m_j'\in\{0,1,\cdots,L-1\}$ and $m_j=m_j'$ in $\Z/L\Z$ $(\forall
j\in \{1,2,\cdots,d\})$. In fact the notations $\G_{\infty}$,
$r_L(\cdot)$ are used only in Section \ref{sec_IR_model} and Appendix
\ref{app_h_L_limit}. However, it is systematic to introduce them at
this stage together with other notations.

In this section we treat Grassmann polynomials whose anti-symmetric
kernels $f_m:I(\beta)^m\to \C$ $(m\in \{1,2,\cdots,N\})$ satisfy that
\begin{align}\label{eq_temperature_translation}
f_m(\bX)=(-1)^{N_{\beta}(\bX+x)}f_m(R_{\beta}(\bX+x)),\ (\forall \bX\in I(\beta)^m, x\in (1/h)\Z).
\end{align}
In our practical multi-scale integrations all relevant Grassmann
polynomials will be proved to have the kernels satisfying 
\eqref{eq_temperature_translation}.

From now until the end of this section we assume that
\begin{align}\label{eq_beta_h_assumption}
\beta_1,\beta_2\in \N,\ \beta_2\ge\beta_1,\ h\in 4\N.
\end{align}
It will eventually turn out that the condition
\eqref{eq_beta_h_assumption} can be naturally imposed during the proof of the
main theorem about the existence of zero-temperature limit of the free
energy density in Section \ref{sec_IR_model}. 

We introduce discrete versions of the intervals 
$[-\beta_1/4,\beta_1/4)$, \\
$[\beta_1/4,\beta_a-\beta_1/4)$ $(a=1,2)$ by
\begin{align*}
&\left[-\frac{\beta_1}{4},\frac{\beta_1}{4}\right)_h:=\left\{-\frac{\beta_1}{4},
-\frac{\beta_1}{4}+\frac{1}{h},\cdots,\frac{\beta_1}{4}-\frac{1}{h}\right\},\\
&\left[\frac{\beta_1}{4},\beta_a-\frac{\beta_1}{4}\right)_h:=\left\{\frac{\beta_1}{4},
\frac{\beta_1}{4}+\frac{1}{h},\cdots,\beta_a-\frac{\beta_1}{4}-\frac{1}{h}\right\},\
 (a=1,2).
\end{align*}
Note that $0\in [-\beta_1/4,\beta_1/4)_h$ by the assumption $h\in 4\N$.

Define the index sets $\hat{I}_0$, $\hat{I}$, $I_0^0$, $I^0$ by
\begin{align*}
&\hat{I}_0:=\cB\times\G\times\spin\times\left[-\frac{\beta_1}{4},\frac{\beta_1}{4}\right)_h,\quad \hat{I}:=\hat{I}_0\times\{1,-1\},\\
&I_0^0:=\cB\times\G\times\spin\times\{0\},\quad I^0:=I_0^0\times\{1,-1\}.
\end{align*}

We assume that covariances $C_o(\beta_a):I_0(\beta_a)^2\to\C$ $(a=1,2)$ are
given and, as in Section \ref{sec_general_estimation}, there exists a
constant $D_{et}\in\R_{\ge 0}$ such that $C_o(\beta_a)$ $(a=1,2)$ satisfy the
determinant bound \eqref{eq_general_determinant_bound_condition} with
$D_{et}$. Moreover, assume that there is a
$\beta_1,\beta_2$-dependent constant $D(\beta_1,\beta_2)\in\R_{\ge 0}$ such that
 \begin{align}
&|\det(\<\bp_i,\bq_j\>_{\C^r}C_o(\beta_1)(R_{\beta_1}(X_i,Y_j)))_{1\le
  i,j\le n}\label{eq_general_determinant_bound_difference}\\
&-
\det(\<\bp_i,\bq_j\>_{\C^r}C_o(\beta_2)(R_{\beta_2}(X_i,Y_j)))_{1\le i,j\le n}|\le
 D(\beta_1,\beta_2)\cdot D_{et}^n,\notag\\
&(\forall r,n\in\N,\bp_i,\bq_i\in\C^r\text{ with }
\|\bp_i\|_{\C^r},\|\bq_i\|_{\C^r}\le 1,\notag\\
&\quad X_i,Y_i\in \hat{I}_0\
 (i=1,2,\cdots,n)).\notag
\end{align}
Furthermore, the covariances $C_o(\beta_a)$ $(a=1,2)$ are assumed to
satisfy that 
\begin{align}
&C_o(\beta_a)(\bX)=(-1)^{N_{\beta_a}(\bX+x)}C_o(\beta_a)(R_{\beta_a}(\bX+x)),\label{eq_temperature_translation_covariance_general}\\
& (\forall \bX\in I_0(\beta_a)^2, x\in (1/h)\Z).\notag
\end{align}

For any
$(\rho,\bx,\s,x,\theta)$, $(\eta,\by,\tau,y,\xi)\in \hat{I}$, $j\in
\{0,1,\cdots,d\}$, set
\begin{align*}
&\hat{d}_j((\rho,\bx,\s,x,\theta),(\eta,\by,\tau,y,\xi))\\
&:=\left\{\begin{array}{ll}|x-y|&\text{if }j=0,\\
\frac{L}{2\pi}|e^{i\frac{2\pi}{L}\<\bx,\bv_j\>}-e^{i\frac{2\pi}{L}\<\by,\bv_j\>}|&\text{if }j\in\{1,2,\cdots,d\},
\end{array}\right.
\end{align*}
which is an analogue of $d_j(\cdot,\cdot)$ introduced in Section
\ref{sec_general_estimation}. 
Let $m\in \N_{\ge 2}$. For anti-symmetric functions
$f_m(\beta_a):I(\beta_a)^m\to \C$ $(a=1,2)$ we estimate the difference
between them by the quantities $|f_m(\beta_1)-f_m(\beta_2)|_{l}$
$(l\in\Z)$ defined as follow.
\begin{align*}
|f_m(\beta_1)-f_m(\beta_2)|_l:=&\sup_{X\in
 I^0}\left(\frac{1}{h}\right)^{m-1}\sum_{\bY\in\hat{I}^{m-1}}e^{\sum_{j=0}^d(\frac{1}{\pi}\fw(l)\hat{d}_j(X,Y_1))^{\fr}}\\
&\cdot|f_m(\beta_1)(R_{\beta_1}(X,\bY))-f_m(\beta_2)(R_{\beta_2}(X,\bY))|,
\end{align*}
where $\{\fw(l)\}_{l\in\Z}$ $\subset \R_{>0}$ and $\fr\in (0,1]$ are the
same parameters as those used in the definitions of $\|\cdot\|_{l,0}$, $\|\cdot\|_{l,1}$ in
Section \ref{sec_general_estimation}.

By the assumption \eqref{eq_beta_h_assumption}, $N(\beta_1)(=\sharp I(\beta_1))\le N(\beta_2)(=\sharp
I(\beta_2))$.  It
will be convenient to write
$$
f(\beta_1)(\psi)=\sum_{m=0}^{N(\beta_2)}f_m(\beta_1)(\psi)
$$ 
for any $f(\beta_1)(\psi)\in \bigwedge \cV(\beta_1)$ 
by admitting that 
$f_m(\beta_1)(\psi)=0$ $(\forall m\in \{N(\beta_1)+1,N(\beta_1)+2,\cdots,N(\beta_2)\})$.

\subsection{Estimation of the free integration at different temperatures}\label{subsec_general_free_integration_temperature}
As in Subsection \ref{subsec_general_free_integration} we set for $a=1,2$, 
\begin{equation*}
F(\beta_a)(\psi):=\int
 J(\beta_a)(\psi+\psi^1)d\mu_{C_o(\beta_a)}(\psi^1)\ \left(\in\bigwedge \cV(\beta_a)\right)
\end{equation*}
with $J(\beta_a)(\psi)\in\bigwedge \cV(\beta_a)$ satisfying that $J_m(\beta_a)(\psi)=0$ if $m\notin
2\N\cup \{0\}$ and having the anti-symmetric kernels satisfying \eqref{eq_temperature_translation}. It follows from
definition that $F_m(\beta_a)(\psi)=0$ if $m\notin
2\N\cup \{0\}$. In this subsection we measure differences between
$F(\beta_1)(\psi)$ and $F(\beta_2)(\psi)$. The result is the following.
 
\begin{lemma}\label{lem_general_free_bound_difference}
\begin{enumerate}
\item \label{item_general_free_bound_difference_0th} For any $l\in \Z$,
\begin{align*}
&\left|\frac{h}{N(\beta_1)}F_0(\beta_1)-\frac{h}{N(\beta_2)}F_0(\beta_2)\right|\\&\le
 \left|\frac{h}{N(\beta_1)}J_0(\beta_1)-\frac{h}{N(\beta_2)}J_0(\beta_2)\right|\\
&\quad+\sum_{n=2}^{N(\beta_2)}2^nD_{et}^{\frac{n}{2}}\Bigg(|J_n(\beta_1)-J_n(\beta_2)|_l+D(\beta_1,\beta_2)\sum_{a=1}^2\|J_n(\beta_a)\|_{l,0}\\
&\qquad\qquad\qquad\qquad+\frac{2\pi}{\beta_1}\sum_{a=1}^2\|J_n(\beta_a)\|_{l,1}\Bigg).
\end{align*}
\item\label{item_general_free_bound_difference_higher} For any $l\in
      \Z$, $m\in \N_{\ge 2}$,
\begin{align*}
&|F_m(\beta_1)-F_m(\beta_2)|_l\\
&\le
 |J_m(\beta_1)-J_m(\beta_2)|_l\\
&\quad+\sum_{n=m+1}^{N(\beta_2)}2^{2n}D_{et}^{\frac{n-m}{2}}\Bigg(|J_n(\beta_1)-J_n(\beta_2)|_l+D(\beta_1,\beta_2)\sum_{a=1}^2\|J_n(\beta_a)\|_{l,0}\\
&\qquad\qquad\qquad\qquad\qquad+\frac{2\pi}{\beta_1}\sum_{a=1}^2\|J_n(\beta_a)\|_{l,1}\Bigg).
\end{align*}
\end{enumerate}
\end{lemma}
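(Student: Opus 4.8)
The plan is to start from the explicit characterization of the free‑integration kernel obtained in the proof of Lemma~\ref{lem_general_free_bound}: for each $a\in\{1,2\}$ and $\bX\in I(\beta_a)^m$,
\begin{align*}
F_m(\beta_a)(\bX)=J_m(\beta_a)(\bX)+\sum_{n=m+1}^{N(\beta_a)}\binom{n}{m}\left(\frac1h\right)^{n-m}\sum_{\bZ\in I(\beta_a)^{n-m}}J_n(\beta_a)(\bX,\bZ)\int\psi^1_{\bZ}\,d\mu_{C_o(\beta_a)}(\psi^1).
\end{align*}
For $\bZ\in I^{n-m}$ the integral $\int\psi^1_{\bZ}\,d\mu_{C_o(\beta_a)}$ is, up to a sign $\eps_{\bZ}\in\{1,-1,0\}$ depending only on the pattern of the indices $\theta=\pm1$ occurring in $\bZ$, a $\tfrac{n-m}{2}\times\tfrac{n-m}{2}$ determinant of entries of $C_o(\beta_a)$; hence $|\int\psi^1_{\bZ}\,d\mu_{C_o(\beta_a)}|\le D_{et}^{(n-m)/2}$ by \eqref{eq_general_determinant_bound_condition} (take $r=1$, $\bp_i=\bq_i=1$), and, \emph{crucially}, whenever every component of $\bZ$ lies in $\hat I$ the difference of the corresponding determinants $\det(C_o(\beta_a)(R_{\beta_a}(\cdot)))$ is bounded by $D(\beta_1,\beta_2)D_{et}^{(n-m)/2}$ via \eqref{eq_general_determinant_bound_difference}.

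To prove \eqref{item_general_free_bound_difference_higher} I would fix $X\in I^0$, subtract the two characterizations at the arguments $R_{\beta_1}(X,\bY)$ and $R_{\beta_2}(X,\bY)$ with $\bY\in\hat I^{m-1}$, and estimate term by term. The $n=m$ term contributes $J_m(\beta_1)(R_{\beta_1}(X,\bY))-J_m(\beta_2)(R_{\beta_2}(X,\bY))$, which after applying $\sup_{X\in I^0}(1/h)^{m-1}\sum_{\bY}e^{\sum_j(\frac1\pi\fw(l)\hat d_j(X,Y_1))^{\fr}}$ is bounded by $|J_m(\beta_1)-J_m(\beta_2)|_l$; here the factor $1/\pi$ built into the definition of $|\cdot|_l$ is exactly what makes $\tfrac1\pi\hat d_j(X,Y_1)\le d_j(R_{\beta_a}(X),R_{\beta_a}(Y_1))$, which follows from $\tfrac2\pi|\vartheta|\le|\sin\vartheta|\le|\vartheta|$ on $[-\tfrac\pi2,\tfrac\pi2]$ applied to the time variables, so that the exponential weights used for $\|\cdot\|_{l,r}$ dominate those used for $|\cdot|_l$.

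For $n>m$ the main work is to bound $A_n^a:=(1/h)^{n-m}\sum_{\bZ\in I(\beta_a)^{n-m}}J_n(\beta_a)(R_{\beta_a}(X,\bY),\bZ)\int\psi^1_{\bZ}\,d\mu_{C_o(\beta_a)}$ in a form that compares $a=1$ with $a=2$. The translation properties \eqref{eq_temperature_translation} of $J_n$ and \eqref{eq_temperature_translation_covariance_general} of $C_o$, together with the sign cancellation in the product $J_n\cdot(\text{determinant})$, show that the whole summand, extended to $I_{\infty}$, is $\beta_a$‑periodic in each individual time variable, so that the sum over $z_k\in[0,\beta_a)_h$ equals the sum over $z_k\in[-\beta_1/4,\beta_a-\beta_1/4)_h$. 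Writing $A_n^a=\tilde A_n^a+(A_n^a-\tilde A_n^a)$ with $\tilde A_n^a$ the same sum but with every $z_k$ restricted to the window $[-\beta_1/4,\beta_1/4)_h$, the remainder $A_n^a-\tilde A_n^a$ ranges over configurations with at least one $z_k$ in the complementary half‑period, on which $\tfrac{2\pi}{\beta_1}d_0(X,Z_k)\ge1$; inserting that factor, bounding the determinant by $D_{et}^{(n-m)/2}$, summing over which coordinate is far, and recognising the resulting weighted $\bZ$‑sum as at most $\|J_n(\beta_a)\|_{l,1}$ gives a contribution of order $(n-m)\tfrac{2\pi}{\beta_1}D_{et}^{(n-m)/2}\sum_a\|J_n(\beta_a)\|_{l,1}$. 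For $\tilde A_n^1-\tilde A_n^2$, now a sum over the common index set $\hat I^{n-m}$, the identity $ab-cd=(a-c)b+c(b-d)$ splits it into a $J_n$‑difference part controlled by $D_{et}^{(n-m)/2}|J_n(\beta_1)-J_n(\beta_2)|_l$, and a covariance‑difference part controlled — legitimately, since all components of $\bZ$ lie in $\hat I_0$ — via \eqref{eq_general_determinant_bound_difference} by $D(\beta_1,\beta_2)D_{et}^{(n-m)/2}\sum_a\|J_n(\beta_a)\|_{l,0}$. Collecting $\binom nm\le2^n$ with the further $2^n$ from the union bound over the far coordinate and the two telescoping steps yields the prefactor $2^{2n}$; the range $n>N(\beta_1)$ needs no separate treatment, since there $A_n^1=0$ and $|J_n(\beta_1)-J_n(\beta_2)|_l$ already dominates $|A_n^2|$.

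Claim \eqref{item_general_free_bound_difference_0th} follows by the same analysis after one extra reduction: using the translation covariance to shift the first argument of each term of $F_0(\beta_a)$ to time $0$, the sum over that time produces a factor $\beta_a h$, so that $\tfrac{h}{N(\beta_a)}F_0(\beta_a)$ equals $\tfrac{h}{N(\beta_a)}J_0(\beta_a)$ plus $\tfrac1{4bL^d}$ times a sum over $Z_1\in I^0$ of quantities of the type handled above (with $m$ replaced by $0$ and no fixed variables $\bY$); averaging over $Z_1\in I^0$ bounds the difference by the maximum over $Z_1\in I^0$, and the argument is then verbatim, now with the crude prefactor $2^n$. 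The step I expect to be the main obstacle is precisely this translation/periodicity bookkeeping: one must verify that after the window restriction every contracted variable genuinely lies in $\hat I_0$ so that \eqref{eq_general_determinant_bound_difference} — available only on $\hat I_0$ — applies, and that the half‑period tail really carries the weight $\tfrac{2\pi}{\beta_1}$ against the first‑moment seminorm $\|\cdot\|_{l,1}$; both rest on extracting, from \eqref{eq_temperature_translation} and \eqref{eq_temperature_translation_covariance_general}, the one‑variable periodicity of the composite summand and keeping the resulting combinatorial factors under control.
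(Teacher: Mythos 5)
Your proposal is correct and follows essentially the same route as the paper: the same kernel characterization from Lemma \ref{lem_general_free_bound}, the same decomposition of the contracted time variables into the window $\hat{I}$ versus configurations with a time in $[\beta_1/4,\beta_a-\beta_1/4)_h$, the same telescoping combined with \eqref{eq_general_determinant_bound_condition}--\eqref{eq_general_determinant_bound_difference} on the window and the lower bound $d_0\ge\beta_1/(2\pi)$ against $\|\cdot\|_{l,1}$ on the far part, and the same translation-invariance reduction for the constant term. The only cosmetic differences are that the paper repartitions the index set $[0,\beta_a)_h$ directly rather than invoking one-variable periodicity, and bounds the combinatorics via $\binom{n}{m},(n-m)\binom{n}{m}\le 2^{2n}$ rather than your two separate factors of $2^n$.
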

\begin{proof}
\eqref{item_general_free_bound_difference_0th}: By the property
 \eqref{eq_temperature_translation_covariance_general} and the
 definition of the Grassmann Gaussian integral we see that
\begin{align}\label{eq_application_temperature_translation_covariance}
&\int\psi_{\bX}d\mu_{C_o(\beta_a)}(\psi)=(-1)^{N_{\beta_a}(\bX+x)}\int\psi_{R_{\beta_a}(\bX+x)}d\mu_{C_o(\beta_a)}(\psi),\\
&(\forall \bX\in I(\beta_a)^m,x\in
 (1/h)\Z,a\in\{1,2\}).\notag
\end{align}
It follows from \eqref{eq_characterization_free_kernel},
 \eqref{eq_temperature_translation} and
 \eqref{eq_application_temperature_translation_covariance} that
\begin{align*}
&F_0(\beta_a)\\
&=J_0(\beta_a)+\sum_{n=2}^{N(\beta_2)}\left(\frac{1}{h}\right)^n\sum_{x\in
 [0,\beta_a)_h}\sum_{X\in I^0}\sum_{\bY\in
 I(\beta_a)^{n-1}}J_n(\beta_a)(X,R_{\beta_a}(\bY-x))\\
&\qquad\qquad\qquad\cdot\int\psi_{X}^1\psi_{R_{\beta_a}(\bY-x)}^1d\mu_{C_o(\beta_a)}(\psi^1)\\
&=J_0(\beta_a)+\beta_a\sum_{n=2}^{N(\beta_2)}\left(\frac{1}{h}\right)^{n-1}\sum_{X\in I^0}\sum_{\bY\in
 I(\beta_a)^{n-1}}J_n(\beta_a)(X,\bY)\\
&\quad\qquad\qquad\qquad\cdot\int\psi_{(X,\bY)}^1d\mu_{C_o(\beta_a)}(\psi^1)\\
&=J_0(\beta_a)+\beta_a\sum_{n=2}^{N(\beta_2)}\left(\frac{1}{h}\right)^{n-1}\sum_{X\in I^0}\sum_{\bY\in \hat{I}^{n-1}}J_n(\beta_a)(R_{\beta_a}(X,\bY))\\
&\quad\qquad\qquad\qquad\cdot\int\psi_{R_{\beta_a}(X,\bY)}^1d\mu_{C_o(\beta_a)}(\psi^1)\\
&\quad+\beta_a\sum_{n=2}^{N(\beta_2)}\left(\frac{1}{h}\right)^{n-1}\sum_{X\in
 I^0}\sum_{\bY\in I(\beta_a)^{n-1}}\\
&\quad\qquad\cdot 1_{\exists
 (\rho,\bx,\s,x,\theta)\in I(\beta_a)\text{ s.t. }
(\rho,\bx,\s,x,\theta)\subset\bY\text{ and }x\in
 [\frac{\beta_1}{4},\beta_a-\frac{\beta_1}{4})_h}\\
&\quad\qquad\cdot J_n(\beta_a)(X,\bY)\int\psi_{(X,\bY)}^1d\mu_{C_o(\beta_a)}(\psi^1).
\end{align*}
Then, by using the determinant bounds
 \eqref{eq_general_determinant_bound_condition}, \eqref{eq_general_determinant_bound_difference} we have
\begin{align*}
&\left|\frac{h}{N(\beta_1)}F_0(\beta_1)-\frac{h}{N(\beta_2)}F_0(\beta_2)\right|\\
&\le
 \left|\frac{h}{N(\beta_1)}J_0(\beta_1)-\frac{h}{N(\beta_2)}J_0(\beta_2)\right|\\
&\quad+\sum_{n=2}^{N(\beta_2)}\big(|J_n(\beta_1)-J_n(\beta_2)|_lD_{et}^{\frac{n}{2}}+\|J_n(\beta_2)\|_{l,0}D(\beta_1,\beta_2)D_{et}^{\frac{n}{2}}\big)\\
&\quad+\frac{2\pi}{\beta_1}\sum_{n=2}^{N(\beta_2)}(n-1)\sum_{a=1}^2\|J_n(\beta_a)\|_{l,1}D_{et}^{\frac{n}{2}},
\end{align*}
where we also used the inequalities that
\begin{align}\label{eq_basic_temperature_lower_bound}
&\left|\frac{\beta_a}{2\pi}(e^{i\frac{2\pi}{\beta_a}x}-1)\right|\ge\frac{\beta_1}{2\pi},\
 (\forall x\in
 [\beta_1/4,\beta_a-\beta_1/4),a\in \{1,2\})
\end{align}
and 
$$
1\le e^{\sum_{j=0}^d(\fw(l)d_j(X,Y_1))^{\fr}},\ 1\le e^{\sum_{j=0}^d(\frac{1}{\pi}\fw(l)\hat{d}_j(X,Y_1))^{\fr}}.
$$

\eqref{item_general_free_bound_difference_higher}: We can see from
 \eqref{eq_characterization_free_kernel} that for any $X\in I^0$,
 $\bY\in \hat{I}^{m-1}$,
\begin{align*}
&F_m(\beta_a)(R_{\beta_a}(X,\bY))\\
&=J_m(\beta_a)(R_{\beta_a}(X,\bY))\\
&\quad+\sum_{n=m+1}^{N(\beta_2)}\left(\frac{1}{h}\right)^{n-m}\sum_{\bZ\in
 \hat{I}^{n-m}}\left(\begin{array}{c}n\\ m\end{array}\right)
J_n(\beta_a)(R_{\beta_a}(X,\bY,\bZ))\\
&\qquad\quad\cdot\int\psi_{R_{\beta_a}(\bZ)}^1d\mu_{C_o(\beta_a)}(\psi^1)\\
&\quad+\sum_{n=m+1}^{N(\beta_2)}\left(\frac{1}{h}\right)^{n-m}\sum_{\bZ\in I(\beta_a)^{n-m}}1_{\exists
 (\rho,\bx,\s,x,\theta)\in I(\beta_a)\text{ s.t. }(\rho,\bx,\s,x,\theta)
\subset\bZ\text{ and }x\in
 [\frac{\beta_1}{4},\beta_a-\frac{\beta_1}{4})_h}\\
&\quad\qquad\cdot\left(\begin{array}{c}n\\m\end{array}\right) J_n(\beta_a)(R_{\beta_a}(X,\bY),\bZ)\int\psi_{\bZ}^1d\mu_{C_o(\beta_a)}(\psi^1).
\end{align*}
By using \eqref{eq_general_determinant_bound_condition},
 \eqref{eq_general_determinant_bound_difference}, \eqref{eq_basic_temperature_lower_bound} and the inequality that
\begin{align}\label{eq_basic_temperature_bound}
&\frac{1}{\pi}|x-y|\le \left|\frac{\beta_a}{2\pi}(e^{i\frac{2\pi}{\beta_a}r_{\beta_a}(x)}-e^{i\frac{2\pi}{\beta_a}r_{\beta_a}(y)})\right|,\\
& (\forall x,y\in
 [-\beta_1/4,\beta_1/4),a\in \{1,2\}),\notag
\end{align}
we obtain 
\begin{align*}
|F_m(\beta_1)-F_m(\beta_2)|_l
&\le
 |J_m(\beta_1)-J_m(\beta_2)|_l\\
&\quad+\sum_{n=m+1}^{N(\beta_2)}\left(\begin{array}{c}n\\m\end{array}\right)
\Bigg(|J_n(\beta_1)-J_n(\beta_2)|_lD_{et}^{\frac{n-m}{2}}\\
&\qquad\qquad+\|J_n(\beta_2)\|_{l,0}D(\beta_1,\beta_2) D_{et}^{\frac{n-m}{2}}\Bigg)\\
&\quad+\frac{2\pi}{\beta_1}\sum_{n=m+1}^{N(\beta_2)}(n-m)\left(\begin{array}{c}n\\m\end{array}\right)\sum_{a=1}^2\|J_n(\beta_a)\|_{l,1}
D_{et}^{\frac{n-m}{2}}.\end{align*}
Finally, by substituting the inequalities 
$$\left(\begin{array}{c}n\\m\end{array}\right),\ (n-m)\left(\begin{array}{c}n\\m\end{array}\right)\le 2^{2n},$$
we reach the claimed inequality.
\end{proof}

\subsection{Estimation of the tree expansion at different temperatures}
\label{subsec_general_tree_expansion_temperature}
Here we estimate the differences between Grassmann polynomials produced
by the tree expansion at 2 different temperatures. In the same style as
in Subsection \ref{subsec_general_tree_expansion} we prepare necessary lemmas
step by step. Our strategy is to decompose $T(\beta_a)(\psi)$ into 2
parts. One is a polynomial which integrates at least one time-variable
away from $0$ and $\beta_a$. The other is a
polynomial which integrates only the time-variables close to $0$ or
$\beta_a$. 
We will find an upper bound on the first polynomial. We will measure the differences between the second
polynomial at $\beta_1$ and that at $\beta_2$. The next lemma is necessary to
bound the first polynomials.

\begin{lemma}\label{lem_tree_line_kernels_large}
Fix $a\in \{1,2\}$. Take any $m_j\in \N_{\ge 2}$ $(j=1,2,\cdots,n)$.
Let $J_{m_j}(\beta_a):I(\beta_a)^{m_j}\to \C$ $(j=1,2,\cdots,n)$ be anti-symmetric
 functions. Then, the following inequalities hold.
\begin{enumerate}
\item\label{item_tree_kernels_no_fixed_large}
 For any $X_{1,1}\in I^0$,
\begin{align}
&\Bigg|\frac{1}{n!}\sum_{T\in
 \T_n}Ope(T,C_o(\beta_a))\left(\frac{1}{h}\right)^{m_1-1}\sum_{(X_{1,2},X_{1,3},\cdots,X_{1,m_1})\in
 I(\beta_a)^{m_1-1}}J_{m_1}(\beta_a)(\bX_1)\label{eq_tree_kernels_no_fixed_large}\\
&\quad\cdot\prod_{j=2}^n
\Bigg(\left(\frac{1}{h}\right)^{m_j}\sum_{\bX_j\in
 I(\beta_a)^{m_j}}J_{m_j}(\beta_a)(\bX_j)\Bigg)\notag\\
&\quad\cdot 1_{\exists
 (\rho,\bx,\s,x,\theta)\in I(\beta_a)\text{ s.t. }(\rho,\bx,\s,x,\theta)\subset(\bX_1,\bX_2,\cdots,\bX_n)\text{ and }x\in [\frac{\beta_1}{4},\beta_a-\frac{\beta_1}{4})_h} \notag\\
&\quad\cdot\prod_{k=1\atop
 order}^n\psi_{\bX_k}^k\Bigg|_{\psi^j=0\atop (\forall j\in
 \{1,2,\cdots,n\})}\Bigg|\notag\\
&\le
 \frac{2\pi}{\beta_1}2^{3\sum_{j=1}^nm_j}D_{et}^{\frac{1}{2}\sum_{j=1}^nm_j-n+1}\notag\\
&\quad\cdot\prod_{j=1}^n\left(\sum_{q_j=0}^1\|J_{m_j}(\beta_a)\|_{l,q_j}\right)
\prod_{k=2}^n\left(\sum_{r_k=0}^1\|\widetilde{C_o}(\beta_a)\|_{l,r_k}\right)1_{\sum_{j=1}^nq_j+\sum_{k=2}^nr_k=1}.\notag
\end{align}
\item\label{item_tree_kernels_fixed_large}
In addition, assume that $k_j\in\{0,1,\cdots,m_j-1\}$ $(\forall j\in
     \{1,2,\cdots,$ $n\})$, $p\in \{1,2,\cdots,n\}$ and $k_1,k_p\ge
     1$. Then, for any $Y_{1,1}\in I^0$,
\begin{align}
&\Bigg|\frac{1}{n!}\sum_{T\in \T_n}Ope(T,C_o(\beta_a))\label{eq_tree_kernels_fixed_large}\\
&\quad\cdot\left(\frac{1}{h}\right)^{m_1-1}\sum_{\bX_1\in
 I(\beta_a)^{m_1-k_1}}\sum_{(Y_{1,2},Y_{1,3},\cdots,Y_{1,k_1})\in {\hat{I}}^{k_1-1}}J_{m_1}(\beta_a)(\bX_1,R_{\beta_a}(\bY_1))\notag\\
&\quad\cdot\prod_{j=2}^n\Bigg(\left(\frac{1}{h}\right)^{m_j}\sum_{\bX_j\in
 I(\beta_a)^{m_j-k_j}}\sum_{\bY_j\in {\hat{I}}^{k_j}}J_{m_j}(\beta_a)(\bX_j,R_{\beta_a}(\bY_j))\Bigg)\notag\\
&\quad\cdot e^{\sum_{j=0}^d(\frac{1}{\pi}\fw(l)\hat{d}_j(Y_{1,1},Y_{p,1}))^{\fr}}\notag\\
&\quad\cdot 1_{\exists
 (\rho,\bx,\s,x,\theta)\in I(\beta_a)\text{ s.t. }(\rho,\bx,\s,x,\theta)
\subset (\bX_1,\bX_2,\cdots,\bX_n)\text{ and }x\in [\frac{\beta_1}{4},\beta_a-\frac{\beta_1}{4})_h}\notag\\ 
&\quad\cdot \prod_{k=1\atop order}^n\psi_{\bX_k}^k\Bigg|_{\psi^j=0\atop (\forall j\in
 \{1,2,\cdots,n\})}\Bigg|\notag\\
&\le
\frac{2\pi}{\beta_1} 2^{3\sum_{j=1}^n(m_j-k_j)}D_{et}^{\frac{1}{2}\sum_{j=1}^n(m_j-k_j)-n+1}\notag\\
&\quad\cdot \prod_{j=1}^n\Bigg(\sum_{q_j=0}^1\|J_{m_j}(\beta_a)\|_{l,q_j}\Bigg)
\prod_{k=2}^n\Bigg(\sum_{r_k=0}^1\|\widetilde{C_o}(\beta_a)\|_{l,r_k}\Bigg)
1_{\sum_{j=1}^nq_j+\sum_{k=2}^nr_k=1}.\notag
\end{align}
 \end{enumerate}
\end{lemma}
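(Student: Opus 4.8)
\textbf{Proof proposal for Lemma \ref{lem_tree_line_kernels_large}.}
The plan is to reduce both inequalities to the machinery already built in Subsection \ref{subsec_general_tree_expansion}, in particular Lemma \ref{lem_tree_line_expansion}, Lemma \ref{lem_tree_sum_kernels} and Lemma \ref{lem_tree_combinatorial_factor}, and to extract the prefactor $2\pi/\beta_1$ from the indicator forcing at least one time-variable to lie in $[\beta_1/4,\beta_a-\beta_1/4)_h$. First I would fix a tree $T\in\T_n$ and expand the operator $Ope(T,C_o(\beta_a))$ by the line-by-line procedure of Lemma \ref{lem_tree_line_expansion}, so that the left-hand side is bounded by a sum of products of $|\widetilde{C_o}(\beta_a)|$ over tree lines contracted against $|J_{m_j}(\beta_a)|$, exactly as in the proof of Lemma \ref{lem_tree_line_kernels}. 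The new feature is the indicator $1_{\exists(\rho,\bx,\s,x,\theta)\subset(\bX_1,\dots,\bX_n)\text{ and }x\in[\beta_1/4,\beta_a-\beta_1/4)_h}$: I would bound it by a sum over which single variable $Y_{q_0,r_0}$ of the tree realizes the condition, picking up a factor $\sum_{j}m_j$ from the choice (absorbed into the $2^{3\sum m_j}$), and then on that variable I would insert the trivial bound $1\le \frac{2\pi}{\beta_1}|\frac{\beta_a}{2\pi}(e^{i\frac{2\pi}{\beta_a}x}-1)|$ coming from \eqref{eq_basic_temperature_lower_bound}. This converts the plain $L^1$-sum over that variable into a sum weighted by one factor of $\hat d_0(\,\cdot\,,\,\cdot\,)$ (after relating $x$ to $r_{\beta_a}(x)$ and using that the corresponding point of $\bX_j$ has time-component in the fundamental domain), which is precisely what the semi-norm $\|\cdot\|_{l,1}$ controls; hence exactly one factor in the product acquires index $1$, giving the constraint $\sum_j q_j+\sum_k r_k=1$.

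Concretely, for part \eqref{item_tree_kernels_no_fixed_large} I would follow the recursive estimation of Lemma \ref{lem_tree_line_kernels} \eqref{item_tree_kernels_no_fixed}, but carrying along the extra $\hat d_0$-weight on the distinguished variable; the recursion is identical because a single distance factor attached to one vertex can be propagated along the tree path to the root by the triangle-type splitting $\hat d_{j'}(Y,Z)\le \hat d_{j'}(Y,W)+\hat d_{j'}(W,Z)$, just as the exponential weight is propagated in the proof of Lemma \ref{lem_tree_line_estimate_induction}. Summing over $T\in\T_n$ via Lemma \ref{lem_tree_combinatorial_factor} yields the factor $2^{2\sum m_j}$; together with the $2^{n-1}$ from the Grassmann Laplacians, the $2^{\sum m_j}$-type loss from choosing which variable realizes the time-constraint, and the binomial factors $\binom{m_j-1}{n_j(T)-1}$, everything fits inside $2^{3\sum m_j}$. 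For part \eqref{item_tree_kernels_fixed_large} I would run the same argument on top of the already-weighted estimate of Lemma \ref{lem_tree_line_kernels} \eqref{item_tree_kernels_fixed}: here the external exponential weight $e^{\sum_j(\frac1\pi\fw(l)\hat d_j(Y_{1,1},Y_{p,1}))^{\fr}}$ is handled exactly as in Lemma \ref{lem_tree_line_estimate_induction} (note the $\frac1\pi$ rescaling is compatible because $\frac1\pi\hat d_j\le d_j\circ R_{\beta_a}$ on $\hat I$ by \eqref{eq_basic_temperature_bound}), while the time-constraint indicator again contributes the single $2\pi/\beta_1$ and forces one norm index up to $1$. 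Since no free index is left over after the forced one, the parameter $a\in\{0,1\}$ that appeared in Lemma \ref{lem_tree_line_kernels} is here pinned to the value producing $\sum q_j+\sum r_k=1$.

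The main obstacle I anticipate is bookkeeping rather than conceptual: one must verify that the distinguished variable whose time-component lies in $[\beta_1/4,\beta_a-\beta_1/4)_h$ can always be routed, via the tree, to the fixed external vertex $X_{1,1}$ or $Y_{1,1}$ so that the single extracted distance factor is of the form $\hat d_0(\text{external},\cdot)$ and thus absorbable by $\|J_{m_j}(\beta_a)\|_{l,1}$ or $\|\widetilde{C_o}(\beta_a)\|_{l,1}$ — and that in doing so one does not accidentally need two distance factors (which the semi-norms cannot supply). This is exactly the role played by the splitting in \eqref{eq_weight_decomposition_2_vertices} in Lemma \ref{lem_tree_line_estimate_induction}: the distance from the external vertex to the distinguished variable telescopes along the path, and only the \emph{adjacent} edge picks up a genuine $\|\cdot\|_{l,1}$-factor while the rest is reabsorbed into $\|\cdot\|_{l,0}$-factors of the intermediate kernels. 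Once this routing is set up, the estimate is a line-for-line repetition of the corresponding lemma in Subsection \ref{subsec_general_tree_expansion} with one norm slot upgraded and an overall $2\pi/\beta_1$ pulled out, and the passage from the per-tree bound to the stated bound uses Lemma \ref{lem_tree_combinatorial_factor} verbatim.
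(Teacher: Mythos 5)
Your proposal matches the paper's proof essentially step for step: the paper also bounds the time-constraint indicator by $\frac{2\pi}{\beta_1}\sum_{q,r}d_0(Y_{1,1},X_{q,r})$ via \eqref{eq_basic_temperature_lower_bound} (this is \eqref{eq_beta_dominant_bound}), then runs the expansion of Lemma \ref{lem_tree_line_expansion} and propagates the single distance factor along the tree exactly as in Lemma \ref{lem_tree_line_estimate_induction}, so that precisely one norm slot is upgraded to $\|\cdot\|_{l,1}$, before summing over trees with Lemma \ref{lem_tree_combinatorial_factor}. The routing concern you flag is resolved just as you describe, so the argument is correct and takes the same route as the paper.
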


\begin{proof}
We show the claim \eqref{item_tree_kernels_fixed_large} first. The proof for the claim \eqref{item_tree_kernels_no_fixed_large}
 is parallel to the proof of \eqref{item_tree_kernels_fixed_large}.

\eqref{item_tree_kernels_fixed_large}: Using
 \eqref{eq_basic_temperature_lower_bound}, we can prove that
\begin{align}
&1_{\exists (\rho,\bx,\s,x,\theta)\in I(\beta_a)\text{
 s.t. }(\rho,\bx,\s,x,\theta)\subset (\bX_1,\bX_2,\cdots,\bX_n)\text{
 and }x\in [\frac{\beta_1}{4},\beta_a-\frac{\beta_1}{4})_h}\label{eq_beta_dominant_bound}\\
&\le
 \frac{2\pi}{\beta_1}\sum_{q=1}^n\sum_{r=1}^{m_q-k_q}d_0(Y_{1,1},X_{q,r}).\notag
\end{align}
We can deduce from Lemma \ref{lem_application_determinant_bound_tree},
Lemma
 \ref{lem_tree_line_expansion} and \eqref{eq_beta_dominant_bound} that 
\begin{align}
&(\text{the left-hand side of \eqref{eq_tree_kernels_fixed_large}})\label{eq_tree_kernels_fixed_large_pre}\\
&\le \frac{2\pi}{\beta_1}\sum_{q=1}^n\sum_{r=1}^{m_q-k_q}\frac{1}{n!}\sum_{T\in\T_n} 1_{n_j(T)\le m_j-k_j (\forall j\in
 \{1,2,\cdots,n\})}2^{n-1}D_{et}^{\frac{1}{2}\sum_{j=1}^n(m_j-k_j)-n+1}\notag\\
&\quad\cdot \left(\frac{1}{h}\right)^{m_1-1}\sum_{\bX_1\in
 I(\beta_a)^{m_1-k_1}}\sum_{(Y_{1,2},Y_{1,3},\cdots,Y_{1,k_1})\in
 \hat{I}^{k_1-1}}|J_{m_1}(\beta_a)(\bX_1,R_{\beta_a}(\bY_1))|\notag\\
&\quad\cdot \sum_{\bW_1\subset \bX_1\atop \bW_1\in 
 I(\beta_a)^{n_1(T)}}\sum_{\s_1\in
 \S_{n_1(T)}}\notag\\
&\quad\cdot\prod_{\{1,s\}\in
 L_1^1(T)}\Bigg(\left(\frac{1}{h}\right)^{m_s}\sum_{\bX_s\in
 I(\beta_a)^{m_s-k_s}}\sum_{\bY_s\in \hat{I}^{k_s}}|J_{m_s}(\beta_a)(\bX_s,R_{\beta_a}(\bY_s))|\notag\\
&\qquad\qquad\qquad\cdot \sum_{Z_s\subset \bX_s\atop Z_s\in I(\beta_a)}|\widetilde{C_o}(\beta_a)(W_{1,\s_1\circ
 \zeta_1(\{1,s\})},Z_s)|\Bigg)\notag\\
&\quad \cdot \prod_{u=1\atop order
 }^{d_T(1)-1}\Bigg(\prod_{j\in\{2,3,\cdots,n\}\text{ with }\atop
 \dis_T(1,j)=u,n_j(T)\neq 1}\Bigg(\sum_{\bW_j\subset \bX_j\backslash
  Z_j\atop \bW_j\in
 I(\beta_a)^{n_j(T)-1}}\sum_{\s_j\in\S_{n_j(T)-1}}\notag\\
&\qquad\cdot \prod_{\{j,s\}\in L_j^1(T)}\Bigg(\left(\frac{1}{h}\right)^{m_s}
\sum_{\bX_s\in I(\beta_a)^{m_s-k_s}}\sum_{\bY_s\in \hat{I}^{k_s}}|J_{m_s}(\beta_a)(\bX_s,R_{\beta_a}(\bY_s))|\notag
\\
&\qquad\qquad\qquad\cdot\sum_{Z_s\subset
 \bX_s\atop 
Z_s\in I(\beta_a)}|\widetilde{C_o}(\beta_a)(W_{j,\s_j\circ\zeta_j(\{j,s\})},Z_s)|
\Bigg)
\Bigg)\Bigg)\notag\\
&\quad\cdot d_0(Y_{1,1},X_{q,r})e^{\sum_{j=0}^d(\fw(l)d_j(Y_{1,1},R_{\beta_a}(Y_{p,1})))^{\fr}},\notag
\end{align}
where we also used \eqref{eq_basic_temperature_bound} to justify the inequality
$$
e^{\sum_{j=0}^d(\frac{1}{\pi}\fw(l)\hat{d}_j(Y,Y_{p,1}))^{\fr}}\le
e^{\sum_{j=0}^d(\fw(l)d_j(Y,R_{\beta_a}(Y_{p,1})))^{\fr}}.
$$
We can estimate the right-hand side of
 \eqref{eq_tree_kernels_fixed_large_pre} by straightforwardly following
 the argument in Subsection \ref{subsec_general_tree_expansion} leading
 to Lemma \ref{lem_tree_sum_kernels}
 \eqref{item_tree_sum_kernels_fixed}.
This procedure is summarized as follows.
\begin{align}
&(\text{The right-hand side of \eqref{eq_tree_kernels_fixed_large_pre}})\label{eq_tree_kernels_fixed_large_next}\\
&\le \frac{2\pi}{\beta_1}\sum_{q=1}^n(m_q-k_q)\frac{1}{n!}\sum_{T\in \T_n}
1_{n_j(T)\le m_j-k_j (\forall j\in
 \{1,2,\cdots,n\})}2^{n-1}D_{et}^{\frac{1}{2}\sum_{j=1}^n(m_j-k_j)-n+1}\notag\\
&\quad\cdot\prod_{i=1}^n\left((m_i-k_i)\left(\begin{array}{c}m_i-k_i-1\\
					     n_i(T)-1\end{array}\right)(n_i(T)-1)!\right)\notag\\
&\quad\cdot\prod_{j=1}^n\Bigg(\sum_{q_j=0}^1\|J_{m_j}(\beta_a)\|_{l,q_j}\Bigg)
\prod_{k=2}^n\Bigg(\sum_{r_k=0}^1\|\widetilde{C_o}(\beta_a)\|_{l,r_k}\Bigg)
1_{\sum_{j=1}^nq_j+\sum_{k=2}^nr_k=1}\notag\\
&\le \frac{2\pi}{\beta_1} 2^{3\sum_{j=1}^n(m_j-k_j)}D_{et}^{\frac{1}{2}\sum_{j=1}^n(m_j-k_j)-n+1}\notag\\
&\quad\cdot\prod_{j=1}^n\Bigg(\sum_{q_j=0}^1\|J_{m_j}(\beta_a)\|_{l,q_j}\Bigg)
\prod_{k=2}^n\Bigg(\sum_{r_k=0}^1\|\widetilde{C_o}(\beta_a)\|_{l,r_k}\Bigg)
1_{\sum_{j=1}^nq_j+\sum_{k=2}^nr_k=1}.\notag
\end{align}
To derive the first inequality we followed the proof of Lemma
 \ref{lem_tree_line_kernels} \eqref{item_tree_kernels_fixed}. 
Then, we applied Lemma \ref{lem_tree_combinatorial_factor} to derive the
 second inequality.

\eqref{item_tree_kernels_no_fixed_large}: We let $X_{1,1}\in I^0$ play the same
 role as $Y_{1,1}\in I^{0}$ did in the proof of
 \eqref{item_tree_kernels_fixed_large}. Application of Lemma
 \ref{lem_tree_line_expansion}, \eqref{eq_beta_dominant_bound} and repetition of the argument in
 Subsection \ref{subsec_general_tree_expansion} leading to Lemma
 \ref{lem_tree_sum_kernels} \eqref{item_tree_sum_kernels_fixed} ensure that
\begin{align*}
&(\text{the left-hand side of \eqref{eq_tree_kernels_no_fixed_large}})\\
&\le
 \frac{2\pi}{\beta_1}\sum_{q=1}^n\sum_{r=1}^{m_q}\frac{1}{n!}\sum_{T\in \T_n}
 1_{n_j(T)\le m_j (\forall j\in
 \{1,2,\cdots,n\})}2^{n-1}D_{et}^{\frac{1}{2}\sum_{j=1}^nm_j-n+1}\\
&\quad\cdot
 \left(\frac{1}{h}\right)^{m_1-1}\sum_{(X_{1,2},X_{1,3},\cdots,X_{1,m_1})\in I(\beta_a)^{m_1-1}}|J_{m_1}(\beta_a)(\bX_1)|\sum_{\bW_1\subset \bX_1\atop
\bW_1\in I(\beta_a)^{n_1(T)}}\sum_{\s_1\in
 \S_{n_1(T)}}\\
&\quad\cdot\prod_{\{1,s\}\in
 L_1^1(T)}\Bigg(\left(\frac{1}{h}\right)^{m_s}\sum_{\bX_s\in
 I(\beta_a)^{m_s}}|J_{m_s}(\beta_a)(\bX_s)|\\
&\qquad\qquad\qquad\cdot \sum_{Z_s\subset \bX_s\atop
Z_s\in I(\beta_a)}|\widetilde{C_o}(\beta_a)(W_{1,\s_1\circ
 \zeta_1(\{1,s\})},Z_s)|\Bigg)\\
&\quad \cdot \prod_{u=1\atop order
 }^{d_T(1)-1}\Bigg(\prod_{j\in\{2,3,\cdots,n\}\text{ with }\atop
 \dis_T(1,j)=u,n_j(T)\neq 1}\Bigg(\sum_{\bW_j\subset \bX_j\backslash
 Z_j\atop \bW_j\in
 I(\beta_a)^{n_j(T)-1}}\sum_{\s_j\in\S_{n_j(T)-1}}\\
&\qquad\cdot \prod_{\{j,s\}\in L_j^1(T)}\Bigg(\left(\frac{1}{h}\right)^{m_s}
\sum_{\bX_s\in I(\beta_a)^{m_s}}|J_{m_s}(\beta_a)(\bX_s)|\\
&\quad\qquad\qquad\qquad\cdot\sum_{Z_s\subset
 \bX_s\atop Z_s\in I(\beta_a)}|\widetilde{C_o}(\beta_a)(W_{j,\s_j\circ\zeta_j(\{j,s\})},Z_s)|
\Bigg)
\Bigg)\Bigg)\\
&\quad\cdot d_0(X_{1,1},X_{q,r})\\
&\le \frac{2\pi}{\beta_1}\sum_{q=1}^nm_q\frac{1}{n!}\sum_{T\in \T_n}
1_{n_j(T)\le m_j(\forall j\in
 \{1,2,\cdots,n\})}2^{n-1}D_{et}^{\frac{1}{2}\sum_{j=1}^nm_j-n+1}\\
&\quad\cdot\prod_{i=1}^n\left(m_i\left(\begin{array}{c}m_i-1\\
					     n_i(T)-1\end{array}\right)(n_i(T)-1)!\right)\\
&\quad\cdot\prod_{j=1}^n\Bigg(\sum_{q_j=0}^1\|J_{m_j}(\beta_a)\|_{l,q_j}\Bigg)
\prod_{k=2}^n\Bigg(\sum_{r_k=0}^1\|\widetilde{C_o}(\beta_a)\|_{l,r_k}\Bigg)
1_{\sum_{j=1}^nq_j+\sum_{k=2}^nr_k=1},
\end{align*}
which is proved to be less than or equal to the right-hand side of
 \eqref{eq_tree_kernels_no_fixed_large} by Lemma \ref{lem_tree_combinatorial_factor}. 
\end{proof}

Next we construct necessary lemmas to measure the differences between the
polynomial containing only the time-integrals close to 0 or $\beta_1$
and that containing only the time-integrals close to 0 or $\beta_2$.

\begin{lemma}\label{lem_application_determinant_bound_tree_difference}
Take any $T\in \T_n$, $m_j\in\N$ and $\bX_j\in \hat{I}^{m_j}$
 $(j=1,2,\cdots,n)$. The following inequality holds. 
\begin{align*}
&\Bigg|ope(T,C_o(\beta_1))\prod_{j=1\atop
 order}^n\psi_{R_{\beta_1}(\bX_j)}^j\Bigg|_{\psi^j=0\atop (\forall j\in
 \{1,2,\cdots,n\})}\\
&\quad-ope(T,C_o(\beta_2))\prod_{j=1\atop
 order}^n\psi_{R_{\beta_2}(\bX_j)}^j\Bigg|_{\psi^j=0\atop (\forall j\in
 \{1,2,\cdots,n\})}\Bigg|\\
&\le D(\beta_1,\beta_2)D_{et}^{\frac{1}{2}\sum_{j=1}^nm_j}.
\end{align*}
\end{lemma}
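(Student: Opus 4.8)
The plan is to reduce the estimate on the difference of the two tree operators $ope(T,C_o(\beta_1))$ and $ope(T,C_o(\beta_2))$ to the hypotheses already available, namely the determinant bound \eqref{eq_general_determinant_bound_condition} with constant $D_{et}$, which holds for both $C_o(\beta_1)$ and $C_o(\beta_2)$, and the difference bound \eqref{eq_general_determinant_bound_difference} with constant $D(\beta_1,\beta_2)$. Recall that by definition
\[
ope(T,C_o(\beta_a))=\int_{[0,1]^{n-1}}d\bs\sum_{\xi\in\S_n(T)}\varphi(T,\xi,\bs)\,e^{\sum_{r,s=1}^n M_{at}(T,\xi,\bs)(r,s)\D_{r,s}(C_o(\beta_a))},
\]
and that acting on $\prod_{j}\psi^j_{R_{\beta_a}(\bX_j)}$ and setting all $\psi^j=0$ produces, by the standard computation behind Lemma \ref{lem_application_determinant_bound_tree}, a determinant of the form $\det\big(\langle\bp_i,\bq_j\rangle_{\C^n}\,C_o(\beta_a)(R_{\beta_a}(U_i),R_{\beta_a}(V_j))\big)$ for appropriate $U_i,V_j\in\hat I_0$ coming from the entries of $\bX_1,\dots,\bX_n$, with vectors $\bp_i,\bq_j\in\R^n$ satisfying $\|\bp_i\|_{\C^n}=\|\bq_j\|_{\C^n}=1$ (the $\bp,\bq$ depend on $T,\xi,\bs$ but not on $\beta_a$). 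Since the entries of the $\bX_j$ lie in $\hat I$ by assumption, the restricted points $U_i,V_j$ lie in $\hat I_0$, so \eqref{eq_general_determinant_bound_difference} applies verbatim.

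The key point I would make is that, for fixed $T,\xi,\bs$, the matrices $M_{at}$, the combinatorial structure of the expansion, and hence the index data $(\bp_i,\bq_i,U_i,V_j)$ are \emph{identical} for $\beta_1$ and $\beta_2$; only the covariance kernel changes from $C_o(\beta_1)\circ R_{\beta_1}$ to $C_o(\beta_2)\circ R_{\beta_2}$. Therefore
\[
\Bigl|e^{\sum M_{at}\D(C_o(\beta_1))}\prod_j\psi^j_{R_{\beta_1}(\bX_j)}-e^{\sum M_{at}\D(C_o(\beta_2))}\prod_j\psi^j_{R_{\beta_2}(\bX_j)}\Bigr|_{\psi^j=0}
\]
is exactly the modulus of the difference of two such determinants, which by \eqref{eq_general_determinant_bound_difference} is bounded by $D(\beta_1,\beta_2)\,D_{et}^{(1/2)\sum_j m_j}$, since $\sum_i m_i$ is the total number of Grassmann variables differentiated away (the $n=\sum$ in that bound corresponds to $\tfrac12\sum_j m_j$ here). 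Then integrating over $\bs\in[0,1]^{n-1}$ against $\sum_{\xi}\varphi(T,\xi,\bs)$ and using the normalization \eqref{eq_property_irrelevant_function}, namely $\int_{[0,1]^{n-1}}d\bs\sum_{\xi\in\S_n(T)}\varphi(T,\xi,\bs)=1$, yields the claimed bound $D(\beta_1,\beta_2)D_{et}^{(1/2)\sum_j m_j}$ with no loss.

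Concretely the proof would proceed: (i) expand $e^{\sum M_{at}\D}$ as a finite sum over pairings of the Grassmann variables, exactly as in the proof of Lemma \ref{lem_application_determinant_bound_tree} (referring to \cite[Lemma 4.5]{K1}), and observe that the only $\beta_a$-dependence is through the evaluated covariance entries; (ii) recognize the result as a determinant $\det(\langle\bp_i,\bq_j\rangle_{\C^n} C_o(\beta_a)(R_{\beta_a}(\cdot),R_{\beta_a}(\cdot)))$; (iii) subtract the two and apply \eqref{eq_general_determinant_bound_difference}; (iv) integrate out $\bs$ and sum over $\xi$ using \eqref{eq_property_irrelevant_function}. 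The main obstacle — really the only nontrivial check — is step (i)–(ii): one must verify that the combinatorial/Gram-matrix apparatus producing the determinant representation is genuinely temperature-independent, so that the difference of the two expansions is termwise a difference of determinants with a \emph{common} inner-product structure and \emph{common} index assignment, differing only in which covariance and which restriction map $R_{\beta_a}$ is inserted. Once that is granted, the rest is a direct invocation of the already-assumed bound \eqref{eq_general_determinant_bound_difference} and the normalization \eqref{eq_property_irrelevant_function}, with no combinatorial factors to control.
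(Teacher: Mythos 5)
Your proposal is correct and is essentially the paper's own argument: the paper's proof simply cites the normalization \eqref{eq_property_irrelevant_function}, the properties of $M_{at}(T,\xi,\bs)$ (i.e.\ the Gram representation $M_{at}(r,s)=\<\bp_r,\bq_s\>_{\C^n}$ underlying Lemma \ref{lem_application_determinant_bound_tree}), and the assumption \eqref{eq_general_determinant_bound_difference}, which is exactly your steps (i)--(iv). The temperature-independence of the combinatorial structure that you flag as the one nontrivial check is indeed the point being used implicitly, and your write-up just makes it explicit.
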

\begin{proof}
The result follows from the equality
 \eqref{eq_property_irrelevant_function}, the properties of the matrix
 $M_{at}(T,\xi,\bs)$ and the assumption
 \eqref{eq_general_determinant_bound_difference}.
\end{proof}

Here let us introduce a couple of notations to organize  
formulas. Let $m\in \{2,3,\cdots,N(\beta_2)\}$. 
For anti-symmetric functions $J_m(\beta_a):I(\beta_a)^m\to \C$
$(a=1,2)$, $\bX\in \hat{I}^m$, $i,j\in
\N$ and $l\in \Z$, set
\begin{align*}
&J_m^{(i,j)}(\bX):=\left\{\begin{array}{ll}
		     J_m(\beta_1)(R_{\beta_1}(\bX))-J_m(\beta_2)(R_{\beta_2}(\bX))&\text{if
		      }i=j,\\
J_m(\beta_1)(R_{\beta_1}(\bX))&\text{if }i<j,\\
J_m(\beta_2)(R_{\beta_2}(\bX))&\text{if }i>j,
\end{array}
\right.\\
&A(J_m(\beta_1),J_m(\beta_2))_l^{(i,j)}\\
&:=\left\{\begin{array}{ll}
		     |J_m(\beta_1)-J_m(\beta_2)|_l+\frac{2\pi}{\beta_1}(m-1)\sum_{a=1}^2\|J_m(\beta_a)\|_{l,1}&\text{if }i=j,\\
\sum_{a=1}^2\|J_m(\beta_a)\|_{l,0}&\text{if }i\neq j.
\end{array}
\right.
\end{align*}

\begin{lemma}\label{lem_tree_line_expansion_difference}
Take any $T\in \T_n$, $m_j\in\N$ and $\bX_j\in \hat{I}^{m_j}$
 $(j=1,2,\cdots,n)$. The following inequality holds. 
\begin{align}
&\Bigg|Ope(T,C_o(\beta_1))\prod_{j=1\atop
 order}^n\psi_{R_{\beta_1}(\bX_j)}^j\Bigg|_{\psi^j=0\atop (\forall j\in
 \{1,2,\cdots,n\})}\label{eq_tree_line_expansion_difference}\\
&\quad-Ope(T,C_o(\beta_2))\prod_{j=1\atop
 order}^n\psi_{R_{\beta_2}(\bX_j)}^j\Bigg|_{\psi^j=0\atop (\forall j\in
 \{1,2,\cdots,n\})}\Bigg|\notag\\
&\le 1_{n_j(T)\le m_j (\forall j\in
 \{1,2,\cdots,n\})}2^{n-1}D_{et}^{\frac{1}{2}\sum_{j=1}^nm_j-n+1}\sum_{v=1}^n(1_{v=1}D(\beta_1,\beta_2)+
1_{v\ge 2})\notag\\
&\quad\cdot \sum_{\bW_1\subset \bX_1\atop 
\bW_1\in \hat{I}^{n_1(T)}}\sum_{\s_1\in
 \S_{n_1(T)}}\prod_{\{1,s\}\in L_1^1(T)}\Bigg(\sum_{Z_s\subset \bX_s\atop Z_s\in \hat{I}}|\widetilde{C_o}^{(v,s)}(W_{1,\s_1\circ
 \zeta_1(\{1,s\})},Z_s)|\Bigg)\notag\\
&\quad \cdot \prod_{u=1\atop order
 }^{d_T(1)-1}\Bigg(\prod_{j\in\{2,3,\cdots,n\}\text{ with }\atop
 \dis_T(1,j)=u,n_j(T)\neq 1}\Bigg(\sum_{\bW_j\subset \bX_j\backslash
 Z_j\atop \bW_j\in
 \hat{I}^{n_j(T)-1}}\sum_{\s_j\in\S_{n_j(T)-1}}\notag\\
&\qquad\cdot \prod_{\{j,s\}\in L_j^1(T)}\Bigg(\sum_{Z_s\subset \bX_s
\atop Z_s\in \hat{I}}|\widetilde{C_o}^{(v,s)}(W_{j,\s_j\circ\zeta_j(\{j,s\})},Z_s)|
\Bigg)
\Bigg)\Bigg).\notag
\end{align}
\end{lemma}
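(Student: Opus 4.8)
The statement is the difference-version of Lemma \ref{lem_tree_line_expansion}, and the plan is to mirror that proof line by line, keeping track of where the two temperatures enter. First I would rewrite both sides using the anti-symmetrized Laplacian identity \eqref{eq_laplacian_anti_symmetrization}, so that $Ope(T,C_o(\beta_a))$ acts through the factors $\widetilde{C_o}(\beta_a)$, and decompose $\prod_{l\in T}\D_l$ along the tree exactly as in the proof of Lemma \ref{lem_tree_line_expansion}: apply the lines in $L_1^1(T)$ first, then proceed from $u=1$ up to $u=d_T(1)-1$ through the blocks $\prod_{\dis_T(1,j)=u}\prod_{l\in L_j^1(T)}\D_l$. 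The combinatorial/selection bookkeeping (which $Z_s$ is contracted out of which $\bX_s$, the permutations $\s_j$, the maps $\zeta_j$) is identical; what changes is that each contracted covariance entry is now a \emph{difference} $\widetilde{C_o}(\beta_1)(\cdot)-\widetilde{C_o}(\beta_2)(\cdot)$ rather than a single $\widetilde{C_o}(\cdot)$, and the residual exponential $ope(T,C_o(\beta_a))$ applied to the leftover Grassmann monomial is also $\beta_a$-dependent.

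The key device is a telescoping sum over the vertex index $v\in\{1,\dots,n\}$, which is exactly what the factor $\widetilde{C_o}^{(v,s)}$ and the $1_{v=1}D(\beta_1,\beta_2)+1_{v\ge2}$ weight in the statement encode. Writing the $(n-1)$-line product plus the residual $ope$ as a product of $n$ ``$\beta_a$-dependent pieces'' (the $n-1$ contracted covariances and the one residual integration), I would insert the standard telescope: the difference of two such $n$-fold products equals a sum over $v$ of a product in which pieces $1,\dots,v-1$ carry $\beta_1$, piece $v$ carries the difference, and pieces $v+1,\dots,n$ carry $\beta_2$. For $v=1$ the ``difference'' piece is the residual-integration piece $ope(T,C_o(\beta_1))-ope(T,C_o(\beta_2))$ applied to the leftover monomial, which is bounded by $D(\beta_1,\beta_2)D_{et}^{\frac12(\text{leftover degree})}$ via Lemma \ref{lem_application_determinant_bound_tree_difference}; for $v\ge 2$ the ``difference'' piece is $\widetilde{C_o}^{(v,v)}=\widetilde{C_o}(\beta_1)-\widetilde{C_o}(\beta_2)$ on one line, while the remaining contracted covariances and the residual $ope$ are each bounded by Lemma \ref{lem_application_determinant_bound_tree} (giving the $D_{et}^{\frac12\sum m_j-n+1}$ power) and by $\|\widetilde{C_o}(\beta_a)\|$ on the uncontracted lines. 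Collecting the $2$'s from the $n-1$ lines produces $2^{n-1}$, and the support constraint $1_{n_j(T)\le m_j}$ arises as before because $\prod_{l\in T}\D_l$ removes $n_j(T)$ variables from $\psi^j_{R_{\beta_a}(\bX_j)}$; note $R_{\beta_a}$ acts componentwise so it commutes with ``take a sub-tuple'', which is why the $\bW_j,Z_s\subset\bX_j$ selections in $\hat I$ make sense uniformly in $a$.

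One technical point to be careful about is that the $\beta_a$-dependence sits in two logically distinct places — inside each covariance kernel $\widetilde{C_o}(\beta_a)$ \emph{and} inside the argument $R_{\beta_a}(\bX_j)$ of the Grassmann variables — and both must be telescoped consistently; the notation $\widetilde{C_o}^{(v,s)}$ is precisely designed so that ``which $\beta$'' is determined by comparing the line-index $s$ (or the running vertex along the tree) against $v$, and I would state explicitly that the anti-symmetry of each $\widetilde{C_o}$ lets us relabel the contracted entries so that the comparison is by the ordered position, matching the convention already used in Lemma \ref{lem_tree_line_expansion}. I expect the main obstacle to be purely organizational rather than mathematical: managing the interleaving of the telescope index $v$ with the recursive layer-by-layer ($u=1,\dots,d_T(1)-1$) estimation, so that at the layer where vertex $v$ is reached one inserts the difference and everywhere else one inserts the plain bound, without double-counting. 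Once the telescope is set up, each summand is estimated by a verbatim application of the argument already carried out for Lemma \ref{lem_tree_line_expansion}, and summing over $v\in\{1,\dots,n\}$ with the weight $1_{v=1}D(\beta_1,\beta_2)+1_{v\ge2}$ yields exactly the right-hand side of \eqref{eq_tree_line_expansion_difference}.
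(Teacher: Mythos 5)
Your proposal matches the paper's proof: both telescope the difference over the $n$ ``$\beta$-dependent pieces'' (the $n-1$ contracted covariance lines plus the residual $ope$-integration), bound the piece carrying the $ope$-difference via Lemma \ref{lem_application_determinant_bound_tree_difference} to produce the $D(\beta_1,\beta_2)$ factor, bound the residual integration in the remaining pieces via Lemma \ref{lem_application_determinant_bound_tree}, and reuse the layer-by-layer tree decomposition of Lemma \ref{lem_tree_line_expansion} verbatim, so the approach is sound. The one adjustment needed is the orientation of your telescope: to land on $\widetilde{C_o}^{(v,s)}$ exactly as defined (line $s$ carries $\beta_1$ when $v<s$ and $\beta_2$ when $v>s$, so the $v=1$ term has all covariances at $\beta_1$ alongside the $ope$-difference), the pieces \emph{after} the differenced one must carry $\beta_1$ rather than $\beta_2$ --- equivalently, the paper's split is $A_1(B_1-B_2)+(A_1-A_2)B_2$ with $A_a$ the covariance product and $B_a$ the residual integration, after which $A_1-A_2$ is telescoped over the $n-1$ lines.
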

\begin{proof}
Letting the operators act on the input Grassmann monomials in the
 same way as in the proof of Lemma \ref{lem_tree_line_expansion} and
 using Lemma \ref{lem_application_determinant_bound_tree_difference}
 yield that
\begin{align*}
&(\text{the left-hand side of
 \eqref{eq_tree_line_expansion_difference}})\\
&\le 1_{n_j(T)\le m_j(\forall j\in
 \{1,2,\cdots,n\})}2^{n-1}\\
&\quad\cdot\sum_{\bW_1\subset \bX_1\atop
\bW_1\in
 \hat{I}^{n_1(T)}}\sum_{\s_1\in
 \S_{n_1(T)}}\prod_{\{1,s\}\in L_1^1(T)}\Bigg(\sum_{Z_s\subset
 \bX_s\atop Z_s\in \hat{I}}|\widetilde{C_o}(\beta_1)(R_{\beta_1}(W_{1,\s_1\circ
 \zeta_1(\{1,s\})},Z_s))|\Bigg)\\
&\quad \cdot \prod_{u=1\atop order
 }^{d_T(1)-1}\Bigg(\prod_{j\in\{2,3,\cdots,n\}\text{ with }\atop
 \dis_T(1,j)=u, n_j(T)\neq 1}\Bigg(\sum_{\bW_j\subset \bX_j\backslash
 Z_s\atop \bW_j\in
 \hat{I}^{n_j(T)-1}}\sum_{\s_j\in\S_{n_j(T)-1}}\\
&\qquad\cdot \prod_{\{j,s\}\in L_j^1(T)}\Bigg(\sum_{Z_s\subset \bX_s\atop
Z_s\in \hat{I}}|\widetilde{C_o}(\beta_1)(R_{\beta_1}(W_{j,\s_j\circ\zeta_j(\{j,s\})},Z_s))|\Bigg)
\Bigg)\Bigg)\\
&\quad\cdot \Bigg|
 ope(T,C_o(\beta_1))\psi^1_{R_{\beta_1}(\bX_1\backslash\bW_1)}\\
&\qquad\cdot\prod_{k=2\atop
 order}^n(1_{n_k(T)\neq 1}\psi_{R_{\beta_1}((\bX_k\backslash Z_k)\backslash \bW_k)}^k+1_{n_k(T)= 1}\psi_{R_{\beta_1}(\bX_k\backslash Z_k)}^k)\Bigg|_{\psi^j=0\atop (\forall j\in
 \{1,2,\cdots,n\})}\\
&\qquad -
 ope(T,C_o(\beta_2))\psi^1_{R_{\beta_2}(\bX_1\backslash\bW_1)}\\
&\qquad\quad\cdot\prod_{k=2\atop
 order}^n(1_{n_k(T)\neq 1}\psi_{R_{\beta_2}((\bX_k\backslash Z_k)\backslash \bW_k)}^k+1_{n_k(T)= 1}\psi_{R_{\beta_2}(\bX_k\backslash Z_k)}^k)\Bigg|_{\psi^j=0\atop (\forall j\in
 \{1,2,\cdots,n\})}\Bigg|\\
&\quad+1_{n_j(T)\le m_j (\forall j\in
 \{1,2,\cdots,n\})}2^{n-1}\\
&\quad\cdot \sum_{\bW_1\subset \bX_1\atop 
\bW_1\in
 \hat{I}^{n_1(T)}}\sum_{\s_1\in
 \S_{n_1(T)}}\prod_{\{1,s\}\in L_1^1(T)}\Bigg(\sum_{Z_s\subset \bX_s\atop
Z_s\in \hat{I}}\Bigg)\\
&\quad \cdot \prod_{u=1\atop order
 }^{d_T(1)-1}\Bigg(\prod_{j\in\{2,3,\cdots,n\}\text{ with }\atop
 \dis_T(1,j)=u,n_j(T)\neq 1}\Bigg(\sum_{\bW_j\subset \bX_j\backslash
 Z_s\atop \bW_j\in
 \hat{I}^{n_j(T)-1}}\sum_{\s_j\in\S_{n_j(T)-1}}\prod_{\{j,s\}\in L_j^1(T)}\Bigg(\sum_{Z_s\subset \bX_s\atop
Z_s\in \hat{I}}\Bigg)
\Bigg)\Bigg)\\
&\quad\cdot \Bigg|\prod_{j\in\{1,2,\cdots,n\}\atop\text{with }
 n_j(T)\neq 1}\prod_{\{j,s\}\in L_j^1(T)}
\widetilde{C_o}(\beta_1)(R_{\beta_1}(W_{j,\s_j\circ\zeta_j(\{j,s\})},Z_s))\\
&\qquad - \prod_{j\in\{1,2,\cdots,n\}\atop\text{with }
 n_j(T)\neq 1}\prod_{\{j,s\}\in L_j^1(T)}
\widetilde{C_o}(\beta_2)(R_{\beta_2}(W_{j,\s_j\circ\zeta_j(\{j,s\})},Z_s))\Bigg|\\
&\quad\cdot \Bigg|
 ope(T,C_o(\beta_2))\psi^1_{R_{\beta_2}(\bX_1\backslash\bW_1)}\\
&\qquad\cdot \prod_{k=2\atop
 order}^n(1_{n_k(T)\neq 1}\psi_{R_{\beta_2}((\bX_k\backslash Z_k)\backslash \bW_k)}^k+1_{n_k(T)= 1}\psi_{R_{\beta_2}(\bX_k\backslash Z_k)}^k)\Bigg|_{\psi^j=0\atop (\forall j\in
 \{1,2,\cdots,n\})}\Bigg|\\
&\le 1_{n_j(T)\le m_j(\forall j\in
 \{1,2,\cdots,n\})}2^{n-1}D_{et}^{\frac{1}{2}\sum_{j=1}^nm_j-n+1}D(\beta_1,\beta_2)\\
&\quad\cdot \sum_{\bW_1\subset \bX_1\atop
\bW_1\in
 \hat{I}^{n_1(T)}}\sum_{\s_1\in
 \S_{n_1(T)}}\prod_{\{1,s\}\in L_1^1(T)}\Bigg(\sum_{Z_s\subset \bX_s
\atop Z_s\in \hat{I}}|\widetilde{C_o}(\beta_1)(R_{\beta_1}(W_{1,\s_1\circ
 \zeta_1(\{1,s\})},Z_s))|\Bigg)\\
&\quad \cdot \prod_{u=1\atop order
 }^{d_T(1)-1}\Bigg(\prod_{j\in\{2,3,\cdots,n\}\text{ with }\atop
 \dis_T(1,j)=u,n_j(T)\neq 1}\Bigg(\sum_{\bW_j\subset \bX_j\backslash
 Z_j\atop \bW_j\in  \hat{I}^{n_j(T)-1}}\sum_{\s_j\in\S_{n_j(T)-1}}\\
&\qquad\cdot \prod_{\{j,s\}\in L_j^1(T)}\Bigg(\sum_{Z_s\subset
 \bX_s\atop 
Z_s\in \hat{I}}|\widetilde{C_o}(\beta_1)(R_{\beta_1}(W_{j,\s_j\circ\zeta_j(\{j,s\})},Z_s))|\Bigg)
\Bigg)\Bigg)\\
&\quad +1_{n_j(T)\le m_j(\forall j\in
 \{1,2,\cdots,n\})}2^{n-1}D_{et}^{\frac{1}{2}\sum_{j=1}^nm_j-n+1}\sum_{v=2}^n\\
&\qquad\cdot\sum_{\bW_1\subset \bX_1\atop
\bW_1\in
 \hat{I}^{n_1(T)}}\sum_{\s_1\in
 \S_{n_1(T)}}\prod_{\{1,s\}\in L_1^1(T)}\Bigg(\sum_{Z_s\subset \bX_s
\atop Z_s\in \hat{I}}|\widetilde{C_o}^{(v,s)}(W_{1,\s_1\circ
 \zeta_1(\{1,s\})},Z_s)|\Bigg)\\
&\qquad \cdot \prod_{u=1\atop order
 }^{d_T(1)-1}\Bigg(\prod_{j\in\{2,3,\cdots,n\}\text{ with }\atop
 \dis_T(1,j)=u,n_j(T)\neq 1}\Bigg(\sum_{\bW_j\subset \bX_j\backslash
 Z_j\atop \bW_j\in  \hat{I}^{n_j(T)-1}}\sum_{\s_j\in\S_{n_j(T)-1}}\\
&\qquad\quad\cdot \prod_{\{j,s\}\in L_j^1(T)}\Bigg(\sum_{Z_s\subset
 \bX_s\atop 
Z_s\in \hat{I}}|\widetilde{C_o}^{(v,s)}(W_{j,\s_j\circ\zeta_j(\{j,s\})},Z_s)|\Bigg)
\Bigg)\Bigg),
\end{align*}
which is equal to the right-hand side of \eqref{eq_tree_line_expansion_difference}.
\end{proof}

\begin{lemma}\label{lem_tree_sum_kernels_difference}
Take any $m_j\in\{2,3,\cdots,N(\beta_2)\}$ $(j=1,2,\cdots,n)$. Let $J_{m_j}(\beta_a):I(\beta_a)^{m_j}\to\C$
 $(j=1,2,\cdots,n,\ a=1,2)$ be anti-symmetric functions. Then, the following
 inequalities hold. 
\begin{enumerate}
\item\label{item_tree_sum_no_fixed_difference} For any $X_{1,1}\in I^0$,
\begin{align*}
&\Bigg|\frac{1}{n!}\sum_{T\in
 \T_n}Ope(T,C_o(\beta_1))\left(\frac{1}{h}\right)^{m_1-1}\sum_{(X_{1,2},X_{1,3},\cdots,X_{1,m_1})\in
 \hat{I}^{m_1-1}}J_{m_1}(\beta_1)(R_{\beta_1}(\bX_1))\\
&\quad\cdot\prod_{j=2}^n\Bigg(\left(\frac{1}{h}\right)^{m_j}\sum_{\bX_j\in \hat{I}^{m_j}}J_{m_j}(\beta_1)(R_{\beta_1}(\bX_j))\Bigg)\prod_{k=1\atop
 order}^n\psi_{R_{\beta_1}(\bX_k)}^k\Bigg|_{\psi^j=0\atop (\forall j\in
 \{1,2,\cdots,n\})}\\
&-\frac{1}{n!}\sum_{T\in
 \T_n}Ope(T,C_o(\beta_2))\\
&\quad\cdot\left(\frac{1}{h}\right)^{m_1-1}\sum_{(X_{1,2},X_{1,3},\cdots,X_{1,m_1})\in
 \hat{I}^{m_1-1}}J_{m_1}(\beta_2)(R_{\beta_2}(\bX_1))\\
&\quad\cdot\prod_{j=2}^n\Bigg(\left(\frac{1}{h}\right)^{m_j}\sum_{\bX_j\in \hat{I}^{m_j}}J_{m_j}(\beta_2)(R_{\beta_2}(\bX_j))\Bigg)\prod_{k=1\atop
 order}^n\psi_{R_{\beta_2}(\bX_k)}^k\Bigg|_{\psi^j=0\atop (\forall j\in
 \{1,2,\cdots,n\})}\Bigg|\\
&\le
 2^{2\sum_{j=1}^nm_j}D_{et}^{\frac{1}{2}\sum_{j=1}^nm_j-n+1}\Bigg(\|\widetilde{C_o}(\beta_1)\|_{l,0}^{n-1}\sum_{v=1}^n\prod_{j=1}^nA(J_{m_j}(\beta_1),J_{m_j}(\beta_2))_l^{(v,j)}\\
&\qquad
 +\sum_{v=2}^n\prod_{j=2}^nA(\widetilde{C_o}(\beta_1),\widetilde{C_o}(\beta_2))_l^{(v,j)}\prod_{k=1}^n\|J_{m_k}(\beta_2)\|_{l,0}\\
&\qquad+D(\beta_1,\beta_2)\sum_{a=1}^2\|\widetilde{C_o}(\beta_a)\|_{l,0}^{n-1}\prod_{j=1}^n\|J_{m_j}(\beta_2)\|_{l,0}\Bigg).
\end{align*}
\item\label{item_tree_sum_fixed_difference}
In addition, assume that $k_j\in\{0,1,\cdots,m_j-1\}$ $(\forall j\in
     \{1,2,\cdots,$ $n\})$,
     $p\in \{1,2,\cdots,n\}$ and $k_1,k_p\ge 1$. Then, for any $Y_{1,1}\in I^0$,
\begin{align}
&\Bigg|\frac{1}{n!}\sum_{T\in \T_n}Ope(T,C_o(\beta_1))\label{eq_difference_tree_pre_external}\\
&\quad\cdot\left(\frac{1}{h}\right)^{m_1-1}\sum_{\bX_1\in
 \hat{I}^{m_1-k_1}}\sum_{(Y_{1,2},Y_{1,3},\cdots,Y_{1,k_1})\in {\hat{I}}^{k_1-1}}J_{m_1}(\beta_1)(R_{\beta_1}(\bX_1,\bY_1))\notag\\
&\quad\cdot\prod_{j=2}^n\Bigg(\left(\frac{1}{h}\right)^{m_j}\sum_{\bX_j\in
 \hat{I}^{m_j-k_j}}\sum_{\bY_j\in {\hat{I}}^{k_j}}J_{m_j}(\beta_1)(R_{\beta_1}(\bX_j,\bY_j))\Bigg)\notag\\
&\quad\cdot e^{\sum_{j=0}^d(\frac{1}{\pi}\fw(l)\hat{d}_j(Y_{1,1},Y_{p,1}))^{\fr}}\prod_{k=1\atop order}^n\psi_{R_{\beta_1}(\bX_k)}^k\Bigg|_{\psi^j=0\atop (\forall j\in
 \{1,2,\cdots,n\})}\notag\\
&-\frac{1}{n!}\sum_{T\in \T_n}Ope(T,C_o(\beta_2))\notag\\
&\quad\cdot\left(\frac{1}{h}\right)^{m_1-1}\sum_{\bX_1\in
 \hat{I}^{m_1-k_1}}\sum_{(Y_{1,2},Y_{1,3},\cdots,Y_{1,k_1})\in {\hat{I}}^{k_1-1}}J_{m_1}(\beta_2)(R_{\beta_2}(\bX_1,\bY_1))\notag\\
&\quad\cdot\prod_{j=2}^n\Bigg(\left(\frac{1}{h}\right)^{m_j}\sum_{\bX_j\in
 \hat{I}^{m_j-k_j}}\sum_{\bY_j\in {\hat{I}}^{k_j}}J_{m_j}(\beta_2)(R_{\beta_2}(\bX_j,\bY_j))\Bigg)\notag\\
&\quad\cdot e^{\sum_{j=0}^d(\frac{1}{\pi}\fw(l)\hat{d}_j(Y_{1,1},Y_{p,1}))^{\fr}}\prod_{k=1\atop order}^n\psi_{R_{\beta_2}(\bX_k)}^k\Bigg|_{\psi^j=0\atop (\forall j\in
 \{1,2,\cdots,n\})}\Bigg|\notag\\
&\le
 2^{2\sum_{j=1}^n(m_j-k_j)}D_{et}^{\frac{1}{2}\sum_{j=1}^n(m_j-k_j)-n+1}\notag\\
&\quad\cdot\Bigg(\|\widetilde{C_o}(\beta_1)\|_{l,0}^{n-1}\sum_{v=1}^n\prod_{j=1}^nA(J_{m_j}(\beta_1),J_{m_j}(\beta_2))_l^{(v,j)}\notag\\
&\qquad
 +\sum_{v=2}^n\prod_{j=2}^nA(\widetilde{C_o}(\beta_1),\widetilde{C_o}(\beta_2))_l^{(v,j)}\prod_{k=1}^n\|J_{m_k}(\beta_2)\|_{l,0}\notag\\
&\qquad+D(\beta_1,\beta_2)\sum_{a=1}^2\|\widetilde{C_o}(\beta_a)\|_{l,0}^{n-1}\prod_{j=1}^n\|J_{m_j}(\beta_2)\|_{l,0}\Bigg).\notag
\end{align}
\end{enumerate}
\end{lemma}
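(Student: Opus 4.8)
The plan is to run the same machinery as in Subsection~\ref{subsec_general_tree_expansion}, but now tracking a telescoping difference, and to reduce everything to the already-established tree-line estimates. The starting point is Lemma~\ref{lem_tree_line_expansion_difference}, which controls, on fixed Grassmann monomials $\psi^j_{R_{\beta_a}(\bX_j)}$, the difference $Ope(T,C_o(\beta_1))(\cdots)-Ope(T,C_o(\beta_2))(\cdots)$ in terms of the hybrid covariances $\widetilde{C_o}^{(v,s)}$ (the $v=1$ term carrying the determinant-bound discrepancy $D(\beta_1,\beta_2)$, the $v\ge 2$ terms carrying one covariance-line difference). In the expression on the left-hand side of \eqref{eq_difference_tree_pre_external} one inserts the intermediate quantity obtained by keeping all the vertex kernels at $\beta_2$ while evaluating the operator at $\beta_1$, so that the difference splits as $(\mathrm{A})+(\mathrm{B})$, where $(\mathrm{A})=\bigl(\prod_j J_{m_j}(\beta_1)(R_{\beta_1}(\cdots))-\prod_j J_{m_j}(\beta_2)(R_{\beta_2}(\cdots))\bigr)\cdot Ope(T,C_o(\beta_1))(\cdots)$ and $(\mathrm{B})=\prod_j J_{m_j}(\beta_2)(R_{\beta_2}(\cdots))\cdot\bigl(Ope(T,C_o(\beta_1))(\cdots)-Ope(T,C_o(\beta_2))(\cdots)\bigr)$.

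\textbf{Term $(\mathrm{A})$.} Here I would telescope $\prod_j J_{m_j}(\beta_1)-\prod_j J_{m_j}(\beta_2)$ over the vertices, introducing a pivot $v\in\{1,\dots,n\}$ so that vertex $v$ carries the kernel difference and all remaining vertices carry $\|\cdot\|_{l,0}$-type kernels; the residual covariances are all at $\beta_1$, which produces the factor $\|\widetilde{C_o}(\beta_1)\|_{l,0}^{n-1}$. One then applies verbatim the recursive summation over the variables in $\hat I$ from Lemma~\ref{lem_tree_line_kernels}/Lemma~\ref{lem_tree_line_estimate_induction} (in case (b), distributing the weight $e^{\sum_{j}(\frac1\pi\fw(l)\hat d_j(Y_{1,1},Y_{p,1}))^{\fr}}$ along the unique path from vertex $1$ to vertex $p$ exactly as in \eqref{eq_weight_decomposition_2_vertices}, using $(x+y)^{\fr}\le x^{\fr}+y^{\fr}$). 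The only new point is anchoring the difference kernel at the pivot to the $I^0$-based quantity $|J_{m_v}(\beta_1)-J_{m_v}(\beta_2)|_l$: when $v=1$ this is immediate, and when $v\ne 1$ one re-anchors the relevant argument to time $0$ using the translation invariance \eqref{eq_temperature_translation}, at the price of a single $\hat d_0$-distance factor bounded by $\frac{2\pi}{\beta_1}$ as in \eqref{eq_basic_temperature_lower_bound}--\eqref{eq_basic_temperature_bound}, summed over the at most $m_v-1$ admissible argument positions; this is exactly the origin of the term $\frac{2\pi}{\beta_1}(m_v-1)\sum_{a}\|J_{m_v}(\beta_a)\|_{l,1}$ in $A(J_{m_v}(\beta_1),J_{m_v}(\beta_2))_l^{(v,v)}$, in direct analogy with the $(n-1)\|J_n(\beta_a)\|_{l,1}$ term produced in the proof of Lemma~\ref{lem_general_free_bound_difference}. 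This yields the first group of the claimed bound.

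\textbf{Term $(\mathrm{B})$ and summation over trees.} For $(\mathrm{B})$ one inserts Lemma~\ref{lem_tree_line_expansion_difference}, and then runs the same recursive variable-summation as above with the vertex kernels now all at $\beta_2$: the $v\ge 2$ contributions of that lemma, carrying one line difference $\widetilde{C_o}^{(v,s)}$, give the second group $\sum_{v=2}^n\prod_{j=2}^n A(\widetilde{C_o}(\beta_1),\widetilde{C_o}(\beta_2))_l^{(v,j)}\prod_k\|J_{m_k}(\beta_2)\|_{l,0}$ (re-anchoring covariance differences produces their $\frac{2\pi}{\beta_1}(2-1)\|\widetilde{C_o}(\beta_a)\|_{l,1}$ penalty), while the $v=1$ contribution, carrying the factor $D(\beta_1,\beta_2)$, gives the third group. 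Finally one sums over $T\in\T_n$: by Cayley's theorem on trees with prescribed incidence numbers and Lemma~\ref{lem_tree_combinatorial_factor}, all the combinatorial factors (products of $m_i\binom{m_i-1}{n_i(T)-1}(n_i(T)-1)!$, resp. with $m_i$ replaced by $m_i-k_i$) collapse into $2^{2\sum_j m_j}$ in part (a) and into $2^{2\sum_j(m_j-k_j)}$ together with the $\prod_i\sum_{k_i}\binom{m_i}{k_i}\le 2^{\sum m_j}$ bound in part (b), exactly as in the passage from Lemma~\ref{lem_tree_line_kernels} to Lemma~\ref{lem_tree_sum_kernels}. I expect the main obstacle to be purely bookkeeping: carrying out the two nested telescopings (first over the vertex kernels, then over the covariance lines) in the correct order so that the residual families come out at the stated temperatures ($\beta_1$-covariances with one $J$-difference; $\beta_2$-kernels with one covariance difference), while simultaneously transporting the exponential weight along the tree and keeping the $R_{\beta_a}$-relabelings consistent through the translation-invariance hypotheses \eqref{eq_temperature_translation} and \eqref{eq_temperature_translation_covariance_general}. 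Everything else is a transcription of the estimates of Subsection~\ref{subsec_general_tree_expansion}.
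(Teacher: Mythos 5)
Your proposal follows essentially the same route as the paper's proof: the same split into the kernel-telescoping term (with the operator at $\beta_1$) plus the operator-difference term handled by Lemma \ref{lem_tree_line_expansion_difference}, the same re-anchoring of the pivot kernel via \eqref{eq_temperature_translation} to produce the $\frac{2\pi}{\beta_1}(m_v-1)\sum_a\|J_{m_v}(\beta_a)\|_{l,1}$ penalty, and the same collapse of the combinatorial factors via Lemma \ref{lem_tree_combinatorial_factor}. The only cosmetic inaccuracy is that the factor $\prod_i\sum_{k_i}\binom{m_i}{k_i}$ belongs to the later application in Lemma \ref{lem_tree_formula_general_bound_difference}, not to part \eqref{item_tree_sum_fixed_difference} here, where the $k_j$ are fixed and the bound $2^{2\sum_j(m_j-k_j)}$ comes directly from Lemma \ref{lem_tree_combinatorial_factor}.
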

\begin{proof} We present the proof of
 \eqref{item_tree_sum_fixed_difference} first. The claim \eqref{item_tree_sum_no_fixed_difference} can be proved
 similarly.

\eqref{item_tree_sum_fixed_difference}: Note that
\begin{align}
&(\text{the left-hand side of \eqref{eq_difference_tree_pre_external}})\label{eq_tree_fixed_difference_decomposed}\\
&\le \sum_{v=1}^n \Bigg|\frac{1}{n!}\sum_{T\in\T_n}Ope(T,C_o(\beta_1))
 \notag\\
&\quad\cdot\left(\frac{1}{h}\right)^{m_1-1}\sum_{\bX_1\in
 \hat{I}^{m_1-k_1}}\sum_{(Y_{1,2},Y_{1,3},\cdots,Y_{1,k_1})\in
 {\hat{I}}^{k_1-1}} J_{m_1}^{(v,1)}(\bX_1,\bY_1)\notag\\
&\quad\cdot\prod_{j=2}^n\Bigg(\left(\frac{1}{h}\right)^{m_j}\sum_{\bX_j\in
 \hat{I}^{m_j-k_j}}\sum_{\bY_j\in
 {\hat{I}}^{k_j}}J_{m_j}^{(v,j)}(\bX_j,\bY_j)\Bigg)e^{\sum_{j=0}^d(\frac{1}{\pi}\fw(l)\hat{d}_j(Y_{1,1},Y_{p,1}))^{\fr}}\notag\\
&\quad\cdot \prod_{k=1\atop order}^n\psi_{R_{\beta_1}(\bX_k)}^k\Bigg|_{\psi^j=0\atop (\forall j\in
 \{1,2,\cdots,n\})}\Bigg|\notag\\
&\quad+\Bigg|\frac{1}{n!}\sum_{T\in\T_n}\notag\\ 
&\qquad\cdot \left(\frac{1}{h}\right)^{m_1-1}\sum_{\bX_1\in
 \hat{I}^{m_1-k_1}}
\sum_{(Y_{1,2},Y_{1,3},\cdots,Y_{1,k_1})\in {\hat{I}}^{k_1-1}}J_{m_1}(\beta_2)(R_{\beta_2}(\bX_1,\bY_1))\notag\\
&\qquad\cdot\prod_{j=2}^n\Bigg(\left(\frac{1}{h}\right)^{m_j}\sum_{\bX_j\in
 \hat{I}^{m_j-k_j}}\sum_{\bY_j\in
 {\hat{I}}^{k_j}}J_{m_j}(\beta_2)(R_{\beta_2}(\bX_j,\bY_j))\Bigg)\notag\\
&\qquad\cdot e^{\sum_{j=0}^d(\frac{1}{\pi}\fw(l)\hat{d}_j(Y_{1,1},Y_{p,1}))^{\fr}}\Bigg(Ope(T,C_o(\beta_1))\prod_{k=1\atop order}^n\psi_{R_{\beta_1}(\bX_k)}^k\Bigg|_{\psi^j=0\atop (\forall j\in
 \{1,2,\cdots,n\})}\notag\\
&\qquad\qquad\qquad\qquad\qquad\qquad-Ope(T,C_o(\beta_2))\prod_{k=1\atop order}^n\psi_{R_{\beta_2}(\bX_k)}^k\Bigg|_{\psi^j=0\atop (\forall j\in
 \{1,2,\cdots,n\})}\Bigg)\Bigg|.\notag
\end{align}

By the same procedure leading to the proof of Lemma
 \ref{lem_tree_sum_kernels} \eqref{item_tree_sum_kernels_fixed} we have that
\begin{align}
&(\text{the first term in the right-hand side of
 \eqref{eq_tree_fixed_difference_decomposed}})\label{eq_tree_fixed_difference_kernel_decomposition}\\
&\le
 2^{2\sum_{j=1}^n(m_j-k_j)}D_{et}^{\frac{1}{2}\sum_{j=1}^n(m_j-k_j)-n+1}\notag\\&\quad\cdot \Bigg(\sup_{X\in\hat{I}}\frac{1}{h}\sum_{Y\in \hat{I}} e^{\sum_{j=0}^d(\frac{1}{\pi}\fw(l)\hat{d}_j(X,Y))^{\fr}}
|\widetilde{C_o}(\beta_1)(R_{\beta_1}(X,Y))|\Bigg)^{n-1}\notag\\
&\quad\cdot\sum_{v=1}^n\prod_{i=1}^n\Bigg(
\sup_{X\in \hat{I}}\frah^{m_i-1}\sum_{\bY\in \hat{I}^{m_i-1}}
e^{\sum_{j=0}^d(\frac{1}{\pi}\fw(l)\hat{d}_j(X,Y_{1}))^{\fr}}
|J_{m_i}^{(v,i)}(X,\bY)|\Bigg).\notag
\end{align}
It follows from \eqref{eq_basic_temperature_bound} that
\begin{align}
&\sup_{X\in\hat{I}}\frac{1}{h}\sum_{Y\in \hat{I}} e^{\sum_{j=0}^d(\frac{1}{\pi}\fw(l)\hat{d}_j(X,Y))^{\fr}}
|\widetilde{C_o}(\beta_1)(R_{\beta_1}(X,Y))|\le
 \|\widetilde{C_o}(\beta_1)\|_{l,0},\label{eq_separate_estimation_covariance_kernel}\\
&\sup_{X\in \hat{I}}\frah^{m_i-1}\sum_{\bY\in \hat{I}^{m_i-1}}
e^{\sum_{j=0}^d(\frac{1}{\pi}\fw(l)\hat{d}_j(X,Y_{1}))^{\fr}}
|J_{m_i}^{(v,i)}(X,\bY)|\notag\\
&\le
 \sum_{a=1}^2\|J_{m_i}(\beta_a)\|_{l,0}=A(J_{m_i}(\beta_1),J_{m_i}(\beta_2))_l^{(v,i)},\notag
\end{align}
if $v\neq i$. Note that for any
 $x,y\in [-\beta_1/4,\beta_1/4)_h$,
\begin{align*}
&r_{\beta_a}(r_{\beta_a}(y)-x)=r_{\beta_a}(y-x),\ (\forall
 a\in\{1,2\}),\\
&n_{\beta_1}(r_{\beta_1}(y)-x)=n_{\beta_2}(r_{\beta_2}(y)-x).
\end{align*}
By these equalities, \eqref{eq_temperature_translation}, \eqref{eq_basic_temperature_lower_bound} and
 \eqref{eq_basic_temperature_bound}, for any
 $X=(\rho,\bx,\s,x,\theta)\in \hat{I}$,
\begin{align}
&\frah^{m_v-1}\sum_{\bY\in\hat{I}^{m_v-1}}e^{\sum_{j=0}^d(\frac{1}{\pi}\fw(l)\hat{d}_j(X,Y_{1}))^{\fr}}|J_{m_v}^{(v,v)}(X,\bY)|\label{eq_convenient_arrangement}\\
&=\frah^{m_v-1}\sum_{Y=(\eta,\by,\tau,y,\xi)\in\hat{I}}\sum_{\bW\in\hat{I}^{m_v-2}}e^{\sum_{j=0}^d(\frac{1}{\pi}\fw(l)\hat{d}_j(X,Y))^{\fr}}\notag\\
&\quad\cdot|J_{m_v}(\beta_1)(R_{\beta_1}(X,Y,\bW))-J_{m_v}(\beta_2)(R_{\beta_2}(X,Y,\bW))|\notag\\
&=\frah^{m_v-1}\sum_{Y=(\eta,\by,\tau,y,\xi)\in\hat{I}}\sum_{\bW\in\hat{I}^{m_v-2}}e^{\sum_{j=0}^d(\frac{1}{\pi}\fw(l)\hat{d}_j(X,Y))^{\fr}}\notag\\
&\quad\cdot|J_{m_v}(\beta_1)(X-x,R_{\beta_1}((Y,\bW)-x))\notag\\
&\qquad-J_{m_v}(\beta_2)(X-x,R_{\beta_2}((Y,\bW)-x))|\notag\\
&\quad\cdot(1_{(Y,\bW)-x\in \hat{I}^{m_v-1}}+1_{(Y,\bW)-x\notin
 \hat{I}^{m_v-1}})\notag\\
&\le |J_{m_v}(\beta_1)-J_{m_v}(\beta_2)|_l\notag\\
&\quad +\sum_{a=1}^2\frah^{m_v-1}\sum_{Y=(\eta,\by,\tau,y,\xi)\in\hat{I}}\sum_{\bW\in\hat{I}^{m_v-2}}e^{\sum_{j=0}^d(\fw(l)d_j(\beta_a)(X-x,R_{\beta_a}(Y-x)))^{\fr}}\notag\\
&\qquad\cdot|J_{m_v}(\beta_a)(X-x,R_{\beta_a}((Y,\bW)-x))|1_{(Y,\bW)-x\notin
 \hat{I}^{m_v-1}}\notag\\
&\le
 |J_{m_v}(\beta_1)-J_{m_v}(\beta_2)|_l+\frac{2\pi}{\beta_1}(m_v-1)\sum_{a=1}^2\|J_{m_v}(\beta_a)\|_{l,1}\notag\\
&=A(J_{m_v}(\beta_1),J_{m_v}(\beta_2))_l^{(v,v)}.\notag
\end{align}
Substitution of \eqref{eq_separate_estimation_covariance_kernel},
\eqref{eq_convenient_arrangement} into 
\eqref{eq_tree_fixed_difference_kernel_decomposition} gives 
\begin{align}
&(\text{the first term in the right-hand side of
 \eqref{eq_tree_fixed_difference_decomposed}})\label{eq_tree_fixed_difference_decomposed_one_result}\\
&\le 2^{2\sum_{j=1}^n(m_j-k_j)}D_{et}^{\frac{1}{2}\sum_{j=1}^n(m_j-k_j)-n+1}
\|\widetilde{C_o}(\beta_1)\|_{l,0}^{n-1}\notag\\
&\quad\cdot\sum_{v=1}^n\prod_{i=1}^nA(J_{m_i}(\beta_1),J_{m_i}(\beta_2))_l^{(v,i)}.\notag
\end{align}

On the other hand, by applying Lemma
 \ref{lem_tree_line_expansion_difference} we have that 
\begin{align*}
&(\text{the second term in the right-hand side of
 \eqref{eq_tree_fixed_difference_decomposed}})\\
&\le \sum_{v=1}^n(1_{v=1}D(\beta_1,\beta_2)+1_{v\ge 2})
\frac{1}{n!}\sum_{T\in\T_n}
1_{n_j(T)\le m_j-k_j (\forall j\in
 \{1,2,\cdots,n\})}\\
&\quad\cdot 2^{n-1}D_{et}^{\frac{1}{2}\sum_{j=1}^n(m_j-k_j)-n+1}\notag\\
&\quad\cdot \left(\frac{1}{h}\right)^{m_1-1}\sum_{\bX_1\in
 \hat{I}^{m_1-k_1}}\sum_{(Y_{1,2},Y_{1,3},\cdots,Y_{1,k_1})\in
 \hat{I}^{k_1-1}}|J_{m_1}(\beta_2)(R_{\beta_2}(\bX_1,\bY_1))|\\
&\quad\cdot \sum_{\bW_1\subset
 \bX_1\atop \bW_1\in \hat{I}^{n_1(T)}}\sum_{\s_1\in
 \S_{n_1(T)}}\notag\\
&\quad\cdot\prod_{\{1,s\}\in
 L_1^1(T)}\Bigg(\left(\frac{1}{h}\right)^{m_s}\sum_{\bX_s\in
 \hat{I}^{m_s-k_s}}\sum_{\bY_s\in \hat{I}^{k_s}}|J_{m_s}(\beta_2)(R_{\beta_2}(\bX_s,\bY_s))|\notag\\
&\qquad\qquad\qquad\cdot \sum_{Z_s\subset \bX_s\atop 
Z_s\in \hat{I}}|\widetilde{C_o}^{(v,s)}(W_{1,\s_1\circ
 \zeta_1(\{1,s\})},Z_s)|\Bigg)\notag\\
&\quad \cdot \prod_{u=1\atop order
 }^{d_T(1)-1}\Bigg(\prod_{j\in\{2,3,\cdots,n\}\text{ with }\atop
 \dis_T(1,j)=u,n_j(T)\neq 1}\Bigg(\sum_{\bW_j\subset \bX_j\backslash
 Z_j\atop \bW_j\in
 \hat{I}^{n_j(T)-1}}\sum_{\s_j\in\S_{n_j(T)-1}}\notag\\
&\qquad\cdot \prod_{\{j,s\}\in L_j^1(T)}\Bigg(\left(\frac{1}{h}\right)^{m_s}
\sum_{\bX_s\in \hat{I}^{m_s-k_s}}\sum_{\bY_s\in \hat{I}^{k_s}}|J_{m_s}(\beta_2)(R_{\beta_2}(\bX_s,\bY_s))|\notag
\\
&\quad\qquad\qquad\qquad\cdot\sum_{Z_s\subset \bX_s\atop
Z_s\in \hat{I}}|\widetilde{C_o}^{(v,s)}(W_{j,\s_j\circ\zeta_j(\{j,s\})},Z_s)|\Bigg)
\Bigg)\Bigg)\notag\\
&\quad \cdot
e^{\sum_{j=0}^d(\frac{1}{\pi}\fw(l)\hat{d}_j(Y_{1,1},Y_{p,1}))^{\fr}}.\notag
\end{align*}
Moreover, by following the argument leading to Lemma
 \ref{lem_tree_sum_kernels} \eqref{item_tree_sum_kernels_fixed} we can
 deduce that 
\begin{align}
&(\text{the second term in the right-hand side of
 \eqref{eq_tree_fixed_difference_decomposed}})\label{eq_tree_fixed_difference_covariance_decomposition}\\
&\le \sum_{v=1}^n(1_{v=1}D(\beta_1,\beta_2)+1_{v\ge 2})2^{2\sum_{j=1}^n(m_j-k_j)}
D_{et}^{\frac{1}{2}\sum_{j=1}^n(m_j-k_j)-n+1}
\notag\\
&\quad\cdot
 \prod_{i=2}^n\Bigg(\sup_{X\in\hat{I}}\frac{1}{h}\sum_{Y\in\hat{I}}
e^{\sum_{j=0}^d(\frac{1}{\pi}\fw(l)\hat{d}_j(X,Y))^{\fr}}
|\widetilde{C_o}^{(v,i)}(X,Y)|\Bigg)\notag\\
&\quad\cdot
 \prod_{k=1}^n\Bigg(\sup_{X\in\hat{I}}\frah^{m_k-1}\sum_{\bY\in\hat{I}^{m_k-1}}
e^{\sum_{j=0}^d(\frac{1}{\pi}\fw(l)\hat{d}_j(X,Y_1))^{\fr}}
|J_{m_k}(\beta_2)(R_{\beta_2}(\bX,\bY))|\Bigg).\notag
\end{align}
The inequality \eqref{eq_basic_temperature_bound} guarantees that 
\begin{align*}
&\sup_{X\in\hat{I}}\frah^{m_k-1}\sum_{\bY\in\hat{I}^{m_k-1}}
e^{\sum_{j=0}^d(\frac{1}{\pi}\fw(l)\hat{d}_j(X,Y_1))^{\fr}}
|J_{m_k}(\beta_2)(R_{\beta_2}(\bX,\bY))|\\
&\le \|J_{m_k}(\beta_2)\|_{l,0},\\
&\sup_{X\in\hat{I}}\frac{1}{h}\sum_{Y\in\hat{I}}
e^{\sum_{j=0}^d(\frac{1}{\pi}\fw(l)\hat{d}_j(X,Y))^{\fr}}
|\widetilde{C_o}^{(v,i)}(X,Y)|\\
&\le
 \sum_{a=1}^2\|\widetilde{C_o}(\beta_a)\|_{l,0}=A(\widetilde{C_o}(\beta_1),\widetilde{C_o}(\beta_2))_l^{(v,i)},
\end{align*}
if $v\neq i$. It follows from \eqref{eq_convenient_arrangement} that 
\begin{align*}
\sup_{X\in\hat{I}}\frac{1}{h}\sum_{Y\in\hat{I}}
e^{\sum_{j=0}^d(\frac{1}{\pi}\fw(l)\hat{d}_j(X,Y))^{\fr}}
|\widetilde{C_o}^{(v,v)}(X,Y)|\le A(\widetilde{C_o}(\beta_1),\widetilde{C_o}(\beta_2))_l^{(v,v)}.
\end{align*}
By inserting these inequalities into \eqref{eq_tree_fixed_difference_covariance_decomposition} we obtain
\begin{align}
&(\text{the second term in the right-hand side of
 \eqref{eq_tree_fixed_difference_decomposed}})\label{eq_tree_fixed_difference_decomposed_another_result}\\
&\le 2^{2\sum_{j=1}^n(m_j-k_j)}D_{et}^{\frac{1}{2}\sum_{j=1}^n(m_j-k_j)-n+1}\notag\\
&\quad\cdot\sum_{v=1}^n(1_{v=1}D(\beta_1,\beta_2)+1_{v\ge 2})\prod_{i=2}^nA(\widetilde{C_o}(\beta_1),\widetilde{C_o}(\beta_2))_l^{(v,i)}\prod_{k=1}^n\|J_{m_k}(\beta_2)\|_{l,0}.\notag
\end{align}

By combining \eqref{eq_tree_fixed_difference_decomposed_one_result},
 \eqref{eq_tree_fixed_difference_decomposed_another_result} with
 \eqref{eq_tree_fixed_difference_decomposed} we reach the inequality claimed in \eqref{item_tree_sum_fixed_difference}.

\eqref{item_tree_sum_no_fixed_difference}: By considering the fixed variable $X_{1,1}\in
 I^0$ as $Y_{1,1}$ we can straightforwardly transform the proof of
 \eqref{item_tree_sum_fixed_difference} to derive the
 claimed inequality.
\end{proof}

For $a=1,2$ set
\begin{align*}
T^{(n)}(\beta_a)(\psi):=\frac{1}{n!}\sum_{T\in
 \T_n}Ope(T,C_o(\beta_a))\prod_{j=1}^nJ(\beta_a)(\psi+\psi^j)\Bigg|_{\psi^j=0\atop (\forall j\in
 \{1,2,\cdots,n\})}
\end{align*}
with $J(\beta_a)$ $(\in\bigwedge \cV(\beta_a))$ satisfying that
$J_m(\beta_a)(\psi)=0$ if $m\notin 2\N\cup \{0\}$ and having the
anti-symmetric kernels satisfying 
\eqref{eq_temperature_translation}. By putting the preceding lemmas
together we can prove the following lemma, which is the goal of this subsection.

\begin{lemma}\label{lem_tree_formula_general_bound_difference} 
The following inequalities hold true.
\begin{enumerate}
\item\label{item_tree_formula_bound_0th_difference} For any $n\in
 \N_{\ge 2}$ and $l\in \Z$, 
\begin{align*}
&\left|\frac{h}{N(\beta_1)}T_0^{(n)}(\beta_1)-\frac{h}{N(\beta_2)}T_0^{(n)}(\beta_2)\right|\\
&\le 2n
 D_{et}^{-n+1}\Bigg(\sum_{a=1}^2\|\widetilde{C_o}(\beta_a)\|_{l,0}\Bigg)^{n-2}
\Bigg(\sum_{m=2}^{N(\beta_2)}2^{3m}D_{et}^{\frac{m}{2}}\sum_{a=1}^2\|J_m(\beta_a)\|_{l,0}\Bigg)^{n-1}\\
&\quad
 \cdot\sum_{m=2}^{N(\beta_2)}2^{3m}D_{et}^{\frac{m}{2}}\Bigg(\frac{2\pi}{\beta_1}\sum_{a=1}^2\|\widetilde{C_o}(\beta_a)\|_{l,0}\sum_{c=1}^2\|J_m(\beta_c)\|_{l,1}\\
&\qquad+
\frac{2\pi}{\beta_1}\sum_{a=1}^2\|\widetilde{C_o}(\beta_a)\|_{l,1}\sum_{c=1}^2\|J_m(\beta_c)\|_{l,0}\\
&\qquad+\sum_{a=1}^2\|\widetilde{C_o}(\beta_a)\|_{l,0}|J_m(\beta_1)-J_m(\beta_2)|_{l}\\
&\qquad+|\widetilde{C_o}(\beta_1)-\widetilde{C_o}(\beta_2)|_l\sum_{c=1}^2\|J_m(\beta_c)\|_{l,0}\\
&\qquad+
D(\beta_1,\beta_2)\sum_{a=1}^2\|\widetilde{C_o}(\beta_a)\|_{l,0}\sum_{c=1}^2\|J_m(\beta_c)\|_{l,0}\Bigg).
\end{align*}
\item\label{item_tree_formula_bound_higher_difference}
For any $n\in \N_{\ge 2}$, $l\in\Z$ and $m\in \{2,3,\cdots,N(\beta_2)\}$,
\begin{align*}
&|T_m^{(n)}(\beta_1)-T_m^{(n)}(\beta_2)|_l\\
&\le 2n\cdot 2^{-2m}D_{et}^{-\frac{m}{2}-n+1}
 \Bigg(\sum_{a=1}^2\|\widetilde{C_o}(\beta_a)\|_{l,0}\Bigg)^{n-2}\\
&\quad\cdot \prod_{j=2}^n\Bigg(
\sum_{m_j=2}^{N(\beta_2)}2^{4m_j}D_{et}^{\frac{m_j}{2}}\sum_{a=1}^2\|J_{m_j}(\beta_a)\|_{l,0}\Bigg)\\
&\quad
 \cdot\sum_{m_1=2}^{N(\beta_2)}2^{4m_1}D_{et}^{\frac{m_1}{2}}\Bigg(
\frac{2\pi}{\beta_1}\sum_{a=1}^2\|\widetilde{C_o}(\beta_a)\|_{l,0}\sum_{c=1}^2\|J_{m_1}(\beta_c)\|_{l,1}\\
&\qquad+
\frac{2\pi}{\beta_1}\sum_{a=1}^2\|\widetilde{C_o}(\beta_a)\|_{l,1}\sum_{c=1}^2\|J_{m_1}(\beta_c)\|_{l,0}\\
&\qquad+\sum_{a=1}^2\|\widetilde{C_o}(\beta_a)\|_{l,0}|J_{m_1}(\beta_1)-J_{m_1}(\beta_2)|_{l}\\
&\qquad+
|\widetilde{C_o}(\beta_1)-\widetilde{C_o}(\beta_2)|_l\sum_{c=1}^2\|J_{m_1}(\beta_c)\|_{l,0}\\
&\qquad+
D(\beta_1,\beta_2)\sum_{a=1}^2\|\widetilde{C_o}(\beta_a)\|_{l,0}\sum_{c=1}^2\|J_{m_1}(\beta_c)\|_{l,0}\Bigg)1_{\sum_{j=1}^nm_j-2n+2\ge
 m}.
\end{align*}
\end{enumerate}
\end{lemma}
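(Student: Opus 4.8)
\textbf{Proof plan for Lemma \ref{lem_tree_formula_general_bound_difference}.}

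The plan is to reproduce, at the level of differences between the two temperatures, the argument that produced Lemma \ref{lem_tree_formula_general_bound}. I would begin exactly as in \eqref{eq_tree_formula_explicit_polynomial}--\eqref{eq_tree_formula_explicit_kernel}: expand each input $J(\beta_a)(\psi+\psi^j)$ into its components $J_{m_j}(\beta_a)$, split every monomial into the $\psi^j$-part (length $m_j-k_j$) consumed by the tree operator and the external $\psi$-part (length $k_j$), and record that $\prod_{l\in T}\D_l$ erases $2n-2$ Grassmann variables, whence the constraint $1_{\sum_j m_j-2n+2\ge m}$. The crucial preliminary step is to invoke property \eqref{eq_temperature_translation} for the kernels $J_{m_j}(\beta_a)$ and property \eqref{eq_temperature_translation_covariance_general} for $C_o(\beta_a)$: just as in the proof of Lemma \ref{lem_general_free_bound_difference}\eqref{item_general_free_bound_difference_0th}, the time-translation invariance lets me pull out an overall factor $\beta_a$ (respectively normalize by $h/N(\beta_a)$ in the $m=0$ case) after which the remaining time-integrations can be split into the region $[-\beta_1/4,\beta_1/4)_h$ around $0$ and its complement $[\beta_1/4,\beta_a-\beta_1/4)_h$. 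Thus $T_m^{(n)}(\beta_a)$ decomposes into a ``far'' part, in which at least one time-variable lies in $[\beta_1/4,\beta_a-\beta_1/4)_h$, and a ``near'' part in which all time-variables may be taken in $\hat I$.

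The far part is bounded directly and separately for each $a\in\{1,2\}$ using Lemma \ref{lem_tree_line_kernels_large} (its two items cover the $m=0$ and $m\ge 2$ cases respectively, playing $X_{1,1}\in I^0$ or $Y_{1,1}\in I^0$ as the fixed vertex), then summing over incidence numbers with Lemma \ref{lem_tree_combinatorial_factor}; this produces the $\tfrac{2\pi}{\beta_1}$ prefactor and the terms of the stated bound that involve $\|\cdot\|_{l,1}$. The near part is where the difference is actually taken: I write
\[
 (\text{near }\beta_1)-(\text{near }\beta_2)=\sum_{v=1}^n\big[\cdots\big],
\]
a telescoping sum in which the first $v-1$ factors carry $\beta_1$, the $v$-th factor carries the difference $J_{m_v}(\beta_1)-J_{m_v}(\beta_2)$ (or $C_o(\beta_1)-C_o(\beta_2)$), and the remaining factors carry $\beta_2$; this is precisely the $A(\cdot,\cdot)_l^{(v,j)}$ and $\widetilde{C_o}^{(v,s)}$ bookkeeping set up before Lemma \ref{lem_tree_line_expansion_difference}. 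For each value of $v$ I apply Lemma \ref{lem_tree_sum_kernels_difference} (item \eqref{item_tree_sum_no_fixed_difference} for $m=0$, item \eqref{item_tree_sum_fixed_difference} for $m\ge2$), which already contains the analogue of Lemma \ref{lem_tree_line_expansion_difference} and the $D(\beta_1,\beta_2)$ term coming from Lemma \ref{lem_application_determinant_bound_tree_difference}. Finally I substitute the elementary combinatorial bounds $\binom{m_i}{k_i}$-sum $\le 2^{\sum_j m_j}$ and $(x+y)^{\fr}\le x^{\fr}+y^{\fr}$, collect the powers of $2$ into $2^{4m_j}$ (resp.\ $2^{3m}$), and bound the number of vertices weighted by the telescoping index $v$ by the overall factor $2n$.

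The main obstacle I anticipate is purely organizational rather than conceptual: the single-scale tree expansion has \emph{two} independent sources of temperature-dependence that must be disentangled simultaneously --- the kernels $J_{m_j}$ and the covariance lines $\widetilde{C_o}$ --- so the telescoping index $v$ has to range over both ``a kernel changed'' and ``a covariance line changed,'' and one must check that in the ``far'' decomposition the estimate of Lemma \ref{lem_tree_line_kernels_large} is applied with the correct fixed external vertex so that the weight $e^{\sum_j(\fw(l)d_j(\cdot))^{\fr}}$ is controlled after the shift $x\mapsto x-x_{1}$ is performed (this is where \eqref{eq_basic_temperature_bound} and \eqref{eq_convenient_arrangement} are used, as in Lemma \ref{lem_tree_sum_kernels_difference}). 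Once the decomposition is fixed and one is careful that the factor $\beta_a$ extracted by translation invariance cancels the $1/N(\beta_a)$ (resp.\ contributes the $\tfrac{2\pi}{\beta_1}$ lower bound \eqref{eq_basic_temperature_lower_bound}), the rest is a mechanical repetition of the bounds already proved in Subsection \ref{subsec_general_tree_expansion} and Subsection \ref{subsec_general_tree_expansion_temperature}.
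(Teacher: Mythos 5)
Your proposal is correct and follows essentially the same route as the paper: the same translation-invariance normalization extracting the factor $\beta_a$, the same split of $T_m^{(n)}(\beta_a)$ into a ``far'' part (some time variable in $[\beta_1/4,\beta_a-\beta_1/4)_h$, bounded for each $a$ separately via Lemma \ref{lem_tree_line_kernels_large}) and a ``near'' part whose difference is controlled by the telescoping bookkeeping of Lemma \ref{lem_tree_sum_kernels_difference}, followed by the same combinatorial collection of factors. The only cosmetic difference is that Lemmas \ref{lem_tree_line_kernels_large} and \ref{lem_tree_sum_kernels_difference} already internalize the tree sum and the telescoping index $v$, so your per-$v$ application is folded into a single invocation in the paper.
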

\begin{proof}
\eqref{item_tree_formula_bound_0th_difference}: It follows from
 \eqref{eq_temperature_translation_covariance_general} that
\begin{align}
&Ope(T,C_o(\beta_a))\prod_{j=1\atop order}^n\psi_{\bX_j}^j\Bigg|_{\psi^j=0\atop (\forall j\in
 \{1,2,\cdots,n\})}\label{eq_application_temperature_translation_covariance_tree}\\&= Ope(T,C_o(\beta_a))\prod_{j=1\atop order}^n((-1)^{N_{\beta_a}(\bX_j+x)}\psi_{R_{\beta_a}(\bX_j+x)}^j)\Bigg|_{\psi^j=0\atop (\forall j\in
 \{1,2,\cdots,n\})},\notag\\
&(\forall a\in \{1,2\},\bX_j\in I(\beta_a)^{m_j}\ (j=1,2,\cdots,n),x\in(1/h)\Z).\notag
\end{align}
By using \eqref{eq_temperature_translation} and
 \eqref{eq_application_temperature_translation_covariance_tree} we can
 transform $T_0^{(n)}(\beta_a)$ as follows.
\begin{align*}
&T_0^{(n)}(\beta_a)\\
&=\prod_{i=1}^n\Bigg(\sum_{m_i=2}^{N(\beta_2)}\Bigg)
\frac{1}{n!}\sum_{T\in
 \T_n}Ope(T,C_o(\beta_a))\\
&\quad\cdot\prod_{j=1}^n
\Bigg(\left(\frac{1}{h}\right)^{m_j}\sum_{\bX_j\in
 I(\beta_a)^{m_j}}J_{m_j}(\beta_a)(\bX_j)\Bigg) \prod_{k=1\atop
 order}^n\psi_{\bX_k}^k\Bigg|_{\psi^j=0\atop (\forall j\in
 \{1,2,\cdots,n\})}\\
&=\prod_{i=1}^n\Bigg(\sum_{m_i=2}^{N(\beta_2)}\Bigg)
\frac{1}{n!}\sum_{T\in
 \T_n}Ope(T,C_o(\beta_a))\\
&\quad\cdot \frah^{m_1}\sum_{X_{1,1}\in I^0}\sum_{x\in
 [0,\beta_a)_h}\sum_{(X_{1,2},X_{1,3},\cdots,X_{1,m_1})\in
 I(\beta_a)^{m_1-1}}\\
&\quad\cdot(-1)^{N_{\beta_a}((X_{1,2},X_{1,3},\cdots,X_{1,m_1})-x)}\\
&\quad\cdot J_{m_1}(\beta_a)(X_{1,1},R_{\beta_a}((X_{1,2},X_{1,3},\cdots,X_{1,m_1})-x))\\
&\quad\cdot \prod_{j=2}^n
\Bigg(\left(\frac{1}{h}\right)^{m_j}\sum_{\bX_j\in
 I(\beta_a)^{m_j}}(-1)^{N_{\beta_a}(\bX_j-x)}
J_{m_j}(\beta_a)(R_{\beta_a}(\bX_j-x))\Bigg) \\
&\quad\cdot
 (-1)^{N_{\beta_1}((X_{1,2},X_{1,3},\cdots,X_{1,m_1})-x)}\psi_{(X_{1,1},R_{\beta_a}((X_{1,2},X_{1,3},\cdots,X_{1,m_1})-x))}^1\\
&\quad\cdot\prod_{k=2\atop
 order}^n((-1)^{N_{\beta_a}(\bX_k-x)}\psi_{R_{\beta_a}(\bX_k-x)}^k)\Bigg|_{\psi^j=0\atop (\forall j\in
 \{1,2,\cdots,n\})}\\
&=\beta_a\prod_{i=1}^n\Bigg(\sum_{m_i=2}^{N(\beta_2)}\Bigg)
\frac{1}{n!}\sum_{T\in
 \T_n}Ope(T,C_o(\beta_a))\\
&\quad\cdot \frah^{m_1-1}\sum_{X_{1,1}\in I^0}\sum_{(X_{1,2},X_{1,3},\cdots,X_{1,m_1})\in I(\beta_a)^{m_1-1}}J_{m_1}(\beta_a)(\bX_1)\\
&\quad\cdot\prod_{j=2}^n
\Bigg(\left(\frac{1}{h}\right)^{m_j}\sum_{\bX_j\in
 I(\beta_a)^{m_j}}J_{m_j}(\beta_a)(\bX_j)\Bigg)\prod_{k=1\atop
 order}^n\psi_{\bX_k}^k\Bigg|_{\psi^j=0\atop (\forall j\in
 \{1,2,\cdots,n\})}.
\end{align*}
Then, we decompose $T_0^{(n)}(\beta_a)$ as follows.
 $$T_0^{(n)}(\beta_a)=S_0^{(n)}(\beta_a)+U_0^{(n)}(\beta_a),$$ 
where
\begin{align*}
&S_0^{(n)}(\beta_a)\\
&:=\beta_a\prod_{i=1}^n\Bigg(\sum_{m_i=2}^{N(\beta_2)}\Bigg)
\frac{1}{n!}\sum_{T\in
 \T_n}Ope(T,C_o(\beta_a))\\
&\quad\cdot \frah^{m_1-1}\sum_{X_{1,1}\in I^0}\sum_{(X_{1,2},X_{1,3},\cdots,X_{1,m_1})\in I(\beta_a)^{m_1-1}}J_{m_1}(\beta_a)(\bX_1)\\
&\quad\cdot\prod_{j=2}^n
\Bigg(\left(\frac{1}{h}\right)^{m_j}\sum_{\bX_j\in
 I(\beta_a)^{m_j}}J_{m_j}(\beta_a)(\bX_j)\Bigg)\prod_{k=1\atop
 order}^n\psi_{\bX_k}^k\Bigg|_{\psi^j=0\atop (\forall j\in
 \{1,2,\cdots,n\})}\\
&\quad\cdot 1_{\exists (\rho,\bx,\s,x,\theta)\in I(\beta_a)\text{ s.t. }
(\rho,\bx,\s,x,\theta)\subset (\bX_1,\bX_2,\cdots,\bX_n)\text{ and }x\in
 [\frac{\beta_1}{4},\beta_a-\frac{\beta_1}{4})_h},\\
&U_0^{(n)}(\beta_a)\\
&:=\beta_a\prod_{i=1}^n\Bigg(\sum_{m_i=2}^{N(\beta_2)}\Bigg)
\frac{1}{n!}\sum_{T\in
 \T_n}Ope(T,C_o(\beta_a))\\
&\quad\cdot \frah^{m_1-1}\sum_{X_{1,1}\in I^0}\sum_{(X_{1,2},X_{1,3},\cdots,X_{1,m_1})\in \hat{I}^{m_1-1}}J_{m_1}(\beta_a)(R_{\beta_a}(\bX_1))\\
&\quad\cdot\prod_{j=2}^n
\Bigg(\left(\frac{1}{h}\right)^{m_j}\sum_{\bX_j\in
 \hat{I}^{m_j}}J_{m_j}(\beta_a)(R_{\beta_a}(\bX_j))\Bigg)\prod_{k=1\atop
 order}^n\psi_{R_{\beta_a}(\bX_k)}^k\Bigg|_{\psi^j=0\atop (\forall j\in
 \{1,2,\cdots,n\})}.
\end{align*}

By applying Lemma \ref{lem_tree_line_kernels_large}
 \eqref{item_tree_kernels_no_fixed_large} we have that
\begin{align}
&|S_0^{(n)}(\beta_a)|\label{eq_tree_large_part_0th}\\
&\le \frac{N(\beta_a)}{h}\frac{2\pi}{\beta_1}D_{et}^{-n+1}\prod_{j=1}^n\Bigg(\sum_{q_j=0}^1\sum_{m_j=2}^{N(\beta_2)}2^{3m_j}D_{et}^{\frac{m_j}{2}}\|J_{m_j}(\beta_a)\|_{l,q_j}\Bigg)\notag\\
&\quad\cdot
\prod_{k=2}^n\left(\sum_{r_k=0}^1\|\widetilde{C_o}(\beta_a)\|_{l,r_k}\right)1_{\sum_{j=1}^nq_j+\sum_{k=2}^nr_k=1}\notag\\
&=\frac{N(\beta_a)}{h}\frac{2\pi}{\beta_1}D_{et}^{-n+1}
\Bigg(\sum_{m=2}^{N(\beta_2)}2^{3m}D_{et}^{\frac{m}{2}}\|J_{m}(\beta_a)\|_{l,0}\Bigg)^{n-1}
 \|\widetilde{C_o}(\beta_a)\|_{l,0}^{n-2}\notag\\
&\quad\cdot\sum_{m=2}^{N(\beta_2)}2^{3m}D_{et}^{\frac{m}{2}}
 ((n-1)\|\widetilde{C_o}(\beta_a)\|_{l,1}\|J_{m}(\beta_a)\|_{l,0}\notag\\
&\qquad+
n\|\widetilde{C_o}(\beta_a)\|_{l,0}\|J_{m}(\beta_a)\|_{l,1}).\notag
\end{align}

On the other hand, Lemma \ref{lem_tree_sum_kernels_difference}
 \eqref{item_tree_sum_no_fixed_difference} ensures that
\begin{align}
&\left|\frac{1}{\beta_1}U_0^{(n)}(\beta_1)-\frac{1}{\beta_2}U_0^{(n)}(\beta_2)\right|\label{eq_tree_small_part_0th}\\
&\le\sharp I^0 \prod_{i=1}^n\Bigg(\sum_{m_i=2}^{N(\beta_2)}\Bigg)
 2^{2\sum_{j=1}^nm_j}D_{et}^{\frac{1}{2}\sum_{j=1}^nm_j-n+1}\notag\\
&\quad\cdot\Bigg(\|\widetilde{C_o}(\beta_1)\|_{l,0}^{n-1}\sum_{v=1}^n\prod_{j=1}^nA(J_{m_j}(\beta_1),J_{m_j}(\beta_2))_l^{(v,j)}\notag\\
&\qquad
 +\sum_{v=2}^n\prod_{j=2}^nA(\widetilde{C_o}(\beta_1),\widetilde{C_o}(\beta_2))_l^{(v,j)}\prod_{k=1}^n\|J_{m_k}(\beta_2)\|_{l,0}\notag\\
&\qquad+D(\beta_1,\beta_2)\sum_{a=1}^2\|\widetilde{C_o}(\beta_a)\|_{l,0}^{n-1}\prod_{j=1}^n\|J_{m_j}(\beta_2)\|_{l,0}\Bigg)\notag\\
&\le \sharp I^0 D_{et}^{-n+1}\Bigg(\sum_{m=2}^{N(\beta_2)}2^{2m} D_{et}^{\frac{m}{2}}\sum_{a=1}^2\|J_{m}(\beta_a)\|_{l,0}\Bigg)^{n-1}
\Bigg(\sum_{a=1}^2 \|\widetilde{C_o}(\beta_a)\|_{l,0}\Bigg)^{n-2}\notag\\
&\quad\cdot\sum_{m=2}^{N(\beta_2)}2^{2m}D_{et}^{\frac{m}{2}}
 \Bigg(\notag\\
&\qquad n\sum_{a=1}^2\|\widetilde{C_o}(\beta_a)\|_{l,0}\Bigg(|J_m(\beta_1)-J_m(\beta_2)|_l+\frac{2\pi}{\beta_1}(m-1)\sum_{c=1}^2\|J_{m}(\beta_c)\|_{l,1}\Bigg)\notag\\
&\qquad+(n-1)\Bigg(|\widetilde{C_o}(\beta_1)-\widetilde{C_o}(\beta_2)|_l+\frac{2\pi}{\beta_1}\sum_{a=1}^2\|\widetilde{C_o}(\beta_a)\|_{l,1}\Bigg)\sum_{c=1}^2\|J_{m}(\beta_c)\|_{l,0}\notag\\
&\qquad +D(\beta_1,\beta_2)\sum_{a=1}^2\|\widetilde{C_o}(\beta_a)\|_{l,0}
\sum_{c=1}^2\|J_{m}(\beta_c)\|_{l,0}\Bigg).\notag
\end{align}

Substitution of \eqref{eq_tree_large_part_0th},
 \eqref{eq_tree_small_part_0th} into the inequality 
\begin{align*}
&\left|\frac{h}{N(\beta_1)}T_0^{(n)}(\beta_1)-\frac{h}{N(\beta_2)}T_0^{(n)}(\beta_2)\right|\\
&\le
 \sum_{a=1}^2\frac{h}{N(\beta_a)}|S_0^{(n)}(\beta_a)|+\left|\frac{h}{N(\beta_1)}U_0^{(n)}(\beta_1)-\frac{h}{N(\beta_2)}U_0^{(n)}(\beta_2)\right|
\end{align*}
gives the inequality claimed in \eqref{item_tree_formula_bound_0th_difference}.

\eqref{item_tree_formula_bound_higher_difference}: The anti-symmetric
 kernel $T_m^{(n)}(\beta_a)(\cdot)$ characterized in
 \eqref{eq_tree_formula_explicit_kernel} can be decomposed as
 follows. For any $\bY\in I(\beta_a)^m$,
$$
T_m^{(n)}(\beta_a)(\bY)=S_m^{(n)}(\beta_a)(\bY)+U_m^{(n)}(\beta_a)(\bY),
$$
where
\begin{align*}
&S_m^{(n)}(\beta_a)(\bY)\\
&:=\prod_{i=1}^n\Bigg(\sum_{m_i=2}^{N(\beta_2)}\sum_{k_i=0}^{m_i-1}\left(
\begin{array}{c} m_i\\ k_i\end{array}\right)\sum_{\bY_i\in
 I(\beta_a)^{k_i}}\Bigg)1_{\sum_{j=1}^nm_j-2n+2\ge m}1_{\sum_{j=1}^nk_j= m}\\
&\quad\cdot\frac{1}{m!}\sum_{\s\in \S_m}\sgn(\s)1_{\bY_{\s}=(\bY_1,\bY_2,\cdots,\bY_n)}
\frac{\eps_{\pm}}{n!}\sum_{T\in \T_n}Ope(T,C_o(\beta_a))\\
&\quad\cdot\prod_{j=1}^n\Bigg(\left(\frac{1}{h}\right)^{m_j-k_j}\sum_{\bX_j\in
 I(\beta_a)^{m_j-k_j}}J_{m_j}(\beta_a)(\bX_j,\bY_j)\Bigg)\prod_{k=1\atop order}^n\psi_{\bX_k}^k
\Bigg|_{\psi^j=0\atop (\forall j\in \{1,2,\cdots,n\})}\\
&\quad\cdot 1_{\exists (\rho,\bx,\s,x,\theta)\in I(\beta_a)\text{ s.t. }
(\rho,\bx,\s,x,\theta)\subset
 (\bX_1,\bX_2,\cdots,\bX_n)\text{ and }x\in
 [\frac{\beta_1}{4},\beta_a-\frac{\beta_1}{4})_h},\\
&U_m^{(n)}(\beta_a)(\bY)\\
&:=\prod_{i=1}^n\Bigg(\sum_{m_i=2}^{N(\beta_2)}\sum_{k_i=0}^{m_i-1}\left(
\begin{array}{c} m_i\\ k_i\end{array}\right)\sum_{\bY_i\in
 I(\beta_a)^{k_i}}\Bigg)1_{\sum_{j=1}^nm_j-2n+2\ge m}1_{\sum_{j=1}^nk_j= m}\\
&\quad\cdot\frac{1}{m!}\sum_{\s\in \S_m}\sgn(\s)1_{\bY_{\s}=(\bY_1,\bY_2,\cdots,\bY_n)}
\frac{\eps_{\pm}}{n!}\sum_{T\in \T_n}Ope(T,C_o(\beta_a))\\
&\quad\cdot\prod_{j=1}^n\Bigg(\left(\frac{1}{h}\right)^{m_j-k_j}\sum_{\bX_j\in
 \hat{I}^{m_j-k_j}}J_{m_j}(\beta_a)(R_{\beta_a}(\bX_j),\bY_j)\Bigg)\\
&\quad\cdot\prod_{k=1\atop order}^n\psi_{R_{\beta_a}(\bX_k)}^k
\Bigg|_{\psi^j=0\atop (\forall j\in \{1,2,\cdots,n\})}.
\end{align*}
 
Application of Lemma \ref{lem_tree_line_kernels_large}
 \eqref{item_tree_kernels_fixed_large} yields that
\begin{align}
&|S_m^{(n)}(\beta_1)-S_m^{(n)}(\beta_2)|_l\label{eq_tree_large_part_higher}\\
&\le \prod_{i=1}^n\Bigg(\sum_{m_i=2}^{N(\beta_2)}\sum_{k_i=0}^{m_i-1}\left(
\begin{array}{c} m_i\\ k_i\end{array}\right)\Bigg)1_{\sum_{j=1}^nm_j-2n+2\ge m}1_{\sum_{j=1}^nk_j=
 m}\notag\\
&\quad \cdot \frac{2\pi}{\beta_1} 2^{3\sum_{j=1}^n(m_j-k_j)}D_{et}^{\frac{1}{2}\sum_{j=1}^n(m_j-k_j)-n+1}\notag\\
&\quad\cdot \sum_{a=1}^2\prod_{j=1}^n\Bigg(\sum_{q_j=0}^1\|J_{m_j}(\beta_a)\|_{l,q_j}\Bigg)
\prod_{k=2}^n\Bigg(\sum_{r_k=0}^1\|\widetilde{C_o}(\beta_a)\|_{l,r_k}\Bigg)
 1_{\sum_{j=1}^nq_j+\sum_{k=2}^nr_k=1}\notag\\
&\le \frac{2\pi}{\beta_1} 2^{-3m}D_{et}^{-\frac{m}{2}-n+1}\notag\\
&\quad\cdot\prod_{j=1}^n\Bigg(\sum_{m_j=2}^{N(\beta_2)}
2^{4m_j}D_{et}^{\frac{m_j}{2}}\sum_{q_j=0}^1\sum_{a=1}^2\|J_{m_j}(\beta_a)\|_{l,q_j}\Bigg)
\prod_{k=2}^n\Bigg(\sum_{r_k=0}^1\sum_{a=1}^2\|\widetilde{C_o}(\beta_a)\|_{l,r_k}\Bigg)\notag\\
&\quad\cdot
 1_{\sum_{j=1}^nq_j+\sum_{k=2}^nr_k=1}1_{\sum_{j=1}^nm_j-2n+2\ge
 m}\notag\\
&\le n\frac{2\pi}{\beta_1} 2^{-3m}D_{et}^{-\frac{m}{2}-n+1}\Bigg(\sum_{a=1}^2\|\widetilde{C_o}(\beta_a)\|_{l,0}\Bigg)^{n-2}\notag\\
&\quad\cdot\prod_{j=2}^n\Bigg(\sum_{m_j=2}^{N(\beta_2)}2^{4m_j}D_{et}^{\frac{m_j}{2}}\sum_{a=1}^2\|J_{m_j}(\beta_a)\|_{l,0}\Bigg)\notag\\
&\quad\cdot\sum_{m_1=2}^{N(\beta_2)}2^{4m_1}D_{et}^{\frac{m_1}{2}}\Bigg(\sum_{a=1}^2\|\widetilde{C_o}(\beta_a)\|_{l,1}\sum_{c=1}^2\|J_{m_1}(\beta_c)\|_{l,0}\notag\\
&\qquad+
\sum_{a=1}^2\|\widetilde{C_o}(\beta_a)\|_{l,0}\sum_{c=1}^2\|J_{m_1}(\beta_c)\|_{l,1}\Bigg)1_{\sum_{j=1}^nm_j-2n+2\ge
 m}\notag.
\end{align}

On the other hand, Lemma \ref{lem_tree_sum_kernels_difference}
 \eqref{item_tree_sum_fixed_difference} implies that
\begin{align}
&|U_m^{(n)}(\beta_1)-U_m^{(n)}(\beta_2)|_l\label{eq_tree_small_part_higher}\\
&\le \prod_{i=1}^n\Bigg(\sum_{m_i=2}^{N(\beta_2)}\sum_{k_i=0}^{m_i-1}\left(
\begin{array}{c} m_i\\ k_i\end{array}\right)\Bigg)1_{\sum_{j=1}^nm_j-2n+2\ge m}1_{\sum_{j=1}^nk_j=
 m}\notag\\
&\quad\cdot
 2^{2\sum_{j=1}^n(m_j-k_j)}D_{et}^{\frac{1}{2}\sum_{j=1}^n(m_j-k_j)-n+1}\notag\\
&\quad\cdot\Bigg(\|\widetilde{C_o}(\beta_1)\|_{l,0}^{n-1}\sum_{v=1}^n\prod_{j=1}^nA(J_{m_j}(\beta_1),J_{m_j}(\beta_2))_l^{(v,j)}\notag\\
&\qquad
 +\sum_{v=2}^n\prod_{j=2}^nA(\widetilde{C_o}(\beta_1),\widetilde{C_o}(\beta_2))_l^{(v,j)}\prod_{k=1}^n\|J_{m_k}(\beta_2)\|_{l,0}\notag\\
&\qquad+D(\beta_1,\beta_2)\sum_{a=1}^2\|\widetilde{C_o}(\beta_a)\|_{l,0}^{n-1}\prod_{j=1}^n\|J_{m_j}(\beta_2)\|_{l,0}\Bigg)\notag\\
&\le
 n2^{-2m}D_{et}^{-\frac{1}{2}m-n+1}\Bigg(\sum_{a=1}^2\|\widetilde{C_o}(\beta_a)\|_{l,0}\Bigg)^{n-2}\notag\\
&\quad\cdot \prod_{j=2}^n\Bigg(
\sum_{m_j=2}^{N(\beta_2)}2^{3m_j} D_{et}^{\frac{m_j}{2}}\sum_{a=1}^2\|J_{m_j}(\beta_a)\|_{l,0}\Bigg)\notag\\
&\quad\cdot\sum_{m_1=2}^{N(\beta_2)}2^{4m_1} D_{et}^{\frac{m_1}{2}}
\Bigg(\sum_{a=1}^2\|\widetilde{C_o}(\beta_a)\|_{l,0}|J_{m_1}(\beta_1)-J_{m_1}(\beta_2)|_{l}\notag\\
&\qquad + \frac{2\pi}{\beta_1}\sum_{a=1}^2\|\widetilde{C_o}(\beta_a)\|_{l,0}\sum_{c=1}^2\|J_{m_1}(\beta_c)\|_{l,1}\notag\\
&\qquad+
 |\widetilde{C_o}(\beta_1)-\widetilde{C_o}(\beta_2)|_l\sum_{a=1}^2\|J_{m_1}(\beta_a)\|_{l,0}\notag\\
&\qquad +
 \frac{2\pi}{\beta_1}\sum_{a=1}^2\|\widetilde{C_o}(\beta_a)\|_{l,1}\sum_{c=1}^2\|J_{m_1}(\beta_c)\|_{l,0}\notag\\
&\qquad + D(\beta_1,\beta_2)\sum_{a=1}^2\|\widetilde{C_o}(\beta_a)\|_{l,0}\sum_{c=1}^2\|J_{m_1}(\beta_c)\|_{l,0}\Bigg)1_{\sum_{j=1}^nm_j-2n+2\ge
 m}\notag.
\end{align}

Finally, by combining the inequalities
 \eqref{eq_tree_large_part_higher}, \eqref{eq_tree_small_part_higher}
 with the inequality
\begin{align*}
|T_m^{(n)}(\beta_1)-T_m^{(n)}(\beta_2)|_l\le
|S_m^{(n)}(\beta_1)-S_m^{(n)}(\beta_2)|_l+
|U_m^{(n)}(\beta_1)-U_m^{(n)}(\beta_2)|_l,
\end{align*}
we obtain the inequality claimed in
 \eqref{item_tree_formula_bound_higher_difference}.
\end{proof}

\section{Generalized multi-scale integrations}
\label{sec_multi_scale_analysis}

In this section we present multi-scale integrations, assuming 
that a family of covariances is given and each covariance
belonging to the family has certain scale-dependent
upper bounds. We inductively define a family of Grassmann
polynomials by means of the free integration and the tree expansion with
the covariance at one scale. Then, we establish
scale-dependent estimates on the Grassmann polynomials by applying the
general lemmas prepared in Section \ref{sec_general_estimation} and
Section \ref{sec_general_estimation_temperature}. The analysis of this
section can be seen as a generalization of the multi-scale integration
over the Matsubara frequency and that around the zero set of the
dispersion relation in the momentum space. The results obtained in this section will underlie
more concrete, model-dependent analysis in Section \ref{sec_UV} and
Section \ref{sec_IR_model}. 

From this section we use the symbol `$c$' to represent a real positive constant independent
of any parameter. When we construct inequalities, we will frequently
replace the generic constant $c$ by a larger constant with the same symbol in the
following lines without acknowledging the replacement. However, it must 
be clear from the context that such replacement does not change
what the arguments conclude in the last line.

\subsection{The generalized ultra-violet integration}
\label{subsec_UV_general}
Let $N_+\in\N$ be a fixed number. Assume that a family of
covariances $\{C_{o,l}\}_{l=1}^{N_+}$ is given and it satisfies the following
properties with constants $M$, $c_0$, $c_0'\in \R_{\ge 1}$, a weight
$\fw(0)\in \R_{>0}$ and an exponent $\fr\in(0,1]$.
\begin{align}
&C_{o,l}(\rho\bx\ua x, \eta\by\ua y)=C_{o,l}(\rho\bx\da x, \eta\by\da
 y),\label{eq_spin_symmetry_covariance_general}\\
&C_{o,l}(\rho\bx\ua x, \eta\by\da y)=C_{o,l}(\rho\bx\da x, \eta\by\ua
 y)=0,\notag\\
&(\forall (\rho,\bx,x),(\eta,\by,y)\in\cB\times\G\times[0,\beta)_h,l\in\{1,2,\cdots,N_{+}\}),\notag
\end{align}

\begin{align}
&C_{o,l}(\rho\bx\s x, \eta\bx\tau x)=C_{o,l}(\rho\b0\s 0, \eta\b0\tau 0),\label{eq_translation_invariance_covariance_general}\\
&(\forall (\rho,\bx,\s, x)\in I_0,\eta\in\cB,\tau\in \spin,l\in\{1,2,\cdots,N_{+}\}),\notag
\end{align}
 
\begin{align}
&|\det(\<\bp_i,\bq_j\>_{\C^r}C_{o,l}(X_i,Y_j))_{1\le i,j\le n}|\le
 c_0^n,\label{eq_determinant_bound_UV_general}\\
&(\forall r,n\in\N,\bp_i,\bq_i\in\C^r\text{ with }
\|\bp_i\|_{\C^r},\|\bq_i\|_{\C^r}\le 1,\notag\\
&\quad X_i,Y_i\in I_0\
 (i=1,2,\cdots,n),l\in\{1,2,\cdots,N_{+}\}),\notag
\end{align}

\begin{align}
\|\widetilde{C_{o,l}}\|_{0,j}\le c_0M^{-l},\ (\forall j\in
 \{0,1\},l\in\{1,2,\cdots,N_{+}\}),\label{eq_decay_bound_UV_general}
\end{align}
where $\widetilde{C_{o,l}}:I^2\to\C$ is the anti-symmetric extension of
$C_{o,l}$ defined as in \eqref{eq_anti_symmetric_extension_covariance},

\begin{align}
\sum_{l=1}^{N_+}\max_{\rho\in \cB}|C_{o,l}(\rho\b0\ua 0, \rho\b0\ua
 0)|\le c_0'.\label{eq_tadpole_bound_UV_general}
\end{align}
These are the conditions typically satisfied by an actual covariance
with the Matsubara UV cut-off. In fact the parameters $\fw(0)$, $\fr$ do
not play any explicit role here. We need these parameters only to
introduce the norm $\|\cdot\|_{0,0}$ and the semi-norm $\|\cdot\|_{0,1}$.

Using the covariances $\{C_{o,l}\}_{l=1}^{N_+}$, we inductively define a
family of Grassmann polynomials as follows. With parameters $U_{\rho}\in
\C$ $(\rho\in \cB)$, $\delta \in \{1,-1\}$, define
$F^{N_+}(\psi)$, $T^{N_+,(n)}(\psi)\ (n\in\N_{\ge 2})$, $T^{N_+}(\psi)$,
$J^{N_+}(\psi)\in \bigwedge\cV$ by
\begin{align}
&F^{N_+}(\psi):=\frac{\delta}{2h}\sum_{(\rho,\bx,\s,x)\in
 I_0}U_{\rho}\opsi_{\rho\bx\s x}\psi_{\rho\bx\s
 x}\label{eq_initial_polynomial_UV_general}\\
&\qquad\qquad\quad
 -\frac{1}{h}\sum_{(\rho,\bx,x)\in\cB\times\G\times[0,\beta)_h}U_{\rho}
\opsi_{\rho\bx\ua x}\opsi_{\rho\bx\da x}\psi_{\rho\bx\da
 x}\psi_{\rho\bx\ua x},\notag\\
&T^{N_+,(n)}(\psi):=0,\ (\forall n\in\N_{\ge 2}),\
 T^{N_+}(\psi):=0,\notag\\
&J^{N_+}(\psi):=F^{N_+}(\psi)+T^{N_+}(\psi).\notag
\end{align}
Assume that $l\in \{0,1,\cdots,N_+-1\}$ and $J^{l+1}(\psi)\in
\bigwedge\cV$ is given. Define $F^{l}(\psi)$, $T^{l,(n)}(\psi)\
(n\in\N_{\ge 2})$, $T^{l}(\psi)$, $J^{l}(\psi)\in \bigwedge\cV$ by
\begin{align}
&F^{l}(\psi):=\int
 J^{l+1}(\psi+\psi^1)d\mu_{C_{o,l+1}}(\psi^1),\label{eq_inductive_polynomial_UV_general}\\
&T^{l,(n)}(\psi):=\frac{1}{n!}\sum_{T\in
 \T_n}Ope(T,C_{o,l+1})\prod_{j=1\atop order}^nJ^{l+1}(\psi^j+\psi)\Bigg|_{\psi^j=0\atop
 (\forall j\in \{1,2,\cdots,n\})},\notag\\
&\ (\forall n\in\N_{\ge 2}),\notag\\
&T^l(\psi):=\sum_{n=2}^{\infty}T^{l,(n)}(\psi),\notag\\
&J^l(\psi):=F^l(\psi)+T^l(\psi),\notag
\end{align}
on the assumption that $\sum_{n=2}^{\infty}T^{l,(n)}(\psi)$ converges.

Though one can directly see from definition, let us prove the following
lemma by applying Lemma \ref{lem_free_tree_invariance_general}.
\begin{lemma}\label{lem_even_term_survive_UV}
Assume that $J^l(\psi)$ $(l=0,1,\cdots,N_+)$ are well-defined. Then, if
 $m\notin 2\N\cup \{0\}$, 
\begin{align}
T_m^{l,(n)}(\psi)=F_m^l(\psi)=0,\ (\forall l\in \{0,1,\cdots,N_+\},
 n\in\N_{\ge 2}).\label{eq_odd_term_vanish_induction}
\end{align}
\end{lemma}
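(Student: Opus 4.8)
The plan is to prove \eqref{eq_odd_term_vanish_induction} by downward induction on $l$, starting from $l=N_+$ and descending to $l=0$, exploiting the structure that both the free integration and the tree expansion preserve the property of having only even-degree components. The base case $l=N_+$ is immediate: from the definition \eqref{eq_initial_polynomial_UV_general} one has $J^{N_+}(\psi)=F^{N_+}(\psi)$, which is a sum of a quadratic monomial and a quartic monomial, so $J^{N_+}_m(\psi)=0$ for $m\notin 2\N\cup\{0\}$; moreover $T^{N_+,(n)}(\psi)=0$ for all $n\ge 2$ by definition, so \eqref{eq_odd_term_vanish_induction} holds trivially at $l=N_+$.

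For the inductive step, suppose $l\in\{0,1,\dots,N_+-1\}$ and that $J^{l+1}_m(\psi)=0$ whenever $m\notin 2\N\cup\{0\}$; in particular $J^{l+1}(\psi)$ satisfies the hypothesis ``$J_m(\psi)=0$ if $m\notin 2\N\cup\{0\}$'' required throughout Subsection \ref{subsec_general_free_integration} and Subsection \ref{subsec_general_tree_expansion}. Then $F^l(\psi)=\int J^{l+1}(\psi+\psi^1)\,d\mu_{C_{o,l+1}}(\psi^1)$ is exactly the object $F(\psi)$ of \eqref{eq_definition_free_general} with $J(\psi)=J^{l+1}(\psi)$ and $C_o=C_{o,l+1}$, and the remark immediately following \eqref{eq_definition_free_general} already records that $F_m(\psi)=0$ if $m\notin 2\N\cup\{0\}$ — this is because the Grassmann Gaussian integral only contracts variables in pairs, so the parity of the surviving monomials is unchanged. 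Likewise, $T^{l,(n)}(\psi)$ is precisely the polynomial $T^{(n)}(\psi)$ defined by the right-hand side of \eqref{eq_explicit_tree_formula} with $J(\psi)=J^{l+1}(\psi)$ and $C_o=C_{o,l+1}$, and the characterization \eqref{eq_tree_formula_explicit_polynomial} of $T_m^{(n)}(\psi)$ shows that each surviving monomial has degree $\sum_{j=1}^n m_j - 2(n-1)$ with every $m_j\in 2\N$, hence even; equivalently, the operator $\prod_{l\in T}\D_l$ removes an even number ($2n-2$) of Grassmann variables from a product of even-degree monomials. Therefore $T^{l,(n)}_m(\psi)=0$ for $m\notin 2\N\cup\{0\}$ and all $n\ge 2$, and consequently $T^l_m(\psi)=0$ and $J^l_m(\psi)=F^l_m(\psi)+T^l_m(\psi)=0$ for such $m$, closing the induction.

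The cleanest way to organize this, and the route I would actually take in the write-up, is to invoke Lemma \ref{lem_free_tree_invariance_general}\eqref{item_invariance_simple} with the particular choice $S:=\mathrm{id}_I$ on $I_0\times\{1,-1\}$ twisted by a sign, namely $S(\rho,\bx,\s,x,\theta):=(\rho,\bx,\s,x,\theta)$ and $Q\equiv\pi$ on the ``$-1$-sector'' and $Q\equiv 0$ on the ``$+1$-sector'' — or more transparently, take $(\cR\psi)_X:=-\psi_X$ for all $X\in I$, so that $f(\cR\psi)=\sum_m(-1)^m f_m(\psi)$. One checks trivially that $J^{l+1}(\cR\psi)=J^{l+1}(\psi)$ is equivalent to the induction hypothesis (only even $m$ contribute), and that the anti-symmetric extension satisfies $\widetilde{C_{o,l+1}}(\bX)=(-1)^2\widetilde{C_{o,l+1}}(S_2(\bX))=\widetilde{C_{o,l+1}}(\bX)$ automatically since it has degree $2$. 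Then Lemma \ref{lem_free_tree_invariance_general}\eqref{item_invariance_simple} gives $F^l(\cR\psi)=F^l(\psi)$ and $T^{l,(n)}(\cR\psi)=T^{l,(n)}(\psi)$, i.e. $(-1)^m F^l_m(\psi)=F^l_m(\psi)$ and $(-1)^m T^{l,(n)}_m(\psi)=T^{l,(n)}_m(\psi)$, forcing both to vanish for odd $m$. I do not anticipate any genuine obstacle here; the only point requiring a line of care is the bookkeeping that $J^{l+1}$ is well-defined and satisfies the parity hypothesis before one may apply the general lemmas — but this is exactly the content of the hypothesis ``$J^l(\psi)$ $(l=0,1,\dots,N_+)$ are well-defined'' together with the downward induction, so there is nothing circular.
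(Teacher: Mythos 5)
Your proof is correct and, in its final form, is exactly the paper's argument: the paper proves the lemma by applying Lemma \ref{lem_free_tree_invariance_general} \eqref{item_invariance_simple} with $S=\mathrm{id}$ and $Q\equiv\pi$, i.e. $(\cR\psi)_X=-\psi_X$, so that invariance under $\cR$ forces all odd-degree kernels to vanish, and your preliminary direct parity-counting argument is also valid and is what the paper alludes to with ``one can directly see from definition''. The only caveat is that your first proposed twist ($Q=\pi$ on the $-1$-sector, $0$ on the $+1$-sector) would give the factor $(-1)^{\sharp\{j\,:\,\theta_j=-1\}}$ rather than $(-1)^m$ and hence a different invariance, so it is your ``more transparent'' choice $Q\equiv\pi$ on all of $I$ that one must actually use.
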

\begin{proof}
Apparently the equalities \eqref{eq_odd_term_vanish_induction} hold for
 $l=N_+$. Assume that $J_m^l(\psi)=0$ if $m\notin 2\N\cup\{0\}$ for some
 $l\in \{1,2,\cdots,N_+\}$. 

Let $S:I\to I$, $Q:I\to \R$ be defined by $S(X):=X$, $Q(X):=\pi$,
 $(\forall X\in I)$. Using the notations introduced in Subsection
 \ref{subsec_invariance_general}, we see that
\begin{align*}
J^{l}(\cR\psi)=J^{l}(\psi),\ \widetilde{C_{o,l}}(\bX)=e^{iQ_2(S_2(\bX))}\widetilde{C_{o,l}}(S_2(\bX)),\
 (\forall \bX\in I^2).
\end{align*}
Thus, we can apply Lemma
 \ref{lem_free_tree_invariance_general} \eqref{item_invariance_simple}
 to deduce that
\begin{align*}
&F^{l-1}(\cR\psi)=F^{l-1}(\psi),\\
&T^{l-1,(n)}(\cR\psi)=T^{l-1,(n)}(\psi),\  (\forall n\in\N_{\ge 2}).
\end{align*}
This implies \eqref{eq_odd_term_vanish_induction} for $l-1$. By
 induction the claim holds true.
\end{proof}

The following proposition is a generalization of the multi-scale integration
over the Matsubara frequency.
\begin{proposition}\label{prop_UV_integration_general}
There exists a constant $c\in\R_{>0}$ independent of any parameter such that if
 the parameters $M$, $\alpha\in \R_{\ge 1}$, $U_{\rho}\in\C$ $(\rho\in\cB)$
 satisfy
\begin{align}
M\ge c,\ \alpha^2\ge cM,\ \sup_{\rho\in\cB}|U_{\rho}|\le
 \frac{1}{c(c_0+c_0')^2\alpha^4},\label{eq_parameters_assumption_UV_general}
\end{align}
the following inequalities hold. For any $l\in\{0,1,\cdots,N_+\}$,
 $r\in\{0,1\}$,
\begin{align}
&\frac{h}{N}\Bigg(|F_0^l|+\sum_{n=2}^{\infty}|T_0^{l,(n)}|\Bigg)\le
 \alpha^{-4},\label{eq_UV_general_result_0th}\\ 
&c_0\alpha^2\Bigg(\|F_2^l\|_{0,r}+\sum_{n=2}^{\infty}\|T_2^{l,(n)}\|_{0,r}\Bigg)\le
 1,\label{eq_UV_general_result_2nd}\\
&M^{-2l}\sum_{m=2}^Nc_0^{\frac{m}{2}}M^{\frac{l}{2}m}\alpha^m \Bigg(\|F_m^l\|_{0,r}+\sum_{n=2}^{\infty}\|T_m^{l,(n)}\|_{0,r}\Bigg)\le
 1.\label{eq_UV_general_result_higher}
\end{align}
\end{proposition}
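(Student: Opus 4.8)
The plan is to prove the three bounds \eqref{eq_UV_general_result_0th}, \eqref{eq_UV_general_result_2nd}, \eqref{eq_UV_general_result_higher} simultaneously by downward induction on the scale $l$, starting from $l=N_+$. Observe first that by Lemma \ref{lem_even_term_survive_UV} only the degrees $m\in 2\N\cup\{0\}$ contribute, so in particular the degree-$2$ estimate \eqref{eq_UV_general_result_2nd} is a genuinely separate statement from \eqref{eq_UV_general_result_higher} (where $m\ge 4$), and it is precisely the degree-$2$ part that must be controlled more tightly --- with an extra factor $c_0\alpha^2$ rather than $c_0^{m/2}M^{lm/2}\alpha^m M^{-2l}$ --- because it is a marginal term under the flow. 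For the base case $l=N_+$, the polynomial $J^{N_+}(\psi)=F^{N_+}(\psi)$ has only a quadratic part of norm $\|F_2^{N_+}\|_{0,r}\le c\,|U|\,(\beta L^d/N)\cdot(\text{bookkeeping})$ and a quartic part with kernel bounded by $c|U|$; one checks directly that under the smallness hypothesis $\sup_\rho|U_\rho|\le (c(c_0+c_0')^2\alpha^4)^{-1}$ all three inequalities hold at $l=N_+$ with room to spare, using $M^{-2N_+}\le 1$ and $\alpha\ge 1$.

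\textbf{The inductive step.} Assume the bounds hold at scale $l+1$. We must bound $F^l=\int J^{l+1}(\psi+\psi^1)d\mu_{C_{o,l+1}}(\psi^1)$ and $T^{l,(n)}$. For $F^l$, apply Lemma \ref{lem_general_free_bound} with $D_{et}=c_0$ (legitimate by \eqref{eq_determinant_bound_UV_general}) and the weight $\fw(0)$: the $m$-th kernel of $F^l$ is bounded by $\|J_m^{l+1}\|_{l,r}+\sum_{n>m}2^n c_0^{(n-m)/2}\|J_n^{l+1}\|_{l,r}$. For the tree part, apply Lemma \ref{lem_tree_formula_general_bound} with $D_{et}=c_0$ and the covariance decay bound $\|\widetilde{C_{o,l+1}}\|_{0,j}\le c_0M^{-(l+1)}$ from \eqref{eq_decay_bound_UV_general}: this produces, for each $n$, a bound on $\|T_m^{l,(n)}\|_{0,r}$ of the shape $2^{-2m}c_0^{-m/2-n+1}(c_0M^{-(l+1)})^{n-1}\prod_p(\sum_{m_p\ge 2}2^{3m_p}c_0^{m_p/2}\|J_{m_p}^{l+1}\|_{0,q_p})$ with the constraint $\sum m_j-2n+2\ge m$. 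The heart of the matter is then a bookkeeping computation: insert the inductive hypotheses written in the equivalent ``normalized'' form $c_0^{m/2}M^{(l+1)m/2}\alpha^m\|J_m^{l+1}\|_{0,r}\le M^{2(l+1)}$ for $m\ge 4$ and $c_0\alpha^2\|J_2^{l+1}\|_{0,r}\le 1$, and verify that after multiplying by the scale-$l$ normalization factors $c_0^{m/2}M^{lm/2}\alpha^m M^{-2l}$ and summing over $n\ge 2$ and over the $m_j$'s, the result is $\le 1$. The geometric series in $n$ converges because each additional vertex brings a factor $\sim M^{-(l+1)}$ from the covariance against only bounded combinatorial and $\alpha$-dependent factors, and the hypotheses $M\ge c$, $\alpha^2\ge cM$, $\sup_\rho|U_\rho|\le (c(c_0+c_0')^2\alpha^4)^{-1}$ are exactly what makes the summed constant $\le 1$; the convergence of $\sum_n T^{l,(n)}$ needed to make $J^l$ well-defined follows as a byproduct.

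\textbf{The delicate points.} Two steps require care. First, the degree-$2$ (marginal) bound \eqref{eq_UV_general_result_2nd}: the quadratic part of $J^l$ gets contributions from the quadratic part of $J^{l+1}$ (passed through unchanged by the free integration, hence no gain and no loss --- this is why $\alpha^2$, not $\alpha^2 M^{-l}$, appears) and from the tree/contraction terms that lower degree, e.g. the tadpole contraction of the quartic term, which is where the hypothesis \eqref{eq_tadpole_bound_UV_general} on $\sum_l\max_\rho|C_{o,l}(\rho\b0\ua0,\rho\b0\ua0)|\le c_0'$ enters --- one needs this summed-tadpole bound to prevent the quadratic coupling from growing across the $N_+$ scales. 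Actually the translation invariance \eqref{eq_translation_invariance_covariance_general} and spin structure \eqref{eq_spin_symmetry_covariance_general} are what allow the local (equal-point) contractions to be extracted and estimated by the tadpole quantity. Second, one must check that the accumulated quadratic term stays small uniformly in $l$, i.e. the recursion for $\|J_2^l\|$ is of the form $a_l\le a_{l+1}+ (\text{const})\,c_0'\text{-weighted tadpole at scale }l+1 + (\text{higher-order in }|U|)$, and telescoping over $l$ from $N_+$ down keeps $c_0\alpha^2 a_l\le 1$ provided $|U|$ is small enough in terms of $(c_0+c_0')^2\alpha^4$. I expect this uniform-in-$l$ control of the marginal degree-$2$ coupling, rather than the (power-counting-irrelevant) bounds on $m\ge 4$, to be the main obstacle, since for $m\ge 4$ the factor $M^{-2l}$ together with $\alpha^2\ge cM$ gives contraction automatically, whereas the quadratic term has no intrinsic small factor and survives only because of the global tadpole sum bound.
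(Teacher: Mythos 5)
Your strategy coincides with the paper's: downward induction in $l$ from $N_+$, Lemma \ref{lem_general_free_bound} for the free integration, Lemma \ref{lem_tree_formula_general_bound} for the tree expansion with the determinant bound \eqref{eq_determinant_bound_UV_general} and the decay bound \eqref{eq_decay_bound_UV_general}, and extraction of the tadpole contraction of the quartic term---controlled by the summed bound \eqref{eq_tadpole_bound_UV_general}---to telescope the marginal quadratic coupling across all $N_+$ scales. The constant term \eqref{eq_UV_general_result_0th}, which you do not really address, is proved in the paper afterwards by the same telescoping pattern, using additionally the double-tadpole contributions bounded by $(c_0+c_0')c_0'\sup_{\rho\in\cB}|U_\rho|$.

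One claim in your last paragraph is wrong and would leave a hole if taken literally: you assert that for all $m\ge 4$ the factor $M^{-2l}$ together with $\alpha^2\ge cM$ ``gives contraction automatically,'' so that only $m=2$ is marginal. In the normalization of \eqref{eq_UV_general_result_higher} the weight attached to $\|J_m^l\|_{0,r}$ is $c_0^{m/2}M^{l(m/2-2)}\alpha^m$; the exponent $l(m/2-2)$ vanishes at $m=4$, so the quartic term carries the $l$-independent weight $c_0^2\alpha^4$ and gains nothing in passing from scale $l+1$ to scale $l$. The gain $M^{-(m/2-2)}\le M^{-1}$ is available only for $m\ge 6$. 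The paper therefore treats $m=4$ exactly as you treat $m=2$, minus the tadpole: it sets $\hat{F}_4^l:=F_4^l-F_4^{N_+}$, derives the recursion $\|\hat{F}_4^l\|_{0,r}\le\|\hat{F}_4^{l+1}\|_{0,r}+c\,c_0^{-2}M^{-l-1}\alpha^{-4}$, and telescopes the geometric corrections to obtain $c_0^2\alpha^4\|F_4^l\|_{0,r}\le c_0^2\alpha^4\sup_{\rho\in\cB}|U_\rho|+cM^{-1}$. The subtraction of the bare quartic interaction and the summation over scales are needed here; without them the induction for $m=4$ does not close. Since your own machinery for $m=2$ (a recursion with summable corrections and a small initial datum) applies verbatim, this is a repairable mischaracterization rather than a fatal flaw, but as written the $m=4$ case of \eqref{eq_UV_general_result_higher} is unproved.
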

\begin{proof}
Let the symbol $U_{max}$ denote $\sup_{\rho\in\cB}|U_{\rho}|$ during the proof.

\eqref{eq_UV_general_result_2nd},\eqref{eq_UV_general_result_higher}:
First let us prove the inequalities \eqref{eq_UV_general_result_2nd},
 \eqref{eq_UV_general_result_higher} by induction with $l$. This part is close to the proof of \cite[\mbox{Proposition
 4.1}]{K3}. However, we present the full argument for self-containedness of the paper.
 Note that 
$$F_m^{N_+}(\bX)=-V_m^{\delta}(\bX),\ (\forall \bX\in
 I^m,m\in\{2,4\}),$$
where $V_m^{\delta}(\cdot):I^m\to\C$ $(m=2,4)$ are the anti-symmetric functions  characterized in \eqref{eq_anti_symmetric_kernel_initial_V}. From this
 we see that
$$\|F_m^{N_+}\|_{0,r}\le U_{max},\ (\forall m\in\{2,4\},
 r\in \{0,1\}).$$
Therefore, if $U_{max}\le (2(c_0+c_0')^2\alpha^4)^{-1}$,
\begin{align*}
&c_0\alpha^2\|F_2^{N_+}\|_{0,r}\le 1,\\
&M^{-2N_+}\sum_{m\in\{2,4\}}c_0^{\frac{m}{2}}M^{\frac{N_+}{2}m}\alpha^m\|F_m^{N_+}\|_{0,r}\le
 c_0\alpha^2U_{max}+c_0^2\alpha^4U_{max}\le 1.
\end{align*}
Thus, the inequalities \eqref{eq_UV_general_result_2nd},
 \eqref{eq_UV_general_result_higher} for $l=N_+$ hold.

Let us fix $l\in \{0,1,\cdots,N_+-1\}$ and assume that
 \eqref{eq_UV_general_result_2nd}, \eqref{eq_UV_general_result_higher}
 hold for all $l'\in \{l+1,l+2,\cdots,N_+\}$. Fix $r\in
 \{0,1\}$. By combining \eqref{eq_determinant_bound_UV_general},
 \eqref{eq_decay_bound_UV_general} with Lemma
 \ref{lem_tree_formula_general_bound}
 \eqref{item_tree_formula_bound_higher} we have that for any $m\in \{2,3,\cdots,N\}$,
\begin{align}
\|T_m^{l,(n)}\|_{0,r}&\le 2^{-2m} c_0^{-\frac{m}{2}-n+1}
\prod_{i=1}^n\Bigg(\sum_{q_i=0}^1\Bigg)\prod_{j=2}^n\Bigg(\sum_{r_j=0}^1\Bigg)
1_{\sum_{i=1}^nq_i+\sum_{j=2}^nr_j=r}\label{eq_application_tree_formula_bound_higher}\\
&\quad\cdot(c_0M^{-(l+1)})^{n-1}\prod_{k=1}^n\Bigg(\sum_{m_k=2}^N2^{3m_k}c_0^{\frac{m_k}{2}}\|J_{m_k}^{l+1}\|_{0,q_k}\Bigg)1_{\sum_{j=1}^nm_j-2n+2\ge
 m}\notag\\
&=2^{-2m} c_0^{-\frac{m}{2}}M^{-(l+1)(n-1)}
\prod_{i=1}^n\Bigg(\sum_{q_i=0}^1\Bigg)\prod_{j=2}^n\Bigg(\sum_{r_j=0}^1\Bigg)
1_{\sum_{i=1}^nq_i+\sum_{j=2}^nr_j=r}\notag\\
&\quad\cdot\prod_{k=1}^n\Bigg(\sum_{m_k=2}^N2^{3m_k} c_0^{\frac{m_k}{2}}\|J_{m_k}^{l+1}\|_{0,q_k}\Bigg)1_{\sum_{j=1}^nm_j-2n+2\ge
 m}.\notag
\end{align}
By the assumption of induction, 
\begin{align}
\sum_{m=2}^N2^{3m} c_0^{\frac{m}{2}}\|J_{m}^{l+1}\|_{0,r}
&=2^6c_0\|J_2^{l+1}\|_{0,r}+\sum_{m=4}^N2^{3m} c_0^{\frac{m}{2}}\|J_m^{l+1}\|_{0,r}\label{eq_input_upper_bound_general}\\
&\le c\alpha^{-2}+c\alpha^{-4}\le c\alpha^{-2}.\notag
\end{align}
Substitution of \eqref{eq_input_upper_bound_general} into
 \eqref{eq_application_tree_formula_bound_higher} yields that
$$
\|T_m^{l,(n)}\|_{0,r}\le
 c^n c_0^{-\frac{m}{2}}M^{-(l+1)(n-1)}\alpha^{-2n},\ (\forall m\in \{2,4\}).
$$
Therefore, if $\alpha \ge c$,
\begin{align}
\sum_{n=2}^{\infty}\|T_m^{l,(n)}\|_{0,r}\le
 c c_0^{-\frac{m}{2}}M^{-l-1}\alpha^{-4},\ (\forall m\in
 \{2,4\}).\label{eq_tree_2nd_4th_kernel_bound}
\end{align}

It follows from \eqref{eq_application_tree_formula_bound_higher}
 and \eqref{eq_UV_general_result_higher} for $l+1$ that if $M\ge c$, 
\begin{align*}
&M^{-2l}\sum_{m=2}^N c_0^{\frac{m}{2}}M^{\frac{l}{2}m}\alpha^m\|T_m^{l,(n)}\|_{0,r}\\
&\le c M^{-2l}\cdot M^{-(l+1)(n-1)}\prod_{i=1}^n\Bigg(\sum_{q_i=0}^1\Bigg)\prod_{j=2}^n\Bigg(\sum_{r_j=0}^1\Bigg)
1_{\sum_{i=1}^nq_i+\sum_{j=2}^nr_j=r}\\
&\quad\cdot\prod_{k=1}^n\Bigg(\sum_{m_k=2}^N2^{3m_k} c_0^{\frac{m_k}{2}}\|J_{m_k}^{l+1}\|_{0,q_k}\Bigg)\notag\\
&\quad\cdot M^{\frac{l}{2}(\sum_{j=1}^nm_j-2n+2)}
\alpha^{\sum_{j=1}^nm_j-2n+2}2^{-2(\sum_{j=1}^nm_j-2n+2)}\\
&\le c M\alpha^2
\prod_{i=1}^n\Bigg(\sum_{q_i=0}^1\Bigg)\prod_{j=2}^n\Bigg(\sum_{r_j=0}^1\Bigg)
1_{\sum_{i=1}^nq_i+\sum_{j=2}^nr_j=r}\\
&\quad\cdot\prod_{k=1}^n\Bigg(2^4M^{-2l-1}\alpha^{-2}\sum_{m_k=2}^N2^{m_k}c_0^{\frac{m_k}{2}}M^{\frac{l}{2}m_k}\alpha^{m_k}\|J_{m_k}^{l+1}\|_{0,q_k}\Bigg)\\
&\le c M\alpha^2
\prod_{i=1}^n\Bigg(\sum_{q_i=0}^1\Bigg)\prod_{j=2}^n\Bigg(\sum_{r_j=0}^1\Bigg)
1_{\sum_{i=1}^nq_i+\sum_{j=2}^nr_j=r}(c\alpha^{-2})^n\\
&\le c M\alpha^2(c\alpha^{-2})^n,
\end{align*}
where we especially used the inequality
\begin{align*}
\sum_{m=2}^N2^m c_0^{\frac{m}{2}}M^{\frac{l}{2}m}\alpha^m\|J_m^{l+1}\|_{0,q}
\le 2^2M^{2l+1},\ (\forall q\in\{0,1\}).
\end{align*}
Therefore, on the assumption $\alpha\ge c$,
\begin{align}
M^{-2l}\sum_{m=2}^N c_0^{\frac{m}{2}}M^{\frac{l}{2}m}\alpha^m\sum_{n=2}^{\infty}\|T_m^{l,(n)}\|_{0,r}\le
 c M \alpha^{-2}.\label{eq_application_tree_formula_bound_sum}
\end{align}

One implication of Lemma \ref{lem_general_free_bound} is that for any
 $m\in \{6,7,\cdots,N\}$,
$$
\|F_m^l\|_{0,r}\le\sum_{n=m}^N2^n c_0^{\frac{n-m}{2}}\|J_n^{l+1}\|_{0,r}.
$$
Thus, by the assumption $M$, $\alpha\ge c$,
\begin{align}
M^{-2l}\sum_{m=6}^N c_0^{\frac{m}{2}}M^{\frac{l}{2}m}\alpha^m\|F_m^{l}\|_{0,r}&\le
 M^{-2l}\sum_{n=6}^N\sum_{m=6}^n2^n c_0^{\frac{n}{2}}M^{\frac{l}{2}m}\alpha^m
\|J_n^{l+1}\|_{0,r}\label{eq_application_free_bound_sum}\\
&\le
c
 M^{-2l}\sum_{n=6}^N2^n c_0^{\frac{n}{2}}M^{\frac{l}{2}n}\alpha^n\|J_n^{l+1}\|_{0,r}\notag\\
&\le
c
 M^{-2l-3}\sum_{n=6}^N c_0^{\frac{n}{2}}M^{\frac{l+1}{2}n}\alpha^n\|J_n^{l+1}\|_{0,r}\notag\\
&\le cM^{-1}.\notag
\end{align}

It remains to bound $\|F_m^l\|_{0,r}$ $(m\in\{2,4\})$. Set
$$
\hat{F}_4^l(\psi):=F_4^l(\psi)-F_4^{N_+}(\psi).
$$
Note that
\begin{align*}
\hat{F}_4^l(\psi)=\hat{F}_4^{l+1}(\psi)+T_4^{l+1}(\psi)+\cP_4\int\sum_{m=6}^NJ_m^{l+1}(\psi+\psi^1)d\mu_{C_{o,l+1}}(\psi^1).
\end{align*}
By using Lemma \ref{lem_general_free_bound},
  \eqref{eq_UV_general_result_higher}, \eqref{eq_tree_2nd_4th_kernel_bound} for $l'\in \{l+1,l+2,\cdots,N_+\}$
 and the assumption $M$, $\alpha\ge c$ we deduce that
\begin{align}
\|\hat{F}_4^l\|_{0,r}&\le\|\hat{F}_4^{l+1}\|_{0,r}+\|T_4^{l+1}\|_{0,r}+\sum_{m=6}^N2^m c_0^{\frac{m-4}{2}}\|J_m^{l+1}\|_{0,r}\label{eq_free_modified_4th_general_bound}\\
&\le
 \|\hat{F}_4^{l+1}\|_{0,r}+\|T_4^{l+1}\|_{0,r}+cM^{-3(l+1)}\alpha^{-6}\sum_{m=6}^N c_0^{\frac{m-4}{2}}M^{\frac{l+1}{2}m}\alpha^m\|J_m^{l+1}\|_{0,r}\notag\\
&\le  \|\hat{F}_4^{l+1}\|_{0,r}+c c_0^{-2}M^{-l-1}\alpha^{-4}\le c c_0^{-2}\sum_{j=l}^{N_+-1}M^{-j-1}\alpha^{-4}\notag\\
&\le c c_0^{-2}M^{-l-1}\alpha^{-4},\notag
\end{align}
which implies that
\begin{align}
c_0^2\alpha^4\|F_4^{l}\|_{0,r}\le c_0^2\alpha^4U_{max}+c M^{-1}.\label{eq_free_4th_general_bound} 
\end{align}

Next let us bound $\|F_2^{l}\|_{0,r}$. By definition,
\begin{align}
&F_2^l(\psi)\label{eq_flow_equation_free_2nd}\\
&=F_2^{l+1}(\psi)+T_2^{l+1}(\psi)+\cP_2\int\hat{F}_4^{l+1}(\psi+\psi^1)d\mu_{C_{o,l+1}}(\psi^1)\notag\\
&\quad+\cP_2\int{F}_4^{N_{+}}(\psi+\psi^1)d\mu_{C_{o,l+1}}(\psi^1)+\cP_2\int
 T_4^{l+1}(\psi+\psi^1)d\mu_{C_{o,l+1}}(\psi^1)\notag\\
&\quad+\cP_2\int\sum_{m=6}^NJ_m^{l+1}(\psi+\psi^1)d\mu_{C_{o,l+1}}(\psi^1)\notag.
\end{align}
Application of Lemma \ref{lem_general_free_bound},
\eqref{eq_spin_symmetry_covariance_general},
 \eqref{eq_translation_invariance_covariance_general},
\eqref{eq_determinant_bound_UV_general},
 \eqref{eq_tadpole_bound_UV_general},
 \eqref{eq_UV_general_result_higher},
 \eqref{eq_tree_2nd_4th_kernel_bound},
\eqref{eq_free_modified_4th_general_bound} for $l'\in
 \{l+1,l+2,\cdots,N_+\}$ and the assumption $M\ge c$ gives that   
\begin{align*}
\|F_2^l\|_{0,r}
&\le
 \|F_2^{l+1}\|_{0,r}+\|T_2^{l+1}\|_{0,r}+2^4c_0\|\hat{F}_4^{l+1}\|_{0,r}\\
&\quad+U_{max}\max_{\rho\in\cB}|C_{o,l+1}(\rho\b0\ua 0,\rho\b0\ua
 0)|\\
&\quad+2^4c_0\|T_4^{l+1}\|_{0,r}+\sum_{m=6}^N2^m c_0^{\frac{m-2}{2}}\|J_m^{l+1}\|_{0,r}\\
&\le
 \|F_2^{l+1}\|_{0,r}+ U_{max}\max_{\rho\in\cB}|C_{o,l+1}(\rho\b0\ua
 0,\rho\b0\ua 0)|+c c_0^{-1}M^{-l-1}\alpha^{-4}\\
&\le
 \|F_2^{N_+}\|_{0,r}+U_{max}\sum_{j=l}^{N_+-1}\max_{\rho\in\cB}|C_{o,j+1}(\rho\b0\ua
 0,\rho\b0\ua 0)|\\
&\quad+c c_0^{-1}\sum_{j=l}^{N_+-1}M^{-j-1}\alpha^{-4}\\
&\le c c_0'U_{max}+c c_0^{-1}M^{-l-1}\alpha^{-4},
\end{align*}
or
\begin{align}
c_0\alpha^2 \|F_2^l\|_{0,r}\le  c c_0 c_0'\alpha^2
 U_{max}+cM^{-1}\alpha^{-2}.
\label{eq_free_2nd_general_bound}
\end{align}
The inequalities \eqref{eq_tree_2nd_4th_kernel_bound}, 
\eqref{eq_application_tree_formula_bound_sum},
\eqref{eq_application_free_bound_sum},
\eqref{eq_free_4th_general_bound}, \eqref{eq_free_2nd_general_bound}
 ensure that
\begin{align}
& c_0\alpha^2
 \Bigg(\|F_2^l\|_{0,r}+\sum_{n=2}^{\infty}\|T_2^{l,(n)}\|_{0,r}\Bigg)\le
 c c_0 c_0'\alpha^2
 U_{max}+cM^{-1}\alpha^{-2}.\label{eq_resulting_bound_2nd_UV}\\
&M^{-2l}\sum_{m=2}^Nc_0^{\frac{m}{2}}M^{\frac{l}{2}m}\alpha^m
 \Bigg(\|F_m^l\|_{0,r}+\sum_{n=2}^{\infty}\|T_m^{l,(n)}\|_{0,r}\Bigg)\label{eq_resulting_bound_sum_UV}\\
&\le\sum_{m\in\{2,4\}}c_0^{\frac{m}{2}}\alpha^m\|F_m^l\|_{0,r}+M^{-2l}\sum_{m=6}^Nc_0^{\frac{m}{2}}M^{\frac{l}{2}m}\alpha^m\|F_m^l\|_{0,r}\notag\\
&\quad +M^{-2l}\sum_{m=2}^N c_0^{\frac{m}{2}}M^{\frac{l}{2}m}\alpha^m
 \sum_{n=2}^{\infty}\|T_m^{l,(n)}\|_{0,r}\notag\\
&\le c {c_0}c_0'\alpha^2
 U_{max}+ c_0^2\alpha^4 U_{max}+c M^{-1}+cM\alpha^{-2}.\notag
\end{align}
On the assumption \eqref{eq_parameters_assumption_UV_general} the
 right-hand sides of \eqref{eq_resulting_bound_2nd_UV} and
 \eqref{eq_resulting_bound_sum_UV} are less than 1. Thus, the induction
 concludes that the inequalities \eqref{eq_UV_general_result_2nd} and
 \eqref{eq_UV_general_result_higher} hold for all $l\in
 \{0,1,\cdots,N_+\}$ and $r\in \{0,1\}$.

\eqref{eq_UV_general_result_0th}: Let us prove the inequality
 \eqref{eq_UV_general_result_0th}, assuming that the inequalities
 \eqref{eq_UV_general_result_2nd}, \eqref{eq_UV_general_result_higher}
 are valid for all $l\in\{0,1,\cdots,N_{+}\}$. It follows from Lemma \ref{lem_tree_formula_general_bound}
 \eqref{item_tree_formula_bound_0th},
 \eqref{eq_determinant_bound_UV_general}, 
\eqref{eq_decay_bound_UV_general}
and
 \eqref{eq_input_upper_bound_general} that
\begin{align*}
\frac{h}{N}|T_0^{l,(n)}|\le
 c_0^{-n+1}\cdot c_0^{n-1}M^{-(l+1)(n-1)}(c\alpha^{-2})^n=M^{l+1}(cM^{-l-1}\alpha^{-2})^n.
\end{align*}
Thus, if $\alpha\ge c$,
\begin{align}
\frac{h}{N}\sum_{n=2}^{\infty}|T_0^{l,(n)}|\le
 cM^{-l-1}\alpha^{-4}.\label{eq_tree_0th_general_bound}
\end{align}

Define $\hat{F}_2^l(\psi)\in\bigwedge \cV$ $(l\in \{0,1,\cdots,N_+\})$
by 
\begin{align*}
&\hat{F}_2^{N_+}(\psi):=0,\\
&\hat{F}_2^{l}(\psi):=F_2^l(\psi)-F_2^{N_+}(\psi)-\sum_{j=l+1}^{N_+}\cP_2\int
 F_4^{N_+}(\psi+\psi^1)d\mu_{C_{o,j}}(\psi^1),\\
&(\forall l\in \{0,1,\cdots,N_+-1\}).
\end{align*}
Note that for any $l\in \{0,1,\cdots,N_+-1\}$,
\begin{align}
\hat{F}_2^l(\psi)&=\hat{F}_2^{l+1}(\psi)+T_2^{l+1}(\psi)+\cP_2\int
 \hat{F}_4^{l+1}(\psi+\psi^1)d\mu_{C_{o,l+1}}(\psi^1)\label{eq_flow_equation_free_2nd_modified}\\
&\quad + \cP_2\int T_4^{l+1}(\psi+\psi^1)d\mu_{C_{o,l+1}}(\psi^1)\notag\\
&\quad + \cP_2\int
 \sum_{m=6}^NJ_m^{l+1}(\psi+\psi^1)d\mu_{C_{o,l+1}}(\psi^1).\notag
\end{align}
By estimating in the same manner as in Lemma
 \ref{lem_general_free_bound} and using \eqref{eq_determinant_bound_UV_general} we can derive from
 \eqref{eq_flow_equation_free_2nd_modified} that for $r\in\{0,1\}$,
\begin{align*}
\|\hat{F}_2^l\|_{0,r}&\le\|\hat{F}_2^{l+1}\|_{0,r}+\|T_2^{l+1}\|_{0,r}+2^4c_0^{\frac{4-2}{2}}
 \|\hat{F}_4^{l+1}\|_{0,r}\\
&\quad +2^4 c_0^{\frac{4-2}{2}}
 \|T_4^{l+1}\|_{0,r}+\sum_{m=6}^N2^m c_0^{\frac{m-2}{2}}\|J_m^{l+1}\|_{0,r}.
\end{align*}
By \eqref{eq_UV_general_result_higher},
 \eqref{eq_tree_2nd_4th_kernel_bound},
 \eqref{eq_free_modified_4th_general_bound} and the assumption $M\ge c$ we have that
\begin{align}
\|\hat{F}_2^l\|_{0,r}&\le \|\hat{F}_2^{l+1}\|_{0,r}+c
 c_0^{-1}M^{-l-1}\alpha^{-4}\le c
 c_0^{-1}\sum_{j=l}^{N_+-1}M^{-j-1}\alpha^{-4}\label{eq_free_modified_2nd_general_bound}\\
&\le c c_0^{-1}M^{-l-1}\alpha^{-4},\ (\forall r\in\{0,1\}).\notag
\end{align}
Remark that for any $l\in \{0,1,\cdots,N_+-1\}$,
\begin{align}
F_0^l&=F_0^{l+1}+T_0^{l+1}+\int\hat{F}_2^{l+1}(\psi)d\mu_{C_{o,l+1}}(\psi)
+\int F_2^{N_+}(\psi)d\mu_{C_{o,l+1}}(\psi)\label{eq_flow_equation_free_0th}\\
&\quad + 1_{l\le N_+-2}\int\Bigg(\sum_{j=l+2}^{N_+}\cP_2\int
 F_4^{N_+}(\psi+\psi^1)d\mu_{C_{o,j}}(\psi^1)\Bigg)d\mu_{C_{o,l+1}}(\psi)\notag\\
&\quad+\int \hat{F}_4^{l+1}(\psi)d\mu_{C_{o,l+1}}(\psi) +\int
 F_4^{N_+}(\psi)d\mu_{C_{o,l+1}}(\psi)\notag\\
&\quad+\sum_{m\in\{2,4\}}\int
 T_m^{l+1}(\psi)d\mu_{C_{o,l+1}}(\psi) +\sum_{m=6}^N\int
 J_m^{l+1}(\psi)d\mu_{C_{o,l+1}}(\psi)\notag\\
&=F_0^{l+1}+T_0^{l+1}+\frac{\delta
 N}{4bh}\sum_{\rho\in\cB}U_{\rho}C_{o,l+1}(\rho\b0\ua 0,\rho\b0\ua
 0)\notag\\
&\quad -1_{l\le
 N_+-2}\frac{N}{2bh}\sum_{\rho\in\cB}U_{\rho}C_{o,l+1}(\rho\b0\ua
 0,\rho\b0\ua 0)\sum_{j=l+2}^{N_+}C_{o,j}(\rho\b0\ua 0,\rho\b0\ua
 0)\notag\\
&\quad -\frac{N}{4b h}\sum_{\rho\in\cB}U_{\rho}C_{o,l+1}(\rho\b0\ua
 0,\rho\b0\ua
 0)^2+\sum_{m\in\{2,4\}}\int\hat{F}_m^{l+1}(\psi)d\mu_{C_{o,l+1}}(\psi)\notag\\
&\quad +\sum_{m\in\{2,4\}}\int T_m^{l+1}(\psi)d\mu_{C_{o,l+1}}(\psi)
+ \sum_{m=6}^N\int J_m^{l+1}(\psi)d\mu_{C_{o,l+1}}(\psi).\notag
\end{align}
The equality
\eqref{eq_flow_equation_free_0th} and 
the inequalities \eqref{eq_determinant_bound_UV_general},
\eqref{eq_tadpole_bound_UV_general},
 \eqref{eq_UV_general_result_higher},
\eqref{eq_tree_2nd_4th_kernel_bound},\\
\eqref{eq_free_modified_4th_general_bound},
  \eqref{eq_tree_0th_general_bound},
 \eqref{eq_free_modified_2nd_general_bound} imply that
\begin{align}
&\frac{h}{N}|F_0^l|\label{eq_free_0th_general_bound}\\
&\le \frac{h}{N}|F_0^{l+1}|+ \frac{h}{N}|T_0^{l+1}| +U_{max}\max_{\rho\in \cB}|C_{o,l+1}(\rho\b0 \ua 0,\rho\b0 \ua
 0)|\notag\\
&\quad +1_{l\le N_+-2}U_{max}\max_{\rho\in \cB}|C_{o,l+1}(\rho\b0 \ua
 0,\rho\b0 \ua 0)|\sum_{j=l+2}^{N_+}\max_{\eta\in \cB}|C_{o,j}(\eta\b0 \ua
 0,\eta\b0 \ua 0)|\notag\\
&\quad + U_{max}\max_{\rho\in \cB}|C_{o,l+1}(\rho\b0 \ua
 0,\rho\b0 \ua 0)|^2 +
 \sum_{m\in\{2,4\}}c_0^{\frac{m}{2}}\|\hat{F}_m^{l+1}\|_{0,0}\notag\\
&\quad+\sum_{m\in\{2,4\}}c_0^{\frac{m}{2}}\|T_m^{l+1}\|_{0,0}
+ \sum_{m=6}^N c_0^{\frac{m}{2}}\|J_m^{l+1}\|_{0,0}\notag\\
&\le
 \frac{h}{N}|F_0^{l+1}|+cM^{-l-1}\alpha^{-4}+c(c_0+c_0')U_{max}\max_{\rho\in \cB}|C_{o,l+1}(\rho\b0\ua 0,\rho\b0\ua 0)|\notag\\
&\le c \sum_{j=l}^{N_+-1}M^{-j-1}\alpha^{-4}+c(c_0+c_0')U_{max}\sum_{j=l}^{N_+-1}\max_{\rho\in \cB}|C_{o,j+1}(\rho\b0 \ua
 0,\rho\b0 \ua 0)|\notag\\
&\le  c M^{-l-1}\alpha^{-4}+c(c_0+{c_0'})c_0'U_{max}.\notag
\end{align}

By \eqref{eq_tree_0th_general_bound} and
 \eqref{eq_free_0th_general_bound},
\begin{align}
\frac{h}{N}\Bigg(|F_0^l|+\sum_{n=2}^{\infty}|T_0^{l,(n)}|\Bigg)\le
 cM^{-1}\alpha^{-4}+
 c(c_0+{c_0'})c_0'U_{max}.\label{eq_0th_general_bound_pre}
\end{align}
On the assumption \eqref{eq_parameters_assumption_UV_general} the
 right-hand side of \eqref{eq_0th_general_bound_pre} is bounded by
 $\alpha^{-4}$ from above. The proof is complete.
\end{proof}

\subsection{The generalized ultra-violet integration at different temperatures}\label{subsec_UV_general_difference}
Here we estimate the differences between Grassmann polynomials created by
the multi-scale integration described in the previous subsection at 2
different temperatures. The analysis in this subsection is based on 
the inequalities developed in Section
\ref{sec_general_estimation_temperature}. We assume the condition
\eqref{eq_beta_h_assumption} and that 2 families of covariances
$\{C_{o,l}(\beta_a)\}_{l=1}^{N_+}$ $(a=1,2)$ are given and they satisfy
\eqref{eq_spin_symmetry_covariance_general}, 
\eqref{eq_translation_invariance_covariance_general},
\eqref{eq_determinant_bound_UV_general},
\eqref{eq_decay_bound_UV_general},
\eqref{eq_tadpole_bound_UV_general} as well as the following.
\begin{align}\label{eq_temperature_translation_covariance_general_again}
&C_{o,l}(\beta_a)(\bX)=(-1)^{N_{\beta_a}(\bX+x)}C_{o,l}(\beta_a)(R_{\beta_a}(\bX+x)),\\
&(\forall \bX\in I_0(\beta_a)^2, x\in
 (1/h)\Z, a\in\{1,2\},l\in\{1,2,\cdots,N_+\}),\notag
\end{align}

 \begin{align}
&|\det(\<\bp_i,\bq_j\>_{\C^r}C_{o,l}(\beta_1)(R_{\beta_1}(X_i,Y_j)))_{1\le
  i,j\le n}\label{eq_determinant_bound_UV_difference_general}\\
&\quad-
\det(\<\bp_i,\bq_j\>_{\C^r}C_{o,l}(\beta_2)(R_{\beta_2}(X_i,Y_j)))_{1\le
  i,j\le n}|\le \beta_1^{-\frac{1}{2}} c_0^n,\notag\\
&(\forall r,n\in\N,\bp_i,\bq_i\in\C^r\text{ with }
\|\bp_i\|_{\C^r},\|\bq_i\|_{\C^r}\le 1,\notag\\
&\quad X_i,Y_i\in \hat{I}_0\
 (i=1,2,\cdots,n),l\in\{1,2,\cdots,N_+\}),\notag
\end{align}

\begin{align}
|\widetilde{C_{o,l}}(\beta_1)-\widetilde{C_{o,l}}(\beta_2)|_0\le
 \beta_1^{-\frac{1}{2}}{c_0}M^{-l},\ (\forall l\in\{1,2,\cdots,N_+\}),\label{eq_decay_bound_UV_difference_general}
\end{align}
where $\widetilde{C_{o,l}}(\beta_a):I(\beta_a)^2\to \C$ is the
anti-symmetric extension of $C_{o,l}(\beta_a)$ $(a=1,2)$ defined as in
\eqref{eq_anti_symmetric_extension_covariance},
\begin{align}
\sum_{l=1}^{N_+}\max_{\rho\in\cB}|C_{o,l}(\beta_1)(\rho\b0\ua
 0,\rho\b0\ua 0)-C_{o,l}(\beta_2)(\rho\b0\ua
 0,\rho\b0\ua 0)|\le
 \beta_1^{-\frac{1}{2}}c_0'.\label{eq_tadpole_bound_UV_difference_general}
\end{align}
Here the parameters $M\in \R_{\ge 1}$, $\fw(0)\in\R_{>0}$, $\fr\in
(0,1]$ and the constants $c_0$, $c_0'\in\R_{\ge 1}$ are the same as
those in \eqref{eq_spin_symmetry_covariance_general}, 
\eqref{eq_translation_invariance_covariance_general},
\eqref{eq_determinant_bound_UV_general},
\eqref{eq_decay_bound_UV_general},
\eqref{eq_tadpole_bound_UV_general}. Remind us that the parameters $\fw(0)$, $\fr$ are
also used in the measurement $|\cdot-\cdot|_0$. 

With the covariances $\{C_{o,l}(\beta_a)\}_{l=1}^{N_+}$ let
$F^l(\beta_a)(\psi)$, $T^{l,(n)}(\beta_a)(\psi)$ $(n\in \N_{\ge 2})$,
$T^{l}(\beta_a)(\psi)$, $J^l(\beta_a)(\psi)\ (\in
\bigwedge\cV(\beta_a))$ $(l=0,1,\cdots,N_+)$ be defined by
\eqref{eq_initial_polynomial_UV_general},
\eqref{eq_inductive_polynomial_UV_general} for $a=1,2$ respectively.

One requirement of the analysis in Section
\ref{sec_general_estimation_temperature} was that the kernels of
Grassmann polynomials must satisfy the invariance
\eqref{eq_temperature_translation}. First let us confirm that this
requirement is fulfilled in this situation.
\begin{lemma}\label{lem_temperature_translation_UV}
Assume that $J^l(\beta_a)(\psi)$ $(l\in\{0,1,\cdots,N_+\},a\in\{1,2\})$ are
 well-defined. Then,
\begin{align}\label{eq_temperature_translation_UV_general}
&F_m^l(\beta_a)(\bX)=(-1)^{N_{\beta_a}(\bX+x)}F_m^l(\beta_a)(R_{\beta_a}(\bX+x)),\\
&T_m^{l,(n)}(\beta_a)(\bX)=(-1)^{N_{\beta_a}(\bX+x)}T_m^{l,(n)}(\beta_a)(R_{\beta_a}(\bX+x)),\notag\\
&(\forall m\in \{1,2,\cdots,N(\beta_2)\},\bX\in I(\beta_a)^m,x\in
 (1/h)\Z,n\in\N_{\ge 2},\notag\\
&\quad l\in \{0,1,\cdots,N_+\}, a\in\{1,2\}).\notag
\end{align} 
\end{lemma}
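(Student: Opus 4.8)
The plan is to prove the two invariance identities \eqref{eq_temperature_translation_UV_general} simultaneously by induction on the scale index $l$, descending from $l=N_+$ to $l=0$, exactly mirroring the recursive definition \eqref{eq_initial_polynomial_UV_general}--\eqref{eq_inductive_polynomial_UV_general}. First I would verify the base case $l=N_+$. Here $T^{N_+,(n)}(\beta_a)(\psi)=0$ for all $n\ge 2$, so the second identity is trivial at scale $N_+$, and $F^{N_+}(\beta_a)(\psi)$ is the explicit polynomial in \eqref{eq_initial_polynomial_UV_general}. Its nonzero kernels $F_2^{N_+}(\beta_a)$ and $F_4^{N_+}(\beta_a)$ coincide (up to sign) with the kernels $V_m^{\delta}$ characterized in \eqref{eq_anti_symmetric_kernel_initial_V}; these are supported on configurations where all time components are equal, with coefficients that depend only on the band index $\rho$, not on the time variable. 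Such a kernel $f_m$ manifestly satisfies $f_m(\bX)=(-1)^{N_{\beta_a}(\bX+x)}f_m(R_{\beta_a}(\bX+x))$: translating all times by $x\in(1/h)\Z$ and folding back into $[0,\beta_a)_h$ preserves the ``all times equal'' constraint, leaves the band-dependent coefficient untouched, and the sign $(-1)^{N_{\beta_a}(\bX+x)}$ on the right is exactly compensated because the monomial $\psi_{\bX}$ is a product of pairs $\opsi_{\rho\bx\s x}\psi_{\rho\bx\s x}$, so every time variable appears an even number of times and $N_{\beta_a}(\bX+x)$ is even; alternatively one checks the sign bookkeeping directly on the four-point and two-point kernels. (One should also note that the covariances satisfy \eqref{eq_temperature_translation_covariance_general_again}, which is the covariance analogue of \eqref{eq_temperature_translation}, and hence so do their anti-symmetric extensions $\widetilde{C_{o,l}}(\beta_a)$.)

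For the inductive step, suppose \eqref{eq_temperature_translation_UV_general} holds at scale $l+1$, i.e. all the kernels of $J^{l+1}(\beta_a)(\psi)=F^{l+1}(\beta_a)(\psi)+T^{l+1}(\beta_a)(\psi)$ satisfy the required relation; equivalently, $J^{l+1}(\beta_a)(\psi)\in\bigwedge\cV(\beta_a)$ is a polynomial whose kernels obey \eqref{eq_temperature_translation}. By Lemma \ref{lem_even_term_survive_UV}, these kernels also vanish in odd degree, so $J^{l+1}(\beta_a)$ is an admissible input for the machinery of Section \ref{sec_general_estimation_temperature}. The quantities $F^l(\beta_a)(\psi)$ and $T^{l,(n)}(\beta_a)(\psi)$ are precisely the Grassmann polynomials obtained from $J^{l+1}(\beta_a)$ and the covariance $C_{o,l+1}(\beta_a)$ by the free integration \eqref{eq_definition_free_general} and the tree expansion \eqref{eq_explicit_tree_formula} respectively. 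Therefore the key computation is the identities, for each $a\in\{1,2\}$,
\begin{align*}
&\int\psi_{\bX}d\mu_{C_{o,l+1}(\beta_a)}(\psi)=(-1)^{N_{\beta_a}(\bX+x)}\int\psi_{R_{\beta_a}(\bX+x)}d\mu_{C_{o,l+1}(\beta_a)}(\psi),\\
&Ope(T,C_{o,l+1}(\beta_a))\prod_{j=1\atop order}^n\psi_{\bX_j}^j\Bigg|_{\psi^j=0}=Ope(T,C_{o,l+1}(\beta_a))\prod_{j=1\atop order}^n\big((-1)^{N_{\beta_a}(\bX_j+x)}\psi_{R_{\beta_a}(\bX_j+x)}^j\big)\Bigg|_{\psi^j=0},
\end{align*}
which are exactly \eqref{eq_application_temperature_translation_covariance} and \eqref{eq_application_temperature_translation_covariance_tree} established in Section \ref{sec_general_estimation_temperature} as consequences of \eqref{eq_temperature_translation_covariance_general} (here specialized to $C_{o,l+1}(\beta_a)$, which satisfies the hypothesis by \eqref{eq_temperature_translation_covariance_general_again}). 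Combining these with the characterizations \eqref{eq_characterization_free_kernel} of $F_m^l(\beta_a)(\bX)$ and \eqref{eq_tree_formula_explicit_kernel} of $T_m^{l,(n)}(\beta_a)(\bX)$, and using the inductive hypothesis \eqref{eq_temperature_translation} on the kernels of $J^{l+1}(\beta_a)$, a change of summation variables (translate all dummy $I(\beta_a)$-indices by $x$, noting the sign factors cancel between the kernel of $J^{l+1}$ and the Gaussian integral / tree operator) yields \eqref{eq_temperature_translation_UV_general} at scale $l$. This is essentially the same bookkeeping that was carried out explicitly for $T_0^{(n)}(\beta_a)$ in the proof of Lemma \ref{lem_tree_formula_general_bound_difference}\eqref{item_tree_formula_bound_0th_difference}, where the rewriting $T_0^{(n)}(\beta_a)=\beta_a\cdot(\dots)$ hinges on precisely this invariance; here we simply run the argument at the level of the kernels for general degree $m$ rather than for $m=0$.

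The main obstacle, and the only point requiring genuine care, is the sign tracking: one must check that the Koszul sign picked up when reordering the Grassmann variables $\psi_{\bX}\mapsto\psi_{R_{\beta_a}(\bX+x)}$ inside the Gaussian integral, the sign $(-1)^{N_{\beta_a}(\bX_j+x)}$ attached to each factor in the tree operator identity, and the sign implicit in the inductive hypothesis on $J^{l+1}$ all combine to give exactly the single factor $(-1)^{N_{\beta_a}(\bX+x)}$ on the right-hand side of \eqref{eq_temperature_translation_UV_general}, with no leftover sign from the contracted (internal) variables. The cleanest way to organize this is to observe that an internal line always contracts a pair of half-edges carrying the \emph{same} time shift, so its contribution to $(-1)^{N_{\beta_a}(\cdot+x)}$ is squared and hence trivial; only the external variables, whose collection is $\bX$, survive, giving $(-1)^{N_{\beta_a}(\bX+x)}$. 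Once this is settled, the induction closes and the lemma follows, since $J^0(\beta_a)(\psi)=F^0(\beta_a)(\psi)+T^0(\beta_a)(\psi)$ then also has kernels satisfying \eqref{eq_temperature_translation_UV_general}. I would also remark, to justify that the objects are well-defined throughout, that convergence of $\sum_{n\ge2}T^{l,(n)}(\beta_a)(\psi)$ is guaranteed by Proposition \ref{prop_UV_integration_general} applied to each $\beta_a$ under the stated assumptions on $M$, $\alpha$ and the $U_\rho$'s.
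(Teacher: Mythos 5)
Your proposal is correct, and the underlying argument — descending induction on the scale $l$, with the base case checked on the explicit initial polynomial and the inductive step driven by the translation covariance property \eqref{eq_temperature_translation_covariance_general_again} together with the cancellation of signs on internally contracted variables — is the same as the paper's. The difference is purely one of packaging: the paper does not redo the change-of-variables and sign bookkeeping at the kernel level. Instead it encodes the translation as a symmetry $\cR$ by setting $S(\bX):=R_{\beta_a}(\bX+x)$ and $Q(\bX):=\pi N_{\beta_a}(S^{-1}(\bX)+x)$, checks that $\widetilde{C_{o,l}}(\beta_a)$ and $F^{N_+}(\beta_a)(\psi)$ are invariant in the sense required by Lemma \ref{lem_free_tree_invariance_general} \eqref{item_invariance_simple}, applies that lemma recursively in $l$ to get $F^l(\beta_a)(\psi)=F^l(\beta_a)(\cR\psi)$ and $T^{l,(n)}(\beta_a)(\psi)=T^{l,(n)}(\beta_a)(\cR\psi)$, and finally reads off the kernel identity from the uniqueness of anti-symmetric kernels. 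Your route instead reproves, in this special case, exactly the content of that general lemma by manipulating \eqref{eq_characterization_free_kernel} and \eqref{eq_tree_formula_explicit_kernel} with the identities \eqref{eq_application_temperature_translation_covariance} and \eqref{eq_application_temperature_translation_covariance_tree}; this works and your observation that each internal contraction contributes the translation sign twice (once from the kernel, once from the Gaussian integral or tree operator) is precisely the point that makes both arguments close. The paper's version buys brevity and reuses machinery that also handles the other symmetries in Section \ref{sec_IR_model}; yours is self-contained but repeats a computation the paper has already abstracted.
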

\begin{proof}
Fix $a\in \{1,2\}$ and $x\in (1/h)\Z$. Let us define
 $S:I(\beta_a)\to I(\beta_a)$, $Q:I(\beta_a)\to \R$ by 
\begin{align*}
S(\bX):=R_{\beta_a}(\bX+x),\ Q(\bX):=\pi N_{\beta_a}(S^{-1}(\bX)+x),\
 (\forall \bX\in I(\beta_a)).
\end{align*}
It follows from \eqref{eq_temperature_translation_covariance_general_again} and
 the definition of $F^{N_+}(\beta_a)(\psi)$ that
\begin{align*}
&\widetilde{C_{o,l}}(\beta_a)(\bX)=e^{iQ_2(S_2(\bX))}
\widetilde{C_{o,l}}(\beta_a)(S_2(\bX)),\\
&(\forall \bX\in I(\beta_a)^2,l\in \{1,2,\cdots,N_+\}),\\
&F^{N_+}(\beta_a)(\psi)=F^{N_+}(\beta_a)(\cR\psi),
\end{align*}
where we used the notations defined in Subsection
 \ref{subsec_invariance_general}. Thus, recursive application of Lemma
 \ref{lem_free_tree_invariance_general} \eqref{item_invariance_simple}
 with respect to $l$ shows that
\begin{align*}
&F^l(\beta_a)(\psi)=F^l(\beta_a)(\cR\psi),\
 T^{l,(n)}(\beta_a)(\psi)=T^{l,(n)}(\beta_a)(\cR\psi),\\
&(\forall l\in \{0,1,\cdots,N_+\},n\in \N_{\ge 2}).
\end{align*}
By comparing the right-hand side of the equality 
\begin{align*}
F_m^l(\beta_a)(\psi)=\left(\frac{1}{h}\right)^m\sum_{\bX\in
 I(\beta_a)^m}F_m^l(\beta_a)(R_{\beta_a}(\bX+x))\psi_{R_{\beta_a}(\bX+x)}
\end{align*}
with that of the equality
\begin{align*}
F_m^l(\beta_a)(\cR\psi)=\left(\frac{1}{h}\right)^m\sum_{\bX\in
 I(\beta_a)^m}(-1)^{N_{\beta_a}(\bX+x)}F_m^l(\beta_a)(\bX)\psi_{R_{\beta_a}(\bX+x)}
\end{align*}
and by the uniqueness of anti-symmetric kernels we conclude that 
\begin{align*}
&F_m^l(\beta_a)(R_{\beta_a}(\bX+x))=(-1)^{N_{\beta_a}(\bX+x)}F_m^l(\beta_a)(\bX),\\
&(\forall m\in \{2,3,\cdots,N(\beta_2)\},\bX\in I(\beta_a)^m,x\in (1/h)\Z).
\end{align*}
The claimed equality concerning the kernels of $T^{l,(n)}(\beta_a)(\psi)$ can
 be derived in the same way. 
\end{proof}

The purpose of this subsection is to prove the following proposition.
\begin{proposition}\label{prop_UV_integration_difference_general}
There exists a constant $c\in\R_{>0}$ independent of any parameter such that if
 the condition \eqref{eq_parameters_assumption_UV_general} holds with $c$,
the following inequalities hold true. For any $l\in\{0,1,\cdots,N_+\}$,
 $r\in\{0,1\}$,
\begin{align}
&\left|\frac{h}{N(\beta_1)}F_0^l(\beta_1)-\frac{h}{N(\beta_2)}F_0^l(\beta_2)\right|\label{eq_UV_general_result_0th_difference}\\
&+\sum_{n=2}^{\infty}\left|\frac{h}{N(\beta_1)}T_0^{l,(n)}(\beta_1)-\frac{h}{N(\beta_2)}T_0^{l,(n)}(\beta_2)\right|\le \beta_1^{-\frac{1}{2}} \alpha^{-4},\notag\\ 
&c_0\alpha^2\Bigg(|F_2^l(\beta_1)-F_2^l(\beta_2)|_{0}+\sum_{n=2}^{\infty}|T_2^{l,(n)}(\beta_1)-T_2^{l,(n)}(\beta_2)|_{0}\Bigg)\le \beta_1^{-\frac{1}{2}},\label{eq_UV_general_result_2nd_difference}\\
&M^{-2l}\sum_{m=2}^{N(\beta_2)}c_0^{\frac{m}{2}}M^{\frac{l}{2}m}\alpha^m
 \Bigg(|F_m^l(\beta_1)-F_m^l(\beta_2)|_{0}\label{eq_UV_general_result_higher_difference}\\
&\qquad\quad+\sum_{n=2}^{\infty}|T_m^{l,(n)}(\beta_1)-T_m^{l,(n)}(\beta_2)|_{0}\Bigg)\le \beta_1^{-\frac{1}{2}}.\notag
\end{align}
\end{proposition}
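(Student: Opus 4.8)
The plan is to mirror the proof of Proposition \ref{prop_UV_integration_general} step by step, replacing the norm $\|\cdot\|_{0,r}$ by the difference measurement $|\cdot-\cdot|_0$ and using the temperature-difference estimates from Section \ref{sec_general_estimation_temperature} in place of their single-temperature counterparts. First I would record the auxiliary quantities that play the role of the induction hypotheses: alongside the differences $|F_m^l(\beta_1)-F_m^l(\beta_2)|_0$ and $\sum_n|T_m^{l,(n)}(\beta_1)-T_m^{l,(n)}(\beta_2)|_0$ we must carry the \emph{absolute} bounds $\|F_m^l(\beta_a)\|_{0,r}$ and $\sum_n\|T_m^{l,(n)}(\beta_a)\|_{0,r}$ for $a=1,2$, which are already controlled by Proposition \ref{prop_UV_integration_general} under \eqref{eq_parameters_assumption_UV_general} (applied separately at $\beta_1$ and $\beta_2$; note Lemma \ref{lem_temperature_translation_UV} guarantees the kernels satisfy \eqref{eq_temperature_translation}, so the results of Section \ref{sec_general_estimation_temperature} apply). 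The induction runs downward in $l$ from $l=N_+$, where $F_m^{N_+}(\beta_1)$ and $F_m^{N_+}(\beta_2)$ have identical kernels after the identification $R_{\beta_a}$ (the interaction $V^\delta$ is the same Grassmann polynomial), so $|F_m^{N_+}(\beta_1)-F_m^{N_+}(\beta_2)|_0=0$ and the base case is immediate from $U_{max}\le(2(c_0+c_0')^2\alpha^4)^{-1}$.

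For the inductive step I would fix $l$, assume \eqref{eq_UV_general_result_0th_difference}--\eqref{eq_UV_general_result_higher_difference} at all $l'\in\{l+1,\dots,N_+\}$, and estimate each piece of the flow equations \eqref{eq_flow_equation_free_2nd}, \eqref{eq_flow_equation_free_2nd_modified}, \eqref{eq_flow_equation_free_0th}, now read as \emph{differences} between $\beta_1$- and $\beta_2$-quantities. The tree terms $|T_m^{l,(n)}(\beta_1)-T_m^{l,(n)}(\beta_2)|_0$ are handled by Lemma \ref{lem_tree_formula_general_bound_difference} \eqref{item_tree_formula_bound_higher_difference}: its right-hand side is a product of $n-1$ factors involving the absolute input bounds $\sum_a\|J_{m_j}(\beta_a)\|_{0,0}$ times one ``distinguished'' factor carrying either a difference $|J_{m_1}(\beta_1)-J_{m_1}(\beta_2)|_0$, a $\frac{2\pi}{\beta_1}\sum_a\|\cdot\|_{l,1}$-term, a $|\widetilde{C_{o,l+1}}(\beta_1)-\widetilde{C_{o,l+1}}(\beta_2)|_0$-term, or a $D(\beta_1,\beta_2)$-term; here $D(\beta_1,\beta_2)=\beta_1^{-1/2}$ by \eqref{eq_determinant_bound_UV_difference_general} and $|\widetilde{C_{o,l+1}}(\beta_1)-\widetilde{C_{o,l+1}}(\beta_2)|_0\le\beta_1^{-1/2}c_0M^{-l-1}$ by \eqref{eq_decay_bound_UV_difference_general}, while $\frac{2\pi}{\beta_1}\le 2\pi\beta_1^{-1/2}$ since $\beta_1\in\N$. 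Feeding in \eqref{eq_UV_general_result_higher} for $\beta_1$ and $\beta_2$ (absolute bounds) and the inductive difference bounds for the distinguished factor, one extracts an overall $\beta_1^{-1/2}$ and the geometric series in $n$ converges with ratio $\sim\alpha^{-2}$; combinatorial factors are absorbed exactly as in \eqref{eq_application_tree_formula_bound_higher}--\eqref{eq_application_tree_formula_bound_sum}. The free-integration terms are treated identically using Lemma \ref{lem_general_free_bound_difference} in place of Lemma \ref{lem_general_free_bound}; the new ingredient there, the extra $\frac{2\pi}{\beta_1}\sum_a\|J_n(\beta_a)\|_{l,1}$ and $D(\beta_1,\beta_2)\sum_a\|J_n(\beta_a)\|_{l,0}$ contributions, is again $O(\beta_1^{-1/2})$ times the absolute bounds already under control. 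Summing the contributions along the flow (a telescoping sum over $j\in\{l,\dots,N_+-1\}$ dominated by a geometric series in $M^{-1}$) and invoking \eqref{eq_parameters_assumption_UV_general} to make the $c\,M^{-1}\alpha^{-4}+c(c_0+c_0')^2\alpha^4U_{max}$-type bounds strictly less than $1$, one closes the induction for \eqref{eq_UV_general_result_2nd_difference} and \eqref{eq_UV_general_result_higher_difference}, and then a final pass through the $0$-th order flow \eqref{eq_flow_equation_free_0th} — where the tadpole $C_{o,l+1}(\rho\mathbf{0}\uparrow 0,\rho\mathbf{0}\uparrow 0)$ differences are bounded via \eqref{eq_tadpole_bound_UV_difference_general} — gives \eqref{eq_UV_general_result_0th_difference}.

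The bookkeeping obstacle, and the step I expect to demand the most care, is the systematic tracking of the distinguished-factor structure through the nested definitions of $F_4^l$, $\hat F_4^l$, $F_2^l$, $\hat F_2^l$ and $F_0^l$: at each level one must split a difference of products $AB-A'B'=(A-A')B+A'(B-B')$ and verify that exactly one difference (hence one $\beta_1^{-1/2}$) appears per term while every other factor is supplied by the absolute bounds of Proposition \ref{prop_UV_integration_general} — and crucially that these absolute bounds are available \emph{for both} $\beta_1$ and $\beta_2$, which is why one runs the whole argument with the pair of hypotheses (difference bound) $+$ (absolute bound at $\beta_1$) $+$ (absolute bound at $\beta_2$) jointly. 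No genuinely new analytic difficulty arises beyond Section \ref{sec_general_estimation_temperature}; the proof is essentially a diff of the proof of Proposition \ref{prop_UV_integration_general} against itself, with $D(\beta_1,\beta_2)=\beta_1^{-1/2}$, $\frac{2\pi}{\beta_1}\le2\pi\beta_1^{-1/2}$, \eqref{eq_decay_bound_UV_difference_general} and \eqref{eq_tadpole_bound_UV_difference_general} as the four sources of the decaying factor, and the constant $c$ is enlarged finitely many times exactly as in the single-temperature case so that all final right-hand sides are $\le\beta_1^{-1/2}$ (respectively $\le\beta_1^{-1/2}\alpha^{-4}$) under \eqref{eq_parameters_assumption_UV_general}.
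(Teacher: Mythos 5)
Your proposal is correct and follows essentially the same route as the paper's own proof: a downward induction in $l$ with base case $|F_m^{N_+}(\beta_1)-F_m^{N_+}(\beta_2)|_0=0$, the difference lemmas (Lemma \ref{lem_general_free_bound_difference} and Lemma \ref{lem_tree_formula_general_bound_difference}) substituted for their single-temperature counterparts in the flow equations \eqref{eq_flow_equation_free_2nd}, \eqref{eq_flow_equation_free_2nd_modified}, \eqref{eq_flow_equation_free_0th}, the absolute bounds of Proposition \ref{prop_UV_integration_general} at both temperatures feeding the undistinguished factors, and the four sources of $\beta_1^{-1/2}$ you name (including $\beta_1^{-1}\le\beta_1^{-1/2}$ from $\beta_1\in\N$). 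The ``one difference per term'' bookkeeping you flag as the main obstacle is exactly how the paper closes the estimates for $\hat F_4^l$, $\hat F_2^l$ and $F_0^l$.
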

\begin{proof} Set $U_{max}:=\sup_{\rho\in\cB}|U_{\rho}|$. We assume
 the condition \eqref{eq_parameters_assumption_UV_general} so that the
 inequalities \eqref{eq_UV_general_result_0th},
 \eqref{eq_UV_general_result_2nd}, \eqref{eq_UV_general_result_higher}
 hold for $\beta_1$, $\beta_2$. 

\eqref{eq_UV_general_result_2nd_difference},\eqref{eq_UV_general_result_higher_difference}: First let us prove
 \eqref{eq_UV_general_result_2nd_difference} and
 \eqref{eq_UV_general_result_higher_difference}. The proof is made by
 induction with respect to $l$. We can see from
 \eqref{eq_anti_symmetric_kernel_initial_V} that
 $|F_m^{N_+}(\beta_1)-F_m^{N_+}(\beta_2)|_0=0$ $(\forall m\in
 \{2,4\})$. Thus, the inequalities \eqref{eq_UV_general_result_2nd_difference},
 \eqref{eq_UV_general_result_higher_difference} for $l=N_+$ hold true.

Let us fix $l\in \{0,1,\cdots,N_+-1\}$ and assume that
 \eqref{eq_UV_general_result_2nd_difference},
 \eqref{eq_UV_general_result_higher_difference} hold for all $l'\in
 \{l+1,l+2,\cdots,N_+\}$. By substituting
 \eqref{eq_determinant_bound_UV_general}, 
 \eqref{eq_decay_bound_UV_general},
 \eqref{eq_determinant_bound_UV_difference_general},  
 \eqref{eq_decay_bound_UV_difference_general} into the inequality in
 Lemma \ref{lem_tree_formula_general_bound_difference}
 \eqref{item_tree_formula_bound_higher_difference} we have 
\begin{align}
&|T_m^{l,(n)}(\beta_1)- T_m^{l,(n)}(\beta_2)|_0\label{eq_application_tree_formula_difference_higher}\\
&\le c  n
 2^{-2m}c_0^{-\frac{m}{2}-n+1}(2c_0M^{-l-1})^{n-2}\prod_{j=2}^n\Bigg(\sum_{m_j=2}^{N(\beta_2)}2^{4m_j}c_0^{\frac{m_j}{2}}\sum_{a=1}^2\|J_{m_j}^{l+1}(\beta_a)\|_{0,0}\Bigg)\notag\\
&\quad\cdot
 \sum_{m_1=2}^{N(\beta_2)}2^{4m_1}c_0^{\frac{m_1}{2}}\Bigg(c_0M^{-l-1}|J_{m_1}^{l+1}(\beta_1)-
 J_{m_1}^{l+1}(\beta_2)|_0\notag\\
&\qquad
 +\beta_1^{-1}c_0M^{-l-1}\sum_{a=1}^2\|J_{m_1}^{l+1}(\beta_a)\|_{0,1}
+ \beta_1^{-\frac{1}{2}} c_0 M^{-l-1}\sum_{a=1}^2\|J_{m_1}^{l+1}(\beta_a)\|_{0,0}
\Bigg)\notag\\
&\quad\cdot 1_{\sum_{j=1}^nm_j-2n+2\ge
 m}\notag\\
&\le c^n
 2^{-2m}c_0^{-\frac{m}{2}}M^{-(l+1)(n-1)}\prod_{j=2}^n\Bigg(\sum_{m_j=2}^{N(\beta_2)}2^{4m_j}c_0^{\frac{m_j}{2}}\sum_{a=1}^2\|J_{m_j}^{l+1}(\beta_a)\|_{0,0}\Bigg)\notag\\
&\quad\cdot
 \sum_{m_1=2}^{N(\beta_2)}2^{4m_1}c_0^{\frac{m_1}{2}}\Bigg(|J_{m_1}^{l+1}(\beta_1)- J_{m_1}^{l+1}(\beta_2)|_0
 +\beta_1^{-\frac{1}{2}}\sum_{r=0}^1\sum_{a=1}^2\|J_{m_1}^{l+1}(\beta_a)\|_{0,r}\Bigg)\notag\\
&\quad\cdot1_{\sum_{j=1}^nm_j-2n+2\ge m}.\notag
\end{align}
The hypothesis of induction implies that
\begin{align}
\sum_{m=2}^{N(\beta_2)}2^{4m}c_0^{\frac{m}{2}}|J_{m}^{l+1}(\beta_1)-
 J_{m}^{l+1}(\beta_2)|_0\le c
 \beta_1^{-\frac{1}{2}}\alpha^{-2}.\label{eq_input_upper_bound_general_difference}
\end{align}
Since the inequalities \eqref{eq_UV_general_result_2nd},
 \eqref{eq_UV_general_result_higher} are available, we can also claim
 that
\begin{align}
\sum_{m=2}^{N(\beta_2)}2^{4m}c_0^{\frac{m}{2}}\sum_{a=1}^2\|J_{m}^{l+1}(\beta_a)\|_{0,r}\le
 c\alpha^{-2},\ (\forall r\in \{0,1\}).\label{eq_input_upper_bound_general_another}
\end{align}
Using \eqref{eq_input_upper_bound_general_difference},
 \eqref{eq_input_upper_bound_general_another}, we obtain from
 \eqref{eq_application_tree_formula_difference_higher} that for $m\in
 \{2,4\}$,
\begin{align*}
|T_m^{l,(n)}(\beta_1)- T_m^{l,(n)}(\beta_2)|_0\le c^n
\beta_1^{-\frac{1}{2}} c_0^{-\frac{m}{2}}M^{-(l+1)(n-1)}\alpha^{-2n}.\end{align*}
Moreover, by the assumption $\alpha\ge c$,
\begin{align}
\sum_{n=2}^{\infty}|T_m^{l,(n)}(\beta_1)- T_m^{l,(n)}(\beta_2)|_0\le c
 \beta_1^{-\frac{1}{2}} c_0^{-\frac{m}{2}}M^{-l-1}\alpha^{-4},\
 (\forall m\in\{2,4\}).\label{eq_tree_2nd_4th_kernel_difference}
\end{align}

By using \eqref{eq_UV_general_result_higher}, 
\eqref{eq_UV_general_result_higher_difference} for $l+1$ we can derive
 from \eqref{eq_application_tree_formula_difference_higher} that
\begin{align*}
&M^{-2l}\sum_{m=2}^{N(\beta_2)} c_0^{\frac{m}{2}}M^{\frac{l}{2}m}\alpha^m
|T_m^{l,(n)}(\beta_1)- T_m^{l,(n)}(\beta_2)|_0\\
&\le c^nM^{-2l}\cdot M^{-(l+1)(n-1)}\prod_{j=2}^n\Bigg(\sum_{m_j=2}^{N(\beta_2)}2^{4m_j}c_0^{\frac{m_j}{2}}\sum_{a=1}^2\|J_{m_j}^{l+1}(\beta_a)\|_{0,0}\Bigg)\\
&\quad\cdot
 \sum_{m_1=2}^{N(\beta_2)}2^{4m_1} c_0^{\frac{m_1}{2}}\Bigg(|J_{m_1}^{l+1}(\beta_1)- J_{m_1}^{l+1}(\beta_2)|_0 +\beta_1^{-\frac{1}{2}}\sum_{r=0}^1\sum_{a=1}^2\|J_{m_1}^{l+1}(\beta_a)\|_{0,r}
\Bigg)\\
&\quad\cdot
 M^{\frac{l}{2}(\sum_{j=1}^nm_j-2n+2)}\alpha^{\sum_{j=1}^nm_j-2n+2}
2^{-2(\sum_{j=1}^nm_j-2n+2)}\\
&\le c^nM^{-2l}\prod_{j=2}^n\Bigg(M^{-2l-1}\alpha^{-2}
\sum_{m_j=2}^{N(\beta_2)}2^{2m_j} c_0^{\frac{m_j}{2}}M^{\frac{l}{2}m_j}\alpha^{m_j}
\sum_{a=1}^2\|J_{m_j}^{l+1}(\beta_a)\|_{0,0}\Bigg)\\
&\quad\cdot
 \sum_{m_1=2}^{N(\beta_2)}2^{2m_1} c_0^{\frac{m_1}{2}}M^{\frac{l}{2}m_1}\alpha^{m_1}\\
&\quad\cdot\Bigg(|J_{m_1}^{l+1}(\beta_1)-J_{m_1}^{l+1}(\beta_2)|_0+ \beta_1^{-\frac{1}{2}}\sum_{r=0}^1\sum_{a=1}^2\|J_{m_1}^{l+1}(\beta_a)\|_{0,r}\Bigg)\\
&\le c \beta_1^{-\frac{1}{2}}M(c\alpha^{-2})^{n-1}.
\end{align*}
Thus, on the assumption $\alpha\ge c$,
\begin{align}
M^{-2l}\sum_{m=2}^{N(\beta_2)}c_0^{\frac{m}{2}}M^{\frac{l}{2}m}\alpha^m\sum_{n=2}^{
\infty}|T_m^{l,(n)}(\beta_1)- T_m^{l,(n)}(\beta_2)|_0\le
 c \beta_1^{-\frac{1}{2}}M\alpha^{-2}.\label{eq_application_tree_formula_bound_difference}
\end{align}

On the other hand, Lemma
 \ref{lem_general_free_bound_difference}
 \eqref{item_general_free_bound_difference_higher},
 \eqref{eq_determinant_bound_UV_general} and \eqref{eq_determinant_bound_UV_difference_general} imply that for $m\in
 \{6,7,\cdots,N(\beta_2)\}$,
\begin{align*}
|F_m^l(\beta_1)-F_m^l(\beta_2)|_0
&\le c\sum_{n=m}^{N(\beta_2)}2^{2n}c_0^{\frac{n-m}{2}}\Bigg(
|J_n^{l+1}(\beta_1)-J_n^{l+1}(\beta_2)|_0\\
&\quad+\beta_1^{-\frac{1}{2}}\sum_{a=1}^2\|J_n^{l+1}(\beta_a)\|_{0,0}+\beta_1^{-1}\sum_{a=1}^2\|J_n^{l+1}(\beta_a)\|_{0,1}\Bigg).
\end{align*}
Thus, by \eqref{eq_UV_general_result_higher},
 \eqref{eq_UV_general_result_higher_difference} for $l+1$,
\begin{align}
& M^{-2l}\sum_{m=6}^{N(\beta_2)}c_0^{\frac{m}{2}}M^{\frac{l}{2}m}\alpha^m|F_m^l(\beta_1)-F_m^l(\beta_2)|_0\label{eq_application_free_bound_sum_difference}\\
&\le
 M^{-2l}\sum_{n=6}^{N(\beta_2)}\sum_{m=6}^{n}2^{2n}c_0^{\frac{n}{2}}M^{\frac{l}{2}m}\alpha^m\notag\\
&\quad\cdot
 \Bigg(|J_n^{l+1}(\beta_1)-J_n^{l+1}(\beta_2)|_0+\beta_1^{-\frac{1}{2}}\sum_{r=0}^1\sum_{a=1}^2\|J_n^{l+1}(\beta_a)\|_{0,r}\Bigg)\notag\\
&\le c
 M^{-2l-3}\sum_{n=6}^{N(\beta_2)}c_0^{\frac{n}{2}}M^{\frac{l+1}{2}n}\alpha^n\notag\\
&\quad\cdot\Bigg(|J_n^{l+1}(\beta_1)-J_n^{l+1}(\beta_2)|_0+
\beta_1^{-\frac{1}{2}}\sum_{r=0}^1\sum_{a=1}^2\|J_n^{l+1}(\beta_a)\|_{0,r}\Bigg)\notag\\
&\le c \beta_1^{-\frac{1}{2}}M^{-1}.\notag
\end{align}

Let us find an upper bound on $|F_m^l(\beta_1)-F_m^l(\beta_2)|_0$
 $(m=2,4)$. Lemma \ref{lem_general_free_bound_difference}
 \eqref{item_general_free_bound_difference_higher}, 
\eqref{eq_determinant_bound_UV_general}, \eqref{eq_determinant_bound_UV_difference_general}
and the inequalities
 \eqref{eq_UV_general_result_higher},
 \eqref{eq_UV_general_result_higher_difference}, 
\eqref{eq_tree_2nd_4th_kernel_difference} for $l'\in
 \{l+1,l+2,\cdots,N_+\}$ guarantee that
\begin{align}
&|F_4^l(\beta_1)-F_4^l(\beta_2)|_0\label{eq_free_4th_general_bound_difference}\\
&\le
 |F_4^{l+1}(\beta_1)-F_4^{l+1}(\beta_2)|_0+\sum_{n=2}^{\infty}|T_4^{l+1,(n)}(\beta_1)-T_4^{l+1,(n)}(\beta_2)|_0\notag\\
&\quad +c
 \sum_{n=6}^{N(\beta_2)}2^{2n}c_0^{\frac{n-4}{2}}\Bigg(|J_n^{l+1}(\beta_1)-J_n^{l+1}(\beta_2)|_0\notag\\
&\qquad\qquad+\beta_1^{-\frac{1}{2}}\sum_{a=1}^2\|J_n^{l+1}(\beta_a)\|_{0,0}
+\beta_1^{-1}\sum_{a=1}^2\|J_n^{l+1}(\beta_a)\|_{0,1}\Bigg)\notag\\
&\le  |F_4^{l+1}(\beta_1)-F_4^{l+1}(\beta_2)|_0+
 c \beta_1^{-\frac{1}{2}} c_0^{-2}M^{-l-1}\alpha^{-4}\notag\\
&\le  |F_4^{N_+}(\beta_1)-F_4^{N_+}(\beta_2)|_0 + c \beta_1^{-\frac{1}{2}} c_0^{-2}\sum_{j=l}^{N_+-1}M^{-j-1}\alpha^{-4}\notag\\
&\le c \beta_1^{-\frac{1}{2}} c_0^{-2}M^{-l-1}\alpha^{-4},\notag
\end{align}
which implies that 
\begin{align}
c_0^{2}\alpha^{4}|F_4^l(\beta_1)-F_4^l(\beta_2)|_0\le
 c \beta_1^{-\frac{1}{2}}M^{-1}.\label{eq_free_4th_general_bound_coefficient}
\end{align}

Remark that 
\begin{align*}
\cP_2\int
 J_4^{N_+}(\beta_a)(\psi+\psi^1)d\mu_{C_{o,l+1}(\beta_a)}(\psi^1)=\frah^2\sum_{\bX\in
 I(\beta_a)^2}K_2^{l+1}(\beta_a)(\bX)\psi_{\bX}
\end{align*}
with the anti-symmetric kernel
 $K_2^{l+1}(\beta_a)(\cdot):I(\beta_a)^2\to \C$ defined by 
\begin{align*}
&K_2^{l+1}(\beta_a)((\rho,\bx,\s,x,\theta),(\eta,\by,\tau,y,\xi))\\
&:=-\frac{h}{2}U_{\rho}
C_{o,l+1}(\beta_a)(\rho\b0\ua 0,\rho\b0\ua
 0)1_{(\rho,\bx,\s,x)=(\eta,\by,\tau,y)}
(1_{(\theta,\xi)=(1,-1)}-1_{(\theta,\xi)=(-1,1)}).
\end{align*}
We can see that
\begin{align}
&|K_2^{l+1}(\beta_1)-K_2^{l+1}(\beta_2)|_0\label{eq_tadpole_difference_derivation}\\
&\le
 U_{max}\sup_{\rho\in\cB}|C_{o,l+1}(\beta_1)(\rho\b0\ua 0,\rho\b0\ua
 0)-C_{o,l+1}(\beta_2)(\rho\b0\ua 0,\rho\b0\ua 0)|.\notag
\end{align}
It follows from \eqref{eq_flow_equation_free_2nd} and Lemma
 \ref{lem_general_free_bound_difference}
 \eqref{item_general_free_bound_difference_higher},
\eqref{eq_determinant_bound_UV_general},
\eqref{eq_determinant_bound_UV_difference_general}
that
\begin{align*}
&|F_2^l(\beta_1)-F_2^l(\beta_2)|_0\\
&\le
 |F_2^{l+1}(\beta_1)-F_2^{l+1}(\beta_2)|_0+|T_2^{l+1}(\beta_1)-T_2^{l+1}(\beta_2)|_0\\
&\quad+|K_2^{l+1}(\beta_1)-K_2^{l+1}(\beta_2)|_0\\
&\quad +cc_0^{\frac{4-2}{2}}\Bigg(|\hat{F}_4^{l+1}(\beta_1)-\hat{F}_4^{l+1}(\beta_2)|_0
+\beta_1^{-\frac{1}{2}}\sum_{a=1}^2\|\hat{F}_4^{l+1}(\beta_a)\|_{0,0}\\
&\qquad\qquad+ \beta_1^{-1}\sum_{a=1}^2\|\hat{F}_4^{l+1}(\beta_a)\|_{0,1}+|T_4^{l+1}(\beta_1)-T_4^{l+1}(\beta_2)|_0\\
&\qquad\qquad +
 \beta_1^{-\frac{1}{2}}\sum_{a=1}^2\|T_4^{l+1}(\beta_a)\|_{0,0}+\beta_1^{-1}\sum_{a=1}^2\|T_4^{l+1}(\beta_a)\|_{0,1}\Bigg)\\
&\quad +c \sum_{n=6}^{N(\beta_2)}2^{2n}c_0^{\frac{n-2}{2}}\Bigg(|J_n^{l+1}(\beta_1)-J_n^{l+1}(\beta_2)|_0+
 \beta_1^{-\frac{1}{2}}\sum_{a=1}^2\|J_n^{l+1}(\beta_a)\|_{0,0}\\
&\qquad\qquad+\beta_1^{-1}\sum_{a=1}^2\|J_n^{l+1}(\beta_a)\|_{0,1}\Bigg).
\end{align*}
Substitution of \eqref{eq_UV_general_result_higher},
 \eqref{eq_tree_2nd_4th_kernel_bound},
 \eqref{eq_free_modified_4th_general_bound},
\eqref{eq_tadpole_bound_UV_difference_general}, 
\eqref{eq_UV_general_result_higher_difference},
\eqref{eq_tree_2nd_4th_kernel_difference},
\eqref{eq_free_4th_general_bound_difference},
\eqref{eq_tadpole_difference_derivation}
 and the equality
\begin{align}
|F_4^{l'}(\beta_1)-F_4^{l'}(\beta_2)|_0=|\hat{F}_4^{l'}(\beta_1)-\hat{F}_4^{l'}(\beta_2)|_0\label{eq_UV_difference_4th_free_equivalence}
\end{align}
for $l'\in \{l+1,l+2,\cdots,N_+\}$
yield that
\begin{align*}
&|F_2^{l}(\beta_1)-F_2^{l}(\beta_2)|_0\\
&\le |F_2^{l+1}(\beta_1)-F_2^{l+1}(\beta_2)|_0+c
 \beta_1^{-\frac{1}{2}} c_0^{-1}M^{-l-1}\alpha^{-4}\\
&\quad+U_{max}\sup_{\rho\in\cB}|C_{o,l+1}(\beta_1)(\rho\b0\ua 0,\rho\b0\ua 0)-C_{o,l+1}(\beta_2)(\rho\b0\ua 0,\rho\b0\ua 0)|\\
&\le |F_2^{N_+}(\beta_1)-F_2^{N_+}(\beta_2)|_0+
c
 \beta_1^{-\frac{1}{2}}c_0^{-1}\sum_{j=l}^{N_+-1}M^{-j-1}\alpha^{-4}\\
&\quad +U_{max}\sum_{j=l}^{N_+-1}\sup_{\rho\in\cB}|C_{o,j+1}(\beta_1)(\rho\b0\ua 0,\rho\b0\ua 0)-C_{o,j+1}(\beta_2)(\rho\b0\ua 0,\rho\b0\ua 0)|\\
&\le c  \beta_1^{-\frac{1}{2}}c_0^{-1}M^{-l-1}\alpha^{-4}+U_{max}\beta_1^{-\frac{1}{2}}c_0',
\end{align*}
or 
\begin{align}
c_0\alpha^2|F_2^{l}(\beta_1)-F_2^{l}(\beta_2)|_0\le
 \beta_1^{-\frac{1}{2}}(c_0c_0'\alpha^2U_{max}+cM^{-1}\alpha^{-2}).\label{eq_free_2nd_general_bound_difference}
\end{align}

By \eqref{eq_tree_2nd_4th_kernel_difference},
 \eqref{eq_application_tree_formula_bound_difference},
 \eqref{eq_application_free_bound_sum_difference}, 
\eqref{eq_free_4th_general_bound_coefficient},
\eqref{eq_free_2nd_general_bound_difference} we have that
\begin{align}
&c_0\alpha^2\Bigg(|F_2^{l}(\beta_1)-F_2^{l}(\beta_2)|_0+\sum_{n=2}^{\infty}|T_2^{l,(n)}(\beta_1)-T_2^{l,(n)}(\beta_2)|_0\Bigg)\label{eq_resulting_bound_2nd_UV_difference}\\
&\le \beta_1^{-\frac{1}{2}}(c_0c_0'\alpha^2U_{max}+cM^{-1}\alpha^{-2}).\notag\\
& M^{-2l}\sum_{m=2}^{N(\beta_2)}c_0^{\frac{m}{2}}M^{\frac{l}{2}m}\alpha^m\Bigg(|F_m^l(\beta_1)-F_m^l(\beta_2)|_0+\sum_{n=2}^{\infty}|T_m^{l,(n)}(\beta_1)-T_m^{l,(n)}(\beta_2)|_0
\Bigg)\label{eq_resulting_bound_sum_UV_difference}\\
& \le
 \sum_{m\in\{2,4\}}c_0^{\frac{m}{2}}\alpha^m|F_m^l(\beta_1)-F_m^l(\beta_2)|_0\notag\\
&\quad+
 M^{-2l}\sum_{m=6}^{N(\beta_2)}c_0^{\frac{m}{2}}M^{\frac{l}{2}m}\alpha^m|F_m^l(\beta_1)-F_m^l(\beta_2)|_0\notag\\
&\quad+
 M^{-2l}\sum_{m=2}^{N(\beta_2)}c_0^{\frac{m}{2}}M^{\frac{l}{2}m}\alpha^m\sum_{n=2}^{\infty}|T_m^{l,(n)}(\beta_1)-T_m^{l,(n)}(\beta_2)|_0\notag\\
&\le
 c \beta_1^{-\frac{1}{2}}(c_0c_0'\alpha^2U_{max}+M^{-1}+M\alpha^{-2}).\notag
\end{align}
On the assumption \eqref{eq_parameters_assumption_UV_general} the right-hand
 sides of \eqref{eq_resulting_bound_2nd_UV_difference},
 \eqref{eq_resulting_bound_sum_UV_difference} are less than
 $\beta_1^{-\frac{1}{2}}$. Thus, by induction the inequalities \eqref{eq_UV_general_result_2nd_difference},
 \eqref{eq_UV_general_result_higher_difference} hold for all $l\in
 \{0,1,\cdots,N_+\}$. 

\eqref{eq_UV_general_result_0th_difference}: Let us prove the inequality
\eqref{eq_UV_general_result_0th_difference}, assuming that
 \eqref{eq_UV_general_result_2nd_difference},
 \eqref{eq_UV_general_result_higher_difference} are true for all $l\in
 \{0,1,\cdots,N_+\}$. By substituting
\eqref{eq_determinant_bound_UV_general}, 
\eqref{eq_decay_bound_UV_general}, 
\eqref{eq_input_upper_bound_general},
\eqref{eq_determinant_bound_UV_difference_general},
\eqref{eq_decay_bound_UV_difference_general},
\eqref{eq_input_upper_bound_general_difference} into 
the inequality in Lemma \ref{lem_tree_formula_general_bound_difference} 
 \eqref{item_tree_formula_bound_0th_difference} we obtain 
\begin{align*}
&\left|\frac{h}{N(\beta_1)}T_0^{l,(n)}(\beta_1)-\frac{h}{N(\beta_2)}T_0^{l,(n)}(\beta_2)\right|\\
&\le cn
 c_0^{-n+1}(2c_0 M^{-l-1})^{n-2}
\Bigg(\sum_{m=2}^{N(\beta_2)}2^{3m}c_0^{\frac{m}{2}}\sum_{a=1}^2\|J_m^{l+1}(\beta_a)\|_{0,0}\Bigg)^{n-1}\\
&\quad
 \cdot\sum_{m=2}^{N(\beta_2)}2^{3m}c_0^{\frac{m}{2}}\Bigg(\beta_1^{-1}
c_0M^{-l-1}\sum_{r=0}^1\sum_{a=1}^2\|J_m^{l+1}(\beta_a)\|_{0,r}\\
&\qquad+ \beta_1^{-\frac{1}{2}}c_0 M^{-l-1}\sum_{a=1}^2\|J_m^{l+1}(\beta_a)\|_{0,0}+ c_0 M^{-l-1}|J_m^{l+1}(\beta_1)-J_m^{l+1}(\beta_2)|_{0}\Bigg)\\
&\le \beta_1^{-\frac{1}{2}} M^{l+1}(c M^{-l-1}\alpha^{-2})^n,
\end{align*}
which leads to 
\begin{align}
\sum_{n=2}^{\infty}\left|\frac{h}{N(\beta_1)}T_0^{l,(n)}(\beta_1)-\frac{h}{N(\beta_2)}T_0^{l,(n)}(\beta_2)\right|\le c\beta_1^{-\frac{1}{2}}M^{-l-1}\alpha^{-4}.\label{eq_tree_0th_general_bound_difference}
\end{align}

It follows from 
\eqref{eq_determinant_bound_UV_general},
\eqref{eq_flow_equation_free_2nd_modified},
\eqref{eq_determinant_bound_UV_difference_general}
 and Lemma \ref{lem_general_free_bound_difference}
 \eqref{item_general_free_bound_difference_higher} that
\begin{align*}
&|\hat{F}_2^l(\beta_1)-\hat{F}_2^l(\beta_2)|_0\\
&\le |\hat{F}_2^{l+1}(\beta_1)-\hat{F}_2^{l+1}(\beta_2)|_0 +
 |T_2^{l+1}(\beta_1)-T_2^{l+1}(\beta_2)|_0\\
&\quad + c
 c_0^{\frac{4-2}{2}}\Bigg(|\hat{F}_4^{l+1}(\beta_1)-\hat{F}_4^{l+1}(\beta_2)|_0+ \beta_1^{-\frac{1}{2}}\sum_{a=1}^2\|\hat{F}_4^{l+1}(\beta_a)\|_{0,0}\\
&\qquad + \beta_1^{-1}\sum_{a=1}^2\|\hat{F}_4^{l+1}(\beta_a)\|_{0,1}
+|T_4^{l+1}(\beta_1)-T_4^{l+1}(\beta_2)|_0\\
&\qquad+\beta_1^{-\frac{1}{2}}\sum_{a=1}^2\|T_4^{l+1}(\beta_a)\|_{0,0} + \beta_1^{-1}\sum_{a=1}^2\|T_4^{l+1}(\beta_a)\|_{0,1}\Bigg)\\
&\quad +
 c\sum_{n=6}^{N(\beta_2)}2^{2n}c_0^{\frac{n-2}{2}}\Bigg(|J_n^{l+1}(\beta_1)-J_n^{l+1}(\beta_2)|_0+\beta_{1}^{-\frac{1}{2}}\sum_{a=1}^2\|J_n^{l+1}(\beta_a)\|_{0,0}\\
&\qquad+\beta_{1}^{-1}\sum_{a=1}^2\|J_n^{l+1}(\beta_a)\|_{0,1}\Bigg).
\end{align*}
Using \eqref{eq_UV_general_result_higher},
 \eqref{eq_tree_2nd_4th_kernel_bound},
 \eqref{eq_free_modified_4th_general_bound}, 
\eqref{eq_UV_general_result_higher_difference},
\eqref{eq_tree_2nd_4th_kernel_difference},
\eqref{eq_free_4th_general_bound_difference},
\eqref{eq_UV_difference_4th_free_equivalence}
 for $l'\in
 \{l+1,l+2,\cdots,N_+\}$, we can deduce that
\begin{align}
|\hat{F}_2^l(\beta_1)-\hat{F}_2^l(\beta_2)|_0&\le
 |\hat{F}_2^{l+1}(\beta_1)-\hat{F}_2^{l+1}(\beta_2)|_0 +c \beta_1^{-\frac{1}{2}}c_0^{-1}M^{-l-1}\alpha^{-4}\label{eq_free_modified_2nd_general_bound_difference}\\
&\le |\hat{F}_2^{N_+}(\beta_1)-\hat{F}_2^{N_+}(\beta_2)|_0+
c
 \beta_1^{-\frac{1}{2}} c_0^{-1}\sum_{j=l}^{N_+-1}M^{-j-1}\alpha^{-4}\notag\\
&\le c\beta_1^{-\frac{1}{2}} c_0^{-1}M^{-l-1}\alpha^{-4}.\notag
\end{align}
By combining Lemma \ref{lem_general_free_bound_difference}
 \eqref{item_general_free_bound_difference_0th} with
 \eqref{eq_flow_equation_free_0th} and inserting 
\eqref{eq_determinant_bound_UV_general},
\eqref{eq_determinant_bound_UV_difference_general}
we obtain that
\begin{align*}
&\left|\frac{h}{N(\beta_1)}F_0^{l}(\beta_1)-\frac{h}{N(\beta_2)}F_0^{l}(\beta_2)\right|\\
&\le
 \left|\frac{h}{N(\beta_1)}F_0^{l+1}(\beta_1)-\frac{h}{N(\beta_2)}F_0^{l+1}(\beta_2)\right|\notag\\
&\quad + \left|\frac{h}{N(\beta_1)}T_0^{l+1}(\beta_1)-\frac{h}{N(\beta_2)}T_0^{l+1}(\beta_2)\right|\notag\\
&\quad + U_{max}\max_{\rho\in\cB}|C_{o,l+1}(\beta_1)(\rho\b0\ua 0,
 \rho\b0\ua 0)- C_{o,l+1}(\beta_2)(\rho\b0\ua 0,
 \rho\b0\ua 0)|\notag\\
&\quad+1_{l\le N_+-2}U_{max}\max_{\rho\in \cB}|C_{o,l+1}(\beta_1)(\rho\b0\ua 0,
 \rho\b0\ua 0)- C_{o,l+1}(\beta_2)(\rho\b0\ua 0,
 \rho\b0\ua 0)|\notag\\
&\qquad\cdot\sum_{j=l+2}^{N_+}\max_{\eta\in\cB}|C_{o,j}(\beta_1)(\eta \b0
 \ua 0,\eta \b0 \ua 0)|\notag\\
&\quad + 1_{l\le N_+-2}U_{max}\max_{\rho\in
 \cB}|C_{o,l+1}(\beta_2)(\rho\b0\ua 0, \rho\b0\ua 0)|\notag\\
&\qquad\cdot\sum_{j=l+2}^{N_+}\max_{\eta\in\cB}|C_{o,j}(\beta_1)(\eta \b0
 \ua 0,\eta \b0 \ua 0)- C_{o,j}(\beta_2)(\eta \b0
 \ua 0,\eta \b0 \ua 0)|\notag\\
&\quad + U_{max}\max_{\rho\in \cB}|C_{o,l+1}(\beta_1)(\rho\b0\ua 0,
 \rho\b0\ua 0)^2- C_{o,l+1}(\beta_2)(\rho\b0\ua 0,
 \rho\b0\ua 0)^2|\notag\\
&\quad +c \sum_{m\in \{2,4\}}c_0^{\frac{m}{2}}\Bigg(
|\hat{F}_m^{l+1}(\beta_1)-\hat{F}_m^{l+1}(\beta_2)|_0+\beta_1^{-\frac{1}{2}}\sum_{a=1}^2\|\hat{F}_m^{l+1}(\beta_a)\|_{0,0}\notag\\
&\qquad+\beta_1^{-1}\sum_{a=1}^2\|\hat{F}_m^{l+1}(\beta_a)\|_{0,1}+|T_m^{l+1}(\beta_1)-T_m^{l+1}(\beta_2)|_0\notag\\
&\qquad+\beta_1^{-\frac{1}{2}}\sum_{a=1}^2\|T_m^{l+1}(\beta_a)\|_{0,0}+\beta_1^{-1}\sum_{a=1}^2\|T_m^{l+1}(\beta_a)\|_{0,1}\Bigg)\notag\\
&\quad +c \sum_{m=6}^{N(\beta_2)}2^mc_0^{\frac{m}{2}}\Bigg(
|J_m^{l+1}(\beta_1)-J_m^{l+1}(\beta_2)|_0
+\beta_1^{-\frac{1}{2}}\sum_{a=1}^2\|J_m^{l+1}(\beta_a)\|_{0,0}\notag\\
&\qquad+\beta_1^{-1}\sum_{a=1}^2\|J_m^{l+1}(\beta_a)\|_{0,1}\Bigg).\notag
\end{align*}
Moreover, substitution of \eqref{eq_determinant_bound_UV_general},
\eqref{eq_tadpole_bound_UV_general},
 \eqref{eq_UV_general_result_higher}, 
\eqref{eq_tree_2nd_4th_kernel_bound},
\eqref{eq_free_modified_4th_general_bound},
\eqref{eq_free_modified_2nd_general_bound},
\eqref{eq_tadpole_bound_UV_difference_general},
\eqref{eq_UV_general_result_higher_difference},
\eqref{eq_tree_2nd_4th_kernel_difference},
\eqref{eq_free_4th_general_bound_difference},
\eqref{eq_UV_difference_4th_free_equivalence},
\eqref{eq_tree_0th_general_bound_difference},
\eqref{eq_free_modified_2nd_general_bound_difference}
for $l'\in
 \{l+1,l+2,\cdots,N_+\}$ gives that 
\begin{align}
&\left|\frac{h}{N(\beta_1)}F_0^{l}(\beta_1)-\frac{h}{N(\beta_2)}F_0^{l}(\beta_2)\right|\label{eq_free_0th_general_bound_difference}\\
&\le
 \left|\frac{h}{N(\beta_1)}F_0^{l+1}(\beta_1)-\frac{h}{N(\beta_2)}F_0^{l+1}(\beta_2)\right| + c \beta_1^{-\frac{1}{2}}M^{-l-1}\alpha^{-4}\notag\\
&\quad + c(c_0+c_0')U_{max}\max_{\rho\in
 \cB}|C_{o,l+1}(\beta_1)(\rho\b0\ua 0,\rho\b0\ua
 0)-C_{o,l+1}(\beta_2)(\rho\b0\ua 0,\rho\b0\ua 0)|\notag\\
&\quad + c \beta_1^{-\frac{1}{2}} c_0'U_{max}\max_{\rho\in
 \cB}|C_{o,l+1}(\beta_2)(\rho\b0\ua 0,\rho\b0\ua 0)|\notag\\
&\le c
 \beta_1^{-\frac{1}{2}}\sum_{j=l}^{N_+-1}M^{-j-1}\alpha^{-4}\notag\\
&\quad + c(c_0+c_0')U_{max}\notag\\
&\qquad \cdot\sum_{j=l}^{N_+-1}\max_{\rho\in
 \cB}|C_{o,j+1}(\beta_1)(\rho\b0\ua 0,\rho\b0\ua
 0)-C_{o,j+1}(\beta_2)(\rho\b0\ua 0,\rho\b0\ua 0)|\notag\\
&\quad + c \beta_1^{-\frac{1}{2}} c_0'U_{max}\sum_{j=l}^{N_+-1}\max_{\rho\in
 \cB}|C_{o,j+1}(\beta_2)(\rho\b0\ua 0,\rho\b0\ua 0)|\notag\\
&\le c \beta_1^{-\frac{1}{2}}(M^{-1}\alpha^{-4}+(c_0+c_0')c_0'U_{max}).\notag
\end{align}

By coupling \eqref{eq_tree_0th_general_bound_difference} with
 \eqref{eq_free_0th_general_bound_difference} and using the assumption
 \eqref{eq_parameters_assumption_UV_general} we conclude
 that
\begin{align*}
&\left|\frac{h}{N(\beta_1)}F_0^{l}(\beta_1)-\frac{h}{N(\beta_2)}F_0^{l}(\beta_2)\right|\\
&\quad +\sum_{n=2}^{\infty}\left|\frac{h}{N(\beta_1)}T_0^{l,(n)}(\beta_1)-\frac{h}{N(\beta_2)}T_0^{l,(n)}(\beta_2)\right|\\
&\le c
 \beta_1^{-\frac{1}{2}}(M^{-1}\alpha^{-4}+(c_0+c_0')c_0'U_{max})\le \beta_1^{-\frac{1}{2}}\alpha^{-4}.
\end{align*}
\end{proof}

\subsection{The generalized infrared integration}\label{subsec_IR_general}
In this subsection we estimate Grassmann polynomials produced by a
single-scale integration with a covariance which has different bound
properties from those assumed in the previous
subsection. Our aim here is to summarize a power-counting procedure of
the infrared integration by giving a covariance with bound properties
typical of a real covariance with infrared cut-off. In the model-dependent
infrared integration regime in Section \ref{sec_IR_model}, we need to update the covariance by including
the kernel of the quadratic part of a Grassmann polynomial created by
the preceding integration. This means that in the IR integration, unlike in the UV
integration, we cannot a priori give covariances for all the integration steps. For this reason here we construct estimates only for
one integration step as a preliminary to the practical IR
integration.

Let $l\in \Z_{<0}$. We assume that an exponent $\fr\in (0,1]$ and weights $\fw(l)$, $\fw(l+1)$ satisfying
$0<\fw(l)\le \fw(l+1)$ are given and a covariance $C_{o,l+1}:I_0^2\to \C$
satisfies the following bound properties with constants $M$,
$c_0\in\R_{\ge 1}$,
$a_1,a_2,a_3\in\R_{\ge 0}$.

\begin{align}
&|\det(\<\bp_i,\bq_j\>_{\C^r}C_{o,l+1}(X_i,Y_j))_{1\le i,j\le n}|\le
 (c_0M^{a_1(l+1)})^n,\label{eq_determinant_bound_infrared_general}\\
&(\forall r,n\in\N,\bp_i,\bq_i\in\C^r\text{ with }
\|\bp_i\|_{\C^r},\|\bq_i\|_{\C^r}\le 1,\notag\\
&\ X_i,Y_i\in I_0\
 (i=1,2,\cdots,n)),\notag
\end{align}

\begin{align}
\left\|\widetilde{C_{o,l+1}}\right\|_{l,r}\le c_0M^{-a_2(l+1)-ra_3(l+1)},\ (\forall r\in
 \{0,1\}),\label{eq_decay_bound_infrared_general}
\end{align}
where $\widetilde{C_{o,l+1}}:I^2\to\C$ is the anti-symmetric extension of
$C_{o,l+1}$ defined as in \eqref{eq_anti_symmetric_extension_covariance}.
Recall that the parameters $(\fw(l),\fr)$, $(\fw(l+1),\fr)$ are used in the definition
of $\|\cdot\|_{l,j}$, $\|\cdot\|_{l+1,j}$ respectively.   

We assume that $J^{l+1}(\psi)$ $(\in\bigwedge \cV)$ is given
and it satisfies 
$J_m^{l+1}(\psi)=0$ if $m\notin 2\N$ or $m\in \{0,2\}$. Then, we define
$F^l(\psi)$, $T^{l,(n)}(\psi)$ $(n\in \N_{\ge 2})$, $T^l(\psi)
$ $(\in\bigwedge \cV)$ by \eqref{eq_inductive_polynomial_UV_general} with the
input $J^{l+1}(\psi)$ and the covariance $C_{o,l+1}$ on the assumption
that $\sum_{n=2}^{\infty}T^{l,(n)}(\psi)$ converges. We can prove the
following lemma by the same argument as in the proof of Lemma
\ref{lem_even_term_survive_UV}.
\begin{lemma}\label{lem_even_term_survive_infrared}
Assume that $J^l(\psi)$ is well-defined. Then, if
 $m\notin 2\N\cup \{0\}$,
$$
T_m^{l,(n)}(\psi)=F_m^l(\psi)=0,\ (\forall n\in \N_{\ge 2}).
$$
\end{lemma}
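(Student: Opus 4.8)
The statement to prove is Lemma \ref{lem_even_term_survive_infrared}: if $J^{l+1}(\psi)$ contains only monomials of even degree $\geq 4$, then the Grassmann polynomials $F^l(\psi)$ and $T^{l,(n)}(\psi)$ produced by the free integration and tree expansion with covariance $C_{o,l+1}$ also contain only even-degree terms.

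The plan is to mimic the proof of Lemma \ref{lem_even_term_survive_UV} verbatim, invoking the invariance result Lemma \ref{lem_free_tree_invariance_general} \eqref{item_invariance_simple} with the sign-flip symmetry $\psi_X \mapsto -\psi_X$. Concretely, I would first define the bijection $S : I \to I$ by $S(X) := X$ and the phase $Q : I \to \R$ by $Q(X) := \pi$ for all $X \in I$. With the notation $(\cR\psi)_X = e^{iQ(S(X))}\psi_{S(X)} = -\psi_X$ from Subsection \ref{subsec_invariance_general}, applying $\cR$ to a Grassmann monomial of degree $m$ multiplies it by $(-1)^m$. Hence $J^{l+1}(\cR\psi) = J^{l+1}(\psi)$ exactly because every nonzero homogeneous component of $J^{l+1}(\psi)$ has even degree (here degree $\geq 4$, but only evenness is used). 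Similarly, since $\widetilde{C_{o,l+1}}$ is supported on $I^2$ (degree $2$), one checks directly from the definition \eqref{eq_anti_symmetric_extension_covariance} that $\widetilde{C_{o,l+1}}(\bX) = e^{iQ_2(S_2(\bX))}\widetilde{C_{o,l+1}}(S_2(\bX))$, the phase factor being $e^{2\pi i} = 1$.

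With these two hypotheses verified, Lemma \ref{lem_free_tree_invariance_general} \eqref{item_invariance_simple} immediately yields $F^l(\cR\psi) = F^l(\psi)$ and $T^{l,(n)}(\cR\psi) = T^{l,(n)}(\psi)$ for all $n \in \N_{\ge 2}$. Writing out the homogeneous components, $F_m^l(\cR\psi) = (-1)^m F_m^l(\psi)$ and likewise for $T_m^{l,(n)}$; comparing with the invariance just obtained forces $F_m^l(\psi) = 0$ and $T_m^{l,(n)}(\psi) = 0$ whenever $m$ is odd. Since $F^l(\psi)$ and $T^{l,(n)}(\psi)$ are components of $J^l(\psi)$ and $J^l(\psi)$ is assumed well-defined (so the sum $\sum_{n \geq 2} T^{l,(n)}(\psi)$ converges), the conclusion holds. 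I do not anticipate any real obstacle here: the only thing to be careful about is to confirm that the sign-flip $\cR$ genuinely satisfies both hypotheses of Lemma \ref{lem_free_tree_invariance_general} \eqref{item_invariance_simple}, which is a one-line check, and that the input condition $J_m^{l+1}(\psi) = 0$ for $m \notin 2\N$ or $m \in \{0,2\}$ in particular implies $J^{l+1}(\cR\psi) = J^{l+1}(\psi)$. The proof text will simply read "This can be proved in the same way as Lemma \ref{lem_even_term_survive_UV}" followed by the explicit choice of $S$, $Q$ and the invocation of Lemma \ref{lem_free_tree_invariance_general} \eqref{item_invariance_simple}.
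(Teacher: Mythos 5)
Your proposal is correct and is exactly the argument the paper intends: the paper's proof of Lemma \ref{lem_even_term_survive_infrared} simply refers back to the proof of Lemma \ref{lem_even_term_survive_UV}, which uses the same sign-flip symmetry $S(X)=X$, $Q(X)=\pi$ and Lemma \ref{lem_free_tree_invariance_general} \eqref{item_invariance_simple}. Your one-line verification that only evenness of the input (not the degree-$\ge 4$ condition) is used, and that the phase on $\widetilde{C_{o,l+1}}$ is $e^{2\pi i}=1$, matches the paper's reasoning precisely.
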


This subsection is devoted to proving the following proposition.

\begin{proposition}\label{prop_infrared_integration_general}
Assume that $a_4\in\R_{\ge 0}$ and 
\begin{align}
M^{-(a_1+a_2+a_4)(l+1)+ra_3(l+1)}\sum_{m=4}^N c_0^{\frac{m}{2}}M^{\frac{a_1(l+1)}{2}m}\alpha^m\|J_m^{l+1}\|_{l+1,r}\le
 1,\ (\forall r\in \{0,1\}).\label{eq_infrared_assumption_general}
\end{align}
Then, there exists a constant $c\in\R_{>1}$ independent of any parameter such that if
 the parameters $M$, $\alpha\in \R_{\ge 1}$
 satisfy
\begin{align}
M^{a_1-a_2-a_4}\ge c,\ \alpha \ge cM^{a_1+a_2+a_4},\label{eq_parameters_assumption_IR_general}
\end{align}
the following inequalities hold.
\begin{align}
&\frac{h}{N}\Bigg(|F_0^l|+\sum_{n=2}^{\infty}|T_0^{l,(n)}|\Bigg)\le M^{(a_1+a_2+a_4)l} \alpha^{-3},\label{eq_infrared_general_result_0th}\\ 
&M^{-(a_1+a_2+a_4)l+ra_3l}\sum_{m=2}^N c_0^{\frac{m}{2}}M^{\frac{a_1l}{2}m}\alpha^m \Bigg(\|F_m^l\|_{l,r}+\sum_{n=2}^{\infty}\|T_m^{l,(n)}\|_{l,r}\Bigg)\le
 1,\label{eq_infrared_general_result}\\
&\ (\forall r\in \{0,1\}).\notag
\end{align}
\end{proposition}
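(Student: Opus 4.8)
The plan is to prove Proposition~\ref{prop_infrared_integration_general} by downward induction on the single integration scale, exactly mirroring the structure of the proof of Proposition~\ref{prop_UV_integration_general} but with the new power-counting exponents $a_1,a_2,a_3,a_4$ in place of the UV exponents. First I would estimate the tree part $T^{l,(n)}(\psi)$ of order $m\ge 2$ by feeding the determinant bound \eqref{eq_determinant_bound_infrared_general} and the decay bound \eqref{eq_decay_bound_infrared_general} into Lemma~\ref{lem_tree_formula_general_bound}~\eqref{item_tree_formula_bound_higher}. The crucial input is the hypothesis \eqref{eq_infrared_assumption_general}, which controls $\sum_{m\ge 4}c_0^{m/2}M^{a_1(l+1)m/2}\alpha^m\|J_m^{l+1}\|_{l+1,r}$; note that since $J^{l+1}_m(\psi)=0$ for $m\in\{0,2\}$ the input polynomial has no relevant or marginal part, so the sum over $n$ of trees is geometric in $\alpha^{-1}$ provided $\alpha\ge cM^{a_1+a_2+a_4}$, yielding $\sum_{n\ge 2}\|T^{l,(n)}_m\|_{l,r}\le c\,c_0^{-m/2}M^{\cdots}\alpha^{-3}$ for each fixed $m$, and the weighted sum over all $m$ is bounded by $cM^{a_1-a_2-a_4}\alpha^{-1}$ after using the constraint $\sum_j m_j-2n+2\ge m$ together with $M^{a_1}\ge cM^{a_2+a_4}$ to absorb the scale factor $M^{-(a_1+a_2+a_4)l}$ against $M^{\cdots(l+1)}$.

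Next I would treat the free part. For orders $m\ge 6$ one applies Lemma~\ref{lem_general_free_bound} directly; the sum $\sum_{m\ge 6}c_0^{m/2}M^{a_1 lm/2}\alpha^m\|F^l_m\|_{l,r}$ is dominated by the analogous sum for $J^{l+1}$ at scale $l+1$, gaining a factor $M^{-3(a_1-a_2-a_4)}$ or so, hence $\le cM^{-(a_1-a_2-a_4)}$. For $m=4$ one uses the flow equation $F^l_4(\psi)=F^{l+1}_4(\psi)+T^{l+1}_4(\psi)+\cP_4\int\sum_{m\ge 6}J^{l+1}_m(\psi+\psi^1)d\mu_{C_{o,l+1}}(\psi^1)$; but here, crucially, \emph{there is no order-$4$ input at the top scale} to carry along (unlike the UV case where $F_4^{N_+}$ had to be separated off), so the order-$4$ part is already of the required irrelevant size and one simply telescopes the bound down the scales. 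Since $J^{l+1}$ has no quadratic part, there is no tadpole contribution feeding into $F^l_2$ either, so $F^l_2(\psi)$ is likewise built only from irrelevant terms and is controlled by telescoping. Combining the tree and free estimates and choosing the constant $c$ large enough so that the right-hand sides of the assembled inequalities are $\le 1$ under \eqref{eq_parameters_assumption_IR_general} completes the inductive step for \eqref{eq_infrared_general_result}.

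Finally, assuming \eqref{eq_infrared_general_result} for all scales, I would prove the zeroth-order bound \eqref{eq_infrared_general_result_0th} using Lemma~\ref{lem_tree_formula_general_bound}~\eqref{item_tree_formula_bound_0th} for the tree part and Lemma~\ref{lem_general_free_bound} for the free part, telescoping $F_0^l=F_0^{l+1}+T_0^{l+1}+\int F_2^{l+1}(\psi)d\mu_{C_{o,l+1}}(\psi)+\cdots$; again the absence of relevant/marginal input means every term is already of order $\alpha^{-3}$ times the appropriate power of $M$, and summing the geometric-type series over $l$ gives \eqref{eq_infrared_general_result_0th}.

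The main obstacle is the bookkeeping of the scale-dependent powers of $M$: one must verify that with the \emph{single} geometric gain per integration step, namely that every monomial of order $m\ge 4$ at scale $l+1$ automatically satisfies the scale-$l$ bound because $-m+(\text{const})$ is more negative against $M^l$ than against $M^{l+1}$, the combination $M^{a_1}\ge cM^{a_2+a_4}$ (equivalently $M^{a_1-a_2-a_4}\ge c$) is exactly what is needed to close the induction. In other words, the delicate point — and the reason assumption \eqref{eq_parameters_assumption_IR_general} is phrased the way it is — is tracking that the factor $M^{a_1(l+1)m/2}$ gained from the covariance's determinant bound beats the weight-loss factor $M^{-a_2(l+1)-ra_3(l+1)}$ and the overall normalization $M^{-(a_1+a_2+a_4)l}$ uniformly in $m\ge 4$; everything else (combinatorial tree factors, the $2^{cm}$ powers, the $\alpha$-geometric series) is routine once this counting is pinned down.
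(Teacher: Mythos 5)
Your power counting is right where it matters: the tree part is indeed handled by feeding \eqref{eq_determinant_bound_infrared_general} and \eqref{eq_decay_bound_infrared_general} into Lemma~\ref{lem_tree_formula_general_bound}, the free part by Lemma~\ref{lem_general_free_bound}, and the whole estimate closes precisely because every order-$\ge 4$ monomial satisfying the scale-$(l+1)$ bound \eqref{eq_infrared_assumption_general} automatically satisfies the scale-$l$ bound with a margin $M^{-(a_1-a_2-a_4)}$, which is what \eqref{eq_parameters_assumption_IR_general} buys. You also correctly note that the absence of order-$0$ and order-$2$ input removes the tadpole and the relevant/marginal flow that complicated the UV case.

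However, the overall architecture you propose — a downward induction on the scale, with flow equations $F_4^l=F_4^{l+1}+T_4^{l+1}+\cdots$ and $F_0^l=F_0^{l+1}+T_0^{l+1}+\int F_2^{l+1}\,d\mu_{C_{o,l+1}}+\cdots$ and a final ``geometric sum over $l$'' — does not match the statement and would fail as written. Proposition~\ref{prop_infrared_integration_general} is a \emph{single-step} estimate: the hypotheses give one covariance $C_{o,l+1}$ and one input polynomial $J^{l+1}$ (with $J_0^{l+1}=J_2^{l+1}=0$) satisfying \eqref{eq_infrared_assumption_general}, and the conclusion bounds the one output $F^l$, $T^{l,(n)}$. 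The objects $F_0^{l+1}$, $F_2^{l+1}$, $F_4^{l+1}$, $T_4^{l+1}$ that your telescoping invokes are simply not defined here — there is no assumption that $J^{l+1}$ itself arose as $F^{l+1}+T^{l+1}$ from a previous step, and the covariance at the next step is not a priori given (it must be rebuilt from $J_2^l$ in Section~\ref{sec_IR_model}). The correct proof is a direct computation: e.g.\ $F_0^l=\sum_{m\ge 4}\int J_m^{l+1}(\psi^1)\,d\mu_{C_{o,l+1}}(\psi^1)$ is bounded in one line by the determinant bound and \eqref{eq_infrared_assumption_general}, giving $cM^{(a_1+a_2+a_4)(l+1)}\alpha^{-4}$, and similarly for the tree part; no recursion over scales enters. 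The multi-scale induction you describe belongs to the application of this proposition (Lemma~\ref{lem_IR_analytic_continuation_expansion}), not to its proof. Once the telescoping scaffolding is stripped away, the estimates you list reduce to the paper's argument, so the gap is one of structure rather than of power counting — but as written, the proof of \eqref{eq_infrared_general_result_0th} in particular is not a proof of the stated inequality.
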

\begin{proof}
It follows from \eqref{eq_infrared_assumption_general} and the
 inequalities $\fw(l)\le \fw(l+1)$, $\alpha\ge c$, $M^{a_1}\ge c$ that 
\begin{align}
&\sum_{m=4}^N2^{3m}c_0^{\frac{m}{2}}M^{\frac{a_1(l+1)}{2}m}\|J_m^{l+1}\|_{l,r}\le
 c
 M^{(a_1+a_2+a_4)(l+1)-ra_3(l+1)}\alpha^{-4},\label{eq_input_upper_bound_infrared_general}\\
&\sum_{m=4}^N2^{2m}c_0^{\frac{m}{2}}M^{\frac{a_1 l}{2}m}\alpha^m\|J_m^{l+1}\|_{l,r}\le
 c
 M^{(a_1+a_2+a_4)(l+1)-ra_3(l+1)-2a_1},\label{eq_input_upper_bound_infrared_general_weight}\\
&\ (\forall r\in \{0,1\}).\notag
\end{align}
We will use these inequalities not only in this proof but also in the
 proof of Proposition
 \ref{prop_infrared_integration_difference_general} in the next subsection.

\eqref{eq_infrared_general_result_0th}: First let us prove the inequality
 \eqref{eq_infrared_general_result_0th}. By Lemma
 \ref{lem_general_free_bound}, \eqref{eq_determinant_bound_infrared_general} and
 \eqref{eq_input_upper_bound_infrared_general}, 
\begin{align}
\frac{h}{N}|F_0^l|\le \sum_{m=4}^N (c_0
 M^{a_1(l+1)})^{\frac{m}{2}}\|J_m^{l+1}\|_{l,0}\le c M^{(a_1+a_2+a_4)(l+1)}
\alpha^{-4}.\label{eq_free_0th_infrared_general}
\end{align}

On the other hand, by substituting \eqref{eq_determinant_bound_infrared_general},
 \eqref{eq_decay_bound_infrared_general},
 \eqref{eq_input_upper_bound_infrared_general} into the inequality
 in Lemma \ref{lem_tree_formula_general_bound}
 \eqref{item_tree_formula_bound_0th} we have that
\begin{align*}
&\frac{h}{N}|T_0^{l,(n)}|\\
&\le (c_0 M^{a_1(l+1)})^{-n+1}(c_0
 M^{-a_2(l+1)})^{n-1}\Bigg(\sum_{m=4}^N2^{2m}(c_0
 M^{a_1(l+1)})^{\frac{m}{2}}\|J_{m}^{l+1}\|_{l,0}\Bigg)^n\\
&\le M^{(a_1+a_2)(l+1)}(c M^{a_4(l+1)}\alpha^{-4})^n.
\end{align*}
Thus, by the assumption $\alpha\ge c$,
\begin{align}
\frac{h}{N}\sum_{n=2}^{\infty}|T_0^{l,(n)}|\le c
 M^{(a_1+a_2+2a_4)(l+1)}\alpha^{-8}.\label{eq_tree_0th_infrared_general}
\end{align}
By coupling \eqref{eq_tree_0th_infrared_general} with
 \eqref{eq_free_0th_infrared_general} and using the assumption $\alpha
 \ge c M^{a_1+a_2+a_4}$ we obtain
 \eqref{eq_infrared_general_result_0th}.

\eqref{eq_infrared_general_result}: Next let us show
 \eqref{eq_infrared_general_result}. It follows from Lemma
 \ref{lem_general_free_bound} and
 \eqref{eq_determinant_bound_infrared_general} that
\begin{align}\label{eq_free_part_IR_sensitive_framework}
\|F_m^l\|_{l,r}
\le &1_{m=2}\sum_{n=4}^N 2^n
 (c_0M^{a_1(l+1)})^{\frac{n-2}{2}}\|J_n^{l+1}\|_{l,r}\\
&+1_{m\ge 4}\Bigg(\|J_m^{l+1}\|_{l,r}+
\sum_{n=m+2}^N 2^n
 (c_0M^{a_1(l+1)})^{\frac{n-m}{2}}\|J_n^{l+1}\|_{l,r}\Bigg).\notag
\end{align}
Moreover, by \eqref{eq_input_upper_bound_infrared_general},
\eqref{eq_input_upper_bound_infrared_general_weight} and the
 assumption $\alpha\ge c M^{a_1/2}$,
\begin{align}
&M^{-(a_1+a_2+a_4)l+ra_3l}\sum_{m=2}^N c_0^{\frac{m}{2}}M^{\frac{a_1
 l}{2}m}\alpha^m\|F_m^{l}\|_{l,r}\label{eq_application_free_bound_sum_infrared}\\
&\le M^{-(a_1+a_2+a_4)l+ra_3l}c_0M^{a_1l}\alpha^2\sum_{n=4}^N2^n(c_0M^{a_1(l+1)})^{\frac{n-2}{2}}\|J_n^{l+1}\|_{l,r}\notag\\
&\quad + M^{-(a_1+a_2+a_4)l+ra_3l}\sum_{n=4}^N\sum_{m=4}^n c_0^{\frac{m}{2}}M^{\frac{a_1
 l}{2}m}\alpha^m2^n(c_0M^{a_1(l+1)})^{\frac{n-m}{2}}\|J_n^{l+1}\|_{l,r}\notag\\
&\le M^{-a_1}\alpha^2
 M^{-(a_1+a_2+a_4)l+ra_3l}\sum_{n=4}^N2^nc_0^{\frac{n}{2}}
M^{\frac{a_1(l+1)}{2}n}\|J_n^{l+1}\|_{l,r}\notag\\
&\quad + c M^{-(a_1+a_2+a_4)l+ra_3l}\sum_{n=4}^N2^n c_0^{\frac{n}{2}}M^{\frac{a_1
 l}{2}n}\alpha^n\|J_n^{l+1}\|_{l,r}\notag\\
&\le c M^{-a_1+a_2+a_4-ra_3}.\notag
\end{align}

On the other hand, insertion of
 \eqref{eq_determinant_bound_infrared_general},
 \eqref{eq_decay_bound_infrared_general} into
Lemma \ref{lem_tree_formula_general_bound}
 \eqref{item_tree_formula_bound_higher} yields that
\begin{align*}
&\|T_m^{l,(n)}\|_{l,r}\\
&\le 2^{-2m}(c_0 M^{a_1(l+1)})^{-\frac{m}{2}-n+1}\prod_{i=1}^n\Bigg(\sum_{q_i=0}^1\Bigg)\prod_{j=2}^n\Bigg(\sum_{r_j=0}^1\Bigg)
1_{\sum_{i=1}^nq_i+\sum_{j=2}^nr_j=r}\\
&\quad\cdot\prod_{k=2}^n(c_0M^{-a_2(l+1)-a_3r_k(l+1)})\prod_{p=1}^n\Bigg(\sum_{m_p=4}^N2^{3m_p}(c_0
 M^{a_1(l+1)})^{\frac{m_p}{2}}\|J_{m_p}^{l+1}\|_{l,q_p}\Bigg)\\
&\quad\cdot 1_{\sum_{j=1}^nm_j-2n+2\ge
 m}\notag\\
&=2^{-2m} c_0^{-\frac{m}{2}}M^{-\frac{a_1(l+1)}{2}m+(a_1+a_2)(l+1)-ra_3(l+1)}\\&\quad\cdot\prod_{i=1}^n\Bigg(\sum_{q_i=0}^1\Bigg)\prod_{j=2}^n\Bigg(\sum_{r_j=0}^1\Bigg)
1_{\sum_{i=1}^nq_i+\sum_{j=2}^nr_j=r}\\
&\quad\cdot\prod_{k=1}^n\Bigg(M^{-(a_1+a_2)(l+1)+a_3q_k(l+1)}\sum_{m_k=4}^N2^{3m_k}c_0^{\frac{m_k}{2}}
 M^{\frac{a_1(l+1)}{2}m_k}\|J_{m_k}^{l+1}\|_{l,q_k}\Bigg)\\
&\quad\cdot 1_{\sum_{j=1}^nm_j-2n+2\ge
 m}.
\end{align*}
By using the
 inequality $\alpha \ge c M^{a_1/2}$ and \eqref{eq_input_upper_bound_infrared_general_weight} we can derive from the inequality
 above that
\begin{align*}
&M^{-(a_1+a_2+a_4)l+ra_3l}\sum_{m=2}^Nc_0^{\frac{m}{2}}M^{\frac{a_1
 l}{2}m}\alpha^m\|T_m^{l,(n)}\|_{l,r}\\
&\le c M^{a_1+a_2- r a_3-a_4l}\prod_{i=1}^n\Bigg(\sum_{q_i=0}^1\Bigg)\prod_{j=2}^n\Bigg(\sum_{r_j=0}^1\Bigg)
1_{\sum_{i=1}^nq_i+\sum_{j=2}^nr_j=r}\\
&\quad\cdot\prod_{k=1}^n\Bigg(M^{-(a_1+a_2)(l+1)+a_3q_k(l+1)}\sum_{m_k=4}^N2^{3m_k}c_0^{\frac{m_k}{2}}
 M^{\frac{a_1(l+1)}{2}m_k}\|J_{m_k}^{l+1}\|_{l,q_k}\Bigg)\\
&\quad \cdot
 2^{-2(\sum_{j=1}^nm_j-2n+2)}M^{-\frac{a_1}{2}(\sum_{j=1}^nm_j-2n+2)}\alpha^{\sum_{j=1}^nm_j-2n+2}\\
&\le c  M^{a_2- r a_3-a_4l}\alpha^2\prod_{i=1}^n\Bigg(\sum_{q_i=0}^1\Bigg)\prod_{j=2}^n\Bigg(\sum_{r_j=0}^1\Bigg)
1_{\sum_{i=1}^nq_i+\sum_{j=2}^nr_j=r}\\
&\quad\cdot\prod_{k=1}^n\Bigg(c M^{a_1-(a_1+a_2)(l+1)+a_3q_k(l+1)}\alpha^{-2}
\sum_{m_k=4}^N2^{m_k}c_0^{\frac{m_k}{2}}
 M^{\frac{a_1 l}{2}m_k}\alpha^{m_k}\|J_{m_k}^{l+1}\|_{l,q_k}\Bigg)\\
&\le  M^{a_2- r a_3-a_4l}\alpha^2(c M^{-a_1+a_4(l+1)}\alpha^{-2})^n.
\end{align*}
Since $\alpha \ge c$,
\begin{align}
&M^{-(a_1+a_2+a_4)l+ra_3l}\sum_{m=2}^N c_0^{\frac{m}{2}}M^{\frac{a_1
 l}{2}m}\alpha^m\sum_{n=2}^{\infty}\|T_m^{l,(n)}\|_{l,r}\label{eq_application_tree_formula_bound_sum_infrared}\\
&\le c M^{-2a_1+a_2+2a_4-ra_3+a_4l}\alpha^{-2}.\notag
\end{align}

The inequalities \eqref{eq_application_free_bound_sum_infrared},
\eqref{eq_application_tree_formula_bound_sum_infrared} imply that
\begin{align*}
&M^{-(a_1+a_2+a_4)l+ra_3l}\sum_{m=2}^Nc_0^{\frac{m}{2}}M^{\frac{a_1
 l}{2}m}\alpha^m\Bigg(\|F_m^l\|_{l,r}+\sum_{n=2}^{\infty}\|T_m^{l,(n)}\|_{l,r}\Bigg)\\
&\le c (M^{-a_1+a_2+a_4}+M^{-2a_1+a_2+a_4}\alpha^{-2}).
\end{align*}
By the assumption \eqref{eq_parameters_assumption_IR_general} the right-hand side of the inequality above is less than 1.
\end{proof}

\begin{remark}\label{rem_renormalizability}
Since Proposition \ref{prop_infrared_integration_general} forms the
 basis of our IR integration process, it is important to know to which
 model the proposition does or does not apply. The covariance
 $C_{o,l+1}$ is a generalization of an effective covariance at the IR
 integration of scale $l+1$, which is different from a computable free
 covariance with IR cut-off of scale $l+1$. Therefore, by analyzing free
 covariances alone we cannot reach a rigorous statement on the
 applicability of the proposition. However, on the hypothesis that the
 IR singularity of an effective covariance is essentially same as that
 of a free covariance, let us try to extract at least some hints from
 calculations of free covariances. The condition
 \eqref{eq_parameters_assumption_IR_general} necessarily implies that
 $a_1-a_2>0$. Let us investigate in which model the inequality
 $a_1-a_2>0$ is unlikely to hold. Most studied many-electron models in
 constructive theories so far are single-band models having a non-empty
 free Fermi surface. So let us focus our attention on such models.

Assume that the free dispersion relation $\cE(\cdot):\R^d\to \R$ is
 Lipschitz continuous and satisfies the periodicity
 \eqref{eq_dispersion_periodic} and that $\{\bk\in \R^d\ |\
 \cE(\bk)=\mu\}\neq \emptyset$ with the chemical potential $\mu$. For
 example, in the Hubbard model with nearest-neighbor hopping, without
 magnetic field, defined on a $d$-dimensional hyper-cubic lattice, the
 free dispersion relation is given by $\cE(\bk)=\sum_{j=1}^d\cos k_j$,
 apart from a multiplication of amplitude. With a non-negative smooth
 function $\chi(\cdot):\R\to\R_{\ge 0}$ supported on the interval
 $[c_1,c_2]$, taking the value 1 on a subinterval of $[c_1,c_2]$, a free
 covariance $C_{o,l+1}:(\G\times\spin\times [0,\beta)_h)^2\to \C$ with
 IR cut-off of scale $l+1$ $(<0)$ typically takes the form that
\begin{align*}
&C_{o,l+1}(\bx\s x,\by\tau y)\\
&=\frac{\delta_{\s,\tau}}{\beta
 L^d}\sum_{(\o,\bk)\in \cM_h\times \G^*}e^{i\<\bx-\by,\bk\>}e^{i(x-y)\o}
\frac{\chi(M^{-2(l+1)}(\o^2+(\cE(\bk)-\mu)^2))}{i\o-\cE(\bk)+\mu}.
\end{align*}
On the assumption that $h,L,\beta$ are sufficiently large we can choose
 $(\o_0,\bk_0)\in\cM_h\times\G^*$ so that
 $\chi(M^{-2(l+1)}(\o_0^2+(\cE(\bk_0)-\mu)^2))=1$ and thus
\begin{align*}
\left\|\widetilde{C_{o,l+1}}\right\|_{l,0}&\ge
 \frac{1}{2h}\sum_{(\bx,x)\in \G\times [0,\beta)_h}|C_{o,l+1}(\b0\ua
 0,\bx\ua x)|\\
&\ge \left|\frac{1}{2h}\sum_{(\bx,x)\in \G\times
 [0,\beta)_h} e^{i\<\bx,\bk_0\>}e^{ix\o_0}C_{o,l+1}(\b0\ua
 0,\bx\ua x)\right|\\
&=\frac{\chi(M^{-2(l+1)}(\o_0^2+(\cE(\bk_0)-\mu)^2))}{2|i\o_0-\cE(\bk_0)+\mu|}
\ge \text{const } M^{-l-1}.
\end{align*}
This means that if $C_{o,l+1}$ satisfies
 \eqref{eq_decay_bound_infrared_general} with small $l$, 
then $a_2\ge 1$. 

Estimation of the determinant of many-electron covariances is normally
 done by applying Gram's inequality. By following this standard approach
 we eventually have that 
\begin{align*}
&(\text{the left-hand side of
 \eqref{eq_determinant_bound_infrared_general} with }C_{o,l+1})\\
&\le \Bigg(\frac{\text{const}}{\beta L^d}\sum_{(\o,\bk)\in \cM_h\times
 \G^*}\frac{\chi(M^{-2(l+1)}(\o^2+(\cE(\bk)-\mu)^2))}{|i\o-\cE(\bk)+\mu|}\Bigg)^n.
\end{align*}
So it comes down to estimating 
\begin{align}\label{eq_eventual_integration}
\frac{1}{\beta L^d}\sum_{(\o,\bk)\in \cM_h\times
 \G^*}\frac{\chi(M^{-2(l+1)}(\o^2+(\cE(\bk)-\mu)^2))}{|i\o-\cE(\bk)+\mu|}.
\end{align}
If $\cE(\bk)=\mu$ $(\forall \bk\in \R^d)$, 
\begin{align*}
(\text{the term \eqref{eq_eventual_integration}})&\ge
 \text{const } M^{-l-1}\frac{1}{\beta}\sum_{\o\in
 \cM_h}\chi(M^{-2(l+1)}\o^2)\ge\text{const}\\
&\ge \text{const } M^{l}.
\end{align*}
Let us consider the case that $\cE(\bk)\neq \mu$ for some $\bk\in \R^d$.
Set 
$$B:=\Bigg\{\sum_{j=1}^d p_j\bv_j\ \Big|\ p_j\in [0,2\pi]\ (j=1,2,\cdots,d)\Bigg\}.$$ 
It follows that $\esssup_{\bk\in B}|\nabla
 \cE(\bk)|\neq 0$. In this case we further assume that there exists an
 interval $(\alpha_1,\alpha_2)$ containing $0$ such that
\begin{align}\label{eq_Hausdorff_measure_condition}
\inf_{\eta \in (\alpha_1,\alpha_2)}\cH^{d-1}(\{\bk\in B\ |\
 \cE(\bk)-\mu=\eta\})>0,
\end{align}
where $\cH^{d-1}$ denotes the $d-1$-dimensional Hausdorff measure on
 $\R^d$. For $l\in \Z_{<0}$ satisfying $[-c_2^{1/2}M^{l+1},c_2^{1/2}M^{l+1}]\subset (\alpha_1,\alpha_2)$ we deduce by the coarea formula that
\begin{align*}
&\lim_{L\to \infty\atop L\in\N}(\text{the term \eqref{eq_eventual_integration}})\\
&\ge \text{const}\left(\esssup_{\bk\in B}|\nabla
 \cE(\bk)|\right)^{-1}M^{-l-1}\\
&\quad\cdot\frac{1}{\beta}\sum_{\o\in
 \cM_h}\int_{-\infty}^{\infty}d\eta
 \chi(M^{-2(l+1)}(\o^2+\eta^2))\cH^{d-1}(\{\bk\in B\ |\
 \cE(\bk)-\mu=\eta\})\\
&\ge \text{const}\left(\esssup_{\bk\in B}|\nabla
 \cE(\bk)|\right)^{-1} \inf_{\eta \in(\alpha_1,\alpha_2)}\cH^{d-1}(\{\bk\in B\ |\
 \cE(\bk)-\mu=\eta\})\\
&\quad\cdot M^{l+1}.
\end{align*}
Therefore, if either $\{\bk\in B\ |\ \cE(\bk)-\mu=0\}=B$ or
 \eqref{eq_Hausdorff_measure_condition} holds, an estimation based on
 Gram's inequality can hardly yield the determinant bound
 \eqref{eq_determinant_bound_infrared_general} with $a_1>1$ for small
 $l$ and large $L$. However, if $a_1\le 1$ and $a_2\ge 1$, the inequality
 $a_1-a_2>0$ cannot hold.

For example, if $\cE(\bk)=\sum_{j=1}^d\cos k_j$, $\mu\in (-d,d)$ and
 $B=[0,2\pi]^d$, the condition \eqref{eq_Hausdorff_measure_condition}
 holds for some interval $(\alpha_1,\alpha_2)$ containing 0. This suggests that an IR integration process
 based on the iteration of Proposition
 \ref{prop_infrared_integration_general} does not instantly apply to
 the corresponding many-electron models. We should also remark that the role
of $\cE(\cdot)$ above is played by the smallest band spectrum whose zero
set consists of a single point and the chemical potential $\mu$ is set
 to be zero when we analyze our 4-band model in Section
 \ref{sec_IR_model}. In this situation, 
\begin{align*}
&\{\bk\in B\ |\ \cE(\bk)=0\}\neq B,\quad \inf_{\eta \in (\alpha_1,\alpha_2)}\cH^{1}(\{\bk\in B\ |\
 \cE(\bk)=\eta\})=0,
\end{align*}
for any interval $(\alpha_1,\alpha_2)$ containing 0. Thus, we cannot
 exclude the applicability of Proposition
 \ref{prop_infrared_integration_general} by the above argument. In fact
it will turn out that we can apply the proposition with the power
 $a_1=2$, $a_2=1$, $a_3=1$, $a_4=1/2$. 
\end{remark}

\begin{remark}\label{rem_marginal} 
Let us study a possible reconstruction of Proposition
 \ref{prop_infrared_integration_general} in the case that $a_1>0$,
 $a_1-a_2-a_4=0$. Assume that \eqref{eq_infrared_assumption_general}
 holds, $\alpha\ge cM^{a_1+a_2+a_4}$ and $M^{a_1}\ge c$ for a
 sufficiently large $c$. The only part which essentially
 needed the condition $M^{a_1-a_2-a_4}\ge c$ in the proof was the
 derivation of the upper bound on $\|F_4^l\|_{l,r}$. The bounds on the
 other terms can be obtained by using the conditions $\alpha\ge
 cM^{a_1+a_2+a_4}$ and $M^{a_1}\ge c$. 
Without using the
 condition $M^{a_1-a_2-a_4}\ge c$
we can derive from \eqref{eq_infrared_assumption_general} and
 \eqref{eq_free_part_IR_sensitive_framework} that 
\begin{align*}
M^{ra_3l}c_0^2\alpha^4\|F_4^l\|_{l,r}\le
 M^{ra_3(l+1)}c_0^2\alpha^4\|J_4^{l+1}\|_{l+1,r}+c\alpha^{-2}.
\end{align*}
Then, by combining with
 \eqref{eq_application_tree_formula_bound_sum_infrared} we obtain
\begin{align*}
&M^{ra_3l}c_0^2\alpha^4\Bigg(\|F_4^l\|_{l,r}+\sum_{n=2}^{\infty}\|T_4^{l,(n)}\|_{l,r}\Bigg)\le 
 M^{ra_3(l+1)}c_0^2\alpha^4\|J_4^{l+1}\|_{l+1,r}+c\alpha^{-2}.
\end{align*}
If we assume that \eqref{eq_determinant_bound_infrared_general},
 \eqref{eq_decay_bound_infrared_general},
 \eqref{eq_infrared_assumption_general} hold for $l'\in
 \{l,l+1,\cdots,-1\}$, we can repeatedly apply the above inequality to
 deduce that 
\begin{align*}
&M^{ra_3l}c_0^2\alpha^4\Bigg(\|F_4^l\|_{l,r}+\sum_{n=2}^{\infty}\|T_4^{l,(n)}\|_{l,r}\Bigg)\le 
 c_0^2\alpha^4\|J_4^0\|_{0,r}+c|l|\alpha^{-2}.
\end{align*}
In practice the initial polynomial $J^0(\psi)$ is an output of the
 Matsubara UV integration. We see from
 \eqref{eq_tree_2nd_4th_kernel_bound} and
 \eqref{eq_free_4th_general_bound} that the term 
$c_0^2\alpha^4\|J_4^0\|_{0,r}$ can be made less than $1/2$ by a
 minor assumption on $M$ and $\sup_{\rho\in \cB}|U_{\rho}|$ and thus
\begin{align*}
&M^{ra_3l}c_0^2\alpha^4\Bigg(\|F_4^l\|_{l,r}+\sum_{n=2}^{\infty}\|T_4^{l,(n)}\|_{l,r}\Bigg)\le \frac{1}{2}+c|l|\alpha^{-2}.
\end{align*}
This inequality implies that if the term $|l|\alpha^{-2}$ is assumed to be small,
 the effective interaction $J_4^l(\psi)$ remains small and consequently
 the conclusions of Proposition \ref{prop_infrared_integration_general}
 follow. As we will see in Section \ref{sec_IR_model}, the maximum value
 of $|l|$ in the IR integrations is proportional to $|\log\beta|$.  
Thus, we expect that in many-electron models where the quadratic kernels
 are qualitatively same as the free dispersion relation in a
 neighborhood of IR singularity and the marginal condition
 $a_1-a_2-a_4=0$ plus $a_1>0$ hold, an inductive IR integration
 procedure based on a variant of Proposition
 \ref{prop_infrared_integration_general} can be justified under the
 additional assumption that $\alpha\ge c |\log \beta|^{1/2}$. The
 condition \eqref{eq_parameters_assumption_UV_general} suggests that the
 inequality $\alpha\ge c |\log \beta|^{1/2}$ eventually restricts the
 allowed magnitude of the coupling to be less than some power of $|\log
 \beta|^{-1}$ after connecting the UV integration process to the IR
 integration process. Therefore, the resulting constructive theory in
 this case would be such that the domain of analyticity shrinks
 logarithmically with temperature.
\end{remark}

\subsection{The generalized infrared integration at different temperatures}\label{subsec_IR_general_difference}
Here we establish upper bounds on the differences between Grassmann
polynomials produced by the single-scale integration introduced in the
previous subsection at 2 different temperatures. To this end
 we need to assume that $l\in \Z_{<0}$, the condition
\eqref{eq_beta_h_assumption} holds and the covariances
$C_{o,l+1}(\beta_a):I_0(\beta_a)^2\to \C$ $(a=1,2)$ satisfy
\eqref{eq_determinant_bound_infrared_general},
\eqref{eq_decay_bound_infrared_general} and 
\begin{align}\label{eq_temperature_translation_covariance_general_once_more}
&C_{o,l+1}(\beta_a)(\bX)=(-1)^{N_{\beta_a}(\bX+x)}C_{o,l+1}(\beta_a)(R_{\beta_a}(\bX+x)),\\
&(\forall \bX\in I_0(\beta_a)^2,x\in
 (1/h)\Z, a\in\{1,2\}),\notag
\end{align}

 \begin{align}
&|\det(\<\bp_i,\bq_j\>_{\C^r}C_{o,l+1}(\beta_1)(R_{\beta_1}(X_i,Y_j)))_{1\le
  i,j\le n}\label{eq_determinant_bound_infrared_difference_general}\\
&\quad-
\det(\<\bp_i,\bq_j\>_{\C^r}C_{o,l+1}(\beta_2)(R_{\beta_2}(X_i,Y_j)))_{1\le
  i,j\le n}|\notag\\
&\le \beta_1^{-\frac{1}{2}}M^{-a_3(l+1)}(c_0 M^{a_1(l+1)})^n,\notag\\
&(\forall r,n\in\N,\bp_i,\bq_i\in\C^r\text{ with }
\|\bp_i\|_{\C^r},\|\bq_i\|_{\C^r}\le 1,\notag\\
&\quad X_i,Y_i\in \hat{I}_0\
 (i=1,2,\cdots,n)),\notag
\end{align}

\begin{align}
\left|\widetilde{C_{o,l+1}}(\beta_1)-\widetilde{C_{o,l+1}}(\beta_2)\right|_l\le
 \beta_1^{-\frac{1}{2}}{c_0}M^{-(a_2+a_3)(l+1)},\label{eq_decay_bound_infrared_difference_general}
\end{align}
where $\widetilde{C_{o,l+1}}(\beta_a):I(\beta_a)^2\to \C$ is the
anti-symmetric extension of \\
$C_{o,l+1}(\beta_a)$ $(a=1,2)$ defined as in
\eqref{eq_anti_symmetric_extension_covariance}. Let us note that the
parameters $(\fw(l),\fr)$, $(\fw(l+1),\fr)$ are also used in the
definition of $|\cdot-\cdot|_l$,  $|\cdot-\cdot|_{l+1}$ respectively. 

In addition, we assume that the input $J^{l+1}(\beta_a)(\psi)$ $(\in
\bigwedge \cV(\beta_a))$ $(a=1,2)$ satisfy
 $J^{l+1}_m(\beta_a)(\psi)=0$ 
if $m\notin 2\N$ or $m\in \{0,2\}$ and their kernels have the invariant property
\eqref{eq_temperature_translation}. Let $F^l(\beta_a)(\psi)$, $T^{l,(n)}(\beta_a)(\psi)$ $(n\in \N_{\ge 2})$,
$T^{l}(\beta_a)(\psi)$, $J^l(\beta_a)(\psi)$ $(\in
\bigwedge\cV(\beta_a))$ be defined by
\eqref{eq_inductive_polynomial_UV_general} with
$J^{l+1}(\beta_a)(\psi)$, $C_{o,l+1}(\beta_a)$ for $a=1,2$ respectively. 
We prove the following.

\begin{proposition}\label{prop_infrared_integration_difference_general}
Let $a_4\in\R_{\ge 0}$. Assume that $J^{l+1}(\beta_a)(\psi)$ $(a=1,2)$
 satisfy \eqref{eq_infrared_assumption_general} and 
\begin{align}
M^{-(a_1+a_2+a_4)(l+1)+a_3(l+1)}\sum_{m=4}^{N(\beta_2)}c_0^{\frac{m}{2}}M^{\frac{a_1(l+1)}{2}m}\alpha^m|J_m^{l+1}(\beta_1)-J_m^{l+1}(\beta_2)|_{l+1}\le
 \beta_1^{-\frac{1}{2}}.\label{eq_infrared_assumption_difference_general}
\end{align}
Then, there exists a constant $c\in\R_{>1}$ independent of any parameter such that if
 the condition \eqref{eq_parameters_assumption_IR_general} holds with
 $c$, the following inequalities hold.
\begin{align}
&\left|\frac{h}{N(\beta_1)}F_0^{l}(\beta_1)-\frac{h}{N(\beta_2)}F_0^{l}(\beta_2)\right|\label{eq_infrared_general_result_0th_difference}\\
&\quad +\sum_{n=2}^{\infty}\left|\frac{h}{N(\beta_1)}T_0^{l,(n)}(\beta_1)-\frac{h}{N(\beta_2)}T_0^{l,(n)}(\beta_2)\right|\notag\\
&\le \beta_1^{-\frac{1}{2}} M^{(a_1+a_2+a_4)l-a_3l} \alpha^{-3}.\notag\\
&M^{-(a_1+a_2+a_4)l+a_3l}\sum_{m=2}^{N(\beta_2)}c_0^{\frac{m}{2}}M^{\frac{a_1l}{2}m}\alpha^m\label{eq_infrared_general_result_difference}\\
&\quad\cdot
 \Bigg(|F_m^l(\beta_1)-F_m^l(\beta_2)|_{l}+\sum_{n=2}^{\infty}|T_m^{l,(n)}(\beta_1)- T_m^{l,(n)}(\beta_2)|_{l}\Bigg)\le \beta_1^{-\frac{1}{2}}.\notag
\end{align}
\end{proposition}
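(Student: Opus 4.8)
The plan is to mirror exactly the structure of the proof of Proposition \ref{prop_infrared_integration_general}, but now tracking the temperature differences instead of the polynomials themselves. The key inputs will be Lemma \ref{lem_general_free_bound_difference} (for the free-integration part $F_m^l$) and Lemma \ref{lem_tree_formula_general_bound_difference} (for the tree part $T_m^{l,(n)}$), together with the covariance bounds \eqref{eq_determinant_bound_infrared_general}, \eqref{eq_decay_bound_infrared_general}, \eqref{eq_determinant_bound_infrared_difference_general}, \eqref{eq_decay_bound_infrared_difference_general} and the hypotheses \eqref{eq_infrared_assumption_general}, \eqref{eq_infrared_assumption_difference_general}. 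Before starting the main estimates I would record, exactly as in the proof of Proposition \ref{prop_infrared_integration_general}, that the hypotheses imply the auxiliary bounds \eqref{eq_input_upper_bound_infrared_general}, \eqref{eq_input_upper_bound_infrared_general_weight} for both $\beta_1$ and $\beta_2$, and their difference-analogues: summing \eqref{eq_infrared_assumption_difference_general} against the geometric weights gives
\begin{align*}
&\sum_{m=4}^{N(\beta_2)}2^{3m}c_0^{\frac{m}{2}}M^{\frac{a_1(l+1)}{2}m}|J_m^{l+1}(\beta_1)-J_m^{l+1}(\beta_2)|_l\le c\beta_1^{-\frac12}M^{(a_1+a_2+a_4)(l+1)-a_3(l+1)}\alpha^{-4},\\
&\sum_{m=4}^{N(\beta_2)}2^{2m}c_0^{\frac{m}{2}}M^{\frac{a_1l}{2}m}\alpha^m|J_m^{l+1}(\beta_1)-J_m^{l+1}(\beta_2)|_l\le c\beta_1^{-\frac12}M^{(a_1+a_2+a_4)(l+1)-a_3(l+1)-2a_1},
\end{align*}
using $\fw(l)\le\fw(l+1)$ so that $|\cdot-\cdot|_l\le|\cdot-\cdot|_{l+1}$. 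These, plus the already-available bounds \eqref{eq_infrared_general_result_0th}, \eqref{eq_infrared_general_result} applied at $\beta_1$ and $\beta_2$, are the raw material.

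For \eqref{eq_infrared_general_result_difference}, I would first handle the free part: feed the covariance bounds into Lemma \ref{lem_general_free_bound_difference}\eqref{item_general_free_bound_difference_higher}. Note that, unlike in Proposition \ref{prop_infrared_integration_general}, here the error terms also carry a factor $D(\beta_1,\beta_2)=\beta_1^{-\frac12}M^{-a_3(l+1)}$ multiplying $\sum_a\|J_n(\beta_a)\|_{l,0}$ and a factor $2\pi/\beta_1$ multiplying $\sum_a\|J_n(\beta_a)\|_{l,1}$; since $\beta_1^{-1}\le\beta_1^{-\frac12}$ (as $\beta_1\in\N$), all three contributions are controlled by $\beta_1^{-\frac12}$ times the geometric sums above, so the $m\ge 6$ piece of $\|F_m^l(\beta_1)-F_m^l(\beta_2)\|$ summed with weights is $\le c\beta_1^{-\frac12}M^{-1}$, exactly as \eqref{eq_application_free_bound_sum_difference} was obtained. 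The $m\in\{4\}$ piece is the delicate one (here the condition $M^{a_1-a_2-a_4}\ge c$ is again essential): I would iterate the flow-equation style estimate for $\hat F_4^l$, telescoping in $l$ down to $N_+$ where the difference vanishes, to get $\|F_4^l(\beta_1)-F_4^l(\beta_2)\|_l\le c\beta_1^{-\frac12}c_0^{-2}M^{-l-1}\alpha^{-4}$, and then $m=2$ via the flow equation \eqref{eq_flow_equation_free_2nd}, picking up a tadpole-difference term bounded by $U_{\max}\sup_\rho|C_{o,l+1}(\beta_1)(\cdots)-C_{o,l+1}(\beta_2)(\cdots)|$ (which we have not assumed separately here — so in the IR setting the tadpole is absorbed into the quadratic kernel and no such bound is needed, since $J^{l+1}_2=0$ by hypothesis; this actually simplifies matters relative to the UV case).

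For the tree part I would plug the covariance bounds into Lemma \ref{lem_tree_formula_general_bound_difference}\eqref{item_tree_formula_bound_higher_difference}: the structure is a product over $n-1$ tree-line factors, each of size $c_0M^{-a_2(l+1)}$ (or $c_0M^{-(a_2+a_3)(l+1)}$ / $\beta_1^{-1}c_0M^{-(a_2+a_3)(l+1)}$ on the one distinguished line carrying the derivative or the difference), times $n$ vertex factors, one of which carries a $\beta_1^{-\frac12}$ difference factor. Bookkeeping the powers of $M$ exactly as in the derivation of \eqref{eq_application_tree_formula_bound_sum_infrared} — with one extra $\beta_1^{-\frac12}$ and one extra $M^{-a_3(l+1)}$ or nothing — yields $M^{-(a_1+a_2+a_4)l+a_3l}\sum_m c_0^{m/2}M^{a_1lm/2}\alpha^m\sum_n|T_m^{l,(n)}(\beta_1)-T_m^{l,(n)}(\beta_2)|_l\le c\beta_1^{-\frac12}M^{-2a_1+a_2+2a_4}\alpha^{-2}$. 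Combining with the free-part estimate gives a bound $\le c\beta_1^{-\frac12}(M^{-a_1+a_2+a_4}+M^{-2a_1+a_2+a_4}\alpha^{-2})$, which is $\le\beta_1^{-\frac12}$ under \eqref{eq_parameters_assumption_IR_general}. Finally \eqref{eq_infrared_general_result_0th_difference} follows by the same argument applied to Lemma \ref{lem_tree_formula_general_bound_difference}\eqref{item_tree_formula_bound_0th_difference} and Lemma \ref{lem_general_free_bound_difference}\eqref{item_general_free_bound_difference_0th}, with the $h/N(\beta_a)$ normalization handled exactly as in the proof of \eqref{eq_UV_general_result_0th_difference}: here one must also invoke Lemma \ref{lem_temperature_translation_UV}'s analogue — i.e. that $J^{l+1}_m(\beta_a)$ has the translation-invariant kernels \eqref{eq_temperature_translation}, which is assumed — so that the decomposition of $T_0^{l,(n)}(\beta_a)$ into a "far from $0,\beta_a$" part (bounded via Lemma \ref{lem_tree_line_kernels_large}, gaining a $2\pi/\beta_1$) and a "near $0$" part (where Lemma \ref{lem_tree_sum_kernels_difference} applies) goes through. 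The main obstacle is purely organizational: correctly propagating the $\beta_1^{-\frac12}$ and the $M^{\pm a_3(l+1)}$ weights through the many-layered product estimates and the telescoping in $l$ for the marginal-looking $m=4$ term, making sure no spurious power of $M^l$ or $\beta_1$ survives; there is no new conceptual ingredient beyond what the proof of Proposition \ref{prop_infrared_integration_general} and Proposition \ref{prop_UV_integration_difference_general} already supply.
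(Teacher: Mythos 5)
Most of your plan coincides with the paper's proof: the two auxiliary difference bounds you derive from \eqref{eq_infrared_assumption_difference_general} are exactly the paper's \eqref{eq_input_upper_bound_infrared_difference_general} and \eqref{eq_input_upper_bound_infrared_difference_general_weight}, and the treatment of the tree part via Lemma \ref{lem_tree_formula_general_bound_difference}, of the $m\ge 6$ free part via Lemma \ref{lem_general_free_bound_difference} \eqref{item_general_free_bound_difference_higher}, and of the constant part via Lemma \ref{lem_tree_formula_general_bound_difference} \eqref{item_tree_formula_bound_0th_difference} and Lemma \ref{lem_general_free_bound_difference} \eqref{item_general_free_bound_difference_0th} is precisely what the paper does.

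There is, however, one concrete misstep: your proposed handling of the degree-$4$ (and degree-$2$) free terms by ``iterating the flow-equation estimate for $\hat F_4^l$, telescoping in $l$ down to $N_+$ where the difference vanishes'' imports the induction-over-scales mechanism of Proposition \ref{prop_UV_integration_difference_general}, which is not available here. This proposition is a single-scale statement: its hypotheses concern only $J^{l+1}(\beta_a)$ and $C_{o,l+1}(\beta_a)$, there is no chain of scales terminating at $N_+$ (that is a UV-regime object, while $l\in\Z_{<0}$), and in the IR regime the covariances at other scales are not even given a priori since they depend on the quadratic kernels produced along the way (cf. the discussion opening Subsection \ref{subsec_IR_general}). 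The correct route — and the paper's — is a single application of Lemma \ref{lem_general_free_bound_difference} \eqref{item_general_free_bound_difference_higher} in which the diagonal term $n=m=4$, namely $|J_4^{l+1}(\beta_1)-J_4^{l+1}(\beta_2)|_l$, is controlled directly by hypothesis \eqref{eq_infrared_assumption_difference_general}; the required smallness then comes not from telescoping but from the scale mismatch between the weight $M^{-(a_1+a_2+a_4)l+a_3l}$ demanded at scale $l$ and the weight $M^{(a_1+a_2+a_4)(l+1)-a_3(l+1)}$ supplied at scale $l+1$, which after the $\alpha$-bookkeeping leaves a factor $M^{-a_1+a_2+a_4-a_3}$ that is small precisely because $M^{a_1-a_2-a_4}\ge c$. (Your target bound $c\beta_1^{-1/2}c_0^{-2}M^{-l-1}\alpha^{-4}$ is also the UV form rather than the $M^{(-a_1+a_2+a_4-a_3)l}$ form needed here.) Once this is corrected — which in fact simplifies your argument, consistent with your own observation that $J_2^{l+1}=0$ removes the tadpole issue — the remainder of the proposal goes through as the paper's proof does.
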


\begin{proof}
Note that the inequalities
 \eqref{eq_infrared_assumption_difference_general}, $\fw(l)\le
 \fw(l+1)$, $\alpha\ge c$,
 $M^{a_1}\ge c$ imply that
\begin{align}
&\sum_{m=4}^{N(\beta_2)}2^{3m}c_0^{\frac{m}{2}}M^{\frac{a_1(l+1)}{2}m}|J_m^{l+1}(\beta_1)-J_m^{l+1}(\beta_2)|_{l}\label{eq_input_upper_bound_infrared_difference_general}\\
&\le c \beta_1^{-\frac{1}{2}}M^{(a_1+a_2+a_4)(l+1)-a_3(l+1)}\alpha^{-4},\notag
\\
&\sum_{m=4}^{N(\beta_2)}2^{2m} c_0^{\frac{m}{2}}M^{\frac{a_1
 l}{2}m}\alpha^m|J_m^{l+1}(\beta_1)-J_m^{l+1}(\beta_2)|_{l}\label{eq_input_upper_bound_infrared_difference_general_weight}\\
&\le c
 \beta_1^{-\frac{1}{2}}M^{(a_1+a_2+a_4)(l+1)-a_3(l+1)-2a_1}.\notag
\end{align}

\eqref{eq_infrared_general_result_0th_difference}: First we prove
 \eqref{eq_infrared_general_result_0th_difference}. By inserting
 \eqref{eq_determinant_bound_infrared_general}, 
\eqref{eq_input_upper_bound_infrared_general},
 \eqref{eq_determinant_bound_infrared_difference_general},
 \eqref{eq_input_upper_bound_infrared_difference_general} into the
 inequality in Lemma \ref{lem_general_free_bound_difference}
 \eqref{item_general_free_bound_difference_0th} we observe that
\begin{align}
&\left|\frac{h}{N(\beta_1)}F_0^{l}(\beta_1)-\frac{h}{N(\beta_2)}F_0^{l}(\beta_2)\right|\label{eq_free_0th_infrared_difference_general}\\
&\le c\sum_{m=4}^{N(\beta_2)}2^mc_0^{\frac{m}{2}}M^{\frac{a_1
 (l+1)}{2}m}\Bigg(|J_m^{l+1}(\beta_1)- J_m^{l+1}(\beta_2)|_l\notag\\
&\qquad +
 \beta_1^{-\frac{1}{2}}M^{-a_3(l+1)}\sum_{\delta=1}^2\|J_m^{l+1}(\beta_{\delta})\|_{l,0}+ \beta_1^{-1}\sum_{\delta=1}^2\|J_m^{l+1}(\beta_{\delta})\|_{l,1}\Bigg)\notag\\
&\le c
\beta_1^{-\frac{1}{2}}M^{(a_1+a_2+a_4)(l+1)-a_3(l+1)}\alpha^{-4}.\notag\end{align}

By using \eqref{eq_determinant_bound_infrared_general}, 
\eqref{eq_decay_bound_infrared_general},
\eqref{eq_input_upper_bound_infrared_general},
 \eqref{eq_determinant_bound_infrared_difference_general},
\eqref{eq_decay_bound_infrared_difference_general},
\eqref{eq_input_upper_bound_infrared_difference_general}
 we can deduce from Lemma
 \ref{lem_tree_formula_general_bound_difference}
 \eqref{item_tree_formula_bound_0th_difference} that
\begin{align*}
&\left|\frac{h}{N(\beta_1)}T_0^{l,(n)}(\beta_1)-\frac{h}{N(\beta_2)}T_0^{l,(n)}(\beta_2)\right|\\
&\le c n(c_0
 M^{a_1(l+1)})^{-n+1}(2c_0M^{-a_2(l+1)})^{n-2}\\
&\quad\cdot\Bigg(\sum_{m=4}^{N(\beta_2)}2^{3m}c_0^{\frac{m}{2}}M^{\frac{a_1(l+1)}{2}m}\sum_{\delta =1}^2\|J_m^{l+1}(\beta_{\delta})\|_{l,0}
\Bigg)^{n-1}
 \sum_{m=4}^{N(\beta_2)}2^{3m}(c_0M^{a_1(l+1)})^{\frac{m}{2}}\\
&\quad\cdot\Bigg(
\beta_1^{-1}c_0M^{-a_2(l+1)}\sum_{\delta=1}^2\|J_m^{l+1}(\beta_\delta)\|_{l,1}\\&\qquad+
 \beta_1^{-1}c_0M^{-(a_2+a_3)(l+1)}\sum_{\delta=1}^2\|J_m^{l+1}(\beta_\delta)\|_{l,0}\\
&\qquad +c_0M^{-a_2(l+1)}|J_m^{l+1}(\beta_1)-J_m^{l+1}(\beta_2)|_l\\
&\qquad + \beta_1^{-\frac{1}{2}}c_0M^{-(a_2+a_3)(l+1)}\sum_{\delta=1}^2\|J_m^{l+1}(\beta_\delta)\|_{l,0}\Bigg)\\
&\le c^n( M^{-(a_1+a_2)(l+1)})^{n-1}\Bigg(\sum_{m=4}^{N(\beta_2)}2^{3m}c_0^{\frac{m}{2}}M^{\frac{a_1(l+1)}{2}m}\sum_{\delta =1}^2\|J_m^{l+1}(\beta_{\delta})\|_{l,0}\Bigg)^{n-1}\\
&\quad \cdot
 \sum_{m=4}^{N(\beta_2)}2^{3m} c_0^{\frac{m}{2}}M^{\frac{a_1(l+1)}{2}m}\Bigg(
|J_m^{l+1}(\beta_1)-J_m^{l+1}(\beta_2)|_l\\
&\qquad+\beta_1^{-\frac{1}{2}}\sum_{r=0}^1
M^{a_3(r-1)(l+1)}\sum_{\delta=1}^2\|J_m^{l+1}(\beta_\delta)\|_{l,r}\Bigg)\\
&\le c^n
 (M^{-(a_1+a_2)(l+1)})^{n-1}(M^{(a_1+a_2+a_4)(l+1)}\alpha^{-4})^{n-1}\\
&\quad\cdot\beta_1^{-\frac{1}{2}} M^{(a_1+a_2+a_4)(l+1)-a_3(l+1)}\alpha^{-4}\\
&\le \beta_1^{-\frac{1}{2}} M^{(a_1+a_2-a_3)(l+1)}(c
 M^{a_4(l+1)}\alpha^{-4})^n.
\end{align*}
Thus, by assuming that $\alpha \ge c$,
\begin{align}
\sum_{n=2}^{\infty}
\left|\frac{h}{N(\beta_1)}T_0^{l,(n)}(\beta_1)-\frac{h}{N(\beta_2)}T_0^{l,(n)}(\beta_2)\right|\le c
 \beta_1^{-\frac{1}{2}}M^{(a_1+a_2+2a_4-a_3)(l+1)}\alpha^{-8}.\label{eq_tree_0th_infrared_difference_general}
\end{align}
On the assumption $\alpha\ge cM^{a_1+a_2+a_4}$ the inequalities
 \eqref{eq_free_0th_infrared_difference_general},
 \eqref{eq_tree_0th_infrared_difference_general} imply the inequality \eqref{eq_infrared_general_result_0th_difference}.

\eqref{eq_infrared_general_result_difference}: Let us prove
 \eqref{eq_infrared_general_result_difference}. By substituting
 \eqref{eq_determinant_bound_infrared_general},
 \eqref{eq_determinant_bound_infrared_difference_general} into Lemma
 \ref{lem_general_free_bound_difference}
 \eqref{item_general_free_bound_difference_higher} we obtain that
\begin{align*}
&|F_m^{l}(\beta_1)-F_m^{l}(\beta_2)|_l\\
&\le c 1_{m=2}\sum_{n=4}^{N(\beta_2)}2^{2n}(c_0M^{a_1 (l+1)})^{\frac{n-2}{2}}\Bigg(|J_n^{l+1}(\beta_1)- J_n^{l+1}(\beta_2)|_l\\
&\qquad +
 \beta_1^{-\frac{1}{2}}\sum_{r=0}^1M^{a_3(r-1)(l+1)}\sum_{\delta=1}^2\|J_n^{l+1}(\beta_{\delta})\|_{l,r}\Bigg)\\
&\quad + c 1_{m\ge 4}\sum_{n=m}^{N(\beta_2)}2^{2n}(c_0M^{a_1 (l+1)})^{\frac{n-m}{2}}\Bigg(|J_n^{l+1}(\beta_1)- J_n^{l+1}(\beta_2)|_l\\
&\qquad +
 \beta_1^{-\frac{1}{2}}\sum_{r=0}^1M^{a_3(r-1)(l+1)}\sum_{\delta=1}^2\|J_n^{l+1}(\beta_{\delta})\|_{l,r}\Bigg).
\end{align*}
Moreover, by \eqref{eq_input_upper_bound_infrared_general},
\eqref{eq_input_upper_bound_infrared_general_weight},
\eqref{eq_input_upper_bound_infrared_difference_general},
\eqref{eq_input_upper_bound_infrared_difference_general_weight} and the
 condition $\alpha \ge c M^{a_1/2}$, 
\begin{align}
&M^{-(a_1+a_2+a_4)l+a_3l}\sum_{m=2}^{N(\beta_2)}c_0^{\frac{m}{2}}M^{\frac{a_1l}{2}m}\alpha^m|F_m^{l}(\beta_1)-F_m^{l}(\beta_2)|_l\label{eq_application_free_bound_sum_infrared_difference}\\
&\le c M^{-(a_1+a_2+a_4)l+a_3l}c_0 M^{a_1 l} \alpha^2
 \sum_{n=4}^{N(\beta_2)}2^{2n}(c_0M^{a_1 (l+1)})^{\frac{n-2}{2}}\notag\\
&\quad\cdot \Bigg(|J_n^{l+1}(\beta_1)- J_n^{l+1}(\beta_2)|_l+
 \beta_1^{-\frac{1}{2}}\sum_{r=0}^1M^{a_3(r-1)(l+1)}\sum_{\delta=1}^2\|J_n^{l+1}(\beta_{\delta})\|_{l,r}\Bigg)\notag\\
&\quad+ c
 M^{-(a_1+a_2+a_4)l+a_3l}\sum_{n=4}^{N(\beta_2)}
\sum_{m=4}^{n}c_0^{\frac{m}{2}}M^{\frac{a_1l}{2}m}\alpha^m
2^{2n}(c_0M^{a_1
 (l+1)})^{\frac{n-m}{2}}\notag\\
&\qquad\cdot\Bigg(|J_n^{l+1}(\beta_1)- J_n^{l+1}(\beta_2)|_l+
 \beta_1^{-\frac{1}{2}}\sum_{r=0}^1M^{a_3(r-1)(l+1)}\sum_{\delta=1}^2\|J_n^{l+1}(\beta_{\delta})\|_{l,r}\Bigg)\notag\\
&\le c M^{-a_1}\alpha^2 M^{-(a_1+a_2+a_4)l+a_3l}
 \sum_{n=4}^{N(\beta_2)}2^{2n}c_0^{\frac{n}{2}}M^{\frac{a_1 (l+1)}{2}n}\notag\\
&\quad\cdot \Bigg(|J_n^{l+1}(\beta_1)- J_n^{l+1}(\beta_2)|_l+
 \beta_1^{-\frac{1}{2}}\sum_{r=0}^1M^{a_3(r-1)(l+1)}\sum_{\delta=1}^2\|J_n^{l+1}(\beta_{\delta})\|_{l,r}\Bigg)\notag\\
&\quad+ c
 M^{-(a_1+a_2+a_4)l+a_3l}\sum_{n=4}^{N(\beta_2)}2^{2n}c_0^{\frac{n}{2}}M^{\frac{a_1 l}{2}n}\alpha^n\notag\\
&\qquad\cdot\Bigg(|J_n^{l+1}(\beta_1)- J_n^{l+1}(\beta_2)|_l+
 \beta_1^{-\frac{1}{2}}\sum_{r=0}^1M^{a_3(r-1)(l+1)}\sum_{\delta=1}^2\|J_n^{l+1}(\beta_{\delta})\|_{l,r}\Bigg)\notag\\
&\le c\beta_1^{-\frac{1}{2}}M^{-a_1+a_2-a_3+a_4}.\notag
\end{align}

On the other hand, by substituting
 \eqref{eq_determinant_bound_infrared_general}, 
\eqref{eq_decay_bound_infrared_general},
\eqref{eq_determinant_bound_infrared_difference_general},
\eqref{eq_decay_bound_infrared_difference_general}
into the inequality in Lemma
 \ref{lem_tree_formula_general_bound_difference} 
 \eqref{item_tree_formula_bound_higher_difference} we see that
\begin{align*}
&|T_m^{l,(n)}(\beta_1)-T_m^{l,(n)}(\beta_2)|_l\\
&\le c n 2^{-2m}(c_0
 M^{a_1(l+1)})^{-\frac{m}{2}-n+1}(2c_0
 M^{-a_2(l+1)})^{n-2}\\
&\quad \cdot \prod_{j=2}^n\Bigg(\sum_{m_j=4}^{N(\beta_2)}2^{4m_j}(c_0
 M^{a_1(l+1)})^{\frac{m_j}{2}}\sum_{\delta=1}^2\|J_{m_j}^{l+1}(\beta_{\delta})\|_{l,0}\Bigg)\\
&\quad\cdot
 \sum_{m_1=4}^{N(\beta_2)}2^{4m_1}(c_0M^{a_1(l+1)})^{\frac{m_1}{2}}\\
&\quad\cdot\Bigg( 
\beta_1^{-1}c_0M^{-a_2(l+1)}\sum_{\delta=1}^2\|J_{m_1}^{l+1}(\beta_{\delta})\|_{l,1}\\
&\qquad +
 \beta_1^{-1}c_0M^{-(a_2+a_3)(l+1)}\sum_{\delta=1}^2\|J_{m_1}^{l+1}(\beta_\delta)\|_{l,0}\\
&\qquad + c_0M^{-a_2(l+1)}|J_{m_1}^{l+1}(\beta_1)-J_{m_1}^{l+1}(\beta_2)|_l\\
&\qquad +
 \beta_1^{-\frac{1}{2}}c_0M^{-(a_2+a_3)(l+1)}\sum_{\delta=1}^2\|J_{m_1}^{l+1}(\beta_\delta)\|_{l,0}\Bigg)1_{\sum_{j=1}^nm_j-2n+2\ge m}\\
&\le c^n 2^{-2m}c_0^{-\frac{m}{2}} M^{-\frac{a_1(l+1)}{2}m}(M^{-(a_1+a_2)(l+1)})^{n-1}\\
&\quad \cdot
 \prod_{j=2}^n\Bigg(\sum_{m_j=4}^{N(\beta_2)}2^{4m_j}c_0^{\frac{m_j}{2}}
 M^{\frac{a_1(l+1)}{2}m_j}\sum_{\delta=1}^2\|J_{m_j}^{l+1}(\beta_{\delta})\|_{l,0}\Bigg)\\
&\quad\cdot \sum_{m_1=4}^{N(\beta_2)}2^{4m_1}c_0^{\frac{m_1}{2}} M^{\frac{a_1(l+1)}{2}m_1}\\
&\quad\cdot\Bigg(|J_{m_1}^{l+1}(\beta_1)-J_{m_1}^{l+1}(\beta_2)|_l +
\beta_1^{-\frac{1}{2}}\sum_{r=0}^1M^{a_3(r-1)(l+1)}\sum_{\delta=1}^2\|J_{m_1}^{l+1}(\beta_{\delta})\|_{l,r}\Bigg)\\
&\quad\cdot 1_{\sum_{j=1}^nm_j-2n+2\ge m}
\end{align*}
Moreover, by \eqref{eq_input_upper_bound_infrared_general_weight},
 \eqref{eq_input_upper_bound_infrared_difference_general_weight} and the
 condition $\alpha \ge c M^{a_1/2}$, 
\begin{align*}
&M^{-(a_1+a_2+a_4)l+a_3l}\sum_{m=2}^{N(\beta_2)}c_0^{\frac{m}{2}}
 M^{\frac{a_1 l}{2}m}\alpha^m|T_m^{l,(n)}(\beta_1)-T_m^{l,(n)}(\beta_2)|_l\\
&\le c^n M^{-(a_1+a_2+a_4)l+a_3l}(M^{-(a_1+a_2)(l+1)})^{n-1}\\
&\quad \cdot
 \prod_{j=2}^n\Bigg(\sum_{m_j=4}^{N(\beta_2)}2^{4m_j}c_0^{\frac{m_j}{2}} M^{\frac{a_1(l+1)}{2}m_j}\sum_{\delta=1}^2\|J_{m_j}^{l+1}(\beta_{\delta})\|_{l,0}\Bigg)\\&\quad\cdot \sum_{m_1=4}^{N(\beta_2)}2^{4m_1}c_0^{\frac{m_1}{2}} M^{\frac{a_1(l+1)}{2}m_1}\\
&\quad\cdot\Bigg(|J_{m_1}^{l+1}(\beta_1)-J_{m_1}^{l+1}(\beta_2)|_l +
{\beta_1}^{-\frac{1}{2}}\sum_{r=0}^1M^{a_3(r-1)(l+1)}\sum_{\delta=1}^2\|J_{m_1}^{l+1}(\beta_{\delta})\|_{l,r}\Bigg)\\
&\quad\cdot
 2^{-2(\sum_{j=1}^nm_j-2n+2)}M^{-\frac{a_1}{2}(\sum_{j=1}^nm_j-2n+2)}\alpha^{\sum_{j=1}^nm_j-2n+2}\\
&\le c^n M^{-(a_1+a_2+a_4)l+a_3l}(M^{-a_1l-a_2(l+1)}\alpha^{-2})^{n-1}\\
&\quad \cdot
 \prod_{j=2}^n\Bigg(\sum_{m_j=4}^{N(\beta_2)}2^{2m_j}c_0^{\frac{m_j}{2}} M^{\frac{a_1l}{2}m_j}\alpha^{m_j}\sum_{\delta=1}^2\|J_{m_j}^{l+1}(\beta_{\delta})\|_{l,0}\Bigg)\\
&\quad\cdot \sum_{m_1=4}^{N(\beta_2)}2^{2m_1}c_0^{\frac{m_1}{2}} M^{\frac{a_1l}{2}m_1}\alpha^{m_1}\\
&\quad\cdot\Bigg(|J_{m_1}^{l+1}(\beta_1)-J_{m_1}^{l+1}(\beta_2)|_l +
\beta_1^{-\frac{1}{2}}\sum_{r=0}^1M^{a_3(r-1)(l+1)}\sum_{\delta=1}^2\|J_{m_1}^{l+1}(\beta_{\delta})\|_{l,r}\Bigg)\\
&\le c^n M^{-(a_1+a_2+a_4)l+a_3l}(M^{-a_1 l -
 a_2(l+1)}\alpha^{-2})^{n-1}(M^{(a_1+a_2+a_4)(l+1)-2a_1})^{n-1}\\
&\quad\cdot \beta_{1}^{-\frac{1}{2}}M^{(a_1+a_2+a_4)(l+1)-a_3(l+1)-2a_1}\\
&\le \beta_1^{-\frac{1}{2}}M^{a_2-a_3-a_4l}\alpha^{2}(c M^{-a_1+a_4(l+1)}\alpha^{-2})^n.\end{align*}
Then, by the assumption $\alpha \ge c$ we have 
\begin{align}
&M^{-(a_1+a_2+a_4)l+a_3l}\sum_{m=2}^{N(\beta_2)}c_0^{\frac{m}{2}}
 M^{\frac{a_1l}{2}m}\alpha^m\sum_{n=2}^{\infty}|T_m^{l,(n)}(\beta_1)-T_m^{l,(n)}(\beta_2)|_l\label{eq_application_tree_formula_bound_sum_infrared_difference}
\\
&\le c
 \beta_1^{-\frac{1}{2}}M^{-2a_1+a_2-a_3+2a_4+a_4l}\alpha^{-2}\le c
 \beta_1^{-\frac{1}{2}}M^{-a_1+a_2+a_4}.\notag
\end{align}

By coupling
 \eqref{eq_application_tree_formula_bound_sum_infrared_difference} with
 \eqref{eq_application_free_bound_sum_infrared_difference} and using the
 condition $M^{a_1-a_2-a_4}\ge c$ we reach the
 inequality \eqref{eq_infrared_general_result_difference}.
\end{proof}

\section{The Matsubara ultra-violet integration}\label{sec_UV}

The results summarized in Subsection \ref{subsec_UV_general} and
Subsection \ref{subsec_UV_general_difference} have practical
applications in the multi-scale integration over the Matsubara
frequency, which we are going to present in this section. The purpose
of the Matsubara UV integration in this paper is to find analytic
continuations of the Grassmann polynomials $R^+(\psi)$, $R^-(\psi)$,
which were defined in Lemma \ref{lem_logarithm_appl_bound}, into a $(\beta,L,h)$-independent domain of the multi-variables
$(U_1,U_2,\cdots,U_b)$ around the origin. This will enable us to consider
$(R^+(\psi)+R^-(\psi))/2$ as appropriate initial data for
the forthcoming infrared integration. What we need to achieve our
purpose is to show that the covariances used in the definition of
 $R^+(\psi)$, $R^-(\psi)$ can be decomposed into a sum of
covariances satisfying the conditions required in
Proposition \ref{prop_UV_integration_general} and Proposition
\ref{prop_UV_integration_difference_general}. Then we can prove the
existence of desired analytic continuations of $R^+(\psi)$, $R^-(\psi)$ by applying these propositions. 
The construction of this
section is based on the assumption that the matrix-valued function
$E:\R^d\to \Mat(b,\C)$ satisfies $E\in C^{\infty}(\R^d;\Mat(b,\C))$, the properties
\eqref{eq_dispersion_hermite}, \eqref{eq_dispersion_periodic} and 
\begin{align}
&\sup_{j\in\{1,2,\cdots,d\}}\sup_{(p_1,p_2,\cdots,p_d)\in\R^d}\left\|\left(\frac{\partial}{\partial
 p_j }\right)^n E\Bigg(\sum_{r=1}^dp_r\bv_r\Bigg)\right\|_{b\times b}\le E_1\cdot E_2^nn!,\label{eq_real_analytic_dispersion_relation}\\
&(\forall n\in\N\cup \{0\})\notag
\end{align}
with constants $E_1$, $E_2\in\R_{\ge 0}$. 

In order to shorten formulas, from this section we let the symbol
$c(\alpha_1,\alpha_2,\cdots,\alpha_n)$ denote a real positive constant
depending only on parameters $\alpha_1,\alpha_2,\cdots,\alpha_n$. For
example, $c(\beta,L)$ denotes a positive constant depending only on
$\beta$, $L$. 

\subsection{The covariance matrices with the Matsubara ultra-violet cut-off}
\label{subsec_covariance_UV}
First we have to specifically define a cut-off function on the
Matsubara frequency. Motivated by \cite[\mbox{Appendix A}]{P}, we
construct the cut-off function from a suitable
Gevrey-class function.
\begin{lemma}\label{lem_gevrey_class} There exists a function $\phi\in
 C^{\infty}(\R)$ satisfying the following properties.
\begin{align*}
&\phi(x)=1,\ (\forall x\in (-\infty,\pi^2/6]),\\
&\phi(x)=0,\ (\forall x\in [\pi^2/3,\infty)),\\
&\frac{d}{dx}\phi(x)\le 0,\ (\forall x\in \R),
\end{align*}
and 
\begin{align}
\left|\left(\frac{d}{dx}\right)^k\phi(x)\right|\le 2^k (k!)^2,\ (\forall x\in\R,k\in\N\cup
 \{0\}).
\label{eq_derivative_gevrey_cut_off}
\end{align}
\end{lemma}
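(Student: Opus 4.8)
The plan is to realize $\phi$ as the (reflected) primitive of an infinite convolution of rescaled characteristic functions, which is the classical device for producing cut-offs of Gevrey class $2$ (cf.\ \cite[Appendix A]{P}). Set $a_k:=\frac{1}{2k^2}$ for $k\in\N$. Then $\sum_{k=1}^\infty a_k=\frac12\cdot\frac{\pi^2}{6}=\frac{\pi^2}{12}$, and, crucially, $\prod_{k=1}^n a_k^{-1}=\prod_{k=1}^n 2k^2=2^n(n!)^2$; this last identity is exactly what will generate the bound \eqref{eq_derivative_gevrey_cut_off}. Let $\rho_k:=\frac{1}{2a_k}\mathbf{1}_{[-a_k,a_k]}$, a probability density supported in $[-a_k,a_k]$, and define $g$ as the limit of the partial convolutions $g_n:=\rho_1*\rho_2*\cdots*\rho_n$.

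\textbf{Construction and regularity of $g$.} First I would check that $g_n$ converges, that the limit $g$ is a nonnegative function with $\int_{\R}g=1$ and $\supp g\subseteq\bigl[-\frac{\pi^2}{12},\frac{\pi^2}{12}\bigr]$, and that $g\in C^\infty(\R)$. Nonnegativity, the normalization, and the support statement are immediate from the corresponding properties of the $g_n$ together with $\sigma_n:=\sum_{k=1}^n a_k\le\frac{\pi^2}{12}$. For smoothness, note that for each $m\in\N$ one may factor $g=(\rho_1*\cdots*\rho_m)*h_m$ with $h_m:=\rho_{m+1}*\rho_{m+2}*\cdots$ a probability density; since $\rho_1*\cdots*\rho_m$ is a compactly supported function of class $C^{m-2}$, $g$ is of class $C^{m-2}$ for every $m$, hence $C^\infty$. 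Convergence of the $g_n$ and of their derivatives follows from the standard estimate of $\|g_n-g_{n'}\|_{\infty}$ in terms of the $L^1$-moduli of continuity of the tails $\rho_{n+1}*\cdots*\rho_{n'}$; this is routine and I would only sketch it.

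\textbf{Definition of $\phi$ and verification.} I would then set
\begin{equation*}
\phi(x):=1-\int_{-\infty}^{x-\pi^2/4}g(t)\,dt,\qquad x\in\R.
\end{equation*}
Because $\supp g\subseteq\bigl[-\frac{\pi^2}{12},\frac{\pi^2}{12}\bigr]$ and $\frac{\pi^2}{4}-\frac{\pi^2}{12}=\frac{\pi^2}{6}$, $\frac{\pi^2}{4}+\frac{\pi^2}{12}=\frac{\pi^2}{3}$, the integral vanishes for $x\le\frac{\pi^2}{6}$ and equals $1$ for $x\ge\frac{\pi^2}{3}$, so $\phi\equiv1$ on $(-\infty,\pi^2/6]$ and $\phi\equiv0$ on $[\pi^2/3,\infty)$; also $\phi\in C^\infty(\R)$ and $\phi'(x)=-g(x-\pi^2/4)\le0$ since $g\ge0$. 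For the Gevrey bound, for $k\ge1$ we have $\phi^{(k)}(x)=-g^{(k-1)}(x-\pi^2/4)$, so it suffices to bound $\|g^{(k-1)}\|_{\infty}$. Writing $g^{(k-1)}=\rho_1'*\cdots*\rho_{k-1}'*\rho_k*h_k$, where $\rho_j'=\frac{1}{2a_j}(\delta_{-a_j}-\delta_{a_j})$ is a signed measure of total variation $a_j^{-1}$ and $h_k=\rho_{k+1}*\rho_{k+2}*\cdots$ is a probability density, the inequalities $\|\mu*f\|_{\infty}\le\|\mu\|_{\mathrm{TV}}\|f\|_{\infty}$ and $\|f*g\|_{\infty}\le\|f\|_{\infty}\|g\|_{L^1(\R)}$, together with $\|\rho_k\|_{\infty}=\frac{1}{2a_k}$ and $\|h_k\|_{L^1(\R)}=1$, give
\begin{equation*}
\|g^{(k-1)}\|_{\infty}\le\Bigl(\prod_{j=1}^{k-1}\tfrac{1}{a_j}\Bigr)\cdot\tfrac{1}{2a_k}=\tfrac12\prod_{j=1}^{k}\tfrac{1}{a_j}=\tfrac12\cdot 2^{k}(k!)^2=2^{k-1}(k!)^2\le 2^{k}(k!)^2 .
\end{equation*}
For $k=0$ the assertion is $|\phi(x)|\le1$, which is clear. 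This proves \eqref{eq_derivative_gevrey_cut_off}.

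\textbf{Main obstacle.} The only genuinely delicate point is the rigorous justification that the infinite convolution defines a $C^\infty$ function and that the partial convolutions and their derivatives converge uniformly; everything else is elementary once the factorizations $g=(\rho_1*\cdots*\rho_m)*h_m$ and $g^{(k-1)}=\rho_1'*\cdots*\rho_{k-1}'*\rho_k*h_k$ are set up correctly, with due care about the total-variation norms of the distributional derivatives $\rho_j'$ and the meaning of the convolutions involving these measures.
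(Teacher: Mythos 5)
Your proof is correct, and the final bound even comes out as $2^{k-1}(k!)^2$, matching what the paper obtains. The substance of your argument is the same as the paper's: both rest on the classical Gevrey-class-$2$ bump built from widths $a_k\asymp k^{-2}$, chosen precisely so that $\prod_{j\le k}a_j^{-1}$ produces the factor $(k!)^2$, followed by integration and a shift/reflection to place the plateaus at $\pi^2/6$ and $\pi^2/3$. The difference is one of packaging: the paper simply invokes H\"ormander's Theorem 1.3.5 in \cite{H} with the sequence $a_j=(j+1)^{-2}$ to obtain a density $u\ge 0$ supported in $[0,\pi^2/6]$ with $|u^{(k)}|\le 2^k((k+1)!)^2$ and $\int u=1$, and then sets $\phi(x)=\int_0^{-x+\pi^2/3}u(y)\,dy$, whereas you reconstruct that density from scratch as the infinite convolution $g=\rho_1*\rho_2*\cdots$ of rescaled indicators. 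What you gain is self-containedness and an explicit view of where the $(k!)^2$ comes from; what you pay for it is exactly the point you flag as the ``main obstacle'' --- the convergence and $C^\infty$ regularity of the infinite convolution --- which is precisely the content of the cited theorem and which you only sketch. (That sketch is repairable with no new ideas: since $h_m=\rho_{m+1}*\rho_{m+2}*\cdots$ is in any case a compactly supported probability measure, the factorization $g=(\rho_1*\cdots*\rho_m)*h_m$ already yields $g\in C^{m-2}$ for every $m$ without needing $h_m$ to have a density, and the derivative bound via total variations of the $\rho_j'$ goes through as you wrote it.) If you want the lemma to stand without re-proving H\"ormander's construction, the citation route is shorter; otherwise your version is a valid, slightly more explicit substitute.
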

\begin{proof}
Let us take the sequence $(a_j)_{j=0}^{\infty}$ in \cite[\mbox{Theorem
 1.3.5}]{H} to be $((j+1)^{-2})_{j=0}^{\infty}$. Since $\sum_{j=0}^{\infty}a_j=\pi^2/6$, the theorem reads
 that there exists a function $u\in C_0^{\infty}(\R)$ satisfying  
\begin{align*}
&u(x)\ge 0,\ (\forall x\in \R),\\
&u(x)=0,\ (\forall x\in \R\backslash [0,\pi^2/6]),\\
&\left|\left(\frac{d}{dx}\right)^ku(x)\right|\le 2^k ((k+1)!)^2,\ (\forall x\in\R,k\in\N\cup
 \{0\}),\\
&\int_{-\infty}^{\infty}u(x)dx=1.
\end{align*}
Set
$$
\phi(x):=\int_0^{-x+\frac{\pi^2}{3}}u(y)dy.
$$
One can check that the function $\phi$ satisfies the claimed properties.
\end{proof}

Take $M\in \R_{>1}$ and set 
\begin{align*}
&M_{UV}:=\frac{2\sqrt{6}}{\pi}(E_1+1),\\
&N_h:=\max\left\{\left[\frac{\log\big(2h(\frac{\pi^2}{6})^{-1/2}M_{UV}^{-1}\big)}{\log M}\right]+1,1\right\},
\end{align*}
where the symbol $[x]$ denotes the largest integer
less than or equal to $x$ for any $x\in \R$. We see that 
\begin{align*}
h|1-e^{i\frac{\o}{h}}|\le 2h\le
 \left(\frac{\pi^2}{6}\right)^{\frac{1}{2}}{M_{UV}}M^{N_h},\ (\forall \o\in\R).
\end{align*}
Thus,
$$
\phi(M_{UV}^{-2}M^{-2N_h}h^2|1-e^{i\frac{\o}{h}}|^2)=1,\ (\forall \o\in\R),
$$
where $\phi(\cdot)$ is the function introduced in Lemma \ref{lem_gevrey_class}.

For any $\o\in\R$ set
\begin{align*}
&\chi_{h,0}(\o):=\phi(M_{UV}^{-2}h^2|1-e^{i\frac{\o}{h}}|^2),\\
&\chi_{h,l}(\o):=\phi(M_{UV}^{-2}M^{-2l}h^2|1-e^{i\frac{\o}{h}}|^2)-
\phi(M_{UV}^{-2}M^{-2(l-1)}h^2|1-e^{i\frac{\o}{h}}|^2),\\
&(l\in\{1,2,\cdots,N_h\}).
\end{align*}
Then, we have that 
\begin{align}
\chi_{h,0}(\o)+\sum_{l=1}^{N_h}\chi_{h,l}(\o)=1,\ (\forall \o\in
 \R).\label{eq_cut_off_function_equality_UV}
\end{align}
The values of $\chi_{h,0}(\cdot)$, $\chi_{h,l}(\cdot)$ are
described as follows. 
\begin{align}
&\chi_{h,0}(\o)\left\{\begin{array}{ll}=1, &\text{if
	       }h|1-e^{i\frac{\o}{h}}|\le\frac{\pi}{\sqrt{6}}M_{UV},\\
\in [0,1], &\text{if }\frac{\pi}{\sqrt{6}}M_{UV}< h|1-e^{i\frac{\o}{h}}|<\frac{\pi}{\sqrt{3}}M_{UV},\\
=0, &\text{if
	       }\frac{\pi}{\sqrt{3}}M_{UV}\le
		     h|1-e^{i\frac{\o}{h}}|,\end{array}
\right.\label{eq_support_description_UV_cut_off}\\
&\chi_{h,l}(\o)\left\{\begin{array}{ll}=0, &\text{if
	       }h|1-e^{i\frac{\o}{h}}|\le\frac{\pi}{\sqrt{6}}M_{UV}M^{l-1},\\
\in [0,1], &\text{if }\frac{\pi}{\sqrt{6}}M_{UV}M^{l-1}< h|1-e^{i\frac{\o}{h}}|<\frac{\pi}{\sqrt{3}}M_{UV}M^{l},\\
=0, &\text{if
	       }\frac{\pi}{\sqrt{3}}M_{UV}M^{l}\le
		     h|1-e^{i\frac{\o}{h}}|,\end{array}
\right.\notag\\
&(\forall l\in\{1,2,\cdots,N_h\},\o\in\R).\notag
\end{align}

Using these cut-off functions, we define the covariances $C_l^+$,
$C_l^-:I_0^2\to \C$ $(l=0,1,\cdots,N_h)$ as follows. For any
$(\rho,\bx,\s,x)$, $(\eta,\by,\tau,y)\in I_0$, 
\begin{align}
&C_{l}^+(\rho\bx\s x,\eta\by\tau y)\label{eq_introduction_UV_covariance}\\
&:=\frac{\delta_{\s,\tau}}{\beta
 L^d}\sum_{\bk\in \G^*}\sum_{\o\in
 \cM_h}e^{-i\<\bx-\by,\bk\>}e^{i(x-y)\o}\chi_{h,l}(\o)h^{-1}(I_b-e^{-i\frac{\o}{h}I_b+\frac{1}{h}\overline{E(\bk)}})^{-1}(\rho,\eta),\notag\\
&C_{l}^-(\rho\bx\s x,\eta\by\tau y)\notag\\
&:=\frac{\delta_{\s,\tau}}{\beta
 L^d}\sum_{\bk\in \G^*}\sum_{\o\in
 \cM_h}e^{-i\<\bx-\by,\bk\>}e^{i(x-y)\o}\chi_{h,l}(\o)h^{-1}(e^{i\frac{\o}{h}I_b-\frac{1}{h}\overline{E(\bk)}}-I_b)^{-1}(\rho,\eta).\notag
\end{align}
We also define the covariances $C_{\le 0}^+$, $C_{> 0}^+$, $C_{> 0}^-$
by \eqref{eq_covariance_negative_index},
\eqref{eq_covariance_non_negative_index},
\eqref{eq_covariance_non_negative_index_another} respectively by employing
$\chi_{h,0}(\cdot)$ in place of $\chi(h|1-e^{i\cdot /h}|)$.
It follows from \eqref{eq_cut_off_function_equality_UV} that 
\begin{align*}
C_{>0}^+=\sum_{l=1}^{N_h}C_l^+,\ C_{>0}^-=\sum_{l=1}^{N_h}C_l^-,\
 C=\sum_{l=0}^{N_h}C_l^+.
\end{align*}

We show in the next lemma that the covariances $C_l^+$,
$C_l^-:I_0^2\to \C$ $(l=1,2,\cdots,N_h)$ satisfy the bound properties required
 in Subsection \ref{subsec_UV_general}. For this purpose let us
introduce finite-difference operators. 
For any function $f:\R\times \R^d\to \C$, 
set
\begin{align*}
&\cD_0f(\o,\bk):=\frac{\beta}{2\pi}\left(f\left(\o+\frac{2\pi}{\beta},\bk\right)-f(\o,\bk)\right),\\
&\cD_jf(\o,\bk):=
 \frac{L}{2\pi}\left(f\left(\o,\bk+\frac{2\pi}{L}\bv_j\right)-f(\o,\bk)\right),\\ 
&(\forall j\in \{1,2,\cdots,d\},(\o,\bk)\in\R\times\R^d).
\end{align*}

\begin{lemma}\label{lem_covariance_UV_bound}
Assume that $h\ge e^{2E_1}$. Then, there exist constants $c_0$,
 $c_0'\in\R_{\ge 1}$, which depend only on $b$, $d$, $M_{UV}$, $M$, $E_2$, and
 a constant $c_w\in (0,1]$ independent of any parameter such that the
 covariances $C_l^+$, $C_l^-$ $(l=1,2,\cdots,N_h)$ satisfy
 \eqref{eq_determinant_bound_UV_general},
\eqref{eq_decay_bound_UV_general} with $c_0$, $N_+=N_h$, the weight 
$$
\fw(0)=c_w(d+1)^{-2}\min\{M_{UV},(E_2+1)^{-1}\}M^{-2}
$$
and the exponent $\fr=1/2$, and 
 \eqref{eq_tadpole_bound_UV_general} with $c_0'$, $N_+=N_h$.
\end{lemma}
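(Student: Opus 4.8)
The plan is to verify the three bound families \eqref{eq_determinant_bound_UV_general}, \eqref{eq_decay_bound_UV_general}, \eqref{eq_tadpole_bound_UV_general} directly from the explicit momentum-space representations \eqref{eq_introduction_UV_covariance}. First I would rewrite the resolvent factor $h^{-1}(I_b-e^{-i\frac{\o}{h}I_b+\frac{1}{h}\overline{E(\bk)}})^{-1}$ using the expansion \eqref{eq_h_dependent_integrant_expansion} already established in the proof of Lemma \ref{lem_h_independent_determinant_bound}, which shows it is uniformly comparable (once $h\ge e^{2E_1}$, say) to $(i\o I_b-\overline{E(\bk)})^{-1}$ up to an $O(1/h)$ correction; on the support of $\chi_{h,l}$ one has $h|1-e^{i\o/h}|\in[\tfrac{\pi}{\sqrt6}M_{UV}M^{l-1},\tfrac{\pi}{\sqrt3}M_{UV}M^l)$, hence $|\o|$ and $\|i\o I_b-\overline{E(\bk)}\|_{b\times b}$ are both of order $M^l$ there because $\|E(\bk)\|_{b\times b}\le E_1\le \tfrac{\sqrt6}{2\pi}M_{UV}M^0$. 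This gives $\|C_l^{\pm}(X,Y)\text{-kernel in }(\bk,\o)\|\le cM^{-l}$ pointwise in momentum. For the determinant bound, I would run the Gram-inequality argument exactly as in Lemma \ref{lem_h_independent_determinant_bound}, factoring $\chi_{h,l}(\o)$ as $\chi_{h,l}(\o)^{1/2}\cdot\chi_{h,l}(\o)^{1/2}$ and splitting the resolvent between the two Gram vectors in the Hilbert space $\cH=L^2(\cB\times\G^*\times\spin\times\cM_h)$; the key point is that $\sum_{\o\in\cM_h}\chi_{h,l}(\o)\le c$ uniformly in $l$ (the support has width $\sim M^l$ but the summand is bounded by $1$, while $\cM_h$ has spacing $2\pi/\beta$, so one must absorb the $\beta$ into the $\tfrac1\beta$ prefactor of $\<\cdot,\cdot\>_\cH$ — actually $\tfrac1\beta\sum_{\o}\chi_{h,l}(\o)\le c$ by comparing with $\int d\o$), giving $\|f_X\|_\cH^2\le c$ and $\|g_X\|_\cH^2\le cM^{-2l}\cdot M^{2l}\cdot(\ldots)$, wait — more carefully, the resolvent norm is $\sim M^{-l}$ but the measure of the $\o$-support is $\sim M^l$, so $\|g_X\|_\cH^2\le c$, yielding \eqref{eq_determinant_bound_UV_general} with an $l$-independent $c_0$.

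Next, for the decay bound \eqref{eq_decay_bound_UV_general} I would estimate the $L^1$-type norms $\|\widetilde{C_l^{\pm}}\|_{0,r}$ by the standard Fourier/finite-difference method: the position-time kernel is the discrete Fourier transform of a function supported on the $\o$-annulus of scale $M^l$, so decay in $(\bx-\by, x-y)$ is obtained by summation by parts, i.e.\ by applying the operators $\cD_j$ to the momentum-space integrand. Each application of $\cD_0$ or $\cD_j$ lands on $\chi_{h,l}$ or on the resolvent; using the Gevrey bound \eqref{eq_derivative_gevrey_cut_off} on $\phi$ together with the chain rule (noting $M_{UV}^{-2}M^{-2l}h^2|1-e^{i\o/h}|^2$ has derivatives in $\o$ of size $\sim M^{-l}$ on the relevant annulus), one sees that $k$ derivatives of $\chi_{h,l}$ cost $M^{-kl}(k!)^{c}$ and similarly for the resolvent by \eqref{eq_real_analytic_dispersion_relation}; the factorial growth is exactly what the exponent $\fr=1/2$ and the small weight $\fw(0)=c_w(d+1)^{-2}\min\{M_{UV},(E_2+1)^{-1}\}M^{-2}$ are designed to tame — the stretched-exponential weight $e^{\sum_j(\fw(0)d_j)^{1/2}}$ is summable against a kernel whose smoothness allows only Gevrey-$2$ (hence stretched-exponential) decay. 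Summing the geometric-in-$k$ series and the $(\bk,\o)$-sum (whose effective volume at scale $l$ contributes the extra $M^{-l}$ beyond the $L^\infty$ bound) produces $\|\widetilde{C_l^{\pm}}\|_{0,r}\le c_0 M^{-l}$; the semi-norm case $r=1$ carries one extra factor $d_{j'}(X,Y_q)$, absorbed by one fewer finite difference, with no change in the power of $M$.

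Finally, for \eqref{eq_tadpole_bound_UV_general} I would sum the equal-point values: $\sum_{l=1}^{N_h}\max_{\rho}|C_l^+(\rho\b0\ua0,\rho\b0\ua0)|$ telescopes (up to the $O(1/h)$ resolvent correction) to something controlled by the full $C(\rho\b0\ua0,\rho\b0\ua0)$ minus the $l=0$ term, which is bounded by a constant depending only on $b,d,M_{UV},E_2$ since $|C_l^+(\rho\b0\ua0,\rho\b0\ua0)|\le \tfrac{1}{\beta L^d}\sum_{\bk,\o}\chi_{h,l}(\o)\|(\ldots)^{-1}\|_{b\times b}\le c$, independent of $l$, but one must instead use the $M^{-l}$ decay together with the fact that the relevant $\o$-sum at scale $l$ has $\sim M^l$ terms of size $\sim M^{-l}/(\beta L^d)\cdot\ldots$ — the honest bound is that $\tfrac{1}{\beta}\sum_{\o\in\cM_h}\chi_{h,l}(\o)|i\o-\cE|^{-1}\le cM^{-l}\cdot M^l\cdot\text{(const)}$... more precisely, $|C_l^+(\rho\b0\ua0,\rho\b0\ua0)|$ is the value at the origin of the scale-$l$ kernel, which equals its momentum-space integral $\le \tfrac{1}{\beta L^d}\sum_{\bk}\sum_{\o:\,|\o|\sim M^l}\chi_{h,l}(\o)\cdot cM^{-l}$ and the $\o$-sum has $\le cM^l\beta/(2\pi)$ terms, so this is $\le c$ uniformly in $l$; summing over $l=1,\dots,N_h$ then only gives $cN_h$, which is \emph{not} $h$-independent. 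The resolution — and this is the one genuinely delicate point — is that the equal-point value actually enjoys an extra gain: by the oscillation in $\o$ (or equivalently by evaluating the telescoping sum $\sum_l \chi_{h,l}=1-\chi_{h,0}$ against the resolvent), $\sum_{l=1}^{N_h}C_l^+(\rho\b0\ua0,\rho\b0\ua0)=(C-C_0^+)(\rho\b0\ua0,\rho\b0\ua0)$, and the right-hand side is bounded by a constant independent of $h$ because it is, up to the harmless $O(1/h)$ correction handled as in Lemma \ref{lem_h_independent_determinant_bound}, the $h\to\infty$ limit $\tfrac{1}{\beta L^d}\sum_{\bk,\o}(1-\chi(|\o|))(i\o I_b-\overline{E(\bk)})^{-1}(\rho,\rho)$, which converges; thus $c_0'$ depends only on $b,d,M_{UV},M,E_2$. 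The main obstacle in writing the full proof is precisely this bookkeeping in the decay estimate: tracking the Gevrey factorials from $\phi$ and from $E$ through the chain rule, and checking that the specific constants $c_w$, $\fw(0)$, $\fr=1/2$ are chosen so that $\sup_k \fw(0)^{k/\ldots}(k!)^{c}M^{-kl}(\ldots)$ remains summable — but this is a routine (if lengthy) Gevrey estimate of the type carried out in Appendix \ref{app_gevrey}, and I would invoke those lemmas rather than redo them inline.
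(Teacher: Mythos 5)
Your treatment of the determinant bound \eqref{eq_determinant_bound_UV_general} and the decay bound \eqref{eq_decay_bound_UV_general} follows essentially the paper's route: Gram's inequality with the splitting $\chi_{h,l}=\chi_{h,l}^{1/2}\cdot\chi_{h,l}^{1/2}$, and finite differences converted into Gevrey estimates on $\chi_{h,l}$ and the resolvent via the lemmas of Appendix \ref{app_gevrey}. One detail is off: on the support of $\chi_{h,l}$ one has $\frac1\beta\sum_{\o\in\cM_h}\chi_{h,l}(\o)\le cM_{UV}M^{l}$ (the support is an annulus of width $\sim M^{l}$, see \eqref{eq_support_cut_off_UV_measure}), not $\le c$; correspondingly $\|f^l_X\|_{\cH}^2\le cM^{l}$ and $\|g^l_X\|_{\cH}^2\le cM^{-l}$, and only the product of the two Gram norms is $l$-independent. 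This cancellation is what you need, so the conclusion survives, but the intermediate claims as written are false.

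The genuine gap is in the tadpole bound \eqref{eq_tadpole_bound_UV_general}. That condition requires $\sum_{l=1}^{N_h}\max_{\rho}|C_l^+(\rho\b0\ua 0,\rho\b0\ua 0)|\le c_0'$, with the absolute value \emph{inside} the sum over scales (and this is how it is consumed in the proof of Proposition \ref{prop_UV_integration_general}, where the single-scale tadpoles are summed term by term). Your telescoping identity $\sum_{l=1}^{N_h}C_l^+=C-C_{0}^+$ only controls $\bigl|\sum_{l}C_l^+(\rho\b0\ua 0,\rho\b0\ua 0)\bigr|$, which is a strictly weaker statement; combined with your own uniform-in-$l$ bound $|C_l^+(\rho\b0\ua 0,\rho\b0\ua 0)|\le c$ you only get $cN_h\sim c\log h$, which is not $h$-independent. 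What is actually needed is a bound on each summand that is summable in $l$. The paper obtains this by splitting the resolvent on the support of $\chi_{h,l}$ as $h^{-1}(1-e^{-i\o/h})^{-1}I_b$ plus a Neumann-series remainder: the scalar diagonal piece contributes $\frac{1}{2\beta h}\sum_{\o}\chi_{h,l}(\o)\le cM^{l-N_h}$ (using \eqref{eq_h_M_inequality}, i.e. $M_{UV}M^{N_h-1}\le ch$), while the remainder is $O(M^{-l})$ because $\|h^{-1}(e^{i\o/h}-1)^{-1}h(e^{\overline{E(\bk)}/h}-I_b)\|_{b\times b}\le\frac12 M^{-l+1}$ there (see \eqref{eq_denominator_denominator_dominant_UV}). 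Both $\sum_{l=1}^{N_h}M^{l-N_h}$ and $\sum_{l=1}^{N_h}M^{-l}$ are bounded by constants depending only on $M$, which yields \eqref{eq_tadpole_bound_UV_general}. Without some such scale-by-scale gain your argument does not establish the stated inequality.
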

\begin{proof}
We prove the claims on $C_l^+$. The boundedness of $C_l^-$ can be proved
 in the same way.   Since
 $(2/\pi)|\o|\le h|1-e^{i\o/h}|\le |\o|$ for any $\o\in\R$ with $|\o|\le
 \pi h$, the conditions $\chi_{h,l}(\o)\neq 0$ and $|\o|\le \pi h$ implies that \begin{align}
c M_{UV}M^{l-1}\le |\o|\le c
 M_{UV}M^{l},\ (\forall l\in \{1,2,\cdots,N_h\}).\label{eq_matsubara_lower_upper_UV}
\end{align}
Therefore,
\begin{align}
&\frac{1}{\beta}\sum_{\o\in \cM_h} 1_{\chi_{h,l}(\o)\neq 0}\le c
 M_{UV}M^{l},\label{eq_support_cut_off_UV_measure}\\
&\int_{-\pi h}^{\pi h}d\o 1_{\chi_{h,l}(\o)\neq 0}\le c
 M_{UV}M^l,\ (\forall l\in \{1,2,\cdots,N_h\}).\notag
\end{align}
 
(Proof for \eqref{eq_determinant_bound_UV_general}): Note that 
\begin{align}
&h^{-1}(I_b-e^{-i\frac{\o}{h}I_b+\frac{1}{h}\overline{E(\bk)}})^{-1}\label{eq_expression_inverse_matrix_UV}\\
&=h^{-1}(1-e^{-i\frac{\o}{h}})^{-1}\big(I_b-h^{-1}(e^{i\frac{\o}{h}}-1)^{-1}h(e^{\frac{1}{h}\overline{E(\bk)}}-I_b)\big)^{-1}.\notag
\end{align}
If $h\ge e^{E_1}$, $\|h(e^{\frac{1}{h}\overline{E(\bk)}}-I_b)\|_{b\times
 b}\le E_1+1$. Thus, we can see from the definition of $M_{UV}$ and
 \eqref{eq_support_description_UV_cut_off} that 
\begin{align}
&\|h^{-1}(e^{i\frac{\o}{h}}-1)^{-1}h(e^{\frac{1}{h}\overline{E(\bk)}}-I_b)\|_{b\times
 b}\le\frac{\sqrt{6}}{\pi}M_{UV}^{-1}M^{-l+1}(E_1+1)=
 \frac{1}{2}M^{-l+1},\label{eq_denominator_denominator_dominant_UV}\\
&(\forall \o\in \R\text{ with }\chi_{h,l}(\o)\neq 0,\bk\in\R^d).\notag
\end{align}
It follows from  
 \eqref{eq_support_description_UV_cut_off},
 \eqref{eq_expression_inverse_matrix_UV} and \eqref{eq_denominator_denominator_dominant_UV} that
\begin{align}
&\|h^{-1}(I_b-e^{-i\frac{\o}{h}I_b+\frac{1}{h}\overline{E(\bk)}})^{-1}\|_{b\times
 b}\le c M_{UV}^{-1}M^{-l+1},\label{eq_bound_inverse_matrix_UV}\\
&(\forall \o\in \R\text{ with }\chi_{h,l}(\o)\neq 0,\bk\in \R^d).\notag
\end{align}

Recall the definition of the Hilbert space $\cH$ introduced in the proof
 of Lemma \ref{lem_h_independent_determinant_bound}. For any
 $(\rho,\bx,\s,x)\in I_0$ we define $f_{\rho\bx\s x}^l$,  $g_{\rho\bx\s
 x}^l\in \cH$ by
\begin{align*}
&f_{\rho\bx\s
 x}^l(\eta,\bk,\tau,\o):=\delta_{\rho,\eta}\delta_{\s,\tau}e^{i\<\bx,\bk\>}e^{-ix\o}\chi_{h,l}(\o)^{\frac{1}{2}},\notag\\
& g_{\rho\bx\s
 x}^l(\eta,\bk,\tau,\o):=\delta_{\s,\tau}e^{i\<\bx,\bk\>}e^{-ix\o}\chi_{h,l}(\o)^{\frac{1}{2}}h^{-1}(I_b-e^{-i\frac{\o}{h}I_b+\frac{1}{h}\overline{E(\bk)}})^{-1}(\eta,\rho).
\end{align*}
Then, $C_l^+(X,Y)=\<f_X^l,g_Y^l\>_{\cH}$, $(\forall X,Y\in
 I_0)$. Moreover, by \eqref{eq_support_cut_off_UV_measure} and
 \eqref{eq_bound_inverse_matrix_UV}, 
\begin{align}
\|f_{\rho\bx\s x}^l\|_{\cH}&\le
 c(M_{UV})M^{\frac{l}{2}},\label{eq_each_vector_bound_UV}\\
\|g_{\rho\bx\s x}^l\|_{\cH}&\le \Bigg(\frac{1}{\beta L^d}\sum_{\bk\in
 \G^*}\sum_{\o\in\cM_h}\chi_{h,l}(\o)\|h^{-1}(I_b-e^{-i\frac{\o}{h}I_b+\frac{1}{h}\overline{E(\bk)}})^{-1}\|_{b\times
 b}^2\Bigg)^{\frac{1}{2}}\notag\\
&\le  c(M_{UV},M)M^{-\frac{l}{2}},\ (\forall (\rho,\bx,\s,x)\in I_0).
\notag 
\end{align}
For any $r\in \N$ let $\C^r\otimes \cH$ denote the tensor product of the
 Hilbert spaces $\C^r$, $\cH$. Since 
\begin{align*}
\<\bp,\bq\>_{\C^r}C_l^+(X,Y)=\<\bp\otimes f_X^l, \bq\otimes
 g_Y^l\>_{\C^r\otimes \cH},\ (\forall \bp,\bq\in\C^r,X,Y\in I_0),
\end{align*}
Gram's inequality in the Hilbert space $\C^r\otimes \cH$ and
 \eqref{eq_each_vector_bound_UV} ensure that
\begin{align*}
&|\det(\<\bp_i,\bq_j\>_{\C^r}C_{l}^+(X_i,Y_j))_{1\le i,j\le n}|\le\prod_{i=1}^n\|\bp_i\otimes f_{X_i}^l\|_{\C^r\otimes \cH}\|\bq_i\otimes
 g_{Y_i}^l\|_{\C^r\otimes \cH}\\
&\le \prod_{i=1}^n\|f_{X_i}^l\|_{\cH}\|g_{Y_i}^l\|_{\cH}\le c(M_{UV},M)^n,\\
&(\forall r,n\in\N,\bp_i,\bq_i\in\C^r\text{ with }
\|\bp_i\|_{\C^r},\|\bq_i\|_{\C^r}\le 1,\notag\\
&\ X_i,Y_i\in I_0\
 (i=1,2,\cdots,n)).
\end{align*}

(Proof for \eqref{eq_decay_bound_UV_general}): By the assumption $h\ge
 e^{2E_1}$,
$$
N_h=\left[\frac{\log\big(2h(\frac{\pi^2}{6})^{-1/2}M_{UV}^{-1}\big)}{\log M}\right]+1.
$$
Thus, 
\begin{align}
M_{UV}M^{N_h-1}\le ch.\label{eq_h_M_inequality}
\end{align}
Using \eqref{eq_h_M_inequality} and the inequality $E_1/h\le 1$, 
we can check that
\begin{align}
&\left\|\left(\frac{\partial}{\partial
 \o}\right)^nh(I_b-e^{-i\frac{\o}{h}I_b+\frac{1}{h}\overline{E(\bk)}})\right\|_{b\times
 b}\label{eq_bound_inside_matrix_0th_derivative_UV}\\
&\le c h^{1-n}\le M_{UV}M^{l-1}(c M_{UV}^{-1}M^{1-l})^nn!,\ (\forall n\in
 \N).\notag
\end{align}
We can apply Lemma \ref{lem_gevrey_matrix}
 \eqref{item_gevrey_matrix_power} proved in Appendix \ref{app_gevrey}
together with  \eqref{eq_real_analytic_dispersion_relation} and the assumption
 $h\ge e^{2E_1}$ to derive that for any $j\in\{1,2,\cdots,d\}$,
\begin{align}
&\left\|\left(\frac{\partial}{\partial
 p_j}\right)^nh(I_b-e^{-i\frac{\o}{h}I_b+\frac{1}{h}\overline{E(\sum_{r=1}^d
 p_r\bv_r)}})\right\|_{b\times
 b}\label{eq_bound_inside_matrix_jth_derivative_UV}\\
&\le \sum_{m=1}^{\infty}\frah^{m-1}\frac{1}{m!}\left\|\left(\frac{\partial}{\partial
 p_j}\right)^n{\overline{E\left(\sum_{r=1}^dp_r\bv_r\right)}}^m\right\|_{b\times b}\notag\\
&\le
 \sum_{m=1}^{\infty}\frah^{m-1}\frac{1}{m!}(2E_1)^m(2E_2)^nn!\le (2E_1+1)(2E_2)^nn!,\ (\forall n\in \N).\notag
\end{align}
Taking into account \eqref{eq_bound_inverse_matrix_UV},
 \eqref{eq_bound_inside_matrix_0th_derivative_UV}, we can substitute 
$s=c M_{UV}^{-1}M^{-l+1}$, $q={M_{UV}}M^{l-1}$, $r=c
 M_{UV}^{-1}M^{-l+1}$, $t=1$ into the inequality in Lemma
 \ref{lem_gevrey_matrix} \eqref{item_gevrey_inverse_matrix} to
 obtain 
\begin{align}
&\left\|\left(\frac{\partial}{\partial
 \o}\right)^nh^{-1}(I_b-e^{-i\frac{\o}{h}I_b+\frac{1}{h}\overline{E(\bk)}})^{-1}\right\|_{b\times
 b}\label{eq_bound_inverse_matrix_derivative_0th_UV}\\ 
&\le c M_{UV}^{-1}M^{-l+1}(c M_{UV}^{-1}M^{-l+1})^nn!,\notag\\
&(\forall n\in \N\cup\{0\}, \o\in\R\text{ with
 }\chi_{h,l}(\o)\neq 0, \bk\in \R^d).\notag
\end{align}
Here we also used \eqref{eq_bound_inverse_matrix_UV} to claim
 \eqref{eq_bound_inverse_matrix_derivative_0th_UV} for $n=0$.
By \eqref{eq_bound_inverse_matrix_UV},
 \eqref{eq_bound_inside_matrix_jth_derivative_UV} we can apply Lemma
 \ref{lem_gevrey_matrix} \eqref{item_gevrey_inverse_matrix}
with $s=c M_{UV}^{-1}M^{-l+1}$, $q=2E_1+1$, $r=2E_2$, $t=1$ to deduce
 that
\begin{align}
&\left\|\left(\frac{\partial}{\partial
 p_j}\right)^nh^{-1}(I_b-e^{-i\frac{\o}{h}I_b+\frac{1}{h}\overline{E(\sum_{r=1}^dp_r\bv_r)}})^{-1}\right\|_{b\times
 b}\label{eq_bound_inverse_matrix_derivative_jth_UV}\\
&\le
 c M_{UV}^{-2}M^{-2l+2}(E_1+1)(1+c(E_1+1) M_{UV}^{-1}M^{-l+1})^{-1}\notag\\
&\quad\cdot(cE_2(1+c(E_1+1)M_{UV}^{-1}M^{-l+1}))^nn!\notag\\
&\le
 c M_{UV}^{-1}M^{-l+1}(cE_2)^nn!,\notag\\
&(\forall n\in \N,j\in\{1,2,\cdots,d\},\o\in\R\text{ with
 }\chi_{h,l}(\o)\neq 0,\notag\\
&\quad (p_1,p_2,\cdots,p_d)\in \R^d ),\notag
\end{align}
where we used the inequality $M_{UV}\ge E_1+1$ as well.

For any $\o\in\R$ let $p_h(\o)$ denote a number belonging to $[-\pi h,\pi h)$
 and satisfying $\o=p_h(\o)$ in $\R/2\pi h \Z$. By using
 \eqref{eq_matsubara_lower_upper_UV} we have for any $n\in \N$ and
 $\o\in\R$ with $\chi_{h,l}(\o)\neq 0$ that
\begin{align}
&\left|\left(\frac{d}{d\o}\right)^n(M_{UV}^{-2}M^{-2(l-1)}h^2|1-e^{i\frac{\o}{h}}|^2)\right|\label{eq_bound_inside_cut_off_derivative_UV}\\
&\le c M_{UV}^{-2}M^{-2(l-1)}c^{n+1}|p_h(\o)|^{2-n}n!\le c
 M^2(c M_{UV}^{-1}M^{-l+1})^nn!.\notag
\end{align}
By \eqref{eq_derivative_gevrey_cut_off},
 \eqref{eq_bound_inside_cut_off_derivative_UV} we can substitute
 $q_1=cM^2$, $r_1=cM_{UV}^{-1}M^{-l+1}$, $q_2=1$, $r_2=2$, $t=2$ 
into the result of Lemma
 \ref{lem_gevrey_composition} to derive that 
\begin{align}
\left|\left(\frac{d}{d\o}\right)^n\chi_{h,l}(\o)\right|
&\le cM^2(1+cM^2)^{-1}(c M_{UV}^{-1}M^{-l+1}(1+cM^2))^n(n!)^2
\label{eq_bound_cut_off_derivative_0th_UV}\\ 
&\le c(c M_{UV}^{-1}M^{-l+3})^n(n!)^2,\ (\forall n\in \N\cup \{0\},\o\in \R).\notag
\end{align}

By the periodicity with the variable $\o$, 
\begin{align}
&\left(\frac{\beta}{2\pi}\right)^n
 (e^{-i\frac{2\pi}{\beta}(x-y)}-1)^nC_{l}^+(\cdot\bx\s x,\cdot\by\tau y)\label{eq_finite_difference_matsubara_UV}\\
&=\frac{\delta_{\s,\tau}}{\beta
 L^d}\sum_{\bk\in \G^*}\sum_{\o\in
 \cM_h}e^{-i\<\bx-\by,\bk\>}e^{i(x-y)\o}\cD_0^n\big(\chi_{h,l}(\o)h^{-1}(I_b-e^{-i\frac{\o}{h}I_b+\frac{1}{h}\overline{E(\bk)}})^{-1}\big)\notag\\
&=\frac{\delta_{\s,\tau}}{\beta L^d}\sum_{\bk\in \G^*}\sum_{\o\in
 \cM_h}e^{-i\<\bx-\by,\bk\>}e^{i(x-y)\o}\prod_{r=1}^n\left(\frac{\beta}{2\pi}\int_0^{2\pi/\beta}d\o_r\right)\left(\frac{\partial}{\partial
 \eta}\right)^n\notag\\
&\quad\cdot
 \big(\chi_{h,l}(\eta)h^{-1}(I_b-e^{-i\frac{\eta}{h}I_b+\frac{1}{h}\overline{E(\bk)}})^{-1}\big)\Big|_{\eta
 = \o+\sum_{r=1}^n\o_r}\notag.
\end{align}
Note that by \eqref{eq_bound_inverse_matrix_derivative_0th_UV},
 \eqref{eq_bound_cut_off_derivative_0th_UV}, 
\begin{align}
&\left\|\left(\frac{\partial}{\partial
 \o}\right)^n\big(\chi_{h,l}(\o)h^{-1}(I_b-e^{-i\frac{\o}{h}I_b+\frac{1}{h}\overline{E(\bk)}})^{-1}\big)\right\|_{b\times
 b}\label{eq_bound_integrant_derivative_0th_UV}\\
&\le \sum_{j=0}^n\left(\begin{array}{c}n \\
		       j\end{array}\right)
\left|\left(\frac{d}{d \o}\right)^j\chi_{h,l}(\o)\right|\left\|\left(\frac{\partial}{\partial
 \o}\right)^{n-j}h^{-1}(I_b-e^{-i\frac{\o}{h}I_b+\frac{1}{h}\overline{E(\bk)}})^{-1}\right\|_{b\times
 b}\notag\\
&\le \sum_{j=0}^n\left(\begin{array}{c}n \\
		       j\end{array}\right)(cM_{UV}^{-1}M^{-l+3})^j(j!)^2\notag\\
&\quad\cdot c M_{UV}^{-1}M^{-l+1}(cM_{UV}^{-1}M^{-l+1})^{n-j}(n-j)!\notag\\
&\le c M_{UV}^{-1}M^{-l+1}(c M_{UV}^{-1}M^{-l+3})^{n}(n!)^2,\ (\forall n\in \N\cup \{0\},\o \in \R,\bk\in \R^d).\notag
\end{align}
By combining \eqref{eq_support_cut_off_UV_measure},
 \eqref{eq_bound_integrant_derivative_0th_UV} with
 \eqref{eq_finite_difference_matsubara_UV} we obtain that
\begin{align}
&\left\|\left(\frac{\beta}{2\pi}\right)^n
 (e^{-i\frac{2\pi}{\beta}(x-y)}-1)^nC_{l}^+(\cdot\bx\s x,\cdot\by\tau
 y)\right\|_{b\times b}\label{eq_time_direction_derivative_bound}\\
&\le cM (c M_{UV}^{-1}M^{-l+3})^{n}(n!)^2,\ (\forall n\in  \N\cup\{0\}).\notag
\end{align}
This implies that 
\begin{align}
&\|C_{l}^+(\cdot\bx\s x,\cdot\by\tau
 y)\|_{b\times b}\le cM e^{-(c^{-1}M_{UV}
 M^{l-3}\frac{\beta}{2\pi}|e^{i(2\pi/\beta)(x-y)}-1|)^{1/2}},\label{eq_weight_bound_0th_UV}\\
&(\forall (\bx,\s,x), (\by,\tau,y)\in\G\times\spin\times [0,\beta)_h). \notag
\end{align}
By the periodic condition \eqref{eq_dispersion_periodic} we similarly
 have for any $j\in \{1,2,\cdots,$ $d\}$ that
\begin{align*}
&\left(\frac{L}{2\pi}\right)^n
 (e^{i\frac{2\pi}{L}\<\bx-\by,\bv_j\>}-1)^nC_{l}^+(\cdot\bx\s x,\cdot\by\tau y)\\
&=\frac{\delta_{\s,\tau}}{\beta
 L^d}\sum_{\bk\in \G^*}\sum_{\o\in
 \cM_h}e^{-i\<\bx-\by,\bk\>}e^{i(x-y)\o}\chi_{h,l}(\o)\cD_j^n(h^{-1}(I_b-e^{-i\frac{\o}{h}I_b+\frac{1}{h}\overline{E(\bk)}})^{-1})\\
&=\frac{\delta_{\s,\tau}}{\beta L^d}\sum_{\bk\in \G^*}\sum_{\o\in
 \cM_h}e^{-i\<\bx-\by,\bk\>}e^{i(x-y)\o}\chi_{h,l}(\o)\prod_{r=1}^n\left(\frac{L}{2\pi}\int_0^{2\pi/L}dq_r\right)\left(\frac{\partial}{\partial
 p_j}\right)^n\\
&\quad\cdot
 \big(h^{-1}(I_b-e^{-i\frac{\o}{h}I_b+\frac{1}{h}\overline{E(\bk+p_j\bv_j)}})^{-1}\big)\Big|_{p_j=\sum_{r=1}^nq_r}.
\end{align*}
Insertion of \eqref{eq_support_cut_off_UV_measure}, \eqref{eq_bound_inverse_matrix_derivative_jth_UV} yields 
\begin{align*}
&\left\|\left(\frac{L}{2\pi}\right)^n
 (e^{i\frac{2\pi}{L}\<\bx-\by,\bv_j\>}-1)^nC_{l}^+(\cdot\bx\s x,\cdot\by\tau
 y)\right\|_{b\times b}\\
&\le cM (c E_2)^{n}n!,\ (\forall n\in  \N\cup\{0\}),
\end{align*}
where we also used \eqref{eq_time_direction_derivative_bound} to claim
 this equality for $n=0$.
This leads to 
\begin{align}
&\|C_{l}^+(\cdot\bx\s x,\cdot\by\tau
 y)\|_{b\times b}\le cM e^{-(c^{-1}(E_2+1)^{-1}\frac{L}{2\pi}|e^{i(2\pi/L)\<\bx-\by,\bv_j\>}-1|)^{1/2}},\label{eq_weight_bound_jth_UV}\\
&(\forall (\bx,\s,x), (\by,\tau,y)\in\G\times\spin\times
 [0,\beta)_h, j\in \{1,2,\cdots,d\}). \notag
\end{align}

One can derive from \eqref{eq_weight_bound_0th_UV},
 \eqref{eq_weight_bound_jth_UV} that
\begin{align*}
&\|C_{l}^+(\cdot\bx\s x,\cdot\by\tau
 y)\|_{b\times b}\le cM
 e^{-(c^{-1}(d+1)^{-2}M_{UV}M^{l-3}\frac{\beta}{2\pi}|e^{i(2\pi/\beta)(x-y)}-1|)^{1/2}}\\
 &\qquad\qquad\qquad\qquad\qquad\cdot e^{-\sum_{j=1}^d(c^{-1}(d+1)^{-2}(E_2+1)^{-1}\frac{L}{2\pi}|e^{i(2\pi/L)\<\bx-\by,\bv_j\>}-1|)^{1/2}},\\
&(\forall (\bx,\s,x), (\by,\tau,y)\in\G\times\spin\times
 [0,\beta)_h).
\end{align*}
Here we may assume that $c\ge 1$ by taking a larger number if necessary.
Set
\begin{align*} 
&c_w:= \frac{1}{9}c^{-1},\\
&\fw(0):=c_w(d+1)^{-2}\min\{M_{UV},(E_2+1)^{-1}\}M^{-2},
\end{align*}
with the constant $c\in \R_{\ge 1}$ appearing in the above inequality.
Then, we have
\begin{align}
&|C_{l}^+(\rho\bx\s x,\eta\by\tau
 y)|\label{eq_UV_covariance_gevrey_decay}\\
&\le c(M) e^{-3(\fw(0)M^{l-1}\frac{\beta}{2\pi}|e^{i(2\pi/\beta)(x-y)}-1|)^{1/2}}
 \cdot e^{-3\sum_{j=1}^d(\fw(0)\frac{L}{2\pi}|e^{i(2\pi/L)\<\bx-\by,\bv_j\>}-1|)^{1/2}},\notag\\
&(\forall (\rho,\bx,\s,x), (\eta,\by,\tau,y)\in I_0).\notag
\end{align}

With this weight $\fw(0)$ and the exponent $\fr=1/2$ we define the norm
 $\|\cdot\|_{0,0}$ and the semi-norm $\|\cdot\|_{0,1}$ by
 \eqref{eq_definition_norm_semi_norm}. We can check that
\begin{align*}
\|\widetilde{C_l^+}\|_{0,0}&\le c(M)\sup_{X\in I}\frac{1}{h}\sum_{Y\in
 I}e^{-(\fw(0)M^{l-1}d_0(X,Y))^{1/2}}\cdot
 e^{-\sum_{j=1}^d(\fw(0)d_j(X,Y))^{1/2}}\\
&\le c(M,b,d,\fw(0))M^{-l},\\
\|\widetilde{C_l^+}\|_{0,1}&\le c(M)\sup_{i\in\{0,1,\cdots,d\}}\sup_{X\in I}\frac{1}{h}\sum_{Y\in
 I}d_i(X,Y)\\
&\quad\cdot e^{-(\fw(0)M^{l-1}d_0(X,Y))^{1/2}}\cdot
 e^{-\sum_{j=1}^d(\fw(0)d_j(X,Y))^{1/2}}\\
&\le c(M,b,d,\fw(0))M^{-l}.
\end{align*}
These upper bounds can be derived even if we define $c_w$ as
 $(1/4)c^{-1}$ so that the right-hand side of 
\eqref{eq_UV_covariance_gevrey_decay} contains the factor 2 in place of
 3. However, we choose to define $c_w$ as
 $(1/9)c^{-1}$ in order that \eqref{eq_UV_covariance_gevrey_decay} can
 be used in the next lemma, too.

(Proof for \eqref{eq_tadpole_bound_UV_general}): It follows from
 \eqref{eq_expression_inverse_matrix_UV} that for any $\o\in \R$ with
 $\chi_{h,l}(\o)\neq 0$,
\begin{align*}
&h^{-1}(I_b-e^{-i\frac{\o}{h}I_b+\frac{1}{h}\overline{E(\bk)}})^{-1}\\
&=h^{-1}(1-e^{-i\frac{\o}{h}})^{-1}I_b\\
&\quad +h^{-1}(1-e^{-i\frac{\o}{h}})^{-1}\sum_{n=1}^{\infty}\big(h^{-1}(e^{i\frac{\o}{h}}-1)^{-1}h(e^{\frac{1}{h}\overline{E(\bk)}}-I_b)\big)^n.
\end{align*}
By substituting this equality we obtain 
\begin{align*}
&C_l^+(\cdot \b0\s 0, \cdot \b0\s 0)\\
&=\frac{1}{2\beta h}\sum_{\o\in \cM_h}\chi_{h,l}(\o)I_b+\frac{1}{\beta L^d}\sum_{\bk\in \G^*}\sum_{\o\in \cM_h}\chi_{h,l}(\o)
h^{-1}(1-e^{-i\frac{\o}{h}})^{-1}\\
&\qquad\qquad\qquad\qquad\qquad\qquad\cdot\sum_{n=1}^{\infty}\big(h^{-1}(e^{i\frac{\o}{h}}-1)^{-1}h(e^{\frac{1}{h}\overline{E(\bk)}}-I_b)\big)^n.
\end{align*}
Moreover, by \eqref{eq_support_description_UV_cut_off},
\eqref{eq_support_cut_off_UV_measure},
 \eqref{eq_denominator_denominator_dominant_UV} and
 \eqref{eq_h_M_inequality},
\begin{align*}
\|C_l^+(\cdot \b0 \s 0,\cdot \b0 \s 0)\|_{b\times b}&\le
 c(M)M^{l-N_h}+c(M)\sum_{n=1}^{\infty}\left(\frac{1}{2}M^{-l+1}\right)^n\\
&\le c(M)(M^{l-N_h}+M^{-l}),
\end{align*}
which implies \eqref{eq_tadpole_bound_UV_general} with $N_+=N_h$.
\end{proof}

Next let us find upper bounds on the differences between the covariances
defined at 2 different temperatures. These bounds were required in
Subsection \ref{subsec_UV_general_difference}.
\begin{lemma}\label{lem_covariance_UV_bound_difference}
Assume that \eqref{eq_beta_h_assumption} holds and $h\ge e^{2E_1}$. Let $\fw(0)$ be the weight introduced in
 Lemma \ref{lem_covariance_UV_bound}. Then, there exist constants $c_0$,
 $c_0'\in\R_{\ge 1}$, which depend only on $b$, $d$, $M_{UV}$, $M$,
 $E_2$, such
 that  the
 covariances $C_l^{+}(\beta_a)$, $C_l^{-}(\beta_a)$ $(l=1,2,\cdots,N_h, a=1,2)$ satisfy
 \eqref{eq_determinant_bound_UV_difference_general},
\eqref{eq_decay_bound_UV_difference_general} with $c_0$, $N_+=N_h$, the weight 
$\fw(0)$ and the exponent $\fr=1/2$, and 
 \eqref{eq_tadpole_bound_UV_difference_general} with $c_0'$, $N_+=N_h$.
\end{lemma}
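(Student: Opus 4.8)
The plan is to mirror the proof of Lemma~\ref{lem_covariance_UV_bound}, replacing each quantitative bound there by a bound on the \emph{difference} of the corresponding quantities at $\beta_1$ and $\beta_2$, and extracting the universal gain factor $\beta_1^{-1/2}$ from the fact that, on the restricted index set $\hat I_0$ (time variables in $[-\beta_1/4,\beta_1/4)_h$), the two covariances differ only through the replacement of the Matsubara sum $\frac{1}{\beta_a}\sum_{\o\in\cM_h(\beta_a)}$ and the phase factor $e^{i(x-y)\o}$. First I would write, for $X=(\rho,\b0,\s,0)\in I_0^0$ and $Y=(\eta,\by,\tau,y)\in\hat I_0$,
\[
C_l^+(\beta_1)(R_{\beta_1}(X,Y))-C_l^+(\beta_2)(R_{\beta_2}(X,Y))
\]
as a difference of two Riemann-type sums over $\o$ of the smooth integrand
$G_l(\o,\bk):=\chi_{h,l}(\o)\,h^{-1}(I_b-e^{-i\o/h I_b+\frac1h\overline{E(\bk)}})^{-1}$ against $e^{i(x-y)\o}e^{-i\<\bx-\by,\bk\>}$; the key point is that $\frac1{\beta_a}\sum_{\o\in\cM_h(\beta_a)}$ is the same Riemann sum of mesh $2\pi/\beta_a$ for $a=1,2$, and $G_l$ together with its $\o$-derivatives is supported on $|\o|\le cM_{UV}M^l$ and obeys the Gevrey-type bounds \eqref{eq_bound_integrant_derivative_0th_UV} established already. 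A standard comparison of two Riemann sums of a $C^1$ function of compact support, combined with the rapid decay in $(x-y)$ and $\bx-\by$ coming from the finite-difference bounds \eqref{eq_time_direction_derivative_bound}, \eqref{eq_weight_bound_jth_UV}, then yields
\[
|C_l^+(\beta_1)(R_{\beta_1}(X,Y))-C_l^+(\beta_2)(R_{\beta_2}(X,Y))|
\le c(M)\,\beta_1^{-1}\,e^{-(\fw(0)M^{l-1}\hat d_0(X,Y)/\pi)^{1/2}}\prod_{j=1}^d e^{-(\fw(0)\hat d_j(X,Y)/\pi)^{1/2}},
\]
where the factor $\beta_1^{-1}$ is even better than the required $\beta_1^{-1/2}$. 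Summing this over $Y\in\hat I^{m-1}$ with $m=2$ gives \eqref{eq_decay_bound_UV_difference_general} for $C_l^\pm$, and the case $l=0$ (for $C_{>0}^\pm$ one sums over $l$ anyway) is handled identically; the evaluation at $\by=\b0$, $y=0$ gives \eqref{eq_tadpole_bound_UV_difference_general}.

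For the determinant bound \eqref{eq_determinant_bound_UV_difference_general} the plan is to use the Gram representation from Lemma~\ref{lem_covariance_UV_bound}, $C_l^+(\beta_a)(X,Y)=\<f^l_X(\beta_a),g^l_Y(\beta_a)\>_{\cH(\beta_a)}$, but now the two Hilbert spaces $\cH(\beta_a)=L^2(\cB\times\G^*\times\spin\times\cM_h(\beta_a))$ are different, so I cannot subtract the vectors directly. Instead I would apply the multilinearity of the determinant to write the difference of the two $n\times n$ determinants as a telescoping sum of $n$ determinants, in each of which exactly one column (or the covariance in one column) is replaced by the \emph{difference} $C_l^+(\beta_1)(R_{\beta_1}(\cdot))-C_l^+(\beta_2)(R_{\beta_2}(\cdot))$ while the remaining columns carry $C_l^+(\beta_2)(R_{\beta_2}(\cdot))$. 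The pointwise difference bound just derived gives $|C_l^+(\beta_1)(R_{\beta_1}(X,Y))-C_l^+(\beta_2)(R_{\beta_2}(X,Y))|\le c_0\beta_1^{-1}M^{a_1 l}$ (with $a_1=1$ here), and a Gram bound in $\cH(\beta_2)$ on the cofactor determinant of size $n-1$, together with the elementary estimate $|\det(\<\bp_i,\bq_j\>_{\C^r}A_{ij})|\le \max_{i,j}|A_{ij}|\cdot n!\cdot(\text{Gram bound})^{n-1}$ absorbed into $c_0^n$ by enlarging $c_0$, yields \eqref{eq_determinant_bound_UV_difference_general} with exponent $-1$ instead of $-\tfrac12$. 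One must be slightly careful here: the clean Gram estimate $|\det(\<\bp_i,\bq_j\>C_o(X_i,Y_j))|\le D_{et}^n$ does not literally survive the column-replacement, so I would instead invoke Hadamard's inequality in $\C^r\otimes\cH(\beta_2)$ on the $n-1$ untouched columns and bound the replaced column by its $\ell^2$-norm over the row index, which is controlled by $n^{1/2}\cdot c_0\beta_1^{-1}M^{a_1 l}$; this is the one spot needing a short explicit computation.

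The main obstacle is making the Riemann-sum comparison rigorous while keeping the Gevrey decay: one needs that the integrand $G_l$ and its first $\o$-derivative are bounded by $c(M)M^{-l}$ and $c(M)M^{-2l}$ respectively on a set of $\o$-measure $cM^l$, which is exactly what \eqref{eq_bound_integrant_derivative_0th_UV}–\eqref{eq_support_cut_off_UV_measure} provide, and that the two Matsubara grids $\cM_h(\beta_1)$, $\cM_h(\beta_2)$ are \emph{nested and commensurate} — this is where the hypothesis \eqref{eq_beta_h_assumption} ($\beta_1,\beta_2\in\N$, $h\in 4\N$, $\beta_2\ge\beta_1$) enters, guaranteeing $\cM_h(\beta_2)\supset$ a refinement of $\cM_h(\beta_1)$ so the difference of sums is genuinely a mesh-$2\pi/\beta_1$ discretization error of order $\beta_1^{-1}\sup|G_l'|\cdot|\mathrm{supp}\,G_l|\le c(M)\beta_1^{-1}M^{-l}$. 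Once that estimate is in hand, every remaining step is a verbatim repetition of the corresponding step in Lemma~\ref{lem_covariance_UV_bound} with an extra $\beta_1^{-1}\le\beta_1^{-1/2}$ carried through, and the finite-difference bounds \eqref{eq_time_direction_derivative_bound}, \eqref{eq_weight_bound_jth_UV} are applied to the difference integrand in precisely the same way to recover the Gevrey factor in $\fw(0)$ and $\fr=1/2$, completing the proof.
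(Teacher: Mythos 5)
Your overall strategy --- compare the two Matsubara sums to a common $\beta$-independent object, extract a mesh-size factor, and recycle the Gevrey machinery of Lemma \ref{lem_covariance_UV_bound} --- is exactly the paper's (the paper introduces a continuum covariance $C_{ont,l}(\beta_a)$ defined by an $\o$-integral, observes that $C_{ont,l}(\beta_1)=C_{ont,l}(\beta_2)$ on $\hat{I}_0$, and bounds each $\|C_{ont,l}(\beta_a)-C_l^+(\beta_a)\circ R_{\beta_a}\|_{b\times b}$). But there is a genuine gap in your central estimate. The Riemann-sum error for the integrand $e^{i(x-y)\o}G_l(\o,\bk)$ involves the $\o$-derivative of the \emph{whole} integrand, and differentiating the phase produces a factor $|x-y|$; the resulting bound is $c\beta_1^{-1}(|x-y|+M^{-l})$, not $c\beta_1^{-1}$. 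Since $|x-y|$ ranges up to $\beta_1/2$ on $\hat{I}_0$, this is $O(1)$ in the worst case, so your claimed pointwise bound ``$c(M)\beta_1^{-1}$ times the full Gevrey decay factor'' does not follow, and the remark that $\beta_1^{-1}$ is ``even better than the required $\beta_1^{-1/2}$'' is unfounded. To recover a uniform negative power of $\beta_1$ one must interpolate the crude bound $c\beta_1^{-1}(|x-y|+M^{-l})$ against the $\beta$-independent Gevrey bound $c(M)e^{-3(\fw(0)M^{l-1}|x-y|/\pi)^{1/2}-\cdots}$ by taking a geometric mean; that square root is precisely where the exponent $\beta_1^{-1/2}$ in the lemma comes from, and the surplus factor $3$ in the decay exponent (built in by the choice $c_w=\frac{1}{9}c^{-1}$ in Lemma \ref{lem_covariance_UV_bound}) is consumed in absorbing the leftover $(M^l|x-y|+1)^{1/2}$. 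Without this interpolation your argument yields nothing stronger than an $O(1)$ bound on the difference.

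Two smaller points. First, the Matsubara grids are not nested: for $\beta_1=1$, $\beta_2=2$ the sets $\frac{\pi}{\beta_1}(2\Z+1)$ and $\frac{\pi}{\beta_2}(2\Z+1)$ are disjoint, so ``the difference of the two sums is a mesh-$2\pi/\beta_1$ discretization error of a single function'' is not literally true; one must compare each sum separately to the continuum integral. The role of \eqref{eq_beta_h_assumption} is not to make the grids commensurate but to guarantee $n_{\beta_1}(x)=n_{\beta_2}(x)$ on $[-\beta_1/4,\beta_1/4)_h$, so that the two continuum objects coincide. Second, in the determinant part the factor $n!$ in your first estimate cannot be absorbed into $c_0^n$; your fallback via Hadamard's inequality (or, as in the paper, a cofactor expansion along the replaced column followed by a Cauchy--Binet expansion of the cofactor against the bound \eqref{eq_determinant_bound_UV_general}) is the correct fix, and the telescoping over columns that you propose is the standard way to pass from the pointwise difference bound to the difference of determinants.
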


\begin{proof}
We give the proof for $C_l^+$. The claims for $C_l^-$ can be proved in
 the same way. For any $(\rho,\bx,\s,x)$, $(\eta,\by,\tau,y)\in \hat{I}_0$, $a\in
 \{1,2\}$, set 
\begin{align*}
&C_{ont,l}(\beta_a)(\rho\bx\s x,\eta\by\tau y)\\
&:=(-1)^{n_{\beta_a}(x)+n_{\beta_a}(y)}
\frac{\delta_{\s,\tau}}{2\pi L^d}\sum_{\bk\in \G^*}\int_{-\pi h}^{\pi
 h}d\o e^{-i\<\bx-\by,\bk\>}e^{i(x-y)\o}\chi_{h,l}(\o)\\
&\quad\cdot h^{-1}(I_b-e^{-i\frac{\o}{h}I_b+\frac{1}{h}\overline{E(\bk)}})^{-1}(\rho,\eta).
\end{align*}
Since $n_{\beta_1}(x)=n_{\beta_2}(x)$ $(\forall x\in
 [-\beta_1/4,\beta_1/4)_h)$,
 $C_{ont,l}(\beta_1)=C_{ont,l}(\beta_2)$. Note that 
for any $a\in \{1,2\}$,
\begin{align*}
&C_{ont,l}(\beta_a)(\cdot \bx \s x,\cdot \by\tau
 y)-C_{l}^+(\beta_a)(\cdot \bx \s r_{\beta_a}(x),\cdot \by\tau
 r_{\beta_a}(y))\\
&=(-1)^{n_{\beta_a}(x)+n_{\beta_a}(y)}
\frac{\delta_{\s,\tau}}{2\pi L^d}\sum_{\bk\in \G^*}
 e^{-i\<\bx-\by,\bk\>}\\
&\quad \cdot \sum_{m=0}^{\frac{\beta_a
 h}{2}-1}\Bigg(\int_{\frac{2\pi}{\beta_a}m+\frac{\pi}{\beta_a}}^{\frac{2\pi}{\beta_a}(m+1)+\frac{\pi}{\beta_a}}d\o\int_{\frac{2\pi}{\beta_a}m+\frac{\pi}{\beta_a}}^{\o}du
 +
 \int_{-\frac{2\pi}{\beta_a}(m+1)-\frac{\pi}{\beta_a}}^{-\frac{2\pi}{\beta_a}m-\frac{\pi}{\beta_a}}d\o\int_{-\frac{2\pi}{\beta_a}m-\frac{\pi}{\beta_a}}^{\o}du
 \Bigg)\\
&\quad\cdot \frac{\partial}{\partial u}\big(e^{i(x-y)u}\chi_{h,l}(u)
h^{-1}(I_b-e^{-i\frac{u}{h}I_b+\frac{1}{h}\overline{E(\bk)}})^{-1}\big)\\
&\quad + (-1)^{n_{\beta_a}(x)+n_{\beta_a}(y)}
\frac{\delta_{\s,\tau}}{2\pi L^d}\sum_{\bk\in \G^*}
 e^{-i\<\bx-\by,\bk\>}\\
&\qquad\cdot\Bigg(\int_{-\frac{\pi}{\beta_a}}^{\frac{\pi}{\beta_a}}d\o
-\int_{\pi h}^{\pi h+\frac{\pi}{\beta_a}}d\o- \int^{-\pi h}_{-\pi h-\frac{\pi}{\beta_a}}d\o\Bigg)\\
&\qquad \cdot e^{i(x-y)\o}\chi_{h,l}(\o)
h^{-1}(I_b-e^{-i\frac{\o}{h}I_b+\frac{1}{h}\overline{E(\bk)}})^{-1}.
\end{align*}
On the assumption $h\ge e^{2E_1}$ we can apply
 \eqref{eq_support_cut_off_UV_measure},
 \eqref{eq_bound_integrant_derivative_0th_UV} to deduce that 
\begin{align}
&\|C_{l}^+(\beta_1)(\cdot \bx \s r_{\beta_1}(x),\cdot \by\tau
 r_{\beta_1}(y))-C_{l}^+(\beta_2)(\cdot \bx \s r_{\beta_2}(x),\cdot \by\tau
 r_{\beta_2}(y))\|_{b\times
 b}\label{eq_covariance_difference_UV_temperature}\\
&\le \sum_{a=1}^2\|C_{ont,l}(\beta_a)(\cdot \bx \s x,\cdot \by\tau
 y)-C_{l}^+(\beta_a)(\cdot \bx \s r_{\beta_a}(x),\cdot \by\tau
 r_{\beta_a}(y))\|_{b\times b}\notag\\
&\le \sum_{a=1}^2\frac{1}{\beta_a L^d}\sum_{\bk\in
 \G^*}\Bigg(\int_{\frac{\pi}{\beta_a}}^{\pi h+\frac{\pi}{\beta_a}}d\o
+\int^{-\frac{\pi}{\beta_a}}_{-\pi
 h-\frac{\pi}{\beta_a}}d\o\Bigg)\notag\\
&\quad\cdot
 \Bigg(|x-y|\chi_{h,l}(\o)\|h^{-1}(I_b-e^{-i\frac{\o}{h}I_b+\frac{1}{h}\overline{E(\bk)}})^{-1}\|_{b\times
 b}\notag\\
&\qquad+\left\|\frac{\partial}{\partial
 \o}(\chi_{h,l}(\o)h^{-1}(I_b-e^{-i\frac{\o}{h}I_b+\frac{1}{h}\overline{E(\bk)}})^{-1})\right\|_{b\times
 b}\Bigg)\notag\\
&\quad + \sum_{a=1}^2\frac{1}{2\pi L^d}\sum_{\bk\in
 \G^*}
\Bigg(\int_{-\frac{\pi}{\beta_a}}^{\frac{\pi}{\beta_a}}d\o
+\int_{\pi h}^{\pi h+\frac{\pi}{\beta_a}}d\o+ \int^{-\pi h}_{-\pi
 h-\frac{\pi}{\beta_a}}d\o\Bigg)\notag\\
&\qquad\cdot \chi_{h,l}(\o)\|h^{-1}(I_b-e^{-i\frac{\o}{h}I_b+\frac{1}{h}\overline{E(\bk)}})^{-1}\|_{b\times
 b}\notag\\
&\le c(M_{UV},M)\beta_1^{-1}(|x-y|+M^{-l}).\notag
\end{align}

On the other hand, the inequalities \eqref{eq_basic_temperature_bound},
 \eqref{eq_UV_covariance_gevrey_decay} imply that
\begin{align}
&|C_{l}^+(\beta_1)(\rho \bx \s r_{\beta_1}(x),\eta \by\tau
 r_{\beta_1}(y))-C_{l}^+(\beta_2)(\rho \bx \s r_{\beta_2}(x),\eta \by\tau
 r_{\beta_2}(y))|\label{eq_covariance_difference_UV_pre}\\
&\le
 c(M)e^{-3(\fw(0)M^{l-1}\frac{1}{\pi}|x-y|)^{1/2}-3\sum_{j=1}^d(\fw(0)\frac{L}{2\pi}|e^{i(2\pi/L)\<\bx-\by,\bv_j\>}-1|)^{1/2}}.\notag
\end{align}
By combining \eqref{eq_covariance_difference_UV_pre} with \eqref{eq_covariance_difference_UV_temperature} we have
\begin{align}
&|C_{l}^+(\beta_1)(\rho \bx \s r_{\beta_1}(x),\eta \by\tau
 r_{\beta_1}(y))-C_{l}^+(\beta_2)(\rho \bx \s r_{\beta_2}(x),\eta \by\tau
 r_{\beta_2}(y))|\label{eq_covariance_difference_UV_decay}\\
&\le
 c(M_{UV},M)\beta_1^{-\frac{1}{2}}M^{-\frac{l}{2}}(M^l|x-y|+1)^{\frac{1}{2}}\notag\\
&\quad\cdot e^{-\frac{3}{2}(\fw(0)M^{l-1}\frac{1}{\pi}|x-y|)^{1/2}-\frac{3}{2}\sum_{j=1}^d(\fw(0)\frac{L}{2\pi}|e^{i(2\pi/L)\<\bx-\by,\bv_j\>}-1|)^{1/2}},\notag\\
&(\forall (\rho,\bx,\s,x),(\eta,\by,\tau,y)\in \hat{I}_0).\notag
\end{align}
The inequalities \eqref{eq_decay_bound_UV_difference_general},
 \eqref{eq_tadpole_bound_UV_difference_general} follow from
 \eqref{eq_covariance_difference_UV_decay}.

To prove \eqref{eq_determinant_bound_UV_difference_general}, take any $X_i$,
 $Y_i\in \hat{I}_0$ and $\bp_i$, $\bq_i\in \C^r$ satisfying
 $\|\bp_i\|_{\C^r}$, $\|\bq_i\|_{\C^r}\le 1$ $(i=1,2,\cdots,n)$. Expanding the
 determinant along the 1st column and 
 using \eqref{eq_covariance_difference_UV_decay}, we observe that 
\begin{align*}
&|\det(\<\bp_i,\bq_j\>_{\C^r}(C_l^+(\beta_1)(R_{\beta_1}(X_i,Y_j))-C_l^+(\beta_2)(R_{\beta_2}(X_i,Y_j))))_{1\le
 i,j\le n}|\\
&\le c(M_{UV},M)\beta_1^{-\frac{1}{2}}\sum_{s=1}^n\\
&\quad\cdot\big|\det(\<\bp_i,\bq_j\>_{\C^r}(C_l^+(\beta_1)(R_{\beta_1}(X_i,Y_j))-C_l^+(\beta_2)(R_{\beta_2}(X_i,Y_j))))_{1\le
 i,j\le n\atop i\neq s,j\neq 1}\big|.
\end{align*}
Then, expanding the remaining determinant as in
 \eqref{eq_application_cauchy_binet} and substituting the determinant bound
 \eqref{eq_determinant_bound_UV_general} yield the result.
\end{proof}
 
\subsection{Application of the generalized ultra-violet integration}
\label{subsec_application_UV}
 \ \\ 
Since we have checked that the covariances $C_l^{\delta}$
$(l=1,2,\cdots,N_h,\delta = + ,-)$ satisfy the desired bound properties,
we can readily apply the propositions proved in Subsection
\ref{subsec_UV_general} and Subsection
\ref{subsec_UV_general_difference} to complete the Matsubara UV
integration.  With $V^{\delta}(\psi)$ $(\in \bigwedge \cV)$
($\delta=+,-$) defined in \eqref{eq_signed_interaction_polynomial}, set
\begin{align*}
&F^{\delta,N_h}(\psi):=-V^{\delta}(\psi),\\ 
&T^{\delta,N_h,(n)}(\psi):=0,\
 (\forall n\in \N_{\ge 2}),\ T^{\delta,N_h}(\psi):=0,\\
&J^{\delta,N_h}(\psi):= F^{\delta,N_h}(\psi)+T^{\delta,N_h}(\psi),\
 (\forall \delta\in\{+,-\}).
\end{align*}
Then, we inductively define $F^{\delta,l}(\psi)$,
$T^{\delta,l,(n)}(\psi)$ $(n\in \N_{\ge 2})$, $T^{\delta,l}(\psi)$,
$J^{\delta,l}(\psi)$ $\in
\bigwedge \cV$ $(l\in \{0,1,\cdots,N_h-1\})$  by
\eqref{eq_inductive_polynomial_UV_general} with the covariances 
$\{C_{l}^{\delta}\}_{l=1}^{N_h}$ for $\delta = +,-$ respectively.

\begin{proposition}\label{prop_UV_integration} 
Let the weight $\fw(0)$ be the same as in Lemma
 \ref{lem_covariance_UV_bound}, Lemma
 \ref{lem_covariance_UV_bound_difference} and the exponent $\fr$ be
 $1/2$. Assume that $h\ge e^{2E_1}$. Then, there exist constants $c_0$, $c_0'\in\R_{\ge 1}$,  which depend only on
 $b$, $d$, $M_{UV}$, $M$, $E_2$, and a constant $c\in\R_{>0}$ independent of
 any parameter such that if the parameters $M,\alpha\in \R_{\ge 1}$,
 $U_{\rho}\in \C$ $(\rho\in \cB)$ satisfy
\begin{align}
M\ge c,\ \alpha^2\ge cM,\ \sup_{\rho\in\cB}|U_{\rho}|\le\frac{1}{c ({c_0}+{c_0'})^2\alpha^4},\label{eq_UV_practice_assumption}
\end{align}
the following statements hold true.
\begin{enumerate}
\item\label{item_UV_bound}
For any $\delta \in \{+,-\}$, $r\in \{0,1\}$ and $l\in \{0,1,\cdots,N_h\}$,
\begin{align*}
&\frac{h}{N}\Bigg(|F^{\delta,l}_0|+\sum_{n=2}^{\infty}|T_0^{\delta,l,(n)}|\Bigg)\le
 \alpha^{-4},\\
&c_0\alpha^2\Bigg(\|F_2^{\delta,l}\|_{0,r}+\sum_{n=2}^{\infty}\|T_2^{\delta,l,(n)}\|_{0,r}\Bigg)\le
 1,\\
&M^{-2l}\sum_{m=2}^Nc_0^{\frac{m}{2}}M^{\frac{l}{2}m}\alpha^m\Bigg(\|F_m^{\delta,l}\|_{0,r}+\sum_{n=2}^{\infty}\|T_m^{\delta,l,(n)}\|_{0,r}\Bigg)\le
 1.
\end{align*}
\item\label{item_UV_analyticity} 
For any
     $\delta \in \{+,-\}$, $r\in \{0,1\}$ and $l\in \{0,1,\cdots,N_h\}$,
     $J^{\delta,l}(\psi)$ is continuous with $(U_1,U_2,\cdots,U_b)$ in 
$$
\{(U_1,U_2,\cdots,U_b)\in \C^b\ |\ |U_{\rho}|\le
     (c({c_0}+{c_0'})^2\alpha^4)^{-1},\ (\forall \rho\in \cB)\}
$$
and analytic with $(U_1,U_2,\cdots,U_b)$ in 
$$
\{(U_1,U_2,\cdots,U_b)\in \C^b\ |\ |U_{\rho}|<
     (c({c_0}+{c_0'})^2\alpha^4)^{-1},\ (\forall \rho\in \cB)\}.
$$
\item\label{item_UV_analytic_continuation} 
There exists a $(\beta,L)$-dependent, $h$-independent constant
     $c'\in\R_{>0}$ such that if the inequality $\sup_{\rho\in
     \cB}|U_{\rho}|\le c'$ additionally holds,
$$\Re \int e^{-V^{\delta}(\psi)}d\mu_{C_{>0}^{\delta}}(\psi)>0$$
and
$$
J^{\delta,0}(\psi)=\log\left(\int e^{-V^{\delta
     (\psi+\psi^1)}}d\mu_{C_{>0}^{\delta}}(\psi^1)
\right)
$$
for any $\delta \in \{+,-\}$. 
\item\label{item_UV_difference}
Assume that \eqref{eq_beta_h_assumption} holds. 
For any $\delta \in \{+,-\}$, $r\in \{0,1\}$ and $l\in
     \{0,1,\cdots,N_h\}$,
\begin{align*}
&\left|\frac{h}{N(\beta_1)}F^{\delta,l}_0(\beta_1)-\frac{h}{N(\beta_2)}F^{\delta,l}_0(\beta_2)\right|\\
&\quad+\sum_{n=2}^{\infty}\left|\frac{h}{N(\beta_1)}T_0^{\delta,l,(n)}(\beta_1)-
\frac{h}{N(\beta_2)}T_0^{\delta,l,(n)}(\beta_2)\right|\le
 \beta_1^{-\frac{1}{2}}\alpha^{-4},\\
&c_0\alpha^2\Bigg(|F_2^{\delta,l}(\beta_1)-F_2^{\delta,l}(\beta_2)|_{0}+\sum_{n=2}^{\infty}|T_2^{\delta,l,(n)}(\beta_1)-T_2^{\delta,l,(n)}(\beta_2)|_{0}\Bigg)\le \beta_1^{-\frac{1}{2}},\\
&M^{-2l}\sum_{m=2}^{N(\beta_2)}c_0^{\frac{m}{2}}M^{\frac{l}{2}m}\alpha^m\Bigg(|F_m^{\delta,l}(\beta_1)-F_m^{\delta,l}(\beta_2)|_{0}\\
&\quad+\sum_{n=2}^{\infty}|T_m^{\delta,l,(n)}(\beta_1)-T_m^{\delta,l,(n)}(\beta_2)|_{0}\Bigg)\le
 \beta_1^{-\frac{1}{2}}.
\end{align*}
\end{enumerate}
\end{proposition}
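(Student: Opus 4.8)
The plan is to read off statements \eqref{item_UV_bound}, \eqref{item_UV_analyticity} and \eqref{item_UV_difference} as direct consequences of Proposition \ref{prop_UV_integration_general} and Proposition \ref{prop_UV_integration_difference_general}, once the covariances $\{C_l^{\delta}\}_{l=1}^{N_h}$ have been matched to the hypotheses of those propositions, and to locate the real work in \eqref{item_UV_analytic_continuation}.

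First I would observe that, comparing \eqref{eq_signed_interaction_polynomial} and \eqref{eq_interaction_counterpart} with \eqref{eq_initial_polynomial_UV_general}, the initial polynomial $F^{\delta,N_h}(\psi)=-V^{\delta}(\psi)$ is exactly the one in \eqref{eq_initial_polynomial_UV_general} with the sign parameter chosen to be $+1$ for $\delta=+$ and $-1$ for $\delta=-$; that the explicit momentum-space formula \eqref{eq_introduction_UV_covariance} makes the spin symmetry \eqref{eq_spin_symmetry_covariance_general} (from the factor $\delta_{\s,\tau}$) and the translation invariance \eqref{eq_translation_invariance_covariance_general} (from the $e^{-i\<\bx-\by,\bk\>}$-structure) evident; and that, since $h\ge e^{2E_1}$, Lemma \ref{lem_covariance_UV_bound} furnishes the remaining conditions \eqref{eq_determinant_bound_UV_general}, \eqref{eq_decay_bound_UV_general}, \eqref{eq_tadpole_bound_UV_general} with $N_+=N_h$, a weight $\fw(0)$, exponent $\fr=1/2$ and constants $c_0,c_0'$ depending only on $b,d,M_{UV},M,E_2$. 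Since \eqref{eq_UV_practice_assumption} is precisely \eqref{eq_parameters_assumption_UV_general} for these constants, Proposition \ref{prop_UV_integration_general} gives \eqref{item_UV_bound} verbatim. For \eqref{item_UV_analyticity} I would run a downward induction on $l$: $J^{\delta,N_h}=-V^{\delta}$ is affine in $(U_1,\dots,U_b)$; the free integration and the operators $Ope(T,C_{l+1}^{\delta})$ are $(U)$-independent, so each $T^{\delta,l,(n)}$ is an $n$-fold product of $J^{\delta,l+1}$ and hence continuous, resp.\ analytic, on the closed, resp.\ open, polydisk $\{|U_{\rho}|\le(c(c_0+c_0')^2\alpha^4)^{-1}\}$ whenever $J^{\delta,l+1}$ is; and the geometric-in-$n$ bounds obtained inside the proof of Proposition \ref{prop_UV_integration_general} make $\sum_{n\ge2}T^{\delta,l,(n)}$ converge uniformly there, so $J^{\delta,l}=F^{\delta,l}+T^{\delta,l}$ inherits the regularity by the Weierstrass theorem.

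Statement \eqref{item_UV_difference} follows in the same way from Proposition \ref{prop_UV_integration_difference_general}. Under \eqref{eq_beta_h_assumption} and $h\ge e^{2E_1}$, Lemma \ref{lem_covariance_UV_bound_difference} supplies \eqref{eq_determinant_bound_UV_difference_general}, \eqref{eq_decay_bound_UV_difference_general}, \eqref{eq_tadpole_bound_UV_difference_general}; the anti-periodic translation property \eqref{eq_temperature_translation_covariance_general_again} of $C_l^{\delta}$ comes from the $x-y$-dependence in \eqref{eq_introduction_UV_covariance} together with $e^{i\beta\o}=-1$ for $\o\in\cM$; the invariance \eqref{eq_temperature_translation} of the kernels of $F^{\delta,l}$ and $T^{\delta,l,(n)}$ is Lemma \ref{lem_temperature_translation_UV}; and $|F_m^{\delta,N_h}(\beta_1)-F_m^{\delta,N_h}(\beta_2)|_0=0$ for $m\in\{2,4\}$ is immediate from \eqref{eq_anti_symmetric_kernel_initial_V}. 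With \eqref{eq_UV_practice_assumption} in force, Proposition \ref{prop_UV_integration_difference_general} then gives \eqref{item_UV_difference}.

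For \eqref{item_UV_analytic_continuation} I would proceed scale by scale. The cut-off $\chi_{h,0}$ is of the form $\tilde\chi(h|1-e^{i\o/h}|)$ with $\tilde\chi$ smooth, $[0,1]$-valued and supported in a fixed compact interval by \eqref{eq_support_description_UV_cut_off}, so the $h$-independent determinant bounds of Lemma \ref{lem_h_independent_determinant_bound} apply to $C_{>0}^{\delta}$; a routine $L^1$-estimate as in the proof of Lemma \ref{lem_exponential_appl_bound} then yields a $(\beta,L)$-dependent, $h$-independent $c'$ such that $|\int e^{-V^{\delta}(\psi)}d\mu_{C_{>0}^{\delta}}(\psi)-1|<1$, hence $\Re\int e^{-V^{\delta}(\psi)}d\mu_{C_{>0}^{\delta}}(\psi)>0$, whenever $\sup_{\rho}|U_{\rho}|\le c'$. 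Shrinking $c'$ further, in a still $(\beta,L)$-dependent but $h$-independent way, so that the constant part of $\int e^{zJ^{\delta,l+1}(\psi+\psi^1)}d\mu_{C_{l+1}^{\delta}}(\psi^1)$ avoids $\R_{\le 0}$ for $z$ in a neighbourhood of $[0,1]$ at every scale, I would identify $J^{\delta,l}(\psi)$ with $\log(\int e^{J^{\delta,l+1}(\psi+\psi^1)}d\mu_{C_{l+1}^{\delta}}(\psi^1))$: by \eqref{eq_explicit_tree_formula} the components $F^{\delta,l}$ and $T^{\delta,l,(n)}$ are exactly the Taylor coefficients in $z$ of $z\mapsto\log(\int e^{zJ^{\delta,l+1}(\psi+\psi^1)}d\mu_{C_{l+1}^{\delta}}(\psi^1))$, the $(\beta,L,h)$-independent bounds of \eqref{item_UV_bound} force the radius of convergence to exceed $1$, and positivity of the constant parts fixes the branch of the logarithm. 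Finally, iterating the covariance-splitting identity \eqref{eq_covariance_decomposition_integral_double} along $C_{>0}^{\delta}=\sum_{l=1}^{N_h}C_l^{\delta}$ and composing the logarithms, in the manner of \cite[Lemma C.2]{K3}, produces $J^{\delta,0}(\psi)=\log(\int e^{-V^{\delta}(\psi+\psi^1)}d\mu_{C_{>0}^{\delta}}(\psi^1))$. I expect this last step to be the main obstacle: one must keep the branch of the logarithm, the strictly-greater-than-one radius of convergence of each scalewise $z$-expansion and the equivalence of nested versus direct Gaussian integration mutually consistent, whereas \eqref{item_UV_bound}, \eqref{item_UV_analyticity} and \eqref{item_UV_difference} are essentially bookkeeping on top of propositions already in hand.
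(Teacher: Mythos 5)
Your treatment of \eqref{item_UV_bound}, \eqref{item_UV_analyticity} and \eqref{item_UV_difference} matches the paper: one checks that $\{C_l^{\delta}\}_{l=1}^{N_h}$ satisfy \eqref{eq_spin_symmetry_covariance_general}--\eqref{eq_tadpole_bound_UV_general} and their difference analogues via Lemmas \ref{lem_covariance_UV_bound}, \ref{lem_covariance_UV_bound_difference}, invokes Propositions \ref{prop_UV_integration_general} and \ref{prop_UV_integration_difference_general}, and runs the downward induction for analyticity exactly as you describe.

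The gap is in \eqref{item_UV_analytic_continuation}, at the step where you propose to shrink $c'$ ``in a still $(\beta,L)$-dependent but $h$-independent way'' so that the constant part of $\int e^{zJ^{\delta,l+1}(\psi+\psi^1)}d\mu_{C_{l+1}^{\delta}}(\psi^1)$ avoids $\R_{\le 0}$ at every scale, and so that the nested-versus-direct Gaussian integration identities can be composed across all $N_h$ scales. You assert this uniformity but do not prove it, and it is precisely the point the paper does \emph{not} establish: the estimates available at this stage (the bound $\|J_m^{\delta,l}\|_{0,0}\le c_0^{-m/2}M^{-lm/2+2l}\alpha^{-m}$ used for the $m=2$ term, and the positivity of $\Re\int e^{-V^{\delta}}d\mu_{\sum_{j=l+1}^{N_h}C_j^{\delta}}$ needed to invoke \cite[Lemma C.2]{K3} at each intermediate scale) yield constants $\tilde c$, $\tilde c'$ that the paper explicitly labels as $(\beta,L,h)$-dependent. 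Making them $h$-independent would require additional work (e.g.\ $h$-uniform determinant bounds on the partial sums $\sum_{j=l+1}^{N_h}C_j^{\delta}$) that your outline neither performs nor flags. The paper's resolution is different and is the missing idea here: it first proves $J^{\delta,0}(\psi)=\log\bigl(\int e^{-V^{\delta}(\psi+\psi^1)}d\mu_{C_{>0}^{\delta}}(\psi^1)\bigr)$ only on an $h$-\emph{dependent} small polydisk of couplings, then observes that both sides are analytic on the $h$-independent polydisk $\{\sup_{\rho}|U_{\rho}|<c'\}$ --- the left side by part \eqref{item_UV_analyticity}, the right side by Lemma \ref{lem_exponential_appl_bound} \eqref{item_exponential_appl_0th}, which gives $\Re\int e^{-V^{\delta}}d\mu_{C_{>0}^{\delta}}>0$ there with $c'$ $h$-independent --- and concludes by the identity theorem. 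Without this extension step (or a genuine proof of the $h$-uniformity you claim), your argument only establishes the identity on a domain that shrinks as $h\to\infty$, which is not what the proposition asserts.
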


\begin{remark}\label{rem_replacement_possible_UV_constant}
It will be shown in the proof below that 
the constant $c_0$ in Proposition \ref{prop_UV_integration} is equal to
 the maximum of $c_0$ appearing in Lemma \ref{lem_covariance_UV_bound} and Lemma
 \ref{lem_covariance_UV_bound_difference}. Since these lemmas hold for
 any larger constant and the generic constant $c$ is independent of $c_0$, we have the freedom to replace $c_0$ in
 Proposition \ref{prop_UV_integration} by any larger constant without
 changing the constant $c$. Such a
 replacement will be necessary when we connect the UV integration to
 the IR integration in Subsection
 \ref{subsec_application_IR}.
\end{remark}

\begin{remark}
The definition of $J^{\delta,l}(\psi)$ $(l=0,1,\cdots,N_h)$ depends on
 the parameter $M$. This means that we have to fix $M$ before introducing these
 polynomials. However, the definition of these polynomials does
 not depend on the parameter $\alpha$. For $J^{\delta,l}(\psi)$
 $(l=0,1,\cdots,N_h)$ the results of Proposition
 \ref{prop_UV_integration} hold for any $\alpha \in\R_{\ge 1}$
 satisfying \eqref{eq_UV_practice_assumption}. Bearing this fact in
 mind, we will use the
 results of \eqref{item_UV_bound}, \eqref{item_UV_analyticity} for a
large, $(\beta,L,h)$-dependent $\alpha$ to prove the claim
 \eqref{item_UV_analytic_continuation} during the proof of the
 proposition below.
\end{remark}

\begin{proof}[Proof of Proposition \ref{prop_UV_integration}]
Let $c_0$, $c_0'(\in\R_{\ge 1})$ be the maximum of $c_0$, $c_0'$
 appearing in Lemma \ref{lem_covariance_UV_bound} and Lemma
 \ref{lem_covariance_UV_bound_difference} respectively. Then, there
 exists a constant $c\in\R_{>0}$ independent of any parameter such that
 if \eqref{eq_UV_practice_assumption} holds with $c$, the results of
 Proposition \ref{prop_UV_integration_general}, Proposition \ref{prop_UV_integration_difference_general} hold true. In the following we
 assume \eqref{eq_UV_practice_assumption} with this $c$.

\eqref{item_UV_bound}, \eqref{item_UV_difference}: We can apply Proposition
 \ref{prop_UV_integration_general}, Proposition
 \ref{prop_UV_integration_difference_general} to justify
 \eqref{item_UV_bound}, \eqref{item_UV_difference} respectively.

\eqref{item_UV_analyticity}: Fix $\delta \in \{+,-\}$. Set
$$
D:=\{(U_1,U_2,\cdots,U_b)\in \C^b\ |\ |U_{\rho}|<
     (c({c_0}+{c_0'})^2\alpha^4)^{-1},\ (\forall \rho\in \cB)\}.
$$
Apparently $J^{\delta, N_h}(\psi)$ is continuous in $\overline{D}$ and
 analytic in $D$. Assume that $l\in \{0,1,\cdots,N_h-1\}$ and
 $J^{\delta,l+1}(\psi)$ is continuous in $\overline{D}$ and analytic in
 $D$. Then, so are $F^{\delta,l}(\psi)$, $T^{\delta,l,(n)}(\psi)$
 $(n\in \N_{\ge 2})$, since these consist of finite sums and products of
 $J^{\delta,l+1}(\psi)$. The claim \eqref{item_UV_bound} implies that
 $\sum_{n=2}^{\infty}T^{\delta,l,(n)}(\psi)$ converges uniformly with respect to $(U_1,U_2,\cdots,U_b)$
 in $\overline{D}$. Therefore, $T^{\delta,l}(\psi)$ is continuous in
 $\overline{D}$ and analytic in $D$, and thus so is
 $J^{\delta,l}(\psi)$. The induction with respect to $l$ verifies the claim.

\eqref{item_UV_analytic_continuation}: Fix $\delta \in
 \{+,-\}$. First let us note that by definition there exists a $(\beta,L,h)$-dependent constant
 $\tilde{c}\in\R_{>0}$ such that if $\sup_{\rho\in\cB}|U_{\rho}|\le
 \tilde{c}$,
\begin{align}
\Re \int
 e^{-V^{\delta}(\psi)}d\mu_{\sum_{j=l+1}^{N_h}C_j^{\delta}}(\psi)>0,\
 (\forall l\in\{0,1,\cdots,N_h-1\}).\label{eq_positive_real_part_UV}
\end{align}

It follows from \eqref{item_UV_bound} that
\begin{align}
&|J_0^{\delta,l}|\le\frac{N}{h}\alpha^{-4},\label{eq_UV_result_simplified}\\
&\|J_m^{\delta,l}\|_{0,0}\le
 c_0^{-\frac{m}{2}}M^{-\frac{l}{2}m+2l}\alpha^{-m},\notag\\
&(\forall l\in \{0,1,\cdots,N_h\},m\in \{2,3,\cdots,N\}).\notag
\end{align}
This implies that there exists a $(\beta,L,h)$-dependent constant
 $\tilde{c}'\in\R_{>0}$ such that if $\alpha\ge \tilde{c}'$,
\begin{align*}
&\Re \int
 e^{z J^{\delta,l+1}(\psi)}d\mu_{C_{l+1}^{\delta}}(\psi)>0,\\
 &(\forall l\in\{0,1,\cdots,N_h-1\},z\in\C\text{ with }|z|\le 2).
\end{align*}
Thus, the Grassmann polynomials
$$
\log\left(\int
 e^{zJ^{\delta,l+1}(\psi+\psi^1)}d\mu_{C_{l+1}^{\delta}}(\psi^1)\right)\ (l=0,1,\cdots,N_h-1)
$$
are analytic with $z$ in $\{z\in \C\ |\ |z|<2 \}$ if
\begin{align}
\sup_{\rho\in\cB}|U_{\rho}|\le\frac{1}{c
 ({c_0}+{c_0'})^2\tilde{c}'^4}.\label{eq_coupling_constant_temporary_small}
\end{align}
Therefore, if
 \eqref{eq_coupling_constant_temporary_small} holds,
\begin{align}
&\log\left(\int
 e^{J^{\delta,l+1}(\psi+\psi^1)}d\mu_{C_{l+1}^{\delta}}(\psi^1)\right)\label{eq_desired_equality_UV_pre}\\
&=\sum_{n=1}^{\infty}\frac{1}{n!}\left(\frac{d}{dz}\right)^n\log\left(\int
 e^{zJ^{\delta,l+1}(\psi+\psi^1)}d\mu_{C_{l+1}^{\delta}}(\psi^1)\right)\Bigg|_{z=0}\notag\\
&=F^{\delta,l}(\psi)+\sum_{n=2}^{\infty}T^{\delta,l,(n)}(\psi)\notag\\
&=J^{\delta,l}(\psi),\ (\forall l\in\{0,1,\cdots,N_h-1\}).\notag
\end{align}

Let us show that 
\begin{align}
&J^{\delta,l}(\psi)=\log\left(\int e^{-V^{\delta
     (\psi+\psi^1)}}d\mu_{\sum_{j=l+1}^{N_h}C_{j}^{\delta}}(\psi^1)\right),\label{eq_desired_equality_UV}\\
&(\forall l\in \{0,1,\cdots,N_h-1\})\notag
\end{align}
on the assumption 
\begin{align}
\sup_{\rho\in\cB}|U_{\rho}|\le\min\left\{\tilde{c},\frac{1}{c
 ({c_0}+{c_0'})^2\tilde{c}'^4}\right\}.\label{eq_coupling_constant_temporary_small_next}
\end{align}
The equality \eqref{eq_desired_equality_UV} for $l=N_h-1$ holds, since
 it is equivalent to \eqref{eq_desired_equality_UV_pre} for $l=N_h-1$. Assume that
 \eqref{eq_desired_equality_UV} holds for $l+1$. By the condition
 \eqref{eq_positive_real_part_UV} we can apply \cite[\mbox{Lemma
 C.2}]{K3} to justify the equality 
$$
e^{J^{\delta,l+1}(\psi)}=\int e^{-V^{\delta}(\psi+\psi^1)}d\mu_{\sum_{j=l+2}^{N_h}C_{j}^{\delta}}(\psi^1).
$$
Moreover, by \eqref{eq_desired_equality_UV_pre} and
 \cite[\mbox{Proposition I.21}]{FKT2}, 
\begin{align*}
J^{\delta,l}(\psi)&=\log\left(\int
 e^{J^{\delta,l+1}(\psi+\psi^1)}d\mu_{C_{l+1}^{\delta}}(\psi^1)\right)\\
&=\log\left(\int\int
 e^{-V^{\delta}(\psi+\psi^1+\psi^2)}d\mu_{\sum_{j=l+2}^{N_h}C_j^{\delta}}(\psi^2)d\mu_{C_{l+1}^{\delta}}(\psi^1)\right)\\
&=\log\left(\int e^{-V^{\delta
     (\psi+\psi^1)}}d\mu_{\sum_{j=l+1}^{N_h}C_{j}^{\delta}}(\psi^1)\right).
\end{align*}
Thus, the induction concludes that the equality
 \eqref{eq_desired_equality_UV} holds for all $l\in
 \{0,1,\cdots,N_h-1\}$ on the assumption
 \eqref{eq_coupling_constant_temporary_small_next}.

By Lemma \ref{lem_exponential_appl_bound}
 \eqref{item_exponential_appl_0th} there exists a $(\beta,L)$-dependent,
 $h$-independent constant $c'\in \R_{>0}$ such that if
 $\sup_{\rho\in\cB}|U_{\rho}|\le c'$, 
$$
\Re \int e^{-V^{\delta}(\psi)}d\mu_{C_{>0}^{\delta}}(\psi)>0,
$$
and thus the Grassmann polynomial 
$$
\log\left(\int e^{-V^{\delta}(\psi+\psi^1)}d\mu_{C_{>0}^{\delta}}(\psi^1)\right)
$$
is analytic with $(U_1,U_2,\cdots,U_b)$ in
 $\{(U_1,U_2,\cdots,U_b)\in\C^b\ |\ |U_{\rho}|<c',$ $(\forall \rho\in\cB)\}$. 
On the other hand, by the claim \eqref{item_UV_analyticity} for $l=0$
 and taking the $h$-independent constant $c'$ smaller if necessary we
 see that $J^{\delta,0}(\psi)$ is analytic
 in $\{(U_1,U_2,\cdots,U_b)\in\C^b\ |\ |U_{\rho}|<c',\
 (\forall \rho\in\cB)\}$.
Therefore, the identity theorem ensures that the equality
 \eqref{eq_desired_equality_UV} for $l=0$ holds for any
 $(U_1,U_2,\cdots,U_b)\in\C^b$ satisfying $|U_{\rho}|<c'$ $(\forall
 \rho\in\cB)$.
\end{proof}

\section{The infrared integration of the model}\label{sec_IR_model}

Here we start the infrared analysis of the free energy density defined
in Subsection \ref{subsec_model}. As we saw in Remark
\ref{rem_phase_freedom}, the free energy density is independent of how
to choose the argument $\theta_L(\cdot,\cdot):\Z^2\times \Z^2\to \R$
satisfying \eqref{eq_phase_condition} and \eqref{eq_flux_pi}. Therefore,
let us focus on the model Hamiltonian with the argument $\theta_L$
simply defined by \eqref{eq_phase_example}. The periodic properties of
the hopping amplitude and the magnitude of the on-site coupling enable us to redefine
the free energy density as that governed by a 4-band Hamiltonian, whose
hopping amplitude and coupling constants are no longer dependent on the
position vector but on the band index. This 4-band many-electron system can be analyzed by means
of the general estimations constructed so far. The essential ingredient of the IR
integration process considered in this section is an extension of
Giuliani-Mastropietro's RG method designed for the 2-band Hamiltonian
(\cite{GM}). As in Giuliani-Mastropietro's RG we make use of 
the symmetries of Grassmann polynomials to show that covariances for the IR integration have good
bound properties. Then, applying the framework developed in Subsection
\ref{subsec_IR_general} and Subsection
\ref{subsec_IR_general_difference}, we will move on to the proof of 
Theorem \ref{thm_main_theorem}.

\subsection{The four-band formulation}\label{subsec_4_band}
Let us set up a 4-band Hamiltonian whose free energy density is equal to
that considered in Theorem \ref{thm_main_theorem}. From now we assume that $d=2$, $\bu_1=\bv_1=\be_1(=(1,0))$, $\bu_2=\bv_2=\be_2(=(0,1))$ so that
\begin{align*}
&\G=\left\{\sum_{j=1}^2m_j\be_j\ \Big|\ m_j\in\{0,1,\cdots,L-1\}\
 (j=1,2)\right\},\\
&\G^*=\left\{\frac{2\pi}{L}\sum_{j=1}^2m_j\be_j\ \Big|\ m_j\in\{0,1,\cdots,L-1\}\
 (j=1,2)\right\}.
\end{align*}
Since we are going to define a 4-band model, we assume that $b=4$ and 
$\cB=\{1,2,3,4\}$. The crystal lattice in a box is identified
with $\cB\times \G$. In the case $L=2$, the lattice $\cB\times \G$ can
be pictured as in Figure \ref{fig_4_band}.
 \begin{figure}
\begin{center}
\begin{picture}(120,120)(0,0)
\put(15,15){$\bullet$}
\put(75,15){$\bullet$}
\put(15,75){$\bullet$}
\put(75,75){$\bullet$}

\put(45,15){$\circ$}
\put(105,15){$\circ$}
\put(45,75){$\circ$}
\put(105,75){$\circ$}

\put(15,45){$\diamond$}
\put(75,45){$\diamond$}
\put(15,105){$\diamond$}
\put(75,105){$\diamond$}

\put(45,45){$\star$}
\put(105,45){$\star$}
\put(45,105){$\star$}
\put(105,105){$\star$}
\end{picture}
 \caption{The lattice $\cB\times \G$ for $L=2$, where the symbol ``$\bullet$''
 denotes the sites of $\{1\}\times\G$, the symbol ``$\circ$''
 denotes the sites of $\{2\}\times\G$, the symbol ``$\diamond$''
 denotes the sites of $\{3\}\times\G$, the symbol ``$\star$''
 denotes the sites of $\{4\}\times\G$. }\label{fig_4_band}
\end{center}
\end{figure}
For any $\bx\in \G$ we assume that the site $(2,\bx)$ is right to the
site $(1,\bx)$, the site $(3,\bx)$ is above $(1,\bx)$, and the site
$(4,\bx)$ is right to the site $(3,\bx)$, as described in Figure \ref{fig_site_arrangement}.
\begin{figure}
\begin{center}
\begin{picture}(55,55)(0,0)
\put(5,15){$\bullet$}
\put(0,0){$(1,\bx)$}
\put(45,15){$\circ$}
\put(40,0){$(2,\bx)$}
\put(5,55){$\diamond$}
\put(0,40){$(3,\bx)$}
\put(45,55){$\star$}
\put(40,40){$(4,\bx)$}
\end{picture}
 \caption{The arrangement of 4 sites $(1,\bx),(2,\bx),(3,\bx),(4,\bx)$.}\label{fig_site_arrangement}
\end{center}
\end{figure}

With the parameters $t_{h,e}$, $t_{h,o}$, $t_{v,e}$, $t_{v,o}\in\R_{>0}$
we define $E(\cdot):\R^2\to \Mat(4,\C)$ by
\begin{align}
&E(\bk)\label{eq_specific_dispersion_relation}\\
&:=\left(\begin{array}{cccc}0 & t_{h,e}(1+e^{-ik_1}) &
	 t_{v,e}(1+e^{-ik_2}) & 0 \\ 
 t_{h,e}(1+e^{ik_1}) & 0 & 0 &  -t_{v,o}(1+e^{-ik_2}) \\
t_{v,e}(1+e^{ik_2}) & 0 & 0 &  t_{h,o}(1+e^{-ik_1}) \\
 0 &  -t_{v,o}(1+e^{ik_2}) & t_{h,o}(1+e^{ik_1}) & 0 \end{array}
\right).\notag
\end{align}
We can see that $E(\cdot)$ satisfies \eqref{eq_dispersion_hermite} and
\eqref{eq_dispersion_periodic}. With this $E(\cdot)$ we define the free
Hamiltonian $H_0$ by \eqref{eq_free_hamiltonian_general}. 
For notational
convenience, set $U_1:=U_{e,e}$, $U_2:=U_{o,e}$, $U_3:=U_{e,o}$,
$U_4:=U_{o,o}$ with the parameters $U_{e,e}$, $U_{o,e}$, $U_{e,o}$,
$U_{o,o}\in\R$ introduced in Subsection \ref{subsec_model}. We define
the interacting part $V$ of the Hamiltonian by
\eqref{eq_interacting_hamiltonian_general}. Then, we define the Hamiltonian
$H:F_f(L^2(\cB\times \G\times \spin))\to F_f(L^2(\cB\times \G\times
\spin))$ by $H:=H_0+V$. 
\begin{lemma}\label{lem_free_energy_equivalence}
The quantity 
$$
-\frac{1}{\beta(2L)^2}\log(\Tr e^{-\beta H})
$$
derived from this Hamiltonian by the trace operation over
 $F_f(L^2(\cB\times \G\times \spin))$ is equal to the free energy
 density 
$$
-\frac{1}{\beta(2L)^2}\log(\Tr e^{-\beta \sH})
$$
considered in Theorem \ref{thm_main_theorem}.
\end{lemma}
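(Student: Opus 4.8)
### Proof proposal

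The plan is to exhibit an explicit unitary equivalence between the single‑band Hamiltonian $\sH$ acting on $F_f(L^2(\G(2L)\times\spin))$ and the four‑band Hamiltonian $H$ acting on $F_f(L^2(\cB\times\G\times\spin))$, from which equality of the traces, hence of the free energy densities, follows at once. First I would set up the bijection between site sets. Since $L$ here is the linear size of the four‑band lattice $\G$ while $\G(2L)=\{0,1,\dots,2L-1\}^2$ has linear size $2L$, the natural map sends $(\rho,\bx)\in\cB\times\G$ with $\bx=m_1\be_1+m_2\be_2$ to the point of $\G(2L)$ determined by the picture in Figure \ref{fig_site_arrangement}: band $1\mapsto(2m_1,2m_2)$, band $2\mapsto(2m_1+1,2m_2)$, band $3\mapsto(2m_1,2m_2+1)$, band $4\mapsto(2m_1+1,2m_2+1)$. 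This is a bijection $\iota:\cB\times\G\to\G(2L)$, and it induces a unitary $W:F_f(L^2(\cB\times\G\times\spin))\to F_f(L^2(\G(2L)\times\spin))$ by $W\psi^*_{\rho\bx\s}W^*=\psi^*_{\iota(\rho,\bx)\s}$ (the second‑quantization functor applied to the one‑particle unitary permuting basis vectors and fixing the spin label).

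Next I would check that $W H W^* = \sH$. For the interaction this is immediate: $V$ in \eqref{eq_interacting_hamiltonian_general} is a sum over $(\rho,\bx)$ of the Hubbard term with coupling $U_\rho$, and under $\iota$ the band label $\rho$ corresponds exactly to the residue class of $\iota(\rho,\bx)$ in $(\Z/2\Z)^2$; with the identifications $U_1=U_{e,e}$, $U_2=U_{o,e}$, $U_3=U_{e,o}$, $U_4=U_{o,o}$ and the definition of $U(\cdot):\Z^2\to\R$ in Subsection \ref{subsec_model}, $WVW^*=\sV$. The real content is the free part. One should rewrite $H_0$ in \eqref{eq_free_hamiltonian_general} with $E(\cdot)$ given by \eqref{eq_specific_dispersion_relation} in position space: carrying out the $\bk$‑sum, $\frac{1}{L^2}\sum_{\bk\in\G^*}e^{i\<\bk,\bx-\by\>}E(\bk)(\rho,\eta)$ produces, for each pair of bands $(\rho,\eta)$, precisely two nonzero hopping amplitudes corresponding to the two monomials $1$ and $e^{\mp ik_j}$ in the entries of $E$, i.e.\ nearest‑neighbour hoppings on $\G$ with a possible shift by $\be_1$ or $\be_2$. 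Matching these term by term with $\sH_0$ in \eqref{eq_free_part_original} for the phase $\theta_L$ of \eqref{eq_phase_example}: the entries $t_{h,e}(1+e^{\mp ik_1})$ between bands $1,2$ give horizontal hopping $t_{h,e}$ on the even rows (those through bands $1,2$), the entries $t_{h,o}(1+e^{\mp ik_1})$ between bands $3,4$ give horizontal hopping $t_{h,o}$ on the odd rows, the entries $t_{v,e}(1+e^{\mp ik_2})$ between bands $1,3$ give vertical hopping $t_{v,e}$ on the even columns, and the entries $-t_{v,o}(1+e^{\mp ik_2})$ between bands $2,4$ give vertical hopping $-t_{v,o}$ on the odd columns. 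The sign on the $t_{v,o}$ entry is exactly the Peierls phase $e^{i\pi}=-1$ of \eqref{eq_phase_example}, so $WH_0W^*=\sH_0$, and therefore $WHW^*=\sH$.

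Finally, since $W$ is unitary, $\Tr e^{-\beta H}=\Tr(W e^{-\beta H}W^*)=\Tr e^{-\beta WHW^*}=\Tr e^{-\beta\sH}$, and dividing by $-\beta(2L)^2$ gives the claimed equality of free energy densities. The main obstacle is purely bookkeeping: one must be careful that the Fourier sum over $\G^*$ (linear size $L$) reconstitutes hoppings on the lattice $\G(2L)$ (linear size $2L$) correctly, that the periodic boundary conditions match (the $2L$‑periodicity of $\theta_L$ and $t(\cdot,\cdot)$ on $\Z^2$ corresponds to the $L$‑periodicity of the four‑band data together with the two‑site basis), and that no hopping is double‑counted or missed when passing between the $\sum_{j=1}^2(\,\cdot + \,\cdot)$ form of \eqref{eq_free_part_original} and the $\sum_{\rho,\eta}$ form of \eqref{eq_free_hamiltonian_general}. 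Once the dictionary $\iota$ and the phase convention are fixed, each of the four hopping types and the on‑site term is verified by a short direct computation; I would present these as a single labelled comparison rather than four separate calculations.
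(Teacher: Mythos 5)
Your proposal is essentially the paper's own proof: the site dictionary you describe (band $1\mapsto(2m_1,2m_2)$, band $2\mapsto(2m_1+1,2m_2)$, etc.) is exactly the map $(\rho,\bx)\mapsto 2\bx+\be(\rho)$ used in the paper, the induced second-quantized unitary is the paper's map $G$, and the term-by-term matching of the four hopping types (with the sign $-t_{v,o}$ realizing the Peierls phase of \eqref{eq_phase_example}) and of the interaction is carried out the same way. The one step you omit is the last one: your unitary equivalence gives $\Tr e^{-\beta H}=\Tr e^{-\beta \sH}$ only for $\sH$ with the specific phase \eqref{eq_phase_example}, whereas the lemma refers to the free energy density of Theorem \ref{thm_main_theorem}, which allows any phase satisfying \eqref{eq_phase_condition} and \eqref{eq_flux_pi}; to close this you must invoke the gauge invariance of Remark \ref{rem_phase_freedom} (i.e.\ Lemma \ref{lem_gauge_invariance}), as the paper does in its final sentence.
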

\begin{proof}
For $\rho \in\cB$ set 
\begin{align}
\be(\rho):=\left\{\begin{array}{ll} \b0,&\text{if }\rho=1,\\
                                    \be_1,&\text{if }\rho=2,\\
                                    \be_2,&\text{if }\rho=3,\\
                                    \be_1+\be_2,&\text{if }\rho=4.
\end{array}               
\right.\label{eq_key_vectors}
\end{align}
Then, define the linear map $G:F_f(L^2(\cB\times \G\times \spin))\to 
F_f(L^2(\G(2L)\times \spin))$ by 
\begin{align*}
&G\O:=\O_{2L},\\
&G(\psi_{\rho_1\bx_1\s_1}^*\psi_{\rho_2\bx_2\s_2}^*\cdots
 \psi_{\rho_n\bx_n\s_n}^*\O):=
 \psi_{2\bx_1+\be(\rho_1)\s_1}^*\psi_{2\bx_2+\be(\rho_2)\s_2}^*\cdots
 \psi_{2\bx_n+\be(\rho_n)\s_n}^*\O_{2L},\\
&(\forall
 (\rho_j,\bx_j,\s_j)\in\cB\times \G\times \spin\ (j=1,2,\cdots,n)),
\end{align*}
and by linearity. We can check that the map $G$ is unitary. By
 definition,
\begin{align}\label{eq_free_hamiltonian_another_form}
H_0&=\sum_{(\bx,\s)\in\G\times\spin}(t_{h,e}\psi_{1,\bx\s}^*(\psi_{2,\bx\s}+\psi_{2,\bx-\be_1\s})
+t_{v,e}\psi_{1,\bx\s}^*(\psi_{3,\bx\s}+\psi_{3,\bx-\be_2\s})\\
&\qquad\quad-t_{v,o}\psi_{2,\bx\s}^*(\psi_{4,\bx\s}+\psi_{4,\bx-\be_2\s})
+t_{h,o}\psi_{3,\bx\s}^*(\psi_{4,\bx\s}+\psi_{4,\bx-\be_1\s}))
+\text{h.c}, \notag
\end{align}
where the notation `h.c'
 means that the adjoint operator of the operator in front is
 placed. Note that
\begin{align*}
&GH_0G^*\\
&=\sum_{(\bx,\s)\in\G\times\spin}(t_{h,e}\psi_{2\bx\s}^*(\psi_{2\bx+\be_1\s}+\psi_{2\bx-\be_1\s})
+t_{v,e}\psi_{2\bx\s}^*(\psi_{2\bx+\be_2\s}+\psi_{2\bx-\be_2\s})\\
&\qquad\qquad-t_{v,o}\psi_{2\bx+\be_1\s}^*(\psi_{2\bx+\be_1+\be_2\s}+\psi_{2\bx+\be_1-\be_2\s})\\
&\qquad\qquad +t_{h,o}\psi_{2\bx+\be_2\s}^*(\psi_{2\bx+\be_1+\be_2\s}+\psi_{2\bx-\be_1+\be_2\s}))
+\text{h.c}\\
&=\mathsf{H}_0,
\end{align*}
where $\sH_0$ is the operator defined in \eqref{eq_free_part_original}
 with the phase $\theta_L$ defined in \eqref{eq_phase_example}. Similarly we
 can confirm that $GVG^*=\sV$, which was defined in
 \eqref{eq_interacting_part_original}. Thus, we have that
$\Tr e^{-\beta H}=\Tr e^{-\beta GHG^*}=\Tr e^{-\beta \sH}$
with the Hamiltonian $\sH$ containing the phase \eqref{eq_phase_example}. Then,
 the claim follows from Remark \ref{rem_phase_freedom}.
\end{proof}

Lemma \ref{lem_free_energy_equivalence} tells us that it suffices to prove the
same statements as in Theorem \ref{thm_main_theorem} for 
$$
-\frac{1}{\beta L^2}\log(\Tr e^{-\beta H})
$$  
with the Hamiltonian $H$ defined above. Moreover, we will later confirm that
the claims of Theorem \ref{thm_main_theorem} follow from the theorem
proved under the assumption 
\begin{align}\label{eq_hopping_amplitude_normalized}
\max\{ t_{h,e}, t_{h,o}, t_{v,e}, t_{v,o}\}=1. 
\end{align}
Thus, from now we assume \eqref{eq_hopping_amplitude_normalized} unless
otherwise stated.

\subsection{The cut-off function for the infrared integration}
\label{subsec_cut_off_IR}

Here we define a cut-off function whose support covers the zero set of
the free dispersion relation. In order to choose such a cut-off function
correctly, let us study properties of $E$ first.
\begin{lemma}\label{lem_estimate_free_dispersion_relation}
The following inequalities hold.
\begin{enumerate}
\item\label{item_derivative_upper_bound_dispersion}
\begin{align*}
\left\|\left(\frac{\partial}{\partial
 k_j}\right)^nE(\bk)\right\|_{4\times 4}\le 4,\ (\forall
 \bk\in\R^2,n\in\N\cup \{0\},j\in\{1,2\}).
\end{align*}
\item\label{item_upper_bound_free_propagator}
\begin{align*}
\|(i\o I_4-E(\bk))^{-1}\|_{4\times 4}\le
 \Bigg(\o^2+f_{\bt}\sum_{j=1}^2(1+\cos k_j)\Bigg)^{-\frac{1}{2}},\ (\forall
 (\o,\bk)\in\R^3),
\end{align*}
where $f_{\bt}$ is the quantity defined in \eqref{eq_hopping_amplitude_function}.
\end{enumerate}
\end{lemma}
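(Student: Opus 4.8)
The plan is to establish the two estimates by direct, elementary computations on the explicit $4\times 4$ matrix $E(\bk)$ given in \eqref{eq_specific_dispersion_relation}, using the normalization \eqref{eq_hopping_amplitude_normalized}.

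For part \eqref{item_derivative_upper_bound_dispersion}, first I would observe that every entry of $E(\bk)$ is, up to a hopping amplitude bounded by $1$, of the form $1+e^{\pm i k_j}$ for $j\in\{1,2\}$ (or identically $0$). Differentiating $n\ge 1$ times with respect to $k_j$ kills the constant and leaves $\pm(\pm i)^{n}e^{\pm i k_j}$ on at most the entries that depended on $k_j$, so $(\partial/\partial k_j)^n E(\bk)$ is again a matrix each of whose entries has modulus at most $1$ (for $n\ge 1$; for $n=0$ each entry has modulus at most $2$). The structural point is that, in each row and each column of $E(\bk)$, at most two entries are nonzero — the matrix has the bipartite hopping pattern of Figure \ref{fig_site_arrangement}. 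Hence for any $\bv\in\C^4$ with $\|\bv\|_{\C^4}=1$, the $\rho$-th component of $((\partial/\partial k_j)^n E(\bk))\bv$ is a sum of at most two terms, each bounded by $|v_\eta|$; by Cauchy--Schwarz this gives $\|((\partial/\partial k_j)^n E(\bk))\bv\|_{\C^4}^2\le \sum_\rho 2\sum_{\eta}|({\cdots})(\rho,\eta)|^2|v_\eta|^2$, and summing the squared entries over each column (which contributes at most $2$ for $n=0$ and at most $1$ for $n\ge 1$) yields the bound $4$ in all cases. (A cruder route: bound $\|A\|_{4\times 4}$ by $(\sum_{\rho,\eta}|A(\rho,\eta)|^2)^{1/2}$, the Hilbert--Schmidt norm; with at most $8$ nonzero entries each of modulus $\le 2$ this already gives $\sqrt{32}$, which is not good enough, so the row/column sparsity has to be used to reach the sharp constant $4$.)

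For part \eqref{item_upper_bound_free_propagator}, the key is that $E(\bk)$ is Hermitian by \eqref{eq_dispersion_hermite}, so $i\o I_4-E(\bk)$ is a normal matrix with eigenvalues $i\o-\alpha_\rho(\bk)$ where $\alpha_\rho(\bk)\in\R$ are the eigenvalues of $E(\bk)$; therefore
\begin{align*}
\|(i\o I_4-E(\bk))^{-1}\|_{4\times 4}=\max_{\rho}\frac{1}{|i\o-\alpha_\rho(\bk)|}=\frac{1}{\min_\rho(\o^2+\alpha_\rho(\bk)^2)^{1/2}}.
\end{align*}
So it suffices to show $\alpha_\rho(\bk)^2\ge f_{\bt}\sum_{j=1}^2(1+\cos k_j)$ for every $\rho$, i.e. $\min_\rho\alpha_\rho(\bk)^2$, which is the square of the smallest singular value $=$ smallest $|\alpha_\rho|$, is bounded below by that quantity. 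Equivalently $\det E(\bk)\neq 0$ away from the relevant set and a quantitative lower bound holds. I would compute $E(\bk)E(\bk)^*=E(\bk)^2$ (using Hermiticity) explicitly; because of the bipartite block structure — ordering the basis so that $E$ has the off-diagonal block form $\begin{pmatrix}0&B(\bk)\\ B(\bk)^*&0\end{pmatrix}$ with a suitable $2\times 2$ block $B(\bk)$ — one gets $E(\bk)^2=\mathrm{diag}(B B^*, B^* B)$, so $\alpha_\rho(\bk)^2$ are the eigenvalues of $BB^*$ (each with multiplicity two). Then $\min_\rho\alpha_\rho(\bk)^2=\lambda_{\min}(BB^*)\ge \det(BB^*)/\|BB^*\|_{2\times 2}=|\det B(\bk)|^2/\|B(\bk)\|_{2\times 2}^2$, and I would compute $|\det B(\bk)|$ explicitly as a product of factors $|1+e^{\pm i k_j}|^2=2(1+\cos k_j)$ times hopping amplitudes, bound $\|B(\bk)\|_{2\times 2}$ above in terms of $\max\{t_{h,e},t_{h,o},t_{v,e},t_{v,o}\}=1$, and check that the resulting lower bound is exactly (or at least) $f_{\bt}\sum_{j=1}^2(1+\cos k_j)$ after matching with the definition \eqref{eq_hopping_amplitude_function} of $f_{\bt}$. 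The ratio factors $t_{h,o}/t_{h,e}$, etc., in $f_{\bt}$ are precisely what appears when one passes from $|\det B|^2/\|B\|^2$ (a degree-$4$ object in the $1+\cos k_j$) back down to the degree-$2$ bound $f_\bt\sum_j(1+\cos k_j)$, using $1+\cos k_j\le 2$.

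The main obstacle I anticipate is part \eqref{item_upper_bound_free_propagator}: correctly identifying the bipartite block $B(\bk)$ for the given ordering $\cB=\{1,2,3,4\}$ of \eqref{eq_specific_dispersion_relation} (the sites $1,4$ form one sublattice and $2,3$ the other, so a permutation of rows/columns is needed), computing $\det B(\bk)$ without sign or conjugation errors, and then doing the bookkeeping that turns $|\det B(\bk)|^2/\|B(\bk)\|^2_{2\times 2}$ into the stated bound with the exact constant $f_{\bt}$ — in particular checking that the minimum over the two singular values is controlled, not just their product. Part \eqref{item_derivative_upper_bound_dispersion} I expect to be routine once the row/column sparsity is invoked.
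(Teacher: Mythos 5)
Your part \eqref{item_derivative_upper_bound_dispersion} is fine: the row/column sparsity (Schur test) argument does give $\|(\partial/\partial k_j)^nE(\bk)\|_{4\times 4}\le\sqrt{\|\cdot\|_1\|\cdot\|_\infty}\le 4$, though your intermediate counts are off (for $n=0$ each column's sum of squared moduli is at most $8$, not $2$; the product $2\times 8=16$ is what yields the constant $4$). This is a genuinely different and more elementary route than the paper's, which instead diagonalizes: the eigenvalues of $E(\bk)$ are $X_{p,q}(\bk)=p(A(\bk)+q\sqrt{A(\bk)^2-4B(\bk)^2})^{1/2}$ with $A(\bk):=(t_{h,e}^2+t_{h,o}^2)(1+\cos k_1)+(t_{v,e}^2+t_{v,o}^2)(1+\cos k_2)$ and $B(\bk):=t_{h,e}t_{h,o}(1+\cos k_1)+t_{v,e}t_{v,o}(1+\cos k_2)$, so $\|E(\bk)\|_{4\times4}\le\sqrt{2A(\bk)}\le 4$, and the derivative matrices have eigenvalues $\pm t_{h,e},\pm t_{h,o}$ (resp.\ $\pm t_{v,e},\pm t_{v,o}$). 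For part \eqref{item_upper_bound_free_propagator} your reduction to $\min_\rho\alpha_\rho(\bk)^2\ge f_{\bt}\sum_j(1+\cos k_j)$ via the bipartite block $B(\bk)$ and $\lambda_{\min}(BB^*)=\det(BB^*)/\lambda_{\max}(BB^*)=4B(\bk)^2/\lambda_{\max}(BB^*)$ is exactly the paper's computation in different clothing (note $\mathrm{tr}(BB^*)=2A(\bk)$, $\det(BB^*)=4B(\bk)^2$, so the $\alpha_\rho^2$ are $A\pm\sqrt{A^2-4B^2}$).

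The gap is in how you propose to finish part \eqref{item_upper_bound_free_propagator}. Bounding $\|B(\bk)\|_{2\times2}$ ``in terms of $\max\{t_{h,e},t_{h,o},t_{v,e},t_{v,o}\}=1$'' and invoking $1+\cos k_j\le 2$ cannot produce the constant $f_{\bt}$: if you bound $\|B(\bk)\|^2$ by a $\bk$-independent constant you get a lower bound quadratic in $\sum_j(1+\cos k_j)$, which degenerates too fast near $(\pi,\pi)$; if you bound it by $\mathrm{tr}(BB^*)\le 4\sum_j(1+\cos k_j)$ you get the constant $(\min\{t_{h,e}t_{h,o},t_{v,e}t_{v,o}\})^2$, which is strictly smaller than $f_{\bt}$ when the horizontal and vertical amplitudes are very different (e.g.\ $t_{h,e}=t_{h,o}=\eps\ll 1=t_{v,e}=t_{v,o}$ gives $\eps^4$ versus $f_{\bt}=\eps^2$). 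The missing idea is the pointwise comparison of the two quadratic forms, $A(\bk)\le sB(\bk)$ with $s:=\max\{t_{h,o}/t_{h,e}+t_{h,e}/t_{h,o},\,t_{v,o}/t_{v,e}+t_{v,e}/t_{v,o}\}$. With it, $\min_\rho\alpha_\rho^2=A-\sqrt{A^2-4B^2}\ge sB-\sqrt{s^2B^2-4B^2}=(s-\sqrt{s^2-4})B(\bk)$, and since $s-\sqrt{s^2-4}=2\min\{t_{h,o}/t_{h,e},t_{h,e}/t_{h,o},t_{v,o}/t_{v,e},t_{v,e}/t_{v,o}\}$ and $B(\bk)\ge\min\{t_{h,e}t_{h,o},t_{v,e}t_{v,o}\}\sum_j(1+\cos k_j)$, this yields exactly $f_{\bt}\sum_j(1+\cos k_j)$ under \eqref{eq_hopping_amplitude_normalized}. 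In your language this amounts to bounding $\lambda_{\max}(BB^*)\le 2A(\bk)\le 2sB(\bk)$ so that one factor of $B(\bk)$ cancels in $4B(\bk)^2/\lambda_{\max}(BB^*)$; without that step the ratio factors in \eqref{eq_hopping_amplitude_function} do not appear with the right power.
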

\begin{proof}
For any $\bk\in \R^2$, $p,q\in \{1,-1\}$ set
\begin{align}
&X_{p,q}(\bk):=p\left(A(\bk)+q\sqrt{A(\bk)^2-4B(\bk)^2}\right)^{\frac{1}{2}},\label{eq_explicit_free_dispersion_relation}\\
&A(\bk):=(t_{h,e}^2+t_{h,o}^2)(1+\cos
 k_1)+(t_{v,e}^2+t_{v,o}^2)(1+\cos k_2),\notag\\
&B(\bk):=t_{h,e}t_{h,o}(1+\cos
 k_1)+t_{v,e}t_{v,o}(1+\cos k_2).\notag
\end{align}
A calculation shows that the eigen values of $E(\bk)$ are
 $X_{p,q}(\bk)$ $(p,q\in\{1,-1\})$. One can also check that the eigen
 values of $(\partial/\partial k_1)^nE(\bk)$ are $t_{h,e}$, $-t_{h,e}$,
$t_{h,o}$, $-t_{h,o}$ and the eigen
 values of $(\partial/\partial k_2)^nE(\bk)$ are $t_{v,e}$, $-t_{v,e}$,
$t_{v,o}$, $-t_{v,o}$ for any $n\in\N$.

\eqref{item_derivative_upper_bound_dispersion}:
Using the assumption \eqref{eq_hopping_amplitude_normalized}, we have
 that 
\begin{align*}
&\|E(\bk)\|_{4\times 4}\le \max_{p,q\in \{1,-1\}}|X_{p,q}(\bk)|\le
 \sqrt{2}|A(\bk)|^{\frac{1}{2}}\le 4,\\
&\left\|\left(\frac{\partial}{\partial
 k_j}\right)^nE(\bk)\right\|_{4\times 4}\le 1,\ (\forall j\in\{1,2\},n\in\N).
\end{align*}

\eqref{item_upper_bound_free_propagator}:
Set 
$$
s:=\max\left\{\frac{t_{h,o}}{t_{h,e}}+\frac{t_{h,e}}{t_{h,o}},
             \frac{t_{v,o}}{t_{v,e}}+\frac{t_{v,e}}{t_{v,o}}\right\}.
$$
Since $A(\bk)\le s B(\bk)$, for any $p,q\in \{1,-1\}$,
\begin{align*}
|X_{p,q}(\bk)|&\ge \left|A(\bk)-\sqrt{A(\bk)^2-4B(\bk)^2}\right|^{\frac{1}{2}}
\\
&\ge
 \left|sB(\bk)-\sqrt{s^2B(\bk)^2-4B(\bk)^2}\right|^{\frac{1}{2}}=(s-\sqrt{s^2-4})^{\frac{1}{2}}B(\bk)^{\frac{1}{2}}\\
&\ge
 (s-\sqrt{s^2-4})^{\frac{1}{2}}(\min\{t_{h,e}t_{h,o},t_{v,e}t_{v,o}\})^{\frac{1}{2}}\Bigg(\sum_{j=1}^2(1+\cos k_j)\Bigg)^{\frac{1}{2}}.
\end{align*}
Since 
$$
s-\sqrt{s^2-4}\ge
 \min\left\{\frac{t_{h,o}}{t_{h,e}},\frac{t_{h,e}}{t_{h,o}},
\frac{t_{v,o}}{t_{v,e}},\frac{t_{v,e}}{t_{v,o}}\right\},
$$
$$
|X_{p,q}(\bk)|\ge f_{\bt}^{\frac{1}{2}}\left(\sum_{j=1}^2(1+\cos k_j)
\right)^{\frac{1}{2}},\ (\forall p,q\in \{1,-1\}).
$$
Thus,
\begin{align*}
\|(i\o I_4-E(\bk))^{-1}\|_{4\times 4}&\le \max_{p,q\in
 \{1,-1\}}|i\o-X_{p,q}(\bk)|^{-1}\\
&\le \left(\o^2+f_{\bt}\sum_{j=1}^2(1+\cos
 k_j)\right)^{-\frac{1}{2}}.
\end{align*}
\end{proof}

Lemma \ref{lem_estimate_free_dispersion_relation}
\eqref{item_derivative_upper_bound_dispersion} implies that the
inequality \eqref{eq_real_analytic_dispersion_relation} holds with
$E_1=4$, $E_2=1$. In this section we will apply the results of Section
\ref{sec_UV} for $E_1=4$, $E_2=1$.
It follows that 
$$
M_{UV}=\frac{10\sqrt{6}}{\pi}
$$
and the weight $\fw(0)$ originally set in Lemma
\ref{lem_covariance_UV_bound} satisfies 
\begin{align}
\fw(0)=\frac{c_w}{18}M^{-2}.\label{eq_initial_weight}
\end{align}

In order to adjust the support of cut-off functions for the IR
integration, from now we assume that 
$$
M>\sqrt{2}.
$$
Let us set
$$
M_{IR}:=\frac{\sqrt{6}}{\pi}\left(\frac{\pi^2}{3}M_{UV}^2+4\right)^{\frac{1}{2}}
$$
and 
$$
N_{\beta}:=\min\left\{\left[\frac{\log\big(\frac{\pi}{\beta}\big(\frac{\pi}{\sqrt{3}}M_{IR}\big)^{-1}\big)}{\log
M}\right],0\right\}.
$$
Since $(\pi/\sqrt{3})M_{IR}M^{N_{\beta}}\le \pi/\beta$, 
\begin{align}
&\phi\left(M_{IR}^{-2}M^{-2N_{\beta}}\left(\o^2+f_{\bt}\sum_{j=1}^2(1+\cos
 k_j)\right)\right)=0,\label{eq_infrared_cut_off_smallest}\\
& (\forall (\o,\bk)\in\R^3\text{ with }|\o|\ge \pi/\beta),\notag
\end{align}
where $\phi$ is the smooth function introduced in Lemma
\ref{lem_gevrey_class}. We
define the functions $\chi_l:\R^3\to\R$ $(l\in \{0,-1,\cdots,N_{\beta}\})$ by 
\begin{align*}
&\chi_l(\o,\bk)\\
&:=\phi(M_{UV}^{-2}\o^2)\Bigg(\phi\left(M_{IR}^{-2}M^{-2(l+1)}\left(\o^2+f_{\bt}\sum_{j=1}^2(1+\cos
 k_j)\right)\right)\\
&\qquad-\phi\left(M_{IR}^{-2}M^{-2l}\left(\o^2+f_{\bt}\sum_{j=1}^2(1+\cos
 k_j)\right)\right)\Bigg),\ ((\o,\bk)\in\R^3).\\
\end{align*}
If $\phi(M_{UV}^{-2}\o^2)\neq 0$,
$$
\o^2+f_{\bt}\sum_{j=1}^2(1+\cos k_j)\le \frac{\pi^2}{3}M_{UV}^{2}+4f_{\bt}\le  \frac{\pi^2}{3}M_{UV}^{2}+4,
$$
and thus
\begin{align}
&\phi\left(M_{IR}^{-2}M^{-2}\left(\o^2+f_{\bt}\sum_{j=1}^2(1+\cos
 k_j)\right)\right)=1,\label{eq_infrared_cut_off_largest}\\
&(\forall \o\in
 \R\text{ with }\phi(M_{UV}^{-2}\o^2)\neq
 0,\bk\in\R^2).\notag
\end{align}
It follows from \eqref{eq_infrared_cut_off_smallest},
\eqref{eq_infrared_cut_off_largest} that
\begin{align}
\sum_{l=0}^{N_{\beta}}\chi_l(\o,\bk)=\phi(M_{UV}^{-2}\o^2),\ (\forall
\o\in \cM,\bk\in \R^2).\label{eq_cut_off_function_basic_sum}
\end{align}
The value of $\chi_l(\o,\bk)$ is described as follows.
\begin{align}
&\chi_l(\o,\bk)\left\{\begin{array}{ll}=0,&\text{if
	       }\left(\o^2+f_{\bt}\sum_{j=1}^2(1+\cos
		 k_j)\right)^{\frac{1}{2}}\le
	       \frac{\pi}{\sqrt{6}}M_{IR}M^l,\\
\in [0,1],&\text{if
	       } \frac{\pi}{\sqrt{6}}M_{IR}M^l<
	       \left(\o^2+f_{\bt}\sum_{j=1}^2(1+\cos
		k_j)\right)^{\frac{1}{2}}\\
&\qquad\qquad\quad\ < \frac{\pi}{\sqrt{3}}M_{IR}M^{l+1},\\
=0,&\text{if
	       }\left(\o^2+f_{\bt}\sum_{j=1}^2(1+\cos
		 k_j)\right)^{\frac{1}{2}}\ge
		     \frac{\pi}{\sqrt{3}}M_{IR}M^{l+1},\end{array}
\right.\label{eq_support_properties_infrared}\\
&(\forall l\in \{0,-1,\cdots,N_{\beta}\}, (\o,\bk)\in\R^3).\notag
\end{align}
Since $M>\sqrt{2}$,
$(\pi/\sqrt{3})M_{IR}M^{l-1}<(\pi/\sqrt{6})M_{IR}M^{l}$.
This inequality implies that
\begin{align}
&\{(\o,\bk)\in\R^3\ |\ \chi_l(\o,\bk)\neq 0\}\cap\{(\o,\bk)\in\R^3\ |\
 \chi_j(\o,\bk)\neq
 0\}=\emptyset,\label{eq_infrared_cut_off_intersection}\\
&(\forall j,l\in \{0,-1,\cdots,N_{\beta}\}\text{ with }|j-l|\ge 2).\notag
\end{align}
We use $\chi_l:\R^3\to\R$ $(l=0,-1,\cdots,N_{\beta})$ as the cut-off
functions in the IR integration. For any $l\in
\{0,-1,\cdots,N_{\beta}\}$ set 
\begin{align*}
&\chi_{\le l}(\o,\bk):=\sum_{j=l}^{N_{\beta}}\chi_j(\o,\bk),\\
&\hat{\chi}_{\le
 l}(\o,\bk):=\phi(M_{UV}^{-2}\o^2)\phi\left(M_{IR}^{-2}M^{-2(l+1)}\left(\o^2+f_{\bt}\sum_{j=1}^2(1+\cos k_j)\right)\right),\\
&(\forall (\o,\bk)\in\R^3).
\end{align*}
Note that $\supp \chi_l(\cdot)\subset \supp \chi_{\le
l}(\cdot)\subset\supp\hat{\chi}_{\le l}(\cdot)$. Concerning the support of
these cut-off functions, we will frequently use
the following lemma.
\begin{lemma}\label{lem_infrared_cut_off_measure}
Let $l\in \{0,1,\cdots,N_{\beta}\}$. If $(\o,\bk)\in\R\times [0,2\pi]^2$
 satisfies $\hat{\chi}_{\le l}(\o,\bk)\neq 0$, then,
$$
|\o|\le \frac{\pi}{\sqrt{3}}M_{IR}M^{l+1},\ |k_j-\pi|\le
 \frac{\pi^2}{\sqrt{6}}f_{\bt}^{-\frac{1}{2}}M_{IR}M^{l+1},\ (j=1,2).
$$
\end{lemma}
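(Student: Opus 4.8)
The plan is to read the bound off directly from the definition of $\hat{\chi}_{\le l}$. Recall
$$
\hat{\chi}_{\le l}(\o,\bk)=\phi(M_{UV}^{-2}\o^2)\,\phi\!\left(M_{IR}^{-2}M^{-2(l+1)}\Big(\o^2+f_{\bt}\sum_{j=1}^2(1+\cos k_j)\Big)\right),
$$
so the hypothesis $\hat{\chi}_{\le l}(\o,\bk)\neq 0$ in particular forces the second factor to be non-zero. Since $\phi(x)=0$ for all $x\ge\pi^2/3$ by Lemma \ref{lem_gevrey_class}, this yields
$$
\o^2+f_{\bt}\sum_{j=1}^2(1+\cos k_j)<\frac{\pi^2}{3}M_{IR}^2M^{2(l+1)}.
$$
Both summands on the left are non-negative (here $f_{\bt}>0$ by \eqref{eq_hopping_amplitude_function} and $1+\cos k_j\ge 0$), hence each is individually bounded by the right-hand side. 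Discarding the cosine terms gives $\o^2<\frac{\pi^2}{3}M_{IR}^2M^{2(l+1)}$, i.e.\ $|\o|\le\frac{\pi}{\sqrt{3}}M_{IR}M^{l+1}$, which is the first claim. Discarding $\o^2$ and the other cosine term gives $f_{\bt}(1+\cos k_j)<\frac{\pi^2}{3}M_{IR}^2M^{2(l+1)}$ for $j=1,2$.

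To convert the last inequality into a bound on $|k_j-\pi|$, I would invoke the elementary estimate $1-\cos\theta\ge\frac{2}{\pi^2}\theta^2$ for $|\theta|\le\pi$, which follows from $1-\cos\theta=2\sin^2(\theta/2)$ together with $|\sin(\theta/2)|\ge\frac{2}{\pi}\cdot\frac{|\theta|}{2}$ on $|\theta/2|\le\pi/2$. Since $\bk\in[0,2\pi]^2$ we have $|k_j-\pi|\le\pi$ and $1+\cos k_j=1-\cos(k_j-\pi)\ge\frac{2}{\pi^2}(k_j-\pi)^2$. Combining this with the bound above yields $(k_j-\pi)^2<\frac{\pi^4}{6}f_{\bt}^{-1}M_{IR}^2M^{2(l+1)}$, hence $|k_j-\pi|\le\frac{\pi^2}{\sqrt{6}}f_{\bt}^{-\frac{1}{2}}M_{IR}M^{l+1}$, completing the proof.

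There is no genuine obstacle in this lemma; the only point requiring a little care is the restriction $|k_j-\pi|\le\pi$, which is precisely why the statement is formulated for $\bk$ in the fundamental domain $[0,2\pi]^2$, so that the one-variable inequality $1-\cos\theta\ge\frac{2}{\pi^2}\theta^2$ may be applied to $\theta=k_j-\pi$ without interference from the $2\pi$-periodicity.
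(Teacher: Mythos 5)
Your proof is correct and follows essentially the same route as the paper's: both read the bound $\o^2+f_{\bt}\sum_{j=1}^2(1+\cos k_j)\le\frac{\pi^2}{3}M_{IR}^2M^{2(l+1)}$ off the support of $\phi$ and then apply the elementary inequality $\sqrt{1-\cos\theta}\ge(\sqrt{2}/\pi)|\theta|$ on $[-\pi,\pi]$ (your $1-\cos\theta\ge\frac{2}{\pi^2}\theta^2$ is just its squared form) to $\theta=k_j-\pi$. Nothing further is needed.
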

\begin{proof}
If $\hat{\chi}_{\le l}(\o,\bk)\neq 0$, 
$$
\left(\o^2+f_{\bt}\sum_{j=1}^2(1+\cos k_j)\right)^{\frac{1}{2}}\le \frac{\pi}{\sqrt{3}}M_{IR}M^{l+1}.
$$
Then, by using the inequality $\sqrt{1-\cos\theta}\ge
 (\sqrt{2}/\pi)|\theta|$ $(\forall \theta\in[-\pi,\pi])$ we can derive
 the claimed inequalities.
\end{proof}

In order to indicate the dependency on $\beta$, we will sometimes write
$\chi_{\le l}(\beta)$ instead of $\chi_{\le l}$. By
\eqref{eq_infrared_cut_off_smallest} we see that 
\begin{align}
&\chi_{\le l}(\beta)(\o,\bk)=\hat{\chi}_{\le
 l}(\o,\bk),\label{eq_equivalence_IR_cut_off}\\ 
&(\forall
 (\o,\bk)\in\R^3\text{ with }|\o|\ge \pi/\beta,l\in
 \{0,-1,\cdots,N_{\beta}\}),\notag\\
&\chi_{\le l}(\beta_1)(\o,\bk)=\chi_{\le l}(\beta_2)(\o,\bk),\notag\\
& (\forall
 (\o,\bk)\in\R^3\text{ with }|\o|\ge \pi/\beta_1,l\in
 \{0,-1,\cdots,N_{\beta_1}\}),\notag
\end{align}
if $\beta_1\le \beta_2$.

We define the weights $\fw(l)$ $(l\in\Z_{\le 0})$ by
$$
\fw(l):=\fw(0)M^l,\ (\forall l\in \Z_{\le 0}),
$$
with the weight $\fw(0)$ characterized in \eqref{eq_initial_weight}. 
Moreover, we take the exponent $\fr$ inside $\|\cdot\|_{l,0}$,
$\|\cdot\|_{l,1}$, $|\cdot-\cdot|_{l}$ to be $1/2$ throughout this section.
To organize formulas systematically, for any differentiable function $f$ with the
variable $(w,k_1,k_2)$ let $(\partial/\partial k_0 )f$ denote $(\partial/\partial \o)f$.  

\begin{lemma}\label{lem_infrared_cut_off_derivative}
There exists a constant $c\in \R_{>0}$ independent of any parameter such
 that 
\begin{align*}
&\left|\left(\frac{\partial}{\partial k_j}\right)^n\hat{\chi}_{\le
 l}(\o,\bk)\right|,\ \left|\left(\frac{\partial}{\partial
 k_j}\right)^n\chi_{ l}(\o,\bk)\right|\le (c \fw(l)^{-1})^n(n!)^2,\\
&(\forall (\o,\bk)\in\R^3,n\in \N\cup \{0\},j\in\{0,1,2\},l\in
 \{0,-1,\cdots,N_{\beta}\}).
\end{align*}
\end{lemma}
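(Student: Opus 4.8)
The plan is to reduce the claim to the Gevrey-class estimate on $\phi$ recorded in Lemma \ref{lem_gevrey_class} together with the composition (and product) estimates of Appendix \ref{app_gevrey}. Both $\hat{\chi}_{\le l}$ and $\chi_l$ are finite products and differences of functions of the form $\phi\circ g$, where $g$ is one of the quadratic arguments $g_{UV}(\o,\bk):=M_{UV}^{-2}\o^2$ or $g_{IR,j}(\o,\bk):=M_{IR}^{-2}M^{-2j}(\o^2+f_{\bt}\sum_{r=1}^2(1+\cos k_r))$ with $j\in\{l,l+1\}$. For $n=0$ the asserted inequality is just $|\hat{\chi}_{\le l}|,|\chi_l|\le 1$, so assume $n\ge 1$. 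Since $\phi$ is locally constant off $[\pi^2/6,\pi^2/3]$, every derivative of order $\ge 1$ of $\phi\circ g$ — hence, by the Leibniz rule, of $\hat{\chi}_{\le l}$ and of $\chi_l$ — vanishes at a point unless that point lies in the support of $\hat{\chi}_{\le l}$ (resp.\ $\chi_l$). On that support Lemma \ref{lem_infrared_cut_off_measure}, the support description \eqref{eq_support_properties_infrared}, and the inequality $\sqrt{1-\cos\theta}\ge(\sqrt 2/\pi)|\theta|$ give $|\o|\le cM^{l+1}$, $f_{\bt}(1+\cos k_r)\le cM^{2(l+1)}$, and $|\sin k_r|\le cf_{\bt}^{-1/2}M^{l+1}$ for $r=1,2$; so it suffices to prove the bound at points satisfying these inequalities.

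On such points I would estimate the derivatives of the inner functions. The function $g_{UV}$ depends only on $\o$, with $\partial_{k_0}g_{UV}=2M_{UV}^{-2}\o$, $\partial_{k_0}^2 g_{UV}=2M_{UV}^{-2}$ and all higher derivatives zero; since $|\o|\le cM_{UV}$ on the support of $\phi(g_{UV})$, all derivatives of $g_{UV}$ are bounded by an absolute constant. For $g_{IR,j}$ with $j\in\{l,l+1\}$ one computes $\partial_{k_0}g_{IR,j}=2M_{IR}^{-2}M^{-2j}\o$, $\partial_{k_0}^2g_{IR,j}=2M_{IR}^{-2}M^{-2j}$, $\partial_{k_r}^n g_{IR,j}=\mp M_{IR}^{-2}M^{-2j}f_{\bt}\{\cos k_r\ \text{or}\ \sin k_r\}$ for $r\in\{1,2\}$, $n\ge1$ (and $\partial_{k_0}^n g_{IR,j}=0$ for $n\ge3$); using $j\ge l$, $f_{\bt}\le1$, and the support bounds above, every first-order derivative is then bounded by $cM^{1-l}$ and every higher-order derivative by $cM^{-2l}$. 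The crucial point is that the factor $f_{\bt}$ produced by differentiating $1+\cos k_r$ is absorbed by the factor $f_{\bt}^{-1/2}$ in the support bound for $|\sin k_r|$, so the constants here are genuinely $\bt$-independent; that is exactly why the weight $\fw(l)$ can be taken $\bt$-independent. Recalling from \eqref{eq_initial_weight} that $\fw(l)=\fw(0)M^l=\tfrac{c_w}{18}M^{l-2}$ with $c_w\in(0,1]$ and $M>\sqrt2$, one has $\fw(l)^{-1}=\tfrac{18}{c_w}M^{2-l}$, which dominates $cM^{1-l}$, $(cM^{-2l})^{1/2}$, and every fixed absolute constant; hence each $g$ is of Gevrey class $1$ in each variable with rate $\le c\fw(l)^{-1}$ and amplitude $\le c$ in the sense required by Lemma \ref{lem_gevrey_composition}.

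The conclusion then follows by applying the Gevrey composition estimate (Lemma \ref{lem_gevrey_composition}), with $\phi$ of Gevrey class $2$ and rate $2$ by \eqref{eq_derivative_gevrey_cut_off}, to obtain $|(\partial/\partial k_j)^n(\phi\circ g)|\le(c\fw(l)^{-1})^n(n!)^2$ for each $j\in\{0,1,2\}$; one then propagates this through the products and differences defining $\hat{\chi}_{\le l}$ and $\chi_l$ by the Leibniz rule, using the elementary inequality $\sum_{i=0}^n\binom{n}{i}(i!)^2((n-i)!)^2\le(n+1)(n!)^2$ and absorbing the harmless factor $(n+1)$ into the geometric constant, to reach $|(\partial/\partial k_j)^n\hat{\chi}_{\le l}|,|(\partial/\partial k_j)^n\chi_l|\le(c\fw(l)^{-1})^n(n!)^2$. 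I expect the main obstacle to be purely computational bookkeeping: keeping the $M$-powers and $f_{\bt}$-powers aligned through the composition and product steps so that the output rate is genuinely of order $\fw(l)^{-1}=\tfrac{18}{c_w}M^{2-l}$ and the overall constant $c$ is independent of $\bt$, $\beta$, $L$, and $h$.
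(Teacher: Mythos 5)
Your proposal is correct and follows essentially the same route as the paper's proof: localize to the support via Lemma \ref{lem_infrared_cut_off_measure} and \eqref{eq_support_properties_infrared} (absorbing the $f_{\bt}$ from differentiating $\cos k_r$ into the $f_{\bt}^{-1/2}$ in the support bound), bound the derivatives of the quadratic inner functions, apply the composition estimate of Lemma \ref{lem_gevrey_composition} with $t=2$, and finish with the Leibniz rule and $\sum_{i}\binom{n}{i}(i!)^2((n-i)!)^2\le (n+1)(n!)^2$. The only difference is a harmless reparameterization of the Gevrey data $(q_1,r_1)$ for the inner functions (the paper uses $q_1=cM$, $r_1=M^{-l}$ and only at the end compares $M^{-l+1}$ with $\fw(l)^{-1}$), which does not affect the argument.
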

\begin{proof}
Using the inequality $f_{\bt}\le
 1$, we see that for any $n\in\N$,
\begin{align*}
&\left|\left(\frac{d}{d\o}\right)^nM_{UV}^{-2}\o^2\right|\le c n!, \
 (\forall \o\in\R\text{ with }\phi(M_{UV}^{-2}\o^2)\neq 0),\\
&\left|\left(\frac{\partial}{\partial\o}\right)^nM_{IR}^{-2}M^{-2l}\left(\o^2+f_{\bt}\sum_{i=1}^2(1+\cos
 k_i)\right)\right|\\
&\le \left\{\begin{array}{ll} c M^{-l+1},&\text{if
}n=1,\\
 c M^{-2l},&\text{if }n=2,\\
0,&\text{if }n\ge 3,\end{array}
\right.
\le cM\cdot M^{-ln}n!,\\
&\left|\left(\frac{\partial}{\partial k_j}\right)^nM_{IR}^{-2}M^{-2l}\left(\o^2+f_{\bt}\sum_{i=1}^2(1+\cos
 k_i)\right)\right|\\
&\le \left\{\begin{array}{ll} c M^{-l+1},&\text{if
}n=1,\\
 c M^{-2l},&\text{if }n\ge 2,\end{array}
\right.
\le cM\cdot M^{-ln}n!,\\
&(\forall (\o,\bk)\in\R^3\text{ satisfying }\\
&\qquad\qquad\phi\left(M_{IR}^{-2}M^{-2(l+1)}\left(\o^2+f_{\bt}\sum_{i=1}^2(1+\cos
 k_i)\right)\right)\neq 0,\\
&\ \forall j\in\{1,2\}).
\end{align*}
Thus, by \eqref{eq_derivative_gevrey_cut_off}, Lemma
 \ref{lem_gevrey_composition} and the inequality $M\ge 1$,
\begin{align*}
&\left|\left(\frac{d}{d\o}\right)^n\phi(M_{UV}^{-2}\o^2)\right|\le c^n
 (n!)^2, \\
&\left|\left(\frac{\partial}{\partial k_j}\right)^n\phi\left(M_{IR}^{-2}M^{-2l}\left(\o^2+f_{\bt}\sum_{i=1}^2(1+\cos
 k_i)\right)\right)\right|,\\
&\left|\left(\frac{\partial}{\partial k_j}\right)^n\phi\left(M_{IR}^{-2}M^{-2(l+1)}\left(\o^2+f_{\bt}\sum_{i=1}^2(1+\cos
 k_i)\right)\right)\right|\\
&\le (c M^{-l+1})^n(n!)^2,\ (\forall (\o,\bk)\in\R^3,j\in
 \{0,1,2\},n\in\N\cup \{0\}).
\end{align*}
By the condition $c_w\in (0,1]$, $M^{-l+1}\le \fw(l)^{-1}$. Using these inequalities, we can deduce that for any
 $(\o,\bk)\in\R^3$,
\begin{align*}
&\left|\left(\frac{\partial}{\partial k_j}\right)^n\hat{\chi}_{\le
 l}(\o,\bk)\right|\\
&\le \sum_{m=0}^n\left(\begin{array}{c}n\\
		       m\end{array}\right)\left|\left(\frac{\partial}{\partial k_j}\right)^m\phi(M_{UV}^{-2}\o^2)\right|\\
&\qquad\cdot
\left|\left(\frac{\partial}{\partial
 k_j}\right)^{n-m}\phi\left(M_{IR}^{-2}M^{-2(l+1)}\left(\o^2+f_{\bt}\sum_{j=1}^2(1+\cos
 k_j)\right)\right)\right|\\
&\le \sum_{m=0}^n\left(\begin{array}{c}n\\
		       m\end{array}\right) c^m (m!)^2(c
 M^{-l+1})^{n-m}((n-m)!)^2\le (c\fw(l)^{-1})^n(n!)^2.
\end{align*}
The upper bound on $|(\partial/\partial k_j)^n\chi_l(\o,\bk)|$ can be
 derived similarly.
\end{proof}

By definition we have $(\pi/\sqrt{3})M_{IR}M^{N_{\beta}}\le
\pi/\beta$. The next lemma suggests that an opposite inequality is also
available when we deal with covariances for the IR integration. 
\begin{lemma}\label{lem_beta_inverse_upper_bound}
If $\chi_{\le 0}(\o,\bk)\neq 0$ for some $(\o,\bk)\in\R^3$ with $|\o|\ge
 \pi/\beta$, 
$$
\frac{1}{\beta}\le M_{IR}M^{N_{\beta}+1}.
$$
\end{lemma}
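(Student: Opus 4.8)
The plan is to split on the sign of $N_{\beta}$, since the truncated‑floor definition of $N_{\beta}$ behaves differently in the two regimes and the hypothesis is genuinely needed in only one of them. Throughout I will use $M>\sqrt2>1$ and the support description \eqref{eq_support_properties_infrared} of the infrared cut‑off functions, which in turn rests on the property $\phi(x)=0$ for $x\ge\pi^2/3$ from Lemma \ref{lem_gevrey_class}.

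First I would dispose of the case $N_{\beta}<0$. Here $N_{\beta}$ equals the integer part $\big[\tfrac{\log(\frac{\pi}{\beta}(\frac{\pi}{\sqrt3}M_{IR})^{-1})}{\log M}\big]$, so the elementary bound $[x]>x-1$, equivalently $N_{\beta}+1>\tfrac{\log(\frac{\pi}{\beta}(\frac{\pi}{\sqrt3}M_{IR})^{-1})}{\log M}$, together with $M>1$ gives $M^{N_{\beta}+1}>\frac{\pi}{\beta}(\frac{\pi}{\sqrt3}M_{IR})^{-1}=\frac{\sqrt3}{\beta M_{IR}}$, and hence $\frac1\beta<\frac{M_{IR}}{\sqrt3}M^{N_{\beta}+1}\le M_{IR}M^{N_{\beta}+1}$. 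Note that this case uses neither the hypothesis nor the cut‑off support; it is purely the arithmetic of the floor.

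Then I would handle the case $N_{\beta}=0$, which is where the hypothesis enters. When $N_{\beta}=0$ the sum $\chi_{\le 0}=\sum_{j=N_{\beta}}^{0}\chi_j$ collapses to $\chi_0$, so the assumed point $(\o,\bk)$ with $\chi_{\le 0}(\o,\bk)\neq 0$ and $|\o|\ge\pi/\beta$ in fact satisfies $\chi_0(\o,\bk)\neq 0$. Feeding $l=0$ into \eqref{eq_support_properties_infrared} gives $\o^2+f_{\bt}\sum_{j=1}^2(1+\cos k_j)<\frac{\pi^2}{3}M_{IR}^2M^2$; discarding the non‑negative momentum term and using $\o^2\ge\pi^2/\beta^2$ yields $\frac{1}{\beta^2}<\frac{M_{IR}^2M^2}{3}$, hence $\frac1\beta<\frac{M_{IR}M}{\sqrt3}\le M_{IR}M=M_{IR}M^{N_{\beta}+1}$.

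The only point that requires any thought is recognising that the two ingredients must be used in exactly this division of labour: the support bound alone controls the relevant quantity only at scale $M^{0+1}$, which is strictly larger than the target scale $M^{N_{\beta}+1}$ once $N_{\beta}<0$, so the floor inequality is unavoidable there; conversely, when $N_{\beta}=0$ the definition of $N_{\beta}$ only lower‑bounds $1/\beta$, so the cut‑off support hypothesis is indispensable. Everything else is routine manipulation of logarithms and squares.
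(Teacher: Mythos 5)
Your proof is correct and follows essentially the same route as the paper: a case split on $N_{\beta}$, with the floor inequality $[x]\ge x-1$ handling $N_{\beta}<0$ and the support of the cut-off handling $N_{\beta}=0$. The only cosmetic difference is that in the $N_{\beta}=0$ case you extract the bound on $|\o|$ from the infrared factor of $\chi_0$ at scale $l=0$, whereas the paper reads it off the Matsubara factor $\phi(M_{UV}^{-2}\o^2)$ together with $M_{UV}/\sqrt{3}\le M_{IR}/\sqrt{6}$; both yield the claim.
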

\begin{proof}
By assumption $\phi(M_{UV}^{-2}(\pi/\beta)^2)\neq 0$, which
 implies that $1/\beta\le M_{UV}/\sqrt{3}\le M_{IR}/\sqrt{6}$. Thus, if
 $N_{\beta}=0$, the claimed inequality holds. If $N_{\beta}< 0$,
$$
N_{\beta}\ge
 \frac{\log\big(\frac{\pi}{\beta}\big(\frac{\pi}{\sqrt{3}}M_{IR}\big)^{-1}\big)}{\log M}-1,
$$
which implies that $1/\beta\le (1/\sqrt{3})M_{IR}M^{N_{\beta}+1}$.
\end{proof}

\subsection{The covariance matrices in the infrared integration}
\label{subsec_covariance_IR}
Following the infrared integration scheme proposed in \cite{P}, 
\cite{BGM}, \cite{GM}, we update the covariance by inserting the kernel
of the quadratic Grassmann polynomial produced by the previous integration
 at every integration step. To simulate this procedure, we introduce a family of subsets of $\bigwedge \cV$ consisting of
polynomials
 satisfying certain bound properties and invariant properties. Then, we
 define a prototypical covariance
by substituting the kernel of a polynomial belonging to one of
these subsets and study its properties.

Let $c_{IR}$, $\alpha \in \R_{> 0}$ and $D(\subset \C^4)$
be a domain satisfying that $\overline{\bU}\in \overline{D}$ $(\forall
\bU\in\overline{D})$, where 
$\overline{\bU}$ is the complex conjugate of $\bU$ and $\overline{D}$ is
the closure of $D$.  For any $l\in \Z_{\le 0}$ we
define the subset $\cS(l)$ of $\bigwedge \cV$ as follows. A Grassmann
polynomial $J(\psi)$ $(\in \bigwedge \cV)$ belongs to $\cS(l)$ if and
only if $J(\psi)$ is parameterized by $\bU\in\overline{D}$ and satisfies
the conditions \eqref{item_IR_set_analytic}, \eqref{item_IR_set_bound},
\eqref{item_IR_set_invariance}, \eqref{item_IR_set_complex_invariance}.  
\begin{enumerate}[(i)]
\item\label{item_IR_set_analytic}
$J(\bU)(\psi)$ is continuous in $\overline{D}$ and analytic
     in $D$ with $\bU$.
\item\label{item_IR_set_bound}
\begin{align}
&\frac{h}{N}|J_0|\le
 M^{\frac{7}{2}l}\alpha^{-3},\ (\forall \bU\in\overline{D}),\label{eq_IR_0th_bound}\\
&M^{-\frac{7}{2}l+rl}\sum_{m=2}^Nc_{IR}^{\frac{m}{2}}M^{ml}\alpha^m\|J_m\|_{l,r}\le 1,\ (\forall \bU\in\overline{D},r\in \{0,1\}).\label{eq_IR_higher_order_bound}
\end{align}
\item\label{item_IR_set_invariance}
With the notation introduced in Subsection 
\ref{subsec_invariance_general}, 
$$
J(\bU)(\psi)=J(\bU)(\cR\psi),\ (\forall \bU\in \overline{D}),
$$
for all $S:I\to I$ and $Q:I\to \R$ defined as follows.
\begin{align}
&S((\rho,\bx,\s,x,\theta)):=(\rho,\bx,\s,x,\theta),\label{eq_particle_hole_maps}\\
&Q((\rho,\bx,\s,x,\theta)):=\frac{\pi}{2}\theta,\ (\forall
 (\rho,\bx,\s,x,\theta)\in I).\notag
\end{align}
\begin{align}
&S((\rho,\bx,\s,x,\theta)):=(\rho,\bx,\s,x,\theta),\label{eq_spin_maps}\\
&Q((\rho,\bx,\s,x,\theta)):= \pi 1_{\s=\ua },\ (\forall
 (\rho,\bx,\s,x,\theta)\in I).\notag
\end{align}
\begin{align}
&S((\rho,\bx,\s,x,\theta)):=(\rho,\bx,-\s,x,\theta),\label{eq_spin_reflection_maps}\\
&Q((\rho,\bx,\s,x,\theta)):= 0,\ (\forall
 (\rho,\bx,\s,x,\theta)\in I).\notag
\end{align}
\begin{align}
&S((\rho,\bx,\s,x,\theta)):=(\rho,r_L(\bx+\bz),\s,r_{\beta}(x+s),\theta),\label{eq_translation_maps}\\
&Q((\rho,\bx,\s,x,\theta)):= \pi n_{\beta}(r_{\beta}(x-s)+s),\ (\forall
 (\rho,\bx,\s,x,\theta)\in I),\notag
\end{align}
where $\bz\in \Z^2$ and $s\in (1/h)\Z$ are arbitrarily taken and fixed.
\begin{align}
&S((\rho,\bx,\s,x,\theta)):=(\rho,r_L(-\bx-\be(\rho)),\s,x,\theta),\label{eq_rotation_maps}\\
&Q((\rho,\bx,\s,x,\theta)):= 0,\ (\forall
 (\rho,\bx,\s,x,\theta)\in I),\notag
\end{align}
where $\be(\rho)$ $(\rho\in \cB)$ are the vectors defined in \eqref{eq_key_vectors}.
\item\label{item_IR_set_complex_invariance}
$$
J(\bU)(\psi)=\overline{J(\overline{\bU})}(\cR\psi),\ (\forall \bU\in \overline{D}),
$$
for all $S:I\to I$ and $Q:I\to \R$ defined as follows. 
\begin{align}
&S((\rho,\bx,\s,x,\theta)):=(\rho,\bx,\s,r_{\beta}(-x),-\theta),\label{eq_hermitian_maps}\\
&Q((\rho,\bx,\s,x,\theta)):= \pi (1_{\theta=1}+1_{x\neq 0}),\ (\forall
 (\rho,\bx,\s,x,\theta)\in I).\notag
\end{align}
\begin{align}
&S((\rho,\bx,\s,x,\theta)):=(\rho,\bx,\s,x,-\theta),\label{eq_half_filled_maps}\\
&Q((\rho,\bx,\s,x,\theta)):= \pi 1_{\rho\in \{1,4\}},\ (\forall
 (\rho,\bx,\s,x,\theta)\in I).\notag
\end{align}
\end{enumerate}

We write $\cS(l)(\beta)$ in place of $\cS(l)$ when we want to indicate
the dependency on $\beta$. On the assumption
\eqref{eq_beta_h_assumption} we define the subset $\tilde{\cS}(l)$ of
$\cS(l)(\beta_1) \times \cS(l)(\beta_2)$ as follows. A pair of Grassmann polynomials 
$(J(\beta_1)(\psi), J(\beta_2)(\psi))\in \cS(l)(\beta_1)\times\cS(l)(\beta_2)$
belongs to $\tilde{\cS}(l)$ if and only if 
\begin{align}
&\left|\frac{h}{N(\beta_1)}J_0(\beta_1)-\frac{h}{N(\beta_2)}J_0(\beta_2)\right|
\le \beta_1^{-\frac{1}{2}}M^{\frac{5}{2}l}\alpha^{-3},\ (\forall
 \bU\in\overline{D}).\label{eq_IR_0th_difference}\\
&M^{-\frac{5}{2}l}\sum_{m=2}^{N(\beta_2)}c_{IR}^{\frac{m}{2}}M^{ml}\alpha^m|J_m(\beta_1)-J_m(\beta_2)|_l\le
 \beta_1^{-\frac{1}{2}},\ (\forall
 \bU\in\overline{D}).\label{eq_IR_higher_order_difference}
\end{align}

Later in Subsection \ref{subsec_application_IR} we will see that the
output of the
infrared integration at scale $l+1$ belongs to $\cS(l)$
and a pair of the output at $\beta_1$ and
$\beta_2$ belongs to $\tilde{\cS}(l)$. The bound properties
assumed in $\cS(l)$ and $\tilde{\cS}(l)$ correspond to the resulting
inequalities in Proposition \ref{prop_infrared_integration_general} and
Proposition \ref{prop_infrared_integration_difference_general} with
$a_1=2$, $a_2=1$, $a_3=1$, $a_4=1/2$. In fact we will apply these
propositions with these exponents in the
forthcoming IR analysis. The invariant properties listed in
\eqref{item_IR_set_invariance}, \eqref{item_IR_set_complex_invariance}
are especially needed in order that for $J(\psi)\in\cS(l)$ the
kernel of $J_2(\psi)$ has
desirable symmetries for updating the covariance without changing the
original infrared singularity. 

As a preliminary, let us characterize the quadratic part of a polynomial
belonging to this class in the momentum space. Let $l\in \Z_{\le 0}$ and $J^l(\psi)\in
\cS(l)$. Using the kernel $J_2^l(\cdot):I^2\to\C$ of the quadratic part
$J_2^l(\psi)$, we define the map $W^l(\cdot,\cdot):\cM\times (2\pi/L)\Z^2\to \Mat(4,\C)$ by
\begin{align}
&W^l(\o,\bk)(\rho,\eta)\label{eq_partial_dispersion_relation}\\
&:=\frac{2}{h}\sum_{\bx\in \G}\sum_{x\in
 [0,\beta)_h}e^{-i\<\bk,\bx\>}e^{-i\o
 x}J_2^l((\rho,\bx,\ua,x,-1),(\eta,\b0,\ua,0,1)),\notag\\
&\ (\forall (\o,\bk)\in \cM\times (2\pi/L)\Z^2,
\rho,\eta\in\cB).\notag  \end{align}
\begin{lemma}\label{lem_properties_one_scale_self_energy}
The following statements hold true.
\begin{enumerate}
\item\label{item_quadratic_term_characterization}
\begin{align*}
&J_2^l(\psi)\\
&=\frac{1}{h^2}\sum_{(\rho,\bx,\s,x),\atop (\eta,\by,\tau,y)\in
 I_0}\frac{\delta_{\s,\tau}}{\beta
 L^2}\sum_{(\o,\bk)\in\cM_h\times\G^*}e^{i\<\bk,\bx-\by\>}e^{i\o(x-y)}W^l(\o,\bk)(\rho,\eta)\psi_{\rho\bx\s
 x}\opsi_{\eta\by\tau y}.
\end{align*}
\item\label{item_symmetry_one_scale_self_energy}
For any $(\o,\bk)\in \cM\times (2\pi/L)\Z^2$, $\bU\in \overline{D}$,
     $\rho,\eta\in\cB$, 
\begin{align}
&W^l(\bU)(\o,\bk)(\rho,\eta)=\overline{W^l(\overline{\bU})(-\o,\bk)(\eta,\rho)},\label{eq_hermitian_momentum}\\
&W^l(\bU)(\o,\bk)(\rho,\eta)=(-1)^{1_{\rho\in\{1,4\}}+1_{\eta\in\{1,4\}}+1}\overline{W^l(\overline{\bU})(\o,\bk)(\eta,\rho)},\label{eq_half_filled_momentum}\\
&W^l(\bU)(\o,\bk)(\rho,\eta)=e^{i\<\be(\rho)-\be(\eta),\bk\>}W^l(\bU)(\o,-\bk)(\rho,\eta).\label{eq_rotation_momentum}
\end{align}
\item\label{item_bound_one_scale_self_energy}
There exists a constant $c\in\R_{>0}$ independent of any parameter such
 that 
\begin{align*}
&|W^l(\o,\bk)(\rho,\eta)|\le c\cdot c_{IR}^{-1}\left(|\o|+\sum_{j=1}^2|k_j-\pi|+\frac{1}{\beta}+\frac{1}{L}\right)M^{\frac{1}{2}l}\alpha^{-2},\\
&\ (\forall (\o,\bk)\in \cM\times (2\pi/L)\Z^2, \rho,\eta\in
 \cB).
\end{align*}
\end{enumerate}
\end{lemma}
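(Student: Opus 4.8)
The plan is to prove the three statements of Lemma \ref{lem_properties_one_scale_self_energy} essentially in the order stated, deriving each from the defining properties of the class $\cS(l)$.

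\textbf{Part \eqref{item_quadratic_term_characterization}.} First I would observe that $J^l(\psi)\in\cS(l)$ satisfies the invariances \eqref{eq_spin_maps} and \eqref{eq_spin_reflection_maps}, which together force the kernel $J_2^l((\rho,\bx,\s,x,\theta),(\eta,\by,\tau,y,\xi))$ to vanish unless $\s=\tau$, and to be independent of $\s$. (The $U(1)$-charge invariance \eqref{eq_spin_maps} kills the $(\theta,\xi)=(1,1)$ and $(-1,-1)$ components; the spin-flip \eqref{eq_spin_reflection_maps} removes $\s\ne\tau$; and the remaining freedom together with anti-symmetry of the kernel leaves only the $(\theta,\xi)=(-1,1)$ and $(1,-1)$ pieces, which are negatives of one another.) The translation invariance \eqref{eq_translation_maps} — applied with arbitrary $\bz\in\Z^2$ and $s\in(1/h)\Z$, together with the sign factor $(-1)^{n_\beta(r_\beta(x-s)+s)}$ that encodes the anti-periodicity in imaginary time — shows that $J_2^l$ depends on the spatial and temporal variables only through the differences $\bx-\by$ and $x-y$, in the appropriate periodic/anti-periodic sense. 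Expanding this difference-kernel in the Fourier basis $\{e^{i\langle\bk,\cdot\rangle}e^{i\o\cdot}\}_{(\o,\bk)\in\cM_h\times\G^*}$ and reading off the Fourier coefficient exactly recovers the matrix $W^l(\o,\bk)$ as defined in \eqref{eq_partial_dispersion_relation} (the factor $2$ and the choice of representative indices $(\rho,\bx,\ua,x,-1)$, $(\eta,\b0,\ua,0,1)$ are precisely tailored so that the two surviving $\theta,\xi$-sectors combine into the single stated formula). This gives the claimed momentum-space representation of $J_2^l(\psi)$.

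\textbf{Part \eqref{item_symmetry_one_scale_self_energy}.} Each of the three identities is a direct translation of one invariance of $J^l(\psi)$ into momentum space via the formula \eqref{eq_partial_dispersion_relation}. The Hermiticity relation \eqref{eq_hermitian_momentum} comes from \eqref{eq_half_filled_maps}'s companion \eqref{eq_hermitian_maps}: substituting $S$, $Q$ of \eqref{eq_hermitian_maps} into $J^l(\bU)(\psi)=\overline{J^l(\overline{\bU})}(\cR\psi)$, matching kernels, and carrying the phase $e^{iQ(\cdot)}$ and the time-reversal $x\mapsto r_\beta(-x)$ through the Fourier transform produces $W^l(\bU)(\o,\bk)(\rho,\eta)=\overline{W^l(\overline{\bU})(-\o,\bk)(\eta,\rho)}$. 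The ``half-filling'' relation \eqref{eq_half_filled_momentum} comes the same way from \eqref{eq_half_filled_maps}, whose phase $\pi 1_{\rho\in\{1,4\}}$ accounts for the sign $(-1)^{1_{\rho\in\{1,4\}}+1_{\eta\in\{1,4\}}+1}$ (the extra $+1$ being the usual sign from swapping the two Grassmann variables / the $\theta\to-\theta$ flip). Finally \eqref{eq_rotation_momentum} comes from the lattice rotation \eqref{eq_rotation_maps}, $\bx\mapsto r_L(-\bx-\be(\rho))$, which under Fourier transform sends $\bk\mapsto-\bk$ and generates the phase $e^{i\langle\be(\rho)-\be(\eta),\bk\rangle}$ from the band-dependent shifts $\be(\rho)$. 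In each case the only care needed is bookkeeping of how $r_L$, $r_\beta$ and $n_\beta$ interact with the finite sums defining $W^l$; this is routine given the periodic structure of $\G$, $\cM_h$.

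\textbf{Part \eqref{item_bound_one_scale_self_energy}} is the main obstacle. The raw bound \eqref{eq_IR_higher_order_bound} for $m=2$, $r=0$ only gives $\|J_2^l\|_{l,0}\le c_{IR}^{-1}M^{3l/2}\alpha^{-2}$, and naively $|W^l(\o,\bk)(\rho,\eta)|\le 2\|J_2^l\|_{L^1}\le (2N/h)\|J_2^l\|_{l,0}$, which is far too crude — it is not even bounded uniformly in volume. The point is that the symmetries \eqref{eq_hermitian_momentum}, \eqref{eq_half_filled_momentum}, \eqref{eq_rotation_momentum}, together with the continuity of $W^l$ in $(\o,\bk)$, force $W^l$ to \emph{vanish} at the special point $(\o,\bk)=(0,(\pi,\pi))$ (which is the zero of the free dispersion relation, cf.\ Lemma \ref{lem_estimate_free_dispersion_relation}): indeed \eqref{eq_half_filled_momentum} at $\o=0$, $\bk=(\pi,\pi)$ reads $W^l(0,(\pi,\pi))(\rho,\eta)=-\overline{W^l(0,(\pi,\pi))(\eta,\rho)}$, and combining with \eqref{eq_hermitian_momentum} at $\o=0$ gives $W^l(0,(\pi,\pi))(\rho,\eta)=-W^l(0,(\pi,\pi))(\rho,\eta)$, hence $=0$; one checks the remaining matrix entries vanish by the same pair of relations applied at $\bk=(\pi,\pi)$ and using \eqref{eq_rotation_momentum}. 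Therefore I would estimate $W^l(\o,\bk)$ by a discrete Taylor/telescoping argument: write $W^l(\o,\bk)=W^l(\o,\bk)-W^l(0,(\pi,\pi))$ as a sum of finite differences in $\o$ and in $k_1,k_2$, and bound each difference by the corresponding weighted $L^1$-norm. Concretely, the finite difference in the $j$-th momentum direction of the kernel picks up a factor $d_j(X,Y_1)$ (or $|x-y|$ for $j=0$), so it is controlled by $\|J_2^l\|_{l,1}$ — which by \eqref{eq_IR_higher_order_bound} with $r=1$ satisfies $M^l\cdot c_{IR}M^{2l}\alpha^2\|J_2^l\|_{l,1}\le M^{7l/2}$, i.e.\ $\|J_2^l\|_{l,1}\le c_{IR}^{-1}M^{l/2}\alpha^{-2}$. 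The resulting bound is $|W^l(\o,\bk)(\rho,\eta)|\le c\,c_{IR}^{-1}(\,$ (distance of $(\o,\bk)$ from $(0,(\pi,\pi))$ in the torus metric, which is $\lesssim |\o|+\sum_j|k_j-\pi|$) $+\,1/\beta+1/L\,)M^{l/2}\alpha^{-2}$; the additive $1/\beta+1/L$ terms arise because $(0,(\pi,\pi))$ need not lie exactly in the discrete grid $\cM_h\times\G^*$, so the vanishing point must be approached only up to lattice spacing, and the finite-difference/telescoping estimate over $\cM_h\times\G^*$ loses at most one mesh unit $2\pi/\beta$ in the $\o$-direction and $2\pi/L$ in each spatial direction. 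This is exactly the stated inequality, and this extraction of the zero — the heart of the extension of Giuliani–Mastropietro's method to the four-band model — is the step that requires the most care.
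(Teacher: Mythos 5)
Your proposal is correct and follows essentially the same route as the paper: parts \eqref{item_quadratic_term_characterization} and \eqref{item_symmetry_one_scale_self_energy} are read off from the invariances defining $\cS(l)$, and part \eqref{item_bound_one_scale_self_energy} is obtained exactly as in the paper by combining \eqref{eq_hermitian_momentum} and \eqref{eq_half_filled_momentum} to make the entries with $\rho,\eta\in\{1,4\}$ or $\rho,\eta\in\{2,3\}$ odd in $\o$, using \eqref{eq_rotation_momentum} to make the remaining entries vanish at $\bk=(\pi,\pi)$ (only approximately, up to $O(1/L)$, when $L$ is odd), and then telescoping with the finite differences controlled by $\|J_2^l\|_{l,1}\le c_{IR}^{-1}M^{l/2}\alpha^{-2}$. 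The only slip is a permutation of labels in part \eqref{item_quadratic_term_characterization}: it is \eqref{eq_particle_hole_maps} that kills the $(\theta,\xi)=(1,1),(-1,-1)$ sectors, \eqref{eq_spin_maps} that forces $\s=\tau$, and \eqref{eq_spin_reflection_maps} that gives independence of the spin, not the assignment you wrote.
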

\begin{remark}
The inequality in
 \eqref{item_bound_one_scale_self_energy} suggests that
 $W^l(\o,\bk)$ becomes negligibly small as $(\o,\bk)$ approaches
 $(0,\pi,\pi)$ and thus the point
 $(0,\pi,\pi)$ is essentially a zero-point of the perturbed
 matrix $i\o I_4-E(\bk)-W^l(\o,\bk)$. This
 is the crucial reason why the multi-scale IR integration around the
 point $(0,\pi,\pi)$ converges. We prove the inequality in
 \eqref{item_bound_one_scale_self_energy} by making use of the
 invariant properties summarized in
 \eqref{item_symmetry_one_scale_self_energy}. Our argument based on the
 preserved symmetries is motivated by the preceding work \cite{GM} by
 Giuliani and Mastropietro and should be regarded as an extension of 
 Giuliani-Mastropietro's RG method designed for the 2-band Hubbard model on the honeycomb lattice.
\end{remark}
\begin{proof}[Proof of Lemma \ref{lem_properties_one_scale_self_energy}]
\eqref{item_quadratic_term_characterization}: By the invariance with
 $S$, $Q$ defined in \eqref{eq_particle_hole_maps} we obtain that 
\begin{align}
&J_2^l((\rho,\bx,\s,x,\theta),(\eta,\by,\tau,y,\xi))=e^{i\frac{\pi}{2}(\theta+\xi)}
J_2^l((\rho,\bx,\s,x,\theta),(\eta,\by,\tau,y,\xi)),\label{eq_particle_hole_appl}\\
&(\forall (\rho,\bx,\s,x,\theta),(\eta,\by,\tau,y,\xi)\in I).\notag
\end{align}
By the invariance with $S$, $Q$ defined in \eqref{eq_spin_maps},
\begin{align}
&J_2^l((\rho,\bx,\s,x,\theta),(\eta,\by,\tau,y,\xi))\label{eq_spin_appl}\\
&=(-1)^{1_{\s=\ua}+1_{\tau=\ua}}J_2^l((\rho,\bx,\s,x,\theta),(\eta,\by,\tau,y,\xi)),\notag\\
&(\forall (\rho,\bx,\s,x,\theta),(\eta,\by,\tau,y,\xi)\in I).\notag
\end{align}
By the invariance with $S$, $Q$ defined in \eqref{eq_spin_reflection_maps},
\begin{align}
&J_2^l((\rho,\bx,\s,x,\theta),(\eta,\by,\tau,y,\xi))=J_2^l((\rho,\bx,-\s,x,\theta),(\eta,\by,-\tau,y,\xi)),\label{eq_spin_reflection_appl}\\
&(\forall (\rho,\bx,\s,x,\theta),(\eta,\by,\tau,y,\xi)\in I).\notag
\end{align}
Note that for any $x\in [0,\beta)_h$, $s\in (1/h)\Z$,
$$
n_{\beta}(r_{\beta}(r_{\beta}(x+s)-s)+s)=n_{\beta}(x+s).
$$
Using this equality and the uniqueness of the anti-symmetric kernel,
 we can deduce from the invariance with $S$, $Q$ defined in
 \eqref{eq_translation_maps} that
\begin{align}
&J_2^l((\rho,\bx,\s,x,\theta),(\eta,\by,\tau,y,\xi))\label{eq_translation_appl}\\
&=(-1)^{n_{\beta}(x+s)+n_{\beta}(y+s)}\notag\\
&\quad\cdot J_2^l((\rho,r_L(\bx+\bz),\s,r_{\beta}(x+s),\theta),(\eta,r_L(\by+\bz),\tau,r_{\beta}(y+s),\xi)),\notag\\
&(\forall (\rho,\bx,\s,x,\theta),(\eta,\by,\tau,y,\xi)\in I,\bz\in
 (2\pi/L)\Z^2,s\in (1/h)\Z).\notag
\end{align}
Using the equalities \eqref{eq_particle_hole_appl},
 \eqref{eq_spin_appl}, \eqref{eq_spin_reflection_appl},
 \eqref{eq_translation_appl} in this order, we observe that
\begin{align}
&\frac{1}{h^2}\sum_{(\rho,\bx,\s,x,\theta),\atop (\eta,\by,\tau,y,\xi)\in
 I}J_2^l((\rho,\bx,\s,x,\theta),(\eta,\by,\tau,y,\xi))\psi_{\rho\bx \s x\theta}\psi_{\eta\by \tau
 y\xi}\label{eq_quadratic_term_characterization_pre}\\
&=\frac{2}{h^2}\sum_{(\rho,\bx,\s,x),\atop (\eta,\by,\tau,y)\in
 I_0}J_2^l((\rho,\bx,\s,x,-1),(\eta,\by,\tau,y,1))\psi_{\rho\bx \s
 x}\opsi_{\eta\by \tau y}\notag\\
&=\frac{2}{h^2}\sum_{(\rho,\bx,\s,x),\atop (\eta,\by,\tau,y)\in
 I_0}\delta_{\s,\tau}J_2^l((\rho,\bx,\s,x,-1),(\eta,\by,\tau,y,1))\psi_{\rho\bx
 \s x}\opsi_{\eta\by \tau y}\notag\\
&=\frac{2}{h^2}\sum_{(\rho,\bx,\s,x),\atop (\eta,\by,\tau,y)\in
 I_0}\delta_{\s,\tau}(-1)^{n_{\beta}(x-y)}\notag\\
&\quad \cdot J_2^l((\rho,r_L(\bx-\by),\ua,r_{\beta}(x-y),-1),(\eta,\b0,\ua,0,1))\psi_{\rho\bx
 \s x}\opsi_{\eta\by \tau y}.\notag
\end{align}
It follows from \eqref{eq_partial_dispersion_relation} that 
\begin{align*}
&\frac{1}{\beta L^2}\sum_{(\o,\bk)\in
 \cM_h\times\G^*}e^{i\<\bk,\bx-\by\>}e^{i\o(x-y)}W^l(\o,\bk)(\rho,\eta)\\
&=(-1)^{n_{\beta}(x-y)}2J_2^l((\rho,r_L(\bx-\by),\ua,r_{\beta}(x-y),-1),
(\eta,\b0,\ua,0,1)).
\end{align*}
By combining this equality with
 \eqref{eq_quadratic_term_characterization_pre}
we obtain the claimed equality.

\eqref{item_symmetry_one_scale_self_energy}: By the invariance with $S$,
 $Q$ defined in \eqref{eq_hermitian_maps},
\begin{align*}
&\overline{J^l(\overline{\bU})_2((\rho,\bx,\s,x,\theta),(\eta,\by,\tau,y,\xi))}\\
&=(-1)^{1_{\theta=-1}+1_{\xi=-1}+1_{x\neq 0}+1_{y\neq 0}}\\
&\quad\cdot J^l(\bU)_2((\rho,\bx,\s,r_{\beta}(-x),-\theta),(\eta,\by,\tau,r_{\beta}(-y),-\xi)),\\
&(\forall (\rho,\bx,\s,x,\theta),(\eta,\by,\tau,y,\xi)\in I).
\end{align*}
By using this equality, \eqref{eq_translation_appl} and the
 anti-symmetry of the kernel we have that
\begin{align*}
&\overline{W^l(\overline{\bU})(-\o,\bk)(\eta,\rho)}\\
&=\frac{2}{h}\sum_{(\bx,x)\in \G\times
 [0,\beta)_h}e^{i\<\bk,\bx\>}e^{-i\o
 x}\overline{J^l(\overline{\bU})_2((\eta,\bx,\ua,x,-1),(\rho,\b0,\ua,0,1))}\\
&=\frac{2}{h}\sum_{(\bx,x)\in \G\times
 [0,\beta)_h}e^{i\<\bk,\bx\>}e^{-i\o
 x}(-1)^{1+1_{x\neq 0}}\\
&\quad\cdot J^l(\bU)_2((\eta,\bx,\ua,r_{\beta}(-x),1),(\rho,\b0,\ua,0,-1))\\
&=\frac{2}{h}\sum_{(\bx,x)\in \G\times
 [0,\beta)_h}e^{i\<\bk,\bx\>}e^{-i\o
 x}(-1)\\
&\quad\cdot J^l(\bU)_2((\eta,\b0,\ua,0,1),(\rho,r_L(-\bx),\ua,x,-1))\\
&=W^l(\bU)(\o,\bk)(\rho,\eta),
\end{align*}
which is \eqref{eq_hermitian_momentum}.

By the invariance with $S$, $Q$ defined in \eqref{eq_half_filled_maps}, 
\begin{align*}
&\overline{J^l(\overline{\bU})_2((\rho,\bx,\s,x,\theta),(\eta,\by,\tau,y,\xi))}\\
&=(-1)^{1_{\rho\in \{1,4\}}+1_{\eta\in \{1,4\}}}J^l(\bU)_2((\rho,\bx,\s,x,-\theta),(\eta,\by,\tau,y,-\xi)),\\
&(\forall (\rho,\bx,\s,x,\theta),(\eta,\by,\tau,y,\xi)\in I).
\end{align*}
It follows from this equality, \eqref{eq_translation_appl} and anti-symmetry that 
\begin{align*}
&(-1)^{1_{\rho\in \{1,4\}}+1_{\eta\in \{1,4\}}+1}\overline{W^l(\overline{\bU})(\o,\bk)(\eta,\rho)}\\
&=\frac{2}{h}\sum_{(\bx,x)\in \G\times
 [0,\beta)_h}e^{i\<\bk,\bx\>}e^{i\o
 x}(-1)^{1_{\rho\in \{1,4\}}+1_{\eta\in \{1,4\}}+1}\\
&\quad\cdot\overline{J^l(\overline{\bU})_2((\eta,\bx,\ua,x,-1),(\rho,\b0,\ua,0,1))}\\
&=\frac{2}{h}\sum_{(\bx,x)\in \G\times
 [0,\beta)_h}e^{i\<\bk,\bx\>}e^{i\o
 x}(-1) J^l(\bU)_2((\eta,\bx,\ua,x,1),(\rho,\b0,\ua,0,-1))\\
&=\frac{2}{h}\sum_{(\bx,x)\in \G\times
 [0,\beta)_h}e^{i\<\bk,\bx\>}e^{i\o
 x}(-1)^{n_{\beta}(-x)}\\
&\quad\cdot J^l(\bU)_2((\rho,r_L(-\bx),\ua,r_{\beta}(-x),-1),(\eta,\b0,\ua,0,1))\\
&=W^l(\bU)(\o,\bk)(\rho,\eta),
\end{align*}
which is \eqref{eq_half_filled_momentum}.

By the invariance with $S$, $Q$ defined in \eqref{eq_rotation_maps}, 
\begin{align*}
&J_2^l((\rho,\bx,\s,x,\theta),(\eta,\by,\tau,y,\xi))\\
&=J_2^l((\rho,r_L(-\bx-\be(\rho)),\s,x,\theta),(\eta,r_L(-\by-\be(\eta)),\tau,y,\xi)),\\
&(\forall (\rho,\bx,\s,x,\theta),(\eta,\by,\tau,y,\xi)\in I).
\end{align*}
By using this equality and \eqref{eq_translation_appl} we can derive
 that 
\begin{align*}
&e^{i\<\be(\rho)-\be(\eta),\bk\>}W^l(\o,-\bk)(\rho,\eta)\\
&=\frac{2}{h}\sum_{(\bx,x)\in \G\times
 [0,\beta)_h}e^{-i\<\bk,-\bx-\be(\rho)+\be(\eta)\>}e^{-i\o
 x}\\
&\quad\cdot J_2^l((\rho,r_L(-\bx-\be(\rho)),\ua,x,-1),(\eta,r_L(-\be(\eta)),\ua,0,1))\\
&=W^l(\o,\bk)(\rho,\eta),
\end{align*}
which is \eqref{eq_rotation_momentum}.

\eqref{item_bound_one_scale_self_energy}:
Take any $\rho,\eta\in \cB$ satisfying $\rho,\eta\in \{1,4\}$ or
 $\rho,\eta\in \{2,3\}$. Since 
$$
(-1)^{1_{\rho\in \{1,4\}}+1_{\eta\in \{1,4\}}+1}=-1,
$$ 
the equalities \eqref{eq_hermitian_momentum},
 \eqref{eq_half_filled_momentum} yield that
\begin{align*}
W^l(\o,\bk)(\rho,\eta)+W^l(-\o,\bk)(\rho,\eta)=0,\ (\forall (\o,\bk)\in
 \cM\times (2\pi/L)\Z^2).
\end{align*}
Especially, 
\begin{align*}
W^l\left(\frac{\pi}{\beta},\bk\right)(\rho,\eta)+W^l\left(-\frac{\pi}{\beta},\bk\right)(\rho,\eta)=0,\ (\forall \bk\in
 (2\pi/L)\Z^2).
\end{align*}
Using this equality, we have that
\begin{align}
|W^l(\o,\bk)(\rho,\eta)|
&\le
 \frac{1}{2}\left|W^l(\o,\bk)(\rho,\eta)-W^l\left(\frac{\pi}{\beta},\bk\right)(\rho,\eta)\right|\label{eq_bound_0th_direction}\\
&\quad +\frac{1}{2}\left|W^l(\o,\bk)(\rho,\eta)-W^l\left(-\frac{\pi}{\beta},\bk\right)(\rho,\eta)\right|\notag\\
&\le \left(|\o|+\frac{c}{\beta}\right)\sup_{(\o,\bk)\in \cM\times \frac{2\pi}{L}\Z^2}|\cD_0W^l(\o,\bk)(\rho,\eta)|.\notag
\end{align}

On the other hand, let us fix $\rho,\eta\in\cB$ satisfying $\rho\in
 \{1,4\}$ and $\eta\in \{2,3\}$, or $\rho\in
 \{2,3\}$ and $\eta\in \{1,4\}$. First, consider the case that $L\in
 2\N$. Since $(\pi,\pi)\in(2\pi/L)\Z^2$ in this case, the equality
 \eqref{eq_rotation_momentum} ensures that
 $W^l(\o,(\pi,\pi))(\rho,\eta)=0$. We deduce from this equality that
\begin{align}
|W^l(\o,\bk)(\rho,\eta)|
&\le
 |W^l(\o,\bk)(\rho,\eta)-W^l(\o,(\pi,k_2))(\rho,\eta)|\label{eq_bound_momentum_direction_even}\\
&\quad +
 |W^l(\o,(\pi,k_2))(\rho,\eta)-W^l(\o,(\pi,\pi))(\rho,\eta)|\notag\\
&\le \sum_{j=1}^2|k_j-\pi|\sup_{(\o,\bk)\in \cM\times
 \frac{2\pi}{L}\Z^2}|\cD_j W^l(\o,\bk)(\rho,\eta)|.\notag
\end{align}
Next, let us assume that $L\notin 2\N$. In this case, $\pi-\pi/L$,
 $\pi+\pi/L\in (2\pi/L)\Z$. Therefore, it follows from the equality
 \eqref{eq_rotation_momentum} that
\begin{align*}
&\left|W^l\left(\o,\left(\pi+\frac{\pi}{L},\pi+\frac{\pi}{L}\right)\right)(\rho,\eta)+W^l\left(\o,\left(\pi-\frac{\pi}{L},\pi-\frac{\pi}{L}\right)\right)(\rho,\eta)\right|\\
&\le |e^{i\frac{\pi}{L}}-1|\left|
 W^l\left(\o,\left(\pi-\frac{\pi}{L},\pi-\frac{\pi}{L}\right)\right)(\rho,\eta)\right|.\notag
\end{align*}
Substituting this inequality, we see that  
\begin{align}
&|W^l(\o,\bk)(\rho,\eta)|\label{eq_bound_momentum_direction_odd}\\
&\le
 \frac{1}{2}\sum_{\delta\in \{1,-1\}}\Bigg(
\left|W^l(\o,\bk)(\rho,\eta)-W^l\left(\o,\left(\pi+\frac{\delta\pi}{L},k_2\right)\right)(\rho,\eta)\right|\notag\\
&\qquad\quad +\Bigg|W^l\left(\o,\left(\pi+\frac{\delta\pi}{L},k_2\right)\right)(\rho,\eta)\notag\\
&\qquad\qquad\quad -W^l\left(\o,\left(\pi+\frac{\delta\pi}{L},\pi+\frac{\delta\pi}{L}\right)\right)(\rho,\eta)\Bigg|\Bigg)\notag\\
&\quad +\frac{1}{2}\left|\sum_{\delta\in
 \{1,-1\}}W^l\left(\o,\left(\pi+\frac{\delta\pi}{L},\pi+\frac{\delta\pi}{L}\right)\right)(\rho,\eta)\right|\notag\\
&\le
 \sum_{j=1}^2\left(|k_j-\pi|+\frac{c}{L}\right)\sup_{(\o,\bk)\in\cM\times\frac{2\pi}{L}\Z^2}|\cD_jW^l(\o,\bk)(\rho,\eta)|\notag\\
&\quad + \frac{c}{L}\sup_{(\o,\bk)\in\cM\times\frac{2\pi}{L}\Z^2}|W^l(\o,\bk)(\rho,\eta)|.\notag
\end{align}

The inequalities \eqref{eq_bound_0th_direction},
 \eqref{eq_bound_momentum_direction_even},
 \eqref{eq_bound_momentum_direction_odd} lead to 
\begin{align}
|W^l(\o,\bk)(\rho,\eta)|
&\le\left(|\o|+\sum_{m=1}^2|k_m-\pi|+\frac{c}{\beta}+\frac{c}{L}\right)\label{eq_bound_one_scale_self_energy_pre}\\
&\quad\cdot\Bigg(\sup_{j\in\{0,1,2\}}\sup_{(\o,\bk)\in\cM\times
 \frac{2\pi}{L}\Z^2}|\cD_jW^l(\o,\bk)(\rho,\eta)|\notag\\
&\qquad\quad+\sup_{(\o,\bk)\in
 \cM\times \frac{2\pi}{L}\Z^2}|W^l(\o,\bk)(\rho,\eta)|\Bigg),\notag\\
&(\forall (\o,\bk)\in \cM\times (2\pi/L)\Z^2,\rho,\eta\in\cB).\notag
\end{align}
We can see from the definition of $W^l(\cdot)$ and
 \eqref{eq_IR_higher_order_bound} that 
\begin{align*}
&|W^l(\o,\bk)(\rho,\eta)|\le 2\|J_2^l\|_{l,0}\le 2
 c_{IR}^{-1}M^{\frac{3}{2}l}\alpha^{-2},\\
&|\cD_j W^l(\o,\bk)(\rho,\eta)|\le 2\|J_2^l\|_{l,1}
\le 2
 c_{IR}^{-1}M^{\frac{1}{2}l}\alpha^{-2},\ (\forall j\in \{0,1,2\}).
\end{align*}
By combining these inequalities with
 \eqref{eq_bound_one_scale_self_energy_pre} we obtain the inequality
 claimed in \eqref{item_bound_one_scale_self_energy}.
\end{proof}

For later use we define an extension of the function $W^l(\cdot)$ with
continuous variables. For this purpose we need
a few more notations. For any $x\in [0,\beta)$ let 
$$r_{\beta}'(x):=\left\{\begin{array}{ll}x&\text{if }x\in [0,\beta/2),\\
x-\beta&\text{if }x\in [\beta/2,\beta).\end{array}
\right.$$
For any $x\in [0,L)$ let 
$$
s_L(x):=\left\{\begin{array}{ll}x&\text{if }x\in [0,L/2),\\
x-L&\text{if }x\in [L/2,L).\end{array}
\right.
$$
Then, for any $(x_1,x_2)\in [0,L)^2$ let
$r_L'((x_1,x_2)):=(s_L(x_1),s_L(x_2))$.
With these notations, set
\begin{align}
&\widehat{W}^l(\o,\bk)(\rho,\eta)
:=\frac{2}{h}\sum_{(\bx,x)\in \G\times
 [0,\beta)_h}e^{-i\<\bk,r_L'(\bx)\>}e^{-i\o r'_{\beta}(x)}(-1)^{n_{\beta}(r_{\beta}'(x))}\label{eq_effective_partial_dispersion_relation}\\
&\qquad\qquad\qquad\qquad\qquad\cdot J_2^l((\rho,\bx,\ua,x,-1),(\eta,\b0,\ua,0,1)),\notag\\
&(\forall (\o,\bk)\in \R^3,
\rho,\eta\in\cB).\notag  
\end{align}
Note that 
\begin{align} 
\widehat{W}^l(\o,\bk)=W^l(\o,\bk),\ (\forall (\o,\bk)\in \cM\times (2\pi/L)\Z^2),\label{eq_extension_normal_self_energy}
\end{align}
Let us establish various inequalities involving 
this function in 
Lemma \ref{lem_bound_extended_one_scale_self_energy},
Lemma \ref{lem_bound_extended_one_scale_self_energy_difference},
Lemma \ref{lem_bound_self_energy} and 
Lemma \ref{lem_bound_self_energy_difference} below, step by step.

\begin{lemma}\label{lem_bound_extended_one_scale_self_energy}
Assume that 
\begin{align}
\frac{1}{L}\le \frac{1}{\beta}\le
 M_{IR}M^{N_{\beta}+1}.\label{eq_beta_L_temporal_assumption}
\end{align}
 Then, there exists a constant $c\in\R_{>0}$ independent
 of any parameter such that the following inequalities hold
for any $l\in \{0,-1,\cdots,N_{\beta}\}$, $j\in
 \{0,-1,\cdots,l\}$.
\begin{enumerate}
\item\label{item_bound_extended_one_scale_self_energy}
\begin{align*}
&\|\widehat{W}^j(\o,\bk)\|_{4\times 4}\le c\cdot
 c_{IR}^{-1}f_{\bt}^{-\frac{1}{2}}M^{\frac{1}{2}j+l+1}\alpha^{-2},\\
&(\forall (\o,\bk)\in\R^3\text{ satisfying }\chi_l(\o,\bk)\neq 0).
\end{align*}
\item\label{item_derivative_extended_one_scale_self_energy}
\begin{align*}
&\left\|\left(\frac{\partial}{\partial k_i}\right)^n\widehat{W}^j(\o,\bk)\right\|_{4\times 4}\le c\cdot
 c_{IR}^{-1}M^{\frac{3}{2}j}\alpha^{-2}(c\fw(j)^{-1})^n(n!)^2,\\
&(\forall (\o,\bk)\in\R^3,i\in \{0,1,2\},n\in \N\cup \{0\}).
\end{align*}
\end{enumerate}
\end{lemma}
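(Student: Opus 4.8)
\textbf{Proof plan for Lemma \ref{lem_bound_extended_one_scale_self_energy}.}
The plan is to reduce both inequalities to the bounds already available on the kernel $J_2^j(\cdot)$ coming from the membership $J^j\in\cS(j)$, namely \eqref{eq_IR_higher_order_bound} with $m=2$, together with the support restriction on the cut-off $\chi_l$ provided by Lemma \ref{lem_infrared_cut_off_measure}. The crucial point for part \eqref{item_bound_extended_one_scale_self_energy} is that $\widehat{W}^j$ is almost $W^j$, and $W^j$ has the small-momentum bound of Lemma \ref{lem_properties_one_scale_self_energy}\eqref{item_bound_one_scale_self_energy}; the difficulty is that $\widehat{W}^j$ is evaluated at continuous $(\o,\bk)\in\R^3$, not only at lattice points of $\cM\times(2\pi/L)\Z^2$, so I cannot directly invoke that lemma.

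For part \eqref{item_derivative_extended_one_scale_self_energy}, I would argue as follows. Differentiating \eqref{eq_effective_partial_dispersion_relation} in $k_i$ brings down factors of $r'_\beta(x)$ or of the $i$-th component of $r_L'(\bx)$; since $|r'_\beta(x)|\le\beta/2$ and the $\ell^1$-type weighted sums are controlled by $\|J_2^j\|_{j,1}$ and $\|J_2^j\|_{j,0}$ (recall $d_0$, $d_j$ compare with $|r'_\beta(x)|$ and $\frac L{2\pi}|e^{i(2\pi/L)\langle\bx,\bv_i\rangle}-1|$ up to universal constants), each derivative costs at most a factor proportional to $\fw(j)^{-1}$ after using that $e^{(\fw(j)d_i)^{1/2}}$ dominates any polynomial in $\fw(j)d_i$ with the claimed Gevrey-type constant $(c\fw(j)^{-1})^n(n!)^2$ — this is exactly the mechanism already used in the Matsubara UV estimates (cf.\ the passage around \eqref{eq_weight_bound_0th_UV}). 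More precisely, I would bound $\|(\partial/\partial k_i)^n\widehat{W}^j(\o,\bk)\|_{4\times4}$ by $c\sum_{\text{terms}}$ of weighted kernel sums, each of which is $\le 2\|J_2^j\|_{j,0}$ or $2\|J_2^j\|_{j,1}$ times a combinatorial/Gevrey factor; then \eqref{eq_IR_higher_order_bound} for $r=0,1$ gives $\|J_2^j\|_{j,r}\le c_{IR}^{-1}M^{(3/2-r)j}\alpha^{-2}$, and since $M^{j}\le 1$ we may absorb the $M^{-rj}$ gain into the constant, yielding $c\,c_{IR}^{-1}M^{(3/2)j}\alpha^{-2}(c\fw(j)^{-1})^n(n!)^2$ as claimed. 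Note this part does not use \eqref{eq_beta_L_temporal_assumption}.

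For part \eqref{item_bound_extended_one_scale_self_energy}, on the support $\chi_l(\o,\bk)\ne0$ Lemma \ref{lem_infrared_cut_off_measure} gives $|\o|\le cM_{IR}M^{l+1}$ and $|k_i-\pi|\le cf_{\bt}^{-1/2}M_{IR}M^{l+1}$. The idea is to transfer the smallness-at-$(0,\pi,\pi)$ estimate of Lemma \ref{lem_properties_one_scale_self_energy}\eqref{item_bound_one_scale_self_energy}, which holds at lattice points, to arbitrary $(\o,\bk)\in\R^3$ by a finite Taylor expansion in the continuous variables using part \eqref{item_derivative_extended_one_scale_self_energy}: writing $\widehat{W}^j(\o,\bk)$ as $\widehat W^j$ evaluated at a nearby lattice point plus remainder terms controlled by the first derivatives, the lattice value is estimated by Lemma \ref{lem_properties_one_scale_self_energy}\eqref{item_bound_one_scale_self_energy} and \eqref{eq_extension_normal_self_energy} — giving a factor $(|\o|+\sum|k_i-\pi|+\beta^{-1}+L^{-1})M^{j/2}\alpha^{-2}c_{IR}^{-1}$ — while the remainder uses $\|(\partial/\partial k_i)\widehat W^j\|_{4\times4}\le c\,c_{IR}^{-1}M^{(3/2)j}\alpha^{-2}\fw(j)^{-1}$ times the lattice spacing $2\pi/L\le 2\pi/\beta$. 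Then the support bound $|\o|+\sum|k_i-\pi|\le cf_{\bt}^{-1/2}M_{IR}M^{l+1}$ and the hypothesis \eqref{eq_beta_L_temporal_assumption} $\beta^{-1}\le M_{IR}M^{N_\beta+1}\le M_{IR}M^{l+1}$ (since $N_\beta\le l$) both dominate $M^{l+1}$ up to $f_{\bt}^{-1/2}$; combined with $M^{j/2}\alpha^{-2}c_{IR}^{-1}$ this produces $c\,c_{IR}^{-1}f_{\bt}^{-1/2}M^{j/2+l+1}\alpha^{-2}$. One must check the remainder term $M^{(3/2)j}\alpha^{-2}\fw(j)^{-1}\beta^{-1}$ is also of this order: since $\fw(j)^{-1}=\fw(0)^{-1}M^{-j}$ and $\beta^{-1}\le cM^{l+1}$, this is $\le c\,M^{(1/2)j+l+1}\alpha^{-2}$ up to constants depending on $\fw(0)$ and $f_{\bt}$, which fits.

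\textbf{Main obstacle.} The delicate point is the last one: passing from the lattice-point bound of Lemma \ref{lem_properties_one_scale_self_energy}\eqref{item_bound_one_scale_self_energy} to a bound valid for all continuous $(\o,\bk)$ while keeping the exponent $M^{j/2+l+1}$ sharp. One has to be careful that the Taylor-remainder correction, which carries the worse power $M^{(3/2)j}$ rather than $M^{(1/2)j}$, is nonetheless controlled because it is multiplied by the small lattice spacing $\beta^{-1}\lesssim M^{l+1}$ and because $\fw(j)^{-1}=\fw(0)^{-1}M^{-j}$ eats one power of $M^{j}$; tracking these $M$-powers and the interplay with hypothesis \eqref{eq_beta_L_temporal_assumption} and with $f_{\bt}$ is where the real work lies, everything else being the routine weighted-$\ell^1$ bookkeeping already developed in Section \ref{sec_general_estimation} and Section \ref{sec_UV}.
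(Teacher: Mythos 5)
Your overall strategy coincides with the paper's: part \eqref{item_derivative_extended_one_scale_self_energy} is exactly the weighted-$\ell^1$/Gevrey bookkeeping you describe (bound $|r_{\beta}'(x)|^n$ and $|s_L(x_i)|^n$ by $(\pi/2)^n\fw(j)^{-n}(2n)!$ times the Gevrey weight, absorb the weight into $\|J_2^j\|_{j,0}$, and use \eqref{eq_IR_higher_order_bound}), and part \eqref{item_bound_extended_one_scale_self_energy} is obtained by comparing $\widehat{W}^j(\o,\bk)$ with $W^j(\hat\o,\hat\bk)$ at a nearby lattice point, invoking Lemma \ref{lem_properties_one_scale_self_energy} \eqref{item_bound_one_scale_self_energy} there, and using Lemma \ref{lem_infrared_cut_off_measure} together with \eqref{eq_beta_L_temporal_assumption} to dominate $|\hat\o|+\sum_m|\hat k_m-\pi|+\beta^{-1}+L^{-1}$ by $c f_{\bt}^{-1/2}M^{l+1}$.

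The one place where your version falls short of the statement is the remainder term in part \eqref{item_bound_extended_one_scale_self_energy}. You control $|\widehat{W}^j(\o,\bk)-W^j(\hat\o,\hat\bk)|$ by the first derivative bound of part \eqref{item_derivative_extended_one_scale_self_energy} times the lattice spacing, which carries the factor $\fw(j)^{-1}=\fw(0)^{-1}M^{-j}$; since in this section $\fw(0)=\tfrac{c_w}{18}M^{-2}$, your "constant depending on $\fw(0)$" is in fact a factor $\sim M^{2}$, so you obtain $c\,M^{2}\cdot c_{IR}^{-1}f_{\bt}^{-1/2}M^{\frac12 j+l+1}\alpha^{-2}$ rather than the claimed bound with a parameter-independent $c$. (This would still be harmless downstream, where $\alpha^2\ge cM^7$ absorbs spare powers of $M$, but it does not prove the lemma as stated.) The repair is the paper's direct estimate: write the difference of kernels as $\tfrac2h\sum_{(\bx,x)}\bigl(|e^{i\o r_{\beta}'(x)}-e^{i\hat\o r_{\beta}'(x)}|+|e^{i\<\bk,r_L'(\bx)\>}-e^{i\<\hat\bk,r_L'(\bx)\>}|\bigr)|J_2^j(\cdots)|$, bound the increments by $\tfrac{c}{\beta}d_0(\cdots)+\tfrac{c}{L}\sum_m d_m(\cdots)$, and use the semi-norm $\|J_2^j\|_{j,1}\le c_{IR}^{-1}M^{\frac12 j}\alpha^{-2}$ directly; the weight $d_{j'}$ enters $\|\cdot\|_{j,1}$ without any $\fw$ factor, so no power of $M$ is lost and the constant stays universal. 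With that substitution the rest of your argument, including the use of $N_{\beta}\le l$ to get $\beta^{-1}\le M_{IR}M^{l+1}$, goes through.
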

\begin{proof}
\eqref{item_bound_extended_one_scale_self_energy}: 
Take any $(\o,\bk)\in\R^3$ satisfying $\chi_l(\o,\bk)\neq 0$. By
 periodicity we may assume that $\bk\in [0,2\pi)^2$ without losing
 generality. Let $\hat{\o}\in\cM$,
 $\hat{\bk}=(\hat{k}_1,\hat{k}_2)\in (2\pi/L)\Z^2$ be such that
 $\o\in [\hat{\o},\hat{\o}+2\pi/\beta)$, $\bk\in
 [\hat{k}_1,\hat{k}_1+2\pi/L)\times  [\hat{k}_2,\hat{k}_2+2\pi/L)$. It
 follows from \eqref{eq_IR_higher_order_bound} that 
\begin{align}
&|\widehat{W}^j(\o,\bk)(\rho,\eta)-W^j(\hat{\o},\hat{\bk})(\rho,\eta)|\label{eq_linear_constant_piecewise}\\
&\le \frac{2}{h}\sum_{(\bx,x)\in\G\times [0,\beta)_h}(|e^{i\o
 r_{\beta}'(x)}-e^{i\hat{\o} r_{\beta}'(x)}|+
|e^{i\<\bk,r_{L}'(\bx)\>}-e^{i\<\hat{\bk},r_{L}'(\bx)\>}|)\notag\\
&\qquad\qquad\qquad\cdot|J_2^j((\rho,\bx,\ua,x,-1),(\eta,\b0,\ua,0,1))|\notag\\
&\le \frac{c}{h}\sum_{(\bx,x)\in\G\times
 [0,\beta)_h}\Bigg(|\o-\hat{\o}|(1_{x< \frac{\beta}{2}}|x|+1_{x\ge 
 \frac{\beta}{2}}|x-\beta|)\notag\\
&\qquad\qquad\qquad\qquad+\sum_{m=1}^2|k_m-\hat{k}_m|(1_{x_m<
 \frac{L}{2}}|x_m|+1_{x_m\ge
 \frac{L}{2}}|x_m-L|)\Bigg)\notag\\
&\qquad\qquad\qquad\cdot|J_2^j((\rho,\bx,\ua,x,-1),(\eta,\b0,\ua,0,1))|\notag\\
&\le  \frac{c}{h}\sum_{(\bx,x)\in\G\times
 [0,\beta)_h}\Bigg(\frac{1}{\beta}d_0((\rho,\bx,\ua,x,-1),(\eta,\b0,\ua,0,1))\notag\\
&\qquad\qquad\qquad\qquad+\frac{1}{L}\sum_{m=1}^2d_m((\rho,\bx,\ua,x,-1),(\eta,\b0,\ua,0,1))\Bigg)\notag\\
&\qquad\qquad\qquad\cdot|J_2^j((\rho,\bx,\ua,x,-1),(\eta,\b0,\ua,0,1))|\notag\\
&\le c\left(\frac{1}{\beta}+\frac{1}{L}\right)\|J_2^j\|_{j,1}
\le c\left(\frac{1}{\beta}+\frac{1}{L}\right) c_{IR}^{-1}M^{\frac{1}{2}j}\alpha^{-2}.\notag\end{align}

By Lemma \ref{lem_infrared_cut_off_measure} and the inequality
 $f_{\bt}\le 1$,
\begin{align*}
|\hat{\o}|+\sum_{m=1}^2|\hat{k}_m-\pi|\le
 c\left(\frac{1}{\beta}+\frac{1}{L}+f_{\bt}^{-\frac{1}{2}}M^{l+1}\right).
\end{align*}
We can combine this inequality with Lemma
 \ref{lem_properties_one_scale_self_energy}
 \eqref{item_bound_one_scale_self_energy},
 \eqref{eq_beta_L_temporal_assumption} and
 \eqref{eq_linear_constant_piecewise} to deduce that
\begin{align*}
\|\widehat{W}^j(\o,\bk)\|_{4\times 4}&\le
 \|\widehat{W}^j(\o,\bk)-W^j(\hat{\o},\hat{\bk})\|_{4\times 4}+
 \|W^j(\hat{\o},\hat{\bk})\|_{4\times 4}\\
&\le c\cdot
 c_{IR}^{-1}\Bigg(|\hat{\o}|+\sum_{m=1}^2|\hat{k}_m-\pi|+\frac{1}{\beta}+\frac{1}{L}\Bigg)M^{\frac{1}{2}j}\alpha^{-2}\\
&\le c\cdot
 c_{IR}^{-1}f_{\bt}^{-\frac{1}{2}}
M^{\frac{1}{2}j+l+1}\alpha^{-2}.
\end{align*}

\eqref{item_derivative_extended_one_scale_self_energy}:
By \eqref{eq_IR_higher_order_bound} and the inequality 
$$
|r_{\beta}'(x)|\le
 \frac{\pi}{2}|d_0((\rho,\bx,\ua,x,-1),(\eta,\b0,\ua,0,1))|,\ (\forall
 x\in [0,\beta)_h),
$$
we have that
\begin{align*}
\left\|\left(\frac{\partial}{\partial \o}\right)^n\widehat{W}^j(\o,\bk)
\right\|_{4\times 4}&\le
 2\left(\frac{\pi}{2}\right)^n\fw(j)^{-n}(2n)!\|J_2^j\|_{j,0}\\
&\le c^{n+1}\fw(j)^{-n}(n!)^2 c_{IR}^{-1}M^{\frac{3}{2}j}\alpha^{-2},\
 (\forall n\in \N\cup \{0\}).
\end{align*}
Since 
\begin{align*}
|s_{L}(x_i)|\le
 \frac{\pi}{2}|d_i((\rho,\bx,\ua,x,-1),(\eta,\b0,\ua,0,1))|,\ (\forall
 i\in\{1,2\},(x_1,x_2)\in \G),
\end{align*}
the upper bound on $\|(\partial/\partial
 k_i)^n\widehat{W}^j(\o,\bk)\|_{4\times 4}$ $(i\in \{1,2\})$ can be obtained
 in the same way.
\end{proof}

\begin{lemma}\label{lem_bound_extended_one_scale_self_energy_difference}
Assume that \eqref{eq_beta_h_assumption} holds. 
Then, there exists a constant $c\in\R_{>0}$ independent of any parameter
 such that the following inequality holds true for any 
$j\in  \{0,-1,\cdots,N_{\beta_1}\}$,
 $(J^j(\beta_1)(\psi),J^j(\beta_2)(\psi))\in \tilde{\cS}(j)$.
\begin{align*}
&\left\|\left(\frac{\partial}{\partial
 k_i}\right)^n(\widehat{W}^j(\beta_1)(\o,\bk)
-\widehat{W}^j(\beta_2)(\o,\bk))\right\|_{4\times 4}\\
&\le
 c\beta_1^{-\frac{1}{2}}c_{IR}^{-1}M^{\frac{1}{2}j}\alpha^{-2}
(c\fw(j)^{-1})^n(n!)^2,\\
&(\forall (\o,\bk)\in\R^3,i\in \{0,1,2\},n\in \N\cup\{0\}).
\end{align*}
\end{lemma}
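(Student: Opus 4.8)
The statement asserts Gevrey-class bounds on the difference $\widehat{W}^j(\beta_1)(\o,\bk)-\widehat{W}^j(\beta_2)(\o,\bk)$, uniform in $(\o,\bk)$, with an extra factor $\beta_1^{-1/2}$ compared to the single-temperature bound in Lemma \ref{lem_bound_extended_one_scale_self_energy} \eqref{item_derivative_extended_one_scale_self_energy}. The plan is to mimic the proof of that lemma, replacing the single-scale kernel bound by the temperature-difference bound that comes packaged in the definition of $\tilde{\cS}(j)$, namely \eqref{eq_IR_higher_order_difference}. Concretely, I would start from the defining formula \eqref{eq_effective_partial_dispersion_relation} for $\widehat{W}^j(\beta_a)$, differentiate $n$ times in $k_i$ under the finite sum, and pull out the monomial $r_{\beta}'(x)^n$ (for $i=0$) or $s_L(x_i)^n$ (for $i=1,2$) produced by the derivatives of the exponential. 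This is legitimate because $\widehat{W}^j$ is a finite trigonometric sum in each variable, so term-by-term differentiation is immediate.

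\textbf{Key steps.} First I would write, for $i\in\{1,2\}$ (the case $i=0$ being identical with $r_\beta'(x)$ in place of $s_L(x_i)$),
\begin{align*}
&\left(\frac{\partial}{\partial k_i}\right)^n\bigl(\widehat{W}^j(\beta_1)(\o,\bk)-\widehat{W}^j(\beta_2)(\o,\bk)\bigr)(\rho,\eta)\\
&=\frac{2}{h}\sum_{(\bx,x)}(-is_L(x_i))^n e^{-i\<\bk,r_L'(\bx)\>}e^{-i\o r_\beta'(x)}(-1)^{n_{\beta_a}(r_\beta'(x))}\\
&\quad\cdot\bigl(J_2^j(\beta_1)(R_{\beta_1}((\rho,\bx,\ua,x,-1),(\eta,\b0,\ua,0,1)))-J_2^j(\beta_2)(R_{\beta_2}(\cdots))\bigr),
\end{align*}
where I should be slightly careful that the parity sign $(-1)^{n_{\beta_a}}$ is the same for $\beta_1$ and $\beta_2$ on the relevant time range (by the analogue of the identity $n_{\beta_1}(x)=n_{\beta_2}(x)$ for $x\in[-\beta_1/4,\beta_1/4)_h$ used elsewhere in the paper), so the difference is genuinely a difference of kernels evaluated through $R_{\beta_a}$. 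Then I would bound $|s_L(x_i)|^n\le (\pi/2)^n\, \hat d_i(\cdot)^n$ exactly as in Lemma \ref{lem_bound_extended_one_scale_self_energy}, absorb the combinatorial factor $(2n)!\le 4^n(n!)^2$, and recognize the remaining sum as controlled by the difference seminorm $|J_2^j(\beta_1)-J_2^j(\beta_2)|_j$ after inserting the weight $e^{\sum(\tfrac1\pi\fw(j)\hat d_i)^{1/2}}\ge 1$ and using $(\pi/2)\fw(j)^{-1}\le c\fw(j)^{-1}$. This gives the derivative bound in terms of $|J_2^j(\beta_1)-J_2^j(\beta_2)|_j$ times $(c\fw(j)^{-1})^n(n!)^2$. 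Finally, from \eqref{eq_IR_higher_order_difference} with $m=2$ I read off $c_{IR}\,M^{2j}\alpha^2|J_2^j(\beta_1)-J_2^j(\beta_2)|_j\le M^{\frac52 j}\beta_1^{-1/2}$, hence $|J_2^j(\beta_1)-J_2^j(\beta_2)|_j\le c_{IR}^{-1}\beta_1^{-1/2}M^{\frac12 j}\alpha^{-2}$, which plugged into the above yields precisely the claimed estimate.

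\textbf{Main obstacle.} The routine part is the differentiation and the combinatorial bookkeeping; the only genuinely delicate point is the consistency of the parity factors $(-1)^{n_{\beta_a}(r_{\beta_a}'(x))}$ between the two temperatures, and more broadly making sure that ``$\widehat{W}^j(\beta_1)-\widehat{W}^j(\beta_2)$'' is honestly expressible through the difference seminorm $|\cdot-\cdot|_j$ whose definition uses the projections $R_{\beta_a}$ onto $\hat I$. One has to check that the time argument $r_\beta'(x)$ lies in the window $[-\beta_1/4,\beta_1/4)_h$ for the terms that matter, or alternatively split the sum into a ``core'' part where the two parities agree and a ``tail'' part near $x\approx\beta_1/2$, estimating the tail by the $\|\cdot\|_{j,1}$-seminorm times $\beta_1^{-1}$ as is done in the proof of Lemma \ref{lem_tree_sum_kernels_difference} (cf. the computation \eqref{eq_convenient_arrangement}). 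Since both $\|J_2^j(\beta_a)\|_{j,1}$ and $|J_2^j(\beta_1)-J_2^j(\beta_2)|_j$ carry the required powers of $M$ and $\alpha$ (the former from \eqref{eq_IR_higher_order_bound}, which holds since $(J^j(\beta_1),J^j(\beta_2))\in\tilde\cS(j)\subset\cS(j)(\beta_1)\times\cS(j)(\beta_2)$, and the extra $\beta_1^{-1}\le\beta_1^{-1/2}$ factor being harmless), the tail contributes a term of the same order, and the proof closes. I would also remark that no assumption like \eqref{eq_beta_L_temporal_assumption} is needed here, in contrast with Lemma \ref{lem_bound_extended_one_scale_self_energy}, because we are only estimating a difference of seminorms rather than invoking Lemma \ref{lem_properties_one_scale_self_energy} \eqref{item_bound_one_scale_self_energy}.
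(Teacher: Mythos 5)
Your proposal is correct and follows essentially the same route as the paper: decompose each $\widehat{W}^j(\beta_a)$ into a core over $x\in[-\beta_1/4,\beta_1/4)_h$ (whose difference is controlled by $|J_2^j(\beta_1)-J_2^j(\beta_2)|_j$) and a tail over $[\beta_1/4,\beta_a-\beta_1/4)_h$ (controlled by $\beta_1^{-1}\sum_a\|J_2^j(\beta_a)\|_{j,1}$), then substitute \eqref{eq_IR_higher_order_bound} and \eqref{eq_IR_higher_order_difference}. The only caveat is that your first display in ``Key steps'' subtracts the two sums termwise over a common index set, which is not literally available since the time ranges differ; but you identify and resolve exactly this issue in your ``Main obstacle'' paragraph, so the argument closes as in the paper.
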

\begin{proof}
Note that for any $a\in \{1,2\}$,
\begin{align*}
&\widehat{W}^j(\beta_a)(\o,\bk)(\rho,\eta)\\
&=\frac{2}{h}\sum_{(\bx,x)\in \G\times
 [0,\beta_a)_h}e^{-i\<\bk,r_L'(\bx)\>}\big(1_{x\in
 [0,\frac{\beta_1}{4})}e^{-i\o x}+1_{x\in
 [\frac{\beta_1}{4},\frac{\beta_a}{2})}e^{-i\o x}\\
&\qquad\qquad\qquad-1_{x\in
 [\frac{\beta_a}{2},\beta_a-\frac{\beta_1}{4})}e^{-i\o (x-\beta_a)}
-1_{x\in
 [\beta_a-\frac{\beta_1}{4},\beta_a)}e^{-i\o (x-\beta_a)}\big)\\
&\qquad\qquad \cdot J_2^j(\beta_a)((\rho,\bx,\ua,x,-1),(\eta,\b0,\ua,0,1))\\
&=\frac{2}{h}\sum_{\bx\in\G}\sum_{x\in
 [-\frac{\beta_1}{4},\frac{\beta_1}{4})_h}e^{-i\<\bk,r_L'(\bx)\>}\\
&\qquad\qquad\cdot\big(1_{x\in
 [0,\frac{\beta_1}{4})}e^{-i\o
 x}J_2^j(\beta_a)((\rho,\bx,\ua,x,-1),(\eta,\b0,\ua,0,1)\big)\\
&\qquad\qquad\qquad-1_{x\in
 [-\frac{\beta_1}{4},0)}e^{-i\o
 x}J_2^j(\beta_a)((\rho,\bx,\ua,x+\beta_a,-1),(\eta,\b0,\ua,0,1)))\\
&\quad + \frac{2}{h}\sum_{(\bx,x)\in\G\times [0,\beta_a)_h}
e^{-i\<\bk,r_L'(\bx)\>}\big(1_{x\in
 [\frac{\beta_1}{4},\frac{\beta_a}{2})}e^{-i\o
 x}-1_{x\in
 [\frac{\beta_a}{2},\beta_a-\frac{\beta_1}{4})}e^{-i\o
 (x-\beta_a)}\big)\\
&\qquad\qquad\cdot J_2^j(\beta_a)((\rho,\bx,\ua,x,-1),(\eta,\b0,\ua,0,1)).
\end{align*}
Thus, 
\begin{align*}
&\left|\prod_{i=0}^2\left(\frac{\partial}{\partial k_i}\right)^{n_i}
(\widehat{W}^j(\beta_1)(\o,\bk)(\rho,\eta)
-\widehat{W}^j(\beta_2)(\o,\bk)(\rho,\eta))\right|\\
&\le \frac{2}{h}\sum_{\bx\in\G}\sum_{x\in
 [-\frac{\beta_1}{4},\frac{\beta_1}{4})_h}|x|^{n_0}|s_L(x_1)|^{n_1}|s_L(x_2)|^{n_2}\\
&\quad\cdot
 |J_2^j(\beta_1)(R_{\beta_1}((\rho,\bx,\ua,x,-1),(\eta,\b0,\ua,0,1)))\\
&\qquad -J_2^j(\beta_2)(R_{\beta_2}((\rho,\bx,\ua,x,-1),(\eta,\b0,\ua,0,1)))|\\
&\quad + \frac{2}{h}\sum_{a=1}^2
\sum_{(\bx,x)\in\G\times [0,\beta_a)_h}\big(1_{x\in
 [\frac{\beta_1}{4},\frac{\beta_a}{2})}|x|^{n_0}
+1_{x\in
 [\frac{\beta_a}{2},\beta_a-\frac{\beta_1}{4})}|x-\beta_a|^{n_0}\big)\\
&\qquad\cdot |s_L(x_1)|^{n_1}|s_L(x_2)|^{n_2}|J_2^j(\beta_a)((\rho,\bx,\ua,x,-1),(\eta,\b0,\ua,0,1))|\\
&\le \frac{2}{h}\sum_{\bx\in\G}\sum_{x\in
 [-\frac{\beta_1}{4},\frac{\beta_1}{4})_h}\left(\frac{\pi}{2}\right)^{n_1+n_2}\prod_{i=0}^2|\hat{d}_i((\rho,\bx,\ua,x,-1),(\eta,\b0,\ua,0,1))|^{n_i}\\
&\quad\cdot
 |J_2^j(\beta_1)(R_{\beta_1}((\rho,\bx,\ua,x,-1),(\eta,\b0,\ua,0,1)))\\
&\qquad -J_2^j(\beta_2)(R_{\beta_2}((\rho,\bx,\ua,x,-1),(\eta,\b0,\ua,0,1)))|\\
&\quad + \frac{2}{h}\sum_{a=1}^2
\sum_{(\bx,x)\in\G\times [0,\beta_a)_h}\left(\frac{\pi}{2}\right)^{n_1+n_2}\\
&\qquad\cdot \Bigg(1_{x\in
 [\frac{\beta_1}{4},\frac{\beta_a}{2})}\frac{|x|^{n_0}}{|d_0(\beta_a)((\rho,\bx,\ua,x,-1),(\eta,\b0,\ua,0,1))|^{n_0+1}}\\
&\qquad\quad+1_{x\in
 [\frac{\beta_a}{2},\beta_a-\frac{\beta_1}{4})}\frac{|x-\beta_a|^{n_0}}{|d_0(\beta_a)((\rho,\bx,\ua,x,-1),(\eta,\b0,\ua,0,1))|^{n_0+1}}\Bigg)\\
&\qquad\cdot
 |d_0(\beta_a)((\rho,\bx,\ua,x,-1),(\eta,\b0,\ua,0,1))|\\
&\qquad\cdot
 \prod_{i=0}^2|d_i((\rho,\bx,\ua,x,-1),(\eta,\b0,\ua,0,1))|^{n_i}\\
&\qquad\cdot |J_2^j(\beta_a)((\rho,\bx,\ua,x,-1),(\eta,\b0,\ua,0,1))|.
\end{align*}
Moreover, substitution of \eqref{eq_IR_higher_order_bound},
 \eqref{eq_IR_higher_order_difference} gives
\begin{align*}
&\left\|\left(\frac{\partial}{\partial
 k_i}\right)^n(\widehat{W}^j(\beta_1)(\o,\bk)
-\widehat{W}^j(\beta_2)(\o,\bk))\right\|_{4\times 4}\\
&\le
 c^{n+1}\fw(j)^{-n}(2n)!|J_2^j(\beta_1)-J_2^j(\beta_2)|_j\\
&\quad +c^{n+1}\beta_1^{-1}\fw(j)^{-n}(2n)!\sum_{a=1}^2\|J_2^j(\beta_a)\|_{j,1}\\
&\le c \beta_1^{-\frac{1}{2}}c_{IR}^{-1}M^{\frac{1}{2}j}\alpha^{-2}
(c\fw(j)^{-1})^n(n!)^2,\\
& (\forall (\o,\bk)\in\R^3,i\in\{0,1,2\},n\in \N\cup\{0\}).
\end{align*}
\end{proof}

\begin{lemma}\label{lem_bound_self_energy}
Assume that 
 \eqref{eq_beta_L_temporal_assumption} holds.
Then, there exists a constant $c\in\R_{>0}$ independent of any parameter
 such that the following inequalities hold true for any 
$l\in \{0,-1,\cdots,N_{\beta}\}$.
\begin{enumerate}
\item\label{item_bound_extended_self_energy}
\begin{align*}
&\left\|\sum_{j=0}^{l'}\hat{\chi}_{\le
 j}(\o,\bk)\widehat{W}^j(\o,\bk)\right\|_{4\times 4}\le c\cdot
 c_{IR}^{-1}f_{\bt}^{-\frac{1}{2}}M^{l+1}\alpha^{-2},\\
&(\forall (\o,\bk)\in\R^3\text{ satisfying }\chi_l(\o,\bk)\neq
 0,l'\in \{0,-1,\cdots,l\}).
\end{align*}
\item\label{item_neumann_justification_extended_self_energy}
\begin{align*}
&\left\|(i\o I_4-E(\bk))^{-1}\sum_{j=0}^{l'}\hat{\chi}_{\le
 j}(\o,\bk)\widehat{W}^j(\o,\bk)\right\|_{4\times 4}\le c\cdot
 c_{IR}^{-1}f_{\bt}^{-\frac{1}{2}}M\alpha^{-2},\\
&(\forall (\o,\bk)\in\R^3\text{ satisfying }\chi_l(\o,\bk)\neq
 0,l'\in \{0,-1,\cdots,l\}).
\end{align*}
\item\label{item_bound_extended_self_energy_derivative}
\begin{align*}
&\left\|\left(\frac{\partial}{\partial k_i}\right)^n\left(\sum_{j=0}^{l'}\hat{\chi}_{\le
 j}(\o,\bk)\widehat{W}^j(\o,\bk)\right)\right\|_{4\times 4}\le c\cdot
 c_{IR}^{-1}M^{l}\alpha^{-2}(c\fw(l)^{-1})^n(n!)^2,\\
&(\forall (\o,\bk)\in\R^3,i\in
 \{0,1,2\},n\in \N, l'\in\{0,-1,\cdots,l\}).
\end{align*}
\end{enumerate}
\end{lemma}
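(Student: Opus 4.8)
�The plan is to reduce all three inequalities to the single-scale bounds on $\widehat{W}^j$ already established in Lemma \ref{lem_bound_extended_one_scale_self_energy}, using the disjointness of the supports of the cut-off functions $\chi_l$ to collapse the sum over $j$. The key observation is that on the support of $\chi_l$ only the terms with $j$ in a bounded range actually contribute in a meaningful way: by \eqref{eq_support_properties_infrared} and Lemma \ref{lem_infrared_cut_off_measure}, if $\chi_l(\o,\bk)\neq 0$ and $\hat\chi_{\le j}(\o,\bk)\neq 0$ then $(\o^2+f_{\bt}\sum_{m}(1+\cos k_m))^{1/2}$ lies in an interval forcing $j \geq l - O(1)$; on the other hand $\hat\chi_{\le j}$ is a product of two $\phi$-factors bounded by $1$, so $\|\hat\chi_{\le j}(\o,\bk)\widehat{W}^j(\o,\bk)\|_{4\times 4} \leq \|\widehat{W}^j(\o,\bk)\|_{4\times 4}$.

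For \eqref{item_bound_extended_self_energy}: I would fix $(\o,\bk)$ with $\chi_l(\o,\bk)\neq 0$ and write $\sum_{j=0}^{l'}\hat\chi_{\le j}(\o,\bk)\widehat{W}^j(\o,\bk)$; each nonzero summand has $\|\widehat{W}^j(\o,\bk)\|_{4\times 4} \leq c\,c_{IR}^{-1}f_{\bt}^{-1/2}M^{j/2+l+1}\alpha^{-2}$ by Lemma \ref{lem_bound_extended_one_scale_self_energy}\,\eqref{item_bound_extended_one_scale_self_energy} (applicable since $j \leq l \leq 0$). The geometric sum $\sum_{j\leq l} M^{j/2}$ converges to $c M^{l/2} \leq c$, so after multiplying through we get $c\,c_{IR}^{-1}f_{\bt}^{-1/2}M^{l+1}\alpha^{-2}$, absorbing the $M^{l/2}\leq 1$ into the constant. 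Here one must note that $\widehat{W}^j$ for $j$ between $l'$ and $l$ is never needed because $l' \leq l$ already, and for $j > l$ (impossible here) the bound would not apply — but the range of summation is exactly $\{0,-1,\dots,l'\}$ with $l' \leq l$, so every invoked bound is legitimate. For \eqref{item_neumann_justification_extended_self_energy} I would multiply the pointwise bound from \eqref{item_bound_extended_self_energy} by $\|(i\o I_4 - E(\bk))^{-1}\|_{4\times 4}$, which on $\supp\chi_l$ is bounded by $c f_{\bt}^{-1/2}M^{-l}$ via Lemma \ref{lem_estimate_free_dispersion_relation}\,\eqref{item_upper_bound_free_propagator} together with the lower bound $(\o^2+f_{\bt}\sum_m(1+\cos k_m))^{1/2} \geq c M_{IR}M^l$ from \eqref{eq_support_properties_infrared}; the $M^l$ and $M^{-l}$ cancel, leaving $c\,c_{IR}^{-1}f_{\bt}^{-1}M\alpha^{-2}$. (The statement as written has $f_{\bt}^{-1/2}$; I expect the cleanest route gives $f_{\bt}^{-1}$, so I would either track the $f_{\bt}$-powers carefully or use that $f_{\bt}\leq 1$ to weaken $f_{\bt}^{-1/2}$ to the stated form — in fact since $f_{\bt}\le 1$ one has $f_{\bt}^{-1/2}\le f_{\bt}^{-1}$, so I would need to recheck which direction the paper intends; most likely the precise power is what the next lemmas use and I would simply carry it through honestly.)

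For \eqref{item_bound_extended_self_energy_derivative} the Leibniz rule is the main tool: $(\partial/\partial k_i)^n$ of the product $\hat\chi_{\le j}\widehat{W}^j$ splits into $\sum_m \binom{n}{m}(\partial_{k_i}^m\hat\chi_{\le j})(\partial_{k_i}^{n-m}\widehat{W}^j)$, and I would insert the Gevrey-type bounds $|(\partial_{k_i}^m\hat\chi_{\le j})| \leq (c\fw(j)^{-1})^m(m!)^2$ from Lemma \ref{lem_infrared_cut_off_derivative} and $\|(\partial_{k_i}^{n-m}\widehat{W}^j)\|_{4\times 4} \leq c\,c_{IR}^{-1}M^{3j/2}\alpha^{-2}(c\fw(j)^{-1})^{n-m}((n-m)!)^2$ from Lemma \ref{lem_bound_extended_one_scale_self_energy}\,\eqref{item_derivative_extended_one_scale_self_energy}. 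Using $\binom{n}{m}(m!)^2((n-m)!)^2 \leq (n!)^2$ and $\fw(j)^{-1} = \fw(0)^{-1}M^{-j} \leq \fw(0)^{-1}M^{-l}=\fw(l)^{-1}\cdot(\fw(l)/\fw(j))$ — wait, more carefully: $\fw(j)^{-1}\le \fw(l)^{-1}$ since $j\le l\le 0$ gives $M^j\ge M^l$ hence $M^{-j}\le M^{-l}$, so $\fw(j)^{-1}=\fw(0)^{-1}M^{-j}\le \fw(0)^{-1}M^{-l}=\fw(l)^{-1}$ — each factor $(c\fw(j)^{-1})$ is dominated by $(c\fw(l)^{-1})$, yielding the $(c\fw(l)^{-1})^n$ prefactor. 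The remaining sum $\sum_{j\leq l}M^{3j/2}$ converges to $cM^{3l/2}\leq cM^l$ (again using $M^{l/2}\leq 1$), giving the claimed $c\,c_{IR}^{-1}M^l\alpha^{-2}(c\fw(l)^{-1})^n(n!)^2$.

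I expect the main obstacle to be not any single estimate but the bookkeeping: making sure the geometric sums over $j$ are handled with the correct sign of exponent (all the relevant $M$-powers have the form $M^{aj}$ with $a>0$ and $j\leq 0$, so they sum to something $\leq c$, but one must double-check that no term with $j$ near $0$ spoils this — it does not, since $M^{aj}\leq 1$ for $j\leq 0$), and tracking the $f_{\bt}$-dependence consistently so the output matches exactly what Lemma \ref{lem_bound_self_energy_difference} and the subsequent covariance estimates need. A secondary subtlety is that in \eqref{item_bound_extended_self_energy} and \eqref{item_neumann_justification_extended_self_energy} the bound must hold for \emph{all} $l'\in\{0,-1,\dots,l\}$ uniformly, which is automatic here because shrinking the range of summation only removes nonnegative-normed terms, so the bound for $l'=l$ dominates; I would state this explicitly to avoid any gap.
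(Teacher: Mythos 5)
Your overall strategy --- reducing everything to the single-scale bounds of Lemma \ref{lem_bound_extended_one_scale_self_energy} and summing a geometric series in $j$ --- is exactly the paper's, and part \eqref{item_bound_extended_self_energy} closes as you describe, since $\sum_{j=0}^{l'}M^{j/2}\le (1-M^{-1/2})^{-1}\le c$ for $M>\sqrt2$. In part \eqref{item_neumann_justification_extended_self_energy}, however, your hesitation about the power of $f_{\bt}$ comes from a spurious factor you insert into the resolvent bound: on $\supp\chi_l$ the support condition \eqref{eq_support_properties_infrared} bounds the quantity $\big(\o^2+f_{\bt}\sum_{m}(1+\cos k_m)\big)^{1/2}$ itself from below by $\tfrac{\pi}{\sqrt6}M_{IR}M^{l}$ --- the $f_{\bt}$ is already inside --- so Lemma \ref{lem_estimate_free_dispersion_relation} \eqref{item_upper_bound_free_propagator} gives $\|(i\o I_4-E(\bk))^{-1}\|_{4\times4}\le cM^{-l}$ with no $f_{\bt}$ factor at all. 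Multiplying by the bound from \eqref{item_bound_extended_self_energy} then yields exactly $c\,c_{IR}^{-1}f_{\bt}^{-1/2}M\alpha^{-2}$; with your extra $f_{\bt}^{-1/2}$ you would only reach $f_{\bt}^{-1}$, which (since $f_{\bt}\le1$) does not imply the stated inequality.

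The genuine gap is in \eqref{item_bound_extended_self_energy_derivative}. The index $j$ runs over $\{0,-1,\dots,l'\}$, i.e.\ $l\le l'\le j\le 0$, so $\sum_jM^{3j/2}$ is dominated by its $j=0$ term and is $\Theta(1)$; it is \emph{not} $\le cM^{3l/2}$ as you assert (you have the direction of the index set inverted throughout, harmlessly in parts (1)--(2) but fatally here). Consequently, if you first replace every $\fw(j)^{-1}$ by $\fw(l)^{-1}$ and only then sum $M^{3j/2}$, you end up with $c\,c_{IR}^{-1}\alpha^{-2}(c\fw(l)^{-1})^n(n!)^2$ \emph{without} the prefactor $M^{l}$ --- a strictly weaker bound, and one that is insufficient downstream: in \eqref{eq_IR_integrant_derivative_bound_pre} the $M^{l}$ is needed so that the product with the resolvent bound $M^{-l}$ stays $O(1)$ when Lemma \ref{lem_gevrey_matrix} is applied. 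The repair is to keep the two $j$-dependent powers coupled before estimating: writing $(\fw(j)^{-1})^{n}=(\fw(0)^{-1})^{n}M^{-jn}$, the Leibniz sum produces $M^{3j/2-jn}=M^{j/2}M^{j(1-n)}$; since $j\ge l$ and $1-n\le0$ one has $M^{j(1-n)}\le M^{l(1-n)}$, and the leftover $\sum_jM^{j/2}\le c$. The factor $M^{l(1-n)}=M^{l}M^{-ln}$ then supplies both the $M^{l}$ prefactor and the conversion of $(\fw(0)^{-1})^{n}$ into $(\fw(l)^{-1})^{n}$, which is precisely how the paper closes the estimate.
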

\begin{proof}
\eqref{item_bound_extended_self_energy}: It follows from Lemma
 \ref{lem_bound_extended_one_scale_self_energy}
 \eqref{item_bound_extended_one_scale_self_energy} and the assumption
 $M> \sqrt{2}$ that 
\begin{align*}
&\left\|\sum_{j=0}^{l'}\hat{\chi}_{\le
 j}(\o,\bk)\widehat{W}^j(\o,\bk)\right\|_{4\times 4}\\
&\le c\cdot
 c_{IR}^{-1}f_{\bt}^{-\frac{1}{2}}M^{l+1}\alpha^{-2}\sum_{j=0}^lM^{\frac{1}{2}j}\le c\cdot
 c_{IR}^{-1}f_{\bt}^{-\frac{1}{2}}M^{l+1}\alpha^{-2},\\
&(\forall (\o,\bk)\in\R^3\text{ satisfying }\chi_l(\o,\bk)\neq
 0,l'\in \{0,-1,\cdots,l\}).
\end{align*}

\eqref{item_neumann_justification_extended_self_energy}:
By Lemma \ref{lem_estimate_free_dispersion_relation}
 \eqref{item_upper_bound_free_propagator},
 \eqref{eq_support_properties_infrared} and the inequality in
 \eqref{item_bound_extended_self_energy},
\begin{align*}
&\left\|(i\o I_4-E(\bk))^{-1}\sum_{j=0}^{l'}\hat{\chi}_{\le
 j}(\o,\bk)\widehat{W}^j(\o,\bk)\right\|_{4\times 4}\\
&\le \|(i\o I_4-E(\bk))^{-1}\|_{4\times 4}\left\|\sum_{j=0}^{l'}\hat{\chi}_{\le
 j}(\o,\bk)\widehat{W}^j(\o,\bk)\right\|_{4\times 4}\\
&\le c\cdot
 c_{IR}^{-1}f_{\bt}^{-\frac{1}{2}}M\alpha^{-2},\\
&(\forall (\o,\bk)\in\R^3\text{ satisfying }\chi_l(\o,\bk)\neq
 0,l'\in \{0,-1,\cdots,l\}).
\end{align*}

\eqref{item_bound_extended_self_energy_derivative}:
One can see from Lemma \ref{lem_infrared_cut_off_derivative}, Lemma
 \ref{lem_bound_extended_one_scale_self_energy}
 \eqref{item_derivative_extended_one_scale_self_energy}, $\fw(j)=\fw(0)M^j$
 and $M > \sqrt{2}$ that
\begin{align*}
&\left\|\left(\frac{\partial}{\partial k_i}\right)^n\left(\sum_{j=0}^{l'}\hat{\chi}_{\le
 j}(\o,\bk)\widehat{W}^j(\o,\bk)\right)\right\|_{4\times 4}\\
&\le \sum_{j=0}^{l'}\sum_{m=0}^n\left(\begin{array}{c} n\\ m\end{array}\right)
\left|\left(\frac{\partial}{\partial k_i}\right)^m\hat{\chi}_{\le
 j}(\o,\bk)\right|\left\|\left(\frac{\partial}{\partial
 k_i}\right)^{n-m}\widehat{W}^j(\o,\bk)\right\|_{4\times 4}\\
&\le \sum_{j=0}^{l'}\sum_{m=0}^n\left(\begin{array}{c} n\\
				      m\end{array}\right)\\
&\quad\cdot (c \fw(j)^{-1})^m(m!)^2 c\cdot
 c_{IR}^{-1}M^{\frac{3}{2}j}\alpha^{-2}(c\fw(j)^{-1})^{n-m}((n-m)!)^2\\
&\le c\cdot
 c_{IR}^{-1}\alpha^{-2}(c \fw(0)^{-1})^n
 \sum_{j=0}^lM^{\frac{3}{2}j-jn}
\sum_{m=0}^n\left(\begin{array}{c} n\\ m\end{array}\right)
(m!)^2 ((n-m)!)^2\\
&\le  c\cdot  
 c_{IR}^{-1}\alpha^{-2}(c \fw(0)^{-1})^n(n!)^2M^{(1-n)l}
 \sum_{j=0}^lM^{\frac{1}{2}j}\\
&\le  c\cdot
 c_{IR}^{-1}M^l\alpha^{-2}(c \fw(l)^{-1})^n(n!)^2,\\
&(\forall (\o,\bk)\in\R^3,i\in
 \{0,1,2\},n\in \N,l'\in \{0,-1,\cdots,l\}).
\end{align*}
\end{proof}

\begin{lemma}\label{lem_bound_self_energy_difference}
Assume that \eqref{eq_beta_h_assumption} holds. 
Then, there exists a constant $c\in\R_{>0}$ independent of any parameter
 such that the following inequality holds true for any 
 $l\in \{0,-1,\cdots,N_{\beta_1}\}$ and
 $(J^j(\beta_1)(\psi),J^j(\beta_2)(\psi))\in\tilde{\cS}(j)$ $(j
=0,-1,\cdots,l)$. 
\begin{align*}
&\Bigg\|\left(\frac{\partial}{\partial k_i}\right)^n\Bigg(
\sum_{j=0}^{l}\hat{\chi}_{\le
 j}(\o,\bk)\widehat{W}^j(\beta_1)(\o,\bk)\\
&\qquad\qquad\quad-\sum_{j=0}^{l}\hat{\chi}_{\le
 j}(\o,\bk)\widehat{W}^j(\beta_2)(\o,\bk)
\Bigg)\Bigg\|_{4\times 4}\\
&\le  c\beta_1^{-\frac{1}{2}}
 c_{IR}^{-1}\alpha^{-2}(c\fw(l)^{-1})^n(n!)^2,\\
&(\forall (\o,\bk)\in\R^3,i\in
 \{0,1,2\},n\in \N\cup\{0\}).
\end{align*}
\end{lemma}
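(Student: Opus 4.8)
The plan is to mirror the structure of the proof of Lemma \ref{lem_bound_self_energy} \eqref{item_bound_extended_self_energy_derivative}, replacing each appeal to Lemma \ref{lem_bound_extended_one_scale_self_energy} \eqref{item_derivative_extended_one_scale_self_energy} by the difference estimate Lemma \ref{lem_bound_extended_one_scale_self_energy_difference}. First I would expand the $i$-th finite-difference derivative of the product $\sum_{j=0}^l\hat{\chi}_{\le j}(\o,\bk)\widehat{W}^j(\beta_a)(\o,\bk)$ using the Leibniz rule for $(\partial/\partial k_i)^n$, noting that the cut-off factors $\hat{\chi}_{\le j}(\cdot)$ are identical at $\beta_1$ and $\beta_2$ on the relevant frequency range: indeed, by \eqref{eq_equivalence_IR_cut_off} we have $\chi_{\le j}(\beta_1)=\chi_{\le j}(\beta_2)=\hat{\chi}_{\le j}$ once $|\o|\ge\pi/\beta_1$, and below that the functions are unchanged in the form $\hat{\chi}_{\le j}$ anyway. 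So the difference lives entirely in the $\widehat{W}^j$-factor.

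The core of the argument is then the telescoping identity: for each $j$,
\begin{align*}
\hat{\chi}_{\le j}\widehat{W}^j(\beta_1)-\hat{\chi}_{\le j}\widehat{W}^j(\beta_2)=\hat{\chi}_{\le j}(\widehat{W}^j(\beta_1)-\widehat{W}^j(\beta_2)),
\end{align*}
so that applying Leibniz and the triangle inequality,
\begin{align*}
&\left\|\left(\frac{\partial}{\partial k_i}\right)^n\left(\sum_{j=0}^l\hat{\chi}_{\le j}(\widehat{W}^j(\beta_1)-\widehat{W}^j(\beta_2))\right)\right\|_{4\times 4}\\
&\le\sum_{j=0}^l\sum_{m=0}^n\binom{n}{m}\left|\left(\frac{\partial}{\partial k_i}\right)^m\hat{\chi}_{\le j}\right|\left\|\left(\frac{\partial}{\partial k_i}\right)^{n-m}(\widehat{W}^j(\beta_1)-\widehat{W}^j(\beta_2))\right\|_{4\times 4}.
\end{align*}
Into the first factor I substitute Lemma \ref{lem_infrared_cut_off_derivative}, which gives $(c\fw(j)^{-1})^m(m!)^2$, and into the second factor Lemma \ref{lem_bound_extended_one_scale_self_energy_difference}, which gives $c\beta_1^{-1/2}c_{IR}^{-1}M^{j/2}\alpha^{-2}(c\fw(j)^{-1})^{n-m}((n-m)!)^2$. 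Here I need to check that $(J^j(\beta_1),J^j(\beta_2))\in\tilde{\cS}(j)$ is exactly the hypothesis available, which it is. Collecting the $\fw(j)=\fw(0)M^j$ factors and using $\binom{n}{m}(m!)^2((n-m)!)^2\le(n!)^2$ together with the geometric summability $\sum_{j=0}^l M^{j/2-jn}\le\sum_{j=0}^l M^{(1-n)l}M^{j/2}\cdot M^{\text{(bookkeeping)}}$ — more carefully, pulling out $M^{-ln}$ from $\fw(j)^{-n}=\fw(0)^{-n}M^{-jn}$ as in the model lemma — one arrives at the bound $c\beta_1^{-1/2}c_{IR}^{-1}M^l\alpha^{-2}(c\fw(l)^{-1})^n(n!)^2$. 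Since the claimed bound in the statement is the slightly weaker $c\beta_1^{-1/2}c_{IR}^{-1}\alpha^{-2}(c\fw(l)^{-1})^n(n!)^2$ (no $M^l$ prefactor, which for $l\le 0$ is $\ge 1$), this suffices after absorbing $M^l\le 1$.

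The only place demanding genuine care — and what I expect to be the main obstacle — is the combinatorics of the $M$-power bookkeeping when summing over $j$ with the $\fw(j)^{-(n-m)}$ and $\fw(j)^{-m}$ factors split across the Leibniz sum: one must verify that the total power of $M^{-j}$ accumulated is at least $jn$ (so the $j$-sum converges geometrically and contributes only a constant) while the residual positive power $M^{j/2}$ from Lemma \ref{lem_bound_extended_one_scale_self_energy_difference} is harmless because $M^{j/2}\le 1$ for $j\le 0$. This is exactly the same arithmetic as in Lemma \ref{lem_bound_self_energy} \eqref{item_bound_extended_self_energy_derivative}, so no new difficulty arises beyond transcribing it; the $n=0$ case is immediate from Lemma \ref{lem_bound_extended_one_scale_self_energy_difference} with $n=0$ and the triangle inequality over $j$. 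The assumption \eqref{eq_beta_h_assumption} is inherited from Lemma \ref{lem_bound_extended_one_scale_self_energy_difference} and needs no separate invocation.
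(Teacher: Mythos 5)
Your proposal is correct and follows essentially the same route as the paper: factor the $\beta$-independent cut-off $\hat{\chi}_{\le j}$ out of the difference, apply the Leibniz rule, insert Lemma \ref{lem_infrared_cut_off_derivative} and Lemma \ref{lem_bound_extended_one_scale_self_energy_difference}, and use $\binom{n}{m}(m!)^2((n-m)!)^2\le (n!)^2$ together with $\fw(j)^{-1}\le\fw(l)^{-1}$ and the geometric sum $\sum_{j=0}^{l}M^{j/2}$ (convergent since $M>\sqrt{2}$). The only minor imprecision is the intermediate $M^{l}$ prefactor you mention; the bookkeeping here actually yields no such factor (the per-scale gain is $M^{j/2}$ rather than $M^{3j/2}$ as in Lemma \ref{lem_bound_self_energy}), but since $M^{l}\le 1$ this does not affect the conclusion.
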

\begin{proof}
By Lemma \ref{lem_infrared_cut_off_derivative}, Lemma
 \ref{lem_bound_extended_one_scale_self_energy_difference} and the
 assumption $M> \sqrt{2}$,
\begin{align*}
&\Bigg\|\left(\frac{\partial}{\partial k_i}\right)^n\Bigg(
\sum_{j=0}^{l}\hat{\chi}_{\le
 j}(\o,\bk)\widehat{W}^j(\beta_1)(\o,\bk)\\
&\qquad\qquad\quad-\sum_{j=0}^{l}\hat{\chi}_{\le
 j}(\o,\bk)\widehat{W}^j(\beta_2)(\o,\bk)
\Bigg)\Bigg\|_{4\times 4}\\
&\le \sum_{j=0}^l\sum_{m=0}^n\left(\begin{array}{c}n\\ m\end{array}
\right)\left|\left(\frac{\partial}{\partial k_i}\right)^m\hat{\chi}_{\le
 j}(\o,\bk)\right|\\
&\qquad\qquad\quad\cdot\left\|\left(\frac{\partial}{\partial k_i}\right)^{n-m}
(\widehat{W}^j(\beta_1)(\o,\bk)-\widehat{W}^j(\beta_2)(\o,\bk))
\right\|_{4\times 4}\\
&\le \sum_{j=0}^l\sum_{m=0}^n\left(\begin{array}{c}n\\ m\end{array}
\right)(c\fw(j)^{-1})^m(m!)^2 \\
&\qquad\qquad\quad\cdot c\beta_1^{-\frac{1}{2}}
 c_{IR}^{-1}M^{\frac{1}{2}j}
\alpha^{-2}(c\fw(j)^{-1})^{n-m}((n-m)!)^2\\
&\le c\beta_1^{-\frac{1}{2}}
 c_{IR}^{-1}\alpha^{-2}(c\fw(l)^{-1})^{n}(n!)^2\sum_{j=0}^lM^{\frac{1}{2}j}\\
&\le c\beta_1^{-\frac{1}{2}}
 c_{IR}^{-1}\alpha^{-2}(c\fw(l)^{-1})^{n}(n!)^2,\\
&(\forall (\o,\bk)\in\R^3,i\in
 \{0,1,2\},n\in \N\cup\{0\}).
\end{align*}
\end{proof}

Assuming that $l\in \{0,-1,\cdots,N_{\beta}\}$ and
$J^j(\psi)\in\cS(j)$ $(j=0,-1,\cdots,$ $l)$, we can introduce a
covariance which mimics a real covariance in the IR integration at the
$l$th scale. 
Set 
\begin{align}
E_l(\o,\bk):=\sum_{j=0}^l\hat{\chi}_{\le j}(\o,\bk)\widehat{W}^j(\o,\bk),\
 ((\o,\bk)\in\R^3).\label{eq_effective_dispersion_relation}
\end{align}
By \eqref{eq_equivalence_IR_cut_off} and \eqref{eq_extension_normal_self_energy},
if $(\o,\bk)\in \cM\times \G^*$, 
\begin{align}
E_l(\o,\bk)=\sum_{j=0}^l\chi_{\le j}(\o,\bk)W^j(\o,\bk).\label{eq_extended_dispersion_discrete}
\end{align}
Define the covariance $C_l:I_0^2\to\C$ by
\begin{align}
C_l(\rho\bx \s x,\eta \by \tau y)
:=& \frac{\delta_{\s,\tau}}{\beta L^2}\sum_{(\o,\bk)\in \cM_h\times
 \G^*}e^{i\<\bx-\by,\bk\>}e^{i(x-y)\o}\label{eq_effective_covariance}\\
&\cdot\chi_l(\o,\bk)(i\o
 I_4-E(\bk)-E_l(\o,\bk))^{-1}(\rho,\eta).\notag
\end{align}
In the rest of this subsection we mainly study properties of $C_l$.
To this end let us measure the support of the cut-off functions. 
Since $\supp\chi_l(\cdot)\subset \supp\chi_{\le l}(\cdot)\subset
\supp\hat{\chi}_{\le l}(\cdot)$, it is sufficient to measure
$\supp$ $\hat{\chi}_{\le l}(\cdot)$.

\begin{lemma}\label{lem_IR_cut_off_support_measure}
Assume that \eqref{eq_beta_L_temporal_assumption} holds. Then,
there exists a constant $c\in\R_{>0}$ independent of any parameter such
 that the following statements hold true for any $l\in
 \{0,-1,\cdots,N_{\beta}\}$ and $(\o',\bk')\in\R^3$.
\begin{enumerate}
\item\label{item_measure_support_full_discrete}
$$
\frac{1}{\beta L^2}\sum_{\o\in \cM}\sum_{\bk\in\G^*}1_{\hat{\chi}_{\le
     l}(\o+\o',\bk+\bk')\neq 0}\le c f_{\bt}^{-1}M_{IR}^3M^{3l+3}.
$$
\item\label{item_measure_support_semi_discrete}
$$
\int_{-\infty}^{\infty}d\o
\frac{1}{L^2}\sum_{\bk\in\G^*}1_{\hat{\chi}_{\le
     l}(\o,\bk+\bk')\neq 0}\le c f_{\bt}^{-1}M_{IR}^3M^{3l+3}.
$$
\item\label{item_measure_support_no_matsubara}
$$
\frac{1}{L^2}\sum_{\bk\in\G^*}1_{\hat{\chi}_{\le
     l}(\o',\bk+\bk')\neq 0}\le c f_{\bt}^{-1}M_{IR}^2M^{2l+2}.
$$
\end{enumerate}
\end{lemma}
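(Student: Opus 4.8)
The plan is to reduce all three estimates to the elementary problem of counting lattice points inside the support of $\hat{\chi}_{\le l}(\cdot)$, whose extent is already controlled by Lemma \ref{lem_infrared_cut_off_measure}. Since $\hat{\chi}_{\le l}(\o,\bk)$ depends on $\bk$ only through $\cos k_1$ and $\cos k_2$, it is $2\pi$-periodic in each component of $\bk$; hence when counting $\bk\in\G^*$ with $\hat{\chi}_{\le l}(\cdot,\bk+\bk')\neq 0$ we may reduce $\bk+\bk'$ modulo $2\pi$ into $[0,2\pi)^2$ and invoke Lemma \ref{lem_infrared_cut_off_measure}. That lemma confines the support to $|\o|\le (\pi/\sqrt 3)M_{IR}M^{l+1}$ and to $|k_j-\pi|\le(\pi^2/\sqrt 6)f_{\bt}^{-\frac12}M_{IR}M^{l+1}$ (mod $2\pi$), $j=1,2$. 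Consequently a point $\o\in\cM$ (spacing $2\pi/\beta$) in the support lies in an interval of length $(2\pi/\sqrt 3)M_{IR}M^{l+1}$, so there are at most $(\beta/\sqrt 3)M_{IR}M^{l+1}+1$ of them; likewise each coordinate $k_j$ of $\bk\in\G^*$ (spacing $2\pi/L$) lies in an interval of length $(2\pi^2/\sqrt 6)f_{\bt}^{-\frac12}M_{IR}M^{l+1}$, containing at most $(\pi L/\sqrt 6)f_{\bt}^{-\frac12}M_{IR}M^{l+1}+1$ points, and the number of admissible $\bk$ is bounded by the square of this quantity.

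The next step is to absorb the additive $+1$'s from the lattice-point counts, and this is the only place the hypothesis \eqref{eq_beta_L_temporal_assumption} enters. Since $l\le N_{\beta}$ and $M>\sqrt 2>1$ we have $M^{l+1}\ge M^{N_{\beta}+1}\ge (\beta M_{IR})^{-1}$, hence $\beta M_{IR}M^{l+1}\ge 1$; combined with $1/L\le 1/\beta$ this also gives $L\,M_{IR}M^{l+1}\ge 1$. Together with $f_{\bt}\le 1$ (so $f_{\bt}^{-\frac12}\ge 1$) this replaces the count of $\o$'s by $c\beta M_{IR}M^{l+1}$ and the count of $\bk$'s by $cL^2 f_{\bt}^{-1}M_{IR}^2M^{2l+2}$ with an absolute constant $c$. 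For \eqref{item_measure_support_full_discrete} one multiplies the two counts and divides by $\beta L^2$, obtaining $cf_{\bt}^{-1}M_{IR}^3M^{3l+3}$. For \eqref{item_measure_support_semi_discrete} the integral over $\o\in\R$ simply contributes the Lebesgue length $(2\pi/\sqrt 3)M_{IR}M^{l+1}$ of the support in $\o$ (no $+1$ is needed), times the $\bk$-count divided by $L^2$, giving again $cf_{\bt}^{-1}M_{IR}^3M^{3l+3}$. For \eqref{item_measure_support_no_matsubara} only the $\bk$-count enters, divided by $L^2$, yielding $cf_{\bt}^{-1}M_{IR}^2M^{2l+2}$.

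I do not expect any genuine obstacle here; the argument is pure bookkeeping. The only point needing care — and the precise reason \eqref{eq_beta_L_temporal_assumption} is imposed — is the absorption of these $+1$ terms: when $l$ is very negative the support shrinks and the count may be as small as $0$ or $1$, so one must verify that even then $\frac{1}{\beta L^2}$ (resp. $\frac{1}{L^2}$) is dominated by the claimed right-hand side, which follows from $M^{l+1}\ge(\beta M_{IR})^{-1}$ together with $\beta\le L$. A minor further care is needed in the periodicity reduction so that the constant remains absolute in the regime where $(\pi^2/\sqrt 6)f_{\bt}^{-\frac12}M_{IR}M^{l+1}$ exceeds $\pi$ (the $k_j$-intervals then wrap around $[0,2\pi)$); there one simply bounds the $\bk$-count by $L^2$, which is still consistent with the same formula.
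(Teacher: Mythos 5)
Your argument is correct and is essentially the paper's proof, which likewise reduces everything to the support bounds of Lemma \ref{lem_infrared_cut_off_measure}, the inequality $f_{\bt}\le 1$, and the hypothesis \eqref{eq_beta_L_temporal_assumption} to absorb the lattice-counting $+1$'s. One small slip: since $N_{\beta}\le 0$ is the \emph{smallest} scale, the ordering is $N_{\beta}\le l\le 0$ (not $l\le N_{\beta}$), which is what actually yields the inequality $M^{l+1}\ge M^{N_{\beta}+1}\ge(\beta M_{IR})^{-1}$ that you correctly use.
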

\begin{proof}
All the claims are
 verified by Lemma \ref{lem_infrared_cut_off_measure},
 \eqref{eq_beta_L_temporal_assumption} and the inequality $f_{\bt}\le
 1$.
\end{proof}

\begin{lemma}\label{lem_covariance_infrared_bound}
Assume that 
$$
h\ge \frac{1}{\sqrt{3}}M_{UV},\ L\ge \beta.
$$
Then, there exists a constant
$c\in\R_{>0}$ independent of any parameter such that if 
$$M\ge c,\ \alpha^2\ge cM,$$
the following statements hold true for any
 $l\in\{0,-1,\cdots,N_{\beta}\}$ and $J^j(\psi)\in\cS(j)$ $(j=0,-1,\cdots,l)$.
\begin{enumerate}
\item\label{item_covariance_infrared_analyticity}  
If $c_{IR}\ge f_{\bt}^{-1}$ holds, $\bU\mapsto C_l(\bU)(\bX)$ is continuous in
     $\overline{D}$ and analytic in $D$ $(\forall \bX\in I_0^2)$.
\item\label{item_covariance_infrared_bound} There exists a constant
     $c(M,c_w)\in\R_{\ge 1}$ depending only on $M$ and $c_w$ such that 
if $c_{IR}\ge c(M,c_w)f_{\bt}^{-1}$ holds,
\begin{align*}
&|\det(\<\bp_i,\bq_j\>_{\C^r}C_{l}(\bU)(X_i,Y_j))_{1\le i,j\le n}|\le
 (c_{IR}M^{2l})^n,\\
&(\forall r,n\in\N,\bp_i,\bq_i\in\C^r\text{ with }
\|\bp_i\|_{\C^r},\|\bq_i\|_{\C^r}\le 1,\\
&\quad X_i,Y_i\in I_0\
 (i=1,2,\cdots,n),\bU\in\overline{D}),\\
&\|\widetilde{C_{l}}(\bU)\|_{l-1,r}\le c_{IR}
M^{-l-rl},\ (\forall r\in
 \{0,1\},\bU\in \overline{D}).
\end{align*}
\item\label{item_covariance_infrared_symmetries}
\begin{align*}
\widetilde{C_l}(\bU)(\bX)=e^{iQ_2(S_2(\bX))}\widetilde{C_l}(\bU)(S_2(\bX)),\
 (\forall \bX\in I^2,\bU\in \overline{D}),
\end{align*}
for $S:I\to I$, $Q:I\to \R$ defined by
     \eqref{eq_particle_hole_maps}, \eqref{eq_spin_maps},
     \eqref{eq_spin_reflection_maps}, \eqref{eq_translation_maps},
     \eqref{eq_rotation_maps} respectively.
\item\label{item_covariance_infrared_complex_symmetries}
\begin{align*}
\widetilde{C_l}(\bU)(\bX)=e^{-iQ_2(S_2(\bX))}\overline{\widetilde{C_l}(\overline{\bU})(S_2(\bX))},\
 (\forall \bX\in I^2,\bU\in \overline{D}),
\end{align*}
for $S:I\to I$, $Q:I\to \R$ defined by
     \eqref{eq_hermitian_maps}, \eqref{eq_half_filled_maps} respectively.
\end{enumerate}
In \eqref{item_covariance_infrared_bound}, \eqref{item_covariance_infrared_symmetries},
 \eqref{item_covariance_infrared_complex_symmetries},
 $\widetilde{C_l}:I^2\to\C$ is the anti-symmetric extension of $C_l$
 defined as in \eqref{eq_anti_symmetric_extension_covariance}.
\end{lemma}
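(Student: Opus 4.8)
\textbf{Proof plan for Lemma \ref{lem_covariance_infrared_bound}.}

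The plan is to treat the four assertions in turn, using the structural facts about $E_l(\o,\bk)$ already obtained in Lemmas \ref{lem_bound_extended_one_scale_self_energy}--\ref{lem_bound_self_energy} and the symmetry properties collected in Lemma \ref{lem_properties_one_scale_self_energy}. The first step, common to all four parts, is to verify the hypothesis \eqref{eq_beta_L_temporal_assumption} of those lemmas: under $L\ge\beta$ we have $1/L\le 1/\beta$, and Lemma \ref{lem_beta_inverse_upper_bound} gives $1/\beta\le M_{IR}M^{N_\beta+1}$ on the support of the covariance since the only relevant Matsubara frequencies satisfy $\phi(M_{UV}^{-2}\o^2)\ne 0$, i.e.\ $|\o|<(\pi/\sqrt 3)M_{UV}\le\pi h$ is automatic and the assumption $h\ge M_{UV}/\sqrt 3$ guarantees $\cM_h$ contains all such frequencies. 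Next, I would establish the key invertibility estimate: by Lemma \ref{lem_bound_self_energy}\eqref{item_neumann_justification_extended_self_energy}, on $\supp\chi_l$ one has $\|(i\o I_4-E(\bk))^{-1}E_l(\o,\bk)\|_{4\times4}\le c\,c_{IR}^{-1}f_{\bt}^{-1/2}M\alpha^{-2}$, which is $\le 1/2$ once $\alpha^2\ge cM$ and $c_{IR}\ge f_{\bt}^{-1}$ (recall $f_{\bt}\le 1$). Hence $i\o I_4-E(\bk)-E_l(\o,\bk)$ is invertible on $\supp\chi_l$, with Neumann series
\begin{align*}
(i\o I_4-E(\bk)-E_l(\o,\bk))^{-1}=\sum_{m=0}^\infty\big((i\o I_4-E(\bk))^{-1}E_l(\o,\bk)\big)^m(i\o I_4-E(\bk))^{-1},
\end{align*}
so $\|(i\o I_4-E(\bk)-E_l(\o,\bk))^{-1}\|_{4\times4}\le 2\|(i\o I_4-E(\bk))^{-1}\|_{4\times4}\le 2(\o^2+f_{\bt}\sum_j(1+\cos k_j))^{-1/2}$ by Lemma \ref{lem_estimate_free_dispersion_relation}\eqref{item_upper_bound_free_propagator}. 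On $\supp\chi_l$ the denominator is $\ge(\pi/\sqrt6)M_{IR}M^l$ by \eqref{eq_support_properties_infrared}, giving the pointwise bound $\le c M^{-l}$ for the resolvent. This makes parts \eqref{item_covariance_infrared_analyticity} and the determinant part of \eqref{item_covariance_infrared_bound} accessible: analyticity follows since each $\widehat W^j(\bU)$ is a finite sum of products of the analytic kernels $J_2^j(\bU)$ (by condition \eqref{item_IR_set_analytic}) and the Neumann series converges uniformly on $\overline D$; the determinant bound follows by writing $C_l(X,Y)=\<f_X,g_Y\>_{\cH}$ in the Hilbert space $\cH$ of Lemma \ref{lem_h_independent_determinant_bound} with $f_{\rho\bx\s x}(\eta,\bk,\tau,\o)=\delta_{\rho,\eta}\delta_{\s,\tau}e^{i\<\bx,\bk\>}e^{-ix\o}\chi_l(\o,\bk)^{1/2}$ and $g$ carrying the resolvent factor, applying Gram's inequality in $\C^r\otimes\cH$, and using Lemma \ref{lem_IR_cut_off_support_measure}\eqref{item_measure_support_full_discrete} to bound $\|f\|_{\cH}^2\le c f_{\bt}^{-1}M_{IR}^3M^{3l+3}$ and the resolvent bound for $\|g\|_{\cH}^2\le c f_{\bt}^{-1}M_{IR}^3M^{3l+3}M^{-2l}$; the product is $\le c(M,c_w)f_{\bt}^{-1}M^{2l}\cdot M^{2l}$, wait---one factor $M^{2l}$ per vector pair, so $\le (c(M,c_w)f_{\bt}^{-1}M^{2l})^n$, absorbed into $(c_{IR}M^{2l})^n$ once $c_{IR}\ge c(M,c_w)f_{\bt}^{-1}$.

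For the weighted-norm bound $\|\widetilde{C_l}\|_{l-1,r}\le c_{IR}M^{-l-rl}$ I would follow the finite-difference/Gevrey scheme of Lemma \ref{lem_covariance_UV_bound}: apply $\cD_0^n$ and $\cD_j^n$ to $C_l$, convert the finite differences into integrals of $\partial_{k_i}^n$ of the integrand $\chi_l(\o,\bk)(i\o I_4-E(\bk)-E_l(\o,\bk))^{-1}$, and estimate those derivatives. The derivative bounds on $\chi_l$ come from Lemma \ref{lem_infrared_cut_off_derivative}, those on $E_l$ from Lemma \ref{lem_bound_self_energy}\eqref{item_bound_extended_self_energy_derivative}, and those on $(i\o I_4-E(\bk)-E_l)^{-1}$ from Lemma \ref{lem_gevrey_matrix}\eqref{item_gevrey_inverse_matrix} applied to the matrix $h^{-1}$-analogue with $s\sim M^{-l}$, $r\sim\fw(l)^{-1}$; the Leibniz rule then yields $\|\partial_{k_i}^n(\text{integrand})\|_{4\times4}\le c\,M^{-l}(c\fw(l)^{-1})^n(n!)^2$ on $\supp\chi_l$. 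Combining with the support measure Lemma \ref{lem_IR_cut_off_support_measure} produces a Gevrey-$2$ decay of $C_l$ at rate $\fw(l)=\fw(0)M^l$ with amplitude $\sim M^{-l}M^{3l+3}=M^{2l+3}$---here is where the precise relation $\fw(l)=\fw(0)M^l$, the factor $M_{IR}^3M^{3l}$ from the momentum volume, and the extra $M^{-l}$ from the resolvent must be tracked carefully, and the exponent $\fr=1/2$ enters. Summing against the weight $e^{\sum_j(\fw(l-1)d_j(X,Y))^{1/2}}$ with $\fw(l-1)=\fw(0)M^{l-1}<\fw(l)$ converts this into $\|\widetilde{C_l}\|_{l-1,0}\le c_{IR}M^{-l}$ and one extra power $d_i(X,Y)$ costs one factor $M^{-l}$, giving $\|\widetilde{C_l}\|_{l-1,1}\le c_{IR}M^{-2l}$, as required. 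The constant $c(M,c_w)$ absorbs all $M$- and $\fw(0)$-dependent factors.

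Parts \eqref{item_covariance_infrared_symmetries} and \eqref{item_covariance_infrared_complex_symmetries} follow by a direct change-of-variables computation in the momentum representation \eqref{eq_effective_covariance}, using that the cut-off $\chi_l(\o,\bk)$ is even in $\o$ and invariant under $\bk\mapsto-\bk$ (clear from \eqref{eq_support_properties_infrared} since $\cos$ is even) and that the matrix $i\o I_4-E(\bk)-E_l(\o,\bk)$ inherits the relations \eqref{eq_hermitian_momentum}--\eqref{eq_rotation_momentum} from $E_l$, together with the elementary symmetries of $E(\bk)$ itself (from \eqref{eq_specific_dispersion_relation}: $\overline{E(\bk)}=E(-\bk)$ componentwise up to phases, $E(\bk)^*=E(\bk)$, and the $\be(\rho)$-twisted invariance $E(-\bk)(\rho,\eta)=e^{i\<\be(\rho)-\be(\eta),\bk\>}E(\bk)(\rho,\eta)$). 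For each of the seven maps $S,Q$ one substitutes the corresponding transformation of $(\o,\bk)$ into the sum and checks that the integrand is reproduced up to the phase $e^{\pm iQ_2(S_2(\bX))}$; the translation map \eqref{eq_translation_maps} uses the $(-1)^{n_\beta}$ bookkeeping exactly as in Lemma \ref{lem_properties_one_scale_self_energy}\eqref{item_quadratic_term_characterization}, and the half-filling map \eqref{eq_half_filled_maps} uses \eqref{eq_half_filled_momentum} plus the fact that conjugating by $\mathrm{diag}((-1)^{1_{\rho\in\{1,4\}}})$ sends $E(\bk)$ to $-E(\bk)$, which swaps the resolvent $(i\o I_4-\cdot)^{-1}$ with $-(i\o I_4-\cdot)^{-1}$ after $\o\mapsto-\o$.

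\textbf{Main obstacle.} The genuinely delicate point is the Gevrey derivative estimate on the resolvent $(i\o I_4-E(\bk)-E_l(\o,\bk))^{-1}$ uniformly on $\supp\chi_l$ with the correct amplitude $M^{-l}$ and rate $\fw(l)^{-1}$, because $E_l$ is itself an $l$-dependent sum of $l+1$ terms each carrying its own cut-off $\hat\chi_{\le j}$ and its own Gevrey rate $\fw(j)^{-1}=\fw(0)^{-1}M^{-j}$. One must check that, after the geometric-series summation over $j$ in Lemma \ref{lem_bound_self_energy}, the worst rate $\fw(l)^{-1}$ (the one at the current scale) dominates and no spurious powers of $|l|$ are generated---this is exactly what the estimates in Lemma \ref{lem_bound_self_energy}\eqref{item_bound_extended_self_energy_derivative} were built to supply, but plugging them into the matrix-inverse Gevrey lemma (Lemma \ref{lem_gevrey_matrix}) with the small parameter $s\sim M^{-l}$ and verifying that the resulting product $qr,\, q\sim M^{-l},\, r\sim\fw(l)^{-1}$ keeps the composite rate $\lesssim\fw(l)^{-1}(1+o(1))$ requires care. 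Everything else is routine bookkeeping of powers of $M$, $f_{\bt}$, $\alpha$, and $c_{IR}$.
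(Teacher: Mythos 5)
Your plan follows essentially the same route as the paper's proof: the same reduction to \eqref{eq_beta_L_temporal_assumption} via Lemma \ref{lem_beta_inverse_upper_bound}, the same Neumann-series invertibility of $i\o I_4-E(\bk)-E_l(\o,\bk)$ on $\supp\chi_l$ from Lemma \ref{lem_bound_self_energy}\eqref{item_neumann_justification_extended_self_energy}, Gram's inequality with the support-measure Lemma \ref{lem_IR_cut_off_support_measure} for the determinant bound, the finite-difference/Gevrey scheme (Lemmas \ref{lem_infrared_cut_off_derivative}, \ref{lem_bound_self_energy}\eqref{item_bound_extended_self_energy_derivative}, \ref{lem_gevrey_matrix}) for the weighted-norm bound, and momentum-space changes of variables with the conjugating matrices $Z(\bk)$ and $Y$ for the symmetries. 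The power counting ($M^{2l}$ amplitude, rate $\fw(l)^{-1}$, final $M^{2l}\fw(l)^{-3-r}=\mathrm{const}\cdot M^{-l-rl}$) agrees with the paper's, so this is the intended argument.
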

\begin{proof}
Note that if $C_l(\bX)=0$ $(\forall \bX\in I_0^2)$, all the claims 
trivially hold
 true. If $C_l(\bX)\neq 0$ for some $\bX\in I_0^2$, there exists
 $(\o,\bk)\in\cM\times \G^*$ such that $\chi_{\le 0}(\o,\bk)\neq
 0$. Thus, by Lemma \ref{lem_beta_inverse_upper_bound} we can always
 assume that $1/\beta \le M_{IR}M^{N_{\beta}+1}$ during the
 proof. Combined with the assumption $L\ge \beta$, this
 means that \eqref{eq_beta_L_temporal_assumption} holds and thus we can refer to the results of Lemma
 \ref{lem_bound_self_energy} and Lemma
 \ref{lem_IR_cut_off_support_measure}. In the following we prove
 \eqref{item_covariance_infrared_analyticity}, 
\eqref{item_covariance_infrared_symmetries},
\eqref{item_covariance_infrared_complex_symmetries} first and 
\eqref{item_covariance_infrared_bound} in the end.

 \eqref{item_covariance_infrared_analyticity}: Assume that $\alpha^2\ge
 2cM$ with the constant $c$ appearing in the right-hand side of the
 inequality in Lemma \ref{lem_bound_self_energy}
 \eqref{item_neumann_justification_extended_self_energy}. Since
 $c_{IR}^{-1}\le f_{\bt}$ and $f_{\bt}\le 1$ by assumption, 
\begin{align*}
&\|(i\o I_4-E(\bk))^{-1}E_{l'}(\bU)(\o,\bk)\|_{4\times
 4}\le\frac{1}{2}f_{\bt}\cdot f_{\bt}^{-\frac{1}{2}}\le\frac{1}{2},\\
&(\forall (\o,\bk)\in\R^3\text{ satisfying }\chi_l(\o,\bk)\neq
 0,l'\in \{0,-1,\cdots,l\},\bU\in \overline{D}).
\end{align*}
Therefore, by Lemma \ref{lem_estimate_free_dispersion_relation}
 \eqref{item_upper_bound_free_propagator} and
 \eqref{eq_support_properties_infrared}, 
\begin{align}
&\|(i\o I_4-E(\bk)-E_{l'}(\bU)(\o,\bk))^{-1}\|_{4\times
 4}\label{eq_infrared_integrant_bound}\\
&\le \sum_{n=0}^{\infty}\|(i\o I_4-E(\bk))^{-1}E_{l'}(\bU)(\o,\bk)\|_{4\times
 4}^n\|(i\o I_4-E(\bk))^{-1}\|_{4\times 4}\notag\\
&\le 2\left(\frac{\pi}{\sqrt{6}}M_{IR}M^l\right)^{-1}\le M^{-l},\notag\\
&(\forall (\o,\bk)\in\R^3\text{ satisfying }\chi_l(\o,\bk)\neq
 0,l'\in \{0,-1,\cdots,l\},\bU\in \overline{D}).\notag
\end{align}
This implies the well-definedness of $C_l$. Since $\bU\mapsto
 E_l(\bU)(\o,\bk)$ is continuous in $\overline{D}$ and analytic in $D$
for any $(\o,\bk)\in\R^3$ by definition, so is the function $\bU\mapsto
 C_l(\bU)(\bX)$ for any $\bX\in I_0^2$.

\eqref{item_covariance_infrared_symmetries}: For any $S:I\to I$, $Q:I\to
 \R$ defined in \eqref{eq_particle_hole_maps}, \eqref{eq_spin_maps},
 \eqref{eq_spin_reflection_maps}, the claimed equality clearly holds.

For $S:I\to I$, $Q:I\to
 \R$ defined in \eqref{eq_translation_maps} with $\bz\in \Z^2$, $s\in
 (1/h)\Z$, 
\begin{align*}
&e^{iQ_2(S_2((\rho,\bx,\s,x,\theta),(\eta,\by,\tau,y,\xi)))}\widetilde{C_l}(S_2((\rho,\bx,\s,x,\theta),(\eta,\by,\tau,y,\xi)))\\
&=(-1)^{n_{\beta}(x+s)+n_{\beta}(y+s)}\\
&\quad\cdot\widetilde{C_l}((\rho,r_L(\bx+\bz),\s,r_{\beta}(x+s),\theta),(\eta,r_L(\by+\bz),\tau,r_{\beta}(y+s),\xi))\\
&=\widetilde{C_l}((\rho,\bx,\s,x,\theta),(\eta,\by,\tau,y,\xi)),\
 (\forall (\rho,\bx,\s,x,\theta),(\eta,\by,\tau,y,\xi)\in I).
\end{align*}

To prove the invariance with $S:I\to I$, $Q:I\to \R$ defined in
 \eqref{eq_rotation_maps}, let us define the map $Z:(2\pi/L)\Z^2\to \Mat(4,\C)$ by
 $Z(\bk)(\rho,\eta):=e^{i\<\be(\rho),\bk\>}\delta_{\rho,\eta}$,
 $(\forall \bk\in
 (2\pi/L)\Z^2,\rho,\eta\in \cB)$. By \eqref{eq_specific_dispersion_relation}
and \eqref{eq_rotation_momentum},
\begin{align*}
&Z(\bk)^*(E(\bk)+E_l(\o,\bk))Z(\bk)=E(-\bk)+E_l(\o,-\bk),\\
&\ (\forall (\o,\bk)\in\cM\times (2\pi/L)\Z^2).
\end{align*}
Therefore,
\begin{align*}
&C_l((\rho,r_L(-\bx-\be(\rho)),\s,x),(\eta,r_L(-\by-\be(\eta)),\tau,y))\\
&=\frac{\delta_{\s,\tau}}{\beta L^2}\sum_{(\o,\bk)\in
 \cM_h\times\G^*}e^{i\<-\bx-\be(\rho)-(-\by-\be(\eta)),\bk\>}e^{i(x-y)\o}\\
&\qquad\qquad\cdot\chi_l(\o,\bk)(i\o I_4-E(\bk)-E_l(\o,\bk))^{-1}(\rho,\eta)\\
&=\frac{\delta_{\s,\tau}}{\beta L^2}\sum_{(\o,\bk)\in
 \cM_h\times\G^*}e^{i\<-\bx+\by,\bk\>}e^{i(x-y)\o}\\
&\qquad\qquad \cdot \chi_l(\o,\bk)(Z(\bk)^{*}(i\o I_4-E(\bk)-E_l(\o,\bk))Z(\bk))^{-1}(\rho,\eta)\\
&=\frac{\delta_{\s,\tau}}{\beta L^2}\sum_{(\o,\bk)\in
 \cM_h\times\G^*}e^{i\<\bx-\by,-\bk\>}e^{i(x-y)\o}\\
&\qquad\qquad \cdot \chi_l(\o,-\bk)(i\o I_4-E(-\bk)-E_l(\o,-\bk))^{-1}(\rho,\eta)\\
&=C_l((\rho,\bx,\s,x),(\eta,\by,\tau,y)),\ (\forall
 (\rho,\bx,\s,x),(\eta,\by,\tau,y)\in I_0).
\end{align*}
This implies the claimed invariance.

\eqref{item_covariance_infrared_complex_symmetries}:
It follows from definition and \eqref{eq_hermitian_momentum} that
\begin{align}
&\overline{E(\bk)(\rho,\eta)}=E(\bk)(\eta,\rho),\ \overline{E_l(\overline{\bU})(-\o,\bk)(\rho,\eta)}=E_l(\bU)(\o,\bk)(\eta,\rho),\label{eq_hermitian_momentum_application}\\
&(\forall (\o,\bk)\in \cM\times (2\pi/L)\Z^2,\rho,\eta\in \cB,\bU\in
 \overline{D}).\notag
\end{align}
Recall that for any $(X,\theta)$, $(Y,\xi)\in I$,
\begin{align*}
&\widetilde{C_l}(\bU)((X,\theta),(Y,\xi))\\
&=\frac{1}{2}
(1_{(\theta,\xi)=(1,-1)}C_l(\bU)(X,Y)-1_{(\theta,\xi)=(-1,1)}C_l(\bU)(Y,X)).
\end{align*}
For $S:I\to I$, $Q:I\to \R$ defined in \eqref{eq_hermitian_maps},
\begin{align}
&e^{-iQ_2(S_2((\rho,\bx,\s,x,\theta),(\eta,\by,\tau,y,\xi)))}\overline{\widetilde{C_l}(\overline{\bU})(S_2((\rho,\bx,\s,x,\theta),(\eta,\by,\tau,y,\xi)))}\label{eq_infrared_hermitian_pre}\\
&=e^{-i\pi(1_{\theta=-1}+1_{\xi=-1}+1_{x\neq 0}+1_{y\neq 0})}\notag\\
&\quad\cdot\overline{\widetilde{C_l}(\overline{\bU})((\rho,\bx,\s,r_{\beta}(-x),-\theta),(\eta,\by,\tau,r_{\beta}(-y),-\xi))}\notag\\
&=\frac{1}{2}(-1)^{1_{x\neq 0}+1_{y\neq 0}}
\bigg(1_{(\theta,\xi)=(1,-1)}\overline{C_l(\overline{\bU})((\eta,\by,\tau,r_{\beta}(-y)),(\rho,\bx,\s,r_{\beta}(-x)))}\notag\\
&\qquad\qquad\qquad -1_{(\xi,\theta)=(1,-1)}\overline{C_l(\overline{\bU})((\rho,\bx,\s,r_{\beta}(-x)),(\eta,\by,\tau,r_{\beta}(-y)))}\bigg).\notag
\end{align}
Note that
\begin{align}
&(-1)^{1_{x\neq 0}+1_{y\neq
 0}}\overline{C_l(\overline{\bU})((\rho,\bx,\s,r_{\beta}(-x)),(\eta,\by,\tau,r_{\beta}(-y)))}\label{eq_infrared_hermitian_pre_another}\\
&=(-1)^{1_{x\neq 0}+1_{y\neq 0}}\frac{\delta_{\s,\tau}}{\beta
 L^2}\sum_{(\o,\bk)\in \cM_h\times
 \G^*}e^{-i\<\bx-\by,\bk\>}e^{-i(r_{\beta}(-x)-r_{\beta}(-y))\o}\notag\\
&\quad\cdot \chi_l(\o,\bk)\bigg(-i\o
 I_4-\overline{E(\bk)}-\overline{E_l(\overline{\bU})(\o,\bk)}\bigg)^{-1}(\rho,\eta)\notag\\
&=\frac{\delta_{\s,\tau}}{\beta
 L^2}\sum_{(\o,\bk)\in \cM_h\times
 \G^*}e^{i\<\by-\bx,\bk\>}e^{i(y-x)\o}\notag\\
&\quad\cdot\chi_l(\o,\bk)\bigg(i\o
 I_4-\overline{E(\bk)}-\overline{E_l(\overline{\bU})(-\o,\bk)}\bigg)^{-1}(\rho,\eta)\notag\\
&=C_l(\bU)((\eta,\by,\tau,y),(\rho,\bx,\s,x)),\notag
\end{align}
where we used \eqref{eq_hermitian_momentum_application}. By substituting
 \eqref{eq_infrared_hermitian_pre_another} into
 \eqref{eq_infrared_hermitian_pre} we obtain the claimed equality with
 $S:I\to I$, $Q:I\to \R$ defined in \eqref{eq_hermitian_maps}.

For $S:I\to I$, $Q:I\to \R$ defined in \eqref{eq_half_filled_maps},
\begin{align}
&e^{-iQ_2(S_2((\rho,\bx,\s,x,\theta),(\eta,\by,\tau,y,\xi)))}\overline{\widetilde{C_l}(\overline{\bU})(S_2((\rho,\bx,\s,x,\theta),(\eta,\by,\tau,y,\xi)))}\label{eq_infrared_half_filled_pre}\\
&=\frac{1}{2}(-1)^{1_{\rho\in \{1,4\}}+1_{\eta\in \{1,4\}}+1}
\bigg(1_{(\theta,\xi)=(1,-1)}\overline{C_l(\overline{\bU})((\eta,\by,\tau,y),(\rho,\bx,\s,x))}\notag\\
&\qquad\qquad\qquad -1_{(\xi,\theta)=(1,-1)}\overline{C_l(\overline{\bU})((\rho,\bx,\s,x),(\eta,\by,\tau,y))}\bigg).\notag
\end{align}
Let us define $Y\in \Mat(4,\C)$ by 
\begin{align}
Y(\rho,\eta):=(-1)^{1_{\rho\in\{1,4\}}}\delta_{\rho,\eta},\
 (\rho,\eta\in\cB).\label{eq_half_filled_transform_matrix}
\end{align}
We can see from definition and \eqref{eq_half_filled_momentum} that
\begin{align*}
&Y(E(\bk)^*+E_l(\overline{\bU})(\o,\bk)^*)Y=-(E(\bk)+E_l(\bU)(\o,\bk)),\\
&\ (\forall (\o,\bk)\in\cM\times (2\pi/L)\Z^2,\bU\in \overline{D}).
\end{align*}
Using this inequality, we observe that
\begin{align}
&(-1)^{1_{\rho\in \{1,4\}}+1_{\eta\in
 \{1,4\}}+1}\overline{C_l(\overline{\bU})((\rho,\bx,\s,x),(\eta,\by,\tau,y))}\label{eq_infrared_half_filled_pre_another}\\
&=(-1)^{1_{\rho\in \{1,4\}}+1_{\eta\in
 \{1,4\}}+1}\frac{\delta_{\s,\tau}}{\beta
 L^2}\sum_{(\o,\bk)\in \cM_h\times
 \G^*}e^{-i\<\bx-\by,\bk\>}e^{-i(x-y)\o}\notag\\
&\quad\cdot \chi_l(\o,\bk)\bigg(-i\o
 I_4-\overline{E(\bk)}-\overline{E_l(\overline{\bU})(\o,\bk)}\bigg)^{-1}(\rho,\eta)\notag\\
&=(-1)^{1_{\rho\in \{1,4\}}+1_{\eta\in
 \{1,4\}}+1}\frac{\delta_{\s,\tau}}{\beta
 L^2}\sum_{(\o,\bk)\in \cM_h\times
 \G^*}e^{i\<\by-\bx,\bk\>}e^{i(y-x)\o}\notag\\
&\quad\cdot \chi_l(\o,\bk)(-i\o
 I_4-E(\bk)^*-E_l(\overline{\bU})(\o,\bk)^*)^{-1}(\eta,\rho)\notag\\
&=\frac{\delta_{\s,\tau}}{\beta
 L^2}\sum_{(\o,\bk)\in \cM_h\times
 \G^*}e^{i\<\by-\bx,\bk\>}e^{i(y-x)\o}\notag\\
&\quad\cdot \chi_l(\o,\bk)(i\o
 I_4+Y(E(\bk)^*+E_l(\overline{\bU})(\o,\bk)^*)Y)^{-1}(\eta,\rho)\notag\\
&=C_l(\bU)((\eta,\by,\tau,y),(\rho,\bx,\s,x)).\notag
\end{align}
By combining \eqref{eq_infrared_half_filled_pre_another} with
 \eqref{eq_infrared_half_filled_pre} we obtain the claimed equality with 
$S:I\to I$, $Q:I\to \R$ defined in \eqref{eq_half_filled_maps}.

\eqref{item_covariance_infrared_bound}:
Recall the definition of the Hilbert space $\cH$ given in the proof of
 Lemma \ref{lem_h_independent_determinant_bound}. For any $(\rho,\bx,\s,x)\in
 I_0$ let us define the vectors $u_{\rho\bx\s x}^l$, $v_{\rho\bx\s
 x}^l\in \cH$ by 
\begin{align*}
&u_{\rho\bx\s
 x}^l(\eta,\bk,\tau,\o):=\delta_{\rho,\eta}\delta_{\s,\tau}e^{-i\<\bx,\bk\>}e^{-ix\o}\chi_l(\o,\bk)^{\frac{1}{2}},\\
&v_{\rho\bx\s
 x}^l(\eta,\bk,\tau,\o)\\
&:=\delta_{\s,\tau}e^{-i\<\bx,\bk\>}e^{-ix\o}\chi_l(\o,\bk)^{\frac{1}{2}}(i\o I_4-E(\bk)-E_l(\o,\bk))^{-1}(\eta,\rho)
,\\
&(\forall (\eta,\bk,\tau,\o)\in \cB\times \G^*\times \spin\times \cM_h).
\end{align*}
It follows that $C_l(X,Y)=\<u_X^l,v_Y^l\>_{\cH}$ $(\forall X,Y\in
 I_0)$. Using Lemma \ref{lem_IR_cut_off_support_measure} \eqref{item_measure_support_full_discrete} and
 \eqref{eq_infrared_integrant_bound}, we can derive that 
\begin{align*}
\|u_X^l\|_{\cH}\le c(M)f_{\bt}^{-\frac{1}{2}}M^{\frac{3}{2}l},\ \|v_X^l\|_{\cH}\le c(M)f_{\bt}^{-\frac{1}{2}}M^{\frac{1}{2}l},\ (\forall
 X\in I_0).
\end{align*}
Therefore, the standard argument based on Gram's inequality concludes that for any $r,n\in \N$,
 $\bp_j,\bq_j\in \C^r$ with $\|\bp_j\|_{\C^r}$, $\|\bq_j\|_{\C^r}\le 1$,
 $X_j,Y_j\in I_0$ $(j=1,2,\cdots,n)$,
\begin{align}
|\det(\<\bp_i,\bq_j\>_{\C^r}C_l(X_i,Y_j))_{1\le i,j\le n}|\le (c(M)f_{\bt}^{-1}M^{2l})^n.\label{eq_IR_determinant_bound_pre}
\end{align}
On the assumption $c_{IR}\ge c(M)f_{\bt}^{-1}$ we obtain the claimed
 determinant bound.

Next let us prove the claimed upper bound on $\|\widetilde{C_l}\|_{l-1,r}$. 
By assumption, $\pi h\ge (\pi/\sqrt{3})M_{UV}$.
Since $\chi_l(\o,\bk)=0$ if $|\o|\ge (\pi/\sqrt{3})M_{UV}$, we can replace $\cM_h$ by
 $\cM$ inside $C_l$. Then, by using the periodicity with $\bk$ we can
 justify the following transformation.
\begin{align}
&\left(\frac{\beta}{2\pi}\right)^{n_0}
 (e^{-i(x-y)\frac{2\pi}{\beta}}-1)^{n_0}
\prod_{j=1}^2\left(\left(\frac{L}{2\pi}\right)^{n_j}
 (e^{-i(x_j-y_j)\frac{2\pi}{L}}-1)^{n_j}\right)\label{eq_infrared_covariance_integration_by_parts}\\
&\quad\cdot C_{l}(\cdot\bx\s x,\cdot\by\tau y)\notag\\
&=\frac{\delta_{\s,\tau}}{\beta
 L^2}\sum_{(\o,\bk)\in \cM\times
 \G^*}e^{i\<\bx-\by,\bk\>}e^{i(x-y)\o}\notag\\
&\quad\cdot\prod_{j=0}^2\cD_j^{n_j}(\chi_{l}(\o,\bk)(i\o
 I_4-E(\bk)-E_l(\o,\bk))^{-1})\notag\\
&=\frac{\delta_{\s,\tau}}{\beta
 L^2}\sum_{(\o,\bk)\in \cM\times
 \G^*}e^{i\<\bx-\by,\bk\>}e^{i(x-y)\o}\notag\\
&\quad\cdot\prod_{j_0=1}^{n_0}\left(\frac{\beta}{2\pi}\int_0^{\frac{2\pi}{\beta}}d\o_{j_0}\right)
\prod_{j_1=1}^{n_1}\left(\frac{L}{2\pi}\int_0^{\frac{2\pi}{L}}dp_{1,j_1}\right)
\prod_{j_2=1}^{n_2}\left(\frac{L}{2\pi}\int_0^{\frac{2\pi}{L}}dp_{2,j_2}\right)
\notag\\
&\quad\cdot\left(\frac{\partial}{\partial \o'}\right)^{n_0}
\prod_{q=1}^2\left(\frac{\partial}{\partial k_q'}\right)^{n_q}(\chi_{l}(\o',\bk')(i\o'
 I_4-E(\bk')-E_l(\o',\bk'))^{-1})\notag\\
&\quad\cdot \Big|_{\o'=\o+\sum_{j_0=1}^{n_0}\o_{j_0},\bk'=\bk+\sum_{j_1=1}^{n_1}p_{1,j_1}\be_1+\sum_{j_2=1}^{n_2}p_{2,j_2}\be_2}.\notag
\end{align}

Note that by Lemma \ref{lem_estimate_free_dispersion_relation}
 \eqref{item_derivative_upper_bound_dispersion}, the definition of
 $\fw(l)$ and the fact $c_w\le 1$, 
\begin{align*}
&\left\|\left(\frac{\partial}{\partial k_j}\right)^n(i\o
 I_4-E(\bk))\right\|_{4\times 4}\le c M^{l-2}(\fw(l)^{-1})^n(n!)^2,\\
&(\forall (\o,\bk)\in\R^3,j\in \{0,1,2\},n\in \N).
\end{align*}
Thus, by Lemma \ref{lem_bound_self_energy}
 \eqref{item_bound_extended_self_energy_derivative} and the inequality
 $c_{IR}\ge 1$,
\begin{align}
&\left\|\left(\frac{\partial}{\partial k_j}\right)^n(i\o
 I_4-E(\bk)-E_{l'}(\o,\bk))\right\|_{4\times 4}\label{eq_IR_integrant_derivative_bound_pre}\\
&\le (cM^{-2}+c\alpha^{-2})
 M^{l}(c \fw(l)^{-1})^n(n!)^2,\notag\\
&(\forall (\o,\bk)\in\R^3,j\in \{0,1,2\},n\in \N,l'\in\{0,-1,\cdots,l\}).\notag
\end{align}
By \eqref{eq_infrared_integrant_bound},
 \eqref{eq_IR_integrant_derivative_bound_pre} and the assumption $\alpha,M\ge c$ we can apply Lemma
 \ref{lem_gevrey_matrix} to deduce that
\begin{align}
&\left\|\left(\frac{\partial}{\partial k_j}\right)^n(i\o
 I_4-E(\bk)-E_{l'}(\o,\bk))^{-1}\right\|_{4\times 4}
\le c M^{-l}(c \fw(l)^{-1})^n(n!)^2,\label{eq_infrared_integrant_bound_derivative}\\
&(\forall (\o,\bk)\in\R^3\text{ satisfying }\chi_l(\o,\bk)\neq
 0,j\in \{0,1,2\},n\in \N\cup \{0\},\notag\\
&\ l'\in \{0,-1,\cdots,l\}).\notag
\end{align}
Lemma \ref{lem_infrared_cut_off_derivative} and
 \eqref{eq_infrared_integrant_bound_derivative} yield 
\begin{align}
&\left\|\left(\frac{\partial}{\partial k_j}\right)^n(\chi_l(\o,\bk)(i\o
 I_4-E(\bk)-E_l(\o,\bk))^{-1})\right\|_{4\times 4}\label{eq_infrared_integrant_bound_derivative_cut_off}\\
&\le c M^{-l}(c \fw(l)^{-1})^n(n!)^2,\ (\forall (\o,\bk)\in\R^3,j\in \{0,1,2\},n\in \N\cup \{0\}).\notag
\end{align}

It follows from Lemma \ref{lem_IR_cut_off_support_measure} \eqref{item_measure_support_full_discrete},
 \eqref{eq_infrared_covariance_integration_by_parts} and
 \eqref{eq_infrared_integrant_bound_derivative_cut_off} that
$$
|d_j(\bX)^n\widetilde{C_l}(\bX)|\le c(M)f_{\bt}^{-1}M^{2l}(c\fw(l)^{-1})^n(n!)^2,
$$
which is also true for $n=0$ by \eqref{eq_IR_determinant_bound_pre}.
Therefore, we reach 
\begin{align}
|\widetilde{C_l}(\bX)|\le c(M)f_{\bt}^{-1}M^{2l}e^{-\sum_{j=0}^2(c
 \fw(l)d_j(\bX))^{1/2}},\ (\forall \bX\in
 I^2).\label{eq_infrared_covariance_decay_bound}
\end{align}
By the equality $\fw(l-1)=\fw(l)M^{-1}$ and the assumption $M\ge c$ we have
\begin{align*}
|e^{\sum_{j=0}^2(
 \fw(l-1)d_j(\bX))^{1/2}}
\widetilde{C_l}(\bX)|\le c(M)f_{\bt}^{-1}M^{2l}e^{-\sum_{j=0}^2(c
 \fw(l)d_j(\bX))^{1/2}},
\end{align*}
which implies that
\begin{align*}
\|\widetilde{C_l}\|_{l-1,r}\le c(M)f_{\bt}^{-1}M^{2l}\fw(l)^{-3-r}\le
 c(M,c_w)f_{\bt}^{-1}M^{-l-rl}.
\end{align*}
Thus, if $c_{IR}\ge c(M,c_w)f_{\bt}^{-1}$, we obtain the claimed upper bound.
\end{proof}

\begin{lemma}\label{lem_covariance_infrared_bound_difference}
Assume that \eqref{eq_beta_h_assumption} holds and 
\begin{align*}
h\ge \frac{1}{\sqrt{3}}M_{UV},\ L\ge \beta_2.
\end{align*} 
Then, there exists a constant $c\in\R_{>0}$ independent of any parameter such
 that 
if $$M\ge c,\ \alpha^2\ge cM,$$
the following statement holds true for any 
$l\in \{0,-1,\cdots,N_{\beta_1}\}$ and
$(J^j(\beta_1)(\psi),J^j(\beta_2)(\psi))\in \tilde{\cS}(j)$
 $(j=0,-1,\cdots,l)$.

There exists a constant $c(M,c_w)\in\R_{\ge 1}$ depending only on $M$,
 $c_w$ such that if $c_{IR}\ge c(M,c_w)f_{\bt}^{-1}$ holds,
\begin{align}
&|\det(\<\bp_i,\bq_j\>_{\C^r}C_{l}(\bU)(\beta_1)(R_{\beta_1}(X_i,Y_j)))_{1\le
  i,j\le n}\label{eq_determinant_bound_difference_IR_real}\\
&\quad-
\det(\<\bp_i,\bq_j\>_{\C^r}C_{l}(\bU)(\beta_2)(R_{\beta_2}(X_i,Y_j)))_{1\le
  i,j\le n}|\notag\\
&\le \beta_1^{-\frac{1}{2}}M^{-l}(c_{IR} M^{2l})^n,\notag\\
&(\forall r,n\in\N,\bp_i,\bq_i\in\C^r\text{ with }
\|\bp_i\|_{\C^r},\|\bq_i\|_{\C^r}\le 1,\notag\\
&\quad X_i,Y_i\in \hat{I}_0\
 (i=1,2,\cdots,n),\bU\in\overline{D}),\notag\\
&|\widetilde{C_l}(\bU)(\beta_1)-\widetilde{C_l}(\bU)(\beta_2)|_{l-1}\le
 \beta_1^{-\frac{1}{2}}c_{IR}M^{-2l},\ (\forall \bU\in\overline{D}).\label{eq_decay_bound_difference_IR_real}
\end{align}
\end{lemma}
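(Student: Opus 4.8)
The plan is to follow the scheme of Lemma \ref{lem_covariance_UV_bound_difference}, which handled the analogous difference estimate in the ultra-violet regime, now feeding in the infrared ingredients already assembled. First, exactly as at the start of the proof of Lemma \ref{lem_covariance_infrared_bound}, if $C_l(\beta_1)$ or $C_l(\beta_2)$ vanishes identically there is nothing to prove; otherwise there is $(\o,\bk)$ with $\chi_{\le 0}(\o,\bk)\neq 0$, so Lemma \ref{lem_beta_inverse_upper_bound} gives $1/\beta_1\le M_{IR}M^{N_{\beta_1}+1}$, and combined with $L\ge\beta_2\ge\beta_1$ this makes \eqref{eq_beta_L_temporal_assumption} hold for both $\beta_1$ and $\beta_2$. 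Thus the conclusions of Lemma \ref{lem_bound_self_energy}, Lemma \ref{lem_bound_self_energy_difference} and Lemma \ref{lem_IR_cut_off_support_measure} are available, as are the pointwise/derivative bounds on $(i\o I_4-E(\bk)-E_{l'}(\o,\bk))^{-1}$ established inside the proof of Lemma \ref{lem_covariance_infrared_bound}. The quantity to estimate is $\widetilde{C_l}(\beta_1)(R_{\beta_1}(\bX))-\widetilde{C_l}(\beta_2)(R_{\beta_2}(\bX))$ for $\bX\in\hat I^2$, where $\widetilde{C_l}$ is the anti-symmetric extension \eqref{eq_anti_symmetric_extension_covariance}.

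As in Lemma \ref{lem_covariance_UV_bound_difference}, I would interpose a $\beta$-independent ``continuum'' covariance $C_{ont,l}(\beta_a)$ obtained by replacing the Matsubara sum in \eqref{eq_effective_covariance} by $\frac{1}{2\pi}\int_{-\pi h}^{\pi h}d\o$ and inserting the sign $(-1)^{n_{\beta_a}(x)+n_{\beta_a}(y)}$ (legitimate since $\chi_l$ forces $|\o|$ into a compact range well inside $(-\pi h,\pi h)$). The difference splits into the two ``Riemann sum minus integral'' errors $C_l(\beta_a)(R_{\beta_a}(\bX))-C_{ont,l}(\beta_a)(\bX)$ and the continuum difference $C_{ont,l}(\beta_1)(\bX)-C_{ont,l}(\beta_2)(\bX)$ (here the sign factors agree, since $n_{\beta_1}=n_{\beta_2}$ on $[-\beta_1/4,\beta_1/4)_h$). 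The Riemann-sum errors are handled by the same Euler--Maclaurin step as in Lemma \ref{lem_covariance_UV_bound_difference}, gaining a factor $\beta_1^{-1}$ times the $\o$-integral of $|\partial_\o(\text{integrand})|$ plus a $|x-y|$ weight, the boundary terms at $\pm\pi h$ vanishing; these are controlled by the resolvent derivative bounds, Lemma \ref{lem_infrared_cut_off_derivative}, and the support measure Lemma \ref{lem_IR_cut_off_support_measure}. For the continuum difference I would write the integrand difference as $[\chi_l(\beta_1)-\chi_l(\beta_2)]G_1+\chi_l(\beta_2)[G_1-G_2]$ with $G_a:=(i\o I_4-E-E_l(\beta_a))^{-1}$; by \eqref{eq_equivalence_IR_cut_off} the factor $\chi_l(\beta_1)-\chi_l(\beta_2)$ is supported on $|\o|<\pi/\beta_1$, contributing $\beta_1^{-1}$, while $G_1-G_2=G_1(E_l(\beta_1)-E_l(\beta_2))G_2$ by the resolvent identity, and Lemma \ref{lem_bound_self_energy_difference} supplies the Gevrey-class bound $\|(\partial/\partial k_i)^n(E_l(\beta_1)-E_l(\beta_2))\|_{4\times 4}\le c\beta_1^{-1/2}c_{IR}^{-1}\alpha^{-2}(c\fw(l)^{-1})^n(n!)^2$ (this is where the invariants packaged into $\tilde{\cS}(j)$ enter). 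Combining this with the resolvent bounds via Lemma \ref{lem_gevrey_matrix} and multiplying by the cut-off (Lemma \ref{lem_infrared_cut_off_derivative}) gives Gevrey bounds on all $\o$- and $\bk$-derivatives of the integrand difference; integrating by parts, summing over derivative order, and using $\fw(l-1)=\fw(l)M^{-1}$ with $M$ large (as in the derivation leading to \eqref{eq_infrared_covariance_decay_bound}) converts these into a pointwise decay bound $|\widetilde{C_l}(\beta_1)(\bX)-\widetilde{C_l}(\beta_2)(\bX)|\le c(M,c_w)f_{\bt}^{-1}\beta_1^{-1/2}M^{l}e^{-\sum_{j=0}^2(c\fw(l)d_j(\bX))^{1/2}}$, hence into \eqref{eq_decay_bound_difference_IR_real} after summing over $\bX$ against the $\fw(l-1)$-weight and imposing $c_{IR}\ge c(M,c_w)f_{\bt}^{-1}$.

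For the determinant difference \eqref{eq_determinant_bound_difference_IR_real} I would reproduce the end of the proof of Lemma \ref{lem_covariance_UV_bound_difference}. Writing $C_l(\beta_a)(R_{\beta_a}(X,Y))=\langle u^l_{R_{\beta_a}(X)}(\beta_a),v^l_{R_{\beta_a}(Y)}(\beta_a)\rangle_{\cH(\beta_a)}$ with the vectors from the proof of Lemma \ref{lem_covariance_infrared_bound}, the entrywise difference equals $\langle\tilde u_X,\tilde v_Y\rangle$ in $\cH(\beta_1)\oplus\cH(\beta_2)$ with $\tilde u_X:=(\bp\otimes u^l(\beta_1))\oplus(\bp\otimes u^l(\beta_2))$ and $\tilde v_Y:=(\bq\otimes v^l(\beta_1))\oplus(-\bq\otimes v^l(\beta_2))$, whose norms are $O(f_{\bt}^{-1/2}M^{3l/2})$ and $O(f_{\bt}^{-1/2}M^{l/2})$. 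Expanding the difference determinant along the first column, each first-column entry carries the pointwise bound just obtained (a factor $\beta_1^{-1/2}M^{l}$), and each remaining $(n-1)\times(n-1)$ difference minor is bounded, via the Cauchy--Binet expansion as in \eqref{eq_application_cauchy_binet} together with Gram's inequality in the direct-sum space, by $(c_{IR}M^{2l})^{n-1}$; summing the $n$ terms and absorbing constants into $c_{IR}\ge c(M,c_w)f_{\bt}^{-1}$ gives $\beta_1^{-1/2}M^{-l}(c_{IR}M^{2l})^n$. The main obstacle is purely careful power-counting in $M$: keeping track of how the gain $\beta_1^{-1/2}$ of Lemma \ref{lem_bound_self_energy_difference} trades against the absence of the $M^{l}$ factor present in the single-scale bound of Lemma \ref{lem_bound_self_energy}, so that after the resolvent identity, the support integration ($\sim f_{\bt}^{-1}M^{3l}$) and the weight normalization ($\sim\fw(l)^{-3}\sim M^{-3l}$) the exponent of $M$ lands exactly on $-2l$ and $2ln-l$ as required, together with the sign and boundary bookkeeping in the continuum comparison, which is delicate but already carried out in Lemma \ref{lem_covariance_UV_bound_difference}.
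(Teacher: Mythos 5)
Your proposal is correct and follows essentially the same route as the paper: interpose the continuum covariance $C_{ont,l}(\beta_a)$, bound separately the Riemann-sum errors and the continuum difference (the latter via the resolvent identity together with Lemma \ref{lem_bound_self_energy_difference}), combine the $\beta_1^{-1/2}$-gain with the Gevrey decay exactly as in Lemma \ref{lem_covariance_UV_bound_difference}, and then expand the difference determinant along the first column, treating the remaining minors by Cauchy--Binet/Gram. One small inaccuracy worth noting: the cut-off $\chi_l$ entering $C_l$ is itself $\beta$-independent (only $\chi_{\le l}$ and $E_l$ carry $\beta$-dependence), so your term $[\chi_l(\beta_1)-\chi_l(\beta_2)]G_1$ vanishes identically and the whole continuum difference reduces to the resolvent-identity term $\chi_l G_1(E_l(\beta_1)-E_l(\beta_2))G_2$, which is precisely what the paper estimates (after splitting the $\o$-integral at $\pm\pi/\beta_1$).
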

\begin{proof}
For any $(\rho,\bx,\s,x),(\eta,\by,\tau,y)\in \hat{I}_0$, $a\in
 \{1,2\}$, set 
\begin{align*}
&C_{ont,l}(\beta_a)(\rho\bx\s x,\eta\by\tau y)\\
&:=(-1)^{n_{\beta_a}(x)+n_{\beta_a}(y)}\frac{\delta_{\s,\tau}}{2\pi
 L^2}\sum_{\bk\in \G^*}\int_{-\pi h}^{\pi h}d\o
 e^{i\<\bx-\by,\bk\>}e^{i(x-y)\o}\\
&\quad\cdot \chi_l(\o,\bk)(iw
 I_4-E(\bk)-E_l(\beta_a)(\o,\bk))^{-1}(\rho,\eta),
\end{align*}
Since $L\ge \beta_2\ge \beta_1\ge 1> \sqrt{3}M_{IR}^{-1}$, it follows
 from the definition of $N_{\beta}$ that $1/L\le 1/\beta_a\le
 M_{IR}M^{N_{\beta_a}+1}$ $(\forall a\in\{1,2\})$.
This means that the condition \eqref{eq_beta_L_temporal_assumption}
 holds for $\beta_1$ and $\beta_2$. Thus, we can use the results of
 Lemma \ref{lem_IR_cut_off_support_measure} for $\beta_1$ and $\beta_2$. 
Note that 
\begin{align*}
&C_{ont,l}(\beta_a)(\cdot \bx \s x,\cdot \by\tau
 y)-C_{l}(\beta_a)(\cdot \bx \s r_{\beta_a}(x),\cdot \by\tau
 r_{\beta_a}(y))\\
&=(-1)^{n_{\beta_a}(x)+n_{\beta_a}(y)}
\frac{\delta_{\s,\tau}}{2\pi L^2}\sum_{\bk\in \G^*}
 e^{i\<\bx-\by,\bk\>}\\
&\quad \cdot \sum_{m=0}^{\frac{\beta_a
 h}{2}-1}\Bigg(\int_{\frac{2\pi}{\beta_a}m+\frac{\pi}{\beta_a}}^{\frac{2\pi}{\beta_a}(m+1)+\frac{\pi}{\beta_a}}d\o\int_{\frac{2\pi}{\beta_a}m+\frac{\pi}{\beta_a}}^{\o}du
 +
 \int_{-\frac{2\pi}{\beta_a}(m+1)-\frac{\pi}{\beta_a}}^{-\frac{2\pi}{\beta_a}m-\frac{\pi}{\beta_a}}d\o\int_{-\frac{2\pi}{\beta_a}m-\frac{\pi}{\beta_a}}^{\o}du
 \Bigg)\\
&\quad\cdot \frac{\partial}{\partial u}\big(e^{i(x-y)u}\chi_{l}(u,\bk)
(iu I_4-E(\bk)-E_l(\beta_a)(u,\bk))^{-1}\big)\\
&\quad + (-1)^{n_{\beta_a}(x)+n_{\beta_a}(y)}
\frac{\delta_{\s,\tau}}{2\pi L^2}\sum_{\bk\in \G^*}
 e^{i\<\bx-\by,\bk\>}\\
&\qquad\cdot\Bigg(\int_{-\frac{\pi}{\beta_a}}^{\frac{\pi}{\beta_a}}d\o
-\int_{\pi h}^{\pi h+\frac{\pi}{\beta_a}}d\o- \int^{-\pi h}_{-\pi h-\frac{\pi}{\beta_a}}d\o\Bigg)\\
&\qquad \cdot e^{i(x-y)\o}\chi_{l}(\o,\bk)
(i\o I_4-E(\bk)-E_l(\beta_a)(\o,\bk))^{-1}.
\end{align*}
By Lemma \ref{lem_IR_cut_off_support_measure}
 \eqref{item_measure_support_semi_discrete}, 
\eqref{item_measure_support_no_matsubara} and 
 \eqref{eq_infrared_integrant_bound_derivative_cut_off} 
we see that
\begin{align}
&\|C_{ont,l}(\beta_a)(\cdot \bx \s x,\cdot \by\tau
 y)-C_{l}(\beta_a)(\cdot \bx \s r_{\beta_a}(x),\cdot \by\tau
 r_{\beta_a}(y))\|_{4\times 4}\label{eq_covariance_difference_IR_temperature}\\
&\le \frac{1}{\beta_a L^2}\sum_{\bk\in
 \G^*}\Bigg(\int_{\frac{\pi}{\beta_a}}^{\pi h+\frac{\pi}{\beta_a}}d\o+
\int^{-\frac{\pi}{\beta_a}}_{-\pi h-\frac{\pi}{\beta_a}}d\o
\Bigg)\notag\\
&\quad \cdot \Bigg(|x-y|\chi_l(\o,\bk)\|(i\o
 I_4-E(\bk)-E_l(\beta_a)(\o,\bk))^{-1}\|_{4\times 4}\notag\\
&\qquad\quad +\left\|\frac{\partial}{\partial \o}
(\chi_l(\o,\bk)(i\o
 I_4-E(\bk)-E_l(\beta_a)(\o,\bk))^{-1})\right\|_{4\times
 4}\Bigg)\notag\\
&\quad + \frac{1}{L^2}\sum_{\bk\in
 \G^*}\left(\int_{-\frac{\pi}{\beta_a}}^{\frac{\pi}{\beta_a}}d\o
+\int_{\pi h}^{\pi h+\frac{\pi}{\beta_a}}d\o
+\int^{-\pi h}_{-\pi h-\frac{\pi}{\beta_a}}d\o\right)\notag\\ 
&\qquad\cdot \chi_l(\o,\bk)\|(i\o
 I_4-E(\bk)-E_l(\beta_a)(\o,\bk))^{-1}\|_{4\times 4}\notag\\
&\le \frac{1}{\beta_1}c(M,c_w)f_{\bt}^{-1}(|x-y|M^{2l}+M^l).\notag
\end{align}
Since $\pi h\ge (\pi/\sqrt{3})M_{UV}$, we can replace the integral over
 $[-\pi h,\pi h]$ inside $C_{ont,l}(\beta_a)$ by the
 integral over $(-\infty,\infty)$. Then, the integration by parts with
 $\o$ and the periodicity with $\bk$ yield 
\begin{align*}
&(x-y)^{n_0}\prod_{j=1}^2\left(\left(\frac{L}{2\pi}\right)^{n_j}(e^{-i(x_j-y_j)\frac{2\pi}{L}}-1)^{n_j}\right)C_{ont,l}(\beta_a)(\cdot\bx\s
 x,\cdot\by\tau y)\\
&=(-1)^{n_{\beta_a}(x)+n_{\beta_a}(y)}\frac{\delta_{\s,\tau}}{2\pi
 L^2}\sum_{\bk\in\G^*}\int_{-\infty}^{\infty}d\o
 e^{i\<\bx-\by,\bk\>}e^{i(x-y)\o}\\
&\quad\cdot
 i^{n_0}\prod_{j_1=1}^{n_1}\left(\frac{L}{2\pi}\int_{0}^{\frac{2\pi}{L}}dp_{1,j_1}\right)\prod_{j_2=1}^{n_2}\left(\frac{L}{2\pi}\int_{0}^{\frac{2\pi}{L}}dp_{2,j_2}\right)\\
&\quad\cdot \left(\frac{\partial}{\partial
 \o}\right)^{n_0}\prod_{q=1}^2\left(\frac{\partial}{\partial
 k_q'}\right)^{n_q}
\big(\chi_l(\o,\bk')(i\o I_4-E(\bk')-E_l(\beta_a)(\o,\bk'))^{-1}\big)\\
&\quad\cdot\Bigg|_{\bk'=\bk+\sum_{j_1=1}^{n_1}p_{1,j_1}\be_1+\sum_{j_2=1}^{n_2}p_{2,j_2}\be_2}.
\end{align*}
It follows from Lemma \ref{lem_IR_cut_off_support_measure}
 \eqref{item_measure_support_semi_discrete}, 
\eqref{eq_infrared_integrant_bound_derivative_cut_off} and this equality that
\begin{align*}
&\left|\hat{d}_j(\bX)^n\widetilde{C_{ont,l}}(\beta_a)(\bX)\right|\le
 c(M)f_{\bt}^{-1}M^{2l}(c\fw(l)^{-1})^n(n!)^2,\\ 
&(\forall \bX\in \hat{I}^2,j\in\{0,1,2\},n\in \N\cup\{0\}), 
\end{align*}
where 
\begin{align*}
&\widetilde{C_{ont,l}}(\beta_a)(X\theta,Y\xi)\\
&:=\frac{1}{2}\big(1_{(\theta,\xi)=(1,-1)}C_{ont,l}(\beta_a)(X,Y)-1_{(\xi,\theta)=(1,-1)}C_{ont,l}(\beta_a)(Y,X)\big),\\
&\quad (\forall X,Y\in \hat{I}_0,\theta,\xi\in \{1,-1\}).
\end{align*}
This leads to 
\begin{align*}
&\left|\widetilde{C_{ont,l}}(\beta_a)(\bX)\right|\le
 c(M)f_{\bt}^{-1}M^{2l}e^{-\sum_{j=0}^2(c\fw(l)\hat{d}_j(\bX))^{1/2}},\\ 
&(\forall \bX\in \hat{I}^2,a\in\{1,2\}).
\end{align*}
By combining this inequality with
 \eqref{eq_infrared_covariance_decay_bound} and using the inequality 
$$d_j(R_{\beta_a}(\bX))\ge \frac{2}{\pi}\hat{d}_j(\bX),\ (\forall \bX\in
 \hat{I}^2,j\in \{0,1,2\}),$$
we obtain
\begin{align}
&\left|\widetilde{C_{ont,l}}(\beta_a)(\bX)-
 \widetilde{C_{l}}(\beta_a)(R_{\beta_a}(\bX))\right|\le
 c(M)f_{\bt}^{-1}M^{2l}e^{-\sum_{j=0}^2(c\fw(l)\hat{d}_j(\bX))^{1/2}},\label{eq_covariance_difference_IR_gevrey_decay}\\
&(\forall \bX\in \hat{I}^2,a\in \{1,2\}).\notag
\end{align}
It follows from \eqref{eq_covariance_difference_IR_temperature},
 \eqref{eq_covariance_difference_IR_gevrey_decay} that
\begin{align}
&\left|\widetilde{C_{ont,l}}(\beta_a)(\bX)-
 \widetilde{C_{l}}(\beta_a)(R_{\beta_a}(\bX))\right|\label{eq_covariance_difference_IR_gevrey_temperature}\\
&\le \beta_1^{-\frac{1}{2}} c(M,c_w)f_{\bt}^{-1}(M^{\frac{1}{2}l}\fw(l)^{\frac{1}{2}}\hat{d}_0(\bX)^{\frac{1}{2}}+M^{\frac{1}{2}l})
M^{l}e^{-\sum_{j=0}^2(c\fw(l)\hat{d}_j(\bX))^{1/2}}\notag\\
&\le \beta_1^{-\frac{1}{2}} c(M,c_w)
 f_{\bt}^{-1}M^{\frac{3}{2}l}e^{-\sum_{j=0}^2(c\fw(l)\hat{d}_j(\bX))^{1/2}},\
 (\forall \bX\in \hat{I}^2,a\in \{1,2\}).\notag
\end{align}

On the other hand, note that
\begin{align}
&(x-y)^{n_0}\prod_{j=1}^2\left(\frac{L}{2\pi}(e^{-i(x_j-y_j)\frac{2\pi}{L}}-1)\right)^{n_j}\label{eq_integration_by_parts_IR_difference}\\
&\quad\cdot (C_{ont,l}(\beta_1)(\cdot \bx \s x,\cdot \by\tau
 y)-C_{ont,l}(\beta_2)(\cdot \bx \s x,\cdot \by\tau
 y))\notag\\
&=(-1)^{n_{\beta_1}(x)+n_{\beta_1}(y)}
\frac{\delta_{\s,\tau}}{2\pi L^2}\sum_{\bk\in \G^*}
 e^{i\<\bx-\by,\bk\>}\Bigg(\int_{\frac{\pi}{\beta_1}}^{\infty}d\o+\int^{-\frac{\pi}{\beta_1}}_{-\infty}d\o\Bigg)e^{i(x-y)\o}i^{n_0}\notag\\
&\quad \cdot
 \prod_{j_1=1}^{n_1}\left(\frac{L}{2\pi}\int_0^{\frac{2\pi}{L}}dp_{1,j_1}\right)
\prod_{j_2=1}^{n_2}\left(\frac{L}{2\pi}\int_0^{\frac{2\pi}{L}}dp_{2,j_2}\right)
\left(\frac{\partial}{\partial \o}\right)^{n_0}
\prod_{q=1}^2\left(\frac{\partial}{\partial k_q'}\right)^{n_q}\notag\\
&\quad\cdot \chi_l(\o,\bk')(i\o
 I_4-E(\bk')-E_l(\beta_1)(\o,\bk'))^{-1}\notag\\
&\quad \cdot (E_l(\beta_1)(\o,\bk')-E_l(\beta_2)(\o,\bk'))(i\o
 I_4-E(\bk')-E_l(\beta_2)(\o,\bk'))^{-1}\notag\\
&\quad\cdot\Bigg|_{\bk'=\bk+\sum_{j_1=1}^{n_1}p_{1,j_1}\be_1+\sum_{j_2=1}^{n_2}p_{2,j_2}\be_2}\notag\\
&+(-1)^{n_{\beta_1}(x)+n_{\beta_1}(y)}
\frac{\delta_{\s,\tau}}{2\pi L^2}\sum_{\bk\in
 \G^*}\int_{-\frac{\pi}{\beta_1}}^{\frac{\pi}{\beta_1}}d\o
 e^{i\<\bx-\by,\bk\>}e^{i(x-y)\o}i^{n_0}\notag\\
&\quad\cdot
\prod_{j_1=1}^{n_1}\left(\frac{L}{2\pi}\int_0^{\frac{2\pi}{L}}dp_{1,j_1}\right)
\prod_{j_2=1}^{n_2}\left(\frac{L}{2\pi}\int_0^{\frac{2\pi}{L}}dp_{2,j_2}\right)
\left(\frac{\partial}{\partial \o}\right)^{n_0}
\prod_{q=1}^2\left(\frac{\partial}{\partial k_q'}\right)^{n_q}\notag\\
&\quad\cdot \chi_l(\o,\bk')((i\o
 I_4-E(\bk')-E_l(\beta_1)(\o,\bk'))^{-1}\notag\\
&\qquad-(i\o
 I_4-E(\bk')-E_l(\beta_2)(\o,\bk'))^{-1})\Bigg|_{\bk'=\bk+\sum_{j_1=1}^{n_1}p_{1,j_1}\be_1+\sum_{j_2=1}^{n_2}p_{2,j_2}\be_2}.\notag
\end{align}
The inequalities
 \eqref{eq_infrared_integrant_bound_derivative},
 \eqref{eq_infrared_integrant_bound_derivative_cut_off}, $c_{IR}\ge 1$,
 $\alpha \ge c$ and Lemma \ref{lem_bound_self_energy_difference} ensure
 that
\begin{align*}
&\Bigg\|\left(\frac{\partial}{\partial k_j}\right)^n(\chi_l(\o,\bk)(i\o
 I_4-E(\bk)-E_l(\beta_1)(\o,\bk))^{-1}\\
&\quad\cdot (E_l(\beta_1)(\o,\bk)-E_l(\beta_2)(\o,\bk))
(i\o I_4-E(\bk)-E_l(\beta_2)(\o,\bk))^{-1})\Bigg\|_{4\times 4}\\
&\le \sum_{m_1=0}^n\left(\begin{array}{c} n \\
			 m_1\end{array}\right)
\Bigg\|\left(\frac{\partial}{\partial
 k_j}\right)^{m_1}\chi_l(\o,\bk)(i\o
 I_4-E(\bk)-E_l(\beta_1)(\o,\bk))^{-1}\Bigg\|_{4\times 4}\\
&\quad\cdot  \sum_{m_2=0}^{n-m_1}\left(\begin{array}{c} n-m_1 \\
			 m_2\end{array}\right)
\Bigg\|\left(\frac{\partial}{\partial
 k_j}\right)^{m_2}(E_l(\beta_1)(\o,\bk)-E_l(\beta_2)(\o,\bk))\Bigg\|_{4\times
 4}\\
&\quad\cdot \Bigg\|\left(\frac{\partial}{\partial
 k_j}\right)^{n-m_1-m_2}(i\o
 I_4-E(\bk)-E_l(\beta_2)(\o,\bk))^{-1}\Bigg\|_{4\times 4}\\
&\le c\beta_1^{-\frac{1}{2}}c_{IR}^{-1}\alpha^{-2}M^{-2l}(c\fw(l)^{-1})^n\sum_{m_1=0}^n\left(\begin{array}{c} n \\
			 m_1\end{array}\right)(m_1!)^2\\
&\quad\cdot \sum_{m_2=0}^{n-m_1}\left(\begin{array}{c} n-m_1 \\
			 m_2\end{array}\right)(m_2!)^2((n-m_1-m_2)!)^2\\
&\le c \beta_1^{-\frac{1}{2}}M^{-2l}(c\fw(l)^{-1})^n(n!)^2,\ (\forall
 j\in\{0,1,2\},n\in \N\cup\{0\}).
\end{align*}
Using this inequality, 
Lemma \ref{lem_IR_cut_off_support_measure}
\eqref{item_measure_support_semi_discrete},\eqref{item_measure_support_no_matsubara}
 and
 \eqref{eq_infrared_integrant_bound_derivative_cut_off}, we can derive
 from \eqref{eq_integration_by_parts_IR_difference} that
\begin{align*}
&\left|\hat{d}_j(\bX)^n\left(\widetilde{C_{ont,l}}(\beta_1)(\bX)-\widetilde{C_{ont,l}}(\beta_2)(\bX)\right)\right|\\
&\le \beta_1^{-\frac{1}{2}} c(M)f_{\bt}^{-1}M^l(c\fw(l)^{-1})^n(n!)^2,\
 (\forall \bX\in \hat{I}^2,j\in\{0,1,2\},n\in \N\cup\{0\}),
\end{align*}
which leads to 
\begin{align}
&\left|\widetilde{C_{ont,l}}(\beta_1)(\bX)-\widetilde{C_{ont,l}}(\beta_2)(\bX)\right|\label{eq_covariance_difference_IR_gevrey_continuous}\\
&\le \beta_1^{-\frac{1}{2}}
 c(M)f_{\bt}^{-1}M^le^{-\sum_{j=0}^2(c\fw(l)\hat{d}_j(\bX))^{1/2}},\
 (\forall \bX\in
 \hat{I}^2).\notag
\end{align}

By combining \eqref{eq_covariance_difference_IR_gevrey_continuous} with \eqref{eq_covariance_difference_IR_gevrey_temperature} we have
\begin{align}
&\left|\widetilde{C_{l}}(\beta_1)(R_{\beta_1}(\bX))-\widetilde{C_{l}}(\beta_2)(R_{\beta_2}(\bX))\right|\label{eq_covariance_difference_IR_decay}\\
&\le \beta_1^{-\frac{1}{2}}
 c(M,c_w)f_{\bt}^{-1}M^le^{-\sum_{j=0}^2(c\fw(l)\hat{d}_j(\bX))^{1/2}},\
 (\forall \bX\in
 \hat{I}^2).\notag
\end{align}
By the equality $\fw(l-1)=\fw(l)M^{-1}$ and the assumption $M\ge c$ we
 can derive from \eqref{eq_covariance_difference_IR_decay} that
\begin{align*}
\left|\widetilde{C_{l}}(\beta_1)-\widetilde{C_{l}}(\beta_2)\right|_{l-1}
\le \beta_1^{-\frac{1}{2}}
 c(M,c_w)f_{\bt}^{-1}M^{-2l}.
\end{align*}
On the assumption $c_{IR}\ge c(M,c_w)f_{\bt}^{-1}$, the above inequality
  gives
 \eqref{eq_decay_bound_difference_IR_real}.

To prove the determinant bound, let us take any $r,n\in \N$,
$\bp_i$, $\bq_i\in \C^r$ satisfying $\|\bp_i\|_{\C^r}$,  $\|\bq_i\|_{\C^r}\le 1$
 and $X_i,Y_i\in \hat{I}_0$ $(i=1,2,\cdots,n)$. By
 expanding along the 1st column and using
 \eqref{eq_covariance_difference_IR_decay},
\begin{align*}
&|\det(\<\bp_i,\bq_j\>_{\C^r}(C_l(\beta_1)(R_{\beta_1}(X_i,Y_j))-C_l(\beta_2)(R_{\beta_2}(X_i,Y_j))))_{1\le
 i,j\le n}|\\
&\le \beta_1^{-\frac{1}{2}} c(M,c_w)f_{\bt}^{-1}M^l\sum_{s=1}^n\\
&\quad \cdot\left|\det(\<\bp_i,\bq_j\>_{\C^r}(C_l(\beta_1)(R_{\beta_1}(X_i,Y_j))-C_l(\beta_2)(R_{\beta_2}(X_i,Y_j))))_{1\le
 i,j\le n\atop i\neq s, j\neq 1}\right|.
\end{align*}
By expanding the remaining determinants by means of the
 Cauchy-Binet formula as in \eqref{eq_application_cauchy_binet} and
using \eqref{eq_IR_determinant_bound_pre} we obtain that
\begin{align*}
&|\det(\<\bp_i,\bq_j\>_{\C^r}(C_l(\beta_1)(R_{\beta_1}(X_i,Y_j))-C_l(\beta_2)(R_{\beta_2}(X_i,Y_j))))_{1\le
 i,j\le n}|\\
&\le \beta_1^{-\frac{1}{2}} M^{-l} (c(M,c_w)f_{\bt}^{-1}M^{2l})^n.
\end{align*}
On the assumption $c_{IR}\ge c(M,c_w)f_{\bt}^{-1}$, this inequality yields
\eqref{eq_determinant_bound_difference_IR_real}. 
\end{proof}

Though the next lemma is not used in our proof of Theorem
\ref{thm_main_theorem}, it enlightens us about why the infrared
integration is necessary to achieve our goal. The discussion in Remark
\ref{rem_no_IR_analysis} was based on this lemma. Here we temporarily
lift the imposition of \eqref{eq_hopping_amplitude_normalized}.

\begin{lemma}\label{lem_temperature_dependency_covariance_bound}
Set $t_{max}:=\max\{t_{h,e},t_{h,o},t_{v,e},t_{v,o}\}$. Assume that
$t_{max}\beta\ge 1$. Then, there exist constants $c_1,c_2,c_3>0$
 independent of any physical parameter such that the following inequality holds
 for any $L\in \N$ satisfying $L\ge c_1 t_{max}\beta$ and $\s\in \spin$.
\begin{align*}
c_2\beta \le \int_0^{\beta}dx \sum_{\bx\in\G}\|C(\cdot\bx\s x,\cdot
 \b0\s 0 )\|_{4\times 4}\le c_3
 t_{max}^{\frac{1}{2}}f_{\bt}^{-1}\beta,
\end{align*}
where $C:(\cB\times\G\times\spin\times[0,\beta))^2\to\C$ is the free
 covariance \eqref{eq_covariance_2_point_function} with the
 free Hamiltonian $H_0$ given by \eqref{eq_free_hamiltonian_another_form}.
\end{lemma}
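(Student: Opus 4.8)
\textbf{Proof plan for Lemma \ref{lem_temperature_dependency_covariance_bound}.}
The plan is to compute the time-integral of the $4\times4$-matrix norm of the free covariance by passing to the momentum-frequency representation. Recall from Lemma \ref{lem_characterization_covariance} (or rather the continuum analogue \eqref{eq_time_characterization_covariance_full} with $h\to\infty$) that for $(\o,\bk)\in\cM\times\G^*$ the covariance is essentially $(i\o I_4-E(\bk))^{-1}$, and that by diagonalizing $E(\bk)$ via \eqref{eq_explicit_free_dispersion_relation} its eigenvalues are $X_{p,q}(\bk)$, $p,q\in\{1,-1\}$. First I would reduce the quantity $\int_0^{\beta}dx\sum_{\bx\in\G}\|C(\cdot\bx\s x,\cdot\b0\s0)\|_{4\times4}$ to a sum over $\bk\in\G^*$ and the four bands of the scalar single-band quantity $\int_0^\beta dx\,|g_{\beta,X_{p,q}(\bk)}(x)|$, where $g_{\beta,E}(x):=e^{xE}\big(1_{x\ge0}(1+e^{\beta E})^{-1}-1_{x<0}(1+e^{-\beta E})^{-1}\big)$ is the free single-band propagator; the matrix norm is comparable to the maximum over bands because $E(\bk)$ is normal, and passing to the $\bk$-sum costs the volume factor $L^2$, which cancels against the $1/L^2$ in the covariance after one uses $L\ge c_1 t_{max}\beta$ to approximate $L^{-2}\sum_{\bk\in\G^*}$ by $(2\pi)^{-2}\int_{[0,2\pi]^2}$. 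The elementary identity $\int_0^\beta|g_{\beta,E}(x)|\,dx = \tfrac{1}{|E|}\tanh\!\big(\tfrac{\beta|E|}{2}\big)$ (up to sign bookkeeping of the two pieces, valid for all real $E\neq0$, with value $\beta/2$ at $E=0$) is the analytic heart of the matter: it shows the time-integral of $|g_{\beta,E}|$ is always between $c\,\min\{\beta,|E|^{-1}\}$ and $\min\{\beta,|E|^{-1}\}$.

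With this in hand the two bounds follow by estimating $\frac1{(2\pi)^2}\int_{[0,2\pi]^2}d\bk\,\min\{\beta,|X_{p,q}(\bk)|^{-1}\}$ from above and below. For the lower bound: restrict the $\bk$-integral to a fixed-size neighbourhood of $(\pi,\pi)$ where $|X_{p,q}(\bk)|\le c\,t_{max}$ (using Lemma \ref{lem_estimate_free_dispersion_relation}\eqref{item_derivative_upper_bound_dispersion} and the explicit formula \eqref{eq_explicit_free_dispersion_relation}, together with $\max t=t_{max}$ after undoing the normalization \eqref{eq_hopping_amplitude_normalized}), so that for $t_{max}\beta\ge1$ one has $\min\{\beta,|X_{p,q}(\bk)|^{-1}\}\ge c\beta$ on a set of positive measure; this gives $\ge c_2\beta$. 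For the upper bound: use $\min\{\beta,|X_{p,q}(\bk)|^{-1}\}\le|X_{p,q}(\bk)|^{-1}$ and the lower eigenvalue bound from Lemma \ref{lem_estimate_free_dispersion_relation}\eqref{item_upper_bound_free_propagator}, namely $|X_{p,q}(\bk)|\ge f_{\bt}^{1/2}\big(\sum_{j=1}^2(1+\cos k_j)\big)^{1/2}$ (this is the bound extracted in the proof of that lemma); then $\int_{[0,2\pi]^2}\big(\sum_j(1+\cos k_j)\big)^{-1/2}d\bk$ is a convergent integral (the singularity at $(\pi,\pi)$ is of the type $|\bk-(\pi,\pi)|^{-1}$ in dimension $2$, hence integrable), and one must also keep the cap at $\beta$ near the singularity to avoid a logarithmic divergence — the $\min$ truncates $|X_{p,q}|^{-1}$ at height $\beta$ on the region $|X_{p,q}|\le\beta^{-1}$, which has measure $\lesssim f_{\bt}^{-1}\beta^{-2}$, contributing $\lesssim f_{\bt}^{-1}\beta^{-1}$, while the bulk contributes $\lesssim f_{\bt}^{-1/2}$; combined with a crude $f_{\bt}^{-1/2}\le f_{\bt}^{-1}$ and the prefactor $t_{max}^{1/2}$ coming from the normalization rescaling this yields $\le c_3 t_{max}^{1/2}f_{\bt}^{-1}\beta$. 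I should double-check the exact power of $t_{max}$ by tracking how $f_{\bt}$ in \eqref{eq_hopping_amplitude_function} scales under $t\mapsto t_{max}t'$ with $\max t'=1$, since $f_{\bt}$ is homogeneous of degree $1/2$ in the hoppings and the inverse eigenvalue bound carries the matching factor; the stated $t_{max}^{1/2}f_{\bt}^{-1}$ should come out cleanly from this homogeneity count.

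The step I expect to be the main obstacle is making the passage from the discrete Riemann sum $L^{-2}\sum_{\bk\in\G^*}$ to the integral $(2\pi)^{-2}\int_{[0,2\pi]^2}$ rigorous \emph{uniformly in $\beta$}, because the integrand $\min\{\beta,|X_{p,q}(\bk)|^{-1}\}$ becomes spiky near $(\pi,\pi)$ as $\beta\to\infty$: the width of the spike is $\sim\beta^{-1}$, so one needs $L^{-1}\ll\beta^{-1}$, i.e. exactly a hypothesis of the form $L\ge c_1 t_{max}\beta$, and one must verify that the lattice points actually resolve the spike (there is at least one $\bk\in\G^*$ within the region where $|X_{p,q}(\bk)|\lesssim\beta^{-1}$) for the lower bound, and that the Riemann-sum overcounting is controlled by comparison with a monotone rearrangement for the upper bound. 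Once the condition $L\ge c_1 t_{max}\beta$ is used to guarantee both the resolution of the spike and the $O(L^{-1})$-accuracy of the Riemann sum away from it, the rest is the elementary scalar identity plus the two $\bk$-integral estimates described above.
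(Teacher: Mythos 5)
There is a genuine gap, and it sits in the very first step of your plan: the reduction of $\int_0^\beta dx\sum_{\bx\in\G}\|C(\cdot\bx\s x,\cdot\b0\s 0)\|_{4\times 4}$ to the momentum-space quantity $\frac{1}{L^2}\sum_{\bk\in\G^*}\sum_{p,q}\int_0^\beta|g_{\beta,X_{p,q}(\bk)}(x)|\,dx$. The quantity in the lemma is an $\ell^1$ norm \emph{in the position variable} $\bx$ of the matrix norms, and an $\ell^1_{\bx}$ norm is not bounded above by the normalized $\ell^1_{\bk}$ norm of the Fourier coefficients: one only has $\sum_{\bx}|f(\bx)|\le L\bigl(\frac{1}{L^2}\sum_{\bk}|\hat f(\bk)|^2\bigr)^{1/2}$, which loses a factor of $L$ (think of the Dirichlet kernel, whose Fourier coefficients are bounded by $1$ while its $\ell^1_{\bx}$ norm diverges logarithmically). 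Your scalar identity $\int_0^\beta|g_{\beta,E}|\,dx=\frac{1}{|E|}\tanh(\beta|E|/2)$ is correct, and the resulting estimate of $\frac{1}{L^2}\sum_{\bk}\min\{\beta,|X_{p,q}(\bk)|^{-1}\}$ would indeed be $O(f_{\bt}^{-1}\beta)$ after the homogeneity count (which you do correctly: $f_{\bt}$ is homogeneous of degree $1/2$), but that quantity is simply not an upper bound for the object in the lemma. To bound $\sum_{\bx}\|C\|$ you need pointwise spatial decay of $C$, i.e.\ regularity in $\bk$, and near the singularity $(\o,\bk)=(0,\pi,\pi)$ a single integration by parts on the full covariance produces constants growing much faster than $\beta$ (this is exactly what the crude bound \eqref{eq_sufficient_decay_L1_limit_pre}-type estimates in Appendix~\ref{app_h_L_limit} give). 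The paper's proof avoids this by decomposing $C$ into the UV pieces $C_l^+$ and the IR pieces $C_l'$ (the covariance \eqref{eq_effective_covariance} with $J^j=0$), invoking the per-scale $L^1$ bounds of Lemma~\ref{lem_covariance_UV_bound} and Lemma~\ref{lem_covariance_infrared_bound} \eqref{item_covariance_infrared_bound}, and summing the geometric series $\sum_{l=N_\beta}^{0}c\,f_{\bt}^{-1}M^{-l}\le c\,f_{\bt}^{-1}M^{-N_\beta}\le c\,f_{\bt}^{-1}\beta$; some substitute for this scale-by-scale control of the Fermi-point singularity is unavoidable for the upper bound.

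Your lower bound is essentially salvageable and close to the paper's: the correct inequality is $\sum_{\bx}\|C(\cdot\bx\s x,\cdot\b0\s 0)\|_{4\times4}\ge\bigl\|\sum_{\bx}e^{-i\<\bx,\bk_0\>}C(\cdot\bx\s x,\cdot\b0\s 0)\bigr\|_{4\times4}$ for a \emph{single} mode $\bk_0$, and choosing $\bk_0=(\pi,\pi)$ (or $(\pi-\pi/L,\pi-\pi/L)$ when $L$ is odd, which is where the hypothesis $L\ge c_1 t_{max}\beta$ enters to control the $O(\beta^2/L)$ error) gives $\|e^{xE(\bk_0)}(I_4+e^{\beta E(\bk_0)})^{-1}\|\approx\frac12$ for all $x\in[0,\beta)$, hence the $c_2\beta$ lower bound. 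Your phrasing in terms of a $\bk$-integral over a neighbourhood of $(\pi,\pi)$ again leans on the invalid $\ell^1_{\bx}\leftrightarrow\ell^1_{\bk}$ identification, so you should replace it by the single-mode argument. Finally, the Riemann-sum-versus-integral issue you flag as the main obstacle is not actually where the difficulty lies; the hypothesis $L\ge c_1 t_{max}\beta$ is needed only for the odd-$L$ lower bound and for the $L\ge\beta$ requirement of the IR covariance lemmas, not to resolve a momentum-space spike.
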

\begin{proof}
First we prove the claim under the assumption
 \eqref{eq_hopping_amplitude_normalized}. Let $C_{\le 0}^+:I_0^2\to\C$
 be defined by \eqref{eq_covariance_negative_index} with
 $\phi(M_{UV}^{-2}h^2|1-e^{i\frac{\o}{h}}|^2)$ in place of
 $\chi(h|1-e^{i\frac{\o}{h}}|)$ and $E(\cdot)$ defined by
 \eqref{eq_specific_dispersion_relation}. 
Since now the inequality
 \eqref{eq_real_analytic_dispersion_relation} holds with $E_1=4$,
 $E_2=1$, the results of Lemma \ref{lem_covariance_UV_bound} for
 $E_1=4$, $E_2=1$ are available. Then, by recalling Lemma
 \ref{lem_characterization_covariance} and
 \eqref{eq_cut_off_function_equality_UV} we have that
\begin{align}\label{eq_essentially_UV_covariance_bound}
\sup_{Y\in I_0}\frac{1}{h}\sum_{X\in I_0}|C(X,Y)-C_{\le 0}^+(X,Y)|
\le c(M,c_w)\sum_{l=1}^{N_h}M^{-l}\le c(M,c_w),
\end{align}
if $h\ge c$. Define the covariances $C_l':I_0^2\to\C$
 $(l=0,-1,\cdots,N_{\beta})$ by
\begin{align*}
&C_l'(\rho\bx\s x,\eta\by\tau y)\\
&:=\frac{\delta_{\s,\tau}}{\beta
 L^2}\sum_{(\o,\bk)\in \cM_h\times
 \G^*}e^{i\<\bx-\by,\bk\>}e^{i(x-y)\o}\chi_l(\o,\bk)(i \o I_4-E(\bk))^{-1}(\rho,\eta).
\end{align*}
The covariance $C_l'$ is equal to $C_l$ with $J^j(\psi)=0\in\cS(j)$
 $(j=0,-1,\cdots,$ $l)$.
Thus, Lemma \ref{lem_covariance_infrared_bound}
 \eqref{item_covariance_infrared_bound} ensures that
$$
\sup_{Y\in I_0}\frac{1}{h}\sum_{X\in I_0}|C_l'(X,Y)|\le c(M,c_w)f_{\bt}^{-1}M^{-l}
$$
on the assumption that $h\ge c$, $L\ge \beta$ and $M\ge c$. The
 assumption $\beta \ge 1$ implies that $M^{-N_{\beta}}\le
 c(M)\beta$. Therefore, if $h\ge c$, $1\le \beta \le L$, 
\begin{align}\label{eq_essentially_IR_covariance_bound}
\sup_{Y\in I_0}\frac{1}{h}\sum_{X\in I_0}\left|\sum_{l=0}^{N_{\beta}}C_l'(X,Y)\right|
\le c(M,c_w)f_{\bt}^{-1}\beta.
\end{align}
By \eqref{eq_cut_off_function_basic_sum} we can deduce that
for any $(\bx,\s,x)\in\G\times\spin\times[0,\beta)_h$,
\begin{align}\label{eq_difference_IR_covariance_with_without_h}
&\Bigg\|C_{\le 0}^+(\cdot\bx\s x,\cdot\b0\s
 0)-\sum_{l=0}^{N_{\beta}}C_l'(\cdot\bx\s x,\cdot\b0\s
 0)\Bigg\|_{4\times4}\\
&\le \frac{1}{\beta
 L^2}\sum_{(\o,\bk)\in\cM_h\times\G^*}\notag\\
&\qquad\cdot\big(|\phi(M_{UV}^{-2}h^2|1-e^{i\frac{\o}{h}}|^2)-\phi(M_{UV}^{-2}\o^2)|\|h^{-1}(I_4-e^{-i\frac{\o}{h}I_4+\frac{1}{h}E(\bk)})^{-1}\|_{4\times 4}\notag\\
&\qquad\quad
 +\phi(M_{UV}^{-2}\o^2)\|h^{-1}(I_4-e^{-i\frac{\o}{h}I_4+\frac{1}{h}E(\bk)})^{-1}\|_{4\times 4}\|(i\o I_4-E(\bk))^{-1}\|_{4\times 4}\notag\\
&\qquad\qquad\cdot\|h(I_4-e^{-i\frac{\o}{h}I_4+\frac{1}{h}E(\bk)})-
(i\o I_4-E(\bk))\|_{4\times 4}\big)\notag\\
&\le \frac{1}{\beta L^2}\sum_{(\o,\bk)\in \cM_h\times
 \G^*}1_{\phi(M_{UV}^{-2}h^2|1-e^{i\frac{\o}{h}}|^2)\neq 0 \bigvee
 \phi(M_{UV}^{-2}\o^2)\neq 0}(c\beta h^{-2}+c\beta^2 h^{-1})\notag\\
&\le c\beta h^{-2}+c\beta^2 h^{-1}.\notag
\end{align}
Combination of
 \eqref{eq_essentially_UV_covariance_bound},
 \eqref{eq_essentially_IR_covariance_bound}, 
\eqref{eq_difference_IR_covariance_with_without_h} yields that
\begin{align*}
&\frac{1}{h}\sum_{(\bx,x)\in\G\times [0,\beta)_h}\|C(\cdot\bx\s x,
 \cdot\b0\s 0)\|_{4\times 4}\\
&\le
 c(M,c_w)f_{\bt}^{-1}\beta+c(M,c_w)+cL^2\beta^2h^{-2}
+cL^2\beta^3h^{-1}.
\end{align*}
Then, sending $h\to \infty$ and using the inequality
 $f_{\bt}^{-1}\beta\ge 1$ we reach the inequality
\begin{align}\label{eq_full_covariance_temperature_upper_bound}
 \int_0^{\beta}dx \sum_{\bx\in\G}\|C(\cdot\bx\s x,\cdot
 \b0\s 0 )\|_{4\times 4}\le c(M,c_w)
 f_{\bt}^{-1}\beta
\end{align}
on the assumption $1\le \beta \le L$.

Let us prove the lower bound. By
 \eqref{eq_full_covariance_continuous_formula} in Appendix
 \ref{app_h_L_limit}, 
\begin{align*}
C(\cdot\bx\s x,\cdot\b0\s
 0)=\frac{1}{L^2}\sum_{\bk\in\G^*}e^{i\<\bx,\bk\>}e^{xE(\bk)}(I_4+e^{\beta E(\bk)})^{-1}
\end{align*}
for any $(\bx,x)\in\G\times [0,\beta)$.
Consider the case that $L\in 2\N$. Since
 $(\pi,\pi)\in\G^*$ and $E(\pi,\pi)=O$, we have that for any $x\in
 [0,\beta)$
\begin{align*}
\sum_{\bx\in\G}e^{-i\<\bx,(\pi,\pi)\>}C(\cdot\bx\s x,\cdot\b0\s
 0)=\frac{1}{2}I_4.
\end{align*}
Therefore, 
\begin{align}\label{eq_full_covariance_lower_bound_even}
\int_0^{\beta}dx \sum_{\bx\in\G}\|C(\cdot\bx\s x,\cdot
 \b0\s 0 )\|_{4\times 4}
&\ge \int_0^{\beta}dx \left\|\sum_{\bx\in\G}e^{-i\<\bx,(\pi,\pi)\>}C(\cdot\bx\s x,\cdot
 \b0\s 0 )\right\|_{4\times 4}\\
&=\frac{\beta}{2}.\notag
\end{align}
On the other hand, if $L\in 2\N+1$,
 $(\pi-\frac{\pi}{L},\pi-\frac{\pi}{L})\in\G^*$. For any $x\in
 [0,\beta)$,
\begin{align*}
&\sum_{\bx\in\G}e^{-i\<\bx,(\pi-\frac{\pi}{L},\pi-\frac{\pi}{L})\>}C(\cdot\bx\s x,\cdot
 \b0\s 0 )\\
&=e^{xE(\pi-\frac{\pi}{L},\pi-\frac{\pi}{L})}(I_4+e^{\beta
 E(\pi-\frac{\pi}{L},\pi-\frac{\pi}{L})})^{-1}.
\end{align*}
Moreover, by using Lemma \ref{lem_estimate_free_dispersion_relation}
 \eqref{item_derivative_upper_bound_dispersion} we can prove that
\begin{align*}
&\left\|\sum_{\bx\in\G}e^{-i\<\bx,(\pi-\frac{\pi}{L},\pi-\frac{\pi}{L})\>}C(\cdot\bx\s
 x,\cdot \b0\s 0 )\right\|_{4\times 4}\\
&\ge \frac{1}{2}-\left\|e^{xE(\pi-\frac{\pi}{L},\pi-\frac{\pi}{L})}(I_4+e^{\beta
 E(\pi-\frac{\pi}{L},\pi-\frac{\pi}{L})})^{-1}-\frac{1}{2}I_4\right\|_{4\times
 4}\\
&\ge \frac{1}{2}-\int_{-\frac{\pi}{L}}^0dr\left\|\frac{d}{dr}e^{xE(\pi+r,\pi+r)}(I_4+e^{\beta
 E(\pi+r,\pi+r)})^{-1}\right\|_{4\times 4}\\
&\ge \frac{1}{2}-c\beta L^{-1}.
\end{align*}
This implies that 
\begin{align}\label{eq_full_covariance_lower_bound_odd}
\int_0^{\beta}dx\sum_{\bx\in\G}\|C(\cdot\bx\s
 x,\cdot \b0\s 0 )\|_{4\times 4}\ge
 \frac{\beta}{2}-c\beta^2L^{-1}.
\end{align}
By assuming that $1\le \beta$ and $c\beta \le L$ we obtain
 from \eqref{eq_full_covariance_temperature_upper_bound},
 \eqref{eq_full_covariance_lower_bound_even}, 
\eqref{eq_full_covariance_lower_bound_odd} that
\begin{align*}
c\beta \le \int_0^{\beta}dx\sum_{\bx\in\G} \|C(\cdot\bx\s
 x,\cdot \b0\s 0 )\|_{4\times 4}\le c(M,c_w)f_{\bt}^{-1}\beta.
\end{align*}
Thus, the claim has been proved under the assumption
 \eqref{eq_hopping_amplitude_normalized}.

Let us admit that the claim is true under the assumption
 \eqref{eq_hopping_amplitude_normalized} and derive the result in the
 general case. Let
 $\hat{C}(\beta):(\cB\times\G\times\spin\times[0,\beta))^2\to\C$ be the
 covariance defined by \eqref{eq_covariance_2_point_function} with the
 free Hamiltonian $\frac{1}{t_{max}}H_0$. It follows that if $\beta\ge
 1$ and $L\ge c_1\beta$, 
\begin{align*}
c_2\beta \le \int_0^{\beta}dx \sum_{\bx\in\G}\|\hat{C}(\beta)(\cdot\bx\s x,\cdot
 \b0\s 0 )\|_{4\times 4}\le c_3
 f_{\bt/t_{max}}^{-1}\beta.
\end{align*}
We can see from the definition that 
\begin{align*}
\hat{C}(\beta)(\cdot\bx\s x,\cdot\b0\s
 0)=C(\beta/t_{max})(\cdot\bx\s(x/t_{max}),\cdot\b0\s0)
\end{align*}
for any $(\bx,\s,x)\in\G\times \spin\times [0,\beta)$. Therefore, if
 $t_{max}\beta\ge 1$ and $L\ge c_1t_{max}\beta$,
\begin{align*}
c_2t_{max}\beta \le \int_0^{t_{max}\beta}dx \sum_{\bx\in\G}\|C(\beta)(\cdot\bx\s (x/t_{max}),\cdot
 \b0\s 0 )\|_{4\times 4}\le c_3
 f_{\bt/t_{max}}^{-1}t_{max}\beta,
\end{align*}
or
\begin{align*}
c_2\beta \le \int_0^{\beta}dx \sum_{\bx\in\G}\|C(\beta)(\cdot\bx\s x,\cdot
 \b0\s 0 )\|_{4\times 4}\le c_3
 f_{\bt/t_{max}}^{-1}\beta=c_3t_{max}^{\frac{1}{2}}f_{\bt}^{-1}\beta.
\end{align*}
\end{proof}

\subsection{Application of the generalized infrared integration}
\label{subsec_application_IR}
Since we have studied the properties of the prototypical covariance for the infrared
integration in the previous subsection, we can apply the general
infrared analysis summarized in Subsection \ref{subsec_IR_general} and
Subsection \ref{subsec_IR_general_difference} to prove Theorem
\ref{thm_main_theorem}. We follow several steps until we reach the proof
of the main theorem. Since some technicalities arise in how to choose 
the constant $c_{IR}$ and the domain $D$ of $\C^4$, let us write
$\cS(c_{IR},D)(l)$, $\tilde{\cS}(c_{IR},D)(l)$ in place of $\cS(l)$,
$\tilde{\cS}(l)$ respectively. We also write $\cS(\beta)(c_{IR},$ $D)(l)$
instead of $\cS(c_{IR},D)(l)$ when we want to indicate the dependency on
$\beta$ as well. Throughout this subsection we assume that
\begin{align}\label{eq_final_parameter_conditions}
L\ge \beta,\ h\ge e^{8}.
\end{align}
These are the conditions required in 
Proposition \ref{prop_UV_integration},
Lemma \ref{lem_covariance_infrared_bound} 
and Lemma \ref{lem_covariance_infrared_bound_difference},
since now $E_1=4$, $M_{UV}=10\sqrt{6}/\pi$. 

We use the output of the UV integration as the
input of the infrared integration. Thus, we need to confirm the next
statement as the first step.
\begin{lemma}\label{lem_input_IR_ok} 
Let $c\ (\in\R_{>0})$ be the generic positive constant and  $c_0$,
 $c_0'\ (\in\R_{\ge 1})$ be the $M$-dependent constants appearing in Proposition
 \ref{prop_UV_integration}.
Assume that 
$$
M\ge \max\{M_{UV},c\},\ \alpha^2\ge cM
$$
and set
\begin{align*}
D_{UV}:=\{(U_1,U_2,U_3,U_4)\in \C^4\ |\ |U_{\rho}|<
 (c(c_0+{c_0'})^2\alpha^4)^{-1},\ (\forall \rho\in\cB)\}.
\end{align*}
Let $J^{+,0}(\psi)$, $J^{-,0}(\psi)\in\bigwedge \cV$ be the Grassmann
 polynomials defined in the beginning of Subsection \ref{subsec_application_UV}. Set
$$
J^0(\psi):=\frac{1}{2}(J^{+,0}(\psi)+J^{-,0}(\psi)).
$$ 
Then, 
$$J^0(\psi)\in \cS(c_0,D_{UV})(0).$$
Moreover, on the assumption \eqref{eq_beta_h_assumption},
$$
(J^0(\beta_1)(\psi),J^0(\beta_2)(\psi))\in \tilde{\cS}(c_0,D_{UV})(0).
$$
\end{lemma}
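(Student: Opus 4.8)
The goal is to verify that $J^0(\psi)$ satisfies the four defining conditions \eqref{item_IR_set_analytic}--\eqref{item_IR_set_complex_invariance} of $\cS(c_0,D_{UV})(0)$ with $c_{IR}=c_0$, $l=0$, $\alpha$ as given, together with the difference bounds \eqref{eq_IR_0th_difference}, \eqref{eq_IR_higher_order_difference} defining $\tilde{\cS}(c_0,D_{UV})(0)$. The analyticity condition \eqref{item_IR_set_analytic} is immediate from Proposition \ref{prop_UV_integration} \eqref{item_UV_analyticity}: each $J^{\delta,0}(\psi)$ is continuous in $\overline{D_{UV}}$ and analytic in $D_{UV}$, hence so is their average $J^0(\psi)$. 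The bound condition \eqref{item_IR_set_bound} at scale $l=0$ reads $\frac{h}{N}|J_0^0|\le \alpha^{-3}$ and $\sum_{m=2}^N c_0^{m/2}\alpha^m\|J_m^0\|_{0,r}\le 1$ for $r\in\{0,1\}$. These follow by taking the $\delta=+$ and $\delta=-$ inequalities from Proposition \ref{prop_UV_integration} \eqref{item_UV_bound} at $l=0$ (where the factors $M^{\frac{7}{2}l}$, $M^{-2l}$, $M^{ml}$ all equal $1$), averaging, and using the triangle inequality; note $\frac{h}{N}|J_0^0|\le\frac12(\frac{h}{N}|J_0^{+,0}|+\frac{h}{N}|J_0^{-,0}|)\le\alpha^{-4}\le\alpha^{-3}$ since $\alpha\ge1$, and similarly for the higher-order piece each of $\|F_2^{\delta,0}\|_{0,r}+\sum_n\|T_2^{\delta,0,(n)}\|_{0,r}$ etc.\ is controlled, so $\|J_m^0\|_{0,r}\le\frac12(\|J_m^{+,0}\|_{0,r}+\|J_m^{-,0}\|_{0,r})$ plugs in. One must also check $J_m^0(\psi)=0$ for $m\notin 2\N\cup\{0\}$ and $m=0,2$? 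No — $\cS(l)$ as defined does allow a constant and quadratic part, so only the vanishing of odd-degree parts is needed, which comes from Lemma \ref{lem_even_term_survive_UV}.

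\textbf{The invariance conditions.} The substance of the lemma is verifying the invariances \eqref{item_IR_set_invariance}, \eqref{item_IR_set_complex_invariance}. The tool is Lemma \ref{lem_free_tree_invariance_general}: for each pair $(S,Q)$ in the lists \eqref{eq_particle_hole_maps}--\eqref{eq_rotation_maps} and \eqref{eq_hermitian_maps}--\eqref{eq_half_filled_maps} I would check that (a) the initial polynomial $J^{\delta,N_h}(\psi)=-V^\delta(\psi)$ satisfies the relevant invariance $J^{\delta,N_h}(\cR\psi)=J^{\delta,N_h}(\psi)$ (or the conjugate version), and (b) the covariances $C_l^\delta$ ($l=1,\dots,N_h$) satisfy the matching covariance identity $\widetilde{C_l^\delta}(\bX)=e^{iQ_2(S_2(\bX))}\widetilde{C_l^\delta}(S_2(\bX))$ (resp. $\widetilde{C_l^\delta}(\bU)(\bX)=e^{-iQ_2(S_2(\bX))}\overline{\widetilde{C_l^\delta}(\overline{\bU})(S_2(\bX))}$). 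Given (a) and (b), Lemma \ref{lem_free_tree_invariance_general} \eqref{item_invariance_simple} (resp. \eqref{item_invariance_variables}) propagates the invariance inductively through the scales $l=N_h-1,\dots,0$ to $J^{\delta,0}(\psi)$, and averaging over $\delta$ gives the invariance for $J^0(\psi)$. The covariance identities (b) are checked by inspecting the explicit momentum-space formula \eqref{eq_introduction_UV_covariance} for $C_l^\delta$: the spin maps \eqref{eq_spin_maps}, \eqref{eq_spin_reflection_maps} use $\delta_{\s,\tau}$ and spin-independence of the kernel; the translation map \eqref{eq_translation_maps} uses the $e^{i\<\bk,\bx-\by\>}$, $e^{i(x-y)\o}$ structure together with the sign bookkeeping $n_\beta$; the rotation map \eqref{eq_rotation_maps} uses the conjugation property $Z(\bk)^*E(\bk)Z(\bk)=E(-\bk)$ for the specific $E$ of \eqref{eq_specific_dispersion_relation}, exactly as established within Lemma \ref{lem_covariance_infrared_bound} \eqref{item_covariance_infrared_symmetries}; the hermitian map \eqref{eq_hermitian_maps} and the half-filled map \eqref{eq_half_filled_maps} use \eqref{eq_dispersion_hermite} and the matrix $Y$ of \eqref{eq_half_filled_transform_matrix} with $Y E(\bk)^* Y=-E(\bk)$ — the same computations already performed for $C_l$ in Lemma \ref{lem_covariance_infrared_bound} \eqref{item_covariance_infrared_complex_symmetries}, now applied to the free (Matsubara-cut-off) covariance which is even simpler since there is no self-energy insertion. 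For the particle-hole map \eqref{eq_particle_hole_maps} the phase $e^{i\frac\pi2(\theta+\xi)}$ acts trivially on the anti-symmetric extension because $\widetilde{C}$ is supported on $(\theta,\xi)\in\{(1,-1),(-1,1)\}$, where $\theta+\xi=0$. For (a), $-V^\delta(\psi)$ is built from the local quartic term $\opsi_{\rho\bx\ua x}\opsi_{\rho\bx\da x}\psi_{\rho\bx\da x}\psi_{\rho\bx\ua x}$ plus a quadratic term, and each of the listed $(S,Q)$ is readily checked to fix it (the phases cancel in the balanced product of two bars and two unbarred fields, and the half-filled/hermitian conjugate identities hold by direct inspection of the real-valued coupling $U_\rho$).

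\textbf{The difference bounds.} For $(J^0(\beta_1),J^0(\beta_2))\in\tilde{\cS}(c_0,D_{UV})(0)$ one needs \eqref{eq_IR_0th_difference}, \eqref{eq_IR_higher_order_difference} at $l=0$, i.e. $\left|\frac{h}{N(\beta_1)}J_0(\beta_1)-\frac{h}{N(\beta_2)}J_0(\beta_2)\right|\le\beta_1^{-1/2}\alpha^{-3}$ and $\sum_{m=2}^{N(\beta_2)}c_0^{m/2}\alpha^m|J_m(\beta_1)-J_m(\beta_2)|_0\le\beta_1^{-1/2}$. These follow directly from Proposition \ref{prop_UV_integration} \eqref{item_UV_difference} at $l=0$ (all $M$-powers equal $1$), again averaging the $\delta=+$ and $\delta=-$ estimates and using $\beta_1^{-1/2}\alpha^{-4}\le\beta_1^{-1/2}\alpha^{-3}$. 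Here one uses that \eqref{eq_beta_h_assumption} is in force, which is required for \eqref{item_UV_difference}.

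\textbf{Main obstacle.} The routine parts are the bound and difference estimates, which are essentially quotations of Proposition \ref{prop_UV_integration}. The real work is the bookkeeping for the invariances: one must carefully verify that \emph{every} one of the seven pairs $(S,Q)$ respects both the initial interaction $-V^\delta$ and each UV-cut-off covariance $C_l^\delta$, paying attention to the sign factors $(-1)^{n_\beta(\cdots)}$ in the translation map and the complex-conjugation structure in \eqref{eq_hermitian_maps}, \eqref{eq_half_filled_maps}. The translation invariance \eqref{eq_translation_maps} is slightly delicate because the covariance $C_l^\delta$ has periodic boundary conditions in both space and imaginary time, so the phase $Q$ involving $n_\beta(r_\beta(x-s)+s)$ must be matched against the anti-periodicity of the Grassmann kernel; however this is exactly the structure already handled in Lemma \ref{lem_temperature_translation_UV} and Lemma \ref{lem_covariance_infrared_bound} \eqref{item_covariance_infrared_symmetries}, so the argument is available. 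I expect the total proof to be a systematic but unremarkable case-check once the reduction via Lemma \ref{lem_free_tree_invariance_general} is in place.
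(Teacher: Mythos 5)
Your treatment of the analyticity, the bound condition, the difference bounds, and six of the seven invariances matches the paper's proof. But there is a genuine gap at the one place where the lemma has actual content: the half-filled invariance \eqref{eq_half_filled_maps}. You assert that each $(S,Q)$, including \eqref{eq_half_filled_maps}, "is readily checked to fix" $-V^{\delta}(\psi)$ and each covariance $C_l^{\delta}$, so that Lemma \ref{lem_free_tree_invariance_general} propagates the invariance to $J^{\delta,0}$ for each $\delta$ separately. This is false. Under the map $\theta\mapsto -\theta$ with phase $\pi 1_{\rho\in\{1,4\}}$ the quartic part of $V^{\delta}$ is preserved but the quadratic part $\mp\frac{1}{2h}\sum U_{\rho}\opsi_{X}\psi_{X}$ changes sign, so in fact $V^{\delta}(\cR\psi)=V^{-\delta}(\psi)$ (this is \eqref{eq_original_interaction_shift}); likewise the covariance identity reads $e^{-iQ_2(S_2(\bX))}\overline{\widetilde{C_{>0}^{\delta}}(S_2(\bX))}=\widetilde{C_{>0}^{-\delta}}(\bX)$, i.e.\ the transformation \emph{exchanges} the $+$ and $-$ covariances rather than fixing either. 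Consequently neither $J^{+,0}$ nor $J^{-,0}$ can be shown to satisfy \eqref{eq_half_filled_maps} by itself — the paper states this explicitly in the remark following the lemma — and your inductive scheme does not apply to this map. This exchange is the entire reason the lemma works with the average $J^0=\frac12(J^{+,0}+J^{-,0})$ rather than with $J^{+,0}$ alone; your proposal misses the mechanism that makes the averaging necessary.

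The paper's actual argument for \eqref{eq_half_filled_maps} is different in kind from the scale-by-scale induction: it uses Proposition \ref{prop_UV_integration} \eqref{item_UV_analytic_continuation} to write $J^{\delta,0}(\psi)=\log\bigl(\int e^{-V^{\delta}(\psi+\psi^1)}d\mu_{C_{>0}^{\delta}}(\psi^1)\bigr)$ for $\bU$ in a small ($\beta,L$-dependent) ball, derives $\overline{J^{\delta,0}(\overline{\bU})}(\cR\psi)=J^{-\delta,0}(\bU)(\psi)$ there from \eqref{eq_original_interaction_shift} and the covariance swap \eqref{eq_covariance_UV_half_filled_shift}, and then extends the identity to all of $\overline{D_{UV}}$ by the identity theorem and continuity; averaging over $\delta$ then yields $\overline{J^{0}(\overline{\bU})}(\cR\psi)=J^{0}(\bU)(\psi)$. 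To repair your proof you would need either this global argument or a two-component variant of Lemma \ref{lem_free_tree_invariance_general} that tracks the $\delta\mapsto-\delta$ exchange through every scale; neither is present in your proposal.
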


\begin{remark} One necessary condition for a Grassmann polynomial to
 belong to $\cS(c_0,D_{UV})(0)$ is the invariance with $S:I\to I$,
 $Q:I\to \R$ defined in \eqref{eq_half_filled_maps}. It is shown below
 that the polynomial $J^0(\psi)$ satisfies this invariance, while
 $J^{+,0}(\psi)$ or $J^{-,0}(\psi)$ cannot be proved to have this
 invariance by itself. For this reason it is more convenient to deal with
 $J^0(\psi)$ than $J^{+,0}(\psi)$, $J^{-,0}(\psi)$ as the input to the
 infrared integration. The adoption of
 $J^0(\psi)$ in place of $J^{+,0}(\psi)$, $J^{-,0}(\psi)$ is justified
 by Lemma \ref{lem_symmetric_formulation}.
\end{remark}

\begin{proof}[Proof of Lemma \ref{lem_input_IR_ok}]
By Proposition
 \ref{prop_UV_integration} \eqref{item_UV_bound},
 \eqref{item_UV_analyticity} we see that $J^0(\psi)$ satisfies the
 conditions \eqref{item_IR_set_analytic}, \eqref{item_IR_set_bound} of
 $\cS(c_0,D_{UV})(0)$. Moreover, on the assumption
 \eqref{eq_beta_h_assumption}, Proposition \ref{prop_UV_integration}
 \eqref{item_UV_difference} implies that
 $(J^0(\beta_1)(\psi),J^0(\beta_2)(\psi))\in \tilde{\cS}(c_0,D_{UV})(0)$. 

It
 remains to check that $J^0(\psi)$ satisfies the invariant properties
 \eqref{item_IR_set_invariance}, \eqref{item_IR_set_complex_invariance}.
 Let $J^{\delta,l}(\psi)$ 
 $(l\in\{0,1,\cdots,N_h\},\delta\in\{+,-\})$ be the Grassmann
 polynomials defined in the beginning of Subsection
 \ref{subsec_application_UV}.
 Since $J^{\delta,N_h}(\psi)=-V^{\delta}(\psi)$, by recalling \eqref{eq_signed_interaction_polynomial} we can check that
$J^{\delta, N_h}(\psi)$ satisfies the invariant properties
 \eqref{item_IR_set_invariance}, \eqref{item_IR_set_complex_invariance}
 except the invariance with $S:I\to I$, $Q:I\to \R$ defined in
 \eqref{eq_half_filled_maps}. In the same way as in the proof of Lemma
 \ref{lem_covariance_infrared_bound}
 \eqref{item_covariance_infrared_symmetries},
 \eqref{item_covariance_infrared_complex_symmetries} we can show that
 for any $l\in\{0,1,\cdots,N_h\}$, $\delta \in \{+,-\}$,
\begin{align*}
\widetilde{C_l^{\delta}}(\bU)(\bX)=e^{iQ_2(S_2(\bX))}\widetilde{C_l^{\delta}}(\bU)(S_2(\bX)),\
 (\forall \bX\in I^2,\bU\in \overline{D_{UV}}),
\end{align*}
with  $S:I\to I$, $Q:I\to \R$ defined in \eqref{eq_particle_hole_maps},
 \eqref{eq_spin_maps}, \eqref{eq_spin_reflection_maps},
 \eqref{eq_translation_maps}, \eqref{eq_rotation_maps} respectively, and 
\begin{align*}
\widetilde{C_l^{\delta}}(\bU)(\bX)=e^{-iQ_2(S_2(\bX))}
\overline{\widetilde{C_l^{\delta}}(\overline{\bU})(S_2(\bX))},\
 (\forall \bX\in I^2,\bU\in \overline{D_{UV}}),
\end{align*}
with  $S:I\to I$, $Q:I\to \R$ defined in \eqref{eq_hermitian_maps}.
 Therefore, we can
 inductively apply Lemma \ref{lem_free_tree_invariance_general} to
 conclude that $J^{\delta,0}(\psi)$ satisfies the invariant properties
 \eqref{item_IR_set_invariance}, \eqref{item_IR_set_complex_invariance}
except the invariance with $S:I\to I$, $Q:I\to \R$ defined in 
 \eqref{eq_half_filled_maps}, and so does $J^0(\psi)$ by definition.

In the following let $S:I\to I$, $Q:I\to \R$ be those defined in
 \eqref{eq_half_filled_maps}. We see that
\begin{align}
V^{\delta}(\cR \psi)=V^{-\delta}(\psi),\ (\forall \delta \in
 \{+,-\}).\label{eq_original_interaction_shift}
\end{align}
Moreover, for any $(\rho,\bx,\s,x,\theta),(\eta,\by,\tau,y,\xi)\in I$,
\begin{align}
&e^{-iQ_2(S_2((\rho,\bx,\s,x,\theta),(\eta,\by,\tau,y,\xi)))}\overline{\widetilde{C_{>0}^\delta}(S_2((\rho,\bx,\s,x,\theta),(\eta,\by,\tau,y,\xi)))}\label{eq_half_filled_preliminary_UV}\\
&=\frac{1}{2}(-1)^{1_{\rho\in \{1,4\}}+1_{\eta\in \{1,4\}}+1}
\bigg(1_{(\theta,\xi)=(1,-1)}\overline{C_{>0}^{\delta}((\eta,\by,\tau,y),(\rho,\bx,\s,x))}\notag\\
&\qquad\qquad\qquad -1_{(\xi,\theta)=(1,-1)}\overline{C_{>0}^{\delta}((\rho,\bx,\s,x),(\eta,\by,\tau,y))}\bigg).\notag
\end{align}
For the matrix $Y\in \Mat(4,\C)$ defined in
 \eqref{eq_half_filled_transform_matrix}, $YE(\bk)Y=-E(\bk)$. Using
 this equality and $E(\bk)^*=E(\bk)$ we can derive that
\begin{align*}
&(-1)^{1_{\rho\in \{1,4\}}+1_{\eta\in
 \{1,4\}}+1}\overline{C_{>0}^{+}((\rho,\bx,\s,x),(\eta,\by,\tau,y))}\\
&=\frac{\delta_{\s,\tau}}{\beta
 L^2}\sum_{(\o,\bk)\in \cM_h\times
 \G^*}e^{i\<\bx-\by,\bk\>}e^{-i(x-y)\o}\sum_{l=1}^{N_h}\chi_{h,l}(\o)\\
&\quad\cdot h^{-1}(-I_4+e^{i\frac{\o}{h}I_4+\frac{1}{h}YE(\bk)Y})^{-1}(\rho,\eta)\\
&=\frac{\delta_{\s,\tau}}{\beta
 L^2}\sum_{(\o,\bk)\in \cM_h\times
 \G^*}e^{i\<\bx-\by,\bk\>}e^{-i(x-y)\o}\sum_{l=1}^{N_h}\chi_{h,l}(\o)\\
&\quad\cdot h^{-1}(e^{i\frac{\o}{h}I_4-\frac{1}{h}E(\bk)^*}-I_4)^{-1}(\rho,\eta)\\
&=\frac{\delta_{\s,\tau}}{\beta
 L^2}\sum_{(\o,\bk)\in \cM_h\times
 \G^*}e^{-i\<\by-\bx,\bk\>}e^{i(y-x)\o}\sum_{l=1}^{N_h}\chi_{h,l}(\o)
 h^{-1}(e^{i\frac{\o}{h}I_4-\frac{1}{h}\overline{E(\bk)}}-I_4)^{-1}(\eta,\rho)\\
&=C_{>0}^-((\eta,\by,\tau,y),(\rho,\bx,\s,x)),
\end{align*}
which also implies that
\begin{align*}
&(-1)^{1_{\rho\in \{1,4\}}+1_{\eta\in
 \{1,4\}}+1}\overline{C_{>0}^{-}((\rho,\bx,\s,x),(\eta,\by,\tau,y))}\\
&=C_{>0}^+((\eta,\by,\tau,y),(\rho,\bx,\s,x)).
\end{align*}
Substituting these equalities into
 \eqref{eq_half_filled_preliminary_UV} yields
\begin{align}
e^{-iQ_2(S_2(\bX))}
\overline{\widetilde{C_{>0}^{\delta}}(S_2(\bX))}=\widetilde{C_{>0}^{-\delta}}(\bX),\
 (\forall \bX\in I^2,\delta \in \{+,-\}).\label{eq_covariance_UV_half_filled_shift}
\end{align}
We see from the
 definition of Grassmann Gaussian integral and \eqref{eq_covariance_UV_half_filled_shift} that for any $\bX\in I^n$,
\begin{align*}
\int\psi_{\bX}d\mu_{C_{>0}^{\delta}}(\psi)&=e^{-\sum_{\bY\in
 I^2}\widetilde{C_{>0}^{\delta}}(\bY)\frac{\partial}{\partial
 \psi_{\bY}}}\psi_{\bX}\Big|_{\psi=0}\\
&=e^{-\sum_{\bY\in
 I^2}e^{iQ_2(\bY)}\widetilde{C_{>0}^{\delta}}(S_2^{-1}(\bY))\frac{\partial}{\partial
 \psi_{\bY}}}e^{-iQ_n(S_n(\bX))}
\psi_{S_n(\bX)}\Big|_{\psi=0}
\\
&=e^{-\sum_{\bY\in
 I^2}\overline{\widetilde{C_{>0}^{-\delta}}(\bY)}\frac{\partial}{\partial
 \psi_{\bY}}}e^{-iQ_n(S_n(\bX))}\psi_{S_n(\bX)}\Big|_{\psi=0}\\
&=\int
 e^{-iQ_n(S_n(\bX))}\psi_{S_n(\bX)}d\mu_{\overline{C_{>0}^{-\delta}}}(\psi),
\end{align*}
or 
\begin{align}
\int \psi_{\bX}d\mu_{\overline{C_{>0}^{\delta}}}(\psi)=\int e^{iQ_n(S_n(\bX))}\psi_{S_n(\bX)}d\mu_{C_{>0}^{-\delta}}(\psi)
=\int(\cR\psi)_{\bX}d\mu_{C_{>0}^{-\delta}}(\psi).\label{eq_shift_free_preparation}
\end{align}
Proposition \ref{prop_UV_integration}
 \eqref{item_UV_analytic_continuation} states that if $\sup_{\rho\in\cB}|U_{\rho}|$ is
 sufficiently small,
\begin{align*}
J^{\delta,0}(\psi)=\log\left(\int
 e^{-V^{\delta}(\psi+\psi^1)}d\mu_{C_{>0}^{\delta}}(\psi^1)\right).
\end{align*}
By \eqref{eq_shift_free_preparation}, 
\begin{align*}
\overline{J^{\delta,0}(\overline{\bU})}(\psi)&=\log\left(\int
 e^{-V^{\delta}(\psi+\psi^1)}d\mu_{\overline{C_{>0}^{\delta}}}(\psi^1)\right)\\
&=\log\left(\int
 e^{-V^{\delta}(\psi+\cR\psi^1)}d\mu_{C_{>0}^{-\delta}}(\psi^1)\right).
\end{align*}
Then, by \eqref{eq_original_interaction_shift} we obtain 
$$
\overline{J^{\delta,0}(\overline{\bU})}(\cR\psi)=J^{-\delta,0}(\bU)(\psi).
$$
Since $J^{-\delta,0}(\bU)$, $\overline{J^{\delta,0}(\overline{\bU})}(\cR\psi)$ are continuous in
 $\overline{D_{UV}}$ and analytic in $D_{UV}$ with $\bU$, 
the identity theorem and the continuity guarantee this equality for all
 $\bU\in\overline{D_{UV}}$. Therefore, we have
$$
\overline{J^0(\overline{\bU})}(\cR\psi)=J^0(\bU)(\psi),\ (\forall \bU\in\overline{D_{UV}}).
$$
\end{proof}

The aim of our IR integration is to find an analytic
continuation of 
$$
-\frac{1}{\beta L^2}\log\left(\int e^{J^0(\psi)}d\mu_{C_{\le 0}^{\infty}}(\psi)
\right)
$$
into a $(\beta,L,h)$-independent domain of $\bU$ around the origin. 
Here the covariance $C_{\le 0}^{\infty}:I_0^2\to\C$ is defined by
\eqref{eq_covariance_non_positive_index_infinity} with
$\phi(M_{UV}^{-2}\o^2)$ in place of $\chi(|\o|)$.
The next lemma
explains how this aim is achieved.
\begin{lemma}\label{lem_IR_analytic_continuation_expansion}
There exists a constant $c\in\R_{>0}$ independent of any parameter such that if
$M\ge c$, we can choose $J^l(\psi)\in \bigwedge \cV$
 $(l\in\{-1,-2,$ $\cdots, N_{\beta}-1\})$ so that 
the following statements hold true.
\begin{enumerate}
\item\label{item_input_IR_consistency}
There exist constants $c(M,c_w),c'(M,c_w)\in\R_{\ge 1}$ depending
     only on $M$, $c_w$ such that
\begin{align*}
&J^l(\psi)\in \cS(c_{IR},D_{IR})(l),\\
&(\forall l\in\{0,-1,\cdots,N_{\beta}-1\},\alpha\in\R_{>0}\text{ with
 }\alpha^2\ge cM^7),
\end{align*}
where $J^0(\psi)$ is the polynomial set in Lemma \ref{lem_input_IR_ok},
\begin{align*}
&c_{IR}:=c(M,c_w)f_{\bt}^{-1},\\
&D_{IR}:=\left\{(U_1,U_2,U_3,U_4)\in\C^4\ \Big|\ 
|U_{\rho}|< \frac{f_{\bt}^2}{c'(M,c_w)\alpha^4},\ (\forall \rho\in
 \cB)\right\}.
\end{align*}
\item\label{item_IR_necessary_positive_reasoning}
\begin{align*}
&\Re \{\det(I_4-(i\o I_4-E(\bk)-E_{l+1}(\o,\bk))^{-1}\chi_{\le
 l}(\o,\bk)W^l(\o,\bk))\}>0,\\
&(\forall (\o,\bk)\in \cM\times \G^*,l\in
 \{0,-1,\cdots,N_{\beta}\},\alpha\in\R_{>0}\text{ with }\alpha^2\ge cM^7,\\
&\quad\bU\in \overline{D_{IR}}),
\end{align*}
where $E_1(\o,\bk):=0$, $W^l(\o,\bk)$ and $E_{l+1}(\o,\bk)$
     $(l\in\{0,-1,\cdots,$ $N_{\beta}\})$ are derived
     from $J^l(\psi)$ by \eqref{eq_partial_dispersion_relation} and
     \eqref{eq_effective_dispersion_relation} respectively.
\item\label{item_IR_expansion}
There exists a constant $c(\beta,L,M)\in\R_{>0}$ depending only on $\beta$, $L$,
     $M$ such that if $\alpha \ge c(\beta,L,M)$ additionally holds,
$$
\int e^{J^0(\psi)}d\mu_{C_{\le
 0}^{\infty}}(\psi)\in\C\backslash \R_{\le 0}
$$
and
\begin{align}
&-\frac{1}{\beta L^2}\log\left(\int e^{J^0(\psi)}d\mu_{C_{\le
 0}^{\infty}}(\psi)\right)\label{eq_IR_expansion}\\
&=-\frac{1}{\beta
 L^2}\sum_{l=0}^{N_{\beta}-1}J_0^l\notag\\
&\quad -\sum_{l=0}^{N_{\beta}}\frac{2}{\beta
 L^2}\sum_{(\o,\bk)\in \cM\times \G^*}\notag\\
&\qquad\cdot\log\big(\det\big(I_4-(i\o
 I_4-E(\bk)-E_{l+1}(\o,\bk))^{-1}\chi_{\le l}(\o,\bk)W^l(\o,\bk)\big)\big),\notag\\
&(\forall \bU\in \overline{D_{IR}}).\notag
\end{align}
\item\label{item_IR_polynomial_difference}
Assume that \eqref{eq_beta_h_assumption} holds and $L\ge \beta_2$.
Then, 
\begin{align*}
&(J^l(\beta_1)(\psi),J^l(\beta_2)(\psi))\in
 \tilde{\cS}(c_{IR},D_{IR})(l),\\
&(\forall l\in
 \{0,-1,\cdots,N_{\beta_1}\},\alpha\in\R_{>0}\text{ with }\alpha^2\ge cM^7).
\end{align*}
\end{enumerate}
\end{lemma}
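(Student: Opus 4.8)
The plan is to prove Lemma \ref{lem_IR_analytic_continuation_expansion} by a single downward induction on the scale index $l$ (from $l=0$ down to $l=N_{\beta}-1$), in which at each step one simultaneously constructs $J^{l-1}(\psi)$, verifies the membership in $\cS(c_{IR},D_{IR})(l-1)$ and $\tilde{\cS}(c_{IR},D_{IR})(l-1)$, and records the factor appearing in the telescoping identity \eqref{eq_IR_expansion}. Concretely, given $J^j(\psi)\in\cS(c_{IR},D_{IR})(j)$ for $j=0,-1,\dots,l$, one forms the covariance $C_l:I_0^2\to\C$ via \eqref{eq_effective_covariance} using the quadratic kernels of $J^0,\dots,J^l$, and sets
\begin{align*}
F^{l-1}(\psi)&:=\int J^l(\psi+\psi^1)d\mu_{C_l}(\psi^1),\\
T^{l-1,(n)}(\psi)&:=\frac{1}{n!}\sum_{T\in\T_n}Ope(T,C_l)\prod_{j=1\atop order}^nJ^l(\psi^j+\psi)\Bigg|_{\psi^j=0\,(\forall j)},\\
J^{l-1}(\psi)&:=F^{l-1}(\psi)+\sum_{n=2}^{\infty}T^{l-1,(n)}(\psi).
\end{align*}
Strictly speaking one must first extract the quadratic part of $F^{l-1}+T^{l-1}$ and absorb it into the covariance at the next step; so the precise definition of $J^{l-1}$ is $J^{l-1}(\psi):=(F^{l-1}(\psi)+T^{l-1}(\psi))-[\text{quadratic part already put into }C_{l-1}]$, following the adaptive renormalization of \cite{P}, \cite{BGM}, \cite{GM} referred to in the introduction. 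The bookkeeping of which monomials of degree $2$ are reinserted into the covariance versus kept in $J^{l-1}$ is the principal structural subtlety; I expect this to require care but not new ideas.

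The core analytic input is Proposition \ref{prop_infrared_integration_general} and Proposition \ref{prop_infrared_integration_difference_general}, applied with the exponents $a_1=2$, $a_2=1$, $a_3=1$, $a_4=1/2$ (as announced in Remark \ref{rem_renormalizability} and in the discussion of $\cS(l)$). To invoke them at scale $l$ one must check their hypotheses for $C_l$: the determinant bound \eqref{eq_determinant_bound_infrared_general} and the decay bound \eqref{eq_decay_bound_infrared_general} follow from Lemma \ref{lem_covariance_infrared_bound} \eqref{item_covariance_infrared_bound} once $c_{IR}=c(M,c_w)f_{\bt}^{-1}$ is taken large enough (this pins down the constant $c(M,c_w)$ in the statement), while the hypothesis \eqref{eq_infrared_assumption_general} on the input $J^l$ is exactly the content of $J^l\in\cS(c_{IR},D_{IR})(l)$, i.e.\ \eqref{eq_IR_higher_order_bound}. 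Similarly, $\tilde{\cS}$-membership of the pair $(J^l(\beta_1),J^l(\beta_2))$ supplies \eqref{eq_infrared_assumption_difference_general}, and Lemma \ref{lem_covariance_infrared_bound_difference} supplies \eqref{eq_determinant_bound_infrared_difference_general}, \eqref{eq_decay_bound_infrared_difference_general}. The output inequalities \eqref{eq_infrared_general_result_0th}, \eqref{eq_infrared_general_result}, \eqref{eq_infrared_general_result_0th_difference}, \eqref{eq_infrared_general_result_difference} then give precisely the bound conditions \eqref{eq_IR_0th_bound}, \eqref{eq_IR_higher_order_bound}, \eqref{eq_IR_0th_difference}, \eqref{eq_IR_higher_order_difference} at scale $l-1$, provided $\alpha^2\ge cM^7$ so that the standing conditions \eqref{eq_parameters_assumption_IR_general} on $M,\alpha$ with these exponents are met; the power $M^7$ traces back to $a_1+a_2+a_4=7/2$ and the doubling in passing from $\alpha$ to $\alpha^2$. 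The analyticity/continuity condition \eqref{item_IR_set_analytic} propagates because $F^{l-1}$ and each $T^{l-1,(n)}$ are finite sums of products of $J^l$ and $C_l$, both continuous on $\overline{D_{IR}}$ and analytic on $D_{IR}$ (Lemma \ref{lem_covariance_infrared_bound} \eqref{item_covariance_infrared_analyticity}), and because \eqref{eq_IR_higher_order_bound} makes the tree sum converge uniformly on $\overline{D_{IR}}$.

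For the invariance conditions \eqref{item_IR_set_invariance} and \eqref{item_IR_set_complex_invariance} I would argue exactly as in the proof of Lemma \ref{lem_input_IR_ok}: by Lemma \ref{lem_covariance_infrared_bound} \eqref{item_covariance_infrared_symmetries} and \eqref{item_covariance_infrared_complex_symmetries} the covariance $C_l$ satisfies the covariant identities $\widetilde{C_l}(\bX)=e^{iQ_2(S_2(\bX))}\widetilde{C_l}(S_2(\bX))$ (resp.\ the complex-conjugate version) for each of the relevant pairs $(S,Q)$, and $J^l$ satisfies the corresponding invariance by induction hypothesis; hence Lemma \ref{lem_free_tree_invariance_general} transfers both kinds of invariance to $F^{l-1}$ and to every $T^{l-1,(n)}$, and therefore to $J^{l-1}$. (One checks that extracting the quadratic part respects these symmetries, which is automatic since the projection $\cP_2$ commutes with all the $\cR$-actions.) Part \eqref{item_IR_necessary_positive_reasoning} follows from the Neumann bound \eqref{eq_infrared_integrant_bound}: for $\bU\in\overline{D_{IR}}$ and $\alpha^2\ge cM^7$ one has $\|(i\o I_4-E(\bk)-E_{l+1}(\o,\bk))^{-1}\chi_{\le l}(\o,\bk)W^l(\o,\bk)\|_{4\times4}\le \tfrac12$ on the support of $\chi_{\le l}$, so the relevant determinant lies in the disc of radius $1$ about $1$, whose image under $\det(I_4-\cdot)$ has positive real part; then Part \eqref{item_IR_expansion} is obtained by iterating the identity $\int e^{J^l(\psi+\psi^1)}d\mu_{C_{l+1}'}(\psi^1)=e^{J^{l-1}(\psi)}\cdot(\text{explicit determinant factor})$, using \cite[Lemma C.2]{K3} and \cite[Proposition I.21]{FKT2} (justified by the positivity just established, after taking $\alpha\ge c(\beta,L,M)$ large), and telescoping from $l=0$ to $l=N_{\beta}$ exactly as in the UV case (cf.\ the proof of Proposition \ref{prop_UV_integration} \eqref{item_UV_analytic_continuation}). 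The main obstacle I anticipate is not any single estimate but the consistent set-up of the adaptive covariance update — making sure that the quadratic kernels fed into $C_l$ via $W^0,\dots,W^l$ are precisely those removed from $J^0,\dots,J^l$, that the decomposition $C_{\le 0}^{\infty}=\sum_l(\text{scale-}l\text{ piece})$ underlying \eqref{eq_IR_expansion} matches the cut-offs $\chi_l$ defined in Subsection \ref{subsec_cut_off_IR}, and that the error between the genuine covariance and the prototypical $C_l$ of Subsection \ref{subsec_covariance_IR} is harmless at every step.
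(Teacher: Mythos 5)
Your overall architecture (downward induction; $C_l$ from \eqref{eq_effective_covariance}; Propositions \ref{prop_infrared_integration_general} and \ref{prop_infrared_integration_difference_general} with $a_1=2$, $a_2=1$, $a_3=1$, $a_4=1/2$; Lemmas \ref{lem_covariance_infrared_bound}, \ref{lem_covariance_infrared_bound_difference} for the covariance hypotheses; Lemma \ref{lem_free_tree_invariance_general} for the symmetries; the Neumann bound for part \eqref{item_IR_necessary_positive_reasoning}) matches the paper. But there is a genuine gap exactly at the point you flag as ``bookkeeping'': the paper defines $F^{l-1}$, $T^{l-1,(n)}$ with the covariance $C_l$ and the \emph{input} $J^l(\psi)-J_0^l-J_2^l(\psi)$, and then sets $J^{l-1}:=F^{l-1}+T^{l-1}$ \emph{keeping} its quadratic part; the subtraction always happens on the input side of the next integration. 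Your first display feeds the full $J^l$ into the integral against $d\mu_{C_l}$, which double-counts the quadratic interaction (it is already resummed into $C_l$ through $W^l$) and, moreover, violates the standing hypothesis of Proposition \ref{prop_infrared_integration_general} that the input vanish in degrees $0$ and $2$ (the bound \eqref{eq_infrared_assumption_general} only controls degrees $\ge 4$, not $J_2^l$, whose norm is larger by a factor $M^{-2l}\alpha^{-2}$ relative to the degree-$4$ scaling). Your proposed fix — subtracting the quadratic part from the \emph{output} — would force $J_2^{l-1}=0$, hence $W^{l-1}=0$, so the covariance would never be updated below scale $0$ and part \eqref{item_IR_necessary_positive_reasoning}, which is stated in terms of $W^l$ derived from $J_2^l$ via \eqref{eq_partial_dispersion_relation}, would become vacuous. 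The constant part $J_0^l$ must likewise be stripped from the input and accumulated separately; it is what produces the first sum in \eqref{eq_IR_expansion}, and your sketch does not account for it.

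This is not merely organizational, because the telescoping identity \eqref{eq_IR_expansion} rests on an exact algebraic cancellation: in the paper (equation \eqref{eq_essence_renormalization}) the quadratic form $J_2^l-\tilde J_2^l$ combines with the inverse covariance $D_{\le l}^{\eps,-1}$ (built from $E_{l+1}$) to reproduce $C_{\le l}^{\eps,-1}$ (built from $E_l$), so that after the substitution only $\sum_{m\ge4}J_m^l$ remains as interaction and the ratio of the two Gaussian normalizations yields precisely the squared determinant factor $\det(I_4-(i\o I_4-E(\bk)-E_{l+1})^{-1}\chi_{\le l}W^l)^2$. With your input/output convention this cancellation does not occur, and the claimed identity $\int e^{J^l}d\mu = e^{J^{l-1}}\cdot(\text{determinant factor})$ would not hold. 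The remedy is simply to adopt the paper's convention: $J^{l-1}=F^{l-1}+T^{l-1}$ with input $J^l-J_0^l-J_2^l$, quadratic part retained in the output, and the extraction deferred to the next scale; with that correction the rest of your argument goes through as you describe.
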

\begin{proof}
In the following we will apply Proposition
 \ref{prop_infrared_integration_general},
 Proposition \ref{prop_infrared_integration_difference_general}
for $a_1=2$, $a_2=1$, $a_3=1$, $a_4=1/2$, 
Lemma \ref{lem_covariance_infrared_bound},
Lemma \ref{lem_covariance_infrared_bound_difference} 
and Lemma \ref{lem_input_IR_ok}. Note that there exists a constant
 $c\in\R_{>0}$ independent of any parameter such that for any
 $M,\alpha\in\R_{\ge 1}$ satisfying 
\begin{align}
M^{\frac{1}{2}}\ge c,\ \alpha\ge
 cM^{\frac{7}{2}},\label{eq_condition_for_many_lemmas} 
\end{align}
the claims of these propositions and lemmas hold.
We assume the condition \eqref{eq_condition_for_many_lemmas} during the
 proof. We can replace $c$ in \eqref{eq_condition_for_many_lemmas} by a
larger generic constant without altering the statements of this
 lemma. Such a replacement will be necessary in the proof of the
 claim \eqref{item_IR_necessary_positive_reasoning}.

In order to organize the argument, we introduce the sets $\cR(l)$ $(l\in
 \Z_{\le 0})$ of covariances
  as follows. For a constant $c_1\in\R_{\ge 1}$ and
 a domain $D_o(\subset \C^4)$ satisfying that $\overline{\bU}\in
 \overline{D_o}$ $(\forall \bU\in \overline{D_o})$, where $\overline{D_o}$
 is the closure of $D_o$, a function $C_o:I_0^2\to \C$ belongs to
 $\cR(c_1,D_o)(l)$ if and only if $C_o$ is parameterized by $\bU\in
 \overline{D_o}$ and satisfies the following conditions.
\begin{enumerate}[(i)]
\item $\bU\mapsto C_o(\bU)(\bX)$ is continuous in $\overline{D_o}$ and
      analytic in $D_o$ $(\forall \bX\in I_0^2)$.
\item  
\begin{align*}
&|\det(\<\bp_i,\bq_j\>_{\C^r}C_o(\bU)(X_i,Y_j))_{1\le i,j\le n}|\le
 (c_1 M^{2l})^n,\\
&(\forall r,n\in\N,\bp_i,\bq_i\in\C^r\text{ with }
\|\bp_i\|_{\C^r},\|\bq_i\|_{\C^r}\le 1,\\
&\ X_i,Y_i\in I_0\
 (i=1,2,\cdots,n),\bU\in \overline{D_o}).
\end{align*}
\item
\begin{align*}
\|\widetilde{C_o}(\bU)\|_{l-1,r}\le c_1M^{-l-rl},\ (\forall
      r\in\{0,1\},\bU\in \overline{D_o}).
\end{align*}
\item
Under the notation introduced in Subsection \ref{subsec_invariance_general},
\begin{align*}
\widetilde{C_o}(\bU)(\bX)=e^{iQ_2(S_2(\bX))}\widetilde{C_o}(\bU)(S_2(\bX)),\
 (\forall \bX\in I^2,\bU\in \overline{D_o}),
\end{align*}
for $S:I\to I$, $Q:I\to \R$ defined by \eqref{eq_particle_hole_maps},
      \eqref{eq_spin_maps}, \eqref{eq_spin_reflection_maps},
      \eqref{eq_translation_maps}, \eqref{eq_rotation_maps}
      respectively.
\item
\begin{align*}
\widetilde{C_o}(\bU)(\bX)=e^{-iQ_2(S_2(\bX))}\overline{\widetilde{C_o}(\overline{\bU})(S_2(\bX))},\
 (\forall \bX\in I^2,\bU\in \overline{D_o}),
\end{align*}
for $S:I\to I$, $Q:I\to \R$ defined by \eqref{eq_hermitian_maps},
      \eqref{eq_half_filled_maps} respectively.
\end{enumerate}
Here $\widetilde{C_o}:I^2\to \C$ is the anti-symmetric extension of
 $C_o$ defined as in \eqref{eq_anti_symmetric_extension_covariance}.
 We will also write $\cR(\beta)(c_1,D_o)(l)$ in place of $\cR(c_1,D_o)(l)$
when we want to indicate its $\beta$-dependency.

Let us inductively construct $J^l(\psi)\in\cS(l)$
 $(l=0,-1,\cdots,N_{\beta}-1)$, $C_j\in\cR(l)$
 $(l=0,-1,\cdots,N_{\beta})$. Let $c(M,c_w)$ be the maximum of the
 constants with the same notation appearing in Lemma
 \ref{lem_covariance_infrared_bound} and Lemma
 \ref{lem_covariance_infrared_bound_difference}. 
Recall that $c(M,c_w)$ depends only on $M$,
 $c_w$ and that the
 constant $c_0(\in\R_{\ge 1})$ appearing in Lemma \ref{lem_input_IR_ok} stems from
 Proposition \ref{prop_UV_integration} and depends only on $M$, since
 now $b$, $d$, $M_{UV}$, $E_2$ are fixed constants. As remarked in Remark
 \ref{rem_replacement_possible_UV_constant}, we can replace $c_0$ by
 $\max\{c(M,c_w),c_0\}f_{\bt}^{-1}$ in Proposition
 \ref{prop_UV_integration}. Accordingly Lemma \ref{lem_input_IR_ok}
 ensures that
\begin{align*}
J^0(\psi)\in\cS\big(\max\{c(M,c_w),c_0\}f_{\bt}^{-1},D\big)(0),
\end{align*}
and on the assumption \eqref{eq_beta_h_assumption},
\begin{align*}
(J^0(\beta_1)(\psi),J^0(\beta_2)(\psi))\in
 \tilde{\cS}\big(\max\{c(M,c_w),c_0\}f_{\bt}^{-1},D\big)(0),
\end{align*}
where 
\begin{align*}
D:=\Bigg\{(U_1,U_2,&U_3,U_4)\in\C^4\ \Bigg| \\
&  |U_{\rho}|<\frac{1}{c(\max\{c(M,c_w),c_0\}f_{\bt}^{-1}+c_0')^2\alpha^4},\ (\forall \rho\in\cB)\Bigg\}
\end{align*}
with the constant $c$ appearing in \eqref{eq_condition_for_many_lemmas}.
Now, set 
\begin{align*}
&c_{IR}:=\max\{c(M,c_w),c_0\}f_{\bt}^{-1},\\
&D_{IR}:=\Bigg\{(U_1,U_2,U_3,U_4)\in\C^4\ \Bigg| \\
&\qquad\qquad\qquad
 |U_{\rho}|<\frac{f_{\bt}^2}{c(\max\{c(M,c_w),c_0\}+c_0')^2\alpha^4},\ (\forall \rho\in\cB)\Bigg\}.
\end{align*}
Since $D_{IR}\subset D$, we have $J^0(\psi)\in\cS(c_{IR},D_{IR})(0)$, 
$(J^0(\beta_1)(\psi),J^0(\beta_2)(\psi))$ $\in \tilde{\cS}(c_{IR},D_{IR})(0)$.

Assume that $l\in \{0,-1,\cdots,N_{\beta}\}$ and  
\begin{align*}
J^j(\psi)\in \cS(c_{IR},D_{IR})(j),\ (\forall j\in\{0,-1,\cdots,l\}).
\end{align*}
Using $J^j(\psi)$ $(j=0,-1,\cdots,l)$, we define the covariance
 $C_l:I_0^2\to \C$ by \eqref{eq_effective_covariance}. It follows from
 Lemma \ref{lem_covariance_infrared_bound} and the inequality $c_{IR}\ge
 c(M,c_w)f_{\bt}^{-1}$ that $C_l\in\cR(c_{IR},D_{IR})(l)$.
Then, we define the Grassmann polynomials $F^{l-1}(\psi)$,
 $T^{l-1,(n)}(\psi)$ $(n\in\N_{\ge 2})$, $T^{l-1}(\psi)$,
 $J^{l-1}(\psi)\in\bigwedge \cV$ by
 \eqref{eq_inductive_polynomial_UV_general} with the covariance $C_l$
 and the input $J^l(\psi)-J_0^l-J_2^l(\psi)$. We can apply Proposition
 \ref{prop_infrared_integration_general} for $a_1=2$, $a_2=1$, $a_3=1$,
 $a_4=1/2$ and Lemma \ref{lem_free_tree_invariance_general} to prove
 that $J^{l-1}(\psi)\in \cS(c_{IR},D_{IR})(l-1)$.
The continuity and the analyticity of $J^{l-1}(\psi)$ with $\bU$ can be
 proved by the same argument as in the proof of Proposition
 \ref{prop_UV_integration} \eqref{item_UV_analyticity}.
Thus, we have inductively constructed 
\begin{align*}
&J^l(\psi)\in\cS(c_{IR},D_{IR})(l),\ (l=0,-1,\cdots,N_{\beta}-1),\\
&C_l\in\cR(c_{IR},D_{IR})(l),\ (l=0,-1,\cdots,N_{\beta}).
\end{align*}
Thus, the claim \eqref{item_input_IR_consistency} holds with
 $c'(M,c_w):=c(\max\{c(M,c_w),c_0\}+c_0')^2$. 
 Before proving the claims
 \eqref{item_IR_necessary_positive_reasoning},
 \eqref{item_IR_expansion}, let us show that the claim
 \eqref{item_IR_polynomial_difference} holds true.

\eqref{item_IR_polynomial_difference}: Define the subset
 $\tilde{\cR}(c_{IR},D_{IR})(l)$ of
 $\cR(\beta_1)(c_{IR},D_{IR})(l)\times$\\
$\cR(\beta_2)(c_{IR},D_{IR})(l)$
 $(l=0,-1,\cdots,N_{\beta_1})$ as
 follows. $(C_o(\beta_1),C_o(\beta_2))\in
 \cR(\beta_1)(c_{IR},D_{IR})(l)\times \cR(\beta_2)(c_{IR},D_{IR})(l)$
belongs to $\tilde{\cR}(c_{IR},D_{IR})(l)$ if and only if 
\begin{enumerate}[(i)]
\item
 \begin{align*}
&|\det(\<\bp_i,\bq_j\>_{\C^r}C_{o}(\bU)(\beta_1)(R_{\beta_1}(X_i,Y_j)))_{1\le
  i,j\le n}\\
&\quad-
\det(\<\bp_i,\bq_j\>_{\C^r}C_{o}(\bU)(\beta_2)(R_{\beta_2}(X_i,Y_j)))_{1\le
  i,j\le n}|\notag\\
&\le \beta_1^{-\frac{1}{2}}M^{-l}(c_{IR} M^{2l})^n,\\
&(\forall r,n\in\N,\bp_i,\bq_i\in\C^r\text{ with }
\|\bp_i\|_{\C^r},\|\bq_i\|_{\C^r}\le 1,\\
&\quad X_i,Y_i\in \hat{I}_0\
 (i=1,2,\cdots,n),\bU\in\overline{D_{IR}}).
\end{align*}
\item
\begin{align*}
|\widetilde{C_{o}}(\bU)(\beta_1)-\widetilde{C_{o}}(\bU)(\beta_2)|_l\le
 \beta_1^{-\frac{1}{2}}{c_{IR}}M^{-2l},\ (\forall \bU\in \overline{D_{IR}}),
\end{align*}
where $\widetilde{C_{o}}(\beta_j):I(\beta_j)^2\to \C$ is the
anti-symmetric extension of $C_{o}(\beta_j)$ defined as in
\eqref{eq_anti_symmetric_extension_covariance} for $j=1,2$.
\end{enumerate}

We have already seen that
 $(J^0(\beta_1)(\psi),J^0(\beta_2)(\psi))\in\tilde{\cS}(c_{IR},D_{IR})(0)$.
By Lemma \ref{lem_covariance_infrared_bound_difference} and the
 inequality $c_{IR}\ge c(M,c_w)f_{\bt}^{-1}$, 
$(C_0(\beta_1),C_0(\beta_2))\in \tilde{\cR}(c_{IR},D_{IR})(0)$.

Assume that $l\in \{0,-1,\cdots,N_{\beta_1}\}$ and 
\begin{align*}
&(J^j(\beta_1)(\psi),J^j(\beta_2)(\psi))\in
 \tilde{\cS}(c_{IR},D_{IR})(j),\\
&(C_j(\beta_1),C_j(\beta_2))\in\tilde{\cR}(c_{IR},D_{IR})(j),\ (\forall j\in \{0,-1,\cdots,l\}).
\end{align*}
Then, we can apply Proposition
 \ref{prop_infrared_integration_difference_general} for $a_1=2$,
 $a_2=1$, $a_3=1$, $a_4=1/2$ to conclude that
$$
(J^{l-1}(\beta_1)(\psi),J^{l-1}(\beta_2)(\psi))\in\tilde{\cS}(c_{IR},D_{IR})(l-1).
$$
Moreover, if $l\ge N_{\beta_1}+1$, we can again apply Lemma
 \ref{lem_covariance_infrared_bound_difference} to prove that 
$$
(C_{l-1}(\beta_1),C_{l-1}(\beta_2))\in \tilde{\cR}(c_{IR},D_{IR})(l-1).
$$
Therefore, the claim \eqref{item_IR_polynomial_difference} has been
 proved by induction with $l$.

\eqref{item_IR_necessary_positive_reasoning}:
Take any $(\o,\bk)\in\R^3$ satisfying $|\o|\ge \pi/\beta$ and $\chi_{\le
 l}(\o,\bk)\neq 0$. By the assumption $L\ge \beta$ and Lemma \ref{lem_beta_inverse_upper_bound}, the condition
 \eqref{eq_beta_L_temporal_assumption} holds.
By Lemma \ref{lem_bound_extended_one_scale_self_energy}
 \eqref{item_bound_extended_one_scale_self_energy},
 \eqref{eq_infrared_integrant_bound} and the inequalities
 $c_{IR}^{-1}\le f_{\bt}$, $f_{\bt}\le  1$,
\begin{align*}
&\|(i\o I_4-E(\bk)-E_{l+1}(\o,\bk))^{-1}\chi_{\le
 l}(\o,\bk)\widehat{W}^l(\o,\bk)\|_{4\times
 4}\\
&\le \sum_{j=l}^{N_{\beta}}\chi_j(\o,\bk)\|(i\o
 I_4-E(\bk)-E_{l+1}(\o,\bk))^{-1}\|_{4\times
 4}\|\widehat{W}^l(\o,\bk)\|_{4\times 4}\notag\\
&\le \sum_{j=l}^{N_{\beta}}\chi_j(\o,\bk)M^{-j}c\cdot
 c_{IR}^{-1}f_{\bt}^{-\frac{1}{2}}M^{\frac{1}{2}l+j+1}\alpha^{-2}\notag\\
&\le
 c f_{\bt}\cdot f_{\bt}^{-\frac{1}{2}}
M^{\frac{1}{2}l+1}\alpha^{-2}\le cM\alpha^{-2}.
\end{align*}
Since the left-hand side of the above inequality vanishes if 
$\chi_{\le l}(\o,\bk)=0$, we eventually have
\begin{align}
&\|(i\o I_4-E(\bk)-E_{l+1}(\o,\bk))^{-1}\chi_{\le
 l}(\o,\bk)\widehat{W}^l(\o,\bk)\|_{4\times
 4}\label{eq_IR_determinant_pre_bound}\\
&\le cM\alpha^{-2},\ (\forall (\o,\bk)\in\R^3\text{ with }|\o|\ge
 \pi/\beta).\notag
\end{align}
By the condition \eqref{eq_condition_for_many_lemmas},
\begin{align}
&\left|\det\big(I_4-(i\o I_4-E(\bk)-E_{l+1}(\o,\bk))^{-1}\chi_{\le
 l}(\o,\bk)\widehat{W}^l(\o,\bk)\big)-1\right|
 \label{eq_IR_determinant_pre_another}\\
&\le c M\alpha^{-2},\ (\forall (\o,\bk)\in\R^3\text{ with }|\o|\ge
 \pi/\beta).\notag
\end{align}
The claim follows from this inequality and the replacement of the
 constant $c$ in \eqref{eq_condition_for_many_lemmas} by a
 larger constant if necessary. 

\eqref{item_IR_expansion}: By \eqref{eq_IR_0th_bound} and
 \eqref{eq_IR_higher_order_bound}, 
\begin{align*}
&|J_0^l|\le \frac{N}{h}\alpha^{-3},\\
&\frah^m\sum_{\bX\in I^m}|J_m^l(\bX)|\le
 \frac{N}{h}c_{IR}^{-\frac{m}{2}}M^{-lm}\alpha^{-m},\ (\forall m\in \N).
\end{align*}
Using these inequalities, we have that
\begin{align*}
&|e^{J_0^l}-1|\le e^{\frac{N}{h}\alpha^{-3}}-1.\\
&\left|\int e^{z\sum_{m=4}^NJ_m^l(\psi)}d\mu_{C_l}(\psi)-1\right|\\
&\le
 \sum_{n=1}^{\infty}\frac{1}{n!}\prod_{j=1}^n\Bigg(2\sum_{m_j=4}^N\frah^{m_j}\sum_{\bX\in I^{m_j}}|J_{m_j}^l(\bX)|\Bigg)(c_{IR}M^{2l})^{\frac{1}{2}\sum_{j=1}^nm_j}\\
&\le \sum_{n=1}^{\infty}\frac{1}{n!}\Bigg(2\frac{N}{h}\sum_{m=4}^N\alpha^{-m}
\Bigg)^n\le e^{2\frac{N}{h}\frac{\alpha^{-4}}{1-\alpha^{-1}}}-1,\ (\forall z\in
 \C\text{ with }|z|<2).
\end{align*}
The above inequalities imply that
\begin{align}
&\Re e^{J_0^l}>0,\ (\forall l\in\{0,-1,\cdots,N_{\beta}-1\}),\label{eq_IR_real_part_positive}\\&\Re\int e^{z
 \sum_{m=4}^NJ_m^l(\psi)}d\mu_{C_l}(\psi)>0,\notag\\
&(\forall l\in \{0,-1,\cdots,N_{\beta}\},z\in\C\text{ with
 }|z|<2),\notag
\end{align}
if $\alpha$ is larger than or equal to a constant $c(\beta,L)\in\R_{>0}$ depending only on $\beta$ and $L$. We can especially see from
 \eqref{eq_IR_real_part_positive} that
$$
\log\Bigg(\int e^{z\sum_{m=4}^NJ_m^l(\psi+\psi^1)}d\mu_{C_l}(\psi^1)\Bigg)
$$
is analytic with $z$ in $\{z\in \C\ |\ |z|< 2\}$, and thus
\begin{align}
&\log\Bigg(\int
 e^{\sum_{m=4}^NJ_m^l(\psi+\psi^1)}d\mu_{C_l}(\psi^1)\Bigg)\label{eq_infrared_RG_justification}\\
&=\sum_{n=1}^{\infty}\frac{1}{n!}\left(\frac{d}{dz}\right)^n\log\Bigg(\int
 e^{z\sum_{m=4}^NJ_m^l(\psi+\psi^1)}d\mu_{C_l}(\psi^1)\Bigg)\Bigg|_{z=0}\notag\\
&=J^{l-1}(\psi),\ (\forall l\in \{0,-1,\cdots,N_{\beta}\}),\notag
\end{align}
if $\alpha\ge c(\beta,L)$.

Define the covariances $C_{\le l}$, $D_{\le l}:I_0^2\to \C$
 $(l=0,-1,\cdots,N_{\beta})$ by 
\begin{align*}
&C_{\le l}(\cdot \bx\s x,\cdot \by\tau y):=\frac{\delta_{\s,\tau}}{\beta L^2}\sum_{(\o,\bk)\in \cM_h\times
 \G^*}e^{i\<\bk,\bx-\by\>}e^{i\o(x-y)}\\
&\qquad\qquad\qquad\qquad\qquad\cdot \chi_{\le l}(\o,\bk)(i\o I_4-E(\bk)-E_{l}(\o,\bk))^{-1},\\&D_{\le l}(\cdot \bx\s x,\cdot \by\tau y):=\frac{\delta_{\s,\tau}}{\beta L^2}\sum_{(\o,\bk)\in \cM_h\times
 \G^*}e^{i\<\bk,\bx-\by\>}e^{i\o(x-y)}\\
&\qquad\qquad\qquad\qquad\qquad\cdot \chi_{\le l}(\o,\bk)(i\o I_4-E(\bk)-E_{l+1}(\o,\bk))^{-1},\end{align*}
where $E_1(\o,\bk):=0$. By \eqref{eq_infrared_integrant_bound}, $C_{\le
 l}$, $D_{\le l}$ are well-defined.
Note that by \eqref{eq_cut_off_function_basic_sum},
$D_{\le 0}$ is equal to $C_{\le
 0}^{\infty}$. Moreover, 
\begin{align}
C_l(\bX)+D_{\le l-1}(\bX)=C_{\le l}(\bX),\ (\forall \bX\in
 I_0^2,l\in\{0,-1,\cdots,N_{\beta}+1\}).\label{eq_infrared_covariance_shift}
\end{align}
Introduce the functions $J_{0,2}^l:I_0^2\to \C$ $(l\in
 \{0,-1,\cdots,N_{\beta}\})$ by
\begin{align*}
&J_{0,2}^l(\rho\bx\s x,\eta\by\tau y)\\
&:=\frac{\delta_{\s,\tau}}{\beta
 L^2}\sum_{(\o,\bk)\in\cM_h\times\G^*}1_{\chi_{\le l}(\o,\bk)=0}e^{i\<\bk,\bx-\by\>}e^{i\o(x-y)}W^l(\o,\bk)(\rho,\eta).
\end{align*}
Then, let us define $\tilde{J}_{2}^l(\psi)\in\bigwedge \cV$ by
$$
\tilde{J}_{2}^l(\psi):=\frah^2\sum_{X,Y\in I_0}J_{0,2}^l(X,Y)\psi_{X}\opsi_{Y}.
$$
Since 
\begin{align*}
&\sum_{X\in I_0}J_{0,2}^l(X,Y)D_{\le l}(Z,X)=\sum_{X\in
 I_0}J_{0,2}^l(X,Y)D_{\le l}(X,Z)=0,\\
&(\forall Y,Z\in I_0,l\in \{0,-1,\cdots,N_{\beta}\}),
\end{align*}
we can derive from the definition of the Grassmann Gaussian integral that
\begin{align}
\int e^{J^l(\psi)}d\mu_{D_{\le l}}(\psi)=\int
 e^{J^l(\psi)-\tilde{J}_{2}^l(\psi)}d\mu_{D_{\le
 l}}(\psi),\ (\forall l\in \{0,-1,\cdots,N_{\beta}\}).\label{eq_irrelevant_subtraction}
\end{align}
To approximate $C_{\le l}$, $D_{\le l}$ by invertible matrices, let us
 take $\eps\in\R_{>0}$. Then, define the covariances $C_{\le
 l}^{\eps}$, $D_{\le l}^{\eps}:I_0^2\to \C$
 $(l\in\{0,-1,\cdots,N_{\beta}\})$ by
\begin{align*}
&C_{\le l}^{\eps}(\cdot\bx\s x,\cdot \by\tau y)\\
&:=C_{\le l}(\cdot\bx\s x,\cdot \by\tau y)+
\frac{\delta_{\s,\tau}}{\beta L^2}\sum_{(\o,\bk)\in \cM_h\times
 \G^*}e^{i\<\bk,\bx-\by\>}e^{i\o(x-y)}1_{\chi_{\le l}(\o,\bk)= 0}\eps I_4,\\
&D_{\le l}^{\eps}(\cdot\bx\s x,\cdot \by\tau y)\\
&:=D_{\le l}(\cdot\bx\s x,\cdot \by\tau y)+\frac{\delta_{\s,\tau}}{\beta L^2}\sum_{(\o,\bk)\in \cM_h\times
 \G^*}e^{i\<\bk,\bx-\by\>}e^{i\o(x-y)}1_{\chi_{\le l}(\o,\bk)= 0}\eps I_4.
\end{align*}
Let $C_{\le l}^{\eps,-1}$, $D_{\le l}^{\eps,-1}:I_0^2\to\C$ be the
 inverse matrix of $C_{\le l}^{\eps}$, $D_{\le l}^{\eps}$ respectively.
We can see that
\begin{align*}
&C_{\le l}^{\eps,-1}(\cdot\bx\s x,\cdot \by\tau y)\\
&=\frac{\delta_{\s,\tau}}{\beta L^2h^2}\sum_{(\o,\bk)\in \cM_h\times
 \G^*}e^{i\<\bk,\bx-\by\>}e^{i\o(x-y)}\\
&\quad\cdot\big(1_{\chi_{\le l}(\o,\bk)\neq 0} \chi_{\le l}(\o,\bk)^{-1}(i\o
 I_4-E(\bk)-E_{l}(\o,\bk))+1_{\chi_{\le l}(\o,\bk)= 0}\eps^{-1} I_4\big),\\
&D_{\le l}^{\eps,-1}(\cdot\bx\s x,\cdot \by\tau y)\\
&=\frac{\delta_{\s,\tau}}{\beta L^2h^2}\sum_{(\o,\bk)\in \cM_h\times
 \G^*}e^{i\<\bk,\bx-\by\>}e^{i\o(x-y)}\\
&\quad\cdot\big(1_{\chi_{\le l}(\o,\bk)\neq 0} \chi_{\le l}(\o,\bk)^{-1}(i\o
 I_4-E(\bk)-E_{l+1}(\o,\bk))+1_{\chi_{\le l}(\o,\bk)= 0}\eps^{-1} I_4\big).
\end{align*}
By Lemma
 \ref{lem_properties_one_scale_self_energy}
 \eqref{item_quadratic_term_characterization} and
 \eqref{eq_extended_dispersion_discrete} we have
\begin{align}
&J_2^l(\psi)-\tilde{J}_{2}^l(\psi)-\sum_{X,Y\in I_0}D_{\le
 l}^{\eps,-1}(X,Y)\psi_X\opsi_Y=-\sum_{X,Y\in I_0}C_{\le
 l}^{\eps,-1}(X,Y)\psi_X\opsi_Y,
\label{eq_essence_renormalization}\\
&(\forall l\in\{0,-1,\cdots,N_{\beta}\}).\notag
\end{align}

Here let us recall some basics of Grassmann integration. We can number
 each element of $I_0$ so that $I_0=\{X_j\}_{j=1}^{8L^2\beta h}$. We 
 define the linear map 
$\int\cdot d\psi d\opsi:\bigwedge(\cV\oplus \cV^1)\to \bigwedge
 \cV^1$
as follows. For any $f(\psi^1)\in\bigwedge \cV^1$,
\begin{align*}
&\int f(\psi^1)\opsi_{X_1}\opsi_{X_2}\cdots \opsi_{X_{8L^2\beta
 h}}\psi_{X_1}\psi_{X_2}\cdots \psi_{X_{8L^2\beta h}}d\psi
 d\opsi:=f(\psi^1),\\
&\int f(\psi^1)\psi_{\bX}d\psi d\opsi:=0,\ (\forall \bX\in I^m,m\neq
 16L^2\beta h).
\end{align*}
Then, for any $g\in \bigwedge(\cV\oplus \cV^1)$, $\int g d\psi d\opsi(\in
 \bigwedge \cV^1)$ is uniquely determined by linearity and anti-symmetry.
Let $C_o:I_0^2\to \C$ be an invertible covariance matrix. By taking the
 Grassmann variables $a_1,a_2,\cdots,a_D$ to be
 $\opsi_{X_1},\opsi_{X_2},\cdots,\opsi_{X_{8L^2\beta
 h}}$,$\psi_{X_1},\psi_{X_2},$ $\cdots,\psi_{X_{8L^2\beta h}}$ and the skew
 symmetric matrix $S$ to be
$$
\left(\begin{array}{cc}0 & C_o \\ -C_o^t & 0\end{array}\right)
$$
in \cite[\mbox{Lemma I.10}]{FKT2} we obtain the equality that 
\begin{align*}
&\int e^{\sum_{X\in
 I_0}(\opsi_X^1\opsi_X+\psi_X^1\psi_X)}e^{-\sum_{X,Y\in
 I_0}C_o^{-1}(X,Y)\psi_X\opsi_Y}d\psi d\opsi\\
&\quad \cdot\Big\slash \int e^{-\sum_{X,Y\in
 I_0}C_o^{-1}(X,Y)\psi_X\opsi_Y}d\psi d\opsi\\
&=e^{-\sum_{X,Y\in
 I_0}C_o(X,Y)\opsi_X^1\psi_Y^1}.
\end{align*}
By applying the left derivative 
$$
\frac{\partial}{\partial
 \opsi_{X_{i_1}}^1}\cdots\frac{\partial}{\partial
 \opsi_{X_{i_l}}^1}\frac{\partial}{\partial \psi_{X_{j_m}}^1}\cdots \frac{\partial}{\partial \psi_{X_{j_1}}^1}
$$
to both sides of the above equality and comparing the constant terms we
 obtain that
\begin{align*}
&\int \opsi_{X_{i_1}}\cdots \opsi_{X_{i_l}}\psi_{X_{j_m}}\cdots \psi_{X_{j_1}}
e^{-\sum_{X,Y\in
 I_0}C_o^{-1}(X,Y)\psi_X\opsi_Y}d\psi d\opsi\\
&\quad \cdot\Big\slash \int e^{-\sum_{X,Y\in
 I_0}C_o^{-1}(X,Y)\psi_X\opsi_Y}d\psi d\opsi\\
&=\left\{\begin{array}{ll}\det(C_o(X_{i_p},X_{j_q}))_{1\le p,q\le
   l}&\text{if }l=m,\\
0&\text{if }l\neq m
\end{array}
\right.\\
&=\int \opsi_{X_{i_1}}\cdots \opsi_{X_{i_l}}\psi_{X_{j_m}}\cdots \psi_{X_{j_1}}d\mu_{C_o}(\psi).
\end{align*}
By linearity,
\begin{align}
&\int f(\psi)d\mu_{C_o}(\psi)=\int f(\psi)
e^{-\sum_{X,Y\in
 I_0}C_o^{-1}(X,Y)\psi_X\opsi_Y}d\psi d\opsi\label{eq_grassmann_gaussian_integral_detailed}\\
&\qquad\qquad\qquad\qquad\cdot\Big\slash \int e^{-\sum_{X,Y\in
 I_0}C_o^{-1}(X,Y)\psi_X\opsi_Y}d\psi d\opsi,\notag\\
&(\forall
 f(\psi)\in\bigwedge \cV).\notag
\end{align}
Note that 
\begin{align} 
\int e^{-\sum_{X,Y\in
 I_0}C_o^{-1}(X,Y)\psi_X\opsi_Y}d\psi d\opsi=(-1)^{8L^2 \beta h (8
 L^2\beta h-1)/2}\det C_o^{-1}=\det
 C_o^{-1}.\label{eq_grassmann_gaussian_determinant}
\end{align}

By combining \eqref{eq_irrelevant_subtraction},
\eqref{eq_essence_renormalization}
  with the general equalities
 \eqref{eq_grassmann_gaussian_integral_detailed},
 \eqref{eq_grassmann_gaussian_determinant} we observe that
\begin{align}
&\int e^{J^0(\psi)}d\mu_{C_{\le 0}^{\infty}}(\psi)=\int
 e^{J^0(\psi)}d\mu_{D_{\le 0}}(\psi)
=\int e^{J^0(\psi)-\tilde{J}^0_2(\psi)}d\mu_{D_{\le 0}}(\psi)
\label{eq_IR_scheme_initial}\\
&=\lim_{\eps\searrow 0}
\int e^{J^0(\psi)-\tilde{J}_2^0(\psi)}
e^{-\sum_{X,Y\in
 I_0}D_{\le 0}^{\eps,-1}(X,Y)\psi_X\opsi_Y}d\psi d\opsi\notag\\
&\qquad\qquad \cdot\Big\slash \int e^{-\sum_{X,Y\in
 I_0}D_{\le 0}^{\eps,-1}(X,Y)\psi_X\opsi_Y}d\psi d\opsi\notag\\
&=e^{J_0^0}\lim_{\eps\searrow 0}\int e^{\sum_{m=4}^NJ^0_m(\psi)}
e^{-\sum_{X,Y\in
 I_0}C_{\le 0}^{\eps,-1}(X,Y)\psi_X\opsi_Y}d\psi d\opsi\notag\\
&\qquad\qquad \cdot\Big\slash \int e^{-\sum_{X,Y\in
 I_0}D_{\le 0}^{\eps,-1}(X,Y)\psi_X\opsi_Y}d\psi d\opsi\notag\\
&=e^{J_0^0}\lim_{\eps\searrow 0}\int 
e^{-\sum_{X,Y\in
 I_0}C_{\le 0}^{\eps,-1}(X,Y)\psi_X\opsi_Y}d\psi d\opsi\notag\\
&\qquad\qquad \cdot\Big\slash \int e^{-\sum_{X,Y\in
 I_0}D_{\le 0}^{\eps,-1}(X,Y)\psi_X\opsi_Y}d\psi d\opsi\notag\\
&\quad\cdot \int  e^{\sum_{m=4}^NJ^0_m(\psi)}d\mu_{C_{\le
 0}^{\eps}}(\psi)\notag\\
&=e^{J_0^0}\lim_{\eps\searrow 0}\notag\\
&\quad\cdot\prod_{(\o,\bk)\in\cM_h\times \G^*}
\big(\det\big(1_{\chi_{\le 0}(\o,\bk)\neq 0} \chi_{\le 0}(\o,\bk)^{-1}(i\o
 I_4-E(\bk)-E_{0}(\o,\bk))\notag\\
&\qquad\qquad\qquad\qquad\quad+1_{\chi_{\le 0}(\o,\bk)= 0}\eps^{-1} I_4\big)\big)^2\notag\\
&\qquad\qquad \cdot\Big\slash \prod_{(\o,\bk)\in\cM_h\times \G^*}\big(\det\big(1_{\chi_{\le 0}(\o,\bk)\neq 0} \chi_{\le 0}(\o,\bk)^{-1}(i\o
 I_4-E(\bk))\notag\\
&\qquad\qquad\qquad\qquad\qquad\qquad\quad+1_{\chi_{\le 0}(\o,\bk)= 0}\eps^{-1} I_4\big)\big)^2\notag\\
&\quad\cdot \int  e^{\sum_{m=4}^NJ^0_m(\psi)}d\mu_{C_{\le
 0}^{\eps}}(\psi)\notag\\
&=e^{J_0^0}\prod_{(\o,\bk)\in\cM_h\times \G^*\atop\text{with }\chi_{\le
 0}(\o,\bk)\neq 0}
\big(\det\big(I_4-(i\o I_4-E(\bk))^{-1}\chi_{\le
 0}(\o,\bk)W^0(\o,\bk)\big)\big)^2\notag\\
&\quad\cdot \int  e^{\sum_{m=4}^NJ^0_m(\psi)}d\mu_{C_{\le
 0}}(\psi)\notag\\
&=e^{J_0^0}\prod_{(\o,\bk)\in\cM_h\times \G^*\atop\text{ with }\chi_{\le
 0}(\o,\bk)\neq 0}
\big(\det\big(I_4-(i\o I_4-E(\bk))^{-1}\chi_{\le
 0}(\o,\bk)W^0(\o,\bk)\big)\big)^2\notag\\
&\quad\cdot \int  e^{J^{-1}(\psi)}d\mu_{D_{\le
 -1}}(\psi).\notag
\end{align}
In the last line of \eqref{eq_IR_scheme_initial} we did the following
 transformation, which can be justified by
 \eqref{eq_IR_real_part_positive}, \eqref{eq_infrared_RG_justification},
\eqref{eq_infrared_covariance_shift}, \cite[\mbox{Proposition I.21}]{FKT2}
 and \cite[\mbox{Lemma C.2}]{K3}.
\begin{align*}
\int e^{\sum_{m=4}^NJ_m^0(\psi)}d\mu_{C_{\le 0}}(\psi)&=\int\int
 e^{\sum_{m=4}^NJ_m^0(\psi+\psi^1)}d\mu_{C_0}(\psi^1)d\mu_{D_{\le
 -1}}(\psi)\\
&=\int
 e^{\log(\int e^{\sum_{m=4}^NJ_m^0(\psi+\psi^1)}d\mu_{C_0}(\psi^1))}d\mu_{D_{\le -1}}(\psi)\\
&=\int e^{J^{-1}(\psi)}d\mu_{D_{\le -1}}(\psi).
\end{align*}
By repeating this procedure and using the fact that $C_{\le
 N_{\beta}}=C_{N_{\beta}}$ we obtain the following.
\begin{align}
&\int e^{J^0(\psi)}d\mu_{C_{\le 0}^{\infty}}(\psi)\label{eq_IR_expansion_inside}\\
&= e^{J_0^0+J_0^{-1}+\cdots+J_0^{N_{\beta}+1}}\prod_{l=0}^{N_{\beta}+1}
\Bigg(\prod_{(\o,\bk)\in\cM_h\times \G^*\atop\text{ with }\chi_{\le
 l}(\o,\bk)\neq 0}\notag\\
&\qquad\cdot\det\big(I_4-(i\o
 I_4-E(\bk)-E_{l+1}(\o,\bk))^{-1}\chi_{\le
 l}(\o,\bk)W^l(\o,\bk)\big)^2\Bigg)\notag\\
&\quad \cdot \int e^{J^{N_{\beta}}(\psi)}d\mu_{D_{\le N_{\beta}}}(\psi)\notag\\
&=
 e^{J_0^0+J_0^{-1}+\cdots+J_0^{N_{\beta}}}\prod_{l=0}^{N_{\beta}}\Bigg(\prod_{(\o,\bk)\in\cM_h\times \G^*\atop\text{ with }\chi_{\le
 l}(\o,\bk)\neq 0}\notag\\
&\qquad\cdot\det\big(I_4-(i\o
 I_4-E(\bk)-E_{l+1}(\o,\bk))^{-1}\chi_{\le
 l}(\o,\bk)W^l(\o,\bk)\big)^2\Bigg)\notag\\
&\quad \cdot \int e^{\sum_{m=4}^NJ^{N_{\beta}}_m(\psi)}d\mu_{C_{\le
 N_{\beta}}}(\psi)\notag\\
&=
 e^{J_0^0+J_0^{-1}+\cdots+J_0^{N_{\beta}-1}}\prod_{l=0}^{N_{\beta}}\Bigg(\prod_{(\o,\bk)\in\cM_h\times \G^*\atop\text{ with }\chi_{\le
 l}(\o,\bk)\neq 0}\notag\\
&\qquad\cdot\det\big(I_4-(i\o
 I_4-E(\bk)-E_{l+1}(\o,\bk))^{-1}\chi_{\le
 l}(\o,\bk)W^l(\o,\bk)\big)^2\Bigg).\notag
\end{align}

It follows from \eqref{eq_IR_0th_bound} that
\begin{align}
\left|\sum_{l=0}^{N_{\beta}-1}J_0^l\right|\le\frac{N}{h}c\alpha^{-3},\ 
\left|e^{\sum_{l=0}^{N_{\beta}-1}J_0^l}-1\right|\le
 e^{\frac{N}{h}c \alpha^{-3}}-1.\label{eq_IR_exponential_part_real_positive_pre}
\end{align}
Set 
\begin{align*}
&K:=\{(l,\o,\bk,\s)\in \{0,-1,\cdots,N_{\beta}\}\times \cM_h\times
 \G^*\times \spin\ |\\
&\qquad\qquad \ 
 \chi_{\le l}(\o,\bk)\neq 0\},\\
&K':=\{(l,\o,\bk,\s)\in \{0,-1,\cdots,N_{\beta}\}\times \cM\times
 \G^*\times \spin\ |\\
&\qquad\qquad \ \phi(M_{UV}^{-2}\o^2)\neq 0\}.
\end{align*}
Using \eqref{eq_IR_determinant_pre_bound} and \eqref{eq_IR_determinant_pre_another}, we see that for any
 $Q\subset K$ with $Q\neq \emptyset$,
\begin{align}
&\Bigg|\prod_{(l,\o,\bk,\s)\in Q}\det\big(I_4-(i\o
 I_4-E(\bk)-E_{l+1}(\o,\bk))^{-1}\chi_{\le
 l}(\o,\bk)W^l(\o,\bk)\big)-1\Bigg|\label{eq_IR_determinant_part_real_positive_pre}\\
&=\Bigg|\prod_{(l,\o,\bk,\s)\in Q}\det\big(I_4-(i\o
 I_4-E(\bk)-E_{l+1}(\o,\bk))^{-1}\chi_{\le
 l}(\o,\bk)W^l(\o,\bk)\big)\notag\\
&\qquad -\prod_{(l,\o,\bk,\s)\in Q}1\Bigg|\notag\\
&\le cM\alpha^{-2}\sharp Q(1+cM\alpha^{-2})^{\sharp Q-1}\le cM\alpha^{-2}\sharp K'(1+cM\alpha^{-2})^{\sharp K'-1}.\notag
\end{align}
By \eqref{eq_IR_exponential_part_real_positive_pre} and
 \eqref{eq_IR_determinant_part_real_positive_pre},
\begin{align}
&\Im \sum_{l=0}^{N_{\beta}-1}J_0^l\in (-\pi,\pi),\label{eq_IR_determinant_part_real_positive}\\
&\Re
 e^{\sum_{l=0}^{N_{\beta}-1}J_0^l}>0,\notag\\
&\Re \prod_{(l,\o,\bk,\s)\in Q}\det\big(I_4-(i\o
 I_4-E(\bk)-E_{l+1}(\o,\bk))^{-1}\chi_{\le
 l}(\o,\bk)W^l(\o,\bk)\big)\notag\\
&\quad >0,\ (\forall Q\subset K\text{ with }Q\neq \emptyset),\notag
\end{align}
if $\alpha$ is larger than a constant $c(\beta,L,M)\in\R_{>0}$ depending only on
 $\beta$, $L$, $M$. Since 
$\log e^z=z$ $(\forall z\in\C$ with $\Im z\in(-\pi,\pi))$,
 $z_1z_2\in\C\backslash \R_{\le 0}$, 
$\log (z_1z_2)=\log z_1+\log z_2$ $(\forall
 z_1,z_2\in\C$ with $\Re z_1>0$, $\Re z_2>0$), the condition
 \eqref{eq_IR_determinant_part_real_positive} enables us to deduce the
 claim \eqref{item_IR_expansion} from \eqref{eq_IR_expansion_inside}.
\end{proof}

On the assumption that $M\ge c$, $\alpha^2\ge cM^7$ with the constant $c\in\R_{>0}$ appearing in
Lemma \ref{lem_IR_analytic_continuation_expansion}, we define the
function $J_{end}(\cdot):\overline{D_{IR}}\to \C$ by the right-hand side
of \eqref{eq_IR_expansion}. In the next two lemmas we study properties of
$J_{end}$.
\begin{lemma}\label{lem_infrared_integration_final_bounds}
There exists a constant $c\in\R_{>0}$ independent of any parameter such that if
$$M\ge c,\ \alpha^2\ge cM^7, 
$$
the following statements hold true. 
\begin{enumerate}
\item\label{item_IR_final_analyticity}
$\bU\mapsto J_{end}(\bU)$ is continuous in $\overline{D_{IR}}$ and
     analytic in $D_{IR}$.
\item\label{item_IR_final_bound}
There exists a constant $c(M)\in\R_{>0}$ depending only on $M$ such that
\begin{align*}
|J_{end}(\bU)|\le c(M) f_{\bt}^{-1}\alpha^{-2},\ (\forall
 \bU\in\overline{D_{IR}}).
\end{align*}
\item\label{item_IR_final_difference}
Additionally assume that \eqref{eq_beta_h_assumption} holds and $L\ge
     \beta_2$. Then, there exists a constant $c''(M,c_w)\in\R_{>0}$ depending only
     on $M$, $c_w$ such that
\begin{align*}
|J_{end}(\beta_1)(\bU)-J_{end}(\beta_2)(\bU)|\le
 c''(M,c_w)\beta_1^{-\frac{1}{2}}f_{\bt}^{-1}\alpha^{-2},\ (\forall \bU\in
 \overline{D_{IR}}).
\end{align*}
\end{enumerate}
\end{lemma}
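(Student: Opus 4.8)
The plan is to analyze the explicit expression
\begin{align*}
J_{end}(\bU)&=-\frac{1}{\beta L^2}\sum_{l=0}^{N_{\beta}-1}J_0^l
-\sum_{l=0}^{N_{\beta}}\frac{2}{\beta L^2}\sum_{(\o,\bk)\in\cM\times\G^*}\\
&\quad\cdot\log\big(\det\big(I_4-(i\o I_4-E(\bk)-E_{l+1}(\o,\bk))^{-1}\chi_{\le l}(\o,\bk)W^l(\o,\bk)\big)\big)
\end{align*}
term by term, using the bounds already established in Lemma \ref{lem_IR_analytic_continuation_expansion} (via $J^l(\psi)\in\cS(c_{IR},D_{IR})(l)$ and $(J^l(\beta_1)(\psi),J^l(\beta_2)(\psi))\in\tilde{\cS}(c_{IR},D_{IR})(l)$) together with the support estimates of Lemma \ref{lem_IR_cut_off_support_measure} and the determinant estimate \eqref{eq_IR_determinant_pre_bound}.

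For \eqref{item_IR_final_analyticity}, each $J_0^l$ is continuous in $\overline{D_{IR}}$ and analytic in $D_{IR}$ because $J^l(\psi)\in\cS(c_{IR},D_{IR})(l)$ (condition \eqref{item_IR_set_analytic}), and each determinant factor depends analytically on $\bU$ through $W^l$ and $E_{l+1}$, which are finite sums/products of kernels of $J^j(\psi)$; moreover by \eqref{item_IR_necessary_positive_reasoning} the argument of each $\log$ has positive real part, so each logarithm is analytic, and the finite sums preserve continuity/analyticity. For \eqref{item_IR_final_bound}, I would bound the first sum by $\frac{1}{\beta L^2}\sum_l|J_0^l|\le\frac{1}{\beta L^2}\cdot N/h\cdot c\alpha^{-3}\cdot(\text{number of scales})$; here the number of scales is $|N_{\beta}|+1\le c(M)(|\log\beta|+1)$, but one extra power of $\alpha$ (we have $\alpha^{-3}$ against a target of $\alpha^{-2}$) absorbs the logarithm once $\alpha\ge cM^{7/2}$, since $|\log\beta|$ is dominated by any positive power of $\alpha$ under the running assumptions — actually more carefully, $N/h=4b\beta L^2$ so $\frac{1}{\beta L^2}\cdot N/h=4b$, hence the first sum is bounded by $c(|N_\beta|+1)\alpha^{-3}$, and the logarithmic growth in $\beta$ is what forces keeping the margin $\alpha^{-3}$ vs. $\alpha^{-2}$. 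For the determinant sum, using $|\log(1+z)|\le 2|z|$ for $|z|\le 1/2$ and \eqref{eq_IR_determinant_pre_bound}, each $\log\det$ factor is bounded by $cM\alpha^{-2}$ uniformly, and $\frac{1}{\beta L^2}\sum_{(\o,\bk)\in\cM\times\G^*}1_{\chi_{\le l}(\o,\bk)\neq0}\le cf_{\bt}^{-1}M_{IR}^3M^{3l+3}$ by Lemma \ref{lem_IR_cut_off_support_measure}\eqref{item_measure_support_full_discrete} (restricting $\cM$ to $\cM_h$ is harmless since $\chi_{\le l}$ vanishes for large $|\o|$); summing the geometric series $\sum_{l\le0}M^{3l}$ gives a finite constant, yielding the bound $c(M)f_{\bt}^{-1}\alpha^{-2}$.

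For \eqref{item_IR_final_difference}, the natural decomposition is to split $J_{end}(\beta_1)-J_{end}(\beta_2)$ into the difference of the $J_0^l$-sums and the difference of the determinant sums, and within each to telescope: for the $\log\det$ terms one writes the difference of two logarithms, bounds it by the difference of the arguments divided by a uniform lower bound on $|\,\cdot\,|$ (available from \eqref{eq_IR_determinant_pre_another}), and then bounds the difference of the arguments using Lemma \ref{lem_bound_self_energy_difference} (for $E_{l+1}(\beta_1)-E_{l+1}(\beta_2)$, giving a factor $\beta_1^{-1/2}$) together with the bound $|W^l(\beta_1)-W^l(\beta_2)|$ coming from $(J^l(\beta_1)(\psi),J^l(\beta_2)(\psi))\in\tilde{\cS}(c_{IR},D_{IR})(l)$ via \eqref{eq_IR_higher_order_difference}. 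I must also account for the mismatch between the index sets $\cM\times\G^*$ at $\beta_1$ and $\beta_2$ and the sums running only up to $N_{\beta_1}$ vs $N_{\beta_2}$; the equality \eqref{eq_equivalence_IR_cut_off} shows $\chi_{\le l}(\beta_1)=\chi_{\le l}(\beta_2)$ for $|\o|\ge\pi/\beta_1$, so the extra terms are controlled as in the proof of Lemma \ref{lem_general_free_bound_difference}. Again summing the geometric series in $l$ and absorbing the $|\log\beta_1|$ factor into one power of $\alpha$, one obtains $c''(M,c_w)\beta_1^{-1/2}f_{\bt}^{-1}\alpha^{-2}$.

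I expect the main obstacle to be the bookkeeping in \eqref{item_IR_final_difference}: carefully matching up the two families of scales, cut-off functions, and momentum lattices at the two temperatures, and verifying that every term not of the clean ``telescoped'' form is genuinely of order $\beta_1^{-1/2}$ times a summable-in-$l$ quantity. The bound on the $W^l$-differences relies on $\tilde{\cS}$-membership of the whole tower $(J^j(\beta_1),J^j(\beta_2))$ for $j\ge l$, which is exactly the content of Lemma \ref{lem_IR_analytic_continuation_expansion}\eqref{item_IR_polynomial_difference}, so the ingredients are all in place; the work is assembling them. A secondary subtlety is ensuring the logarithmic factor $|N_{\beta}|\sim|\log\beta|$ is always dominated — this is where the assumption $\alpha^2\ge cM^7$ (rather than merely $\alpha^2\ge cM$) is used, since it provides the extra margin between $\alpha^{-3}$ and the target $\alpha^{-2}$ and between $\beta_1^{-1/2}\alpha^{-3}$ and $\beta_1^{-1/2}\alpha^{-2}$ after multiplying by $|\log\beta_1|\le c\beta_1^{\eps}$ for small $\eps$.
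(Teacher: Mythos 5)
Your overall architecture matches the paper's: term-by-term analysis of the explicit formula, analyticity from $\cS(c_{IR},D_{IR})(l)$-membership and Lemma \ref{lem_IR_analytic_continuation_expansion} \eqref{item_IR_necessary_positive_reasoning}, the determinant sum controlled by \eqref{eq_IR_determinant_pre_another} together with Lemma \ref{lem_IR_cut_off_support_measure} \eqref{item_measure_support_full_discrete} and the geometric series $\sum_{l\le 0}M^{3l}$, and for \eqref{item_IR_final_difference} a split into a ``deep scales'' tail plus a scale-by-scale comparison using \eqref{eq_equivalence_IR_cut_off}, Lemma \ref{lem_bound_self_energy_difference} and the $\tilde{\cS}$-membership of the tower. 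That is essentially what the paper does.

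However, there is a genuine error in your treatment of the $J_0^l$-sums. The condition \eqref{eq_IR_0th_bound} in the definition of $\cS(l)$ is $\frac{h}{N}|J_0|\le M^{\frac{7}{2}l}\alpha^{-3}$, with the scale-dependent factor $M^{\frac{7}{2}l}$; since $l\le 0$ and $M>\sqrt{2}$, the sum $\sum_{l=N_{\beta}}^{0}M^{\frac{7}{2}l}$ is a convergent geometric series, so $\frac{1}{\beta L^2}\sum_l|J_0^l|\le c\,\alpha^{-3}$ with no factor of $|N_{\beta}|$. You instead use the bound without the geometric factor, pick up $(|N_{\beta}|+1)\sim|\log\beta|$, and claim the logarithm is ``absorbed into one power of $\alpha$.'' That absorption is impossible: $\alpha$ is a fixed parameter while $\beta\to\infty$, and the bound in \eqref{item_IR_final_bound} must be uniform in $\beta$ (this uniformity is exactly what the dominated-convergence argument in Lemma \ref{lem_infrared_integration_convergence} and the zero-temperature limit in the main theorem rely on). Your route would yield $c(M)f_{\bt}^{-1}(\alpha^{-2}+|\log\beta|\,\alpha^{-3})$, which is not of the required form. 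The same issue recurs in \eqref{item_IR_final_difference}: the $\tilde{\cS}(l)$ bound \eqref{eq_IR_0th_difference} carries the factor $M^{\frac{5}{2}l}$, so the sum over scales is again geometric and no $|\log\beta_1|$ appears; were one present, multiplying $\beta_1^{-1/2}$ by $|\log\beta_1|$ would destroy the exact rate $\beta_1^{-1/2}$ claimed in the statement (bounding $|\log\beta_1|\le c\beta_1^{\eps}$ only gives $\beta_1^{-1/2+\eps}$). Relatedly, the hypothesis $\alpha^2\ge cM^7$ is not there to absorb logarithms: it is inherited from Proposition \ref{prop_infrared_integration_general} with $a_1=2$, $a_2=1$, $a_3=1$, $a_4=1/2$, which requires $\alpha\ge cM^{a_1+a_2+a_4}=cM^{7/2}$. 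Once you replace the uniform-in-$l$ bound by the actual scale-dependent bounds from $\cS(l)$ and $\tilde{\cS}(l)$, the rest of your argument goes through as in the paper.
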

\begin{proof}
\eqref{item_IR_final_analyticity}: Since $J^l(\psi)\in
 \cS(c_{IR},D_{IR})(l)$ $(\forall l\in \{0,-1,\cdots,N_{\beta}-1\})$, 
$$
\bU\mapsto -\frac{1}{\beta L^2}\sum_{l=0}^{N_{\beta}-1}J_0^l(\bU)
$$
is continuous in $\overline{D_{IR}}$ and analytic in $D_{IR}$. By Lemma
 \ref{lem_IR_analytic_continuation_expansion}
 \eqref{item_IR_necessary_positive_reasoning} and the definitions
 \eqref{eq_partial_dispersion_relation},
 \eqref{eq_effective_dispersion_relation}, the second term of $J_{end}$
is also seen to be continuous in $\overline{D_{IR}}$ and analytic in
 $D_{IR}$.

\eqref{item_IR_final_bound}: To estimate the second term of
 $J_{end}$, we can apply Lemma \ref{lem_IR_cut_off_support_measure}
 \eqref{item_measure_support_full_discrete}, since the condition
 \eqref{eq_beta_L_temporal_assumption} is ensured by Lemma
 \ref{lem_beta_inverse_upper_bound}. 
Using \eqref{eq_IR_0th_bound},
 \eqref{eq_IR_determinant_pre_another},
 Lemma \ref{lem_IR_cut_off_support_measure}
 \eqref{item_measure_support_full_discrete} and the inequality
 $f_{\bt}\le 1$,  
we have that
\begin{align*}
|J_{end}|&\le c\sum_{l=0}^{N_{\beta}-1}M^{\frac{7}{2}l}\alpha^{-3}
 +c\sum_{l=0}^{N_{\beta}}f_{\bt}^{-1}M^{3l+3}\sum_{n=1}^{\infty}\frac{1}{n}(cM\alpha^{-2})^n\\
&\le c\alpha^{-3}+cM^4 f_{\bt}^{-1}\alpha^{-2}\le c(M)f_{\bt}^{-1}\alpha^{-2}.
\end{align*}

\eqref{item_IR_final_difference}: Note that by the condition on $L$,
 $\beta_1$, $\beta_2$, the inequality
 \eqref{eq_beta_L_temporal_assumption} holds for $\beta_1$ and
 $\beta_2$. This means that we can use Lemma
 \ref{lem_bound_extended_one_scale_self_energy},
Lemma \ref{lem_bound_self_energy} and Lemma
 \ref{lem_IR_cut_off_support_measure}
\eqref{item_measure_support_semi_discrete},\eqref{item_measure_support_no_matsubara},
in which \eqref{eq_beta_L_temporal_assumption} is assumed, in the following. 

We can decompose $J_{end}(\beta_2)$ as
 follows. 
$$
J_{end}(\beta_2)=J_{end}^1(\beta_2)+J_{end}^2(\beta_2),
$$
where
\begin{align*}
J_{end}^1(\beta_2)&:=-\frac{1}{\beta_2
 L^2}\sum_{l=0}^{N_{\beta_1}-1}J_0^l(\beta_2)-\sum_{l=0}^{N_{\beta_1}}\frac{2}{\beta_2
 L^2}\sum_{(\o,\bk)\in \cM(\beta_2)\times \G^*}\\
&\quad\cdot\log\big(\det\big(I_4-(i\o
 I_4-E(\bk)-E_{l+1}(\beta_2)(\o,\bk))^{-1}\\
&\qquad\qquad\qquad\qquad\cdot\chi_{\le
 l}(\beta_2)(\o,\bk){W}^l(\beta_2)(\o,\bk)\big)\big),\\
J_{end}^2(\beta_2)&:=-\frac{1}{\beta_2
 L^2}\sum_{l=N_{\beta_1}-2}^{N_{\beta_2}-1}J_0^l(\beta_2)-\sum_{l=N_{\beta_1}-1}^{N_{\beta_2}}\frac{2}{\beta_2
 L^2}\sum_{(\o,\bk)\in \cM(\beta_2)\times \G^*}\\
&\quad\cdot\log\big(\det\big(I_4-(i\o
 I_4-E(\bk)-E_{l+1}(\beta_2)(\o,\bk))^{-1}\\
&\qquad\qquad\qquad\qquad\cdot \chi_{\le
 l}(\beta_2)(\o,\bk){W}^l(\beta_2)(\o,\bk)\big)\big).
\end{align*}
By using \eqref{eq_IR_0th_bound}, \eqref{eq_IR_determinant_pre_another} 
 and Lemma \ref{lem_IR_cut_off_support_measure} \eqref{item_measure_support_full_discrete} and recalling the
 definition of $N_{\beta_1}$ we can deduce that
\begin{align}
|J_{end}^2(\beta_2)|&\le c\sum_{l=N_{\beta_1}-2}^{N_{\beta_2}-1}M^{\frac{7}{2}l}\alpha^{-3}
 +c\sum_{l=N_{\beta_1}-1}^{N_{\beta_2}}f_{\bt}^{-1}M^{3l+3}\sum_{n=1}^{\infty}\frac{1}{n}(cM\alpha^{-2})^n\label{eq_IR_final_large_part_bound}
\\
&\le c(M)\beta_1^{-3}f_{\bt}^{-1}\alpha^{-2}.\notag
\end{align}

Next let us find an upper bound on
 $|J_{end}(\beta_1)-J_{end}^1(\beta_2)|$. Note that 
\begin{align}
&|J_{end}(\beta_1)-J_{end}^1(\beta_2)|\label{eq_IR_final_difference_decomposition}\\
&\le \sum_{l=0}^{N_{\beta_1}-1}\left|\frac{1}{\beta_1
 L^2}J_0^l(\beta_1)-\frac{1}{\beta_2 L^2}J_0^l(\beta_2)\right|
 +2\sum_{l=0}^{N_{\beta_1}}\sum_{a=1}^2\notag\\
&\quad\cdot\Bigg|\frac{1}{\beta_a
 L^2}\sum_{(\o,\bk)\in \cM(\beta_a)\times \G^*}\log\big(\det\big(I_4-(i\o
 I_4-E(\bk)-E_{l+1}(\beta_a)(\o,\bk))^{-1}\notag\\
&\qquad\qquad\qquad\qquad\qquad\qquad\qquad\qquad\cdot\chi_{\le
 l}(\beta_a)(\o,\bk)W^l(\beta_a)(\o,\bk)\big)\big)\notag\\
&\qquad -\frac{1}{2\pi L^2}\sum_{\bk\in\G^*}\Bigg(\int_{\frac{\pi}{\beta_a}}^{\pi
 h}du+ \int^{-\frac{\pi}{\beta_a}}_{-\pi h}du\Bigg)\notag\\
&\quad\qquad\qquad\qquad\cdot\log\big(\det\big(I_4-(iu
 I_4-E(\bk)-E_{l+1}(\beta_a)(u,\bk))^{-1}\notag\\
&\qquad\qquad\qquad\qquad\qquad\qquad\qquad\qquad\cdot \chi_{\le
 l}(\beta_a)(u,\bk)\widehat{W}^l(\beta_a)(u,\bk)\big)\big)\Bigg|\notag\\
&\quad +\sum_{l=0}^{N_{\beta_1}}\frac{1}{\pi L^2}\sum_{\bk\in
 \G^*}\Bigg(\int_{\frac{\pi}{\beta_1}}^{\pi h}du+\int_{-\pi h}^{-\frac{\pi}{\beta_1}}du
 \Bigg)\notag\\
&\qquad\cdot\Big|\log\big(\det\big(I_4-(iu
 I_4-E(\bk)-E_{l+1}(\beta_1)(u,\bk))^{-1}\notag\\
&\qquad\qquad\qquad\qquad\qquad\cdot \chi_{\le
 l}(\beta_1)(u,\bk)\widehat{W}^l(\beta_1)(u,\bk)\big)\big)\notag\\
&\quad\qquad -\log\big(\det\big(I_4-(iu
 I_4-E(\bk)-E_{l+1}(\beta_2)(u,\bk))^{-1}\notag\\
&\quad\qquad\qquad\qquad\qquad\qquad\cdot \chi_{\le
 l}(\beta_2)(u,\bk)\widehat{W}^l(\beta_2)(u,\bk)\big)\big)\Big|\notag\\
&\quad +\sum_{l=0}^{N_{\beta_1}}\frac{1}{\pi L^2}\sum_{\bk\in
 \G^*}\Bigg(\int_{\frac{\pi}{\beta_2}}^{\frac{\pi}{\beta_1}}du+\int^{-\frac{\pi}{\beta_2}}_{-\frac{\pi}{\beta_1}}du\Bigg)\notag\\
&\qquad\cdot\Big|\log\big(\det\big(I_4-(iu
 I_4-E(\bk)-E_{l+1}(\beta_2)(u,\bk))^{-1}\notag\\
&\qquad\qquad\qquad\qquad\qquad\cdot \chi_{\le
 l}(\beta_2)(u,\bk)\widehat{W}^l(\beta_2)(u,\bk)\big)\big)\Big|.\notag
\end{align}
The assumption $h\ge e^{8}$ implies that $\pi h\ge (\pi/\sqrt{3})M_{UV}$,
 or $\phi(M_{UV}^{-2}u^2)=0$ $(\forall u\in\R$ with $|u|\ge \pi h)$.
This explains why the integrals with $u$ over a domain outside $[-\pi h,\pi h]$
 vanish in the following calculations. By using
 \eqref{eq_IR_determinant_pre_another}, \eqref{eq_IR_determinant_pre_bound},
Lemma
 \ref{lem_IR_cut_off_support_measure} \eqref{item_measure_support_semi_discrete},
\eqref{eq_infrared_integrant_bound_derivative}, 
Lemma \ref{lem_infrared_cut_off_derivative}, 
Lemma \ref{lem_bound_extended_one_scale_self_energy}
\eqref{item_derivative_extended_one_scale_self_energy} and the
 inequality $c_{IR}\ge f_{\bt}^{-1}$ in this order,
\begin{align}
&\Bigg|\frac{1}{\beta_a
 L^2}\sum_{(\o,\bk)\in \cM(\beta_a)\times \G^*}\log\big(\det\big(I_4-(i\o
 I_4-E(\bk)-E_{l+1}(\beta_a)(\o,\bk))^{-1}\label{eq_IR_final_decomposition_discrete_cont}\\
&\quad\qquad\qquad\qquad\qquad\qquad\qquad\qquad\cdot\chi_{\le
 l}(\beta_a)(\o,\bk)W^l(\beta_a)(\o,\bk)\big)\big)\notag\\
&\quad -\frac{1}{2\pi L^2}\sum_{\bk\in\G^*}
\Bigg(\int^{\pi h}_{\frac{\pi}{\beta_a}}du+\int_{-\pi
											 h}^{-\frac{\pi}{\beta_a}}du\Bigg)\notag\\
&\qquad\quad\cdot\log\big(\det\big(I_4-(iu
 I_4-E(\bk)-E_{l+1}(\beta_a)(u,\bk))^{-1}\notag\\
&\qquad\qquad\qquad\qquad\qquad\cdot \chi_{\le
 l}(\beta_a)(u,\bk)\widehat{W}^l(\beta_a)(u,\bk)\big)\big)\Bigg|\notag\\
&\le \frac{1}{2\pi L^2}\sum_{\bk\in \G^*}\sum_{m=0}^{\frac{\beta_a
 h}{2}-1}\notag\\
&\quad\cdot\Bigg(\int_{\frac{2\pi}{\beta_a}m+\frac{\pi}{\beta_a}}^{\frac{2\pi}{\beta_a}(m+1)+\frac{\pi}{\beta_a}}d\o\int_{\frac{2\pi}{\beta_a}m+\frac{\pi}{\beta_a}}^{\o}du+\int_{-\frac{2\pi}{\beta_a}(m+1)-\frac{\pi}{\beta_a}}^{-\frac{2\pi}{\beta_a}m-\frac{\pi}{\beta_a}}d\o\int_{-\frac{2\pi}{\beta_a}m-\frac{\pi}{\beta_a}}^{\o}du
\Bigg)\notag\\
&\quad \cdot\Bigg| \frac{\partial}{\partial u}\log\big(\det\big(I_4-(iu
 I_4-E(\bk)-E_{l+1}(\beta_a)(u,\bk))^{-1}\notag\\
&\qquad\qquad\qquad\qquad\qquad\cdot \chi_{\le
 l}(\beta_a)(u,\bk)\widehat{W}^l(\beta_a)(u,\bk)\big)\big)\Bigg|\notag\\
&\quad+ \frac{1}{2\pi L^2}\sum_{\bk\in\G^*}\Bigg(\int_{\pi h}^{\pi
 h+\frac{\pi}{\beta_a}}du+\int^{-\pi h}_{-\pi
 h-\frac{\pi}{\beta_a}}du\Bigg)\notag\\
&\qquad\cdot\Big|\log\big(\det\big(I_4-(iu
 I_4-E(\bk)-E_{l+1}(\beta_a)(u,\bk))^{-1}\notag\\
&\qquad\qquad\qquad\qquad\qquad\cdot {\chi}_{\le
 l}(\beta_a)(u,\bk)\widehat{W}^l(\beta_a)(u,\bk)\big)\big)\Big|\notag\\
&\le \frac{c}{\beta_a L^2}\sum_{\bk\in \G^*}\Bigg(\int_{\frac{\pi}{\beta_a}}^{\pi h}du+\int^{-\frac{\pi}{\beta_a}}_{-\pi
 h}du\Bigg)\notag\\
&\quad \cdot \Bigg|\frac{\partial}{\partial u}\det\big(I_4-(iu
 I_4-E(\bk)-E_{l+1}(\beta_a)(u,\bk))^{-1}\notag\\
&\qquad\qquad\qquad\qquad\cdot \chi_{\le
 l}(\beta_a)(u,\bk)\widehat{W}^l(\beta_a)(u,\bk)\big)\Bigg|\notag\\
&\le \frac{c}{\beta_1 L^2}\sum_{\bk\in \G^*}\Bigg(\int_{\frac{\pi}{\beta_a}}^{\pi h}du+\int^{-\frac{\pi}{\beta_a}}_{-\pi
 h}du\Bigg)\sum_{j=l}^{N_{\beta_a}}\notag\\
&\quad \cdot 
\Bigg\|\frac{\partial}{\partial u}\big((iu
 I_4-E(\bk)-E_{l+1}(\beta_a)(u,\bk))^{-1}
\chi_{j}(u,\bk)\widehat{W}^l(\beta_a)(u,\bk)\big)\Bigg\|_{4\times 4}\notag\\
&\le  \frac{c}{\beta_1 L^2}\sum_{\bk\in \G^*}\Bigg(\int_{\frac{\pi}{\beta_a}}^{\pi h}du+\int^{-\frac{\pi}{\beta_a}}_{-\pi
 h}du\Bigg)\sum_{j=l}^{N_{\beta_a}}\notag\\
&\quad \cdot \sum_{n=0}^1\Bigg\|\left(\frac{\partial}{\partial u}\right)^n\big((iu
 I_4-E(\bk)-E_{l+1}(\beta_a)(u,\bk))^{-1}\Bigg\|_{4\times 4}\notag\\
&\quad\cdot\sum_{m=0}^{1-n}\Bigg|\left(\frac{\partial}{\partial											 u}\right)^m\chi_j(u,\bk)\Bigg|
\Bigg\|\left(\frac{\partial}{\partial u}\right)^{1-n-m}\widehat{W}^l(\beta_a)(u,\bk)
\Bigg\|_{4\times 4}\notag\\
&\le c(M)\beta_1^{-1} \sum_{j=l}^{N_{\beta_a}}f_{\bt}^{-1}M^{3j}
\sum_{n=0}^1M^{-j}(c\fw(j)^{-1})^n\notag\\
&\quad\cdot
 \sum_{m=0}^{1-n}(c\fw(j)^{-1})^mc_{IR}^{-1}M^{\frac{3}{2}l}\alpha^{-2}(c\fw(l)^{-1})^{1-n-m}\notag\\
&\le c''(M,c_w)\beta_1^{-1}M^{\frac{3}{2}l}\alpha^{-2}\sum_{j=l}^{N_{\beta_a}}M^{j}\notag\\
&\le c''(M,c_w)\beta_1^{-1}M^{\frac{5}{2}l}\alpha^{-2}.\notag
\end{align}
Also, by recalling \eqref{eq_equivalence_IR_cut_off} we see that
\begin{align}
&\frac{1}{ L^2}\sum_{\bk\in
 \G^*}\Bigg(\int_{\frac{\pi}{\beta_1}}^{\pi h}du+\int_{-\pi h}^{-\frac{\pi}{\beta_1}}du
 \Bigg)\label{eq_IR_final_decomposition_cont_cont_pre}\\
&\quad\cdot\Big|\log\big(\det\big(I_4-(iu
 I_4-E(\bk)-E_{l+1}(\beta_1)(u,\bk))^{-1}\notag\\
&\quad\qquad\qquad\qquad\qquad\cdot \chi_{\le
 l}(\beta_1)(u,\bk)\widehat{W}^l(\beta_1)(u,\bk)\big)\big)\notag\\
&\qquad -\log\big(\det\big(I_4-(iu
 I_4-E(\bk)-E_{l+1}(\beta_2)(u,\bk))^{-1}\notag\\
&\qquad\qquad\qquad\qquad\qquad\cdot \chi_{\le
 l}(\beta_2)(u,\bk)\widehat{W}^l(\beta_2)(u,\bk)\big)\big)\Big|\notag\\
&\le \frac{1}{L^2}\sum_{\bk\in
 \G^*}\Bigg(
 \int_{\frac{\pi}{\beta_1}}^{\pi h}du+\int_{-\pi h}^{-\frac{\pi}{\beta_1}}du\Bigg)1_{\chi_{\le
 l}(\beta_1)(u,\bk)\neq 0}\notag\\
&\quad \cdot
 \Big|\log\Big(\det\big(I_4+\big(I_4-(iuI_4-E(\bk)-E_{l+1}(\beta_2)(u,\bk))^{-1}\notag\\&\qquad\qquad\qquad\qquad\qquad\qquad\cdot \chi_{\le
 l}(\beta_2)(u,\bk)\widehat{W}^l(\beta_2)(u,\bk)\big)^{-1}\notag\\
&\qquad\cdot \big((iuI_4-E(\bk)-E_{l+1}(\beta_2)(u,\bk))^{-1}\chi_{\le
 l}(\beta_2)(u,\bk)\widehat{W}^l(\beta_2)(u,\bk)\notag\\
&\quad\qquad-(iuI_4-E(\bk)-E_{l+1}(\beta_1)(u,\bk))^{-1}\chi_{\le
 l}(\beta_1)(u,\bk)\widehat{W}^l(\beta_1)(u,\bk)\big)\big)\Big)\Big|\notag\\
&=\frac{1}{L^2}\sum_{\bk\in
 \G^*}\Bigg(
 \int_{\frac{\pi}{\beta_1}}^{\pi h}du+\int_{-\pi h}^{-\frac{\pi}{\beta_1}}du\Bigg)1_{\chi_{\le
 l}(\beta_1)(u,\bk)\neq 0}\notag\\
&\quad \cdot
 \Big|\log\Big(\det\big(I_4+\big(I_4-(iuI_4-E(\bk)-E_{l+1}(\beta_2)(u,\bk))^{-1}\notag\\&\qquad\qquad\qquad\qquad\qquad\qquad\cdot \chi_{\le
 l}(\beta_2)(u,\bk)\widehat{W}^l(\beta_2)(u,\bk)\big)^{-1}\notag\\
&\qquad\cdot \big((iuI_4-E(\bk)-E_{l+1}(\beta_2)(u,\bk))^{-1}\chi_{\le
 l}(\beta_1)(u,\bk)\notag\\
&\qquad\quad\cdot
 (\widehat{W}^l(\beta_2)(u,\bk)-\widehat{W}^l(\beta_1)(u,\bk))\notag\\
&\qquad\quad
 +(iuI_4-E(\bk)-E_{l+1}(\beta_2)(u,\bk))^{-1}\notag\\
&\qquad\qquad\cdot(E_{l+1}(\beta_2)(u,\bk)-E_{l+1}(\beta_1)(u,\bk))\notag\\
&\qquad\qquad\cdot(iuI_4-E(\bk)-E_{l+1}(\beta_1)(u,\bk))^{-1}\notag\\
&\qquad\qquad\cdot \chi_{\le
 l}(\beta_1)(u,\bk)\widehat{W}^l(\beta_1)(u,\bk)\big)\big)\Big)\Big|,\notag
\end{align}
where we used the equalities of the form
\begin{align*}
&\log(\det(I_4-A))-\log(\det(I_4-B))\\
&\quad=\log(\det(I_4+(I_4-B)^{-1}(B-A))),\\
&(C-A)^{-1}-(C-B)^{-1}=(C-A)^{-1}(A-B)(C-B)^{-1},
\end{align*}
for $A,B,C\in \Mat(4,\C)$. Moreover, by
 \eqref{eq_infrared_integrant_bound}, the inequality $c_{IR}\ge
 f_{\bt}^{-1}$,  Lemma \ref{lem_bound_extended_one_scale_self_energy}
 \eqref{item_bound_extended_one_scale_self_energy}, Lemma
 \ref{lem_bound_extended_one_scale_self_energy_difference}, Lemma
 \ref{lem_bound_self_energy} \eqref{item_bound_extended_self_energy} and 
 Lemma \ref{lem_bound_self_energy_difference}  
 we have for any $j\in \{l,l-1,\cdots,$ $N_{\beta_1}\}$, $(u,\bk)\in\R^3$ with
 $\chi_j(u,\bk)\neq 0$ that
\begin{align}
&\Big\|\big(I_4-(iu I_4-E(\bk)-E_{l+1}(\beta_1)(u,\bk))^{-1}\chi_{\le
 l}(\beta_2)(u,\bk)\widehat{W}^l(\beta_2)(u,\bk)\big)^{-1}\label{eq_IR_final_decomposition_cont_cont_pre_inside}\\
&\quad\cdot \big((iu I_4-E(\bk)-E_{l+1}(\beta_2)(u,\bk))^{-1}\chi_{\le
 l}(\beta_1)(u,\bk)\notag\\
&\qquad\cdot
 (\widehat{W}^l(\beta_2)(u,\bk)-\widehat{W}^l(\beta_1)(u,\bk))\notag\\
&\qquad
 +(iu I_4-E(\bk)-E_{l+1}(\beta_2)(u,\bk))^{-1}(E_{l+1}(\beta_2)(u,\bk)-E_{l+1}(\beta_1)(u,\bk))\notag\\
&\quad\qquad\cdot(iu I_4-E(\bk)-E_{l+1}(\beta_1)(u,\bk))^{-1}
\chi_{\le
 l}(\beta_1)(u,\bk)\widehat{W}^l(\beta_1)(u,\bk)\big)\Big\|_{4\times
 4}\notag\\
&\le c\|(iu I_4-E(\bk)-E_{l+1}(\beta_2)(u,\bk))^{-1}\chi_{\le
 l}(\beta_1)(u,\bk)\notag\\
&\qquad\cdot(\widehat{W}^l(\beta_2)(u,\bk)-\widehat{W}^l(\beta_1)(u,\bk))\|_{4\times
 4}\notag\\
&\quad + c\|(iu I_4-E(\bk)-E_{l+1}(\beta_2)(u,\bk))^{-1}\notag\\
&\qquad\quad\cdot(E_{l+1}(\beta_2)(u,\bk)-E_{l+1}(\beta_1)(u,\bk))\|_{4\times 4}\notag\\
&\le c
 M^{-j}\|\widehat{W}^l(\beta_2)(u,\bk)-\widehat{W}^l(\beta_1)(u,\bk)\|_{4\times
 4}\notag\\
&\quad +c
 M^{-j}\|E_{l+1}(\beta_2)(u,\bk)-E_{l+1}(\beta_1)(u,\bk)\|_{4\times
 4}\notag\\
&\le \min\{cM\alpha^{-2},c\beta_1^{-\frac{1}{2}}M^{-j}\alpha^{-2}\}<1.\notag
\end{align}
By taking into account  Lemma \ref{lem_IR_cut_off_support_measure} \eqref{item_measure_support_semi_discrete} and 
\eqref{eq_IR_final_decomposition_cont_cont_pre_inside} we observe that 
\begin{align}
&(\text{the right-hand side of
 \eqref{eq_IR_final_decomposition_cont_cont_pre}})\label{eq_IR_final_decomposition_cont_cont}\\
&\le \sum_{j=l}^{N_{\beta_1}}\frac{1}{L^2}\sum_{\bk\in
 \G^*}\Bigg(\int^{\pi
 h}_{\frac{\pi}{\beta_1}}du+\int_{-\pi h}^{-\frac{\pi}{\beta_1}}du\Bigg)1_{\chi_j(u,\bk)\neq
 0}\notag\\
&\quad\cdot\sum_{n=1}^{\infty}\frac{1}{n}(\min\{cM\alpha^{-2},c\beta_1^{-\frac{1}{2}}M^{-j}\alpha^{-2}\})^n\notag\\
&\le
 c(M)\sum_{j=l}^{N_{\beta_1}}f_{\bt}^{-1}M^{3j}\min\{cM\alpha^{-2},c\beta_1^{-\frac{1}{2}}M^{-j}\alpha^{-2}\}\le c(M)\beta_1^{-\frac{1}{2}}f_{\bt}^{-1}M^{2l}\alpha^{-2}.\notag
\end{align}

Similarly, by using Lemma \ref{lem_IR_cut_off_support_measure}
 \eqref{item_measure_support_no_matsubara} and
 \eqref{eq_IR_determinant_pre_another} we can derive that
\begin{align}
&\frac{1}{L^2}\sum_{\bk\in
 \G^*}\Bigg(\int_{\frac{\pi}{\beta_2}}^{\frac{\pi}{\beta_1}}du+\int^{-\frac{\pi}{\beta_2}}_{-\frac{\pi}{\beta_1}}du\Bigg)\label{eq_IR_final_decomposition_cont_rest}\\
&\quad\cdot\Big|\log\big(\det\big(I_4-(iu
 I_4-E(\bk)-E_{l+1}(\beta_2)(u,\bk))^{-1}\notag\\
&\qquad\qquad\qquad\qquad\quad\cdot \chi_{\le
 l}(\beta_2)(u,\bk)\widehat{W}^l(\beta_2)(u,\bk)\big)\big)\Big|\notag\\
&\le \frac{1}{L^2}\sum_{\bk\in\G^*}\Bigg(\int_{\frac{\pi}{\beta_2}}^{\frac{\pi}{\beta_1}}du+\int^{-\frac{\pi}{\beta_2}}_{-\frac{\pi}{\beta_1}}du\Bigg)
1_{\chi_{\le
 l}(\beta_2)(u,\bk)\neq
 0}\sum_{n=1}^{\infty}\frac{1}{n}(cM\alpha^{-2})^{n}\notag\\
&\le c(M)\beta_1^{-1}f_{\bt}^{-1}M^{2l}\alpha^{-2}.\notag
\end{align}

Combining \eqref{eq_IR_0th_difference},
 \eqref{eq_IR_final_decomposition_discrete_cont},
 \eqref{eq_IR_final_decomposition_cont_cont}, 
\eqref{eq_IR_final_decomposition_cont_rest} with
 \eqref{eq_IR_final_difference_decomposition} yields
\begin{align}
|J_{end}(\beta_1)-J_{end}^1(\beta_2)|
&\le
 c\sum_{l=0}^{N_{\beta_1}-1}\beta_1^{-\frac{1}{2}}M^{\frac{5}{2}l}\alpha^{-3}
 +\sum_{l=0}^{N_{\beta_1}}c''(M,c_w)\beta_1^{-\frac{1}{2}}f_{\bt}^{-1}M^{2l}\alpha^{-2}\label{eq_IR_final_normal_part_bound}\\
&\le c''(M,c_w)\beta^{-\frac{1}{2}}f_{\bt}^{-1}\alpha^{-2}.\notag
\end{align}
Finally, by coupling \eqref{eq_IR_final_normal_part_bound} with \eqref{eq_IR_final_large_part_bound} we reach the claimed inequality. 
\end{proof}

In order to indicate the dependency on the parameters $\beta$, $L$, $h$,
let us write $J_{end}(\beta,L,h)(\bU)$ in place of $J_{end}(\bU)$ in the
following. For any compact set $K$ of $\C^4$ let $C(K;\C)$ denote the
Banach space of all complex-valued continuous functions on $K$, equipped
with the uniform norm. To prove the next lemma, we need Lemma
\ref{lem_h_L_limit} proved in Appendix \ref{app_h_L_limit}. 

\begin{lemma}\label{lem_infrared_integration_convergence}
For any non-empty compact set $K$ of $\C^4$ satisfying $K\subset D_{IR}$ the following statements hold true.
\begin{enumerate}
\item\label{item_IR_h_limit} For any $\beta\in \R_{>0}$, $L\in \N$ with
     $L\ge \beta$,
$J_{end}(\beta,L,h)(\cdot)$ converges in $C(K;\C)$ as $h\to \infty$
     $(h\in 2\N/\beta)$.
\item\label{item_IR_L_limit} Set
 $J(\beta,L):=\lim_{h\to \infty,h\in
     2\N/\beta}J_{end}(\beta,L,h)$.
For any $\beta\in \R_{>0}$, $J(\beta,L)(\cdot)$ converges in $C(K;\C)$ as $L\to \infty$
     $(L\in \N)$.
\item\label{item_IR_beta_limit}
Set $J(\beta):=\lim_{L\to \infty, L\in
     \N}J(\beta,L)$. 
$J(\beta)(\cdot)$ converges in $C(K;\C)$ as $\beta \to \infty$
     $(\beta \in \N)$.
\end{enumerate}
\end{lemma}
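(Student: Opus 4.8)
The plan is to prove all three convergence statements by the standard Cauchy-sequence argument in the Banach space $C(K;\C)$, using the difference estimates already prepared. Recall that $J_{end}(\beta,L,h)(\bU)$ is given by the right-hand side of \eqref{eq_IR_expansion}, which involves only finitely many Grassmann polynomials $J^l(\psi)$ and effective matrices $E_l(\o,\bk)$, $W^l(\o,\bk)$ built inductively in Subsection \ref{subsec_application_IR}. By Lemma \ref{lem_infrared_integration_final_bounds} \eqref{item_IR_final_analyticity}, $J_{end}(\beta,L,h)(\cdot)$ is continuous on $\overline{D_{IR}}$, hence its restriction to $K$ lies in $C(K;\C)$; and by \eqref{item_IR_final_bound} the family is uniformly bounded on $K$ once $M\ge c$, $\alpha^2\ge cM^7$.

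For \eqref{item_IR_h_limit}, fix $\beta$ and $L\ge \beta$. The first task is to verify that $J_{end}(\beta,L,h)(\cdot)$ is Cauchy in $C(K;\C)$ as $h\to\infty$ through $2\N/\beta$. Here one cannot directly invoke Lemma \ref{lem_infrared_integration_final_bounds} \eqref{item_IR_final_difference}, since that estimate compares two temperatures; instead I would trace the $h$-dependence through the construction. The key point is that everything entering \eqref{eq_IR_expansion} is a polynomial expression in the finitely many kernels $J^l_m(\bX)$ and the matrix entries $E_l(\o,\bk)(\rho,\eta)$, all of which depend on $h$ only through $\cM_h$ and through the discretized covariances; the relevant $h$-continuity is exactly the content of Lemma \ref{lem_h_L_limit} in Appendix \ref{app_h_L_limit}, which I am permitted to assume. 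More precisely, I would show by downward induction on the scale $l$ that each kernel $J^l_m(\bX)$ (for fixed $\bX$, $m$) and each matrix entry $W^l(\o,\bk)(\rho,\eta)$ converges uniformly on $K$ as $h\to\infty$; the inductive step uses that the free integration and tree expansion defining $J^{l-1}$ from $J^l$ are continuous operations on finitely many kernels, together with the $h$-convergence of the effective covariance $C_l$, which follows from the $h$-convergence of $E_l$ and the formula \eqref{eq_effective_covariance} (the Matsubara cutoff $\phi(M_{UV}^{-2}\o^2)$ freezes the relevant frequency range once $h$ is large). Once every ingredient of \eqref{eq_IR_expansion} converges uniformly on $K$, so does the finite sum $J_{end}(\beta,L,h)$, and the uniform bound from Lemma \ref{lem_infrared_integration_final_bounds} \eqref{item_IR_final_bound} guarantees the limit is finite; completeness of $C(K;\C)$ gives \eqref{item_IR_h_limit}.

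For \eqref{item_IR_L_limit} and \eqref{item_IR_beta_limit}, I would first pass to the $h\to\infty$ limit in all the estimates, obtaining that $J(\beta,L)(\cdot)=\lim_h J_{end}(\beta,L,h)$ inherits the uniform bound $|J(\beta,L)(\bU)|\le c(M)f_{\bt}^{-1}\alpha^{-2}$ on $K$ and, crucially, the difference bound. The $L$-convergence is again a Cauchy argument: for $L_1\le L_2$ one compares $J(\beta,L_1)$ with $J(\beta,L_2)$ by tracing the $L$-dependence through the construction, using Lemma \ref{lem_h_L_limit} (which also controls the infinite-volume limit) in the same inductive way as for the $h$-limit. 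For \eqref{item_IR_beta_limit} the situation is cleanest: taking $\beta_1=\beta$, $\beta_2=\beta'$ with $\beta\le\beta'$, $\beta,\beta'\in\N$, and $L\ge\beta'$, $h\in 4\N$, Lemma \ref{lem_infrared_integration_final_bounds} \eqref{item_IR_final_difference} gives directly
\[
|J_{end}(\beta,L,h)(\bU)-J_{end}(\beta',L,h)(\bU)|\le c''(M,c_w)\beta^{-1/2}f_{\bt}^{-1}\alpha^{-2},\quad(\forall \bU\in\overline{D_{IR}}),
\]
and letting $h\to\infty$ and then $L\to\infty$ (both limits exist by \eqref{item_IR_h_limit}, \eqref{item_IR_L_limit}) yields $\|J(\beta)-J(\beta')\|_{C(K;\C)}\le c''(M,c_w)\beta^{-1/2}f_{\bt}^{-1}\alpha^{-2}$, which tends to $0$ as $\beta\to\infty$; so $(J(\beta))_{\beta\in\N}$ is Cauchy in $C(K;\C)$ and converges. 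The restriction $\beta,\beta'\in\N$, $h\in 4\N$ is exactly the standing assumption \eqref{eq_beta_h_assumption} under which the difference estimates were proved, and it is harmless since we only need convergence along $\beta\in\N$.

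The main obstacle I anticipate is the bookkeeping in \eqref{item_IR_h_limit} and \eqref{item_IR_L_limit}: the difference estimates in Lemma \ref{lem_infrared_integration_final_bounds} are tailored to \emph{temperature} differences, not to $h$- or $L$-differences, so for those two limits one genuinely has to reach back into the inductive construction of $J^l(\psi)$ and $E_l$, verify scale-by-scale $h$- and $L$-continuity of each kernel and matrix entry, and confirm that the finitely many operations (Grassmann Gaussian integration with a decomposed covariance, the tree expansion, the matrix inversions in \eqref{eq_effective_covariance} and \eqref{eq_IR_expansion}) all respect these limits uniformly on $K$. This is where Lemma \ref{lem_h_L_limit} of Appendix \ref{app_h_L_limit} does the real work, and the argument will be routine but not short. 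The $\beta$-limit, by contrast, is essentially immediate from the difference bound already in hand.
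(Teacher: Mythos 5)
Your treatment of item \eqref{item_IR_beta_limit} is correct and is exactly the paper's argument: pass to the $h,L\to\infty$ limits in Lemma \ref{lem_infrared_integration_final_bounds} \eqref{item_IR_final_difference} to get $\|J(\beta)-J(\beta')\|_{C(K;\C)}\le c''(M,c_w)\beta^{-1/2}f_{\bt}^{-1}\alpha^{-2}$ and conclude by completeness.

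For items \eqref{item_IR_h_limit} and \eqref{item_IR_L_limit}, however, your proposed route has a genuine gap. You plan a downward induction on the scale $l$ showing that each kernel $J^l_m(\bX)$ and each entry $W^l(\o,\bk)(\rho,\eta)$ converges as $h\to\infty$ and $L\to\infty$, and you assert that "this is where Lemma \ref{lem_h_L_limit} does the real work." But Lemma \ref{lem_h_L_limit} says nothing about the multi-scale kernels: it establishes convergence of the Taylor coefficients $a_n(\beta,L,h)(\bU)=-\frac{1}{\beta L^d}\frac{1}{n!}(d/dz)^n\log(\int e^{-zV(\psi)}d\mu_C(\psi))|_{z=0}$ of the \emph{unrenormalized} formulation. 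It cannot be fed into your induction. Carrying out your plan would require separately proving $h$- and $L$-continuity of the UV covariances $C_l^{\pm}$, of the entire UV multi-scale output $J^0$, and of the effective covariances $C_l$; for the $L$-limit one would additionally have to make sense of convergence of lattice-indexed kernels as the lattice grows, which requires an embedding into $\G_\infty$ and decay estimates that are nowhere established in this form. A further obstruction is that the identity \eqref{eq_IR_expansion} linking $J_{end}$ to $-\frac{1}{\beta L^2}\log(\int e^{J^0(\psi)}d\mu_{C_{\le 0}^{\infty}}(\psi))$ is only valid for $\alpha\ge c(\beta,L,M)$, i.e.\ on a coupling disk shrinking with $\beta,L$, so even if you controlled the multi-scale objects there you would still need a mechanism to propagate convergence to all of $K$.

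The paper's actual argument sidesteps all of this by working with the Taylor coefficients in the coupling. Using the analyticity and the uniform bound of Lemma \ref{lem_infrared_integration_final_bounds}, one writes $J_{end}(\beta,L,h)(\bU)=\sum_{n\ge 0}a_n(\beta,L,h)(\bU)$ with $a_n$ given by a Cauchy contour integral and bounded geometrically in $n$, uniformly in $h,L$. Each $a_n$ is then split, via the small-coupling identity \eqref{eq_IR_expansion} and Lemma \ref{lem_symmetric_formulation} \eqref{item_logarithm_final_h_estimate}, into the $n$-th Taylor coefficient of $-\frac{1}{\beta L^2}\log(\int e^{-zV(\psi)}d\mu_C(\psi))$ — which converges as $h\to\infty$ and then $L\to\infty$ precisely by Lemma \ref{lem_h_L_limit} — plus a contour integral of a quantity that is $O(1/h)$. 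Dominated convergence in $l^1(\N\cup\{0\})$ then resums the series. You should adopt this Cauchy-coefficient mechanism; your scale-by-scale induction is not supported by the cited lemma and is not a routine repair.
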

\begin{proof}
\eqref{item_IR_h_limit},\eqref{item_IR_L_limit}:
Take any $U_0\in (0,f_{\bt}^2(c'(M,c_w)\alpha^4)^{-1})$ and small
 $\eps\in\R_{>0}$, where $c'(M,c_w)$ is the constant appearing in the
 definition of $D_{IR}$ in Lemma
 \ref{lem_IR_analytic_continuation_expansion} \eqref{item_input_IR_consistency}.
 Set 
$$
D_{\eps}:=\{(U_1,U_2,U_3,U_4)\in\C^4\ |\ |U_{\rho}|<U_0-\eps,\ (\forall
 \rho\in \cB)\}.
$$
Note that $z\bU\in D_{IR}$ for any $\bU\in \overline{D_{\eps}}$ and $z\in \C$ with
 $|z|\le U_0/(U_0-\eps)$. By Lemma
 \ref{lem_infrared_integration_final_bounds}
 \eqref{item_IR_final_analyticity} and Cauchy's integral formula we can
 justify the following equalities.
\begin{align}
J_{end}(\beta,L,h)(\bU)&=\sum_{n=0}^{\infty}\frac{1}{n!}\left.\left(\frac{\partial}{\partial
 z}\right)^nJ_{end}(\beta,L,h)(z\bU)\right|_{z=0}\label{eq_IR_perturbation_series}\\
&=\sum_{n=0}^{\infty}a_n(\beta,L,h)(\bU),\ (\forall
 \bU\in\overline{D_{\eps}}),\notag
\end{align}
where
\begin{align*}
a_n(\beta,L,h)(\bU):=\frac{1}{2\pi
 i}\oint_{|z|=\frac{U_0}{U_0-\eps}}dz\frac{J_{end}(\beta,L,h)(z\bU)}{z^{n+1}}.
\end{align*}
By Lemma \ref{lem_infrared_integration_final_bounds}
 \eqref{item_IR_final_bound},
\begin{align}
&|a_n(\beta,L,h)(\bU)|\le
 c(M)f_{\bt}^{-1}\alpha^{-2}\left(\frac{U_0-\eps}{U_0}\right)^n,\label{eq_IR_dominated_uniform_bound}\\
&(\forall \bU\in\overline{D_{\eps}}, n\in \N\cup \{0\}).\notag
\end{align}
 
By Lemma \ref{lem_IR_analytic_continuation_expansion}
 \eqref{item_IR_expansion} there exists a constant $c(\beta,L,M)\in\R_{>0}$
 depending only on $\beta$, $L$, $M$ such that
\begin{align*}
&-\frac{1}{\beta L^2}\log\left(\int e^{J^0(\bU)(\psi)}d\mu_{C_{\le
 0}^{\infty}}(\psi)\right)=J_{end}(\beta,L,h)(\bU),\\
&\ (\forall \bU\in\C^4\text{ with }|U_{\rho}|\le
 f_{\bt}^2(c'(M,c_w)c(\beta,L,M)^4)^{-1},\ (\forall \rho\in \cB)).
\end{align*}
By this equality and Lemma \ref{lem_symmetric_formulation}
 \eqref{item_real_part_symmetric} we can choose constants
 $c_1,h_0\in\R_{>0}$, which may depend on $\beta$, $L$ but are independent of $h$,
 so that for any $\bU\in\overline{D_{\eps}}$, $h\in 2\N/\beta$ with
 $h\ge h_0$, 
\begin{align}
&a_n(\beta,L,h)(\bU)\label{eq_IR_perturbation_series_decomposition}\\
&=\frac{1}{2\pi
 i}\oint_{|z|=c_1}dz\frac{1}{z^{n+1}}\Bigg(-\frac{1}{\beta
 L^2}\log\Bigg(\int e^{J^0(z\bU)(\psi)}d\mu_{C_{\le 0}^{\infty}}(\psi)\Bigg)\notag\\
&\qquad\qquad\qquad\qquad\qquad+\frac{1}{\beta L^2}\log\Bigg(\int
 e^{-V(z\bU)(\psi)}d\mu_{C}(\psi)\Bigg)\Bigg)\notag\\
&\quad -\frac{1}{\beta L^2}\frac{1}{n!}\left.\left(\frac{\partial}{\partial z}\right)^n\log\Bigg(\int
 e^{-V(z\bU)(\psi)}d\mu_{C}(\psi)\Bigg)\right|_{z=0}.\notag
\end{align}
By Lemma \ref{lem_h_L_limit} proved in Appendix \ref{app_h_L_limit} the last term in the right-hand side of
 \eqref{eq_IR_perturbation_series_decomposition} uniformly converges
 with respect to $\bU\in \overline{D_{\eps}}$ as 
we send $h$ to infinity first, then $L$ to infinity next. Moreover, Lemma \ref{lem_symmetric_formulation}
 \eqref{item_logarithm_final_h_estimate} and Proposition
 \ref{prop_UV_integration} \eqref{item_UV_analytic_continuation} imply that
\begin{align*}
&\lim_{h\to \infty\atop h\in
 2\N/\beta}\sup_{\bU\in\overline{D_{\eps}}}\Bigg|
\frac{1}{2\pi
 i}\oint_{|z|=c_1}dz\frac{1}{z^{n+1}}\Bigg(\frac{1}{\beta
 L^2}\log\Bigg(\int e^{J^0(z\bU)(\psi)}d\mu_{C_{\le 0}^{\infty}}(\psi)\Bigg)\\
&\qquad\qquad\qquad\qquad\qquad\qquad\quad-\frac{1}{\beta L^2}\log\Bigg(\int
 e^{-V(z\bU)(\psi)}d\mu_{C}(\psi)\Bigg)\Bigg)\Bigg|\notag\\
&=0.\notag
\end{align*}
Therefore, there exist
 $\{a_n(\beta,L)\}_{n=0}^{\infty}$,
 $\{a_n(\beta)\}_{n=0}^{\infty}\subset C(\overline{D_{\eps}};\C)$
 such that  
\begin{align}
&\lim_{h\to \infty\atop h\in 2\N/\beta}\sup_{\bU\in
 \overline{D_{\eps}}}|a_n(\beta,L,h)(\bU)-a_n(\beta,L)(\bU)|=0,\label{eq_IR_h_convergence}\\
&\lim_{L\to \infty\atop L\in\N}\sup_{\bU\in
 \overline{D_{\eps}}}|a_n(\beta,L)(\bU)-a_n(\beta)(\bU)|=0,\ (\forall n\in\N\cup\{0\}).\label{eq_IR_L_convergence}
\end{align}
Since the right-hand side of \eqref{eq_IR_dominated_uniform_bound} is
 summable with $n$ over $\N\cup\{0\}$, we can apply the dominated convergence
 theorem for $l^1(\N\cup \{0\})$ together with \eqref{eq_IR_h_convergence},
 \eqref{eq_IR_L_convergence} to deduce from
 \eqref{eq_IR_perturbation_series} that 
\begin{align*}
&\lim_{h\to \infty\atop h\in 2\N/\beta}\sup_{\bU\in
 \overline{D_{\eps}}}\left|J_{end}(\beta,L,h)(\bU)-\sum_{n=0}^{\infty}
a_n(\beta,L)(\bU)\right|=0,\\
&\lim_{L\to \infty\atop L\in\N}\sup_{\bU\in
 \overline{D_{\eps}}}\left|\sum_{n=0}^{\infty}a_n(\beta,L)(\bU)-\sum_{n=0}^{\infty}a_n(\beta)(\bU)\right|=0.
\end{align*}
For any compact set $K$ of $\C^4$ with $K\subset D_{IR}$ we can choose $U_0\in
 (0,f_{\bt}^2(c'(M,c_w)\alpha^4)^{-1})$ and $\eps\in\R_{>0}$ so that
 $K\subset \overline{D_{\eps}}$. Thus, the claims
 \eqref{item_IR_h_limit}, \eqref{item_IR_L_limit} have been proved.

\eqref{item_IR_beta_limit}: By sending $h$ and $L$ to infinity in Lemma
 \ref{lem_infrared_integration_final_bounds}
 \eqref{item_IR_final_difference} we obtain 
\begin{align*}
\sup_{\bU\in K}|J(\beta_1)(\bU)-J(\beta_2)(\bU)|\le
 c''(M,c_w)\beta_1^{-\frac{1}{2}}f_{\bt}^{-1}\alpha^{-2},
\end{align*}
for any $\beta_1,\beta_2\in\N$ with $\beta_2\ge \beta_1$, which implies
 that $(J(\beta))_{\beta\in\N}$ is a Cauchy sequence in the
 Banach space $C(K;\C)$. Therefore, the claim \eqref{item_IR_beta_limit}
 holds true.
\end{proof}

Here we can give the proof of Theorem \ref{thm_main_theorem},
admitting lemmas proved in Appendix \ref{app_partition_function}.
\begin{proof}[Proof of Theorem \ref{thm_main_theorem}]
 First of all let
 us assume that Theorem \ref{thm_main_theorem} is true if
 $\max\{t_{h,e},t_{h,o},t_{v,e},t_{v,o}\}=1$. 
Set $t_{max}:=\max\{t_{h,e},t_{h,o},t_{v,e},t_{v,o}\}$.
By the theorem for the Hamiltonian
 $\sH_0/t_{max}+\sV$ there exist a generic constant $c\in\R_{>0}$ and
 continuous functions $F_{\beta,L}(\cdot)$, $F_{\beta}(\cdot)$, $F(\cdot)$  
  $:\overline{D_{\bt/t_{max}}(c)}^4\to\C$ such that $F_{\beta,L}(\cdot)$
is analytic in $D_{\bt/t_{max}}(c)^4$ and 
\begin{align*}
&F_{\beta,L}(\bU)=-\frac{1}{\beta (2L)^2}\log(\Tr e^{-\beta
 (\sH_0/t_{max}+\sV)}),\\
&(\forall \bU\in \overline{D_{\bt/t_{max}}(c)}^4\cap
 \R^4,\beta\in\R_{>0},L\in\N\text{ with }L\ge \beta),\\
&\lim_{L\to \infty\atop
 L\in\N}\sup_{\bz\in\overline{D_{\bt/t_{max}}(c)}^4}|F_{\beta,L}(\bz)-F_{\beta}(\bz)|=0,\
 (\forall \beta\in \R_{>0}),\notag\\
&\lim_{\beta\to \infty\atop
 \beta \in\R_{>0}}
 \sup_{\bz\in\overline{D_{\bt/t_{max}}(c)}^4}|F_{\beta}(\bz)-F(\bz)|=0.
\end{align*}
By replacing $\beta$ by $t_{max}\beta$ and taking into account the
 equality $D_{\bt/t_{max}}(c)=(1/t_{max})D_{\bt}(c)$ we have that
\begin{align*}
&t_{max}F_{t_{max}\beta,L}\left(\frac{1}{t_{max}}\bU\right)=-\frac{1}{\beta
 (2L)^2}\log(\Tr e^{-\beta \sH}),\\
&(\forall \bU\in \overline{D_{\bt}(c)}^4\cap
 \R^4,\beta\in\R_{>0},L\in\N\text{ with }L\ge t_{max}\beta),\\
&\lim_{L\to \infty\atop
 L\in\N}\sup_{\bz\in\overline{D_{\bt}(c)}^4}\left|t_{max}F_{t_{max}\beta,L}\left(\frac{1}{t_{max}}\bz\right)-t_{max} F_{t_{max}\beta}\left(\frac{1}{t_{max}}\bz\right)\right|=0,\\
&(\forall \beta\in \R_{>0}),\notag\\
&\lim_{\beta\to \infty\atop
 \beta \in\R_{>0}}\sup_{\bz\in\overline{D_{\bt}(c)}^4}\left|t_{max}F_{t_{max}\beta}\left(\frac{1}{t_{max}}\bz\right)-t_{max} F\left(\frac{1}{t_{max}}\bz\right)\right|=0.
\end{align*}
Since the functions $F_{t_{max}\beta,L}(\cdot/t_{max})$,
 $F_{t_{max}\beta}(\cdot/t_{max})$, $F(\cdot/t_{max})$ 
are continuous in $\overline{D_{\bt}(c)}^4$ and
 $F_{t_{max}\beta,L}(\cdot/t_{max})$ is analytic in $D_{\bt}(c)^4$,
the claims for the Hamiltonian $\sH$ hold true. Therefore,
 it suffices to prove the theorem on the assumption that $t_{max}=1$.

In the following we assume that $\beta\in\R_{>0}$, $L\in\N$ satisfies
 $L\ge \beta$ and the parameters $M$, $\alpha\in\R_{>0}$ satisfy the
 conditions required in Lemma
 \ref{lem_IR_analytic_continuation_expansion} and Lemma
 \ref{lem_infrared_integration_final_bounds}. By Lemma
 \ref{lem_infrared_integration_convergence} there exist functions
 $J(\beta,L)(\cdot)$, $J(\beta)(\cdot)$, $J(\cdot):D_{IR}\to\C$ such
 that
\begin{align*}
&J(\beta,L)(\bU)=\lim_{h\to\infty\atop h\in
 2\N/\beta}J_{end}(\beta,L,h)(\bU),\\
&J(\beta)(\bU)=\lim_{L\to \infty\atop L\in\N}\lim_{h\to\infty\atop h\in 2\N/\beta}J_{end}(\beta,L,h)(\bU),\\
&J(\bU)=\lim_{\beta\to\infty\atop\beta\in\N}\lim_{L\to \infty\atop
 L\in\N}\lim_{h\to\infty\atop h\in 2\N/\beta}J_{end}(\beta,L,h)(\bU),\
 (\forall \bU\in D_{IR}).
 \end{align*}

\eqref{item_theorem_analytic_continuation}: By Lemma
 \ref{lem_formulation_covariance}, Lemma
 \ref{lem_symmetric_formulation},
 Proposition \ref{prop_UV_integration} \eqref{item_UV_analytic_continuation}
 and Lemma
 \ref{lem_IR_analytic_continuation_expansion} \eqref{item_IR_expansion},
 there exists a constant $c_1\in\R_{>0}$ which may
 depend on $\beta$, $L$ but not on $h$ such that for any $\bU\in \R^4$
 with $|U_{\rho}|\le c_1$ $(\forall \rho\in\cB)$,
\begin{align}
&J(\beta,L)(\bU)\label{eq_h_limit_beta_L_fixed}\\
&=\lim_{h\to \infty\atop h\in 2\N/\beta}\notag\\
&\quad\cdot \Bigg(-\frac{1}{\beta
 L^2}\log\left(\int e^{J^0(\psi)}d\mu_{C_{\le 0}^{\infty}}(\psi)\right)
+\frac{1}{\beta L^2}\log\left(\int
 e^{-V(\psi)}d\mu_{C}(\psi)\right)\Bigg)\notag\\
&\quad +\lim_{h\to \infty\atop h\in 2\N/\beta}\left(-\frac{1}{\beta
 L^2}\log\left(\int e^{-V(\psi)}d\mu_{C}(\psi)\right)\right)\notag\\
&=-\frac{1}{\beta L^2}\log\left(\frac{\Tr e^{-\beta H}}{\Tr e^{-\beta
 H_0}}\right).\notag
\end{align}
By Lemma \ref{lem_infrared_integration_final_bounds}
 \eqref{item_IR_final_analyticity} and Lemma
 \ref{lem_infrared_integration_convergence} \eqref{item_IR_h_limit},
 $\bU\mapsto J(\beta,L)(\bU)$ is analytic in $D_{IR}$. On the other hand,
 by Lemma \ref{lem_analyticity_domain_partition_function} proved in
 Appendix \ref{app_partition_function} there exists a domain $O\subset
 \C^4$ such that $D_{IR}\cap \R^4\subset O$
and $\bU\mapsto \log(\Tr e^{-\beta H})$ is analytic in $O$. Therefore,
 by the identity theorem we obtain that
\begin{align}
J(\beta,L)(\bU)=-\frac{1}{\beta L^2}\log\left(\frac{\Tr e^{-\beta
 H}}{\Tr e^{-\beta H_0}}\right),\ (\forall \bU\in D_{IR}\cap
 \R^4),\label{eq_h_limit_beta_L_fixed_enlarged}
\end{align}
or by Lemma \ref{lem_free_energy_equivalence} and 
Lemma \ref{lem_free_free_energy} proved in Appendix \ref{app_partition_function},
\begin{align}
&-\frac{1}{\beta (2L)^2}\log(\Tr e^{-\beta
 \sH})\label{eq_h_limit_beta_L_fixed_decomposed}\\
&=\frac{1}{4}J(\beta,L)(\bU)-\frac{1}{2\beta L^2}\sum_{\bk\in \G^*}\sum_{p,q\in
 \{1,-1\}}\log(1+e^{-\beta
 X_{p,q}(\bk)}),\notag\\
&(\forall \bU\in D_{IR}\cap
 \R^4),\notag
\end{align}
where $X_{p,q}(\bk)$ $(p,q\in \{1,-1\})$ are the eigen values of $E(\bk)$
 given in \eqref{eq_explicit_free_dispersion_relation}. 
The equality \eqref{eq_h_limit_beta_L_fixed_decomposed} implies that the
 claim \eqref{item_theorem_analytic_continuation} holds for
 $D_{\bt}(c')$ with any $c'\in(0,(c'(M,c_w)\alpha^4)^{-1})$, where 
$c'(M,c_w)(\in\R_{>0})$ is the $M,c_w$-dependent constant appearing in
 the definition of $D_{IR}$ in Lemma
 \ref{lem_IR_analytic_continuation_expansion}
\eqref{item_input_IR_consistency}. 
In the following we fix $c'$ to be
 $(2c'(M,c_w)\alpha^4)^{-1}$ so that $\overline{D_{\bt}(c')}^4\subset
 D_{IR}$.

\eqref{item_theorem_L_limit}: Set
\begin{align*}
&F_{\beta,L}(\bU):=\frac{1}{4}J(\beta,L)(\bU)-\frac{1}{2\beta L^2}\sum_{\bk\in\G^*}\sum_{p,q\in
 \{1,-1\}}\log(1+e^{-\beta
 X_{p,q}(\bk)}),\\
&F_{\beta}(\bU):=\frac{1}{4}J(\beta)(\bU)-\frac{1}{2\beta (2\pi)^2}\sum_{p,q\in
 \{1,-1\}}\int_{[-\pi,\pi]^2}d\bk \log(1+e^{-\beta
 X_{p,q}(\bk)}),\\
 &(\forall \bU\in D_{IR}).
\end{align*}
We can see from Lemma \ref{lem_infrared_integration_convergence}
 \eqref{item_IR_L_limit} and the continuity of the function $\bk\mapsto 
X_{p,q}(\bk)$ that 
\begin{align*}
\lim_{L\to \infty\atop L\in\N}\sup_{\bU\in
 \overline{D_{\bt}(c')}^4}|F_{\beta,L}(\bU)
-F_{\beta}(\bU)|=0,\ (\forall \beta \in\R_{>0}).
\end{align*}
Thus, the claim \eqref{item_theorem_L_limit} is true.

\eqref{item_theorem_beta_limit}: 
By Lemma \ref{lem_beta_integer_limit_partition} proved in Appendix \ref{app_partition_function}, Lemma
 \ref{lem_infrared_integration_final_bounds}
 \eqref{item_IR_final_bound},
\eqref{eq_h_limit_beta_L_fixed_enlarged} and continuity,
\begin{align}
&\left|\frac{1}{\beta L^2}\log\left(\frac{\Tr e^{-\beta
 H}}{\Tr e^{-\beta H_0}}\right)-\frac{1}{[\beta] L^2}\log\left(\frac{\Tr e^{-[\beta]
 H}}{\Tr e^{-[\beta]
 H_0}}\right)\right|\label{eq_beta_gauss_symbol_decomposition}\\
&\le \int_{[\beta]}^{\beta}d\gamma\left|\frac{1}{\gamma^2L^2}\left(\frac{\Tr e^{-\gamma
 H}}{\Tr e^{-\gamma
 H_0}}\right)\right|\notag\\
&\quad+c\left(1+\sup_{\bU\in \overline{D_{IR}}}\|\bU\|_{\C^4}\right)\log\left(\frac{\beta}{[\beta]}\right)\notag\\
&\le
 c\left(c(M)f_{\bt}^{-1}\alpha^{-2}+1+\sup_{\bU\in
 \overline{D_{IR}}}\|\bU\|_{\C^4}\right)\log\left(\frac{\beta}{[\beta]}\right),\notag\\
&(\forall \bU\in \overline{D_{IR}}\cap \R^4),\notag
\end{align}
where $c(M)\in\R_{>0}$ is the $M$-dependent constant appearing in Lemma
 \ref{lem_infrared_integration_final_bounds} \eqref{item_IR_final_bound}.
Set 
\begin{align*}
&F(\bU):=\frac{1}{4}J(\bU)+\frac{1}{2(2\pi)^2}\sum_{p,q\in\{1,-1\}}\int_{[-\pi,\pi]^2}d\bk
 1_{X_{p,q}(\bk)<0}X_{p,q}(\bk),\\
&(\forall \bU\in D_{IR}).
\end{align*}
Lemma \ref{lem_infrared_integration_convergence} implies that the
 function $\bU\mapsto F(\bU)$ is analytic
 in $D_{IR}$. We can derive from \eqref{eq_h_limit_beta_L_fixed_enlarged}
 and
 \eqref{eq_beta_gauss_symbol_decomposition} that for any $\beta\in\R_{\ge 1}$,
\begin{align*}
&\sup_{\bU\in \overline{D_{\bt}(c')}^4\cap \R^4}|F_{\beta}(\bU)-F(\bU)|\\
&\le
 c\left(c(M)f_{\bt}^{-1}\alpha^{-2}+1+\sup_{\bU\in \overline{D_{IR}}}\|\bU\|_{\C^4}
\right)\log\left(\frac{\beta}{[\beta]}\right)\\
&\quad +\sup_{\bU\in \overline{D_{\bt}(c')}^4}|J([\beta])(\bU)-J(\bU)|\\
&\quad +\Bigg|-\frac{1}{2\beta
 (2\pi)^2}\sum_{p,q\in\{1,-1\}}\int_{[-\pi,\pi]^2}d\bk\log(1+e^{-\beta
 X_{p,q}(\bk)})\\
&\qquad\quad-\frac{1}{2
 (2\pi)^2}\sum_{p,q\in\{1,-1\}}\int_{[-\pi,\pi]^2}d\bk1_{X_{p,q}(\bk)<0}X_{p,q}(\bk)\Bigg|.
\end{align*}
Thus, by Lemma \ref{lem_infrared_integration_convergence}
 \eqref{item_IR_beta_limit},
\begin{align}
\lim_{\beta\to \infty\atop \beta \in \R_{>0}}\sup_{\bU\in
 \overline{D_{\bt}(c')}^4\cap
 \R^4}|F_{\beta}(\bU)-F(\bU)|=0.\label{eq_theorem_proof_convergence_real_axis}
\end{align}

To complete the proof, we need to show that 
\begin{align}
\lim_{\beta\to \infty\atop \beta \in \R_{>0}}\sup_{\bU\in
 \overline{D_{\bt}(c')}^4}|F_{\beta}(\bU)-F(\bU)|=0.\label{eq_theorem_proof_convergence_disk}
\end{align}
The convergence property \eqref{eq_theorem_proof_convergence_disk} can
 be proved by a basic argument. However, we present the proof for
 completeness. Note that by Lemma
 \ref{lem_infrared_integration_final_bounds}
 \eqref{item_IR_final_bound},
\begin{align}
\sup_{\bU\in D_{IR}}|F_{\beta}(\bU)|\le \tilde{c},\ (\forall
 \beta\in\R_{\ge 1}),\ \sup_{\bU\in D_{IR}}|F(\bU)|\le
 \tilde{c},\label{eq_theorem_proof_uniform_bounds}
\end{align}
with some constant $\tilde{c}(\in\R_{>0})$ independent of $\beta$.
For any $j\in\{1,2,3,4\}$, $n\in\N\cup\{0\}$,
 $\bU\in\overline{D_{\bt}(c')}^4$, set
\begin{align*}
a_{\beta,j,n}(\bU):=\frac{1}{n!}\left(\frac{\partial}{\partial
 U_j}\right)^{n}F_{\beta}(\bU),\ a_{j,n}(\bU):=\frac{1}{n!}\left(\frac{\partial}{\partial
 U_j}\right)^{n}F(\bU).
\end{align*}
Since $F_{\beta}(\cdot)$, $F(\cdot)$ are analytic in $D_{IR}$,
\begin{align}
&F_{\beta}(\bU)=\sum_{n=0}^{\infty}a_{\beta,j,n}(U_1,\cdots,U_{j-1},0,U_{j+1},\cdots,U_4)U_j^n,\label{eq_theorem_proof_expansion}\\
&F(\bU)=\sum_{n=0}^{\infty}a_{j,n}(U_1,\cdots,U_{j-1},0,U_{j+1},\cdots,U_4)U_j^n,\notag\\
&(\forall j\in\{1,2,3,4\}, \bU\in
 \overline{D_{\bt}(c')}^4,\beta\in\R_{\ge 1}).\notag
\end{align}
Moreover, by \eqref{eq_theorem_proof_uniform_bounds} Cauchy's integral
 formula gives that
\begin{align}
&|a_{\beta,j,n}(U_1,\cdots,U_{j-1},0,U_{j+1},\cdots,U_4)|\le \tilde{c}
\left(\frac{3}{2}c'f_{\bt}^2\right)^{-n},\label{eq_theorem_proof_each_term_bound}\\
&|a_{j,n}(U_1,\cdots,U_{j-1},0,U_{j+1},\cdots,U_4)|\le  \tilde{c}
\left(\frac{3}{2}c'f_{\bt}^2\right)^{-n},\notag\\
&(\forall j\in\{1,2,3,4\},n\in \N\cup
 \{0\},(U_1,\cdots,U_{j-1},U_{j+1},\cdots,U_4)\in\overline{D_{\bt}(c')}^3,\notag\\
&\quad\beta \in \R_{\ge 1}).\notag
\end{align}

Let us prove that
\begin{align}
\lim_{\beta\to \infty\atop
 \beta\in\R_{>0}}\sup_{\bU\in\overline{D_{\bt}(c')}^3\cap\R^3}
|a_{\beta,1,n}(0,\bU)-a_{1,n}(0,\bU)|=0,\ (\forall n\in\N\cup \{0\})
\label{eq_theorem_proof_each_term_convergence_1}
\end{align}
by induction with $n$. By \eqref{eq_theorem_proof_convergence_real_axis},
\begin{align*}
&\lim_{\beta\to \infty\atop
 \beta\in\R_{>0}}\sup_{\bU\in\overline{D_{\bt}(c')}^3\cap\R^3}
|a_{\beta,1,0}(0,\bU)-a_{1,0}(0,\bU)|\\
&=\lim_{\beta\to \infty\atop
 \beta\in\R_{>0}}\sup_{\bU\in\overline{D_{\bt}(c')}^3\cap\R^3}
|F_{\beta}(0,\bU)-F(0,\bU)|=0.
\end{align*}
Next, let us assume that there exists $m\in\N\cup\{0\}$
 such that \eqref{eq_theorem_proof_each_term_convergence_1} holds for
 all $n\in\N\cup\{0\}$ with $n\le m$. Suppose that there exist $\delta
 \in\R_{>0}$ and a sequence $(\beta_l)_{l=1}^{\infty}$ such that $\beta_l\to \infty$ as
 $l\to \infty$ and 
$$\sup_{\bU\in\overline{D_{\bt}(c')}^3\cap\R^3}
|a_{\beta_l,1,m+1}(0,\bU)-a_{1,m+1}(0,\bU)|\ge \delta,\
 (\forall l\in \N).
$$
Then, we can see from \eqref{eq_theorem_proof_expansion},
 \eqref{eq_theorem_proof_each_term_bound} that
\begin{align*}
\delta|U|^{m+1}
&\le \sup_{\bU\in\overline{D_{\bt}(c')}^3\cap\R^3}
|a_{\beta_l,1,m+1}(0,\bU)-a_{1,m+1}(0,\bU)||U|^{m+1}\\
&\le \sup_{\bU\in\overline{D_{\bt}(c')}^3\cap\R^3}
|F_{\beta_l}(|U|,\bU)-F(|U|,\bU)|\\
&\quad +\sum_{n=0}^m\sup_{\bU\in\overline{D_{\bt}(c')}^3\cap\R^3}
|a_{\beta_l,1,n}(0,\bU)-a_{1,n}(0,\bU)||U|^{n}\\
&\quad +
 2\tilde{c}\sum_{n=m+2}^{\infty}\left(\frac{3}{2}c'f_{\bt}^2\right)^{-n}|U|^{n},\ (\forall U\in\overline{D_{\bt}(c')}).
\end{align*}
By \eqref{eq_theorem_proof_convergence_real_axis} and the induction
 hypothesis the first term and the second term in the right-hand side of
 the above inequality converge to 0 as $l\to \infty$. Then, by dividing both sides by
 $|U|^{m+1}$ we obtain that
\begin{align*}
\delta &\le 2\tilde{c}
\sum_{n=m+2}^{\infty}\left(\frac{3}{2}c'f_{\bt}^2\right)^{-n}|U|^{n-m-1}\\
&\le
 2\tilde{c}|U|\left(\frac{3}{2}c'f_{\bt}^2\right)^{-m-2}\sum_{n=m+2}^{\infty}\left(\frac{3}{2}\right)^{-n+m+2},\ (\forall U\in \overline{D_{\bt}(c')}\backslash\{0\}).
\end{align*}
Sending $U$ to $0$ yields $\delta \le 0$, which is a
 contradiction. Thus,
\begin{align*}
\lim_{\beta\to \infty\atop
 \beta\in\R_{>0}}\sup_{\bU\in\overline{D_{\bt}(c')}^3\cap\R^3}
|a_{\beta,1,m+1}(0,\bU)-a_{1,m+1}(0,\bU)|=0. 
\end{align*}
By induction, the convergence property
 \eqref{eq_theorem_proof_each_term_convergence_1} holds true.

It follows from \eqref{eq_theorem_proof_expansion}, 
\eqref{eq_theorem_proof_each_term_bound},
 \eqref{eq_theorem_proof_each_term_convergence_1} and the dominated
 convergence theorem that 
\begin{align}
\lim_{\beta\to \infty\atop \beta\in\R_{>0}}\sup_{U\in
 \overline{D_{\bt}(c')}\atop
\bU\in \overline{D_{\bt}(c')}^3\cap\R^3}|F_{\beta}(U,\bU)-F(U,\bU)|=0.\label{eq_theorem_proof_convergence_real_axis_next}
\end{align}
By using \eqref{eq_theorem_proof_convergence_real_axis_next} in place of 
\eqref{eq_theorem_proof_convergence_real_axis} in an inductive argument
 parallel to that above we can prove that 
\begin{align}
\lim_{\beta\to \infty\atop
 \beta\in\R_{>0}}\sup_{U\in \overline{D_{\bt}(c')}\atop
\bU\in\overline{D_{\bt}(c')}^2\cap\R^2}
|a_{\beta,2,n}(U,0,\bU)-a_{2,n}(U,0,\bU)|=0,\ (\forall n\in\N\cup\{0\}).
\label{eq_theorem_proof_each_term_convergence_2} 
\end{align}
Then, combination of \eqref{eq_theorem_proof_expansion},
 \eqref{eq_theorem_proof_each_term_bound},
 \eqref{eq_theorem_proof_each_term_convergence_2} and the dominated
 convergence theorem concludes that
\begin{align*}
\lim_{\beta\to \infty\atop \beta\in\R_{>0}}\sup_{\bU\in
 \overline{D_{\bt}(c')}^2\atop
\bU'\in \overline{D_{\bt}(c')}^2\cap\R^2}|F_{\beta}(\bU,\bU')-F(\bU,\bU')|=0.
\end{align*}
By repeating this argument twice more we reach
 \eqref{eq_theorem_proof_convergence_disk}.
The proof of the theorem is complete.
\end{proof}

\appendix

\section{The flux phase problem}\label{app_flux_phase}

A sufficient condition to be a minimizer of the flux phase problem for
the half-filled Hubbard model on a square lattice was essentially 
given by Lieb in
\cite{L}. In order to support readers' verification of Corollary
\ref{cor_application_flux_phase}, here we state Lieb's theorem with some
 supplementary arguments concerning the repeated reflection, which are not explicit in
the short letter \cite{L}. Since the proof below is based on the proved
lemmas \cite[\mbox{Lemma}]{L} and \cite[\mbox{Lemma 2.1}]{LL}, it will
present no more than a review to readers who are already familiar with
this subject. Apart from the condition on the flux per plaquette, we
need a condition on the flux through the large circles around the
periodic lattice in order to define a Hamiltonian minimizing the free
energy under the periodic boundary condition. The sufficiency of these
conditions was discussed by Macris and Nachtergaele in
\cite{MN}. The statement of the theorem below is also implied by
\cite[\mbox{Theorem 1.4, Remarks (a)}]{MN}, which provides a
generalized version of Lieb's theorem with an alternative proof. The
proof below merely uses the original key lemmas  \cite[\mbox{Lemma}]{L} and \cite[\mbox{Lemma 2.1}]{LL}.

First let us consider the problem in a general setting.
Assume that the hopping amplitude
$t(\cdot,\cdot):\Z^2\times\Z^2\to \R_{\ge 0}$ and the magnitude of
on-site interaction $U(\cdot):\Z^2\to \R$ satisfy the following.
\begin{align*}
&t(\bx,\by)=t(\by,\bx)=t(\bx+2mL\be_1+2nL\be_2,\by),\\
&t(\bx,\by)=0\text{ if }\bx-\by\neq \be_1,-\be_1,\be_2,-\be_2\text{ in
 }(\Z/2L\Z)^2,\\
&U(\bx)=U(\bx+2mL\be_1+2nL\be_2),\ (\forall \bx,\by\in\Z^2,m,n\in\Z).
\end{align*}
We minimize the free energy with respect to the argument
of the hopping
matrix elements. The argument is represented by a function
$\phi(\cdot,\cdot):\Z^2\times\Z^2\to\R$ satisfying that
\begin{align}
&\phi(\bx,\by)=-\phi(\by,\bx)\text{ in
 }\R/2\pi\Z,\label{eq_phase_condition_appendix}\\
&\phi(\bx+2mL\be_1+2nL\be_2,\by)=\phi(\bx,\by)\text{ in
 }\R/2\pi\Z,\notag\\
&(\forall\bx,\by\in\Z^2,m,n\in\Z).\notag
\end{align}
With $\phi$ satisfying \eqref{eq_phase_condition_appendix}, define the
Hamiltonian $H(\phi)$ on $F_f(L^2(\G(2L)\times\spin))$ by 
\begin{align*}
H(\phi):=&\sum_{\s\in\spin}\sum_{\bx,\by\in\G(2L)}t(\bx,\by)e^{i\phi(\bx,\by)}\psi_{\bx\s}^{*}\psi_{\by\s}\\
&+\sum_{\bx\in\G(2L)}U(\bx)\Bigg(\psi_{\bx\ua}^*\psi_{\bx\da}^*\psi_{\bx\da}\psi_{\bx\ua}-\frac{1}{2}\sum_{\s\in\spin}\psi_{\bx\s}^*\psi_{\bx\s}\Bigg).
\end{align*}
The flux phase problem is to find a phase $\phi$ satisfying
\eqref{eq_phase_condition_appendix} such that 
\begin{align*}
-\frac{1}{\beta}\log(\Tr e^{-\beta H(\phi)}) =\min_{\eta:\Z^2\times\Z^2\to\R\atop\text{ satisfying
 }\eqref{eq_phase_condition_appendix}}\left(-\frac{1}{\beta}\log(\Tr e^{-\beta H(\eta)})
\right).
\end{align*}
For a phase $\phi$, the flux per plaquette
$f_p(\phi)(\cdot):\Z^2\to \R$ is defined by
\begin{align*}
f_p(\phi)(\bx):=&\phi(\bx+\be_1,\bx)+\phi(\bx+\be_1+\be_2,\bx+\be_1)\\
&+\phi(\bx+\be_2,\bx+\be_1+\be_2)+\phi(\bx,\bx+\be_2),\
 (\forall \bx\in\Z^2).
\end{align*}
The flux through horizontal circle $f_h(\phi)(\cdot):\Z\to \R$ 
and the flux through vertical circle $f_v(\phi)(\cdot):\Z\to \R$ 
 are defined by
\begin{align*}
&f_h(\phi)(x):=\sum_{j=0}^{2L-1}\phi((j+1,x),(j,x)),\\
&f_v(\phi)(x):=\sum_{j=0}^{2L-1}\phi((x,j+1),(x,j)),\ (\forall x\in\Z).
\end{align*}

Before stating the theorem, let us confirm the fact that the free energy
depends on a phase $\phi$ only through the flux $f_p(\phi)(\cdot)$,
$f_h(\phi)(\cdot)$, $f_v(\phi)(\cdot)$. Assume that phases
$\phi_1(\cdot,\cdot)$, $\phi_2(\cdot,\cdot):\Z^2\times\Z^2\to \R$
satisfy \eqref{eq_phase_condition_appendix} and 
\begin{align*}
&f_p(\phi_1)(\bx)=f_p(\phi_2)(\bx)\text{ in }\R/2\pi \Z,\ (\forall
 \bx\in\Z^2),\\
&f_h(\phi_1)(x)=f_h(\phi_2)(x)\text{ in }\R/2\pi \Z,\\
&f_v(\phi_1)(x)=f_v(\phi_2)(x)\text{ in }\R/2\pi \Z,\ (\forall
 x\in\Z).
\end{align*}
Our aim is to prove that $\Tr e^{-\beta H(\phi_1)}=\Tr e^{-\beta
H(\phi_2)}$. To reach the conclusion, let us follow a few lemmas.
 In the following let $\|\cdot\|_{\R^2}$ denote the euclidean norm of
 $\R^2$ and $(\phi_1-\phi_2)(\bx,\by)$ denote $\phi_1(\bx,\by)-\phi_2(\bx,\by)$.

\begin{lemma}\label{lem_circuit_to_boundary}
Assume that $n\ge 2$, $\bx_1,\bx_2,\cdots,\bx_n\in\G(2L)$,
 $\|\bx_j-\bx_{j+1}\|_{\R^2}$ $=1$ $(j=1,2,\cdots,n-2)$, 
$\|\bx_{n-1}-\bx_{n}\|_{\R^2}=2L-1$ and $\bx_{n-1}-\bx_n=\pm \be_1,\pm
 \be_2$ in $(\Z/2L\Z)^2$. Then, there exist
 $\by_1,\by_2,\cdots,\by_m\in\G(2L)$ such that $\by_1=\bx_1$,
 $\by_m=\bx_n$,  $\|\by_j-\by_{j+1}\|_{\R^2}=1$ $(j=1,2,\cdots,m-1)$
 and 
\begin{align*}
\sum_{j=1}^{m-1}(\phi_1-\phi_2)(\by_{j+1},\by_j)=\sum_{j=1}^{n-1}(\phi_1-\phi_2)(\bx_{j+1},\bx_j)\text{
 in }\R/2\pi \Z.
\end{align*}
\end{lemma}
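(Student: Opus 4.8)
The plan is to reduce the statement to a single computation with the ``difference phase'' $\psi := \phi_1 - \phi_2$. Because the flux functionals $f_p$, $f_h$, $f_v$ are linear in the phase, the hypotheses imposed on $\phi_1$ and $\phi_2$ in the paragraph preceding the lemma give that $\psi$ satisfies \eqref{eq_phase_condition_appendix} and that $f_h(\psi)(x) = 0$, $f_v(\psi)(x) = 0$ in $\R/2\pi\Z$ for every $x \in \Z$. Since only $\psi$ enters the conclusion, and since the initial segment $(\bx_1, \dots, \bx_{n-1})$ already consists of nearest-neighbour steps in $\Z^2$ (by the hypothesis $\|\bx_j - \bx_{j+1}\|_{\R^2} = 1$ for $j \le n-2$), I would keep $\by_j := \bx_j$ for $j = 1, \dots, n-1$, so that the difference of the two path sums reduces to a difference between $\psi(\bx_n, \bx_{n-1})$ and the sum of $\psi$ along a to-be-chosen interior nearest-neighbour path from $\bx_{n-1}$ to $\bx_n$.

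First I would settle the geometry. The conditions $\bx_{n-1}, \bx_n \in \{0, 1, \dots, 2L-1\}^2$, $\|\bx_{n-1} - \bx_n\|_{\R^2} = 2L-1$ and $\bx_{n-1} - \bx_n = \pm \be_1, \pm \be_2$ in $(\Z/2L\Z)^2$ force one coordinate of $\bx_{n-1}$ and $\bx_n$ to coincide while the other takes the values $0$ and $2L-1$; after exchanging $\be_1 \leftrightarrow \be_2$ and possibly swapping the two endpoints it suffices to treat $\bx_{n-1} = (0, y)$, $\bx_n = (2L-1, y)$. The natural interior path is $\by_{n-1+i} := (i, y)$, $i = 0, 1, \dots, 2L-1$; all of these lie in $\G(2L)$, consecutive ones are nearest neighbours in $\Z^2$, and $\by_{n-1} = \bx_{n-1}$, $\by_{n+2L-2} = \bx_n$, so $m = n + 2L - 2$ and the concatenated sequence $(\by_1, \dots, \by_m)$ has all the required incidence and endpoint properties.

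The heart of the matter is then the identity $\psi(\bx_n, \bx_{n-1}) = \sum_{i=0}^{2L-2} \psi((i+1, y), (i, y))$ in $\R/2\pi\Z$. To obtain it I would write the vanishing of the horizontal flux at height $y$ as $0 = f_h(\psi)(y) = \sum_{i=0}^{2L-2} \psi((i+1, y), (i, y)) + \psi((2L, y), (2L-1, y))$, use the $2L$-periodicity of $\psi$ in its first argument to replace $\psi((2L, y), (2L-1, y))$ by $\psi((0, y), (2L-1, y))$, and then use the antisymmetry \eqref{eq_phase_condition_appendix} to rewrite this last quantity as $-\psi((2L-1, y), (0, y)) = -\psi(\bx_n, \bx_{n-1})$; rearranging gives the identity, hence the two path sums agree modulo $2\pi$. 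The cases where the wrap-around step goes in the $-\be_1$ direction or in the $\pm \be_2$ directions are identical, using $f_v(\psi) \equiv 0$ instead of $f_h(\psi) \equiv 0$ in the vertical cases and reversing the orientation of the interior path when the endpoints are interchanged. There is no real obstacle here beyond this bookkeeping; the only point that requires attention is getting the signs and the orientation of the inserted segment right, so that the unchanged initial part of the sum telescopes exactly as before.
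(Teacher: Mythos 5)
Your proof is correct, but it takes a genuinely different (and in fact leaner) route than the paper's. The paper chooses an \emph{arbitrary} nearest-neighbour path $\by_1,\dots,\by_m$ from $\bx_1$ to $\bx_n$, closes it up against the original path, and then kills the total circuit sum of $\phi_1-\phi_2$ by invoking $f_p(\phi_1-\phi_2)=0$ ``repeatedly'' (to sweep the closed loop through plaquettes) together with one application of $f_h(\phi_1-\phi_2)=0$ or $f_v(\phi_1-\phi_2)=0$; this implicitly rests on decomposing the winding loop into plaquettes plus one large circle. You instead leave the path untouched except at the single wrap-around edge, replace that edge by the explicit straight segment $(i,y)$, $i=0,\dots,2L-1$, and verify the needed identity $\sum_{i=0}^{2L-2}\psi((i+1,y),(i,y))=\psi(\bx_n,\bx_{n-1})$ by a single application of the circle-flux condition together with periodicity and antisymmetry of $\psi=\phi_1-\phi_2$. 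The upshot is that your argument does not use the plaquette condition $f_p(\phi_1-\phi_2)=0$ at all for this lemma, and it avoids the unstated loop-decomposition step; what you give up is only the freedom to choose the replacement path, which the lemma does not require (it is a pure existence statement, and your $\by$'s need not be distinct, which is also not required). Your geometric reduction of the wrap edge to the case $\bx_{n-1}=(0,y)$, $\bx_n=(2L-1,y)$ and the sign bookkeeping for the reversed/vertical cases are handled correctly.
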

\begin{proof}
We can choose $\by_1,\by_2,\cdots,\by_m\in\G(2L)$ satisfying that $\by_1=\bx_1$,
 $\by_m=\bx_n$,  $\|\by_j-\by_{j+1}\|_{\R^2}=1$
 $(j=1,2,\cdots,m-1)$. Then, by using the equality
 $f_p(\phi_1-\phi_2)(\bx)=0$ $(\bx\in\Z^2)$ repeatedly and either
 $f_h(\phi_1-\phi_2)(x)=0$ $(x\in\Z)$ or $f_v(\phi_1-\phi_2)(x)=0$ $(x\in\Z)$
 only once we can deduce that
\begin{align*}
&(\phi_1-\phi_2)(\by_{m-1},\by_m)+(\phi_1-\phi_2)(\by_{m-2},\by_{m-1})+\cdots
 +(\phi_1-\phi_2)(\by_1,\by_2)\\
&+\sum_{j=1}^{n-1}(\phi_1-\phi_2)(\bx_{j+1},\bx_j)=0\text{
 in }\R/2\pi\Z.
\end{align*}
\end{proof}

\begin{lemma}\label{lem_any_circuit}
Assume that $n\ge 2$, $\bx_1,\bx_2,\cdots,\bx_n\in\G(2L)$
and  $\bx_j-\bx_{j+1}=\pm \be_1,\pm \be_2$ in $(\Z/2L\Z)^2$
 $(j=1,2,\cdots,n)$ where $\bx_{n+1}:=\bx_1$.
Then,
$$
\sum_{j=1}^n(\phi_1-\phi_2)(\bx_{j+1},\bx_j)=0\text{ in }\R/2\pi\Z.
$$
\end{lemma}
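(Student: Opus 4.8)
The plan is to reduce an arbitrary closed path in the periodic lattice to a closed nearest-neighbour path in $\Z^2$ and then to exploit the fact that $\phi_1-\phi_2$ is ``curl-free'' modulo $2\pi$. Write $\psi:=\phi_1-\phi_2$; by \eqref{eq_phase_condition_appendix} and the standing hypotheses on $\phi_1,\phi_2$ the function $\psi$ is antisymmetric and $2L$-periodic in $\R/2\pi\Z$, and $f_p(\psi)(\bx)\in 2\pi\Z$, $f_h(\psi)(x)\in 2\pi\Z$, $f_v(\psi)(x)\in 2\pi\Z$ for all $\bx\in\Z^2$, $x\in\Z$. Each step $\bx_j\to\bx_{j+1}$ of the given path has $\bx_j-\bx_{j+1}=\pm\be_1,\pm\be_2$ in $(\Z/2L\Z)^2$, so since $\bx_j,\bx_{j+1}\in\{0,1,\dots,2L-1\}^2$ exactly one coordinate changes, and either $\bx_j-\bx_{j+1}=\pm\be_1,\pm\be_2$ in $\Z^2$ (a genuine nearest-neighbour step) or $\|\bx_j-\bx_{j+1}\|_{\R^2}=2L-1$ (a ``wrap-around'' step). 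For each wrap-around step I would apply Lemma \ref{lem_circuit_to_boundary} to the two-vertex path $(\bx_j,\bx_{j+1})$ — the case $n=2$ of that lemma, where the condition $\|\bx_i-\bx_{i+1}\|_{\R^2}=1$ for $i\le n-2$ is vacuous and the remaining hypotheses $\|\bx_j-\bx_{j+1}\|_{\R^2}=2L-1$ and $\bx_j-\bx_{j+1}=\pm\be_i$ in $(\Z/2L\Z)^2$ hold — obtaining a nearest-neighbour path $\by_1=\bx_j,\by_2,\dots,\by_m=\bx_{j+1}$ with $\sum_{i=1}^{m-1}\psi(\by_{i+1},\by_i)=\psi(\bx_{j+1},\bx_j)$ in $\R/2\pi\Z$. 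Carrying out this replacement for every wrap-around step produces a new closed path in $\G(2L)$ all of whose steps lie in $\{\pm\be_1,\pm\be_2\}\subset\Z^2$, with the total phase difference unchanged modulo $2\pi$; this reduces the claim to the case of a closed nearest-neighbour path in $\Z^2$.

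For such a path I would use simple connectivity of the plane: a closed nearest-neighbour lattice loop in $\Z^2$ is the boundary of a finite integer combination $\sigma=\sum_{\bx}m_{\bx}Q_{\bx}$ of unit plaquettes $Q_{\bx}$ (the square with vertices $\bx,\bx+\be_1,\bx+\be_1+\be_2,\bx+\be_2$ oriented counterclockwise), the coefficient $m_\bx$ being the winding number of the loop about $Q_\bx$. Pairing $\psi$ against this identity and collecting, for each plaquette, the four terms running along $\partial Q_{\bx}$ (using the antisymmetry of $\psi$ to account for edge orientations) yields the purely algebraic identity
\begin{equation*}
\sum_{j=1}^n\psi(\bx_{j+1},\bx_j)=\sum_{\bx}m_{\bx}f_p(\psi)(\bx),
\end{equation*}
an honest equality of real numbers. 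Since $m_{\bx}\in\Z$ and $f_p(\psi)(\bx)=f_p(\phi_1)(\bx)-f_p(\phi_2)(\bx)\in 2\pi\Z$ for every $\bx$, the right-hand side lies in $2\pi\Z$, which is exactly the asserted identity in $\R/2\pi\Z$.

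I do not anticipate a real difficulty. The only point requiring care is the order of the argument: the torus $(\Z/2L\Z)^2$ is not simply connected, so one cannot fill an arbitrary closed loop by plaquettes directly, and the preliminary passage through Lemma \ref{lem_circuit_to_boundary} — which is where the conditions $f_h(\psi),f_v(\psi)\in2\pi\Z$ actually get used — must come first. A minor bookkeeping check is that each replacement keeps the path closed and keeps consecutive vertices at torus-distance $\pm\be_i$; this is immediate because the inserted steps are genuine nearest-neighbour steps in $\Z^2$ and the endpoints $\by_1,\by_m$ coincide with the original vertices $\bx_j,\bx_{j+1}$.
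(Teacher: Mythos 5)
Your proof is correct and follows essentially the same route as the paper: reduce each wrap-around step to a nearest-neighbour path via Lemma~\ref{lem_circuit_to_boundary} (you invoke its $n=2$ case edge-by-edge, the paper applies it segment-by-segment, which is immaterial), and then annihilate the resulting planar nearest-neighbour loop using $f_p(\phi_1-\phi_2)\in 2\pi\Z$ — your winding-number/plaquette filling just makes explicit what the paper dispatches in one sentence. One small correction: since \eqref{eq_phase_condition_appendix} gives antisymmetry of $\phi_1-\phi_2$ only modulo $2\pi$, the cancellation of interior edges in your plaquette identity holds in $\R/2\pi\Z$ rather than as an ``honest equality of real numbers,'' which is all you need anyway.
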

\begin{proof}
If $\|\bx_j-\bx_{j+1}\|_{\R^2}=1$ $(j=1,2,\cdots,n)$, we can prove the
 claimed equality only by using that $f_p(\phi_1-\phi_2)(\bx)=0$
 $(\bx\in\Z^2)$. Let us consider the case that there are
 $j_1,j_2,\cdots,j_l\in\{1,2,\cdots,n\}$ such that $j_1<j_2<\cdots<j_l$, 
$\|\bx_{j_i}-\bx_{j_i+1}\|_{\R^2}=2L-1$ $(i=1,2,\cdots,l)$ and 
$\|\bx_{j}-\bx_{j+1}\|_{\R^2}=1$ $(\forall
 j\in\{1,2,\cdots,n\}\backslash\{j_1,j_2,\cdots,j_l\})$. By Lemma
 \ref{lem_circuit_to_boundary} there exist
 $\by_1,\by_2,\cdots,\by_{m_1}\in\G(2L)$ such that $\by_1=\bx_1$,
 $\by_{m_1}=\bx_{j_1+1}$, $\|\by_j-\by_{j+1}\|_{\R^2}=1$
 ($j=1,2,\cdots,m_1-1$) and 
\begin{align*}
\sum_{j=1}^{m_1-1}(\phi_1-\phi_2)(\by_{j+1},\by_j)=\sum_{j=1}^{j_1}(\phi_1-\phi_2)(\bx_{j+1},\bx_j)\text{
 in }\R/2\pi \Z.
\end{align*}
Then, we can apply Lemma \ref{lem_circuit_to_boundary} to the sequence
 $\by_1,\cdots,\by_{m_1},\bx_{j_1+2},\cdots,$ $\bx_{j_2},\bx_{j_2+1}$. By
 repeating this procedure we conclude that there are
 $\bz_1,\bz_2,$ $\cdots,\bz_m\in\G(2L)$ such that 
 $\|\bz_j-\bz_{j+1}\|_{\R^2}=1$
 ($j=1,2,\cdots,m$), $\bz_{m+1}=\bz_1$ and  
\begin{align*}
\sum_{j=1}^{m}(\phi_1-\phi_2)(\bz_{j+1},\bz_j)=\sum_{j=1}^{n}(\phi_1-\phi_2)(\bx_{j+1},\bx_j)\text{
 in }\R/2\pi \Z.
\end{align*}
We have already seen that the left-hand side of the above equality is
 $0$ (mod $2\pi$) in the beginning. 
\end{proof}
The idea of the next lemma is essentially the same as 
\cite[\mbox{Lemma 2.1}]{LL}.
\begin{lemma}\label{lem_gauge_existence}(\cite[\mbox{Lemma 2.1}]{LL})
There exists a function $\theta(\cdot):\G(2L)\to \R$ such that for any
 $\bx,\by\in\G(2L)$ with $\bx-\by=\pm\be_1,\pm\be_2$ in $(\Z/2L\Z)^2$,
$$
\phi_1(\bx,\by)=\phi_2(\bx,\by)+\theta(\bx)-\theta(\by)\text{
 in }\R/2\pi \Z.
$$
\end{lemma}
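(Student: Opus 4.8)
The plan is to construct $\theta$ as a discrete line integral of the ``closed one-form'' $\phi_1-\phi_2$ and to invoke Lemma \ref{lem_any_circuit} to see that this integral does not depend on the chosen path modulo $2\pi$. First I would fix the base point $\b0\in\G(2L)$ and declare $\theta(\b0):=0$. Given an arbitrary $\bx\in\G(2L)$, the periodic square lattice $(\Z/2L\Z)^2$ is connected under nearest-neighbour steps, so there is a sequence $\b0=\bx_1,\bx_2,\cdots,\bx_n=\bx$ in $\G(2L)$ with $\bx_j-\bx_{j+1}=\pm\be_1,\pm\be_2$ in $(\Z/2L\Z)^2$ for each $j$, and I would set
$$
\theta(\bx):=\sum_{j=1}^{n-1}(\phi_1-\phi_2)(\bx_{j+1},\bx_j)\text{ in }\R/2\pi\Z,
$$
finally choosing any real representative of this class to obtain a genuine function $\theta(\cdot):\G(2L)\to\R$.

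Next I would check well-definedness in $\R/2\pi\Z$. If $\bx_1,\cdots,\bx_n$ and $\by_1,\cdots,\by_m$ are two admissible paths from $\b0$ to $\bx$, concatenating the first with the reversal of the second produces a closed circuit $\bx_1,\cdots,\bx_n=\by_m,\by_{m-1},\cdots,\by_1=\bx_1$ along nearest-neighbour steps; using the antisymmetry \eqref{eq_phase_condition_appendix} to handle the reversed edges (and to cancel any immediate backtrack), Lemma \ref{lem_any_circuit} gives that the total sum of $(\phi_1-\phi_2)$ around this circuit vanishes in $\R/2\pi\Z$, hence the two definitions of $\theta(\bx)$ agree modulo $2\pi$. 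I would note here that $\theta$ itself need not be $2L$-periodic; only the combinations $\theta(\bx)-\theta(\by)$ modulo $2\pi$ enter the statement, and the periodicity of $\phi_1,\phi_2$ and of the flux data is precisely what makes Lemma \ref{lem_any_circuit} (and hence path-independence) available.

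Finally I would verify the identity. For $\bx,\by\in\G(2L)$ with $\bx-\by=\pm\be_1,\pm\be_2$ in $(\Z/2L\Z)^2$, take any admissible path from $\b0$ to $\by$ and append the single edge $\by\to\bx$; this is an admissible path from $\b0$ to $\bx$, so by path-independence
$$
\theta(\bx)=\theta(\by)+(\phi_1-\phi_2)(\bx,\by)\text{ in }\R/2\pi\Z,
$$
which rearranges to $\phi_1(\bx,\by)=\phi_2(\bx,\by)+\theta(\bx)-\theta(\by)$ in $\R/2\pi\Z$.

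The only real obstacle is the well-definedness step, i.e. reducing path-independence to Lemma \ref{lem_any_circuit}; the mild subtlety is the bookkeeping needed to turn the concatenation of a path with a reversed path into a circuit of exactly the form handled by that lemma (removing trivial backtracks and applying \eqref{eq_phase_condition_appendix} to reversed edges). Connectedness of the lattice and the last verification are entirely routine.
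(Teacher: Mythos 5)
Your proof is correct and follows essentially the same route as the paper: the paper defines $\theta((x,y))$ as the sum of $\phi_1-\phi_2$ along the canonical L-shaped path (horizontal to $(x,0)$, then vertical to $(x,y)$) and invokes Lemma \ref{lem_any_circuit} once to verify the edge identity, which is exactly your path-sum construction with a fixed choice of path in place of an explicit path-independence step. The two presentations carry the same content, since Lemma \ref{lem_any_circuit} applies to arbitrary (possibly backtracking) nearest-neighbour circuits.
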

\begin{proof}
For any $(x,y)\in\G(2L)$ set 
\begin{align*}
\theta((x,y)):=&1_{x\ge 1}\sum_{j=0}^{x-1}(\phi_1-\phi_2)((j+1,0),(j,0))\\
&+1_{y\ge 1}
\sum_{j=0}^{y-1}(\phi_1-\phi_2)((x,j+1),(x,j)).
\end{align*}
Then, it follows from Lemma \ref{lem_any_circuit} that 
\begin{align*}
\theta(\bx)+(\phi_1-\phi_2)(\by,\bx)-\theta(\by)=0\text{ in }\R/2\pi \Z.
\end{align*}
\end{proof}
\begin{lemma}\label{lem_gauge_invariance}
$$
\Tr e^{-\beta H(\phi_1)}= \Tr e^{-\beta H(\phi_2)}.
$$
\end{lemma}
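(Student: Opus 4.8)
The plan is to realize the passage from the phase $\phi_2$ to the phase $\phi_1$ as a gauge transformation implemented by a unitary operator on the Fermionic Fock space, in the same spirit as the unitary $A$ used in Remark \ref{rem_half_filled} and the unitary $F$ used in the proof of Lemma \ref{lem_characterization_covariance}. First I would invoke Lemma \ref{lem_gauge_existence} to obtain a function $\theta(\cdot):\G(2L)\to\R$ such that
$$
\phi_1(\bx,\by)=\phi_2(\bx,\by)+\theta(\bx)-\theta(\by)\text{ in }\R/2\pi\Z
$$
for every $\bx,\by\in\G(2L)$ with $\bx-\by=\pm\be_1,\pm\be_2$ in $(\Z/2L\Z)^2$. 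Note that $\theta$ is genuinely real-valued (the construction in the proof of Lemma \ref{lem_gauge_existence} assigns it explicit real values), so $e^{i\theta(\bx)}$ is unambiguously defined even though the displayed identity only holds modulo $2\pi$.

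Next I would define a linear map $U$ on $F_f(L^2(\G(2L)\times\spin))$ by $U\O_{2L}:=\O_{2L}$ and
$$
U\,\psi_{\bx_1\s_1}^*\psi_{\bx_2\s_2}^*\cdots\psi_{\bx_n\s_n}^*\O_{2L}:=e^{-i\sum_{j=1}^n\theta(\bx_j)}\,\psi_{\bx_1\s_1}^*\psi_{\bx_2\s_2}^*\cdots\psi_{\bx_n\s_n}^*\O_{2L}
$$
for all $n\in\N$ and $(\bx_j,\s_j)\in\G(2L)\times\spin$, extended by linearity. This is diagonal in the occupation-number basis with unit-modulus eigenvalues, hence unitary. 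A short computation on the basis vectors then gives $U\psi_{\bx\s}^*U^*=e^{-i\theta(\bx)}\psi_{\bx\s}^*$ and, by taking adjoints, $U\psi_{\bx\s}U^*=e^{i\theta(\bx)}\psi_{\bx\s}$.

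Then I would conjugate $H(\phi_1)$ by $U$ term by term. Each hopping summand transforms as $t(\bx,\by)e^{i\phi_1(\bx,\by)}\psi_{\bx\s}^*\psi_{\by\s}\mapsto t(\bx,\by)e^{i(\phi_1(\bx,\by)-\theta(\bx)+\theta(\by))}\psi_{\bx\s}^*\psi_{\by\s}$, and since $t(\bx,\by)\neq 0$ forces $\bx-\by=\pm\be_1,\pm\be_2$ in $(\Z/2L\Z)^2$, the exponent equals $\phi_2(\bx,\by)$ (mod $2\pi$) by the choice of $\theta$, so the summand becomes $t(\bx,\by)e^{i\phi_2(\bx,\by)}\psi_{\bx\s}^*\psi_{\by\s}$. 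In the on-site part, $\psi_{\bx\ua}^*\psi_{\bx\da}^*\psi_{\bx\da}\psi_{\bx\ua}$ picks up the factor $e^{-2i\theta(\bx)}e^{2i\theta(\bx)}=1$ and each $\psi_{\bx\s}^*\psi_{\bx\s}$ picks up $e^{-i\theta(\bx)}e^{i\theta(\bx)}=1$, so this part is unchanged. Hence $UH(\phi_1)U^*=H(\phi_2)$, and by cyclicity of the trace together with the functional calculus, $\Tr e^{-\beta H(\phi_1)}=\Tr(Ue^{-\beta H(\phi_1)}U^*)=\Tr e^{-\beta UH(\phi_1)U^*}=\Tr e^{-\beta H(\phi_2)}$.

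I do not expect a genuine obstacle here; the only points demanding mild care are that Lemma \ref{lem_gauge_existence} produces $\theta$ only up to the nearest-neighbour relation modulo $2\pi$ — which is harmless, since $\theta$ enters the argument exclusively through the well-defined factors $e^{i\theta(\cdot)}$ — and that although the hopping sum formally ranges over all of $\G(2L)^2$, it receives contributions only from nearest-neighbour pairs, precisely the pairs to which Lemma \ref{lem_gauge_existence} applies. Everything else is the routine verification that $U$ is unitary and satisfies the stated conjugation formulas.
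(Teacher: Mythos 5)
Your proposal is correct and follows essentially the same route as the paper: Lemma \ref{lem_gauge_existence} supplies the gauge function $\theta$, and a diagonal unitary in the occupation-number basis (the paper's $B$, with the phase $e^{+i\sum_j\theta(\bx_j)}$, conjugating $H(\phi_2)$ into $H(\phi_1)$; yours is just the adjoint-direction version) implements the gauge transformation, after which the trace identity is immediate. The only cosmetic difference is that the paper extends $\theta$ periodically to $\Z^2$ before conjugating, whereas you observe directly that only nearest-neighbour pairs of $\G(2L)$ contribute; both handle the same point.
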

\begin{proof}
Let $\theta(\cdot):\G(2L)\to \R$ be the function introduced in Lemma
 \ref{lem_gauge_existence}. We can extend the domain of $\theta(\cdot)$ to $\Z^2$ so that 
\begin{align*}
&\theta(\bx+2mL\be_1+2nL\be_2)=\theta(\bx),\ (\forall
 \bx\in\Z^2,m,n\in\Z),\\
&e^{i\phi_1(\bx,\by)}=e^{i(\phi_2(\bx,\by)+\theta(\bx)-\theta(\by))},\\
&  (\forall \bx,\by\in\Z^2\text{ with
 }\bx-\by=\pm\be_1,\pm\be_2\text{ in }(\Z/2L\Z)^2).
\end{align*}
Let us define the transform $B$ on $F_f(L^2(\G(2L)\times\spin))$ by
\begin{align*}
&B\O_{2L}:= \O_{2L},\\
&B(\psi_{\bx_1\s_1}^*\psi_{\bx_2\s_2}^*\cdots\psi_{\bx_n\s_n}^*\O_{2L})=
e^{i\sum_{j=1}^n\theta(\bx_j)}\psi_{\bx_1\s_1}^*\psi_{\bx_2\s_2}^*\cdots\psi_{\bx_n\s_n}^*\O_{2L},\\
&(\forall (\bx_j,\s_j)\in \G(2L)\times\spin\
 (j=1,2,\cdots,n) )
\end{align*}
and by linearity. The transform $B$ is unitary and satisfies the
 equality $B\sH(\phi_2)B^*=\sH(\phi_1)$, which yields that $\Tr
 e^{-\beta \sH(\phi_1)}=\Tr
 e^{-\beta \sH(\phi_2)}$.
\end{proof}

The following theorem was essentially proved in \cite{L}.
\begin{theorem}\label{thm_Lieb_theorem}(\cite{L})
Assume that $U\in\R$, $t_h,t_v\in\R_{> 0}$ and 
\begin{align*}
&t(\bx,\by)=\left\{\begin{array}{ll} t_h &\text{ if }\bx-\by =
	     \be_1,-\be_1\text{ in }(\Z/2L\Z)^2,\\
t_v &\text{ if }\bx-\by = \be_2,-\be_2\text{ in }(\Z/2L\Z)^2,\\
0 &\text{ otherwise, }
\end{array}\right.\\
&U(\bx)=U,\ (\forall \bx,\by\in\Z^2).
\end{align*}
Moreover, assume that $\phi:\Z^2\times \Z^2\to
 \R$ satisfies \eqref{eq_phase_condition_appendix} and 
\begin{equation}\label{eq_flux_condition_appendix}
\begin{split}
&f_p(\phi)(\bx)=\pi\text{ in }\R/2\pi\Z,\ (\forall
 \bx\in\Z^2),\\
&f_h(\phi)(x)=f_v(\phi)(x)=\pi(L-1)\text{ in }\R/2\pi \Z,\ (\forall x\in\Z).
 \end{split}
\end{equation}
Then,
\begin{align*}
-\frac{1}{\beta}\log(\Tr e^{-\beta H(\phi)}) =\min_{\eta:\Z^2\times\Z^2\to\R\atop\text{ satisfying
 }\eqref{eq_phase_condition_appendix}}\left(-\frac{1}{\beta}\log(\Tr e^{-\beta H(\eta)})
\right).
\end{align*}
\end{theorem}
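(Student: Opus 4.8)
The plan is to derive Theorem \ref{thm_Lieb_theorem} from Lieb's reflection-positivity argument in \cite{L}, combined with the gauge reductions already set up above. First I would use Lemma \ref{lem_gauge_invariance} to observe that the free energy $-\frac{1}{\beta}\log(\Tr e^{-\beta H(\eta)})$ depends on the phase $\eta$ (satisfying \eqref{eq_phase_condition_appendix}) only through the plaquette fluxes $f_p(\eta)(\cdot)$ and the circle fluxes $f_h(\eta)(\cdot),f_v(\eta)(\cdot)$ modulo $2\pi$; since the minimization runs over the compact set $[0,2\pi]^{2(2L)^2}$ a minimizer exists, so it suffices to show that a phase realizing the fluxes in \eqref{eq_flux_condition_appendix} attains the infimum, and then Lemma \ref{lem_gauge_invariance} transfers the conclusion to the specific $\phi$ in the statement. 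Next I would record the half-filled representation of $\Tr e^{-\beta H(\eta)}$ as a functional of the hopping matrix $T_\eta$ with entries $t(\bx,\by)e^{i\eta(\bx,\by)}$ (via the Matsubara or Trotter expansion, using that the quadratic on-site term makes the system half-filled); the particle--hole symmetry at half filling makes this functional real and, crucially, reflection positive with respect to reflections of the torus $(\Z/2L\Z)^2$ across a column or row of sites and across a column or row of bond midpoints, these being well defined precisely because $2L$ is even.

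The heart of the matter is Lieb's key lemma \cite[\mbox{Lemma}]{L}: for each such reflection $R$ and each flux configuration, replacing the hopping on the $R$-side of the reflection plane by the $R$-reflection of the hopping on the opposite side produces a configuration whose free energy is not larger --- a chessboard/reflection estimate for the real, reflection-positive functional above --- and the same holds with the two halves interchanged. Iterating this symmetrization across all vertical and horizontal site- and bond-reflection planes, starting from a minimizer, never leaves the set of minimizers, and the estimate therefore shows that no configuration beats one that is, up to gauge and modulo $2\pi$, invariant under the full reflection group so generated. Reflections through site planes force the plaquette flux to be the same on every plaquette, say $f$; a reflection through a bond plane reverses the orientation of the plaquettes it straddles, forcing $f\equiv-f\pmod{2\pi}$, hence $f\in\{0,\pi\}$; and comparing the two homogeneous configurations via \cite[\mbox{Lemma}]{L} (equivalently, applying the strict form of the estimate) selects $f=\pi$, which is the plaquette condition in \eqref{eq_flux_condition_appendix}.

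It then remains to pin down the circle fluxes. The same symmetrization homogenizes $f_h(\cdot)$ and $f_v(\cdot)$, and since a reflected circuit winds the opposite way it forces $f_h\equiv-f_h$, $f_v\equiv-f_v\pmod{2\pi}$, so $f_h,f_v\in\{0,\pi\}$; which of the two occurs is then fixed by a bookkeeping that relates the total plaquette flux over a width-one annular strip wrapping around the torus to the difference of the two circle fluxes bounding it, together with the way a bond-reflection plane identifies circles on a $2L\times 2L$ torus --- and carrying this out yields $f_h\equiv f_v\equiv\pi(L-1)\pmod{2\pi}$, exactly the condition in \eqref{eq_flux_condition_appendix} (and $\equiv0$ when $L$ is odd). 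The main obstacle I expect is precisely this combination of two things: first, casting the half-filled free-energy functional in a manifestly real, reflection-positive form and verifying the hypotheses of \cite[\mbox{Lemma}]{L} for the relevant site- and bond-reflections of the periodic lattice; and second, the flux bookkeeping around the two non-contractible circles, which is the part not spelled out in the short note \cite{L} and for which \cite[\mbox{Theorem 1.4, Remarks (a)}]{MN} is the precise reference, with \cite[\mbox{Lemma 2.1}]{LL} (essentially Lemma \ref{lem_gauge_existence}) supplying the gauge normalization used throughout.
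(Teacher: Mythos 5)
Your overall architecture is the same as the paper's: a minimizer exists by compactness, gauge transformations (Lemma \ref{lem_gauge_existence}, Lemma \ref{lem_gauge_invariance}) reduce everything to fluxes and let you normalize the bonds crossing a reflection plane to phase zero, and then \cite[\mbox{Lemma}]{L} is iterated so that each symmetrization stays inside the set of minimizers. The paper does exactly this: one reflection across a horizontal bond-midpoint plane, followed by a sweep of $L$ reflections across vertical bond-midpoint planes, producing at the end a minimizer whose fluxes are read off directly.

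The genuine gap is in how you select $f=\pi$ over $f=0$. You argue that bond-plane reflections force $f\equiv -f\pmod{2\pi}$, hence $f\in\{0,\pi\}$, and that the value $\pi$ is then obtained by ``comparing the two homogeneous configurations via \cite[\mbox{Lemma}]{L}'' or a ``strict form of the estimate.'' No such comparison is available: Lieb's lemma is an inequality between a configuration and its two reflection-symmetrizations, not between two different homogeneous flux values, and it is not strict. The actual mechanism — and the entire content of \cite[\mbox{Lemma}]{L} beyond ordinary reflection positivity — is that the symmetrization operator $\Xi$ replaces the hopping phase on the reflected side by $\phi(\Ref(\bx),\Ref(\by))+\pi$, i.e.\ it inserts an extra $\pi$ on every reflected bond. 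After gauge-fixing the bonds crossing the plane to zero, a plaquette straddling the plane then has flux $0+\bigl(\phi(\text{bond})+\pi\bigr)+0+\phi(\text{bond, reversed})=\pi$ exactly, with no case distinction; and a non-contractible circle, containing $L-1$ reflected bonds, acquires flux $\pi(L-1)$. Your annular-strip bookkeeping only shows that all rows carry the same circle flux (since $2L\cdot\pi\equiv 0$), and your parity argument only shows $f\in\{0,\pi\}$; neither determines the value. Without invoking the $+\pi$ insertion in the statement of $\Xi$, the argument cannot distinguish the flux-$0$ and flux-$\pi$ configurations, so this step needs to be repaired by quoting Lieb's lemma in its precise form and tracking the inserted phases through the seam plaquettes and the circles, as the paper does.
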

\begin{proof}
Lemma \ref{lem_gauge_invariance} implies that
 if there exists a minimizer
 $\phi:\Z^2\times\Z^2\to\R$ satisfying
 \eqref{eq_phase_condition_appendix} and \eqref{eq_flux_condition_appendix}, then
 any $\eta:\Z^2\times\Z^2\to\R$ satisfying
 \eqref{eq_phase_condition_appendix} and
 \eqref{eq_flux_condition_appendix} is also a
 minimizer.  Thus, it is sufficient to prove
 the existence of a minimizer satisfying
 \eqref{eq_flux_condition_appendix}.

Since this is a minimization problem of a continuous function defined on
 $[0,2\pi]^{2(2L)^2}$, a minimizer exists. Assume that
 $\phi(\cdot,\cdot):\Z^2\times \Z^2\to \R$ satisfies
 \eqref{eq_phase_condition_appendix} and gives the minimum. 
Set 
\begin{align*}
&\G(2L)_a:=\{(x_1,x_2)\in\G(2L)\ |\ 1\le x_2\le L\},\\
&\G(2L)_b:=\G(2L)\backslash \G(2L)_a.
\end{align*}
Define $\theta(\cdot):\Z^2\to \R$ by
\begin{align*}
&\theta((x,y)):=\left\{\begin{array}{ll}-\phi((x,y),(x,y)+\be_2)&\text{if
		  }y=0\text{ or }L\text{ in }\Z/2L\Z,\\
0&\text{otherwise,}\end{array}
\right.\\
&(\forall (x,y)\in\Z^2).
\end{align*}
Then, define the transform $\tau$ on $F_f(L^2(\G(2L)\times\spin))$ by 
\begin{align*}
&\tau(\psi_{\bx_1\s_1}^*\psi_{\bx_2\s_2}^*\cdots
 \psi_{\bx_n\s_n}^*\O_{2L})\\
&:= e^{i(\theta(\bx_1)+\theta(\bx_2)+\cdots+\theta(\bx_n))}\psi_{\bx_1\s_1}^*\psi_{\bx_2\s_2}^*\cdots
 \psi_{\bx_n\s_n}^*\O_{2L},\\
&(\forall (\bx_j,\s_j)\in\G(2L)\times \spin\
 (j=1,2,\cdots,n)),
\end{align*}
and by linearity. We can see that $\tau$ is unitary,
 $\tau H(\phi)\tau^*=H(\phi')$ with $\phi':\Z^2\times\Z^2\to \R$
 satisfying \eqref{eq_phase_condition_appendix} and that if $y=0$ or $L$
 in $\Z/2L\Z$,
\begin{align}
\phi'((x,y),(x,y)+\be_2)=0,\ (\forall
x\in\Z).\label{eq_horizontal_reflection_condition}
\end{align}
For any $\bx\in \Z^2$ let $\Ref(\bx)$ denote the point of $\Z^2$
 obtained from $\bx$ by reflection with respect to the horizontal line
 $\{(x,1/2)\in\R^2\ |\ x\in\R\}$ in $\R^2$. For conciseness let
 $V(\bx)$ denote the operator 
$$
U\Bigg(\psi_{\bx\ua}^*\psi_{\bx\da}^*\psi_{\bx\da}\psi_{\bx\ua}-\frac{1}{2}\sum_{\s\in\spin}\psi_{\bx\s}^*\psi_{\bx\s}\Bigg).
$$
We decompose $H(\phi')$ as follows.
$$
H(\phi')=H_a+H_b+H_{int},
$$
where 
\begin{align*}
&H_a:=\sum_{\s\in\spin}\sum_{\bx,\by\in\G(2L)_a}t(\bx,\by)e^{i\phi'(\bx,\by)}\psi_{\bx\s}^*\psi_{\by\s}+\sum_{\bx\in\G(2L)_a}V(\bx),\\
&H_b:=\sum_{\s\in\spin}\sum_{\bx,\by\in\G(2L)_b}t(\bx,\by)e^{i\phi'(\bx,\by)}\psi_{\bx\s}^*\psi_{\by\s}+\sum_{\bx\in\G(2L)_b}V(\bx),\\
&H_{int}:=H(\phi')-H_a-H_b.
\end{align*}
Moreover, set
\begin{align*}
&\Xi(H_a):=\sum_{\s\in\spin}\sum_{\bx,\by\in\G(2L)_b}t(\bx,\by)e^{i(\phi'(\Ref(\bx),\Ref(\by))+\pi)}\psi_{\bx\s}^*\psi_{\by\s}+\sum_{\bx\in\G(2L)_b}V(\bx),\\
&\Xi(H_b):=\sum_{\s\in\spin}\sum_{\bx,\by\in\G(2L)_a}t(\bx,\by)e^{i(\phi'(\Ref(\bx),\Ref(\by))+\pi)}\psi_{\bx\s}^*\psi_{\by\s}+\sum_{\bx\in\G(2L)_a}V(\bx).
\end{align*}
Since the property \eqref{eq_horizontal_reflection_condition} holds, we can apply
 \cite[\mbox{Lemma}]{L} concerning the reflection with respect to the
 horizontal line $\{(x,1/2)\in\R^2\ |\ x\in\R\}$.  Since
 $t(\cdot,\cdot)$ is invariant under this reflection, the 
transformation of $H_a$,
 $H_b$ in \cite[\mbox{Lemma}]{L} yields $\Xi(H_a)$, $\Xi(H_b)$
 respectively. The result is that
\begin{align*}
(\Tr e^{-\beta H(\phi')})^2\le \Tr e^{-\beta
 (H_a+\Xi(H_a)+H_{int})}
\Tr e^{-\beta (H_b+\Xi(H_b)+H_{int})}.
\end{align*}
Since $\Tr e^{-\beta H(\phi)}$ is maximum and $\Tr e^{-\beta
 H(\phi)}=\Tr e^{-\beta H(\phi')}$, we can derive from the above
 inequality that 
$$
\Tr e^{-\beta H(\phi)}=\Tr e^{-\beta (H_b+\Xi(H_b)+H_{int})}. 
$$
There exists a phase $\eta:\Z^2\times\Z^2\to\R$ satisfying
 \eqref{eq_phase_condition_appendix} such that for any
 $\bx,\by\in\G(2L)$,
\begin{align}
\eta(\bx,\by)=\left\{\begin{array}{ll} \phi'(\Ref(\bx),\Ref(\by))+\pi
	       &\text{if }\bx,\by\in\G(2L)_a,\\
\phi'(\bx,\by) &\text{otherwise,}
\end{array}
\right.
\label{eq_horizontal_reflection_phase}
\end{align}
and $H_b+\Xi(H_b)+H_{int}=H(\eta)$. By
 \eqref{eq_horizontal_reflection_condition} and
 \eqref{eq_horizontal_reflection_phase} we observe that for any $x\in \Z$
\begin{align*}
&f_v(\eta)(x)=
 \phi'((x,1),(x,0))+\sum_{j=1}^{L-1}(\phi'(\Ref((x,j+1)),\Ref((x,j)))+\pi)\\
&\qquad\qquad\quad+\sum_{j=L}^{2L-1}\phi'((x,j+1),(x,j))\\
&=\sum_{j=1}^{L-1}(\phi'((x,2L-j),(x,2L+1-j))+\pi)
+\sum_{j=L+1}^{2L-1}\phi'((x,j+1),(x,j))\\
&=\pi(L-1)\text{ in }\R/2\pi \Z.
\end{align*}

In the following we repeat the reflection with vertical lines until we
 obtain a phase minimizing the free energy and satisfy the conditions 
\eqref{eq_phase_condition_appendix}, \eqref{eq_flux_condition_appendix}.
For $s\in
 \{0,1,\cdots,L-1\}$ set 
\begin{align*}
&\G(2L)_r^s:=\{(x_1,x_2)\in\G(2L)\ |\ s+1\le x_1\le s+L\},\\
&\G(2L)_l^s:=\G(2L)\backslash \G(2L)_r^s.
\end{align*}
Define $\theta_s(\cdot):\Z^2\to \R$ by
\begin{align*}
&\theta_s((x,y)):=\left\{\begin{array}{ll}-\eta((x,y),(x,y)+\be_1)&\text{if
		  }x=s\text{ in }\Z/2L\Z,\\
0&\text{otherwise,}\end{array}
\right.\\
&(\forall (x,y)\in\Z^2).
\end{align*}
Then, we define the transform $\tau_s(\eta)$ on $F_f(L^2(\G(2L)\times\spin))$ by 
\begin{align*}
&\tau_s(\eta)(\psi_{\bx_1\s_1}^*\psi_{\bx_2\s_2}^*\cdots
 \psi_{\bx_n\s_n}^*\O_{2L})\\
&:= e^{i(\theta_s(\bx_1)+\theta_s(\bx_2)+\cdots+\theta_s(\bx_n))}\psi_{\bx_1\s_1}^*\psi_{\bx_2\s_2}^*\cdots
 \psi_{\bx_n\s_n}^*\O_{2L},\\
&(\forall (\bx_j,\s_j)\in\G(2L)\times \spin\
 (j=1,2,\cdots,n)),
\end{align*}
and by linearity. Remark that $\tau_s(\eta)$ is unitary,
 $\tau_s(\eta)H(\eta)\tau_s(\eta)^*=H(\eta')$ with $\eta':\Z^2\times\Z^2\to \R$
 satisfying \eqref{eq_phase_condition_appendix},
\begin{align}\label{eq_vertical_flux_condition}
f_v(\eta')(x)=\pi(L-1)\text{ in }\R/2\pi \Z,\ (\forall x\in\Z)
\end{align}
 and that if $x=s$ in
 $\Z/2L\Z$,
\begin{align}
\eta'((x,y),(x,y)+\be_1)=0,\ (\forall
y\in\Z).\label{eq_positivity_convention_0}
\end{align}
For any $\bx\in \Z^2$ let $\Ref_{s}(\bx)$ be the point of $\Z^2$
 obtained from $\bx$ by reflection with respect to the line
 $\{(s+1/2,y)\in\R^2\ |\ y\in\R\}$ in $\R^2$. 

When $s=0$, let us decompose $H(\eta')$ as follows.
$$
H(\eta')=H_r^0+H_l^0+H_{int}^0,
$$
where 
\begin{align*}
&H_r^0:=\sum_{\s\in\spin}\sum_{\bx,\by\in\G(2L)_r^0}t(\bx,\by)e^{i\eta'(\bx,\by)}\psi_{\bx\s}^*\psi_{\by\s}+\sum_{\bx\in\G(2L)_r^0}V(\bx),\\
&H_l^0:=\sum_{\s\in\spin}\sum_{\bx,\by\in\G(2L)_l^0}t(\bx,\by)e^{i\eta'(\bx,\by)}\psi_{\bx\s}^*\psi_{\by\s}+\sum_{\bx\in\G(2L)_l^0}V(\bx),\\
&H_{int}^0:=H(\eta')-H_r^0-H_l^0.
\end{align*}
Moreover, set
\begin{align*}
&\Xi(H_r^0):=\sum_{\s\in\spin}\sum_{\bx,\by\in\G(2L)_l^0}t(\bx,\by)e^{i(\eta'(\Ref_0(\bx),\Ref_0(\by))+\pi)}\psi_{\bx\s}^*\psi_{\by\s}+\sum_{\bx\in\G(2L)_l^0}V(\bx),\\
&\Xi(H_l^0):=\sum_{\s\in\spin}\sum_{\bx,\by\in\G(2L)_r^0}t(\bx,\by)e^{i(\eta'(\Ref_0(\bx),\Ref_0(\by))+\pi)}\psi_{\bx\s}^*\psi_{\by\s}+\sum_{\bx\in\G(2L)_r^0}V(\bx).
\end{align*}
The property \eqref{eq_positivity_convention_0} for $s=0$ enables us to apply
 \cite[\mbox{Lemma}]{L} concerning the reflection with respect to the
 line $\{(1/2,y)\in\R^2\ |\ y\in\R\}$.  By the invariant property of $t(\cdot,\cdot)$ the transformation of $H_l^0$,
 $H_r^0$ in \cite[\mbox{Lemma}]{L} gives $\Xi(H_l^0)$, $\Xi(H_r^0)$
 respectively. As the result,
\begin{align*}
(\Tr e^{-\beta H(\eta')})^2\le \Tr e^{-\beta
 (H_l^0+\Xi(H_l^0)+H_{int}^0)}
\Tr e^{-\beta (H_r^0+\Xi(H_r^0)+H_{int}^0)}.
\end{align*}
Since $\Tr e^{-\beta H(\phi)}$ is maximum and $\Tr e^{-\beta
 H(\phi)}=\Tr e^{-\beta H(\eta')}$, the above
 inequality implies that 
$$
\Tr e^{-\beta H(\phi)}=\Tr e^{-\beta (H_l^0+\Xi(H_l^0)+H_{int}^0)}. 
$$
There exists a phase $\phi_0:\Z^2\times\Z^2\to\R$ satisfying
 \eqref{eq_phase_condition_appendix}, \eqref{eq_vertical_flux_condition} such that
 $H_l^0+\Xi(H_l^0)+H_{int}^0=H(\phi_0)$. Moreover, by
 \eqref{eq_positivity_convention_0} for $s=0$, if $x=0$ in $\Z/2L\Z$, 
\begin{align}
f_p(\phi_0)((x,y))\label{eq_flux_condition_0}
&=\eta'((x,y)+\be_1,(x,y))\\
&\quad+\eta'(\Ref_0((x,y)+\be_1+\be_2),\Ref_0((x,y)+\be_1))+\pi\notag\\
&\quad+\eta'((x,y)+\be_2,(x,y)+\be_1+\be_2)+\eta'((x,y),(x,y)+\be_2)\notag\\
&= \eta'((0,y)+\be_2,(0,y))+\pi + \eta'((0,y),(0,y)+\be_2)\notag\\
&=\pi\text{ in }\R/2\pi\Z,\ (\forall y\in\Z).\notag
\end{align}
If $L=1$, the equalities \eqref{eq_flux_condition_0}, \eqref{eq_phase_condition_appendix} imply
 \eqref{eq_flux_condition_appendix}. Thus, $\phi_0$ is the desired minimizer.

Let us assume that $L\ge 2$, $s\in \{0,1,\cdots,L-2\}$ and $\Tr
 e^{-\beta H(\phi)}=\Tr e^{-\beta H(\phi_s)}$ with
 $\phi_s:\Z^2\times\Z^2\to \R$ satisfying
 \eqref{eq_phase_condition_appendix}, \eqref{eq_vertical_flux_condition} and
 that if $x=j$ in $\Z/2L\Z$ for some $j\in \{0,1,\cdots,s\}$, 
\begin{align}
f_p(\phi_s)((x,y))=\pi\text{ in }\R/2\pi\Z,\ (\forall
 y\in\Z).\label{eq_flux_condition_s}
\end{align}
We can write that  
$$
\tau_{s+1+L}(\phi_s)\tau_{s+1}(\phi_s)H(\phi_s)\tau_{s+1}(\phi_s)^*\tau_{s+1+L}(\phi_s)^*=H(\phi_s')
$$
with a phase $\phi_s':\Z^2\times\Z^2\to\R$ satisfying
 \eqref{eq_phase_condition_appendix},
 \eqref{eq_vertical_flux_condition}, \eqref{eq_flux_condition_s} and that if $x=s+1$ or $s+1+L$ in
 $\Z/2L\Z$, 
\begin{align}
\phi_s'((x,y),(x,y)+\be_1)=0,\ (\forall
 y\in\Z).\label{eq_positivity_convention_s}
\end{align}
We can decompose $H(\phi_s')$ as follows.
$$
H(\phi_s')=H_r^{s+1}+H_l^{s+1}+H_{int}^{s+1},
$$
where 
\begin{align*}
&H_r^{s+1}:=\sum_{\s\in\spin}\sum_{\bx,\by\in\G(2L)_r^{s+1}}t(\bx,\by)e^{i\phi_s'(\bx,\by)}\psi_{\bx\s}^*\psi_{\by\s}+\sum_{\bx\in\G(2L)_r^{s+1}}V(\bx),\\
&H_l^{s+1}:=\sum_{\s\in\spin}\sum_{\bx,\by\in\G(2L)_l^{s+1}}t(\bx,\by)e^{i\phi_s'(\bx,\by)}\psi_{\bx\s}^*\psi_{\by\s}+\sum_{\bx\in\G(2L)_l^{s+1}}V(\bx),\\
&H_{int}^{s+1}:=H(\phi_s')-H_r^{s+1}-H_l^{s+1}.
\end{align*}
Define the operators $\Xi(H_r^{s+1})$, $\Xi(H_l^{s+1})$ by 
\begin{align*}
&\Xi(H_r^{s+1})\\
&:=\sum_{\s\in\spin}\sum_{\bx,\by\in\G(2L)_l^{s+1}}t(\bx,\by)e^{i(\phi_s'(\Ref_{s+1}(\bx),\Ref_{s+1}(\by))+\pi)}\psi_{\bx\s}^*\psi_{\by\s}+\sum_{\bx\in\G(2L)_l^{s+1}}V(\bx),\\
&\Xi(H_l^{s+1})\\
&:=\sum_{\s\in\spin}\sum_{\bx,\by\in\G(2L)_r^{s+1}}t(\bx,\by)e^{i(\phi_s'(\Ref_{s+1}(\bx),\Ref_{s+1}(\by))+\pi)}\psi_{\bx\s}^*\psi_{\by\s}+\sum_{\bx\in\G(2L)_r^{s+1}}V(\bx).
\end{align*}
Again the property \eqref{eq_positivity_convention_s} enables us 
to apply \cite[\mbox{Lemma}]{L} concerning the reflection with respect to the
 line $\{(s+3/2,y)\in\R^2\ |\ y\in\R \}$. Since $t(\cdot,\cdot)$ is
 invariant under the reflection, we have that
\begin{align*}
(\Tr e^{-\beta H(\phi_s')})^2\le \Tr e^{-\beta
 (H_l^{s+1}+\Xi(H_l^{s+1})+H_{int}^{s+1})}
\Tr e^{-\beta (H_r^{s+1}+\Xi(H_r^{s+1})+H_{int}^{s+1})}.
\end{align*}
This inequality implies that
$$
\Tr e^{-\beta H(\phi)}=\Tr e^{-\beta (H_l^{s+1}+\Xi(H_l^{s+1})+H_{int}^{s+1})},
$$
because $\Tr e^{-\beta H(\phi)}$ is maximum and $\Tr e^{-\beta
 H(\phi)}=\Tr e^{-\beta H(\phi_s')}$. There exists a phase $\phi_{s+1}:\Z^2\times\Z^2\to\R$ satisfying
 \eqref{eq_phase_condition_appendix} such that
for any $\bx,\by\in\G(2L)$, 
\begin{align}
&\phi_{s+1}(\bx,\by)=\left\{\begin{array}{ll}
\phi_s'(\Ref_{s+1}(\bx),\Ref_{s+1}(\by))+\pi&\text{if
 }\bx,\by\in\G(2L)_r^{s+1},\\
\phi_s'(\bx,\by)&\text{otherwise, }
\end{array}
\right.\label{eq_connection_s_s+1}
\end{align}
and $H(\phi_{s+1})=H_l^{s+1}+\Xi(H_l^{s+1})+H_{int}^{s+1}$. Using
 \eqref{eq_vertical_flux_condition}, 
 \eqref{eq_flux_condition_s}, \eqref{eq_positivity_convention_s} for
 $\phi_s'$ and considering \eqref{eq_connection_s_s+1}, we can check
 that $\phi_{s+1}$ satisfies \eqref{eq_vertical_flux_condition} and
if $x=j$ in $\Z/2L\Z$
 for some $j\in\{0,1,\cdots,s\}$, 
\begin{align*}
f_p(\phi_{s+1})((x,y))=f_p(\phi_{s}')((x,y))=\pi\text{ in
 }\R/2\pi\Z,\ (\forall y\in\Z),
\end{align*}
if $x=s+1$ in $\Z/2L\Z$,
\begin{align*}
&f_p(\phi_{s+1})((x,y))\\
&=\phi_{s}'((x,y)+\be_1,(x,y))\\
&\quad
 +\phi_{s}'(\Ref_{s+1}((x,y)+\be_1+\be_2),\Ref_{s+1}((x,y)+\be_1))+\pi\\
&\quad+\phi_{s}'((x,y)+\be_2,(x,y)+\be_1+\be_2)+\phi_{s}'((x,y),(x,y)+\be_2)\\
&=\phi_{s}'((s+1,y)+\be_2,(s+1,y))+\pi+\phi_{s}'((s+1,y),(s+1,y)+\be_2)\\
&=\pi\text{ in }\R/2\pi\Z,\ (\forall y\in\Z),
\end{align*}
if $x=j$ in $\Z/2L\Z$ for some $j\in
 \{s+2,s+3,\cdots,2s+2\}$, 
\begin{align*}
f_p(\phi_{s+1})((x,y))
&=\phi_{s}'(\Ref_{s+1}((x,y)+\be_1),\Ref_{s+1}((x,y)))\\
&\quad+\phi_{s}'(\Ref_{s+1}((x,y)+\be_1+\be_2),\Ref_{s+1}((x,y)+\be_1))\\
&\quad+\phi_{s}'(\Ref_{s+1}((x,y)+\be_2),\Ref_{s+1}((x,y)+\be_1+\be_2))\\
&\quad+\phi_{s}'(\Ref_{s+1}((x,y)),\Ref_{s+1}((x,y)+\be_2))\\
&=-f_p(\phi_{s}')(\Ref_{s+1}((x,y))-\be_1)\\
&=\pi\text{ in }\R/2\pi\Z,\ (\forall y\in\Z).
\end{align*}
In summary, if $x=j$ in $\Z/2L\Z$ for some $j\in\{0,1,\cdots,2s+2\}$,
$$
f_p(\phi_{s+1})((x,y))=\pi\text{ in }\R/2\pi\Z,\ (\forall y\in\Z).
$$

By induction with $s$ we have that $\Tr e^{-\beta
 H(\phi)}=\Tr e^{-\beta H(\phi_{L-1})}$ with
 a phase $\phi_{L-1}:\Z^2\times\Z^2\to\R$ satisfying
 \eqref{eq_phase_condition_appendix}, \eqref{eq_vertical_flux_condition} and that if $x=j$ in $\Z/2L\Z$ for some
 $j\in \{0,1,\cdots,2L-2\}$,
$$
f_p(\phi_{L-1})((x,y))=\pi\text{ in }\R/2\pi\Z,\ (\forall y\in\Z).
$$
Moreover, by \eqref{eq_positivity_convention_s} and
 \eqref{eq_connection_s_s+1} for $s=L-2$,
if $x=2L-1$ in $\Z/2L\Z$,
\begin{align*}
f_p(\phi_{L-1})((x,y))
&=\phi_{L-2}'((x,y)+\be_1,(x,y))\\
&\quad+\phi_{L-2}'((x,y)+\be_1+\be_2,(x,y)+\be_1)\\
&\quad+\phi_{L-2}'((x,y)+\be_2,(x,y)+\be_1+\be_2)\\
&\quad+\phi_{L-2}'(\Ref_{L-1}((x,y)),\Ref_{L-1}((x,y)+\be_2))+\pi\\
&=\phi_{L-2}'((0,y)+\be_2,(0,y))+\phi_{L-2}'((0,y),(0,y)+\be_2)+\pi\\
&=\pi\text{ in }\R/2\pi\Z,\ (\forall y\in\Z).
\end{align*}
Thus, the phase $\phi_{L-1}$ has the flux $\pi$ (mod $2\pi$) per plaquette.
Furthermore, by \eqref{eq_positivity_convention_s} and
 \eqref{eq_connection_s_s+1} for $s=L-2$,
\begin{align*}
f_h(\phi_{L-1})(x)&=
 \sum_{j=0}^{L-1}\phi_{L-2}'((j+1,x),(j,x))\\
&\quad+\sum_{j=L}^{2L-2}(\phi_{L-2}'(\Ref_{L-1}((j+1,x)),\Ref_{L-1}((j,x)))+\pi)\\
&\quad+\phi_{L-2}'((0,x),(2L-1,x))\\
&= \sum_{j=0}^{L-2}\phi_{L-2}'((j+1,x),(j,x))\\
&\quad+\sum_{j=L}^{2L-2}(\phi_{L-2}'((2L-2-j,x),(2L-1-j,x))+\pi)\\
&=\pi(L-1)\text{ in }\R/2\pi \Z,\ (\forall x\in\Z).
\end{align*}
Therefore, the phase $\phi_{L-1}$ is a minimizer satisfying 
\eqref{eq_flux_condition_appendix}.
\end{proof}

\section{$L^1$-estimates of kernels of Grassmann polynomials}\label{app_grassmann_L1_theory}

Here we prove several lemmas used in Subsection
\ref{subsec_symmetric_formulation}. These lemmas concern estimations of
Grassmann polynomials with respect to the $L^1$-norm $\|\cdot\|_{L^1}$
on their anti-symmetric kernels. Though we know the unique existence of
anti-symmetric kernels, it is not always trivial to characterize the
kernels explicitly. First of all we confirm that we can estimate 
anti-symmetric kernels without characterizing them.
\begin{lemma}\label{lem_L1_norm_comparison}
Assume that $W_m(\psi)\in\cP_m(\bigwedge \cV)$ is written as 
$$
W_m(\psi)=\frah^m\sum_{\bX\in I^m}\hat{W}_m(\bX)\psi_{\bX},
$$
where the function $\hat{W}_m:I^m\to \C$ is not necessarily
 anti-symmetric. Then,
$$
\|W_m\|_{L^1}\le \|\hat{W}_m\|_{L^1}.
$$
\end{lemma}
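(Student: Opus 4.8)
The plan is to reduce everything to the explicit formula for the anti-symmetric kernel of $W_m(\psi)$ in terms of $\hat{W}_m$, namely its anti-symmetrization, and then estimate term by term. First I would note that for each $\sigma\in\S_m$ the map $\bX\mapsto\bX_{\sigma}$ is a bijection of $I^m$ onto itself, so relabelling the summation variable and using $\psi_{\bX_{\sigma}}=\sgn(\sigma)\psi_{\bX}$ gives
$$
W_m(\psi)=\frah^m\sum_{\bX\in I^m}\hat{W}_m(\bX)\psi_{\bX}=\frah^m\sum_{\bX\in I^m}\hat{W}_m(\bX_{\sigma})\sgn(\sigma)\psi_{\bX},\qquad(\forall\sigma\in\S_m).
$$
Averaging these $m!$ identities over $\sigma\in\S_m$ yields
$$
W_m(\psi)=\frah^m\sum_{\bX\in I^m}\Bigg(\frac{1}{m!}\sum_{\sigma\in\S_m}\sgn(\sigma)\hat{W}_m(\bX_{\sigma})\Bigg)\psi_{\bX}.
$$
Since the kernel $\bX\mapsto\frac{1}{m!}\sum_{\sigma\in\S_m}\sgn(\sigma)\hat{W}_m(\bX_{\sigma})$ is anti-symmetric (a routine check using $(\bX_{\tau})_{\sigma}=\bX_{\tau\circ\sigma}$), the uniqueness of anti-symmetric kernels recalled in Subsection \ref{subsec_grassmann_integral} forces $W_m(\bX)=\frac{1}{m!}\sum_{\sigma\in\S_m}\sgn(\sigma)\hat{W}_m(\bX_{\sigma})$.

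Next I would simply apply the triangle inequality inside the definition of $\|\cdot\|_{L^1}$:
$$
\|W_m\|_{L^1}=\frah^m\sum_{\bX\in I^m}\left|\frac{1}{m!}\sum_{\sigma\in\S_m}\sgn(\sigma)\hat{W}_m(\bX_{\sigma})\right|\le\frac{1}{m!}\sum_{\sigma\in\S_m}\frah^m\sum_{\bX\in I^m}\left|\hat{W}_m(\bX_{\sigma})\right|.
$$
For each fixed $\sigma$ the substitution $\bY=\bX_{\sigma}$ is a bijection of $I^m$, so the inner double sum equals $\frah^m\sum_{\bY\in I^m}|\hat{W}_m(\bY)|=\|\hat{W}_m\|_{L^1}$. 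Adding the $m!$ equal contributions and dividing by $m!$ gives $\|W_m\|_{L^1}\le\|\hat{W}_m\|_{L^1}$, which is the claim.

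There is no real obstacle here; the only point that genuinely needs attention is that one must \emph{not} assume $\hat{W}_m$ is already anti-symmetric, so one has to pass through its anti-symmetrization before estimating, after which the permutation-invariance of the counting ``measure'' $\frah^m\sum_{\bX\in I^m}$ does all the work. This comparison will then be used freely in Appendix \ref{app_grassmann_L1_theory} to bound $\|W_m\|_{L^1}$ whenever a Grassmann monomial of degree $m$ is presented through a convenient but non-anti-symmetric kernel.
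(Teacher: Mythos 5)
Your proof is correct and follows essentially the same route as the paper: identify the anti-symmetric kernel of $W_m(\psi)$ as the anti-symmetrization $\frac{1}{m!}\sum_{\sigma\in\S_m}\sgn(\sigma)\hat{W}_m(\bX_{\sigma})$ via uniqueness, then apply the triangle inequality together with the permutation-invariance of the sum over $I^m$. You merely spell out in more detail the derivation of the anti-symmetrization formula, which the paper states directly.
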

\begin{proof}
By the uniqueness of anti-symmetric kernels we have that 
$$
W_m(\bX)=\frac{1}{m!}\sum_{\s\in\S_m}\sgn(\s)\hat{W}_m(\bX_{\s}),\
 (\forall \bX\in I^m).
$$
Thus,
$$
\|W_m\|_{L^1}\le \frac{1}{m!}\sum_{\s\in\S_m}\|\hat{W}_m\|_{L^1}=\|\hat{W}_m\|_{L^1}.
$$
\end{proof}

We summarize necessary bounds on polynomials produced by Grassmann Gaussian
integrals in the next lemma.
\begin{lemma}\label{lem_exponential_general_bound}
Assume that a covariance $A:I_0^2\to \C$ and a covariance $A^{(\eps)}:I_0^2\to \C$
 parameterized by $\eps \in [0,1)$ satisfy that
\begin{align*}
&|\det(A(X_i,Y_j))_{1\le i,j\le n}|\le c_A^n,\\
&|\det(A^{(\eps)}(X_i,Y_j))_{1\le i,j\le n}|\le \eps\cdot c_A^n,\\
&(\forall n\in\N,X_j,Y_j\in I_0\ (j=1,2,\cdots,n),\eps\in [0,1)),
\end{align*}
with a constant $c_A\in\R_{\ge 0}$. With $W(\psi),W^{(1)}(\psi),W^{(2)}(\psi)\in\bigwedge\cV$ set
\begin{align*}
&S(\psi):=\int e^{W(\psi+\psi^1)}d\mu_{A}(\psi^1),\\
&S^{(j)}(\psi):=\int e^{W^{(j)}(\psi+\psi^1)}d\mu_{A}(\psi^1),\ (j=1,2),\\
&S^{(\eps)}(\psi):=\int e^{W(\psi+\psi^1)}d\mu_{A+A^{(\eps)}}(\psi^1),\
 (\eps\in [0,1)).
\end{align*}
Then, the following inequalities hold true.
\begin{enumerate}
\item\label{item_exponential_general_0th}
\begin{align*}
|S_0-e^{W_0}|\le
 e^{|W_0|}\left(e^{\sum_{m=1}^Nc_A^{\frac{m}{2}}\|W_m\|_{L^1}}-1\right).
\end{align*}
\item\label{item_exponential_general_sum}
For any $\alpha \in\R_{\ge 0}$, 
\begin{align*}
\sum_{m=0}^N\alpha^mc_A^{\frac{m}{2}}\|S_m\|_{L^1}\le
 e^{\sum_{m=0}^N(\alpha+1)^mc_A^{\frac{m}{2}}\|W_m\|_{L^1}}.
\end{align*}
\item\label{item_exponential_general_difference}
For any $\alpha \in\R_{\ge 0}$, 
\begin{align*}
&\sum_{m=0}^N\alpha^mc_A^{\frac{m}{2}}\|S_m^{(1)}-S_m^{(2)}\|_{L^1}\notag\\
&\le
 \left(|e^{W_0^{(1)}}-e^{W_0^{(2)}}|+e^{|W_0^{(2)}|}\sum_{m=1}^N(\alpha+1)^mc_A^{\frac{m}{2}}\|W_m^{(1)}-W_m^{(2)}\|_{L^1}\right)\\
&\quad\cdot
 e^{\sup_{j\in\{1,2\}}\sum_{m=1}^N(\alpha+1)^mc_A^{\frac{m}{2}}\|W_m^{(j)}\|_{L^1}}.
\end{align*}
\item\label{item_exponential_general_covariance_difference}
For any $\alpha \in\R_{\ge 0}$, 
\begin{align*}
\sum_{m=0}^N\alpha^mc_A^{\frac{m}{2}}\|S_m-S_m^{(\eps)}\|_{L^1}\le\eps e^{|W_0|} \left(e^{\sum_{m=1}^N(\alpha+2)^mc_A^{\frac{m}{2}}\|W_m\|_{L^1}}-1\right).
\end{align*}
\end{enumerate}
\end{lemma}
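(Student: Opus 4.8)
The plan is to prove all four estimates by one common mechanism: expand the exponential of the input Grassmann polynomial into its finite power series, replace $\psi$ by $\psi+\psi^1$, carry out the Gaussian integral over $\psi^1$ term by term, and bound each resulting monomial using the determinant hypotheses on $A$ and $A^{(\eps)}$ together with Lemma~\ref{lem_L1_norm_comparison}, which allows one to estimate the anti-symmetric kernel of a product by the product of the (non-symmetrized) kernels of the factors, so that no extra combinatorial loss is incurred.

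Concretely, writing $e^{W(\psi+\psi^1)}=e^{W_0}\sum_{n=0}^N\frac{1}{n!}(W(\psi+\psi^1)-W_0)^n$ and expanding $(W-W_0)^n=\sum_{m_1,\dots,m_n\ge1}\prod_{j=1}^nW_{m_j}$, each factor $W_{m_j}(\psi+\psi^1)$ is developed by deciding at each of its $m_j$ legs whether it is an \emph{external} $\psi$ or an \emph{internal} $\psi^1$, producing $\sum_{k_j=0}^{m_j}\binom{m_j}{k_j}$ signed terms with $k_j$ internal legs. The $\psi^1$-integral over a product of monomials with $\sum_jk_j$ internal legs is, up to sign, a $\tfrac12\sum_jk_j$-dimensional determinant of $A$, hence at most $c_A^{\frac12\sum_jk_j}$ in modulus. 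Collecting the external legs (whose number equals the degree $m$ of the output monomial) and using that $\|\cdot\|_{L^1}$ factorizes over tensor products of kernels, one arrives at
\[
\sum_{m=0}^N\alpha^mc_A^{\frac{m}{2}}\|S_m\|_{L^1}\le e^{|W_0|}\sum_{n=0}^N\frac{1}{n!}\Bigl(\sum_{m\ge1}c_A^{\frac{m}{2}}\|W_m\|_{L^1}\sum_{k=0}^m\binom{m}{k}\alpha^{m-k}\Bigr)^n ,
\]
and the identity $\sum_k\binom{m}{k}\alpha^{m-k}=(\alpha+1)^m$ together with $e^{|W_0|}=e^{(\alpha+1)^0c_A^0\|W_0\|_{L^1}}$ gives \eqref{item_exponential_general_sum}. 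Estimate \eqref{item_exponential_general_0th} is the same computation for the constant part: setting $\psi=0$ forces every leg to be internal and $n\ge1$, which isolates exactly $e^{|W_0|}\bigl(e^{\sum_{m\ge1}c_A^{m/2}\|W_m\|_{L^1}}-1\bigr)$.

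For \eqref{item_exponential_general_difference} I would first use the algebraic identity $e^{W^{(1)}}-e^{W^{(2)}}=(e^{W^{(1)}_0}-e^{W^{(2)}_0})\sum_n\frac{1}{n!}(W^{(1)}-W^{(1)}_0)^n+e^{W^{(2)}_0}\sum_n\frac{1}{n!}\sum_{l=0}^{n-1}(W^{(1)}-W^{(1)}_0)^l\bigl((W^{(1)}-W^{(2)})-(W^{(1)}_0-W^{(2)}_0)\bigr)(W^{(2)}-W^{(2)}_0)^{n-1-l}$, whose "new" factor is $\sum_{m\ge1}(W^{(1)}_m-W^{(2)}_m)$; running the same integration and leg-counting, and bounding $\sum_{l=0}^{n-1}a^lb^{n-1-l}\le n\,\sup(a,b)^{n-1}$ with $a,b$ the two exponents, the sum $\frac{1}{n!}\sum_{l=0}^{n-1}$ collapses to $\frac{1}{(n-1)!}\sup^{n-1}$, leaving a factor $e^{\sup_j\sum_m(\alpha+1)^mc_A^{m/2}\|W^{(j)}_m\|_{L^1}}$ times $\sum_m(\alpha+1)^mc_A^{m/2}\|W^{(1)}_m-W^{(2)}_m\|_{L^1}$ and $e^{|W^{(2)}_0|}$ out front, which is the claimed bound. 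Estimate \eqref{item_exponential_general_covariance_difference} is reduced to the previous analysis via the covariance-decomposition identity \eqref{eq_covariance_decomposition_integral_double}: $S^{(\eps)}(\psi)=\int\!\int e^{W(\psi+\psi^1+\psi^2)}d\mu_{A^{(\eps)}}(\psi^2)\,d\mu_A(\psi^1)$, so $S-S^{(\eps)}$ is the sum of all diagrams in which at least one pair of legs is contracted through $A^{(\eps)}$; each such diagram carries a factor $\eps$ by the determinant hypothesis on $A^{(\eps)}$, and a leg now has three fates (external, $A$-contracted, $A^{(\eps)}$-contracted), which replaces the per-leg weight $\alpha+1$ by $\alpha+2$, giving $\eps e^{|W_0|}\bigl(e^{\sum_{m\ge1}(\alpha+2)^mc_A^{m/2}\|W_m\|_{L^1}}-1\bigr)$.

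The point that needs care, and that I would write out in detail, is the bookkeeping in the first step: checking that after substitution, expansion into internal/external legs, and Gaussian integration, a non-anti-symmetric kernel of $S_m(\psi)$ is literally a sum of tensor products of the kernels $W_{m_j}$ and entries of $A$, so that its $L^1$ norm is at most the product of the corresponding norms and Lemma~\ref{lem_L1_norm_comparison} applies with no combinatorial penalty; and tracking the factors $1/h$ concealed in $\frah^m$, in each contraction, and in the definition of $\|\cdot\|_{L^1}$ so that they cancel exactly. Once this is done for \eqref{item_exponential_general_0th} and \eqref{item_exponential_general_sum}, the remaining two parts follow formally by applying the same estimate to the telescoped integrand and to the doubled Gaussian integral, respectively.
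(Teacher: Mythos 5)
Your proposal is correct, and for items \eqref{item_exponential_general_0th}--\eqref{item_exponential_general_difference} it is essentially the paper's proof: the same expansion of $e^{W(\psi+\psi^1)}$ into products of $W_{m_l}$, the same internal/external leg split giving $\sum_{k}\binom{m}{k}\alpha^{k}=(\alpha+1)^m$ per vertex after the determinant bound supplies $c_A^{1/2}$ per internal leg, the same use of Lemma~\ref{lem_L1_norm_comparison} to pass to the anti-symmetric kernel at no cost, and the same telescoping $a^n-b^n=\sum_{l}a^l(a-b)b^{n-1-l}$ collapsing $\frac{1}{n!}\sum_{l=0}^{n-1}$ to $\frac{1}{(n-1)!}$ for the difference estimate.

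The one place where you genuinely diverge is \eqref{item_exponential_general_covariance_difference}. You split the covariance via $\int f(\psi+\psi^1)\,d\mu_{A+A^{(\eps)}}(\psi^1)=\int\!\int f(\psi+\psi^1+\psi^2)\,d\mu_{A^{(\eps)}}(\psi^2)\,d\mu_A(\psi^1)$, identify $S-S^{(\eps)}$ with the diagrams having at least one $A^{(\eps)}$-contraction (each carrying $\eps$ from the hypothesis on $A^{(\eps)}$), and get $(\alpha+2)^m$ from the three fates of a leg. The paper instead stays inside the single-integral expansion and bounds $|\det((A+A^{(\eps)})(X_i,Y_j))-\det(A(X_i,Y_j))|\le\eps\,(2^2c_A)^n$ directly by a Cauchy--Binet expansion (as in \eqref{eq_application_cauchy_binet}), then absorbs the resulting factor $2$ per leg into the binomial sum, $2^{m}\sum_k\binom{m}{k}(\alpha/2)^{k}=(\alpha+2)^{m}$, to reach the same constant. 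Your route is arguably cleaner conceptually (it reuses the leg-counting machinery verbatim and makes the origin of $\alpha+2$ transparent), at the cost of invoking the covariance-decomposition identity; the paper's route is self-contained at the level of determinants and reuses the same Cauchy--Binet manipulation it needs elsewhere (e.g.\ in the proof of Lemma~\ref{lem_h_independent_determinant_bound}). Both yield identical bounds, so either is acceptable.
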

\begin{proof}
By anti-symmetry we have that
\begin{align}
S_m(\psi)=&
 e^{W_0}\Bigg(1_{m=0}+\sum_{n=1}^N\frac{1}{n!}\label{eq_exponential_general_expansion}\\
&\qquad\cdot\prod_{l=1}^n\Bigg(\sum_{m_l=1}^N\frah^{m_l}\sum_{k_l=0}^{m_l}\left(\begin{array}{c}m_l\\ k_l\end{array}\right)
\sum_{\bX_l\in I^{m_l-k_l}}\sum_{\bY_l\in
 I^{k_l}}W_{m_l}(\bX_l,\bY_l)\Bigg)\notag\\
&\qquad \cdot \eps_{\pm}1_{\sum_{l=1}^nk_l=m}\int
 \psi_{\bX_1}^1\psi_{\bX_2}^1\cdots
 \psi_{\bX_n}^1d\mu_{A}(\psi^1)\psi_{\bY_1}
\psi_{\bY_2}\cdots \psi_{\bY_n}\Bigg),\notag
\end{align}
where the factor $\eps_{\pm}\in \{1,-1\}$ depends only on
 $(m_l)_{l=1}^n$, $(k_l)_{l=1}^n$.

\eqref{item_exponential_general_0th}: We can derive from
 \eqref{eq_exponential_general_expansion} that
\begin{align}
|S_0-e^{W_0}|&\le
 e^{|W_0|}\sum_{n=1}^N\frac{1}{n!}\Bigg(\sum_{m=1}^Nc_A^{\frac{m}{2}}\|W_m\|_{L^1}\Bigg)^n\label{eq_general_0th_proof}\\
&=
 e^{|W_0|}\left(e^{\sum_{m=1}^Nc_A^{\frac{m}{2}}\|W_m\|_{L^1}}-1\right).\notag
\end{align}

\eqref{item_exponential_general_sum}: It follows from Lemma
 \ref{lem_L1_norm_comparison} and
 \eqref{eq_exponential_general_expansion} 
that 
\begin{align*}
&\alpha^m c_A^{\frac{m}{2}}\|S_m\|_{L^1}\notag\\
&\le
 e^{|W_0|}\\
&\quad\cdot\Bigg(1_{m=0}+\sum_{n=1}^N\frac{1}{n!}\prod_{l=1}^n\Bigg(\sum_{m_l=1}^Nc_A^{\frac{m_l}{2}}\|W_{m_l}\|_{L^1}\sum_{k_l=0}^{m_l}\left(\begin{array}{c}m_l\\ k_l\end{array}\right)\alpha^{k_l}\Bigg)1_{\sum_{l=1}^nk_l=m}\Bigg).
\end{align*}
Thus,
 \begin{align*}
&\sum_{m=0}^N\alpha^m c_A^{\frac{m}{2}}\|S_m\|_{L^1}\\
&\le
  e^{|W_0|}\Bigg(1+\sum_{n=1}^N\frac{1}{n!}\prod_{l=1}^n\Bigg(\sum_{m_l=1}^Nc_A^{\frac{m_l}{2}}\|W_{m_l}\|_{L^1}\sum_{k_l=0}^{m_l}\left(\begin{array}{c}m_l\\ k_l\end{array}\right)\alpha^{k_l}\Bigg)\Bigg)\\
&\le e^{\sum_{m=0}^N(\alpha+1)^mc_A^{\frac{m}{2}}\|W_m\|_{L^1}}.
\end{align*}

\eqref{item_exponential_general_difference}: From
 \eqref{eq_exponential_general_expansion} we deduce that
\begin{align*}
&\alpha^m c_A^{\frac{m}{2}}\|S_m^{(1)}-S_m^{(2)}\|_{L^1}\\
&\le |e^{W_0^{(1)}}-e^{W_0^{(2)}}|\\
&\quad\cdot \Bigg(1_{m=0} +
\sum_{n=1}^N\frac{1}{n!}
\prod_{l=1}^n\Bigg(\sum_{m_l=1}^Nc_A^{\frac{m_l}{2}}\|W_{m_l}^{(1)}\|_{L^1}
\sum_{k_l=0}^{m_l}\left(\begin{array}{c}m_l\\ k_l\end{array}\right)\alpha^{k_l}
\Bigg)1_{\sum_{l=1}^nk_l= m}\Bigg)\notag\\
&\quad +
 e^{|W_0^{(2)}|}\sum_{n=1}^N\frac{1}{n!}\prod_{l=1}^n\Bigg(\sum_{m_l=1}^Nc_A^{\frac{m_l}{2}}\sum_{k_l=0}^{m_l}\left(\begin{array}{c}m_l\\ k_l\end{array}\right)\alpha^{k_l}\frah^{m_l}\sum_{\bX_{l}\in I^{m_l}}\Bigg)1_{\sum_{l=1}^n k_l= m}\notag\\
&\qquad\qquad\cdot \Bigg|
\prod_{l=1}^nW_{m_l}^{(1)}(\bX_l)-\prod_{l=1}^nW_{m_l}^{(2)}(\bX_l)\Bigg|.\notag\end{align*}
Therefore,
\begin{align*}
&\sum_{m=0}^N\alpha^m c_A^{\frac{m}{2}}\|S_m^{(1)}-S_m^{(2)}\|_{L^1}\\
&\le |e^{W_0^{(1)}}-e^{W_0^{(2)}}|
 e^{\sum_{m=1}^N(\alpha+1)^mc_A^{\frac{m}{2}}\|W_m^{(1)}\|_{L^1}}\\
&\quad
 +e^{|W_0^{(2)}|}\sum_{m=1}^N(\alpha+1)^mc_A^{\frac{m}{2}}\|W_m^{(1)}-W_m^{(2)}\|_{L^1}\\
&\qquad\cdot \sum_{n=1}^N\frac{1}{(n-1)!}\Bigg(\sup_{j\in\{1,2\}}\sum_{m=1}^N(\alpha+1)^mc_A^{\frac{m}{2}}\|W_m^{(j)}\|_{L^1}\Bigg)^{n-1}\\
&\le \Bigg(|e^{W_0^{(1)}}-
 e^{W_0^{(2)}}|+e^{|W_0^{(2)}|}\sum_{m=1}^N(\alpha+1)^mc_A^{\frac{m}{2}}\|W_m^{(1)}-W_m^{(2)}\|_{L^1}\Bigg)\\
&\quad\cdot e^{\sup_{j\in\{1,2\}}\sum_{m=1}^N(\alpha+1)^mc_A^{\frac{m}{2}}\|W_m^{(j)}\|_{L^1}}.
\end{align*}

\eqref{item_exponential_general_covariance_difference}: By applying the
 Cauchy-Binet formula in the same way as in
 \eqref{eq_application_cauchy_binet} and substituting the determinant
 bounds on $A$, $A^{(\eps)}$ we observe that for any $n\in\N$,
 $X_j,Y_j\in I_0$ $(j=1,2,\cdots,n)$, 
\begin{align*}
&|\det((A+A^{(\eps)})(X_i,Y_j))_{1\le i,j\le n}-\det(A(X_i,Y_j))_{1\le
 i,j\le n}|\\
&\le \sum_{\phi:\{1,2,\cdots,n\}\to \{1,2,\cdots,2n\}\atop\text{with } \phi(1)<\phi(2)<\cdots<\phi(n)}1_{\phi(n)>n}\\
&\quad\cdot|\det((A_{(n)},I_n)(i,\phi(j)))_{1\le i,j\le n}|\left|
\det\left(\left(\begin{array}{c}I_n\\A_{(n)}^{(\eps)}\end{array}
\right)(\phi(i),j)\right)_{1\le i,j\le n}\right|\\
&\le \sum_{m=0}^{n-1}\sum_{\phi:\{1,2,\cdots,n\}\to
 \{1,2,\cdots,2n\}\atop \text{with }\phi(1)<\phi(2)<\cdots<\phi(n)}1_{\phi(m)\le
 n<\phi(m+1)}c_A^m\eps c_A^{n-m}\le \eps(2^2c_A)^n,
\end{align*}
where 
\begin{align*}
&A_{(n)}(i,j):=A(X_i,Y_j),\
 A_{(n)}^{(\eps)}(i,j):=A^{(\eps)}(X_i,Y_j),\ (\forall i,j\in \{1,2,\cdots,n\}),\\
&\phi(0):=n.
\end{align*}
By using this inequality and Lemma \ref{lem_L1_norm_comparison} 
we can derive from
 \eqref{eq_exponential_general_expansion} that for any $m\in
 \{0,1,\cdots,N\}$,
\begin{align*}
&\alpha^mc_A^{\frac{m}{2}}\|S_m-S_m^{(\eps)}\|_{L^1}\\
&\le \alpha^mc_A^{\frac{m}{2}} e^{|W_0|}\sum_{n=1}^N\frac{1}{n!}\prod_{l=1}^n\\
&\quad\cdot\Bigg(\sum_{m_l=1}^N
\frah^{m_l}\sum_{k_l=0}^{m_l}\left(\begin{array}{c}m_l\\ k_l\end{array}\right)
\sum_{\bX_{l}\in I^{m_l-k_l}}\sum_{\bY_l\in
 I^{k_l}}|W_{m_l}(\bX_l,\bY_l)|\Bigg)1_{\sum_{l=1}^nk_l= m}\\
&\quad \cdot \Bigg|\int \psi_{\bX_1}^1\psi_{\bX_2}^1\cdots
 \psi_{\bX_n}^1d\mu_A(\psi^1)- \int \psi_{\bX_1}^1\psi_{\bX_2}^1\cdots
 \psi_{\bX_n}^1d\mu_{A+A^{(\eps)}}(\psi^1)\Bigg|\\
&\le \eps
 e^{|W_0|}\sum_{n=1}^N\frac{1}{n!}\prod_{l=1}^n\Bigg(
\sum_{m_l=1}^N(2^2c_A)^{\frac{m_l}{2}}\|W_{m_l}\|_{L^1}
\sum_{k_l=0}^{m_l}\left(\begin{array}{c}m_l\\ k_l\end{array}\right)2^{-k_l}\alpha^{k_l}\Bigg) 1_{\sum_{l=1}^nk_l= m},
\end{align*}
which implies that
\begin{align*}
\sum_{m=0}^N\alpha^mc_A^{\frac{m}{2}}\|S_m-S_m^{(\eps)}\|_{L^1}
&\le \eps e^{|W_0|}\sum_{n=1}^N\frac{1}{n!}
\left(\sum_{m=1}^N(\alpha+2)^mc_A^{\frac{m}{2}}\|W_m\|_{L^1}\right)^n\\
&\le \eps
 e^{|W_0|}\left(e^{\sum_{m=1}^N(\alpha+2)^mc_A^{\frac{m}{2}}\|W_m\|_{L^1}}-1\right).
\end{align*}
\end{proof}

We also need upper bounds on logarithm of Grassmann polynomials.
\begin{lemma}\label{lem_logarithm_general_bound}
With $W(\psi),W^{(1)}(\psi),W^{(2)}(\psi)\in\bigwedge \cV$ satisfying
 $|W_0-1|$, $|W_0^{(j)}-1|<1$ $(j=1,2)$ set $Q(\psi):=\log W(\psi)$,
 $Q^{(j)}(\psi):=\log W^{(j)}(\psi)$ $(j=1,2)$. Then, the following
 inequalities hold.
\begin{enumerate}
\item\label{item_logarithm_general_0th}
$$
|Q_0|\le - \log (1-|W_0-1|).
$$
\item\label{item_logarithm_general_sum}
For any $\alpha\in\R_{\ge 0}$ satisfying
\begin{align} 
|W_0|^{-1}\sum_{m=1}^N\alpha^m\|W_m\|_{L^1}<1,\label{eq_one_assumption_logarithm}
\end{align}
\begin{align*}
\sum_{m=1}^N\alpha^m\|Q_m\|_{L^1}\le -\log
 \Bigg(1-|W_0|^{-1}\sum_{m=1}^N\alpha^m\|W_m\|_{L^1}\Bigg).
\end{align*}
\item\label{item_logarithm_general_0th_difference}
$$
|Q_0^{(1)}-Q_0^{(2)}|\le |\log(W_0^{(1)})-\log(W_0^{(2)})|.
$$
\item\label{item_logarithm_general_difference}
For any $\alpha\in\R_{\ge 0}$ satisfying 
\begin{align}
\sup_{j\in\{1,2\}}\sum_{m=1}^N\alpha^m\|W_m^{(j)}\|_{L^1}<\inf_{j\in\{1,2\}}|W_0^{(j)}|,\label{eq_logarithm_inf_sup_condition}
\end{align}
\begin{align*}
&\sum_{m=1}^N\alpha^m\|Q_m^{(1)}-Q_m^{(2)}\|_{L^1}\\
&\le \Bigg(1-\left(\inf_{j\in\{1,2\}}|W_0^{(j)}|\right)^{-1}\sup_{j\in
 \{1,2\}}\sum_{m=1}^N\alpha^m\|W_m^{(j)}\|_{L^1}\Bigg)^{-1}\\
&\quad\cdot |W_0^{(1)}W_0^{(2)}|^{-1}\sum_{m=0}^N\alpha^m\|W_m^{(1)}\|_{L^1}
\sum_{n=0}^N\alpha^n\|W_n^{(1)}-W_n^{(2)}\|_{L^1}.
\end{align*}
\end{enumerate}
\end{lemma}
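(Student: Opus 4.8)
The plan is to prove each of the four inequalities by reducing $\log$ of a Grassmann polynomial to its defining power series and then using the multiplicativity of $\|\cdot\|_{L^1}$ on products of Grassmann monomials together with the $L^1$-triangle inequality. Recall that for $W(\psi)$ with $W_0\in\C\backslash\R_{\le 0}$ we have, by definition,
$$
\log W(\psi)=\log W_0+\sum_{n=1}^N\frac{(-1)^{n-1}}{n}\left(\frac{W(\psi)-W_0}{W_0}\right)^n,
$$
so that $Q_0=\log W_0$ and, for $m\ge 1$, $Q_m(\psi)=\cP_m$ of the above sum, with only the degree-$\ge 1$ part $W(\psi)-W_0$ entering. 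The key elementary fact I will use repeatedly is that for Grassmann polynomials $f,g$ one has $\|\cP_m(fg)\|_{L^1}\le\sum_{p+q=m}\|f_p\|_{L^1}\|g_q\|_{L^1}$, hence $\alpha^m\|\cP_m(f^n)\|_{L^1}$ summed over $m$ is bounded by $\big(\sum_{m}\alpha^m\|f_m\|_{L^1}\big)^n$; this is the Grassmann analogue of the submultiplicativity of a weighted $\ell^1$ norm under convolution.

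For \eqref{item_logarithm_general_0th} I would simply observe $|Q_0|=|\log W_0|\le\sum_{n\ge 1}\frac1n|W_0-1|^n=-\log(1-|W_0-1|)$, using $|W_0-1|<1$. For \eqref{item_logarithm_general_0th_difference} the statement is even more immediate: $|Q_0^{(1)}-Q_0^{(2)}|=|\log(W_0^{(1)})-\log(W_0^{(2)})|$ is an identity, given the convention for $\log$ on $\C\backslash\R_{\le 0}$. For \eqref{item_logarithm_general_sum}, set $f(\psi):=(W(\psi)-W_0)/W_0$, so $f_0=0$ and $\sum_{m\ge 1}\alpha^m\|f_m\|_{L^1}=|W_0|^{-1}\sum_{m\ge1}\alpha^m\|W_m\|_{L^1}=:t<1$ by hypothesis \eqref{eq_one_assumption_logarithm}. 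Then $Q_m(\psi)=\sum_{n\ge1}\frac{(-1)^{n-1}}{n}\cP_m(f^n)$ for $m\ge1$ (the $\log W_0$ term contributes only to $Q_0$), so
$$
\sum_{m=1}^N\alpha^m\|Q_m\|_{L^1}\le\sum_{n=1}^N\frac1n\sum_{m=1}^N\alpha^m\|\cP_m(f^n)\|_{L^1}\le\sum_{n=1}^N\frac1n t^n\le-\log(1-t),
$$
which is the claim.

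The last inequality \eqref{item_logarithm_general_difference} is the one that requires real work. Here I would write $Q^{(1)}(\psi)-Q^{(2)}(\psi)$ using $\log W^{(j)}(\psi)=\log W_0^{(j)}+\sum_{n\ge1}\frac{(-1)^{n-1}}{n}(f^{(j)})^n$ with $f^{(j)}:=(W^{(j)}-W_0^{(j)})/W_0^{(j)}$, and then telescope the difference $(f^{(1)})^n-(f^{(2)})^n=\sum_{k=0}^{n-1}(f^{(1)})^k\big(f^{(1)}-f^{(2)}\big)(f^{(2)})^{n-1-k}$. The degree-$\ge1$ parts satisfy $\sum_{m\ge1}\alpha^m\|f^{(j)}_m\|_{L^1}\le(\inf_j|W_0^{(j)}|)^{-1}\sup_j\sum_{m\ge1}\alpha^m\|W_m^{(j)}\|_{L^1}<1$ by \eqref{eq_logarithm_inf_sup_condition}, and the crucial bound on the weighted $L^1$-norm of $f^{(1)}-f^{(2)}$ is obtained from the identity
$$
f^{(1)}-f^{(2)}=\frac{W^{(1)}-W^{(2)}}{W_0^{(1)}}+\frac{(W_0^{(2)}-W_0^{(1)})W^{(2)}}{W_0^{(1)}W_0^{(2)}},
$$
giving $\sum_{m\ge0}\alpha^m\|(f^{(1)}-f^{(2)})_m\|_{L^1}\le|W_0^{(1)}W_0^{(2)}|^{-1}\sum_{m\ge0}\alpha^m\|W_m^{(1)}\|_{L^1}\sum_{n\ge0}\alpha^n\|W_n^{(1)}-W_n^{(2)}\|_{L^1}$ after collecting terms (with the $W_0^{(1)}$ appearing in the numerator of the first piece absorbed into the sum over $W^{(1)}$). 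Summing the telescoped expansion and bounding each factor by the appropriate geometric series, the factor $\sum_{n\ge1}\frac1n\cdot n\,t^{n-1}=\sum_{n\ge1}t^{n-1}=(1-t)^{-1}$ emerges, which produces exactly the prefactor $\big(1-(\inf_j|W_0^{(j)}|)^{-1}\sup_j\sum_{m\ge1}\alpha^m\|W_m^{(j)}\|_{L^1}\big)^{-1}$. The main obstacle I anticipate is bookkeeping: correctly separating the degree-$0$ contributions (which only affect $Q_0$, not $Q_m$ for $m\ge1$) from the degree-$\ge1$ parts throughout the telescoping, and checking that the geometric-series bounds remain valid under the truncation at degree $N$ — but since all series are finite sums dominated by their infinite counterparts, no convergence issue arises, and the argument is purely algebraic once the submultiplicativity lemma is in place.
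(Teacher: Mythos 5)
Your proposal follows essentially the same route as the paper: expand $\log W$ as a power series in $(W-W_0)/W_0$, use the submultiplicativity of the weighted $L^1$ norm (which in the paper is supplied by the comparison of anti-symmetric kernels with non-anti-symmetric representations, Lemma \ref{lem_L1_norm_comparison}), sum the resulting geometric series, and for part \eqref{item_logarithm_general_difference} telescope the difference of $n$-th powers to extract the factor $\sum_{n\ge 1}t^{n-1}=(1-t)^{-1}$. Parts \eqref{item_logarithm_general_0th}--\eqref{item_logarithm_general_0th_difference} are fine as written.

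One concrete bookkeeping point in part \eqref{item_logarithm_general_difference}: the decomposition you chose,
$$
f^{(1)}-f^{(2)}=\frac{W^{(1)}-W^{(2)}}{W_0^{(1)}}+\frac{(W_0^{(2)}-W_0^{(1)})W^{(2)}}{W_0^{(1)}W_0^{(2)}},
$$
produces, after taking weighted $L^1$ norms, the quantities $|W_0^{(2)}|\,|W_0^{(1)}W_0^{(2)}|^{-1}\sum_{m\ge 1}\alpha^m\|W_m^{(1)}-W_m^{(2)}\|_{L^1}$ and $|W_0^{(1)}W_0^{(2)}|^{-1}|W_0^{(1)}-W_0^{(2)}|\sum_{m\ge 1}\alpha^m\|W_m^{(2)}\|_{L^1}$, i.e.\ with $W^{(2)}$ where the stated bound has $W^{(1)}$; since $\|W_m^{(2)}\|_{L^1}$ cannot in general be dominated by $\|W_m^{(1)}\|_{L^1}$, this does not literally yield the claimed right-hand side. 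The fix is to telescope in the other orientation,
$$
f^{(1)}-f^{(2)}=\frac{W^{(1)}(W_0^{(2)}-W_0^{(1)})}{W_0^{(1)}W_0^{(2)}}+\frac{W^{(1)}-W^{(2)}}{W_0^{(2)}},
$$
which gives exactly the $(m\ge 1,n=0)$ and $(m=0,n\ge 1)$ pieces of the product $\sum_{m=0}^N\alpha^m\|W_m^{(1)}\|_{L^1}\sum_{n=0}^N\alpha^n\|W_n^{(1)}-W_n^{(2)}\|_{L^1}$ divided by $|W_0^{(1)}W_0^{(2)}|$; this is the splitting the paper's proof implicitly uses. With that one-line correction your argument goes through.
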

\begin{proof}

\eqref{item_logarithm_general_0th},\eqref{item_logarithm_general_0th_difference}: Since $Q_0=\log W_0$,
 $Q_0^{(j)}=\log W_0^{(j)}$ $(j=1,2)$, the claimed inequalities are true.

\eqref{item_logarithm_general_sum}: Note that for any $m\in
 \{1,2,\cdots,N\}$,
\begin{align}
Q_m(\psi)=&\sum_{n=1}^m\frac{(-1)^{n-1}}{n}W_0^{-n}\prod_{l=1}^n
\Bigg(\sum_{m_l=1}^N\frah^{m_l}\sum_{\bX_l\in I^{m_l}}W_{m_l}(\bX_l)\Bigg)\label{eq_logarithm_general_expansion}\\
&\quad\cdot\psi_{\bX_1}\psi_{\bX_2}\cdots\psi_{\bX_n}1_{\sum_{l=1}^nm_l=m}.\notag
\end{align}
Thus, it follows from Lemma \ref{lem_L1_norm_comparison} and the
 assumption \eqref{eq_one_assumption_logarithm} that
\begin{align*}
\sum_{m=1}^N\alpha^m\|Q_m\|_{L^1}
&\le \sum_{m=1}^N\sum_{n=1}^m\frac{1}{n}|W_0|^{-n}\prod_{l=1}^n
\Bigg(\sum_{m_l=1}^N\alpha^{m_l}\|W_{m_l}\|_{L^1}\Bigg)1_{\sum_{l=1}^nm_l=m}\\
&\le
 \sum_{n=1}^N\frac{1}{n}|W_0|^{-n}\Bigg(\sum_{m=1}^N\alpha^{m}\|W_{m}\|_{L^1}\Bigg)^n\\
&\le -\log\Bigg(1-|W_0|^{-1}\sum_{m=1}^N\alpha^m\|W_m\|_{L^1}\Bigg).
\end{align*}

\eqref{item_logarithm_general_difference}: By
 \eqref{eq_logarithm_general_expansion} and Lemma
 \ref{lem_L1_norm_comparison} we have that 
\begin{align*}
&\|Q_m^{(1)}-Q_m^{(2)}\|_{L^1}\\
&\le \sum_{n=1}^m\frac{1}{n}|W_0^{(1)-n}-W_0^{(2)-n}|\prod_{l=1}^n
\Bigg(\sum_{m_l=1}^N\|W_{m_l}^{(1)}\|_{L^1}\Bigg)1_{\sum_{l=1}^nm_l=m}\\
&\quad+ \sum_{n=1}^m\frac{1}{n}|W_0^{(2)}|^{-n}
\prod_{l=1}^n
\Bigg(\sum_{m_l=1}^N\frah^{m_l}\sum_{\bX_l\in I^{m_l}}\Bigg)\\
&\qquad\qquad\cdot \Bigg|\prod_{j=1}^nW_{m_j}^{(1)}(\bX_j)-\prod_{j=1}^nW_{m_j}^{(2)}(\bX_j)
\Bigg|1_{\sum_{l=1}^nm_l=m}.
\end{align*}
Therefore, on the assumption \eqref{eq_logarithm_inf_sup_condition}, 
\begin{align*}
&\sum_{m=1}^N\alpha^m\|Q_m^{(1)}-Q_m^{(2)}\|_{L^1}\\
&\le \sum_{n=1}^N\sum_{m=n}^N\frac{1}{n}|W_0^{(1)-n}-W_0^{(2)-n}|\prod_{l=1}^n
\Bigg(\sum_{m_l=1}^N\alpha^{m_l}\|W_{m_l}^{(1)}\|_{L^1}\Bigg)1_{\sum_{l=1}^nm_l=m}\\
&\quad+ \sum_{n=1}^N\sum_{m=n}^N\frac{1}{n}|W_0^{(2)}|^{-n}
\prod_{l=1}^n
\Bigg(\sum_{m_l=1}^N\alpha^{m_l}\frah^{m_l}\sum_{\bX_l\in I^{m_l}}\Bigg)\\
&\qquad\qquad\cdot \Bigg|\prod_{j=1}^nW_{m_j}^{(1)}(\bX_j)-\prod_{j=1}^nW_{m_j}^{(2)}(\bX_j)
\Bigg|1_{\sum_{l=1}^nm_l=m}\\
&\le \sum_{n=1}^N\frac{1}{n}|W_0^{(1)-n}-W_0^{(2)-n}|
\Bigg(\sum_{m=1}^N\alpha^{m}\|W_{m}^{(1)}\|_{L^1}\Bigg)^n\\
&\quad+ \sum_{n=1}^N\frac{1}{n}|W_0^{(2)}|^{-n}
\prod_{l=1}^n
\Bigg(\sum_{m_l=1}^N\alpha^{m_l}\frah^{m_l}\sum_{\bX_l\in I^{m_l}}\Bigg)\\
&\qquad\qquad\cdot \Bigg|\prod_{j=1}^nW_{m_j}^{(1)}(\bX_j)-\prod_{j=1}^nW_{m_j}^{(2)}(\bX_j)
\Bigg|\\
&\le |W_0^{(1)-1}-W_0^{(2)-1}|\sum_{n=1}^N\left(\inf_{j\in\{1,2\}}|W_0^{(j)}|\right)^{-n+1}\Bigg(\sum_{m=1}^N\alpha^{m}\|W_{m}^{(1)}\|_{L^1}\Bigg)^n\\
&\quad
 +\sum_{m=1}^N\alpha^m\|W_{m}^{(1)}-W_{m}^{(2)}\|_{L^1}\sum_{n=1}^N|W_0^{(2)}|^{-n}\Bigg(\sup_{j\in\{1,2\}}\sum_{k=1}^N\alpha^{k}\|W_{k}^{(j)}\|_{L^1}\Bigg)^{n-1}\\
&\le  \Bigg(|W_0^{(1)-1}-W_0^{(2)-1}|\sum_{m=1}^N\alpha^m\|W_m^{(1)}\|_{L^1}\\
&\qquad+\sum_{m=1}^N\alpha^m\|W_m^{(1)}-W_m^{(2)}\|_{L^1}|W_0^{(2)}|^{-1}\Bigg)\\
&\quad\cdot
 \Bigg(1-\left(\inf_{j\in\{1,2\}}|W_0^{(j)}|\right)^{-1}\sup_{j\in
 \{1,2\}}\sum_{m=1}^N\alpha^m\|W_m^{(j)}\|_{L^1}\Bigg)^{-1},
\end{align*}
which leads to the claimed inequality.
\end{proof}

\section{Estimation of Gevrey-class functions}\label{app_gevrey}

Here we establish some estimates on functions and matrix-valued
functions whose local regularity
is that of Gevrey-class. We use these estimates to derive decay bounds on
covariance matrices containing a Gevrey-class cut-off function. We
intend not to expand our analysis more than what we need for our
purposes. More general calculus of Gevrey-class functions are found in,
e.g., \cite{Gev}, \cite{R}.

\begin{lemma}\label{lem_gevrey_composition}
Assume that $O_1,O_2\ (\subset \R)$ are open intervals, $f_j\in$ \\
$C^{\infty}(O_j;\R)$ $(j=1,2)$ and $f_1(O_1)\subset O_2$. Moreover,
 assume that $x_0\in O_1$ and 
\begin{align*}
&\left|\left(\frac{d}{dx}\right)^nf_1(x)\Big|_{x=x_0}\right|\le
 q_1r_1^nn!,\\
&\left|\left(\frac{d}{dx}\right)^nf_2(x)\Big|_{x=f_1(x_0)}\right|\le
 q_2r_2^n(n!)^t,\ (\forall n\in \N),
\end{align*}
with constants $q_j,r_j\in\R_{\ge 0}$ $(j=1,2)$, $t\in\R_{\ge 1}$. Then,
\begin{align*}
\left|\left(\frac{d}{dx}\right)^nf_2(f_1(x))\Big|_{x=x_0}\right|\le
 \frac{q_1q_2r_2}{1+q_1r_2}(r_1(1+q_1r_2))^n(n!)^t,\ (\forall n\in\N).
\end{align*}
\end{lemma}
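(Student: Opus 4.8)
The plan is to expand $(f_2\circ f_1)^{(n)}$ via the Fa\`a di Bruno formula and then bound each term using the two Gevrey-type hypotheses. First I would write, for $n\in\N$,
\[
\left(\frac{d}{dx}\right)^nf_2(f_1(x))\Big|_{x=x_0}=\sum_{\pi}f_2^{(|\pi|)}(f_1(x_0))\prod_{B\in\pi}f_1^{(|B|)}(x_0),
\]
where $\pi$ ranges over all partitions of $\{1,2,\dots,n\}$, $|\pi|$ is the number of blocks of $\pi$, and $|B|$ is the cardinality of a block $B\in\pi$; every derivative of $f_1$ occurring has order $|B|\le n$, so the hypotheses apply. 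Taking absolute values, inserting $|f_1^{(m)}(x_0)|\le q_1r_1^mm!$ and $|f_2^{(k)}(f_1(x_0))|\le q_2r_2^k(k!)^t$, and using $\sum_{B\in\pi}|B|=n$ and that a partition into $k$ blocks contributes a factor $q_1^k$, I would obtain
\[
\left|\left(\frac{d}{dx}\right)^nf_2(f_1(x))\Big|_{x=x_0}\right|\le q_2r_1^n\sum_{k=1}^n(q_1r_2)^k(k!)^t\Bigg(\sum_{\pi:\,|\pi|=k}\prod_{B\in\pi}|B|!\Bigg).
\]

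The next ingredient is the combinatorial identity $\sum_{\pi:\,|\pi|=k}\prod_{B\in\pi}|B|!=\binom{n-1}{k-1}\frac{n!}{k!}$ (the unsigned Lah number): the left-hand side counts the ways of splitting $\{1,\dots,n\}$ into $k$ nonempty linearly ordered lists. Substituting it, factoring out $n!$, and using $(k!)^{t-1}\le(n!)^{t-1}$ for $1\le k\le n$ and $t\ge 1$ gives
\[
\left|\left(\frac{d}{dx}\right)^nf_2(f_1(x))\Big|_{x=x_0}\right|\le q_2r_1^n(n!)^t\sum_{k=1}^n(q_1r_2)^k\binom{n-1}{k-1}.
\]
Reindexing $j=k-1$ and applying the binomial theorem, $\sum_{k=1}^n(q_1r_2)^k\binom{n-1}{k-1}=q_1r_2(1+q_1r_2)^{n-1}$, and since $1+q_1r_2\ge 1$ there is no division issue; rearranging $q_2r_1^n(n!)^t q_1r_2(1+q_1r_2)^{n-1}$ yields exactly $\frac{q_1q_2r_2}{1+q_1r_2}\bigl(r_1(1+q_1r_2)\bigr)^n(n!)^t$, as claimed.

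I expect the only real point requiring care to be the bookkeeping around that Lah-number identity: one must either invoke/verify $\sum_{\pi:\,|\pi|=k}\prod_{B}|B|!=\binom{n-1}{k-1}\frac{n!}{k!}$, or equivalently use the explicit Bell-polynomial form of Fa\`a di Bruno's formula, where the multinomial coefficient $\frac{n!}{\prod_i j_i!}$ is summed over $\{j_i\ge 0:\sum_i j_i=k,\ \sum_i ij_i=n\}$; in either formulation the decisive simplification is that $\prod_i(r_1^i)^{j_i}=r_1^n$ and $\prod_i q_1^{j_i}=q_1^k$, so the $f_1$-dependence collapses to $q_1^kr_1^n$ times a purely combinatorial weight. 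Once that is in place, the inequality $(k!)^{t-1}\le(n!)^{t-1}$ and the binomial summation are routine.
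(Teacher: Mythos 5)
Your proof is correct and follows essentially the same route as the paper's: both expand $\bigl(\frac{d}{dx}\bigr)^nf_2(f_1(x))\big|_{x=x_0}$ by a Fa\`a di Bruno-type formula, collapse the $f_1$-dependence of each term to $q_1^kr_1^n$ times a purely combinatorial weight, pull out $(n!)^t$ via $(k!)^{t-1}\le (n!)^{t-1}$, and reduce the remainder to the binomial sum $\sum_{k=1}^n(q_1r_2)^k\binom{n-1}{k-1}=q_1r_2(1+q_1r_2)^{n-1}$. The only cosmetic differences are that the paper derives the needed expansion from two applications of Taylor's theorem (obtaining the ordered-composition form, where the count of compositions of $n$ into $m$ positive parts plays exactly the role of your Lah-number identity) and evaluates the final sum by extracting the $z^n$-coefficient of $\frac{q_1r_2z/(1-z)}{1-q_1r_2z/(1-z)}$ rather than by the binomial theorem.
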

\begin{remark}
A systematic estimation of the composition of Gevrey-class functions was
 presented in \cite[\mbox{Section I}]{Gev}. Here we provide another
 basic estimation motivated by \cite[\mbox{Proposition 1.4.6}]{R}.
\end{remark}
\begin{proof}[Proof of Lemma \ref{lem_gevrey_composition}]
Fix $n\in \N$. By Taylor's theorem, for any $x\in O_1$, 
\begin{align*}
f_1(x)&=\sum_{l=0}^n\frac{1}{l!}\left(\frac{d}{d
 y}\right)^lf_1(y)\Big|_{y=x_0}(x-x_0)^l\\
&\quad +\frac{1}{n!}\int_{x_0}^x(x-y)^n\left(\frac{d}{d
 y}\right)^{n+1}f_1(y)dy.
\end{align*}
Thus, for any $m\in\N$,
\begin{align*}
&\left(\frac{d}{d
 x}\right)^{n}(f_1(x)-f_1(x_0))^m\Big|_{x=x_0}\\
&=\left(\frac{d}{d
 x}\right)^{n}\Bigg(\sum_{l=1}^n\frac{1}{l!}\left(\frac{d}{d
 y}\right)^{l}f_1(y)\Big|_{y=x_0}(x-x_0)^l\Bigg)^m\Bigg|_{x=x_0}\\
&=n!\prod_{j=1}^m\Bigg(\sum_{l_j=1}^n\frac{1}{l_j!}\left(\frac{d}{d
 y}\right)^{l_j}f_1(y)\Big|_{y=x_0}\Bigg)1_{\sum_{j=1}^ml_j=n}.
\end{align*}
Moreover, by Taylor's theorem, for any $x\in O_1$, 
\begin{align*}
&f_2(f_1(x))\\
&=\sum_{m=0}^n\frac{1}{m!}\left(\frac{d}{d
 y}\right)^{m}f_2(y)\Big|_{y=f_1(x_0)}(f_1(x)-f_1(x_0))^m\\
&\quad +\frac{1}{n!}\int_{f_1(x_0)}^{f_1(x)}(f_1(x)-y)^n
\left(\frac{d}{d
 y}\right)^{n+1}f_2(y)dy.
\end{align*}
Therefore, 
\begin{align*}
&\left(\frac{d}{d x}\right)^{n}f_2(f_1(x))\Big|_{x=x_0}\\
&=\sum_{m=1}^n\frac{1}{m!}\left(\frac{d}{d
 y}\right)^{m}f_2(y)\Big|_{y=f_1(x_0)}\left(\frac{d}{d
 x}\right)^{n}(f_1(x)-f_1(x_0))^m\Big|_{x=x_0}\\
&=\sum_{m=1}^n\frac{1}{m!}\left(\frac{d}{d
 y}\right)^{m}f_2(y)\Big|_{y=f_1(x_0)}\\
&\quad\cdot n!\prod_{j=1}^m\Bigg(\sum_{l_j=1}^n\frac{1}{l_j!}\left(\frac{d}{d
 x}\right)^{l_j}f_1(y)\Big|_{y=x_0}\Bigg)1_{\sum_{j=1}^ml_j=n}.
\end{align*}
By substituting the assumed upper bounds we have
\begin{align*}
\left|\left(\frac{d}{d x}\right)^{n}f_2(f_1(x))\Big|_{x=x_0}
\right|
&\le
 \sum_{m=1}^n\frac{1}{m!}q_2r_2^m(m!)^tn!\prod_{j=1}^m\Bigg(\sum_{l_j=1}^nq_1r_1^{l_j}\Bigg)1_{\sum_{j=1}^ml_j=n}\\
&\le
 q_2r_1^n(n!)^t\sum_{m=1}^n(q_1r_2)^m\prod_{j=1}^m\Bigg(\sum_{l_j=1}^n\Bigg)1_{\sum_{j=1}^ml_j=n}\\
&= q_2 r_1^n(n!)^t\frac{1}{n!}\left(\frac{d}{d
 z}\right)^{n}\Big|_{z=0}\sum_{m=1}^n(q_1r_2)^m\left(\frac{z}{1-z}\right)^m\\
&=q_2r_1^n(n!)^t\frac{1}{n!}\left(\frac{d}{d
 z}\right)^{n}\Big|_{z=0}\frac{\frac{q_1r_2z}{1-z}}{1-\frac{q_1r_2z}{1-z}}\\
&=q_1q_2r_2(1+q_1r_2)^{-1}(r_1(1+q_1r_2))^n(n!)^t.
\end{align*}
\end{proof}

In the rest of this section we find upper bounds on matrix-valued
functions.
\begin{lemma}\label{lem_gevrey_matrix}
Assume that $O(\subset \R)$ is an open interval, $x_0\in O$,
$A\in C^{\infty}(O;\Mat(b,\C))$ and 
\begin{align*}
\left\|\left(\frac{d}{dx}\right)^nA(x)\Big|_{x=x_0}\right\|_{b\times
 b}\le qr^n(n!)^t,\ (\forall n\in\N),
\end{align*}
with constants $q,r\in\R_{\ge 0}$, $t\in\R_{\ge 1}$. Then, the following statements hold true.
\begin{enumerate}
\item\label{item_gevrey_matrix_power}
If $t=1$,
\begin{align*}
\left\|\left(\frac{d}{dx}\right)^nA(x)^m\Big|_{x=x_0}\right\|_{b\times
 b}\le (2q)^m(2r)^nn!,\ (\forall m,n\in\N).
\end{align*}
\item\label{item_gevrey_inverse_matrix}
If $A(x)$ is invertible for any $x\in O$ and $\|A(x_0)^{-1}\|_{b\times
     b}\le s$ with a constant $s\in\R_{\ge 0}$, 
\begin{align*}
&\left\|\left(\frac{d}{dx}\right)^nA(x)^{-1}\Big|_{x=x_0}\right\|_{b\times
 b}\le \frac{s^2q}{(1+(sq)^{\frac{1}{t}})^{t}}\big(r(1+(sq)^{\frac{1}{t}})^t\big)^n(n!)^t
,\\
&(\forall n\in\N).
\end{align*}
\end{enumerate}
\end{lemma}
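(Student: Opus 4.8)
The plan is to prove both parts of Lemma~\ref{lem_gevrey_matrix} by the generalized Leibniz rule together with elementary counting of ordered partitions of an integer, in the same spirit as the proof of Lemma~\ref{lem_gevrey_composition}. For part~\eqref{item_gevrey_matrix_power} I would start from
\[
\Big(\frac{d}{dx}\Big)^n A(x)^m\Big|_{x=x_0}=\sum_{n_1+\cdots+n_m=n}\frac{n!}{n_1!\cdots n_m!}\prod_{j=1}^m\Big(\frac{d}{dx}\Big)^{n_j}A(x)\Big|_{x=x_0},
\]
with $n_1,\dots,n_m\in\N\cup\{0\}$ and the factors kept in order, and then insert the hypothesis (with $t=1$, and at $n=0$ as well, i.e.\ $\|A(x_0)\|_{b\times b}\le q$, which is the form in which the lemma is applied in Section~\ref{sec_UV}). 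This bounds the right-hand side by $q^m r^n n!$ times the number of such tuples, namely $\binom{n+m-1}{m-1}\le 2^{n+m-1}\le 2^n2^m$, giving the asserted $(2q)^m(2r)^n n!$.

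For part~\eqref{item_gevrey_inverse_matrix} the idea is to reduce the inverse to a geometric-type expansion around $x_0$. Write $A(x)=A(x_0)\bigl(I_b-C(x)\bigr)$ with $C(x):=A(x_0)^{-1}\bigl(A(x_0)-A(x)\bigr)$, so that $C(x_0)=0$, $C$ is $C^\infty$ near $x_0$, and $\|(d/dx)^nC(x)|_{x_0}\|_{b\times b}\le sqr^n(n!)^t$ for $n\ge1$. Since $\|C(x_0)\|_{b\times b}=0<1$, by Cramer's rule $(I_b-C(x))^{-1}$ exists and is $C^\infty$ near $x_0$, and $A(x)^{-1}=(I_b-C(x))^{-1}A(x_0)^{-1}$. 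The crucial observation is that for $n\ge1$ only finitely many terms of
\[
(I_b-C(x))^{-1}=\sum_{k=0}^n C(x)^k+C(x)^{n+1}(I_b-C(x))^{-1}
\]
contribute to $(d/dx)^n(\cdot)|_{x_0}$: the $k=0$ term is constant, and all derivatives of order $\le n$ of the remainder vanish at $x_0$ because $C(x_0)=0$ makes $C^{n+1}$ vanish to order $\ge n+1$ there. Hence $(d/dx)^n A(x)^{-1}|_{x_0}=\bigl(\sum_{k=1}^n(d/dx)^n C(x)^k|_{x_0}\bigr)A(x_0)^{-1}$.

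To each $C^k$ I would apply the Leibniz rule; since $C(x_0)=0$, only multi-indices with all $n_j\ge1$ survive, and using the identity $\prod_j(n_j!)^{t-1}=\bigl(\prod_jn_j!\bigr)^{t-1}\le(n!)^{t-1}$ (from $\prod_jn_j!\le n!$) together with the count $\#\{(n_1,\dots,n_k)\in\N^k:\sum_jn_j=n\}=\binom{n-1}{k-1}$ gives $\|(d/dx)^n C(x)^k|_{x_0}\|_{b\times b}\le(sq)^kr^n(n!)^t\binom{n-1}{k-1}$. Summing over $k$ via $\sum_{k=1}^n(sq)^k\binom{n-1}{k-1}=sq(1+sq)^{n-1}$ and multiplying by $\|A(x_0)^{-1}\|_{b\times b}\le s$ yields
\[
\left\|\Big(\frac{d}{dx}\Big)^n A(x)^{-1}\Big|_{x_0}\right\|_{b\times b}\le\frac{s^2q}{1+sq}\,\bigl(r(1+sq)\bigr)^n(n!)^t.
\]
Finally, $1+sq\le(1+(sq)^{1/t})^t$ for $t\ge1$ (from the superadditivity $(a+b)^t\ge a^t+b^t$ on $\R_{\ge0}$), and replacing $1+sq$ by $(1+(sq)^{1/t})^t$ only enlarges this right-hand side when $n\ge1$, so the bound is dominated by the one stated in the lemma.

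The Leibniz bookkeeping is routine; the step I expect to need the most care is the justification that, for $n\ge1$, the constant and remainder terms of the expansion of $(I_b-C(x))^{-1}$ do not contribute to the $n$-th derivative at $x_0$, which rests on $C(x_0)=0$ (so $C^k$ has a zero of order $\ge k$ there) and on the smoothness of $(I_b-C(x))^{-1}$ near $x_0$, and should be arranged so that one never differentiates an infinite sum termwise. A minor point worth flagging is that part~\eqref{item_gevrey_matrix_power} (and the $n=0$ instance $\|A(x_0)\|_{b\times b}\le q$) genuinely requires the hypothesis at $n=0$, which holds in every application.
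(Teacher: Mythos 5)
Your proposal is correct, and part \eqref{item_gevrey_matrix_power} is essentially the paper's argument: the paper also reduces to the Leibniz/Taylor expansion of $A(x)^m$ and bounds the resulting combinatorial sum, the only cosmetic difference being that it evaluates $\sum_{l_1+\cdots+l_m=n}1$ via the generating function $(1-z)^{-m}$ and Cauchy's integral formula on $|z|=1/2$, whereas you bound the composition count $\binom{n+m-1}{m-1}$ by $2^{n+m-1}$ directly; both use the $n=0$ instance $\|A(x_0)\|_{b\times b}\le q$, so your flag about that hypothesis applies equally to the paper's proof and is not a gap. For part \eqref{item_gevrey_inverse_matrix} your route is genuinely different. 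The paper proves, by induction on $n$, an explicit formula \eqref{eq_matrix_derivative_formula} for $(d/dx)^nA(x)^{-1}$ involving recursively defined integer coefficients $c^{(n)}(l,m_1,\dots,m_l)$, and then evaluates the resulting weighted sum $\sum_l X^l\prod_j(\sum_{m_j}m_j!)\,c^{(n)}(\cdots)$ by the trick of applying the same formula to the scalar function $1-f(x)$ with $f$ a tailored geometric series, which yields $X(1+X)^{n-1}n!$; the exponent $1/t$ enters because the paper pulls the $t$-th power outside via $\sum a_i^t\le(\sum a_i)^t$. You instead factor $A(x)=A(x_0)(I_b-C(x))$ with $C(x_0)=0$, truncate the Neumann series (justifying that the remainder $C^{n+1}(I_b-C)^{-1}$ and the constant term do not contribute to the $n$-th derivative at $x_0$ because fewer than $n+1$ derivatives must leave some undifferentiated factor $C(x_0)=0$), and handle $C^k$ by Leibniz with the clean estimate $n!\prod_j(n_j!)^{t-1}\le(n!)^t$. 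This avoids the recursive coefficients and the generating-function identity altogether, and it actually produces the sharper bound $\frac{s^2q}{1+sq}\bigl(r(1+sq)\bigr)^n(n!)^t$, which dominates the stated one for $n\ge1$ since $1+sq\le(1+(sq)^{1/t})^t$; so your argument proves the lemma as stated (and a bit more), at the cost of nothing beyond the observation that $(I_b-C(x))^{-1}=A(x)^{-1}A(x_0)$ is smooth on $O$.
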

\begin{proof}
\eqref{item_gevrey_matrix_power}: For any $x\in O$, 
\begin{align*}
A(x)&=\sum_{l=0}^n\frac{1}{l!}\left(\frac{d}{dy}\right)^lA(y)\Big|_{y=y_0}(x-x_0)^l\\
&\quad
 +\frac{1}{n!}\int_{x_0}^x(x-y)^n\left(\frac{d}{dy}\right)^{n+1}A(y)dy.
\end{align*}
Thus, for any $m\in\N$,
\begin{align*}
\left(\frac{d}{dx}\right)^nA(x)^m\Big|_{x=x_0}
&=\left(\frac{d}{dx}\right)^n\Bigg(\sum_{l=0}^n\frac{1}{l!}\left(\frac{d}{dy}\right)^lA(y)\Big|_{y=x_0}(x-x_0)^l\Bigg)^m\Bigg|_{x=x_0}\\
&=n!\prod_{j=1\atop
 order}^m\Bigg(\sum_{l_j=0}^n\frac{1}{l_j!}\left(\frac{d}{dy}\right)^{l_j}A(y)\Big|_{y=x_0}\Bigg)1_{\sum_{j=1}^ml_j=n}.
\end{align*}
Then, by using the assumed upper bounds and Cauchy's integral formula we
 observe that
\begin{align*}
&\left\|\left(\frac{d}{dx}\right)^nA(x)^m\Big|_{x=x_0}\right\|_{b\times
 b}\le
 n!\prod_{j=1}^m\Bigg(\sum_{l_j=0}^nqr^{l_j}\Bigg)1_{\sum_{j=1}^ml_j=n}\\
&=q^mr^n\left(\frac{d}{dz}\right)^n\Big|_{z=0}\left(\frac{1}{1-z}\right)^m=q^mr^n\frac{n!}{2\pi i}\oint_{|z|=\frac{1}{2}}dz\frac{1}{z^{n+1}(1-z)^m}\\
&\le (2q)^m(2r)^nn!.
\end{align*}

\eqref{item_gevrey_inverse_matrix}: First let us prove the equality that for any
 $n\in \N$, $x\in O$,
\begin{align}
\left(\frac{d}{dx}\right)^nA(x)^{-1}\label{eq_matrix_derivative_formula}
&=\sum_{l=1}^n\prod_{j=1}^l\Bigg(\sum_{m_j=1}^n\Bigg)1_{\sum_{j=1}^lm_j=n}c^{(n)}(l,m_1,m_2,\cdots,m_l)\\
&\quad\cdot(-1)^l
\prod_{k=1\atop
 order}^l\Bigg(A(x)^{-1}\left(\frac{d}{dx}\right)^{m_k}A(x)\Bigg)A(x)^{-1},\notag 
\end{align}
where the coefficients $c^{(n)}(l,m_1,m_2,\cdots,m_l)\in\N$ ($\forall l\in
 \{1,2,\cdots,n\}$,\\
$m_j\in\{1,2,\cdots,n\}$ $(j=1,2,\cdots,l)$ with
 $\sum_{j=1}^lm_j=n$) are inductively defined as follows.
\begin{align*}
&c^{(1)}(1,1):=1,\\
&c^{(n)}(l,m_1,m_2,\cdots,m_l)\notag\\
&:=\sum_{i=1}^l\big(1_{l\ge
 2}1_{m_i=1}c^{(n-1)}(l-1,m_1,\cdots,m_{i-1},\overbrace{m_i},m_{i+1},\cdots,m_l)\notag\\
&\qquad+1_{l\le n-1}1_{m_i\neq
 1}c^{(n-1)}(l,m_1,\cdots,m_{i-1},m_i-1,m_{i+1},\cdots,m_l)\big).\notag
\end{align*}
Here $(m_1,\cdots,m_{i-1},\overbrace{m_i},m_{i+1},\cdots,m_l)$ denotes
 $(m_1,\cdots,m_{i-1},m_{i+1},$\\ $\cdots,m_l)$.

Since 
$$
\frac{d}{dx}A(x)^{-1}=-A(x)^{-1}\frac{d}{dx}A(x)\cdot A(x)^{-1},
$$
the equality \eqref{eq_matrix_derivative_formula} holds for $n=1$. 
Assume that \eqref{eq_matrix_derivative_formula}
 holds for $n-1$. Then,
\begin{align*}
&\left(\frac{d}{dx}\right)^nA(x)^{-1}\\
&=\sum_{l=1}^{n-1}\prod_{j=1}^l\Bigg(\sum_{m_j=1}^{n-1}\Bigg)1_{\sum_{j=1}^lm_j=n-1}c^{(n-1)}(l,m_1,m_2,\cdots,m_l)(-1)^l\\
&\quad\cdot\Bigg(
\sum_{i=1}^l\prod_{k=1\atop
 order}^{i-1}\left(A(x)^{-1}\left(\frac{d}{dx}\right)^{m_k}A(x)\right)\\
&\qquad\cdot\Bigg(-A(x)^{-1}\frac{d}{dx}A(x)\cdot
 A(x)^{-1}\left(\frac{d}{dx}\right)^{m_i}A(x)\\
&\qquad\qquad+A(x)^{-1}\left(\frac{d}{dx}\right)^{m_i+1}A(x)\Bigg)\\
&\qquad\cdot \prod_{p=i+1\atop
 order}^l\left(A(x)^{-1}\left(\frac{d}{dx}\right)^{m_p}A(x)\right)A(x)^{-1}\\
&\qquad -\prod_{k=1\atop
 order}^l\left(A(x)^{-1}\left(\frac{d}{dx}\right)^{m_k}A(x)\right)A(x)^{-1}\frac{d}{dx}A(x)\cdot A(x)^{-1}\Bigg)\\
&=\sum_{l=1}^{n-1}\prod_{j=1}^{l+1}\Bigg(\sum_{m_j=1}^n\Bigg)1_{\sum_{j=1}^{l+1}m_j=n}(-1)^{l+1}\\
&\quad\cdot
 \sum_{i=1}^{l+1}1_{m_i=1}c^{(n-1)}(l,m_1,\cdots,m_{i-1},\overbrace{m_i},m_{i+1},\cdots,m_{l+1})\\
&\quad \cdot \prod_{k=1\atop
 order}^{l+1}\left(A(x)^{-1}\left(\frac{d}{dx}\right)^{m_k}A(x)\right)A(x)^{-1}\\
&\quad + \sum_{l=1}^{n-1}\prod_{j=1}^{l}\Bigg(\sum_{m_j=1}^n\Bigg)1_{\sum_{j=1}^{l}m_j=n}(-1)^{l}\\
&\qquad\cdot
 \sum_{i=1}^{l}1_{m_i\neq 1}c^{(n-1)}(l,m_1,\cdots,m_{i-1},m_i-1,m_{i+1},\cdots,m_{l})\\
&\qquad \cdot \prod_{k=1\atop
 order}^{l}\left(A(x)^{-1}\left(\frac{d}{dx}\right)^{m_k}A(x)\right)A(x)^{-1}\\
&=\sum_{l=1}^{n}\prod_{j=1}^{l}\Bigg(\sum_{m_j=1}^n\Bigg)1_{\sum_{j=1}^{l}m_j=n}(-1)^{l}\sum_{i=1}^l\\
&\quad\cdot\big(1_{l\ge
 2}1_{m_i=1}c^{(n-1)}(l-1,m_1,\cdots,m_{i-1},\overbrace{m_i},m_{i+1},\cdots,m_{l})\\
&\qquad\quad +1_{l\le
 n-1}1_{m_i\neq 1}c^{(n-1)}(l,m_1,\cdots,m_{i-1},m_i-1,m_{i+1},\cdots,m_{l})\big)\\
&\quad \cdot \prod_{k=1\atop
 order}^{l}\left(A(x)^{-1}\left(\frac{d}{dx}\right)^{m_k}A(x)\right)A(x)^{-1},
\end{align*}
which is equal to the right-hand side of
 \eqref{eq_matrix_derivative_formula} for $n$. Thus, by induction
 the equality \eqref{eq_matrix_derivative_formula} holds for all $n\in\N$.

It follows from \eqref{eq_matrix_derivative_formula} and the assumed
 upper bounds that 
\begin{align}
&\left\|\left(\frac{d}{dx}\right)^{n}A(x)^{-1}\Big|_{x=x_0}\right\|_{b\times
 b}\label{eq_inverse_matrix_gevrey_pre}\\
&\le
 \sum_{l=1}^{n}\prod_{j=1}^l\Bigg(\sum_{m_j=1}^{n}\Bigg)1_{\sum_{j=1}^lm_j=n}c^{(n)}(l,m_1,m_2,\cdots,m_l)s^{l+1}\prod_{k=1}^l(qr^{m_k}(m_k!)^t)
\notag\\
&=sr^n \sum_{l=1}^n(sq)^l\prod_{j=1}^l\Bigg(\sum_{m_j=1}^n(m_j!)^t
\Bigg)1_{\sum_{j=1}^lm_j=n}c^{(n)}(l,m_1,m_2,\cdots,m_l)\notag\\
&\le s r^n \Bigg(\sum_{l=1}^n(sq)^{\frac{l}{t}}\prod_{j=1}^l\Bigg(\sum_{m_j=1}^nm_j!\Bigg)1_{\sum_{j=1}^lm_j=n}c^{(n)}(l,m_1,m_2,\cdots,m_l)\Bigg)^t.\notag
\end{align}

For any $X\in\R_{\ge 0}$ let us compute the sum
\begin{align}
\sum_{l=1}^nX^l\prod_{j=1}^l\Bigg(\sum_{m_j=1}^nm_j!\Bigg)1_{\sum_{j=1}^lm_j=n}c^{(n)}(l,m_1,m_2,\cdots,m_l).\label{eq_generalized_order_1}
\end{align}
Set 
$$
f(x):=\frac{X}{1+X}\sum_{n=0}^{\infty}x^n.
$$
We see that
\begin{align*}
&(\text{the sum }\eqref{eq_generalized_order_1})\\
&=\frac{1}{X+1}\sum_{l=1}^n\prod_{j=1}^l\Bigg(\sum_{m_j=1}^n\Bigg)1_{\sum_{j=1}^lm_j=n}c^{(n)}(l,m_1,m_2,\cdots,m_l)(-1)^l\\
&\quad\cdot
 \prod_{k=1}^l\Bigg(\frac{1}{1-f(0)}\left(\frac{d}{dx}\right)^{m_k}(1-f(x))\Big|_{x=0}\Bigg)\frac{1}{1-f(0)}.
\end{align*}
Then, by applying the formula \eqref{eq_matrix_derivative_formula} with
 $A(x)=1-f(x)$ we obtain
\begin{align*}
(\text{the sum }\eqref{eq_generalized_order_1})=\frac{1}{1+X}\left(\frac{d}{dx}\right)^{n}\frac{1}{1-f(x)}\Big|_{x=0}=X(1+X)^{n-1}n!.
\end{align*}
Substitution of this equality with $X=(sq)^{\frac{1}{t}}$ into \eqref{eq_inverse_matrix_gevrey_pre}
 gives
\begin{align*}
&\left\|\left(\frac{d}{dx}\right)^{n}A(x)^{-1}\Big|_{x=x_0}\right\|_{b\times
 b}\\
&\le s r^n
 \big((sq)^{\frac{1}{t}}(1+(sq)^{\frac{1}{t}})^{n-1}n!\big)^t=s^2q(1+(sq)^{\frac{1}{t}})^{-t}\big(r(1+(sq)^{\frac{1}{t}})^{t}\big)^n(n!)^t.
\end{align*}
\end{proof}

\section{The time-continuum, infinite-volume limit of the truncated Grassmann integral formulation}\label{app_h_L_limit}

In this section we prove that for any $n\in \N$, 
$$
-\frac{1}{\beta L^d}\left(\frac{d}{dz}\right)^n\log\Bigg(\int e^{-z V(\psi)}d\mu_{C}(\psi)\Bigg)\Bigg|_{z=0}
$$
converges uniformly with respect to the coupling constants as $h\to
\infty$, $L\to \infty$. This
convergence property itself does not imply the convergence of the
full formulation
$$
-\frac{1}{\beta L^d}\log\Bigg(\int e^{-V(\psi)}d\mu_{C}(\psi)\Bigg).
$$
It only guarantees the convergence of any finite
truncation of the Taylor series of the function 
$$
z\mapsto -\frac{1}{\beta L^d}\log\Bigg(\int e^{-z V(\psi)}d\mu_{C}(\psi)\Bigg)
$$
around $z=0$ as $h,L\to\infty$. However, once we know the analyticity of the Grassmann integral
formulation with the coupling constants on an $(h,L)$-independent domain
containing the origin, we can use the result of this section to prove
the uniform convergence of the full formulation as
$h,L\to\infty$. 

We need this type of convergence result only in Subsection \ref{subsec_application_IR},
where the model Hamiltonian is specifically analyzed. However, we set up
the problem in a general setting without
specifying the kinetic term of the Hamiltonian. We assume that 
$$
E\in C^{d+1}(\R^d;\Mat(b,\C))
$$
and \eqref{eq_dispersion_hermite}, \eqref{eq_dispersion_periodic} are
valid.  For any $n\in\N$, $\bU\in\C^b$ set
\begin{align*}
a_n(\beta,L,h)(\bU)
&:=-\frac{1}{\beta
 L^d}\frac{1}{n!}\left(\frac{d}{dz}\right)^n\log\Bigg(\int e^{-z
 V(\psi)}d\mu_{C}(\psi)\Bigg)\Bigg|_{z=0}
\end{align*}
with the covariance $C$ defined by \eqref{eq_covariance_2_point_function}
and $V(\psi)(\in\bigwedge \cV)$ defined by \eqref{eq_interaction_counterpart}.

\begin{lemma}\label{lem_h_L_limit}
For any non-empty compact set $K$ of $\C^b$ and $n\in\N$ the following
 statements hold true.
\begin{enumerate}
\item\label{item_perturbative_h_limit}
$a_n(\beta,L,h)(\cdot)$ converges in $C(K;\C)$ as $h\to \infty$ ($h\in
     2\N/\beta$).
\item\label{item_perturbative_L_limit}
Set $a_n(\beta,L):=\lim_{h\to \infty,h\in
     2\N/\beta}a_n(\beta,L,h)$. $a_n(\beta,L)(\cdot)$ converges in
     $C(K;\C)$ as $L\to \infty$ $(L\in \N)$.
\end{enumerate}
\end{lemma}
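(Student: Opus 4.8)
The plan is to exploit that $a_n(\beta,L,h)(\bU)$ is, up to the prefactor $-(\beta L^d)^{-1}$, the coefficient of $z^n$ in the formal Taylor expansion of $\log\!\big(\int e^{-zV(\psi)}d\mu_C(\psi)\big)$ about $z=0$, and that this coefficient is a \emph{polynomial} in $\bU$ whose structure makes both limits transparent. By \eqref{eq_primitive_grassmann_formulation} one has $\int e^{-zV(\psi)}d\mu_C(\psi)=P_h(z\bU)$, where $P_h$ is the polynomial in $\bU$ displayed in the proof of Lemma \ref{lem_formulation_covariance}; since the kernels $w_1,w_2$ are linear in $\bU$, the coefficient of $z^m$ in $P_h(z\bU)$ is the single homogeneous term
\[
p_m(\beta,L,h)(\bU):=\frac{(-1)^m}{m!}\prod_{k=1}^m\Big(\sum_{l_k=1}^2\sum_{\bX_k,\bY_k}\tfrac1h\!\!\sum_{x_k\in[0,\beta)_h}\!\!w_{l_k}(\bX_k,\bY_k)\Big)\det\big(C(X_p,Y_q)\big)_{1\le p,q\le\sum_k l_k}
\]
for $1\le m\le N/2$ and $p_m\equiv0$ for $m>N/2$, the variables $X_p,Y_q$ being those of \eqref{eq_variables_relation}. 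As $P_h(\mathbf 0)=1$, the coefficient of $z^n$ in $\log P_h(z\bU)$ is the universal polynomial $\sum_{k=1}^n\frac{(-1)^{k-1}}{k}\sum_{\substack{n_1+\cdots+n_k=n\\ n_j\ge1}}\prod_{j=1}^k p_{n_j}(\beta,L,h)(\bU)$, a finite sum of finite products of the $p_m$'s with $m\le n$, so that for $h$ large ($n\le N/2$)
\[
a_n(\beta,L,h)(\bU)=-\frac{1}{\beta L^d}\sum_{k=1}^{n}\frac{(-1)^{k-1}}{k}\sum_{\substack{n_1+\cdots+n_k=n\\ n_j\ge1}}\;\prod_{j=1}^k p_{n_j}(\beta,L,h)(\bU).
\]

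For part \eqref{item_perturbative_h_limit}, with $\beta,L$ fixed it suffices to prove that for each $m$ one has $p_m(\beta,L,h)(\cdot)\to p_m(\beta,L)(\cdot)$ in $C(K;\C)$ as $h\to\infty$, where $p_m(\beta,L)$ is the $m$-th coefficient of the continuum series $P(z\bU)$ from \eqref{eq_continuous_series} (with $\frac1h\sum_{x_k\in[0,\beta)_h}$ replaced by $\int_0^\beta dx_k$); finite sums and products of uniformly convergent bounded families then give the claim. This is the per-order version of the convergence already carried out in the proof of Lemma \ref{lem_formulation_covariance}: $w_1,w_2$ are $h$-independent, by Lemma \ref{lem_characterization_covariance} the covariance $C$ on $I_0^2$ is the restriction to $[0,\beta)_h^2$ of the fixed continuum two-point function, so $(x_1,\dots,x_m)\mapsto\det(C(X_p,Y_q))$ is continuous a.e.\ on $[0,\beta)^m$ and bounded uniformly in $h$ by \eqref{eq_crudest_determinant_bound}; the Riemann sums $\frac1h\sum_{x_k\in[0,\beta)_h}$ converge to $\int_0^\beta dx_k$ and the dominated convergence theorem gives $p_m(\beta,L,h)(\bU)\to p_m(\beta,L)(\bU)$ for each $\bU$. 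Since each $p_m$ is a polynomial of degree $m$ in $\bU$, pointwise convergence of its coefficients upgrades to uniform convergence on the compact set $K$.

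For part \eqref{item_perturbative_L_limit} the prefactor $(\beta L^d)^{-1}$ must absorb a volume factor hidden in the products $p_{n_1}(\beta,L)\cdots p_{n_k}(\beta,L)$; making this cancellation manifest is the linked-cluster phenomenon, and here it is cleanest to invoke the tree expansion \eqref{eq_explicit_tree_formula} of \cite{SW} (equivalently, the connected-graph expansion of $\log$), which rewrites $a_n(\beta,L)(\bU)=-\frac{1}{\beta L^d}\sum_{T\in\T_n}(\text{amplitude of }T)$, each amplitude a sum over $n$ on-site vertex positions in $\G$ of a product of $n-1$ covariance entries along the edges of $T$ times $n$ interaction kernels, with $n$ time integrals over $[0,\beta)$ and the $[0,1]^{n-1}$ interpolation integral. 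Because $V$ is on-site and $C$ is translation invariant on the torus, each amplitude equals $L^d$ times the same expression with vertex $1$ fixed at the origin and the remaining positions summed over $(\Z/L\Z)^d$; since $T$ is connected, that relative-position sum is absolutely convergent \emph{uniformly in $L$} once one knows $\sup_L\sum_{\bx\in\G}|C(\rho\bx\s x,\eta\by\tau y)|<\infty$. The hypothesis $E\in C^{d+1}(\R^d;\Mat(b,\C))$ enters exactly here: $I_b+e^{\beta E(\bk)}$ is positive definite (hence invertible with $C^{d+1}$ inverse in $\bk$) and $E$ is $C^{d+1}$ and $2\pi\bv_j$-periodic by \eqref{eq_dispersion_periodic}, so repeated integration by parts in $\bk$ in \eqref{eq_covariance_2_point_function} gives $|C_\infty(\by)|\le c(1+\|\by\|)^{-(d+1)}$ for the infinite-volume covariance, whence Poisson summation yields both $\sup_L\sum_{\bx\in\G}|C_L(\bx)|\le\sum_{\bx\in\Z^d}|C_\infty(\bx)|<\infty$ and $C_L\to C_\infty$ pointwise. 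The factor $L^d$ then cancels $(\beta L^d)^{-1}$, every relative-position term converges as $L\to\infty$ by $C_L\to C_\infty$, dominated convergence on $\Z^{d(n-1)}$ finishes the limit, and uniformity in $\bU\in K$ is again automatic since the amplitudes are polynomials of degree $n$ in $\bU$.

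The main obstacle is part \eqref{item_perturbative_L_limit}: extracting the single center-of-mass factor $L^d$ from the $\log$/tree combinatorics and matching it with the prefactor, and supplying the uniform-in-$L$ spatial decay and summability of the torus covariance from the $C^{d+1}$ hypothesis via integration by parts together with the Poisson-summation comparison to the infinite-volume covariance. Part \eqref{item_perturbative_h_limit}, by contrast, is a routine order-by-order refinement of an argument already contained in the proof of Lemma \ref{lem_formulation_covariance}.
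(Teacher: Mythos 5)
Your proposal is correct, but it diverges from the paper's proof in two places worth noting. For part \eqref{item_perturbative_h_limit} the paper does not go through the moment--cumulant expansion of $\log P_h(z\bU)$ at all: it works from the outset with the tree representation \eqref{eq_explicit_tree_formula}, rewrites $a_n(\beta,L,h)$ as a sum of tree amplitudes, and passes to the $h\to\infty$ limit by dominated convergence on $L^1([0,\beta)^n)$ using the a.e.\ continuity of $(x,y)\mapsto C(\rho\bx\s x,\eta\by\tau y)$ and the determinant bound; your route via the homogeneous coefficients $p_m$ of $P_h(z\bU)$ and the finite polynomial relating log-coefficients to moment-coefficients is more elementary and is indeed just the per-order version of \eqref{eq_each_term_convergence_l1}, so it buys you a shorter part (1) at the cost of having to reconcile the two representations before part (2) (the tree formula holds at every finite $h$ and both sides converge, so this is only a one-line remark, but you should say it, since you use the cumulant form to define $a_n(\beta,L)$ and then immediately re-expand it over trees). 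For part \eqref{item_perturbative_L_limit} your overall strategy --- translation invariance to pin vertex $1$ at the origin and cancel $L^d$, then dominated convergence over the $n-1$ tree-edge relative coordinates --- is exactly the paper's \eqref{eq_transform_truncation_translation}; the genuine difference is how the uniform-in-$L$ dominating function is produced. The paper obtains the decay \eqref{eq_sufficient_decay_L_limit_pre} directly on the torus by summation by parts in $\bk\in\G^*$ ($d+1$ finite differences), never introducing $C_\infty$ except as the pointwise limit \eqref{eq_full_covariance_L_limit}; you instead integrate by parts in the infinite-volume Fourier integral to get $|C_\infty(\by)|\le c(1+\|\by\|)^{-(d+1)}$ and then transfer this to the torus via the periodization (aliasing) identity. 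Both are legitimate; the paper's version is self-contained on the finite lattice and gives a dominating function supported on the fundamental domain (the role of $\chi_L$ in the proof), whereas yours requires the extra observation that for $\bx$ in the fundamental domain the image terms $\sum_{\bn\neq 0}|C_\infty(\bx+L\bn)|$ are $O(L^{-(d+1)})$ and hence absorbed into the same polynomial bound. Neither difference is a gap.
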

\begin{proof} The claims can be proved by
following the same idea as in \cite[\mbox{Appendix B}]{K2}, \cite[\mbox{Appendix
D}]{K3}. However, we present the proof in a self-contained style.  
Here we use the notations introduced in the proof of Lemma
 \ref{lem_characterization_covariance}. Let us define the matrix-valued
 function $A:\R^d\to \Mat(b,\C)$ by
 $A(\bk):=(\alpha_{\rho}(\bk)\delta_{\rho,\eta})_{1\le \rho,\eta\le
 b}$. By using \eqref{eq_diagonalization_kinetic} we can deduce from
 \eqref{eq_time_characterization_covariance_full} that
\begin{align}\label{eq_full_covariance_continuous_formula}
&C(\cdot\bx\s x,\cdot\by\tau y)\\
&=\frac{\delta_{\s,\tau}}{L^d}\sum_{\bk\in\G^*}e^{-i\<\bx-\by,\bk\>}\overline{U(\bk)}e^{(x-y)A(\bk)}\notag\\
&\qquad\cdot (1_{x\ge
 y}(I_b+e^{\beta A(\bk)})^{-1}-1_{x<y}(I_b+e^{-\beta
 A(\bk)})^{-1})U(\bk)^t\notag\\
&=\frac{\delta_{\s,\tau}}{L^d}\sum_{\bk\in\G^*}e^{-i\<\bx-\by,\bk\>}e^{(x-y)\overline{E(\bk)}}\big(1_{x\ge
 y}\big(I_b+e^{\beta \overline{E(\bk)}}\big)^{-1}-1_{x<y}\big(I_b+e^{-\beta
 \overline{E(\bk)}}\big)^{-1}\big),\notag\\
&(\forall (\bx,\s,x),(\by,\tau, y)\in \G\times
 \spin \times [0,\beta)).\notag
\end{align}
Set 
\begin{align*}
&C_{\infty}(\cdot\bx\s x,\cdot\by\tau y)\\
&:=\frac{\delta_{\s,\tau}}{(2\pi)^d}\int_{[0,2\pi)^d}d\bp
 e^{-i\<\bx-\by,\sum_{j=1}^d\bp_j\bv_j\>}e^{(x-y)\overline{E(\sum_{j=1}^d\bp_j\bv_j)}}\\
&\qquad\cdot\big(1_{x\ge
 y}\big(I_b+e^{\beta \overline{E(\sum_{j=1}^d\bp_j\bv_j)}}\big)^{-1}-1_{x<y}\big(I_b+e^{-\beta
 \overline{E(\sum_{j=1}^d\bp_j\bv_j)}}\big)^{-1}\big),\\
&(\forall (\bx,\s,x),(\by,\tau, y)\in \G_{\infty}\times
 \spin \times [0,\beta)).
\end{align*}
It follows from the continuity of the function $\bk\mapsto E(\bk)$ that 
\begin{align}
&\lim_{L\to \infty\atop L\in\N}C(\cdot\bx\s x,\cdot \by\tau
 y)=C_{\infty}(\cdot\bx\s x,\cdot \by\tau
 y),\label{eq_full_covariance_L_limit}\\
&(\forall (\bx,\s,x),(\by,\tau, y)\in \G_{\infty}\times
 \spin \times [0,\beta)).\notag
\end{align}
Using the periodicity with the variable $\bk$, we observe
 that for any $\bx,\by\in\G_{\infty}$ and $n\in\{1,2,\cdots,d+1\}$,
\begin{align*}
&\left(\frac{L}{2\pi}\big(e^{i\frac{2\pi}{L}\<\bx-\by,\bv_j\>}-1\big)\right)^n
C(\cdot\bx\s x,\cdot\by\tau y)\\
&=\frac{\delta_{\s,\tau}}{L^d}\sum_{\bk\in\G^*}e^{-i\<\bx-\by,\bk\>}
\prod_{m=1}^n\Bigg(\frac{L}{2\pi}\int_0^{\frac{2\pi}{L}}dq_m\Bigg)\\
&\quad \cdot\left(\frac{\partial}{\partial p_j}\right)^n
\Big(e^{(x-y)\overline{E(\bk+p_j\bv_j)}}\big(1_{x\ge
 y}\big(I_b+e^{\beta \overline{E(\bk+p_j\bv_j)}}\big)^{-1}\\
&\qquad\qquad\qquad\quad-1_{x<y}\big(I_b+e^{-\beta
 \overline{E(\bk+p_j\bv_j)}}\big)^{-1}\big)\Big)\Big|_{p_j=\sum_{m=1}^nq_m}.
\end{align*}
Set
\begin{align*}
E_{max}:=\sup_{l\in \{1,2,\cdots,d\}\atop
 m\in\{0,1,\cdots,d+1\}}\sup_{\bp\in\R^d}\left\|\left(\frac{\partial}{\partial
 p_l}\right)^mE\Bigg(\sum_{j=1}^dp_j\bv_j\Bigg)\right\|_{b\times b}.
\end{align*}
From the above equality we can derive that
\begin{align}
&\|C(\cdot\bx\s x,\cdot\by\tau y)\|_{b\times b}\le
 \frac{c(\beta,d,E_{max})}{1+\sum_{j=1}^d\left|\frac{L}{2\pi}
\big(e^{i\frac{2\pi}{L}\<\bx-\by,\bv_j\>}-1\big)\right|^{d+1}},
\label{eq_sufficient_decay_L_limit_pre}\\
&(\forall (\bx,\s,x),(\by,\tau, y)\in \G_{\infty}\times
 \spin \times [0,\beta)),\notag
\end{align}
where the constant $c(\beta,d,E_{max})(\in \R_{>0})$ depends only on $\beta,d,E_{max}$. 
Especially we have
\begin{align}
&\|C(\cdot\bx\s x,\cdot\by\tau y)\|_{b\times b}\le
 \frac{c(\beta,d,E_{max})}{1+\left(\frac{2}{\pi}\right)^{d+1}\sum_{j=1}^d|\<\bx-\by,\bv_j\>|^{d+1}},\label{eq_sufficient_decay_L_limit}\\
&(\forall (\bx,\s,x),(\by,\tau, y)\in \G_{\infty}\times
 \spin \times [0,\beta)\notag\\
&\quad \text{ with }|\<\bx-\by,\bv_j\>|\le L/2\ (\forall
 j\in \{1,2,\cdots,d\})).\notag
\end{align}

Note that 
\begin{align*}
a_1(\beta,L,h)(\bU)&=\frac{1}{\beta L^d}\int V(\psi)d\mu_{C}(\psi)\\
&=\sum_{\rho\in\cB}U_{\rho}(C(\rho\b0\ua 0,\rho\b0\ua 0)^2-C(\rho\b0\ua
 0,\rho\b0\ua 0)).
\end{align*}
By \eqref{eq_full_covariance_L_limit} the claims
 \eqref{item_perturbative_h_limit}, \eqref{item_perturbative_L_limit}
 for $n=1$ hold true.

To prove the claims for $n\in\N_{\ge 2}$, we need to introduce a few more notations.
For any $T\in \T_n$, $j\in \{1,2,\cdots,n\}$ set
\begin{align*}
&G_j^1(T):=\{v\in\{1,2,\cdots,n\}\ |\\
&\qquad\qquad\qquad v\text{ is on the shortest path
 connecting 1 to $j$ in $T$}\},\\
&\tilde{G}_{j}^1(T):=G_j^1(T)\backslash\{1\}.
\end{align*}
Note that $1,j\in G_j^1(T)$.
In the following we use the notations introduced in
 Subsection \ref{subsec_general_tree_expansion} plus
 \eqref{eq_truncated_interaction_polynomial}. For any $T\in \T_n$,
 $((\s_l,\theta_l))_{l\in T}\in\prod_{l\in T}(\spin\times\{1,-1\})$, 
$(\rho_j, \bx_j,x_j)\in \cB\times\G_{\infty}\times[0,\beta)$ $(j=1,2,\cdots,n)$, set
\begin{align*}
&F_{T,((\s_l,\theta_l))_{l\in
 T}}(\rho_1\bx_1x_1,\rho_2\bx_2x_2,\cdots,\rho_n\bx_nx_n)\\
&:=\prod_{j\in\{1,2,\cdots,n\}\atop\text{with }L_j^1(T)\neq
 \emptyset}\prod_{\{j,s\}\in
 L_j^1(T)}(-2\widetilde{C}(\rho_j\bx_j\s_{\{j,s\}}x_j\theta_{\{j,s\}}, \rho_s\bx_s(-\s_{\{j,s\}})x_s(-\theta_{\{j,s\}}))),\\
&F_{T,((\s_l,\theta_l))_{l\in
 T}}'((\rho_1,\rho_2,\cdots,\rho_n),(\bx_2,\bx_3,\cdots,\bx_n),
(x_1,x_2,\cdots,x_n))\\
&:=\prod_{j\in\{1,2,\cdots,n\}\atop\text{with }L_j^1(T)\neq
 \emptyset}\prod_{\{j,s\}\in
 L_j^1(T)}(-2\widetilde{C}(\rho_j\b0\s_{\{j,s\}}x_j\theta_{\{j,s\}}, \rho_s\bx_s(-\s_{\{j,s\}})x_s(-\theta_{\{j,s\}}))),
\end{align*}
where
\begin{align*}
&\widetilde{C}(\rho\bx\s x\theta,\eta\by\tau y\xi)\\
&:=\frac{1}{2}(1_{(\theta,\xi)=(1,-1)}C(\rho\bx\s x,\eta\by\tau
 y)-1_{(\theta,\xi)=(-1,1)}C(\eta\by\tau y,\rho\bx\s x)),\\
&(\forall
 (\rho,\bx,\s,x,\theta),(\eta,\by,\tau,y,\xi)\in\cB\times\G_{\infty}\times\spin\times[0,\beta)\times\{1,-1\}).
\end{align*}
Moreover, for any $(\rho_j,\bx_j,x_j)\in \cB\times\G\times[0,\beta)_h$ $(j=1,2,\cdots,n)$ set 
\begin{align*}
&H_{T,((\s_l,\theta_l))_{l\in
 T}}(\rho_1\bx_1x_1,\rho_2\bx_2x_2,\cdots,\rho_n\bx_nx_n)\\
&:=\prod_{j\in\{1,2,\cdots,n\}\atop\text{with }L_j^1(T)\neq
 \emptyset}\prod_{\{j,s\}\in
 L_j^1(T)}\frac{\partial}{\partial \psi^j_{\rho_j\bx_j\s_{\{j,s\}}x_j\theta_{\{j,s\}}}}\frac{\partial}{\partial \psi^s_{\rho_s\bx_s(-\s_{\{j,s\}})x_s(-\theta_{\{j,s\}})}}.
\end{align*}
Recall the definition \eqref{eq_truncated_interaction_polynomial} of the
 polynomial $V_{\rho\bx x}^+(\psi)\in\bigwedge \cV$. By the invariant
 property
\begin{align}
&C((\rho,\bx+\bz,\s,x),(\eta,\by+\bz,\tau,y))=C((\rho,\bx,\s,x),(\eta,\by,\tau,y)),\label{eq_spatial_translation_invariance_infinity}\\
&(\forall \bx,\by,\bz\in\G_{\infty})\notag
\end{align}
we see that
\begin{align}
&ope(T,C) H_{T,((\s_l,\theta_l))_{l\in
 T}}(\rho_1\bx_1x_1,\rho_2\bx_2x_2,\cdots,\rho_n\bx_nx_n)\label{eq_left_derivative_translation_invariance}\\
&\cdot \prod_{j=1}^nV_{\rho_j\bx_jx_j}^+(\psi^j)\Bigg|_{\psi^j=0\atop (\forall
 j\in
 \{1,2,\cdots,n\})}\notag\\
&= ope(T,C)\notag\\
&\quad\cdot  H_{T,((\s_l,\theta_l))_{l\in
 T}}(\rho_1r_L(\bx_1+\by)x_1,\rho_2r_L(\bx_2+\by)x_2,\cdots,\rho_nr_L(\bx_n+\by)x_n)\notag\\
&\quad\cdot \prod_{j=1}^nV_{\rho_jr_L(\bx_j+\by)x_j}^+(\psi^j)\Bigg|_{\psi^j=0\atop (\forall
 j\in
 \{1,2,\cdots,n\})},\ (\forall \by\in\G_{\infty}).\notag
\end{align}
Using
 \eqref{eq_laplacian_anti_symmetrization},
 \eqref{eq_spatial_translation_invariance_infinity},
 \eqref{eq_left_derivative_translation_invariance}, we have that 
\begin{align}
&a_n(\beta,L,h)(\bU)=\frac{(-1)^{n+1}}{n!\beta
 L^d}\sum_{T\in\T_n}Ope(T,C)\prod_{j=1}^nV(\psi^j)\Bigg|_{\psi^j=0\atop
 (\forall j\in\{1,2,\cdots,n\})}\label{eq_transform_truncation_translation}\\
&=\frac{(-1)^{n+1}}{n!\beta L^d}\sum_{T\in \T_n}\prod_{l\in
 T}\Bigg(\sum_{(\s_l,\theta_l)\atop \in\spin\times \{1,-1\}}\Bigg)\prod_{i=1}^n\Bigg(
\frac{1}{h}\sum_{(\rho_i,\bx_i,x_i)\atop \in
 \cB\times\G\times[0,\beta)_h}U_{\rho_i}\Bigg)ope(T,C)\notag\\
&\quad \cdot F_{T,((\s_l,\theta_l))_{l\in
 T}}(\rho_1\bx_1x_1,\rho_2\bx_2x_2,\cdots,\rho_n\bx_nx_n)\notag\\
&\quad \cdot H_{T,((\s_l,\theta_l))_{l\in
 T}}(\rho_1\bx_1x_1,\rho_2\bx_2x_2,\cdots,\rho_n\bx_nx_n)\notag\\
&\quad\cdot
 \prod_{j=1}^nV_{\rho_j\bx_jx_j}^+(\psi^j)\Bigg|_{\psi^j=0\atop (\forall
 j\in \{1,2,\cdots,n\})}\notag\\
&=\frac{(-1)^{n+1}}{n!\beta L^d}\sum_{T\in \T_n}\prod_{l\in
 T}\Bigg(\sum_{(\s_l,\theta_l)\atop \in\spin\times \{1,-1\}}\Bigg)\prod_{i=1}^n\Bigg(
\frac{1}{h}\sum_{(\rho_i,\bx_i,x_i)\atop\in
 \cB\times\G\times[0,\beta)_h}U_{\rho_i}\Bigg)ope(T,C)\notag\\
&\quad \cdot F_{T,((\s_l,\theta_l))_{l\in
 T}}\Bigg(\rho_1\bx_1x_1,\rho_2r_L\Bigg(\sum_{v\in G_2^1(T)}\bx_v\Bigg)x_2,\cdots,\rho_nr_L\Bigg(\sum_{v\in G_n^1(T)}\bx_v\Bigg)x_n\Bigg)\notag\\
&\quad \cdot H_{T,((\s_l,\theta_l))_{l\in
 T}}\Bigg(\rho_1\bx_1x_1,\rho_2r_L\Bigg(\sum_{v\in G_2^1(T)}\bx_v\Bigg)x_2,\cdots,\rho_nr_L\Bigg(\sum_{v\in G_n^1(T)}\bx_v\Bigg)x_n\Bigg)\notag\\
&\quad\cdot
 \prod_{j=1}^nV_{\rho_jr_L(\sum_{v\in G_j^1(T)}\bx_v)x_j}^+(\psi^j)\Bigg|_{\psi^j=0\atop (\forall
 j\in \{1,2,\cdots,n\})}\notag\\
&=\frac{(-1)^{n+1}}{n!\beta L^d}\sum_{T\in \T_n}\prod_{l\in
 T}\Bigg(\sum_{(\s_l,\theta_l)\atop\in\spin\times \{1,-1\}}\Bigg)\prod_{i=1}^n\Bigg(
\frac{1}{h}\sum_{(\rho_i,\bx_i,x_i)\atop\in
 \cB\times\G\times[0,\beta)_h}U_{\rho_i}\Bigg)ope(T,C)\notag\\
&\quad \cdot F_{T,((\s_l,\theta_l))_{l\in
 T}}'((\rho_1,\rho_2,\cdots,\rho_n),(\bx_2,\bx_3,\cdots,\bx_n),(x_1,x_2,\cdots,x_n))\notag\\
&\quad \cdot H_{T,((\s_l,\theta_l))_{l\in
 T}}\Bigg(\rho_1\bx_1x_1,\rho_2r_L\Bigg(\sum_{v\in G_2^1(T)}\bx_v\Bigg)x_2,\cdots,\rho_nr_L\Bigg(\sum_{v\in G_n^1(T)}\bx_v\Bigg)x_n\Bigg)\notag\\
&\quad\cdot
 \prod_{j=1}^nV_{\rho_jr_L(\sum_{v\in G_j^1(T)}\bx_v)x_j}^+(\psi^j)\Bigg|_{\psi^j=0\atop (\forall
 j\in \{1,2,\cdots,n\})}\notag\\
&=\frac{(-1)^{n+1}}{n!\beta}\sum_{T\in \T_n}
\prod_{l\in
 T}\Bigg(\sum_{(\s_l,\theta_l)\atop\in\spin\times \{1,-1\}}\Bigg)
\Bigg(
\frac{1}{h}\sum_{(\rho_1,x_1)\atop\in
 \cB\times[0,\beta)_h}U_{\rho_1}\Bigg)
\prod_{i=2}^n\Bigg(
\frac{1}{h}\sum_{(\rho_i,\bx_i,x_i)\atop\in
 \cB\times\G\times[0,\beta)_h}U_{\rho_i}\Bigg)\notag\\
&\quad \cdot ope(T,C)\notag\\
&\quad\cdot F_{T,((\s_l,\theta_l))_{l\in
 T}}'((\rho_1,\rho_2,\cdots,\rho_n),(\bx_2,\bx_3,\cdots,\bx_n),(x_1,x_2,\cdots,x_n))\notag\\
&\quad \cdot H_{T,((\s_l,\theta_l))_{l\in
 T}}\Bigg(\rho_1\b0 x_1,\rho_2r_L\Bigg(\sum_{v\in
 \tilde{G}_2^1(T)}\bx_v\Bigg)x_2,\cdots,\rho_nr_L\Bigg(\sum_{v\in
 \tilde{G}_n^1(T)}\bx_v\Bigg)x_n\Bigg)\notag\\
&\quad\cdot
V_{\rho_1\b0 x_1}^+(\psi^1)
 \prod_{j=2}^nV_{\rho_jr_L(\sum_{v\in \tilde{G}_j^1(T)}\bx_v)x_j}^+(\psi^j)\Bigg|_{\psi^j=0\atop (\forall
 j\in \{1,2,\cdots,n\})}.\notag
\end{align}
For any $(\bx_1,\bx_2,\cdots,\bx_{n-1})\in\G_{\infty}^{n-1}$ set 
\begin{align*}
&\chi_L(\bx_1,\bx_2,\cdots,\bx_{n-1})\\
&:=1_{L\in 2\N}1_{\<\bx_i,\bv_j\>\in
 \{-\frac{L}{2},-\frac{L}{2}+1,\cdots,\frac{L}{2}-1\},(\forall
 i\in\{1,2,\cdots,n-1\},j\in \{1,2,\cdots,d\})}\\
&\quad + 1_{L\notin 2\N}1_{\<\bx_i,\bv_j\>\in
 \{-\frac{L-1}{2},-\frac{L-1}{2}+1,\cdots,\frac{L-1}{2}\},(\forall
 i\in\{1,2,\cdots,n-1\},j\in \{1,2,\cdots,d\})}.
\end{align*}
For any $(\rho_1,\rho_2,\cdots,\rho_n)\in\cB^n$,
 $(\bx_2,\bx_3,\cdots,\bx_n)\in\G_{\infty}^{n-1}$,
 $(x_1,x_2,\cdots,x_n)\in[0,\beta)_h^n$, set
\begin{align*}
&H_{T,((\s_l,\theta_l))_{l\in
 T}}'((\rho_1,\rho_2,\cdots,\rho_n),(\bx_2,\bx_3,\cdots,\bx_n),
(x_1,x_2,\cdots,x_n))\\
&:=ope(T,C)\\
&\quad\cdot H_{T,((\s_l,\theta_l))_{l\in
 T}}\Bigg(\rho_1\b0 x_1,\rho_2r_L\Bigg(\sum_{v\in
 \tilde{G}_2^1(T)}\bx_v\Bigg)x_2,\cdots,\rho_nr_L\Bigg(\sum_{v\in
 \tilde{G}_n^1(T)}\bx_v\Bigg)x_n\Bigg)\\
&\quad\cdot
V_{\rho_1\b0 x_1}^+(\psi^1)
 \prod_{j=2}^nV_{\rho_jr_L(\sum_{v\in \tilde{G}_j^1(T)}\bx_v)x_j}^+(\psi^j)\Bigg|_{\psi^j=0\atop (\forall
 j\in \{1,2,\cdots,n\})}.
\end{align*}
Moreover, for any $x\in[0,\beta)$ let $\hat{x}$ denote an element of
 $[0,\beta)_h$ satisfying $x\in [\hat{x},\hat{x}+1/h)$. With these
 notations we obtain from \eqref{eq_transform_truncation_translation}
 that
\begin{align*}
&a_n(\beta,L,h)(\bU)\\
&=\frac{(-1)^{n+1}}{n!\beta}\sum_{T\in \T_n}
\prod_{l\in
 T}\Bigg(\sum_{(\s_l,\theta_l)\atop\in\spin\times \{1,-1\}}\Bigg)
\sum_{\rho_1\in\cB}U_{\rho_1}\int_0^{\beta}dx_1\\
&\quad\cdot\prod_{i=2}^n\Bigg(\sum_{(\rho_i,\bx_i)\atop\in
 \cB\times\G_{\infty}}U_{\rho_i}\int_0^{\beta}dx_i\Bigg)\chi_{L}(\bx_2,\bx_3,\cdots,\bx_n) \\
&\quad\cdot F_{T,((\s_l,\theta_l))_{l\in
 T}}'((\rho_1,\rho_2,\cdots,\rho_n),(\bx_2,\bx_3,\cdots,\bx_n),(\hat{x_1},\hat{x_2},\cdots,\hat{x_n}))\notag\\
&\quad \cdot H_{T,((\s_l,\theta_l))_{l\in
 T}}'((\rho_1,\rho_2,\cdots,\rho_n),(\bx_2,\bx_3,\cdots,\bx_n),(\hat{x_1},\hat{x_2},\cdots,\hat{x_n})).
\end{align*}
Using \eqref{eq_sufficient_decay_L_limit_pre}, \eqref{eq_property_irrelevant_function} and the
 properties of the matrix $M_{at}(T,\xi,\bs)$ inside $ope(T,C)$, we can
 derive that 
\begin{align}
&| H_{T,((\s_l,\theta_l))_{l\in
 T}}'((\rho_1,\rho_2,\cdots,\rho_n),(\bx_2,\bx_3,\cdots,\bx_n),(\hat{x_1},\hat{x_2},\cdots,\hat{x_n}))|\label{eq_determinant_part_uniform_bound}\\
&\le \prod_{i=1}^n\Bigg(\sum_{m_i\in\{2,4\}}\Bigg)\Bigg|
ope(T,C)\notag\\
&\quad\cdot H_{T,((\s_l,\theta_l))_{l\in
 T}}\Bigg(\rho_1\b0 \hat{x_1},\rho_2r_L\Bigg(\sum_{v\in
 \tilde{G}_2^1(T)}\bx_v\Bigg)\hat{x_2},\cdots,\rho_nr_L\Bigg(\sum_{v\in
 \tilde{G}_n^1(T)}\bx_v\Bigg)\hat{x_n}\Bigg)\notag\\
&\quad\cdot
\cP_{m_1}V_{\rho_1\b0 \hat{x}_1}^+(\psi^1)
 \prod_{j=2}^n\cP_{m_j}V_{\rho_jr_L(\sum_{v\in \tilde{G}_j^1(T)}\bx_v)\hat{x}_j}^+(\psi^j)\Bigg|_{\psi^j=0\atop (\forall
 j\in \{1,2,\cdots,n\})}\Bigg|\notag\\
&\le \prod_{i=1}^n\Bigg(\sum_{m_i\in\{2,4\}}1_{n_i(T)\le
 m_i}n_i(T)!\left(\begin{array}{c}m_i\\ n_i(T)\end{array}
\right)\Bigg)\notag\\
&\quad\cdot \sup_{\bp_j,\bq_j\in\C^n \text{ with
 }\|\bp_j\|_{\C^n},\|\bq_j\|_{\C^n}\le 1\atop
 (j=1,2,\cdots,\frac{1}{2}\sum_{k=1}^nm_k-n+1)}\sup_{X_j,Y_j\in I_0\atop
 (j=1,2,\cdots,\frac{1}{2}\sum_{k=1}^nm_k-n+1)}\notag\\
&\qquad\cdot |\det(\<\bp_i,\bq_j\>_{\C^n}C(X_i,Y_j))_{1\le
 i,j\le \frac{1}{2}\sum_{k=1}^nm_k-n+1}|\notag\\
&\le \prod_{i=1}^n\Bigg(\sum_{m_i\in\{2,4\}}1_{n_i(T)\le
 m_i}n_i(T)!\left(\begin{array}{c}m_i\\ n_i(T)\end{array}
\right)\Bigg)\notag\\
&\quad \cdot \Bigg(\frac{1}{2}\sum_{k=1}^nm_k-n+1\Bigg)!c(\beta,d,E_{max})^{\frac{1}{2}\sum_{k=1}^nm_k-n+1}.\notag
\end{align}
With these preparations we can prove the claims
 \eqref{item_perturbative_h_limit}, \eqref{item_perturbative_L_limit}
 for $n\in \N_{\ge 2}$. Let us take any non-empty compact set $K$ of $\C^b$.

\eqref{item_perturbative_h_limit}: Since it consists of finite sums and
 products of the covariance $C:(\cB\times \G_{\infty}\times
 \spin\times [0,\beta))^2\to \C$, the domain of the function 
$$
H_{T,((\s_l,\theta_l))_{l\in
 T}}'((\rho_1,\cdots,\rho_n),\cdot,\cdot)
$$
can be naturally extended to $(\G_{\infty})^{n-1}\times [0,\beta)^n$. Note that 
$(x,y)\mapsto C(\rho\bx\s x,\eta\by \tau y)$ is continuous a.e. in
 $[0,\beta)^2$, and thus
\begin{align*}
&(x_1,x_2,\cdots,x_n)\mapsto\\
& H_{T,((\s_l,\theta_l))_{l\in
 T}}'((\rho_1,\cdots,\rho_n),(\bx_2,\cdots,\bx_n),({x_1},{x_2},\cdots,{x_n}))
\end{align*}
is continuous a.e. in $[0,\beta)^n$. These imply that 
\begin{align*}
&\lim_{h\to \infty\atop h\in
 2\N/\beta}F_{T,((\s_l,\theta_l))_{l\in
 T}}'((\rho_1,\cdots,\rho_n),(\bx_2,\cdots,\bx_n),(\hat{x_1},\hat{x_2},\cdots,\hat{x_n}))\\
&=F_{T,((\s_l,\theta_l))_{l\in
 T}}'((\rho_1,\cdots,\rho_n),(\bx_2,\cdots,\bx_n),({x_1},{x_2},\cdots,{x_n})),\\
&\lim_{h\to \infty\atop h\in
 2\N/\beta}H_{T,((\s_l,\theta_l))_{l\in
 T}}'((\rho_1,\cdots,\rho_n),(\bx_2,\cdots,\bx_n),(\hat{x_1},\hat{x_2},\cdots,\hat{x_n}))\\
&=H_{T,((\s_l,\theta_l))_{l\in
 T}}'((\rho_1,\cdots,\rho_n),(\bx_2,\cdots,\bx_n),({x_1},{x_2},\cdots,{x_n}))\\
&\text{ for a.e. }(x_1,x_2,\cdots,x_n)\in [0,\beta)^n.
\end{align*}
By these convergence properties and the uniform bounds
 \eqref{eq_sufficient_decay_L_limit_pre},
 \eqref{eq_determinant_part_uniform_bound} we can apply the dominated
 convergence theorem for $L^1([0,\beta)^n)$ to conclude that 
$$
\lim_{h\to \infty\atop h\in 2\N/\beta}a_n(\beta,L,h)=a_n(\beta,L)\text{
 in }C(K;\C),
$$ 
where 
\begin{align*}
&a_n(\beta,L)(\bU)\\
&:=\frac{(-1)^{n+1}}{n!\beta}\sum_{T\in \T_n}
\prod_{l\in
 T}\Bigg(\sum_{(\s_l,\theta_l)\atop\in\spin\times \{1,-1\}}\Bigg)
\sum_{\rho_1\in\cB}U_{\rho_1}\int_0^{\beta}dx_1\\
&\quad\cdot\prod_{i=2}^n\Bigg(\sum_{(\rho_i,\bx_i)\atop\in
 \cB\times\G_{\infty}}U_{\rho_i}\int_0^{\beta}dx_i\Bigg)\chi_{L}(\bx_2,\bx_3,\cdots,\bx_n) \\
&\quad\cdot F_{T,((\s_l,\theta_l))_{l\in
 T}}'((\rho_1,\rho_2,\cdots,\rho_n),(\bx_2,\bx_3,\cdots,\bx_n),({x_1},{x_2},\cdots,{x_n}))\notag\\
&\quad \cdot H_{T,((\s_l,\theta_l))_{l\in
 T}}'((\rho_1,\rho_2,\cdots,\rho_n),(\bx_2,\bx_3,\cdots,\bx_n),({x_1},{x_2},\cdots,{x_n})).
\end{align*}

\eqref{item_perturbative_L_limit}: Substitution of
 \eqref{eq_sufficient_decay_L_limit} and
 \eqref{eq_determinant_part_uniform_bound} yields that
\begin{align}
&\Bigg|\prod_{i=1}^nU_{\rho_i}\chi_L(\bx_2,\bx_3,\cdots,\bx_{n})\label{eq_L_limit_final_uniform_upper_bound}\\
&\quad\cdot F_{T,((\s_l,\theta_l))_{l\in
 T}}'((\rho_1,\rho_2,\cdots,\rho_n),(\bx_2,\bx_3,\cdots,\bx_n),({x_1},{x_2},\cdots,{x_n}))\notag\\
&\quad \cdot H_{T,((\s_l,\theta_l))_{l\in
 T}}'((\rho_1,\rho_2,\cdots,\rho_n),(\bx_2,\bx_3,\cdots,\bx_n),({x_1},{x_2},\cdots,{x_n}))\Bigg|\notag\\
&\le
 \prod_{i=1}^n|U_{\rho_i}|\prod_{j=2}^n\Bigg(
\frac{c(\beta,d,E_{max})}{1+\left(\frac{2}{\pi}\right)^{d+1}\sum_{k=1}^d|\<\bx_j,\bv_k\>|^{d+1}}\Bigg)\notag\\
&\quad \cdot\prod_{k=1}^n\Bigg(\sum_{m_k\in\{2,4\}}1_{n_k(T)\le
 m_k}n_k(T)!\left(\begin{array}{c}m_k\\ n_k(T)\end{array}\right)
\Bigg)\notag\\
&\quad\cdot
 \Bigg(\frac{1}{2}\sum_{k=1}^nm_k-n+1\Bigg)!c(\beta,d,E_{max})^{\frac{1}{2}\sum_{k=1}^nm_k-n+1},\notag\\
&(\forall (\bx_2,\bx_3,\cdots,\bx_n)\in\G_{\infty}^{n-1},(x_1,x_2,\cdots,x_n)\in[0,\beta)^n).\notag
\end{align}
The right-hand side of the inequality above is integrable over
 $\G_{\infty}^{n-1}\times [0,\beta)^n$. Note that
$$
\lim_{L\to \infty\atop L\in \N}\chi_l(\bx_2,\bx_3,\cdots,\bx_n)=1,\ (\forall
 (\bx_2,\bx_3,\cdots,\bx_n)\in \G_{\infty}^{n-1}).
$$
Moreover, we can see from \eqref{eq_full_covariance_L_limit} that 
\begin{align*}
\lim_{L\to \infty\atop L\in \N}
 &F_{T,((\s_l,\theta_l))_{l\in
 T}}'((\rho_1,\rho_2,\cdots,\rho_n),(\bx_2,\bx_3,\cdots,\bx_n),({x_1},{x_2},\cdots,{x_n}))\notag\\
&\cdot H_{T,((\s_l,\theta_l))_{l\in
 T}}'((\rho_1,\rho_2,\cdots,\rho_n),(\bx_2,\bx_3,\cdots,\bx_n),({x_1},{x_2},\cdots,{x_n}))
\end{align*}
exists for any
 $((\bx_2,\bx_3,\cdots,\bx_n),(x_1,x_2,\cdots,x_n))\in\G_{\infty}^{n-1}\times
 [0,\beta)^n$. By these convergence properties and the inequality \eqref{eq_L_limit_final_uniform_upper_bound} we can apply the dominated convergence theorem
 for $L^1(\G_{\infty}^{n-1}\times[0,\beta)^n)$ to conclude that
 $a_n(\beta,L)$ converges in $C(K;\C)$ as $L\to \infty$ $(L\in \N)$.
\end{proof}

\section{Direct treatment of the free energy density}\label{app_partition_function}

In this part of Appendix we prove some lemmas concerning the free energy
density in direct ways without going through the Grassmann integral
formulation. Though the results of this section are needed only in the
model-dependent analysis in Subsection \ref{subsec_application_IR}, here
we consider the problem in a general configuration.
Let $H_0$, $V$, $H$ be the operators on
$F_f(L^2(\cB\times \G\times\spin))$ defined in Subsection \ref{subsec_Hamiltonian}
with $E\in C(\R^d;\Mat(b,\C))$ satisfying \eqref{eq_dispersion_hermite},
\eqref{eq_dispersion_periodic}. In the following $id$ denotes the
identity map on $F_f(L^2(\cB\times \G\times\spin))$.

\begin{lemma}\label{lem_free_free_energy}
For any $\bk\in\G^*$ let $\alpha_{\rho}(\bk)$ $(\rho\in\cB)$ be the eigen
 values of $E(\bk)$. Then, 
\begin{align*}
-\frac{1}{\beta L^d}\log(\Tr e^{-\beta H_0})=-\frac{2}{\beta
 L^d}\sum_{\rho\in \cB}\sum_{\bk\in\G^*}\log(1+e^{-\beta
 \alpha_{\rho}(\bk)}).
\end{align*}
\end{lemma}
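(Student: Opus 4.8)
The plan is to diagonalize $H_0$ completely, reduce to a sum of commuting number operators, and use the multiplicativity of the trace over Fock-space modes. First I would invoke the unitary transform $F$ constructed in the proof of Lemma \ref{lem_characterization_covariance}, which satisfies
\begin{align*}
FH_0F^*=\sum_{\rho\in\cB}\sum_{\bx,\by\in\G}\sum_{\s\in\spin}\frac{1}{L^d}\sum_{\bk\in\G^*}e^{i\<\bk,\bx-\by\>}\alpha_{\rho}(\bk)\psi_{\rho\bx\s}^*\psi_{\rho\by\s},
\end{align*}
so $FH_0F^*$ is already diagonal in the band index. Since $\Tr$ is invariant under conjugation by the unitary $F$, it suffices to compute $\Tr e^{-\beta FH_0F^*}$.

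Next I would introduce the discrete Fourier transform on $\G$: for $(\rho,\bk,\s)\in\cB\times\G^*\times\spin$ set
\begin{align*}
c_{\rho\bk\s}:=\frac{1}{\sqrt{L^d}}\sum_{\bx\in\G}e^{-i\<\bk,\bx\>}\psi_{\rho\bx\s},
\end{align*}
which define a new set of creation/annihilation operators obtained from $\{\psi_{\rho\bx\s}\}$ by a unitary change of single-particle basis; the corresponding second-quantized transform on $F_f(L^2(\cB\times\G\times\spin))$ is unitary and leaves the trace unchanged. A direct substitution (as in the standard computation referenced via \cite[\mbox{Lemma B.10}]{K1}) gives
\begin{align*}
FH_0F^*\ \cong\ \sum_{\rho\in\cB}\sum_{\bk\in\G^*}\sum_{\s\in\spin}\alpha_{\rho}(\bk)\,c_{\rho\bk\s}^*c_{\rho\bk\s}.
\end{align*}
The operators $\hat n_{\rho\bk\s}:=c_{\rho\bk\s}^*c_{\rho\bk\s}$ are the number operators of the distinct modes indexed by $(\rho,\bk,\s)$, and they mutually commute.

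Finally I would use that $F_f(L^2(\cB\times\G\times\spin))$ factorizes as a tensor product over the one-dimensional mode spaces, with $\hat n_{\rho\bk\s}$ acting as $\mathrm{diag}(0,1)$ on the factor labelled $(\rho,\bk,\s)$ and as the identity on the others. Hence
\begin{align*}
\Tr e^{-\beta H_0}=\Tr\exp\Bigl(-\beta\sum_{\rho,\bk,\s}\alpha_{\rho}(\bk)\hat n_{\rho\bk\s}\Bigr)
=\prod_{\rho\in\cB}\prod_{\bk\in\G^*}\prod_{\s\in\spin}\bigl(1+e^{-\beta\alpha_{\rho}(\bk)}\bigr)
=\prod_{\rho\in\cB}\prod_{\bk\in\G^*}\bigl(1+e^{-\beta\alpha_{\rho}(\bk)}\bigr)^2,
\end{align*}
where the square comes from the two spin values, which do not affect $\alpha_{\rho}(\bk)$. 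Taking $-\frac{1}{\beta L^d}\log$ of both sides yields the claimed identity. I do not expect a genuine obstacle here; the only points requiring care are checking that the Fourier change of basis is implemented by a unitary on Fock space (so the trace is preserved) and that the trace is multiplicative over the commuting modes — both are routine, and the band-diagonalization step is already supplied by the proof of Lemma \ref{lem_characterization_covariance}.
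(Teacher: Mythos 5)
Your proposal is correct and follows essentially the same route as the paper: diagonalize $H_0$ into a sum of commuting number operators over modes $(\rho,\bk,\s)$ with eigenvalues $\alpha_{\rho}(\bk)$, and then evaluate the trace of the resulting free-fermion operator. The paper merely combines your two unitaries into one (defining $\psi_{\rho\bk\s}$ with both the Fourier phase and $U(\bk)^*$ at once) and computes the trace by summing over occupied subsets of $\cB\times\G^*\times\spin$ rather than by invoking the tensor-product factorization, which amounts to the same calculation.
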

\begin{proof}
With the unitary matrix $U(\bk)\in \Mat(b,\C)$ satisfying
 \eqref{eq_diagonalization_kinetic}, set 
\begin{align*}
\psi_{\rho\bk\s}:=\frac{1}{L^{\frac{d}{2}}}\sum_{\bx\in\G}e^{-i\<\bk,\bx\>}\sum_{\eta\in\cB}U(\bk)^*(\rho,\eta)\psi_{\eta\bx\s},\
 ((\rho,\bk,\s)\in\cB\times \G^*\times\spin).
\end{align*}
We can number each element of $\cB\times \G^*\times\spin$ so that
$\cB\times \G^*\times \spin=\{K_j\}_{j=1}^{2bL^d}$. The
 anti-commutation relation holds as follows. 
\begin{align*}
&\psi_{K_i}\psi_{K_j}+\psi_{K_j}\psi_{K_i}=0,\\
&\psi_{K_i}^*\psi_{K_j}+\psi_{K_j}\psi_{K_i}^*=\delta_{i,j}id,\ (\forall
 i,j\in \{1,2,\cdots,2bL^d\}).
\end{align*}
By \eqref{eq_diagonalization_kinetic},
$$
H_0=\sum_{(\rho,\bk,\s)\in\cB\times \G^*\times\spin}\alpha_{\rho}(\bk)\psi_{\rho\bk\s}^*\psi_{\rho\bk\s},
$$
and thus,
\begin{align*}
&H_0\psi_{\rho_1\bk_1\s_1}^*\psi_{\rho_2\bk_2\s_2}^*\cdots \psi_{\rho_n\bk_n\s_n}^*\O=\sum_{j=1}^n\alpha_{\rho_j}(\bk_j)\psi_{\rho_1\bk_1\s_1}^*\psi_{\rho_2\bk_2\s_2}^*\cdots \psi_{\rho_n\bk_n\s_n}^*\O,\\
&(\forall (\rho_j,\bk_j,\s_j)\in\cB\times \G^*\times\spin\
 (j=1,2,\cdots,n)).
\end{align*}
This implies that 
\begin{align*}
\Tr e^{-\beta H_0}=1+\sum_{S\subset \cB\times \G^*\times \spin\atop{with }S\neq
 \emptyset}e^{-\beta \sum_{(\rho,\bk,\s)\in S}\alpha_{\rho}(\bk)}=\prod_{\rho\in\cB}\prod_{\bk\in\G^*}(1+e^{-\beta
 \alpha_{\rho}(\bk)})^2.
\end{align*}
\end{proof}

We use the following lemma to approximate the normalized free energy
density at $\beta$ by that at $[\beta](\in\N)$.

\begin{lemma}\label{lem_beta_integer_limit_partition}
For any $\beta \in\R_{\ge 1}$,
\begin{align*}
&\left|\frac{1}{\beta L^d}\log\left(\frac{\Tr e^{-\beta H}}{\Tr
 e^{-\beta H_0}}\right)- \frac{1}{[\beta] L^d}\log\left(\frac{\Tr e^{-[\beta] H}}{\Tr
 e^{-[\beta] H_0}}\right)\right|\\
&\le \int_{[\beta]}^{\beta}d\gamma\frac{1}{\gamma^2L^d}\left|\log\left(\frac{\Tr e^{-\gamma H}}{\Tr
 e^{-\gamma H_0}}\right)\right|\\
&\quad +2b\left(2\sup_{\bk\in\R^d}\|E(\bk)\|_{b\times
 b}+\sup_{\rho\in\cB}|U_{\rho}|\right)\log\left(\frac{\beta}{[\beta]}\right).
\end{align*}
\end{lemma}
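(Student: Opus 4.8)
The plan is to express the difference as the integral of a derivative and to estimate the integrand termwise. Set
\[
g(\gamma):=\frac{1}{\gamma L^d}\log\left(\frac{\Tr e^{-\gamma H}}{\Tr e^{-\gamma H_0}}\right),\qquad \gamma\in\R_{>0},
\]
which is well defined and smooth in $\gamma$, since $\Tr e^{-\gamma H}$ and $\Tr e^{-\gamma H_0}$ are finite sums of exponentials and strictly positive. The left-hand side of the lemma equals $|g(\beta)-g([\beta])|$, and because $\beta\ge 1$ forces $[\beta]\ge 1>0$ we may write $g(\beta)-g([\beta])=\int_{[\beta]}^{\beta}g'(\gamma)\,d\gamma$; hence it suffices to bound $|g'(\gamma)|$ for $\gamma$ between $[\beta]$ and $\beta$.

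Differentiating gives
\[
g'(\gamma)=-\frac{1}{\gamma^2 L^d}\log\left(\frac{\Tr e^{-\gamma H}}{\Tr e^{-\gamma H_0}}\right)+\frac{1}{\gamma L^d}\left(-\frac{\Tr(H e^{-\gamma H})}{\Tr e^{-\gamma H}}+\frac{\Tr(H_0 e^{-\gamma H_0})}{\Tr e^{-\gamma H_0}}\right).
\]
For any self-adjoint operator $A$ on $F_f(L^2(\cB\times\G\times\spin))$ one has $|\Tr(A e^{-\gamma H})|\le \|A\|_{\mathfrak{B}(F_f)}\Tr e^{-\gamma H}$, obtained by diagonalizing $H$ and viewing the ratio as a convex combination of the diagonal entries of $A$; the same holds with $H$ replaced by $H_0$. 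Therefore
\[
|g'(\gamma)|\le \frac{1}{\gamma^2 L^d}\left|\log\left(\frac{\Tr e^{-\gamma H}}{\Tr e^{-\gamma H_0}}\right)\right|+\frac{1}{\gamma L^d}\left(\|H\|_{\mathfrak{B}(F_f)}+\|H_0\|_{\mathfrak{B}(F_f)}\right).
\]

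It remains to estimate the operator norms. Diagonalizing $E(\bk)$ as in \eqref{eq_diagonalization_kinetic} and passing to momentum modes as in the proof of Lemma \ref{lem_free_free_energy} exhibits $H_0$ as a sum of $2bL^d$ mutually commuting number operators with coefficients $\alpha_\rho(\bk)$; since $E(\bk)$ is Hermitian, $|\alpha_\rho(\bk)|\le\|E(\bk)\|_{b\times b}$, and so $\|H_0\|_{\mathfrak{B}(F_f)}\le 2\sum_{\rho\in\cB}\sum_{\bk\in\G^*}|\alpha_\rho(\bk)|\le 2bL^d\sup_{\bk\in\R^d}\|E(\bk)\|_{b\times b}$. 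For $V$, the single-site operators $\psi_{\rho\bx\ua}^*\psi_{\rho\bx\da}^*\psi_{\rho\bx\da}\psi_{\rho\bx\ua}-\tfrac12\sum_{\s}\psi_{\rho\bx\s}^*\psi_{\rho\bx\s}$ commute for distinct $(\rho,\bx)$ and each has operator norm $\le 1$, whence $\|V\|_{\mathfrak{B}(F_f)}\le bL^d\sup_{\rho\in\cB}|U_\rho|$ and $\|H\|_{\mathfrak{B}(F_f)}\le\|H_0\|_{\mathfrak{B}(F_f)}+\|V\|_{\mathfrak{B}(F_f)}$. Combining,
\[
\frac{1}{\gamma L^d}\left(\|H\|_{\mathfrak{B}(F_f)}+\|H_0\|_{\mathfrak{B}(F_f)}\right)\le\frac{1}{\gamma}\left(4b\sup_{\bk\in\R^d}\|E(\bk)\|_{b\times b}+b\sup_{\rho\in\cB}|U_\rho|\right)\le\frac{2b}{\gamma}\left(2\sup_{\bk\in\R^d}\|E(\bk)\|_{b\times b}+\sup_{\rho\in\cB}|U_\rho|\right).
\]

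Integrating the resulting bound on $|g'(\gamma)|$ over $\gamma$ from $[\beta]$ to $\beta$ and using $\int_{[\beta]}^{\beta}\gamma^{-1}\,d\gamma=\log(\beta/[\beta])$ then produces exactly the inequality claimed in Lemma \ref{lem_beta_integer_limit_partition}. The argument is elementary throughout; the only step demanding a little care is the operator-norm bookkeeping for $H_0$ and $V$, which must be arranged to yield the stated constant $2b\big(2\sup_{\bk\in\R^d}\|E(\bk)\|_{b\times b}+\sup_{\rho\in\cB}|U_\rho|\big)$, and since that constant is deliberately generous, any clean estimate of the above form is sufficient.
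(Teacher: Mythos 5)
Your proof is correct and follows essentially the same route as the paper: write the difference as $\int_{[\beta]}^{\beta}g'(\gamma)\,d\gamma$, split $g'$ into the logarithmic term and the two trace-ratio terms, bound the latter by $\|H\|_{\mathfrak{B}(F_f)}$ and $\|H_0\|_{\mathfrak{B}(F_f)}$ via the eigenbasis argument, and estimate those norms exactly as in \eqref{eq_H_0_uniform_bound} and \eqref{eq_V_norm_bound}. Your slightly sharper bound $\|V\|_{\mathfrak{B}(F_f)}\le bL^d\sup_{\rho}|U_{\rho}|$ (using commutativity of the single-site terms, versus the paper's triangle-inequality bound $2bL^d\sup_{\rho}|U_{\rho}|$) is immaterial since you relax it to the stated constant at the end.
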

\begin{proof}
Let $\s(H)$ denote the set of all eigen values of $H$. We can take an
 orthonormal basis $B$ of $F_f(L^2(\cB\times\G\times\spin))$
 consisting of eigen vectors of $H$. Then,
\begin{align}
|\Tr(He^{-\beta H})|\le \max_{\alpha\in \s(H)}|\alpha|\sum_{v\in
 B}\<v,e^{-\beta H}v\>_{F_f}=\|H\|_{\mathfrak{B}(F_f)}\Tr e^{-\beta
 H}.\label{eq_application_schwarz_ineq}
\end{align}
Since the eigen values of $H_0$ are 
$$
\Bigg\{\sum_{(\rho,\bk,\s)\in S}\alpha_{\rho}(\bk)\ \Big|\ S\subset
 \cB\times \G^*\times\spin,\ S\neq \emptyset\Bigg\}\cup \{0\},
$$
\begin{align}
\|H_0\|_{\mathfrak{B}(F_f)}\le \sum_{(\rho,\bk,\s)\in\cB\times
 \G^*\times \spin}|\alpha_{\rho}(\bk)|\le
 2bL^d\sup_{\bk\in\R^d}\|E(\bk)\|_{b\times
 b}.\label{eq_H_0_uniform_bound}
\end{align}
Since 
$$
\|\psi_{\rho\bx\s}^*\psi_{\rho\bx\s}\|_{\mathfrak{B}(F_f)}=1,\ (\forall
 (\rho,\bx,\s)\in\cB\times \G\times \spin),
$$
\begin{align}
\|V\|_{\mathfrak{B}(F_f)}&\le \sum_{(\rho,\bx)\in\cB\times
 \G}|U_{\rho}|\|\psi_{\rho\bx\ua}^*\psi_{\rho\bx\da}^*\psi_{\rho\bx\da}\psi_{\rho\bx\ua}\|_{\mathfrak{B}(F_f)}\label{eq_V_norm_bound}\\
&\quad+\frac{1}{2}
 \sum_{(\rho,\bx,\s)\in\cB\times \G\times
 \spin}|U_{\rho}|\|\psi_{\rho\bx\s}^*\psi_{\rho\bx\s}\|_{\mathfrak{B}(F_f)}\notag\\
&\le 2bL^d\sup_{\rho\in \cB}|U_{\rho}|.\notag
\end{align}
By combining \eqref{eq_H_0_uniform_bound}, \eqref{eq_V_norm_bound} with
 \eqref{eq_application_schwarz_ineq} we obtain
\begin{align}
\left|\frac{\Tr(e^{-\beta H}H)}{\Tr e^{-\beta H}}\right|\le
 2bL^d\left(\sup_{\bk\in \R^d}\|E(\bk)\|_{b\times b}+\sup_{\rho\in\cB}|U_{\rho}|\right).\label{eq_hamiltonian_expectation}
\end{align}
Note that
\begin{align}
&\frac{1}{\beta L^d}\log\left(\frac{\Tr e^{-\beta H}}{\Tr
 e^{-\beta H_0}}\right)- \frac{1}{[\beta] L^d}\log\left(\frac{\Tr e^{-[\beta] H}}{\Tr
 e^{-[\beta] H_0}}\right)\label{eq_free_energy_difference}\\
&=\int_{[\beta]}^{\beta}d\gamma\frac{d}{d\gamma}\left(\frac{1}{\gamma L^d}\log\left(\frac{\Tr e^{-\gamma H}}{\Tr
 e^{-\gamma H_0}}\right)\right)\notag\\
&=-\int_{[\beta]}^{\beta}d\gamma\frac{1}{\gamma^2 L^d}\log\left(\frac{\Tr e^{-\gamma H}}{\Tr
 e^{-\gamma H_0}}\right)-\int_{[\beta]}^{\beta}d\gamma\frac{1}{\gamma L^d}\frac{\Tr (e^{-\gamma H}H)}{\Tr
 e^{-\gamma H}}\notag\\
&\quad +\int_{[\beta]}^{\beta}d\gamma\frac{1}{\gamma L^d}\frac{\Tr (e^{-\gamma H_0}H_0)}{\Tr
 e^{-\gamma H_0}}.\notag
\end{align}
Using \eqref{eq_hamiltonian_expectation},
 we can derive the claimed inequality from \eqref{eq_free_energy_difference}.
\end{proof}

We use the next lemma to relate the output of the infrared integration
to the free energy density by means of the identity theorem. 

\begin{lemma}\label{lem_analyticity_domain_partition_function}
For any $r\in\R_{>0}$ there exists a domain $O(\subset \C)$ such that $(-r,r)\subset O$ and the function $\bU\mapsto \log(\Tr
 e^{-\beta H})$ is analytic in
 $O^b(=O\times O\times\cdots \times O)$.
\end{lemma}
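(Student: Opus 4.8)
The plan is to exploit that $\bU\mapsto H=H_0+V$ is an affine map from $\C^b$ into the (finite-dimensional) space of operators on $F_f(L^2(\cB\times\G\times\spin))$, so that the function $g(\bU):=\Tr e^{-\beta H}$ extends to an entire function on $\C^b$ — it is a finite sum of entries of the matrix exponential $e^{-\beta H(\bU)}$, whose power series converges locally uniformly and is differentiable in $\bU$ term by term — and then to show that $g$ takes values in $\C\backslash\R_{\le 0}$ on a suitable product domain, whence $\bU\mapsto\log(\Tr e^{-\beta H})$ is analytic there by analyticity of the principal branch of $\log$ on $\C\backslash\R_{\le 0}$.

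First I would record that for real $\bU$ the operator $H$ is self-adjoint: $H_0^*=H_0$ by \eqref{eq_dispersion_hermite}, and each summand of $V$ in \eqref{eq_interacting_hamiltonian_general} is manifestly self-adjoint when $U_\rho\in\R$. Hence $e^{-\beta H}$ is positive definite for real $\bU$, so $g(\bU)>0$. In particular $g$ is strictly positive on the compact box $[-r,r]^b$, and I set $m:=\min_{\bU\in[-r,r]^b}g(\bU)>0$.

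Next I would use uniform continuity of the entire function $g$ on the bounded set $\{\bU\in\C^b:|U_\rho|\le r+1\ (\forall\rho\in\cB)\}$ to choose $\eta\in(0,1)$ so small that $|g(\bU)-g(\bU')|<m/2$ whenever $\bU,\bU'$ lie in this set with $\max_{\rho\in\cB}|U_\rho-U_\rho'|<\eta$. I then take $O$ to be the open $\eta/2$-neighbourhood of the segment $[-r,r]$ in $\C$, that is $O:=\{z\in\C:\inf_{x\in[-r,r]}|z-x|<\eta/2\}$, which is open, connected and contains $(-r,r)$, hence a domain. For any $\bU\in O^b$ I pick $x_\rho\in[-r,r]$ with $|U_\rho-x_\rho|<\eta/2$ and put $\bx:=(x_1,\dots,x_b)\in[-r,r]^b$; then $|U_\rho|\le r+\eta/2<r+1$ for all $\rho$ and $\max_\rho|U_\rho-x_\rho|<\eta$, so $|g(\bU)-g(\bx)|<m/2$, and since $g(\bx)\ge m$ this gives $\Re g(\bU)>m/2>0$. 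Thus $g$ maps $O^b$ into $\{z\in\C:\Re z>0\}\subset\C\backslash\R_{\le 0}$, so $\bU\mapsto\log(\Tr e^{-\beta H})$, being the composition of the analytic map $g$ on $O^b$ with the principal branch of $\log$, is analytic on $O^b$.

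There is no genuine obstacle here; the argument is elementary. The only points needing a line of justification are the entireness of $g$ (immediate from finite-dimensionality) and the self-adjointness of $H$ at real $\bU$; the mild bookkeeping is to arrange that a single $\eta$ controls all $b$ coordinates simultaneously, which is exactly why one passes through a product domain built from a one-dimensional neighbourhood of $[-r,r]$.
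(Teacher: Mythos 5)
Your proof is correct and follows essentially the same route as the paper's: both establish that $\Re\,\Tr e^{-\beta H}>0$ on a complex product neighbourhood of $(-r,r)^b$ by perturbing off the real axis, where the trace is strictly positive for self-adjoint $H$, and then compose with the principal branch of $\log$. The only difference is in the quantification — the paper bounds $|\Tr e^{-\beta(H+i\delta V_0)}-\Tr e^{-\beta H}|$ explicitly via a derivative estimate and uses $\Tr e^{-\beta H}\ge 1$, yielding an explicit strip, whereas you obtain the neighbourhood softly from uniform continuity on a compact set; both are valid.
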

\begin{proof}
Take any $\delta_{\rho}\in[-1,1]$ $(\rho\in\cB)$. 
Define the operator $V_0$ on $F_f(L^2(\cB\times \G\times\spin))$ by
\begin{align*}
V_0:= \sum_{(\rho,\bx)\in\cB\times
 \G}\delta_{\rho}\psi_{\rho\bx\ua}^*\psi_{\rho\bx\da}^*\psi_{\rho\bx\da}\psi_{\rho\bx\ua}-\frac{1}{2}
 \sum_{(\rho,\bx,\s)\in\cB\times \G\times
 \spin}\delta_{\rho}\psi_{\rho\bx\s}^*\psi_{\rho\bx\s}.
\end{align*}
Take any $r\in\R_{>0}$ and assume that $\bU\in (-r,r)^b$. Then, for any $\delta \in[0,1]$,
\begin{align*}
|\Tr e^{-\beta (H+i\delta V_0)}-\Tr e^{-\beta H}|&\le
 \int_0^{\delta}d\eps
\left|\frac{d}{d\eps}\Tr e^{-\beta (H+i\eps
 V_0)}\right|\\
&\le \delta \beta 2^{2bL^d}\|V_0\|_{\mathfrak{B}(F_f)}e^{\beta
 (\|H\|_{\mathfrak{B}(F_f)}+\|V_0\|_{\mathfrak{B}(F_f)})},
\end{align*}
where we used the equality $(d/d\eps)\Tr e^{-\beta (H+i\eps
 V_0)}=-i\beta \Tr(e^{-\beta (H+i\eps
 V_0)}V_0)$. This equality can be justified by, e.g., \cite[\mbox{Lemma
 2.3}]{K1}. Therefore,
\begin{align*}
&\Re \Tr e^{-\beta (H+i\delta V_0)}\\
&\ge \Tr
 e^{-\beta H}- \delta \beta 2^{2bL^d}\|V_0\|_{\mathfrak{B}(F_f)}e^{\beta
 (\|H\|_{\mathfrak{B}(F_f)}+\|V_0\|_{\mathfrak{B}(F_f)})}\\
&\ge 1-\delta \beta 2^{2bL^d}\sup_{\bU\in[-r,r]^b,\atop \delta_{\rho}\in[-1,1](\rho\in\cB)}\left\{\|V_0\|_{\mathfrak{B}(F_f)}e^{\beta
 (\|H\|_{\mathfrak{B}(F_f)}+\|V_0\|_{\mathfrak{B}(F_f)})}\right\}.
\end{align*}
We can conclude from the above inequality that there
 exists $\eps\in\R_{>0}$ such that 
\begin{align*}
\Re \Tr e^{-\beta (H+iV_0)}>0,\ (\forall \bU\in (-r,r)^b,\delta_{\rho}
 \in (-\eps,\eps)\ (\rho\in\cB)).
\end{align*}
This implies that the function $\bU\mapsto \log(\Tr e^{-\beta H})$ is
 analytic in the domain 
$$
\{x+iy\ |\ x\in(-r,r),y\in (-\eps,\eps)\}^b\ (\subset \C^b).
$$
\end{proof}

\section*{Acknowledgments}
The author wishes to thank Fumihiko Nakano for interesting conversations.
This work was supported by JSPS KAKENHI Grant Number 26870110.

\section*{Notation}
\subsection*{Parameters and constants}
\begin{center}
\begin{longtable}{p{3cm}|p{8.5cm}|p{3.5cm}}
Notation & Description & Reference \\
\hline
$L$ & size of the spatial lattice & Subsection \ref{subsec_model}\\
$t_{h,e}$, $t_{h,o}$,\par $t_{v,e}$, $t_{v,o}$ & magnitude of the hopping matrix
 elements & Subsection \ref{subsec_model}\\
$U_{e,e}$, $U_{o,e}$,\par $U_{e,o}$, $U_{o,o}$ & coupling constants & Subsection
 \ref{subsec_model}\\
$\beta$ & inverse temperature & Subsection
 \ref{subsec_model}\\
$f_{\bt}$ & parameter depending only on $t_{h,e}$, $t_{h,o}$,\par
 $t_{v,e}$, $t_{v,o}$ & Subsection
 \ref{subsec_model}\\
$d$ & spatial dimension & Subsection
 \ref{subsec_Hamiltonian}\\
$b$ & number of sites in a primitive unit cell & Subsection
 \ref{subsec_Hamiltonian}\\
$h$ & element of $(2/\beta)\N$, step size of the discretization of $[0,\beta)$ & Subsection
 \ref{subsec_grassmann_integral}\\
$N$ & $4b\beta h L^d$, cardinality of $I$ & Subsection
 \ref{subsec_grassmann_integral}\\
$\fw(l)$ & scale-dependent weight & beginning of\par Section
 \ref{sec_general_estimation}\\
$\fr$ & number belonging to $(0,1]$, 
exponent inside $\|\cdot\|_{l,0}$, $\|\cdot\|_{l,1}$, $|\cdot-\cdot|_l$ & beginning of\par Section
 \ref{sec_general_estimation}\\
$c$ & real positive constant independent of any parameter & beginning of\par Section
 \ref{sec_multi_scale_analysis}\\
$M$ & parameter to control the upper bounds of covariances & Subsection
 \ref{subsec_UV_general}\\
$c(\alpha_1,\cdots,\alpha_n)$ & real positive constant
 depending only on parameters $\alpha_1,\cdots,\alpha_n$ & beginning of\par Section
 \ref{sec_UV}\\
$M_{UV}$ & parameter to control the size of support of UV cut-off
 functions & Subsection
 \ref{subsec_covariance_UV}\\
$N_h$ & largest scale in the UV integration  & Subsection
 \ref{subsec_covariance_UV}\\
$c_w$ & constant ($\in (0,1]$) inside $\fw(0)$ independent of any parameter & Subsection
 \ref{subsec_covariance_UV}\\
$M_{IR}$ & parameter to control the size of support of IR cut-off
 functions & Subsection
 \ref{subsec_cut_off_IR}\\
$N_{\beta}$ & smallest scale in the IR integration & Subsection
 \ref{subsec_cut_off_IR}
\end{longtable}
\end{center}

\subsection*{Sets and spaces}
\begin{center}
\begin{longtable}{p{3cm}|p{8.5cm}|p{3.5cm}}
Notation & Description & Reference \\
\hline
$\G(2L)$ & $\{0,1,\cdots,2L-1\}^2$ & Subsection \ref{subsec_model}\\
$F_f(L^2(\G(2L)\times\spin))$ & Fermionic Fock space & Subsection \ref{subsec_model}\\
$D_{\bt}(c)$ & subset of $\C$ which depends on $t_{h,e}$, $t_{h,o}$,
 $t_{v,e}$, $t_{v,o}$  & Subsection \ref{subsec_model}\\ 
$\G$ & spatial lattice for a generalized system & Subsection
 \ref{subsec_Hamiltonian}\\
$\G^*$ & momentum lattice for a generalized system & Subsection \ref{subsec_Hamiltonian}\\
$\cB$ & $\{1,2,\cdots,b\}$ & Subsection \ref{subsec_Hamiltonian}\\
$F_f(L^2(\cB\times\G\times\spin))$ & Fermionic Fock space & Subsection \ref{subsec_Hamiltonian}\\
$\Mat(n,\C)$ & set of all $n\times n$ matrices & Subsection
 \ref{subsec_Hamiltonian}\\
$[0,\beta)_h$ & $\{0,1/h,\cdots,\beta-1/h\}$ & Subsection
 \ref{subsec_grassmann_integral}\\
$I_0$ & $\cB\times\G\times\spin\times [0,\beta)_h$ & Subsection
 \ref{subsec_grassmann_integral}\\
$I$ & $I_0\times\{1,-1\}$ & Subsection
 \ref{subsec_grassmann_integral}\\
$\cV$ & complex vector space spanned by\par the basis $\{\psi_X\}_{X\in I}$ & Subsection
 \ref{subsec_grassmann_integral}\\
$\cV_p$ & complex vector space spanned by\par the basis $\{\psi_X^p\}_{X\in I}$ & Subsection
 \ref{subsec_grassmann_integral}\\
$\bigwedge\cV$ & Grassmann algebra generated by\par $\{\psi_X\}_{X\in I}$ & Subsection
 \ref{subsec_grassmann_integral}\\
$\S_n$ & set of all permutations over $\{1,2,$ $\cdots,$ $n\}$ & Subsection
 \ref{subsec_grassmann_integral}\\
$\cM$ & $(\pi/\beta)(2\Z+1)$ & Subsection
 \ref{subsec_covariance}\\
$\cM_h$ & $\{\o\in\cM\ |\ |\o|<\pi h\}$ & Subsection
 \ref{subsec_covariance}\\
$\cH$ & Hilbert space $L^2(\cB\times \G^*\times \spin \times \cM_h)$ & Subsection
 \ref{subsec_symmetric_formulation}\\
$\T_n$ & set of all trees over $\{1,2,\cdots,n\}$ & Subsection
 \ref{subsec_general_tree_expansion}\\
$L_q^p(T)$ & subgraph of tree $T$  & Subsection \ref{subsec_general_tree_expansion}\\
$I_{0,\infty}$ & $\cB\times \G\times \spin\times (1/h)\Z$ & beginning of\par
 Section \ref{sec_general_estimation_temperature}\\
$I_{\infty}$ & $I_{0,\infty}\times \{1,-1\}$ & beginning of\par
 Section \ref{sec_general_estimation_temperature}\\
$\G_{\infty}$ & $\{\sum_{j=1}^dm_j\bu_j\ |\ m_j\in\Z\ (j=1,2,\cdots,d)\}$ & beginning of\par Section \ref{sec_general_estimation_temperature}\\
$[-\beta_1/4,\beta_1/4)_h$ & $\{-\beta_1/4,-\beta_1/4+1/h,\cdots,\beta_1/4-1/h\}$ & beginning of\par
 Section \ref{sec_general_estimation_temperature}\\
$[\beta_1/4,$\par$\beta_a-\beta_1/4)_h$ & $\{\beta_1/4,\beta_1/4+1/h,\cdots,\beta_a-\beta_1/4-1/h\}$ & beginning of\par
 Section \ref{sec_general_estimation_temperature}\\
$\hat{I}_0$ & $\cB\times \G \times \spin \times [-\beta_1/4,\beta_1/4)_h$ & beginning of\par
 Section \ref{sec_general_estimation_temperature}\\
$\hat{I}$ & $\hat{I}_0\times \{1,-1\}$ & beginning of\par
 Section \ref{sec_general_estimation_temperature}\\
$I_0^0$ & $\cB\times \G \times \spin \times \{0\}$ & beginning of\par
 Section \ref{sec_general_estimation_temperature}\\
$I^0$ & $I_0^0\times \{1,-1\}$ & beginning of\par
 Section \ref{sec_general_estimation_temperature}\\
$\cS(l)$ & subset of $\bigwedge\cV$ & Subsection
 \ref{subsec_covariance_IR}\\
$\tilde{\cS}(l)$ & subset of $\cS(l)(\beta_1)\times\cS(l)(\beta_2)$ & Subsection \ref{subsec_covariance_IR}
\end{longtable}
\end{center}

\subsection*{Functions and maps}
\begin{center}
\begin{longtable}{p{3cm}|p{8.5cm}|p{3.5cm}}
Notation & Description & Reference \\
\hline
$\sH$ & 1-band Hamiltonian on $F_f(L^2(\G(2L)\times\spin))$  &
 Subsection \ref{subsec_model} \\
$\sH_0$ & kinetic part of $\sH$  &
 Subsection \ref{subsec_model} \\
$\sV$ & interacting part of $\sH$  &
 Subsection \ref{subsec_model} \\
$H$ & $b$-band Hamiltonian on $F_f(L^2(\cB\times\G\times\spin))$  &
 Subsection \ref{subsec_Hamiltonian} \\
$H_0$ & kinetic part of $H$  &
 Subsection \ref{subsec_Hamiltonian} \\
$V$ & interacting part of $H$  &
 Subsection \ref{subsec_Hamiltonian} \\
$E(\cdot)$ & the generalized hopping matrix in the momentum space & Subsection \ref{subsec_Hamiltonian}\\
$\cP_n$  & projection from $\bigwedge\cV$ to $\bigwedge^n\cV$ 
 & Subsection \ref{subsec_grassmann_integral}\\
$\partial /\partial \psi_X$  & Grassmann left derivative  & Subsection \ref{subsec_grassmann_integral}\\
$C(\cdot)$  & full covariance & Subsection \ref{subsec_covariance}\\
$I_n$ & $n\times n$ unit matrix  & Subsection \ref{subsec_covariance}\\
$C_{\le 0}^{\infty}(\cdot)$ & $h$-independent covariance matrix with
 Matsubara UV cut-off &  Subsection \ref{subsec_symmetric_formulation}\\
$d_j(\cdot)$ & function to measure the difference between 2 elements of
 $I$ &  beginning of\par Section \ref{sec_general_estimation}\\
$ope(T,C_o)$ & operator made of Grassmann left-derivatives &  Subsection
 \ref{subsec_general_tree_expansion}\\
$Ope(T,C_o)$ & $ope(T,C_o)\prod_{\{p,q\}\in T}(\D_{p,q}(C_0)+\D_{q,p}(C_0))$ 
&  Subsection
 \ref{subsec_general_tree_expansion}\\
$r_{\beta}(\cdot)$ & map from $(1/h)\Z$ to $[0,\beta)_h$ satisfying
 $x=n_{\beta}(x)\beta+r_{\beta}(x)$, $(\forall x\in (1/h)\Z)$ 
&  beginning of\par Section
 \ref{sec_general_estimation_temperature}\\
$n_{\beta}(\cdot)$ & map from $(1/h)\Z$ to $\Z$ satisfying
 $x=n_{\beta}(x)\beta+r_{\beta}(x)$, $(\forall x\in (1/h)\Z)$ 
&  beginning of\par Section
 \ref{sec_general_estimation_temperature}\\
$R_{\beta}(\cdot)$ & map from $I_{0,\infty}^n$ to $I_0^n$, or from $I_{\infty}^n$ to $I^n$ &  beginning of\par Section
 \ref{sec_general_estimation_temperature}\\
$N_{\beta}(\cdot)$ & map from $I_{0,\infty}^n$ to $\Z$, or from $I_{\infty}^n$ to $\Z$ &  beginning of\par Section
 \ref{sec_general_estimation_temperature}\\
$r_L(\cdot)$ & map from $\G_{\infty}$ to $\G$ &  beginning of\par Section
 \ref{sec_general_estimation_temperature}\\
$\hat{d}_j(\cdot)$ & function to measure the difference between 2 elements of
 $\hat{I}$  &  beginning of\par Section
 \ref{sec_general_estimation_temperature}\\
$\phi(\cdot)$ & Gevrey-class function used to construct cut-off functions  &  Subsection
 \ref{subsec_covariance_UV}\\
$\chi_{h,l}(\cdot)$ & UV cut-off function  &  Subsection
 \ref{subsec_covariance_UV}\\
$C_{l}^+(\cdot)$, $C_{l}^-(\cdot)$  &
 covariance matrices for the UV integration  &  Subsection
 \ref{subsec_covariance_UV}\\
$\cD_j$  & finite difference operator &  Subsection
 \ref{subsec_covariance_UV}\\
$\chi_{l}(\cdot)$ & IR cut-off function  &  Subsection
 \ref{subsec_cut_off_IR}\\
$\chi_{\le l}(\cdot)$ & $\sum_{j=l}^{N_{\beta}}\chi_j(\cdot)$  &  Subsection
 \ref{subsec_cut_off_IR}\\
$\hat{\chi}_{\le l}(\cdot)$ & variant of $\chi_{\le l}(\cdot)$  &  Subsection
 \ref{subsec_cut_off_IR}\\
$C_l(\cdot)$  & covariance matrix for the IR integration &  Subsection
 \ref{subsec_covariance_IR}
\end{longtable}
\end{center}

\subsection*{Inner products, norms and semi-norms}
\begin{center}
\begin{longtable}{p{3cm}|p{8.5cm}|p{3.5cm}}
Notation & Description & Reference \\
\hline
$\<\cdot,\cdot\>$ & standard inner product of $\R^d$ & Subsection
 \ref{subsec_Hamiltonian}\\
$\|\cdot\|_{b\times b}$ & operator norm for $b\times b$-matrices &
 Subsection \ref{subsec_Hamiltonian}\\
$\<\cdot,\cdot\>_{\C^n}$ & standard inner product of $\C^n$ & Subsection
 \ref{subsec_Hamiltonian}\\
$\|\cdot\|_{\C^n}$ & norm of $\C^n$ induced by $\<\cdot,\cdot\>_{\C^n}$
& Subsection
 \ref{subsec_Hamiltonian}\\
$\<\cdot,\cdot\>_{F_f}$ & inner product of  $F_f(L^2(\cB\times\G\times
 \spin))$ & Subsection \ref{subsec_grassmann_gaussian_integral}\\
$\|\cdot\|_{F_f}$ & norm of $F_f(L^2(\cB\times\G\times
 \spin))$ induced by $\<\cdot,\cdot\>_{F_f}$ & Subsection \ref{subsec_grassmann_gaussian_integral}\\
$\|\cdot\|_{\mathfrak{B}(F_f)}$ & operator norm for linear transforms on
 $F_f(L^2(\cB\times\G\times \spin))$ & Subsection \ref{subsec_grassmann_gaussian_integral}\\
$\<\cdot,\cdot\>_{\cH}$ & inner product of the Hilbert space $\cH$ & Subsection
 \ref{subsec_symmetric_formulation}\\
$\|\cdot\|_{\cH}$ & norm of $\cH$ induced by $\<\cdot,\cdot\>_{\cH}$ & Subsection
 \ref{subsec_symmetric_formulation}\\
$\|\cdot\|_{L^1}$ & $L^1$-norm for functions on $I^n$ & Subsection
 \ref{subsec_symmetric_formulation}\\
$\|\cdot\|_{l,0}$ & scale-dependent norm\par for anti-symmetric functions
& beginning
 of\par Section \ref{sec_general_estimation}\\
$\|\cdot\|_{l,1}$ & scale-dependent semi-norm\par for anti-symmetric functions
& beginning
 of\par Section \ref{sec_general_estimation}\\
$|\cdot-\cdot|_{l}$ & scale-dependent measurement of the difference
 between two anti-symmetric functions defined at $\beta_1$ and $\beta_2$
& beginning
 of\par Section \ref{sec_general_estimation_temperature}
\end{longtable}
\end{center}

\subsection*{Other notations}
\begin{center}
\begin{longtable}{p{3cm}|p{8.5cm}|p{3.5cm}}
Notation & Description & Reference \\
\hline
$\be_1,\be_2$ & $\be_1=(1,0)$, $\be_2=(0,1)$ & Subsection \ref{subsec_model} \\
$\O_{2L}$ & vacuum of $F_f(L^2(\G(2L)\times
 \spin))$ & Subsection \ref{subsec_model} \\
$\O$ & vacuum of $F_f(L^2(\cB\times\G\times
 \spin))$ & Subsection \ref{subsec_covariance}\\
$\be(\rho)$\par $(\rho=1,2,3,4)$  & $\be(1)=(0,0)$,  $\be(2)=(1,0)$,\par  $\be(3)=(0,1)$,  $\be(4)=(1,1)$
 &  Subsection
 \ref{subsec_4_band}
\end{longtable}
\end{center}

\end{document}